\documentclass[aps,prx,superscriptaddress,footinbib,notitlepage,twocolumn,10pt]{revtex4-2}

\usepackage{tensor}
\usepackage{amsmath}
\usepackage{amsthm}
\usepackage{amssymb}
\usepackage{graphicx}
\usepackage{comment}
\usepackage{amsfonts}
\usepackage{bm}
\usepackage{braket}
\usepackage{xcolor}

\usepackage{thmtools}
\usepackage{tikz}
\usetikzlibrary{arrows.meta,positioning,calc,fit}

\definecolor{steelblue}{HTML}{5083C5}
\definecolor{medgreen}{HTML}{397A3E}

\usepackage[colorlinks=true,linkcolor=medgreen,urlcolor=teal,citecolor=steelblue,hypertexnames=false]{hyperref}

\usepackage{enumerate}
\usepackage{mathtools}
\usepackage{bbm}

\usepackage{mathrsfs}
\usepackage{dsfont}

\newcommand{\ii}{\mathrm{i}}
\newcommand{\dd}{\mathrm{d}}

\newtheorem{theorem}{Theorem}

\newtheorem{proposition}[theorem]{Proposition}
\newtheorem{corollary}[theorem]{Corollary}
\newtheorem{definition}[theorem]{Definition}
\newtheorem{lemma}[theorem]{Lemma}

\newtheorem*{lemma*}{Lemma}
\newtheorem{remark}[theorem]{Remark}
\newtheorem{fact}[theorem]{Fact}

\DeclareMathOperator{\End}{\mathcal{L}}
\DeclareMathOperator{\Hom}{\mathcal{L}}
\DeclareMathOperator{\Tr}{Tr}
\DeclareMathOperator{\Ad}{Ad}

\DeclareMathOperator{\Sym}{Sym}
\DeclareMathOperator{\rank}{rank}

\DeclareMathOperator{\Wt}{Wt}
\DeclareMathOperator{\spec}{spec}
\DeclareMathOperator{\Ran}{Ran}

\DeclareMathOperator{\Span}{span}
\DeclareMathOperator{\GL}{GL}
\DeclareMathOperator{\pr}{pr}
\DeclareMathOperator{\Var}{Var}

\newcommand{\hor}{\mathrm{hor}}
\newcommand{\ver}{\mathrm{ver}}

\newcommand{\sym}{\mathrm{sym}}
\newcommand{\HS}{\mathrm{HS}}
\newcommand{\op}{\mathrm{op}}
\newcommand{\typ}{\mathrm{typ}}
\newcommand{\atyp}{\mathrm{atyp}}
\newcommand{\lin}{\mathrm{lin}}
\newcommand{\tail}{\mathrm{tail}}
\newcommand{\fwd}{\mathrm{fwd}}
\newcommand{\rev}{\mathrm{rev}}

\newcommand{\cre}{\mathsf{C}_\lambda}
\newcommand{\ann}{\mathsf{A}_\lambda}
\newcommand{\Lmat}{\mathsf{L}_\lambda}
\newcommand{\Umat}{\mathsf{U}^{\mathrm{lin}}_\lambda}
\newcommand{\Qmat}{\mathsf{Q}_\lambda}

\newcommand{\Dmat}{\mathsf{D}_\lambda}
\newcommand{\fdisgen}{\mathsf{B}}
\newcommand{\sdisgen}{\mathsf{K}}

\newcommand{\word}{\mathsf{V}}
\newcommand{\symword}{\mathsf{J}}
\newcommand{\gram}{\mathsf{G}}

\newcommand{\annlin}{\mathsf{A}^{\mathrm{lin}}}

\newcommand{\cins}{C_{\mathrm{ins}}}

\newcommand{\crej}{\mathsf{C}_{\lambda j}}
\newcommand{\annj}{\mathsf{A}_{\lambda j}}

\DeclarePairedDelimiter\ceil{\lceil}{\rceil}
\DeclarePairedDelimiter\floor{\lfloor}{\rfloor}

\renewcommand{\Re}{\operatorname{Re}}
\renewcommand{\Im}{\operatorname{Im}}

\makeatletter

\let\orig@addcontentsline\addcontentsline

\newif\ifapx@tocactive
\apx@tocactivefalse

\def\apx@tocfile{apxa}

\def\addcontentsline#1#2#3{%
  \def\temp{#1}%
  \def\tocname{toc}%
  \ifapx@tocactive
    \ifx\temp\tocname
      \orig@addcontentsline{\apx@tocfile}{#2}{#3}%
    \else
      \orig@addcontentsline{#1}{#2}{#3}%
    \fi
  \else
    \orig@addcontentsline{#1}{#2}{#3}%
  \fi
}

\newcommand{\PrintAppendixTOC}[2]{%
  \apx@tocactivetrue
  \def\apx@tocfile{#1}%
  \begingroup
    \def\addcontentsline##1##2##3{}%
    \section*{#2}%
  \endgroup
  \@starttoc{#1}%
}

\newcommand{\AppendixTOCOne}[1]{%
  \PrintAppendixTOC{apxa}{#1}%
}

\newcommand{\AppendixTOCTwo}[1]{%
  \PrintAppendixTOC{apxb}{#1}%
}

\newcommand{\AppendixTOCThree}[1]{%
  \PrintAppendixTOC{apxc}{#1}%
}

\makeatother

\begin{document}

\title{Quantifying Symmetry Breaking with Metric Adjusted Quantum Geometric Tensors}

\author{Koji Yamaguchi}
\email{koji.yamaguchi@uwaterloo.ca}
\affiliation{Department of Physics, University of Waterloo, Waterloo, ON N2L 3G1, Canada}
\affiliation{Perimeter Institute for Theoretical Physics, Waterloo, Ontario N2L 2Y5, Canada}

\author{Hiroyasu Tajima}
\email{hiroyasu.tajima@inf.kyushu-u.ac.jp}
\affiliation{Department of Informatics, Faculty of Information Science and Electrical Engineering,
Kyushu University, 744 Motooka, Nishi-ku, Fukuoka, 819-0395, Japan}
\affiliation{JST, FOREST, 4-1-8 Honcho, Kawaguchi, Saitama, 332-0012, Japan}

\begin{abstract}
Quantifying properties of quantum states through the limits of their manipulation is a central goal of quantum resource theories. For symmetry breaking, the quantum geometric tensor (QGT) governs asymptotic pure-state conversion, but a complete characterization for general mixed states has remained elusive.
Here we fully resolve this problem for finite-dimensional systems under compact Lie group symmetries in the i.i.d. asymptotic regime. Specifically, we establish a single-letter formula for the optimal conversion rate between arbitrary states, with vanishing trace-distance error, in the resource theory of asymmetry. The rate is determined by a one-parameter family of metric adjusted QGTs, obtained by rescaling quantum Fisher information (QFI) matrices that interpolate between the symmetric logarithmic derivative and right logarithmic derivative (RLD) QFIs. For pure states, the RLD endpoint recovers the standard QGT. No state-independent finite subset of this family suffices in general, even for $U(1)$ symmetry, revealing a qualitative distinction from pure-state conversion. Our formula further yields an exact pure-state distillation-rate formula in terms of a single asymmetry measure, characterizes asymptotically reversible interconversion, and identifies bound asymmetry for quantum clocks. 
Complementarity among members of the metric adjusted QGT family also reveals an activation mechanism: a state with zero conversion rate to a mixed clock state can still enhance another input's yield under joint processing.
Our proof relies on two developments of independent interest. First, we extend quantum local asymptotic normality to unitary models with arbitrary rank and spectral degeneracy. Second, we characterize convertibility between quantum Gaussian shift models in terms of the same one-parameter family of QFIs. 
\end{abstract}

\maketitle

%\textbf{\textit{Introduction.}}---
\section{Introduction}
Quantification is fundamental to physics, turning qualitative concepts into objects that can be analyzed theoretically and tested experimentally. A particularly successful approach is to characterize such concepts through the ultimate limits of physical operations. Thermodynamic entropy and entanglement entropy exemplify this operational perspective: the former quantifies irreversibility through constraints on thermodynamic processes~\cite{liebPhysicsMathematicsSecond1999}, while the latter quantifies entanglement through fundamental limitations imposed by local operations and classical communication~\cite{bennettConcentratingPartialEntanglement1996}.

Quantum resource theories provide a versatile framework for such operational quantification by treating properties of quantum states as resources and translating physical constraints into rules for their manipulation~\cite{chitambarQuantumResourceTheories2019}. One of their central goals is to identify measures that fully capture the ultimate limits of resource manipulation. Measures with this property are referred to as complete measures~\cite{sagawaEntropyDivergenceMajorization2022,dattaThereFiniteComplete2023,yamaguchi_QuantumGeometricTensorDeterminesPureState_2026}. In the standard i.i.d.~conversion setting, where many identical copies of one resource state are converted into many copies of another, complete measures have been established in several representative settings, including entanglement~\cite{bennettConcentratingPartialEntanglement1996,hayden_asymptotic_2001,devetakDistillationSecretKey2005}, coherence~\cite{winter_OperationalResourceTheoryCoherence_2016,chitambar_DephasingcovariantOperationsEnableasymptoticreversibility_2018,lami_CompletingGrandTourAsymptoticQuantum_2020}, and athermality~\cite{brandaoResourceTheoryQuantum2013} in their respective resource theories.

Symmetry and its breaking provide a unifying language across diverse fields of physics. Establishing an operational quantification of symmetry breaking is therefore a fundamental challenge in quantum resource theories. The resource theory of asymmetry~\cite{bartlett_reference_2007,gourResourceTheoryQuantum2008,gourMeasuringQualityQuantum2009,marvianmashhadSymmetryAsymmetryQuantum2012,korzekwa_resource_2013,marvianAsymmetryPropertiesPure2014} provides a framework for this problem by treating symmetry breaking as a quantum resource, with applications in quantum clocks~\cite{marvianCoherenceDistillationMachines2020,marvianOperationalInterpretationQuantum2022,kazi_OptimalDistillationQubitClocks_2025}
quantum thermodynamics~\cite{lostaglioQuantumCoherenceTimeTranslation2015,Faist2015Gibbs-preserving,marvianCoherenceDistillationMachines2020,Tajima_Takagi2025}, measurements~\cite{ahmadi_WignerArakiYanasetheoremquantum_2013,WAY_RTA2,korzekwa_resource_2013,tajimaCoherencevarianceUncertaintyRelation2019, tajima_universal_2022,emori_ErrorDisturbanceIrreversibilityApplicationsUnified_2026,hokkyo_QuantitativeWignerArakiYanaseTheoremsUnitaryAntiunitary_2026}, quantum computing~\cite{tajima_uncertainty_2018,tajima_coherence_2020,tajimaUniversalLimitationQuantum2021,tajima_universal_2022}, error-correcting codes~\cite{kubica_using_2021,zhou_new_2021,yang_optimal_2022,tajimaUniversalLimitationQuantum2021,liu_QuantumErrorCorrectionmeetscontinuous_2023,tajima_universal_2022,liu_ApproximateSymmetriesQuantumerrorcorrection_2023}, black hole physics \cite{tajimaUniversalLimitationQuantum2021,tajima_universal_2022}, (non)-Gaussianity~\cite{yadavalli_OptimalDistillationCoherentStatesPhaseInsensitive_2025,koukoulekidis_SymmetryAsymmetryBosonicGaussianSystems_2025}, and the quantum Mpemba effect~\cite{summer_ResourceTheoreticalUnificationMpembaEffectsClassical_2026,yamashika_QuantumFisherInformationMeasureSymmetry_2025,kusuki_ResourceTheoreticQuantifiersWeakStrongSymmetry_2026}. For pure-state asymmetry under continuous symmetries, complete measures had been known only for the $U(1)$ group~\cite{gourResourceTheoryQuantum2008,marvianOperationalInterpretationQuantum2022}, while results for other continuous groups were restricted to specific classes of states~\cite{gourResourceTheoryQuantum2008,yangUnitsRotationalInformation2017}. This limitation was recently overcome by a general theory establishing the quantum geometric tensor (QGT)~\cite{provostRiemannianStructureManifolds1980,berryQuantumPhaseFive1989} as a complete measure of pure-state symmetry breaking for arbitrary compact Lie groups~\cite{yamaguchi_QuantumGeometricTensorDeterminesPureState_2026}. 

However, a comprehensive theory of symmetry breaking requires an understanding of mixed states, which are essential for describing quantum systems in realistic settings. The asymptotic manipulation of mixed-state asymmetry remains largely unexplored: apart from pure-to-mixed conversion under $U(1)$ symmetry~\cite{marvianOperationalInterpretationQuantum2022}, existing results are limited to bounds and no-go theorems for convertibility~\cite{marvianCoherenceDistillationMachines2020,yamaguchi_QuantumGeometricTensorDeterminesPureState_2026}.

In this work, we resolve this problem by establishing a general theory of the asymptotic manipulation of mixed-state asymmetry. Specifically, for any symmetry described by a compact Lie group, we prove a single-letter formula for the optimal i.i.d. asymptotic conversion rate between arbitrary states on finite-dimensional systems within the standard formulation of the resource theory of asymmetry. This result identifies a family of metric adjusted QGTs as a complete family of asymmetry measures. We define these tensors by rescaling quantum Fisher information (QFI) matrices interpolating between the symmetric logarithmic derivative (SLD) and right logarithmic derivative (RLD) QFIs. For pure states, the RLD endpoint of the metric adjusted QGT family yields the standard QGT, thereby directly connecting to the prior result~\cite{yamaguchi_QuantumGeometricTensorDeterminesPureState_2026} for pure states. 

Our general formula has three further consequences. First, we show that no state-independent finite subset of this family suffices to determine all conversion rates, even for $U(1)$ symmetry. 
Second, for mixed-to-pure distillation, the full-family formula collapses to a single metric adjusted QGT at the RLD endpoint. This tensor coincides with the extension of the QGT proposed in Ref.~\cite{yamaguchi_QuantumGeometricTensorDeterminesPureState_2026}, and completely determines the distillation rate.
Third, we derive a necessary and sufficient condition for asymptotic reversibility, identifying when asymmetry can be preserved without asymptotic loss (see Fig.~\ref{fig:conversion_rate_law_comparison} in Section~\ref{sec:main_results} for a summary of these results and comparison with prior studies).

The necessity of a family of measures, rather than a single one, also reveals a new mechanism for enhancing asymptotic conversion rates. We show that, even for $U(1)$ symmetry, a state with zero conversion rate to a given target can nevertheless increase the yield obtainable from another state when the two are processed jointly. In analogy with activation phenomena for channel capacities, we call this phenomenon asymmetry activation. This activation arises from complementarity among different members of the metric adjusted QGT family: distinct members impose different constraints on the conversion rate, and jointly processing input states can enhance the rate beyond what either state alone can achieve. Thus, the family not only completely characterizes asymptotic conversion rates but also reveals a distinct mechanism for asymmetry activation.

The proof of our conversion formula rests on three theoretical advances. First, we establish quantum local asymptotic normality (QLAN) for general finite-dimensional unitary models, allowing arbitrary rank and degeneracies among the positive eigenvalues~\cite{gutaLocalAsymptoticNormality2006,kahnQuantumLocalAsymptotic2008,kahnLocalAsymptoticNormality2009,lahiryMinimaxEstimationLowrank2024}. Together with the equivalence between convertibility in the resource theory of asymmetry and convertibility of the associated statistical models~\cite{marvianmashhadSymmetryAsymmetryQuantum2012,yamaguchi_QuantumGeometricTensorDeterminesPureState_2026}, our QLAN result reduces the asymptotic conversion problem to a convertibility problem for quantum Gaussian shift models. Second, we fully characterize the convertibility between the resulting Gaussian models and show that it is governed precisely by inequalities between the entire QFI family interpolating between the SLD and RLD, which are equivalent to metric adjusted QGT inequalities.
Third, to establish the optimality of our achievable rate, we derive quantitative bounds on the QFI matrices of states close to i.i.d. models, overcoming the asymptotic discontinuity of QFIs~\cite{gourMeasuringQualityQuantum2009,marvianCoherenceDistillationMachines2020,marvianOperationalInterpretationQuantum2022,yamaguchiSmoothMetricAdjusted2023,yamaguchi_QuantumGeometricTensorDeterminesPureState_2026}. 
Together, these advances establish our general conversion formula and the completeness of the metric adjusted QGT family for mixed-state asymmetry.

\section{Preliminaries}
\subsection{Resource theory of asymmetry}
%\textbf{\textit{Resource theory of asymmetry.}}---
We study symmetries described by a compact Lie group $G$. Consider a quantum system with a finite-dimensional Hilbert space $\mathcal{H}$. The action of each $g \in G$ is represented by a unitary operator $U(g)$ on $\mathcal{H}$. Consistency under successive transformations requires
$U(g_2)U(g_1)=\omega(g_2,g_1)U(g_2g_1)$ for all $g_1,g_2\in G$, where $\omega$ is a complex-valued function of modulus one. Such a map $U$ is called a projective unitary representation of $G$ on $\mathcal{H}$. When $\omega(g_2,g_1)=1$ for all $g_1,g_2\in G$, $U$ is called a nonprojective unitary representation. In what follows, we assume that the map $U$ is differentiable to simplify the arguments. The main results can be extended~\cite{sm} to any continuous $U$ using the methods in Refs.~\cite{shitara_IidStateConvertibilityResourceTheory_2025,yamaguchi_QuantumGeometricTensorDeterminesPureState_2026}. 

We employ the standard formulation of the resource theory of asymmetry~\cite{bartlett_reference_2007,gourResourceTheoryQuantum2008,gourMeasuringQualityQuantum2009,marvianmashhadSymmetryAsymmetryQuantum2012,korzekwa_resource_2013,marvianAsymmetryPropertiesPure2014}, a framework for studying symmetry and its breaking. As in other quantum resource theories, it is specified by free states and free operations. The free states are the $G$-symmetric states, namely, those invariant under every symmetry transformation: $\mathcal{U}_g(\rho)=\rho$ for all $g\in G$, where $\mathcal{U}_g(\cdot)\coloneqq U(g)(\cdot)U(g)^\dag$. The free operations are the $G$-covariant channels, which are compatible with the symmetry transformations in the sense that $\mathcal{E}\circ\mathcal{U}_g=\mathcal{U}'_g\circ\mathcal{E}$ for all $g\in G$. Throughout, we generally use unprimed and primed symbols for input and output quantities, respectively.

The choice of $G$-covariant channels as free operations is operationally justified because they can be implemented without an external source of asymmetry. Indeed, the covariant Stinespring dilation theorem~\cite{keylOptimalCloningPure1999,marvianmashhadSymmetryAsymmetryQuantum2012,yamaguchi_QuantumGeometricTensorDeterminesPureState_2026} guarantees that any $G$-covariant channel can be realized using a symmetry-preserving unitary interaction with an ancillary system prepared in a $G$-symmetric state. Thus, if a state $\rho$ is convertible into another state $\rho'$ via a $G$-covariant channel, then $\rho$ exhibits at least as much symmetry breaking as $\rho'$.

Resource measures provide a quantitative way to make such comparisons. We call a function $M$ an asymmetry measure if
\begin{align}
    M(\rho)\geq M(\rho')
    \label{eq:monotonicity_measure}
\end{align}
whenever $\rho$ can be converted into $\rho'$ via a $G$-covariant channel, and $M(\sigma)=0$ for any $G$-symmetric state $\sigma$. Recent developments in the resource theory of asymmetry~\cite{gaoSufficientStatisticRecoverability2024,kudoFisherInformationMatrix2023,yamaguchi_QuantumGeometricTensorDeterminesPureState_2026} have highlighted the importance of matrix-valued measures that generalize conventional scalar-valued measures. For matrix-valued measures, the inequality in Eq.~\eqref{eq:monotonicity_measure} is understood in the positive-semidefinite order: $A\geq B$ means that $A-B$ is positive semidefinite. Various asymmetry measures have been studied, including the $G$-asymmetry~\cite{vaccaroTradeoffExtractableMechanical2008}, the relative entropy of $G$-asymmetry~\cite{gourMeasuringQualityQuantum2009}, quantum Fisher information matrices~\cite{marvianOperationalInterpretationQuantum2022,marvianCoherenceDistillationMachines2020,gaoSufficientStatisticRecoverability2024,kudoFisherInformationMatrix2023,yamaguchi_QuantumGeometricTensorDeterminesPureState_2026}, and the quantum geometric tensor~\cite{yamaguchi_QuantumGeometricTensorDeterminesPureState_2026}. Monotonicity in Eq.~\eqref{eq:monotonicity_measure}, however, generally provides only a necessary condition for state conversion.

A central question in quantum resource theories is therefore which resource measures fully characterize state convertibility. A set of measures is called complete if their monotonicity condition is both necessary and sufficient for state conversion, thereby providing a complete quantitative characterization of convertibility. A fundamental setting for addressing this question is the i.i.d. asymptotic regime, where many identical copies $\rho^{\otimes n}$ of an input state are converted into many copies of an output state $\rho'$.

For multiple copies, the symmetry acts collectively; hence, a channel $\mathcal{E}$ mapping $n$ input copies to $m$ output copies is $G$-covariant if $\mathcal{E}\circ\mathcal{U}_g^{\otimes n}=\mathcal{U}_g'^{\otimes m}\circ\mathcal{E}$ for all $g\in G$. In the asymptotic regime, a rate $r>0$ is said to be achievable for the conversion from $\rho$ to $\rho'$ if there exists a sequence of $G$-covariant channels $\{\mathcal{E}_n\}_n$ such that
\begin{align}
    \lim_{n\to\infty}
    \left\|
        \mathcal{E}_n(\rho^{\otimes n})-
        \rho'^{\otimes \lfloor rn\rfloor}
    \right\|_1
    =0.
\end{align}
This is equivalent to the convergence in the trace distance $T(\rho,\sigma)\coloneqq \frac{1}{2}\|\rho-\sigma\|_1$, which has operational meaning in state distinguishability~\cite{helstromQuantumDetectionEstimation1969,holevoStatisticalDecisionTheory1973}. 
The optimal conversion rate $R(\rho\to\rho')$ is defined as the supremum of all achievable rates.

As a basic characteristic of a state $\rho$, we introduce the symmetry subgroup~\cite{marvianmashhadSymmetryAsymmetryQuantum2012} by
\begin{align}
    \Sym_G(\rho)\coloneqq \{g\in G\colon \mathcal{U}_g(\rho)=\rho\}.
\end{align}
It is known that $\Sym_G(\rho)\subset \Sym_G(\rho')$ is a necessary condition for converting $\rho$ into $\rho'$ in the one-shot exact setting~\cite{marvianmashhadSymmetryAsymmetryQuantum2012} and in the asymptotic setting with vanishing error~\cite{yamaguchi_QuantumGeometricTensorDeterminesPureState_2026}. In what follows, we say that a pair of states satisfies the symmetry-subgroup condition if the above inclusion holds.

\subsection{Quantum Fisher information}
%\textbf{\textit{Quantum Fisher information.}}---
Quantum Fisher information (QFI) is associated with metrics on the quantum state space that contract under data processing~\cite{morozova_MarkovInvariantGeometrymanifoldsstates_1991a,petzMonotoneMetricsMatrix1996}, which are in one-to-one correspondence with operator monotone functions. A function $f:(0,\infty)\to(0,\infty)$ is called operator monotone if $A\leq B$ implies $f(A)\leq f(B)$ for any positive definite matrices. Among others, we consider a one-parameter family of such functions~\cite{yamaguchi_QuantumGeometricTensorDeterminesPureState_2026}
\begin{align}
    f_q(x)\coloneqq (1-q)+qx,\qquad q\in(0,1).
\end{align}

For a smooth $p$-parameter family of states $\rho_{\theta}$ with $\theta\in\mathbb{R}^p$, the associated $f_q$-QFI $\mathcal{F}_{\rho}^{f_q}$ at $\rho\coloneqq\rho_{\theta_0}$ is given by a $p\times p$ matrix~\cite{sm}
\begin{align}
    \left(\mathcal{F}_{\rho}^{f_q}\right)_{ij}\coloneqq \sum_{\substack{k,l=1;\\(1-q)\mu_l+q\mu_k>0} }^d\frac{\braket{l|\partial_i\rho|k}\braket{k|\partial_j\rho|l}}{(1-q)\mu_l+q\mu_k}
\end{align}
for $i,j=1,\ldots,p$, where $\partial_i\rho\coloneqq \partial_{\theta_i}\rho_{\theta}|_{\theta=\theta_0}$ and $\rho=\sum_{i=1}^d \mu_i\ket{i}\bra{i}$ denotes the eigenvalue decomposition of $\rho$. 

The choices at $q=0,1/2,1$ correspond to the left logarithmic derivative (LLD), symmetric logarithmic derivative (SLD), and right logarithmic derivative (RLD), respectively~\cite{shitara_DeterminingContinuousFamilyquantumFisher_2016}. 
The LLD and RLD boundary limits of the QFI are finite for full-rank states but can diverge for rank-deficient states.
We refer to these as the LLD, SLD, and RLD QFIs, respectively. The QFI family obeys a transpose relation under the interchange $q\leftrightarrow(1-q)$, expressed as 
\begin{align}
    \mathcal{F}_{\rho}^{f_{1-q}}=\left(\mathcal{F}_{\rho}^{f_q}\right)^\top,\label{eq:QFI_transpose_relation}
\end{align}
where $\top$ denotes the transpose. Therefore, the matrix inequalities imposed for all $q\in(0,1)$ can equivalently be checked on $q\in[1/2,1)$, since transposition preserves the positive-semidefinite order, i.e., $A\geq B\Leftrightarrow A^\top\geq B^\top$ for positive-semidefinite matrices $A$ and $B$.

In the context of the resource theory of asymmetry we consider a family of states parameterized by group elements. Concretely, we first parametrize group elements near the identity by $\theta \in\mathbb{R}^{\dim G}$ as $g(\theta)=\exp(\ii \sum_{i=1}^{\dim G} \theta_i A_i)$, using a basis $\{A_i\}_{i=1}^{\dim G}$ of the Lie algebra $\mathfrak{g}$. Then, for a state $\rho$, let $\rho_\theta\coloneqq \mathcal{U}_{g(\theta)}(\rho)$ near $\theta=0$. The tangent at $\rho=\rho_{\theta=0}$ is given by
\begin{align}
    \partial_i \rho=\ii [X_i,\rho],\label{eq:tangent_unitary_model}
\end{align}
where $ X_i\coloneqq -\ii \partial_{i}U(g(\theta))|_{\theta=0}$ denote the infinitesimal generators of symmetry transformations. Thus, the $f_q$-QFI in the RTA is given by
\begin{align}
    \left(\mathcal{F}_{\rho}^{f_q}\right)_{ij}=\sum_{\substack{k,l;\\(1-q)\mu_l+q\mu_k>0} }^d\frac{(\mu_k-\mu_l)^2\braket{l|X_i|k}\braket{k|X_j|l}}{(1-q)\mu_l+q\mu_k}.\label{eq:QFI_RTA}
\end{align}
For rank-deficient states, endpoint divergences occur when the generators have nonzero matrix elements between the support and the kernel of $\rho$.

For $U(1)$ symmetry, the SLD QFI determines the asymptotic coherence cost~\cite{marvianOperationalInterpretationQuantum2022}, while the RLD QFI, also called purity of coherence~\cite{marvianCoherenceDistillationMachines2020}, governs the optimal asymptotic infidelity of many-to-one distillation~\cite{marvianCoherenceDistillationMachines2020,yadavalli_OptimalDistillationCoherentStatesPhaseInsensitive_2025,kazi_OptimalDistillationQubitClocks_2025}.

\subsection{Metric adjusted quantum geometric tensor}
The standard QGT of a smooth family of normalized pure states $\ket{\psi_\theta}$, evaluated at a point $\theta_0$, is given by
\begin{align}
    \left(\mathcal{Q}_\psi\right)_{ij}\coloneqq \braket{\partial_i\psi|(I-\ket{\psi}\bra{\psi})|\partial_j\psi},
\end{align}
where $\ket{\psi}\coloneqq\ket{\psi_{\theta_0}}$ and $\ket{\partial_i\psi}\coloneqq \partial_{\theta_i}\ket{\psi_\theta}|_{\theta=\theta_0}$~\cite{provostRiemannianStructureManifolds1980,berryQuantumPhaseFive1989}. 

For a general state model, we introduce the metric adjusted QGT family
\begin{align}
    \mathcal{Q}_\rho^q\coloneqq q(1-q)\mathcal{F}_{\rho}^{f_q},\quad q\in(0,1).\label{eq:definition_Q_F}
\end{align}
The Supplemental Material provides the construction for general operator monotone functions and shows that, for symmetric functions and unitary models, the metric adjusted QGT is directly related to the metric adjusted skew information~\cite{sm}. The normalization removes the divergences at LLD and RLD endpoints, thereby enabling continuous extensions to the endpoints $q=0,1$: $\mathcal{Q}_\rho^0\coloneqq \lim_{q\to 0^+}\mathcal{Q}_\rho^q$ and $\mathcal{Q}_\rho^1\coloneqq \lim_{q\to 1^-}\mathcal{Q}_\rho^q$. In particular, for a pure-state model $\psi_\theta=\ket{\psi_\theta}\bra{\psi_\theta}$,
\begin{align}
    \mathcal{Q}_\psi^q=q\mathcal{Q}_\psi+(1-q)\mathcal{Q}_\psi^\top,\label{eq:QGT_q_pure}
\end{align}
and hence the standard QGT is recovered at the RLD endpoint: $\mathcal{Q}_\psi^1=\mathcal{Q}_\psi$. 

For a unitary model whose tangent is given by Eq.~\eqref{eq:tangent_unitary_model}, the endpoint tensors are given by
\begin{align}
    \left(\mathcal{Q}_\rho^1\right)_{ij}=\Tr(\rho X_i(I-\Pi_\rho)X_j)
\end{align}
and $\mathcal{Q}_\rho^0=\left(\mathcal{Q}_\rho^1\right)^\top$, where $\Pi_\rho$ denotes the projector onto the support of $\rho$. The RLD endpoint $\mathcal{Q}_\rho^1$ coincides with the extension of the QGT proposed in Ref.~\cite{yamaguchi_QuantumGeometricTensorDeterminesPureState_2026}. 

Each member of this family is positive semidefinite, monotone under covariant channels, and additive under tensor products with collective symmetry action~\cite{sm}. Moreover,
\begin{align}
    \mathcal{Q}_\rho^{1-q}=\left(\mathcal{Q}_\rho^q\right)^\top, \quad q\in[0,1].\label{eq:transpose_relation_QGT}
\end{align}

\section{Main results}\label{sec:main_results}
%\textbf{\textit{Main result.}}---
The main theorem of this paper is the following conversion rate formula:
\begin{theorem}[Completeness of the metric adjusted QGT family]\label{thm:main_theorem_conversion_rate}
    For any states $\rho$ and $\rho'$ on finite-dimensional Hilbert spaces, if $\Sym_G(\rho)\subset \Sym_{G}(\rho')$, then
    \begin{align}
        R(\rho\to\rho')&=\sup\{r\geq 0\colon \forall q\in[1/2,1],\mathcal{Q}_{\rho}^{q}\geq r \mathcal{Q}_{\rho'}^{q}\},\label{eq:main_theorem_conversion_rate}\\
        &=\inf_{q\in[1/2,1]}\sup\{r\geq 0\colon\mathcal{Q}_{\rho}^{q}\geq r \mathcal{Q}_{\rho'}^{q}\}\label{eq:main_theorem_conversion_rate_inf_q}
    \end{align}
    otherwise, $R(\rho\to\rho')=0$. 
\end{theorem}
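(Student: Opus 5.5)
We prove the two inequalities separately, after two reductions. First, by the transpose relation~\eqref{eq:transpose_relation_QGT}, checking $\mathcal{Q}^q_\rho\geq r\mathcal{Q}^q_{\rho'}$ on $q\in[1/2,1]$ is the same as checking it on all of $[0,1]$. Second, the two expressions \eqref{eq:main_theorem_conversion_rate} and \eqref{eq:main_theorem_conversion_rate_inf_q} coincide. For each $q$, the set $\{r\geq0:\mathcal{Q}^q_\rho\geq r\mathcal{Q}^q_{\rho'}\}$ is a closed interval $[0,r_q]$, because $\mathcal{Q}_{\rho'}^q\geq 0$. The intersection of these intervals over $q$ is $[0,\inf_q r_q]$. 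When the symmetry-subgroup condition fails, $R=0$ follows from the known asymptotic necessity of $\Sym_G(\rho)\subset\Sym_G(\rho')$ cited in the preliminaries. So from now on we assume the condition holds.

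\textbf{Converse ($R\leq$ RHS).} Fix an achievable rate $r$ and covariant channels $\mathcal{E}_n$ with $\mathcal{E}_n(\rho^{\otimes n})\approx\rho'^{\otimes\lfloor rn\rfloor}$. Monotonicity and additivity give $n\mathcal{Q}^q_\rho\geq\mathcal{Q}^q_{\mathcal{E}_n(\rho^{\otimes n})}$. We cannot compare the right-hand side directly with $\lfloor rn\rfloor\mathcal{Q}^q_{\rho'}$, because QFIs are asymptotically discontinuous.

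Instead we prove a quantitative lower bound for states $\sigma_n$ within trace distance $\epsilon_n\to0$ of $\rho'^{\otimes m}$:
\[
\liminf_{m\to\infty}\frac{1}{m}\mathcal{Q}^q_{\sigma_m}\geq\mathcal{Q}^q_{\rho'}.
\]
To get this, we apply QLAN to the output model $\mathcal{U}_{g(\theta/\sqrt m)}(\rho'^{\otimes m})$, which converges to a Gaussian shift model. We then pass to the limit the $f_q$-QFI of the perturbed model restricted to local parameters. Lower semicontinuity of monotone metrics under weak convergence of models, combined with data processing onto the Gaussian limit, should give the bound. Dividing by $n$ then yields $\mathcal{Q}^q_\rho\geq r\mathcal{Q}^q_{\rho'}$ for every $q$.

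\textbf{Achievability.} We use the equivalence between covariant convertibility and convertibility of the orbit models $\{\mathcal{U}_g(\rho^{\otimes n})\}_g$. Together with the symmetry-subgroup condition, this reduces the problem to local conversion near the identity: a compactness and group-averaging argument extends local convertibility to the full group orbit.

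Our QLAN for unitary models of arbitrary rank and degeneracy states that $\mathcal{U}_{g(\theta/\sqrt n)}(\rho^{\otimes n})$ is asymptotically equivalent, via channels, to a Gaussian shift model. This model consists of a classical Gaussian part together with quantum Gaussian shifts on oscillators. Its covariance and symplectic data are read off from the eigenvalue structure in \eqref{eq:QFI_RTA}, and the kernel directions supply the endpoint pieces $\mathcal{Q}^1_\rho$. The same holds for $\rho'^{\otimes rn}$.

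We then characterize when one Gaussian shift model converts into another: this happens if and only if $\mathcal{F}^{f_q}\geq\mathcal{F}'^{f_q}$ for all $q\in(0,1)$, with the endpoints handled by the rescaled tensors. In terms of the models here, the condition reads $\mathcal{Q}^q_\rho\geq r\mathcal{Q}^q_{\rho'}$. We plan to prove the sufficiency direction by diagonalizing to normal modes, and by realizing the conversion through Gaussian channels, namely added classical noise and beam-splitter or amplifier combinations. We would first check that a Gaussian channel exists exactly when the complex Hermitian matrix condition holds for all $q$, which amounts to a matrix-pencil positivity condition.

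For rates strictly below the RHS we have strict slack, which absorbs the QLAN errors. Using a local estimation step with $n^{1-\delta}$ copies lets us handle the global parameter. Finally, rates below any achievable rate are achievable, so the supremum matches.

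\textbf{Main obstacle.} We expect the hardest steps to be the Gaussian-model convertibility characterization and the converse continuity bound. For the latter, we would first try to bound $\mathcal{Q}^q$ of nearby states through a variational (Legendre-type) formula for the $f_q$-metric, since such a formula is lower semicontinuous in the state.
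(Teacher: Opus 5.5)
Your reductions, the equivalence of the two formulas, the treatment of the subgroup condition, and the achievability architecture (QLAN, Gaussian-shift convertibility, sublinear estimation, Haar twirl) all match the paper. However, the two steps you flag as hardest are only asserted, and for the converse the mechanism you name is false as stated.

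\textbf{Converse.} A Fisher information at a point is not lower semicontinuous under uniform trace-norm convergence of models. For example, the reparametrized shift $\theta\mapsto\Phi_{C\phi_m(\theta)}$ with $\phi_m(\theta)=\theta-m^{-1}\sin(m\theta)$ (componentwise) converges uniformly to $\Phi_{C\theta}$, yet its $f_q$-QFI at $\theta=0$ vanishes for every $m$. After you apply the QLAN channel, the processed local orbit of $\sigma_m$ is no longer covariant, so nothing rules this behaviour out. Covariance has to be used before any data processing.

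Your variational fallback fails for a related reason. Lower semicontinuity in the state holds for a fixed tangent, but here the tangent moves with $\sigma_m$. Moreover, the natural test operator $L_z^{[m]}$ has operator norm of order $m$, so the error $\langle L,\mathbb{J}_{\sigma_m}L\rangle-\langle L,\mathbb{J}_{\rho'^{\otimes m}}L\rangle$ is controlled only by $2\epsilon_m\|L_z^{[m]}\|^2\sim m^2\epsilon_m$. That is not $o(m)$.

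The paper (Lemma~\ref{lem:master_bound}, Theorem~\ref{thm:QFI_monotonicity_unitary}) handles this in three steps:
\begin{enumerate}[(i)]
\item It cuts off the Hermitian and anti-Hermitian parts of $L_z^{[m]}$ smoothly at scale $x\sqrt m$, with a tail cost of $O(m/x^2)$ from fourth moments.
\item It uses covariance to write the tangent difference as $\ii[\sigma_m-\rho'^{\otimes m},O_z^{[m]}]$.
\item It bounds $\|[O_z^{[m]},\tau_{m,x}(A_z^{[m]})]\|\le c_\eta m\|[O_z,A_z]\|$ through the Fourier representation of the cutoff.
\end{enumerate}
Optimizing $x$ then gives $\frac1m\mathcal{F}^{f_q}_{\sigma_m}\ge\mathcal{F}^{f_q}_{\rho'}-h_{q,\eta}(\epsilon_m)\mathcal{X}$ with $h_{q,\eta}(\epsilon)=O(\sqrt{\epsilon})$. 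Your QLAN route might be rescued by a different idea: use covariance to make the local QFI essentially constant in $\theta$, average the variational bound against a test function of $\theta$, and integrate by parts so that only states, not tangents, enter the limit. That idea is absent from your outline.

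\textbf{Gaussian convertibility.} Sufficiency is a realization problem. From the continuum of inequalities $K^\top(\sigma-\ii t\Omega)^{-1}K\ge K'^\top(\sigma'-\ii t\Omega')^{-1}K'$, $t\in(-1,1)$, you must produce a single real $X$ with $XK=K'$ and $\sigma'+\ii\Omega'\ge X(\sigma+\ii\Omega)X^\top$. Each fixed $t$ gives its own complex $Y_t$ by Douglas' lemma, but these depend on $t$. Normal-mode diagonalization of $\sigma$ and $\sigma'$ does nothing to align the displacement matrices $K$ and $K'$. So compositions of noise, beam splitters and amplifiers give no general construction, and ``matrix-pencil positivity'' merely restates the goal.

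The paper closes this with convex duality (Theorem~\ref{thm:existence_complex_matrix}):
\begin{enumerate}[(i)]
\item The set of pairs $(A,B)$ with $A\ge YPY^\dagger$ and $B\ge YQY^\dagger$ for some $Y$ with $YC_1=C_2$ is closed and convex.
\item If $(P_2,Q_2)$ lay outside it, a separating functional $(R,S)\ge0$ would exist; $R$ and $S$ can be simultaneously congruence-diagonalized.
\item Gluing the pointwise solutions $Y_{\lambda_j}$ along that basis then contradicts the separation.
\end{enumerate}
Applying this with $P,Q=V\pm\ii s\Omega$, taking $X_s=(Y_s+\overline{Y_s})/2$, and letting $s\to1$ gives the required real $X$.

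Without these two ingredients, your outline establishes neither inequality of the theorem.
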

In words, Eq.~\eqref{eq:main_theorem_conversion_rate} shows that, under the symmetry-subgroup condition, the metric adjusted QGT family is a complete family of measures of symmetry breaking. The proof sketch is provided in Section~\ref{sec:proof_sketch}, while the detailed full proof can be found in the Supplemental Material~\cite{sm}. Moreover, for any $0<s<R(\rho\to\rho')$ and any $\kappa\in(0,1/2)$, the constructive proof in the Supplemental Material gives an error of $O(n^{-1/2+\kappa})$ at conversion rate $s$~\cite{sm}. 
The transpose relation in Eq.~\eqref{eq:transpose_relation_QGT} shows that the range of $q$ in the conversion-rate formula can alternatively be written as either $[0,1/2]$ or $[0,1]$. 
By Eq.~\eqref{eq:definition_Q_F} and the continuity in $q$, the conditions are equivalent to $\mathcal{F}_\rho^{f_q}\geq r \mathcal{F}_{\rho'}^{f_q}$ for all $q\in [1/2,1)$. 
We remark that Ref.~\cite{shitara_DeterminingContinuousFamilyquantumFisher_2016} provides a method to measure QFI associated with any operator monotone function through linear-response theory. Since Eq.~\eqref{eq:definition_Q_F} merely rescales each interior QFI by a known factor, Theorem~\ref{thm:main_theorem_conversion_rate} connects the conversion rate to these experimentally accessible quantities. 
Also, by using the quantum max-relative entropy~\cite{dattaMinMaxRelativeEntropies2009,tomamichelQuantumInformationProcessing2016}, defined for positive-semidefinite matrices $A$ and $B$ by
\begin{align}
    D_{\mathrm{max}}(A\|B)\coloneqq \inf\{\lambda\in\mathbb{R}\colon A\leq 2^\lambda B\}\in[-\infty,\infty],
\end{align}
Eq.~\eqref{eq:main_theorem_conversion_rate} can also be expressed as
\begin{align}
    R(\rho\to\rho')&=\inf_{q\in[\frac{1}{2},1]}2^{-D_{\mathrm{max}}(\mathcal{Q}^{q}_{\rho'}\|\mathcal{Q}^{q}_{\rho})}.
\end{align}

Theorem~\ref{thm:main_theorem_conversion_rate} recovers previously known asymptotic conversion laws in the resource theory of asymmetry~ \cite{gourResourceTheoryQuantum2008,marvianOperationalInterpretationQuantum2022,shitara_IidStateConvertibilityResourceTheory_2025,yangUnitsRotationalInformation2017,yamaguchi_QuantumGeometricTensorDeterminesPureState_2026}.
It is worth emphasizing that, in each of these settings, the conversion rate is determined by a single asymmetry measure. 
This naturally raises the question of whether the continuous metric adjusted QGT family appearing in Theorem~\ref{thm:main_theorem_conversion_rate} is genuinely necessary, or whether it can be replaced by one or finitely many fixed values of $q$. We show that such a universal finite reduction is impossible, revealing a qualitative departure from the known single-measure laws:
\begin{proposition}[No universal finite sampling]
\label{prop:finite_q_is_insufficient}
For every nonempty finite set $S\subset[1/2,1]$, there exist states $\rho$ and $\rho'$ such that $\Sym_{U(1)}(\rho)=\Sym_{U(1)}(\rho')$ and $U(1)$-covariant conversion rate satisfies
\begin{align}
    R(\rho\to\rho')<\min_{q\in S}\sup\{r\geq 0\colon\mathcal{Q}_{\rho}^{q}\geq r \mathcal{Q}_{\rho'}^{q}\}.
\end{align}
\end{proposition}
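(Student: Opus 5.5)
For $G=U(1)$ every $\mathcal{Q}^q_\rho$ is a scalar, so Theorem~\ref{thm:main_theorem_conversion_rate} reduces to $R(\rho\to\rho')=\inf_{q\in[1/2,1]}h(q)$ with $h(q)\coloneqq \mathcal{Q}^q_\rho/\mathcal{Q}^q_{\rho'}$. The right-hand side of the proposition is $\min_{q\in S}h(q)$. It therefore suffices to construct, for a given finite $S$, states $\rho,\rho'$ with equal symmetry subgroups such that $h$ is continuous on $[1/2,1]$ and every minimizer of $h$ lies outside $S$. Then the infimum is attained only off $S$, and the inequality is strict.

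\textbf{Building blocks.} I use a block-diagonal construction. Take a generator $X$ with integer charges, and states of the form $\rho=\bigoplus_\alpha p_\alpha\rho_\alpha$, where each $\rho_\alpha$ is a qubit state on a two-dimensional sector spanned by charges $0$ and $1$ with nonzero coherence. Since there are no coherences between blocks, Eq.~\eqref{eq:QFI_RTA} gives $\mathcal{Q}^q_\rho=\sum_\alpha p_\alpha\mathcal{Q}^q_{\rho_\alpha}$. For a full-rank qubit with eigenvalue ratio $t=\mu_1/\mu_2$ and fixed $|X_{12}|^2$, the formula gives, up to a $t$-dependent positive constant,
\[
\phi_t(q)=q(1-q)\left[\frac{1}{qt+(1-q)}+\frac{1}{(1-q)t+q}\right],
\]
which is explicit, smooth, and positive on $(0,1)$. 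Hence $\mathcal{Q}^q_\rho$ and $\mathcal{Q}^q_{\rho'}$ range over positive combinations of the $\phi_t$ (with $t=\infty$ allowed for pure qubits). Each block has nonzero coherence between charges differing by $1$, so $\Sym_{U(1)}$ is trivial for both states and the subgroup condition holds with equality.

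\textbf{Minimizer.} Fix $\rho'$ to be a single qubit block with ratio $t_0$. Take a one-parameter family $\rho_s=s\,\rho_{t_1}\oplus(1-s)\rho_{t_2}$, $s\in[0,1]$, so that $h_s=\bigl(s\phi_{t_1}+(1-s)\phi_{t_2}\bigr)/\phi_{t_0}$ (absorbing constants). I choose $t_1,t_2$ so that $\phi_{t_1}/\phi_{t_0}$ is strictly increasing on $[1/2,1]$ and $\phi_{t_2}/\phi_{t_0}$ is strictly decreasing there. This should be possible by comparing the endpoint behaviour: a pure or near-pure block has $\mathcal{Q}^1$ finite but its $q\to1$ slope differs from that of a mixed block. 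Then the unique minimizer of $h_0$ is $q=1$ and that of $h_1$ is $q=1/2$.

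For intermediate $s$, $h_s$ is a rational function in $q$. For all but finitely many $s$, its minimizer set on $[1/2,1]$ should be a single point $q^*(s)$, since a two-point tie is a nontrivial algebraic condition on $s$. By upper semicontinuity of the argmin, $q^*(s)$ must sweep through every subinterval of $(1/2,1)$ as $s$ runs from $0$ to $1$; otherwise the argmin would jump across a gap, which forces a tie. Since $S$ is finite, I then pick $s$ with $q^*(s)\notin S$ and $h_s$ having a unique minimizer, and this yields the strict inequality.

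\textbf{Main obstacle.} The hard step is the monotonicity and genericity analysis of $\phi_t/\phi_{t_0}$. It requires exhibiting explicit ratios $t_1,t_2,t_0$ with opposite monotonicity on $[1/2,1]$, and showing that ties between two separate minimizers occur only for finitely many $s$. I would first try $t_0$ moderate, $t_1\to\infty$ (pure), and $t_2\to 1^+$, and check the signs of derivatives directly. If that fails, I would use three blocks to gain a free parameter and argue by the intermediate value theorem on the derivative $h_s'$, which is a polynomial numerator in $q$.
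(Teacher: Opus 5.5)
Your construction is the paper's in different notation: a flagged mixture of two qubit blocks of different purity, converted into a single qubit of intermediate purity. Your $t_2<t_0<t_1$ corresponds to the paper's $a^2<c^2<b^2$, and your weight $s$ to $1-w$. The monotonicity you list as the main obstacle is immediate. With $\sigma\coloneqq(t-1)/(t+1)$ and $u\coloneqq(1-2q)^2$, your bracket equals $2(1-\sigma)/(1-\sigma^2u)$. So $\phi_{t_1}/\phi_{t_0}$ is a positive multiple of $(1-\sigma_0^2u)/(1-\sigma_1^2u)$, whose $u$-derivative has the sign of $\sigma_1^2-\sigma_0^2$, and $u$ increases with $q$ on $[1/2,1]$.

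The genuine gap is the sweeping step. Even if you showed that two-point ties occur for only finitely many $s$, that would not suffice. A single tie at some $s^*$ is fully compatible with upper semicontinuity of the argmin, with the minimizer set equal to $\{1\}$ for $s<s^*$ and to $\{1/2\}$ for $s>s^*$. Your sentence ``otherwise the argmin would jump, which forces a tie'' produces only one tie, and one tie contradicts nothing. In this family the minimizer really does sit at an endpoint over whole intervals of $s$ (for instance at $q=1/2$ for $s$ near $1$). So a priori the set of all minimizers could be just $\{1/2,1\}$, and then $S=\{1/2,1\}$ would defeat the argument.

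What you need is that $h_s$ has a unique minimizer for \emph{every} $s$; Berge's theorem and the intermediate value theorem then give the sweep. Uniqueness follows from unimodality. Write $h_s(u)=(1-\sigma_0^2u)\bigl[sA_1/(1-\sigma_1^2u)+(1-s)A_2/(1-\sigma_2^2u)\bigr]$ with constants $A_1,A_2>0$. Then
\[ h_s'(u)=sA_1\frac{\sigma_1^2-\sigma_0^2}{(1-\sigma_1^2u)^2}-(1-s)A_2\frac{\sigma_0^2-\sigma_2^2}{(1-\sigma_2^2u)^2}. \]
Because $(1-\sigma_2^2u)/(1-\sigma_1^2u)$ is strictly increasing, $h_s'$ changes sign at most once, from negative to positive.

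You can then skip the sweeping argument altogether, as the paper does. Pick $q_0\in(1/2,1)\setminus S$, set $u_0=(1-2q_0)^2$, and solve the linear equation $h_s'(u_0)=0$ for $s\in(0,1)$. The single sign change makes $q_0$ the unique minimizer on $[1/2,1)$.

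One smaller point: for full-rank $\rho'$ you have $\mathcal{Q}^1_{\rho'}=0$, so the supremum at $q=1$ is $+\infty$, not the limit of your ratio. Your ``continuous $h$'' must therefore be read as the continuous extension from $[1/2,1)$. This is harmless, since the true value at $q=1$ only dominates that extension, but you should state it, as the paper does.
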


The proof is given in the Supplemental Material~\cite{sm}. Comparing this proposition with Eq.~\eqref{eq:main_theorem_conversion_rate_inf_q}, we see that no state-independent finite set of $q$ values is sufficient to determine the conversion rate for arbitrary states.
Figure~\ref{fig:conversion_rate_law_comparison} summarizes the resulting qualitative distinction between the previously known single-measure laws and the general mixed-state conversion law established here.

\begin{figure*}[t]
\centering
\begingroup

\definecolor{figKnown}{HTML}{5083C5}
\definecolor{figNew}{HTML}{397A3E}

\begin{tikzpicture}[
  >=latex, x=1cm, y=1cm,
  line cap=round, line join=round,
  every node/.style={inner sep=0pt,outer sep=0pt,align=center},
  panel/.style={fill=white,rounded corners=2pt,line width=.55pt},
  grouphead/.style={font=\fontsize{9.3}{10.8}\selectfont\bfseries},
  rowhead/.style={font=\fontsize{8.5}{9.9}\selectfont\bfseries},
  panelhead/.style={
    font=\fontsize{8.7}{10.0}\selectfont\bfseries,
    text width=3.15cm},
  note/.style={font=\fontsize{8.1}{9.35}\selectfont,text width=3.15cm},
  eqn/.style={font=\fontsize{9.5}{11.1}\selectfont},
  smallref/.style={
    font=\fontsize{7.7}{8.8}\selectfont\bfseries,
    text width=3.15cm},
  widehead/.style={
    font=\fontsize{8.7}{10.0}\selectfont\bfseries,
    text width=6.7cm},
  widenote/.style={font=\fontsize{8.1}{9.35}\selectfont,text width=6.7cm},
  wideeqn/.style={font=\fontsize{9.1}{10.6}\selectfont},
  wideref/.style={
    font=\fontsize{7.7}{8.8}\selectfont\bfseries,
    text width=6.7cm},
  % Single measure: regular text, no background.
  measure/.style={
    font=\fontsize{8.0}{9.2}\selectfont\mdseries\upshape,
    text width=3.15cm,
    fill=none,draw=none},
  % Family of measures: same size, bold, no background.
  family/.style={
    measure,
    font=\fontsize{8.0}{9.2}\selectfont\bfseries\upshape}
]

% ========================================================
% Main theorem
% ========================================================
\draw[draw=figNew,fill=figNew!4,rounded corners=3pt,line width=1pt]
  (.05,0) rectangle (15.60,-1.52);
\node[font=\fontsize{9.0}{10.4}\selectfont\bfseries,text=figNew]
  at (7.825,-.26)
  {Main result --- Theorem~\ref{thm:main_theorem_conversion_rate}};
\node[font=\fontsize{10.0}{11.8}\selectfont\bfseries]
  at (7.825,-.69)
  {Complete family of asymmetry measures for compact Lie group asymmetry};
\node[font=\fontsize{9.45}{11.1}\selectfont]
  at (7.825,-1.18)
  {$R(\rho\!\to\!\rho')=
    \sup\left\{r\ge0:\;
    \mathcal{Q}_\rho^{q}\succeq r\mathcal{Q}_{\rho'}^{q}
    \quad\forall q\in[1/2,1]\right\}$};

% Branching
\draw[line width=.65pt] (7.825,-1.52) -- (7.825,-1.77);
\draw[line width=.65pt] (3.85,-1.77) -- (11.80,-1.77);
\draw[->,line width=.65pt] (3.85,-1.77) -- (3.85,-2.02);
\draw[->,line width=.65pt] (11.80,-1.77) -- (11.80,-2.02);

% ========================================================
% Group frames: known / new
% ========================================================
\draw[draw=figKnown,fill=figKnown!3,rounded corners=3pt,
      line width=.9pt,dash pattern=on 4pt off 2pt]
  (.05,-2.02) rectangle (7.65,-9.25);
\draw[draw=figNew,fill=figNew!3,rounded corners=3pt,line width=.9pt]
  (8.00,-2.02) rectangle (15.60,-9.25);
\node[grouphead,text=figKnown] at (3.85,-2.28) {Known results};
\node[grouphead,text=figNew] at (11.80,-2.28) {This work};

% Row headings
\node[rowhead] at (3.85,-2.63) {Optimal conversion rates};
\node[rowhead] at (11.80,-2.63) {Optimal conversion rates};
\node[rowhead] at (3.85,-6.50) {Asymptotic reversibility};
\node[rowhead] at (11.80,-6.50) {Asymptotic reversibility};

% ========================================================
% Upper panels
% ========================================================
\draw[panel,draw=figKnown!75] (.25,-2.88) rectangle (3.73,-6.08);
\draw[panel,draw=figKnown!75] (3.97,-2.88) rectangle (7.45,-6.08);
\draw[panel,draw=figNew!75] (8.20,-2.88) rectangle (11.68,-6.08);
\draw[panel,draw=figNew!75] (11.92,-2.88) rectangle (15.40,-6.08);

% Labels inside the panels; only Family of measures is bold.
\node[measure,text=figKnown] at (1.99,-3.12) {Single measure};
\node[measure,text=figKnown] at (5.71,-3.12) {Single measure};
\node[family,text=figNew] at (9.94,-3.12) {Family of measures};
\node[measure,text=figNew] at (13.66,-3.12) {Single measure};

% (a) Pure-to-mixed: established U(1) formation law
\node[panelhead] at (1.99,-3.54) {(a) Pure-to-mixed};
\node[note] at (1.99,-3.93) {$U(1)$ group};
\node[font=\fontsize{9.1}{10.7}\selectfont] at (1.99,-4.77)
  {$R(\psi\!\to\!\rho')=
    \dfrac{\mathcal{Q}_\psi^{1/2}}
          {\mathcal{Q}_{\rho'}^{1/2}}$};
\node[note] at (1.99,-5.56) {SLD point $(q=1/2)$};
\node[smallref,text=figKnown] at (1.99,-5.87)
  {Ref.~\cite{marvianOperationalInterpretationQuantum2022}};

% (b) Pure-to-pure: adjacent to the general-state panel (c)
\node[panelhead] at (5.71,-3.54) {(b) Pure-to-pure};
\node[note] at (5.71,-3.93) {Compact Lie group};
\node[eqn] at (5.71,-4.77)
  {$\mathcal Q_\psi\geq r\mathcal Q_\phi$};
\node[note] at (5.71,-5.56) {standard QGT};
\node[smallref,text=figKnown] at (5.71,-5.87)
  {Ref.~\cite{yamaguchi_QuantumGeometricTensorDeterminesPureState_2026}};

% (c) General states: universal finite sampling is insufficient
\node[panelhead] at (9.94,-3.54) {(c) General states};
\node[note] at (9.94,-3.93) {Already for $U(1)$};
\node[note] at (9.94,-4.77)
  {No fixed finite $q$-grid\\[-1pt] suffices universally};
\node[note] at (9.94,-5.56) {QGT family};
\node[smallref,text=figNew] at (9.94,-5.87)
  {Proposition~\ref{prop:finite_q_is_insufficient}};

% (d) Mixed-to-pure: a single-measure reduction of the general law
\node[panelhead] at (13.66,-3.54) {(d) Mixed-to-pure};
\node[note] at (13.66,-3.93) {Compact Lie group};
\node[eqn] at (13.66,-4.77)
  {$\mathcal Q_\rho^1\geq r\mathcal Q_\phi^1$};
\node[note] at (13.66,-5.56) {QGT endpoint $(q=1)$};
\node[smallref,text=figNew] at (13.66,-5.87)
  {Corollary~\ref{cor:distillation}};

% ========================================================
% Lower panels: reversibility
% ========================================================
\draw[panel,draw=figKnown!75] (.25,-6.77) rectangle (7.45,-9.03);
\draw[panel,draw=figNew!75] (8.20,-6.77) rectangle (15.40,-9.03);

\node[measure,text=figKnown] at (3.85,-7.00) {Single measure};
\node[family,text=figNew] at (11.80,-7.00) {Family of measures};

% (e) Pure-to-pure reversibility
\node[widehead] at (3.85,-7.37) {(e) Pure-to-pure reversibility};
\node[widenote] at (3.85,-7.71) {standard QGT};
\node[wideeqn] at (3.85,-8.08)
  {$\operatorname{Sym}_G(\psi)=\operatorname{Sym}_G(\phi)$};
\node[wideeqn] at (3.85,-8.46)
  {$\exists!\,r>0:\quad\mathcal Q_\psi=r\mathcal Q_\phi$};
\node[wideref,text=figKnown] at (3.85,-8.80)
  {Ref.~\cite{yamaguchi_QuantumGeometricTensorDeterminesPureState_2026};
   conjectured in Ref.~\cite{marvianAsymmetryPropertiesPure2014}};

% (f) General-state reversibility
\node[widehead] at (11.80,-7.37) {(f) General-state reversibility};
\node[widenote] at (11.80,-7.71) {Metric adjusted QGT family};
\node[wideeqn] at (11.80,-8.08)
  {$\operatorname{Sym}_G(\rho)=\operatorname{Sym}_G(\rho')$};
\node[wideeqn] at (11.80,-8.46)
  {$\exists!\,r>0:\quad
    \mathcal{Q}_\rho^{q}=r\mathcal{Q}_{\rho'}^{q}\quad\forall q$};
\node[wideref,text=figNew] at (11.80,-8.80)
  {Corollary~\ref{cor:reversibility}};

\end{tikzpicture}
\endgroup

\caption{
\textbf{Single measure versus a family of measures:
conversion rates and reversibility.}
The upper box gives the complete conversion-rate formula of Theorem~\ref{thm:main_theorem_conversion_rate} under the symmetry-subgroup inclusion condition. Blue dashed and green solid frames distinguish known results from this work.
\emph{Optimal conversion rates:} (a) $U(1)$ pure-to-mixed formation is determined by the ratio of SLD QFIs~\cite{marvianOperationalInterpretationQuantum2022}, which equals the ratio of metric adjusted QGTs at $q=1/2$;
(b) pure-to-pure conversion is characterized by a single QGT~\cite{yamaguchi_QuantumGeometricTensorDeterminesPureState_2026}, which unifies earlier pure-state conversion results~\cite{gourResourceTheoryQuantum2008,marvianOperationalInterpretationQuantum2022,shitara_IidStateConvertibilityResourceTheory_2025,yangUnitsRotationalInformation2017};
(c) the general conversion law cannot be reduced to a state-independent fixed finite set of $q$ values (Proposition~\ref{prop:finite_q_is_insufficient});
and (d) a pure target reduces the characterization to the endpoint $\mathcal{Q}^1$ (Corollary~\ref{cor:distillation}).
\emph{Asymptotic reversibility:}
(e) pure-state reversibility is characterized by equal symmetry subgroups and proportional QGTs;
(f) general-state reversibility is characterized by equal symmetry subgroups and proportionality of the full metric adjusted QGT family with a common factor (Corollary~\ref{cor:reversibility}).
In (e) and (f), $\exists!\,r>0$ denotes a unique positive proportionality factor.
Throughout, $q\in[1/2,1]$, and a ``single measure'' may be matrix-valued.
}
\label{fig:conversion_rate_law_comparison}
\end{figure*}

We emphasize, however, that the necessity of the continuous metric adjusted QGT family does not imply that the conversion rate must be computed by solving matrix inequalities over a continuum of $q$ values.
Indeed, the Supplemental Material~\cite{sm} provides a semidefinite-programming (SDP) formulation of the conversion rate,
without explicitly evaluating the metric adjusted QGT family. 
The SDP solves the computational problem of evaluating the rate, whereas the central resource-theoretic result is the identification of the metric adjusted QGT family as the complete set of asymmetry measures that determine asymptotic conversion rates.

\section{Distillation and reversibility}
This section presents two immediate consequences of Theorem~\ref{thm:main_theorem_conversion_rate}: a distillation-rate formula determined by the QGT endpoint and a necessary and sufficient condition for asymptotic reversibility.

\subsection{Reduction to a single complete measure in distillation}
% Let $\Pi_\rho$ be the projector onto the support of $\rho$, and define the generalized QGT
% \begin{align}
%     \left(\mathcal{Q}_\rho\right)_{ij}\coloneqq \Tr(\rho X_i (I-\Pi_\rho)X_j).
% \end{align}
% This quantity was introduced in Ref.~\cite{yamaguchi_QuantumGeometricTensorDeterminesPureState_2026}. For pure states, the generalized QGT reduces to the standard QGT~\cite{provostRiemannianStructureManifolds1980,berryQuantumPhaseFive1989}. 

Separating transitions involving the kernel of $\rho$ from those within its support gives
\begin{align}
    \mathcal{Q}_{\rho}^{q}=q\mathcal{Q}_\rho^1+(1-q)(\mathcal{Q}_\rho^1)^\top+\tilde{\mathcal{Q}}_\rho^{q}.\label{eq:decomposition_F_q_zeroevs}
\end{align}
Here, the within-support contribution is 
\begin{align}
    \left(\tilde{\mathcal{Q}}_\rho^{q}\right)_{ij}\coloneqq \sum_{\substack{k,l;\\\mu_l>0,\mu_k>0} }\frac{q(1-q)(\mu_k-\mu_l)^2\braket{l|X_i|k}\braket{k|X_j|l}}{(1-q)\mu_l+q\mu_k}.
\end{align}
It is positive semidefinite, vanishes identically for pure states, and tends to zero as $q\to0^+$ and $q\to 1^-$ for any fixed state $\rho$. 

For a pure output, the complete family in Theorem~\ref{thm:main_theorem_conversion_rate} collapses to a single endpoint. The resulting formula promotes the upper bound on the distillation rate in Ref.~\cite{yamaguchi_QuantumGeometricTensorDeterminesPureState_2026} to an achievable rate:
\begin{corollary}\label{cor:distillation}
    For an arbitrary state $\rho$ and a pure state $\phi$ such that $\Sym_G(\rho)\subset\Sym_G(\phi)$, 
    \begin{align}
        R(\rho\to\phi)=\sup\{r\geq 0\colon \mathcal{Q}_\rho^1\geq r \mathcal{Q}_\phi^1\},\label{eq:distillation_rate}
    \end{align}
\end{corollary}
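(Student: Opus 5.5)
We derive this directly from Theorem~\ref{thm:main_theorem_conversion_rate}. The symmetry-subgroup condition holds by assumption, so
\begin{align}
R(\rho\to\phi)=\sup\{r\geq 0\colon \forall q\in[1/2,1],\ \mathcal{Q}_\rho^q\geq r\mathcal{Q}_\phi^q\}.
\end{align}
Write $r^\star\coloneqq\sup\{r\geq0\colon \mathcal{Q}_\rho^1\geq r\mathcal{Q}_\phi^1\}$. Because $q=1$ is one of the constraints in the theorem, we immediately get $R(\rho\to\phi)\leq r^\star$. The remaining task is the reverse inequality: if $\mathcal{Q}_\rho^1\geq r\mathcal{Q}_\phi^1$, then $\mathcal{Q}_\rho^q\geq r\mathcal{Q}_\phi^q$ holds for every $q\in[1/2,1]$.

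For the reverse direction we use the decomposition in Eq.~\eqref{eq:decomposition_F_q_zeroevs} for both states. The target $\phi$ is pure, so $\tilde{\mathcal{Q}}_\phi^q=0$ and Eq.~\eqref{eq:QGT_q_pure} gives $\mathcal{Q}_\phi^q=q\mathcal{Q}_\phi^1+(1-q)(\mathcal{Q}_\phi^1)^\top$. For the input we have $\mathcal{Q}_\rho^q=q\mathcal{Q}_\rho^1+(1-q)(\mathcal{Q}_\rho^1)^\top+\tilde{\mathcal{Q}}_\rho^q$, and $\tilde{\mathcal{Q}}_\rho^q\geq0$. This gives
\begin{align}
\mathcal{Q}_\rho^q-r\mathcal{Q}_\phi^q\geq q\bigl(\mathcal{Q}_\rho^1-r\mathcal{Q}_\phi^1\bigr)+(1-q)\bigl(\mathcal{Q}_\rho^1-r\mathcal{Q}_\phi^1\bigr)^\top.
\end{align}
The matrices here are Hermitian; $\mathcal{Q}^1$ is complex Hermitian in general, and its transpose equals its complex conjugate. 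So if $A\coloneqq\mathcal{Q}_\rho^1-r\mathcal{Q}_\phi^1\geq0$, then $A^\top\geq0$, as already noted after Eq.~\eqref{eq:QFI_transpose_relation}. The right-hand side is then a convex combination of positive semidefinite matrices, hence positive semidefinite, and the inequality holds for all $q\in[0,1]$. It follows that $R(\rho\to\phi)\geq r^\star$, since every $r$ with $\mathcal{Q}_\rho^1\geq r\mathcal{Q}_\phi^1$ is feasible for the theorem's supremum. If $r^\star$ is infinite, that is, if $\mathcal{Q}_\phi^1=0$, the same argument applies for every $r$.

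There is no serious obstacle. The only points to check are that Eq.~\eqref{eq:decomposition_F_q_zeroevs} holds for every state, including pure ones, with $\tilde{\mathcal{Q}}$ positive semidefinite, and that it holds at the endpoint $q=1$, where $\tilde{\mathcal{Q}}^1=0$ by continuity. Both facts were stated just before the corollary. The positivity of $\tilde{\mathcal{Q}}_\rho^q$ follows from writing it as a Gram-type sum $\sum_{k,l}c_{kl}\,v_{kl}v_{kl}^\dagger$ with $c_{kl}\geq0$ and $(v_{kl})_i=\braket{k|X_i|l}$.
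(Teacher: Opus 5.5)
Your proposal is correct and follows the paper's proof essentially verbatim. The $q=1$ constraint gives the upper bound, and for the converse you use $\mathcal{Q}_\rho^q-r\mathcal{Q}_\phi^q=q\Delta+(1-q)\Delta^\top+\tilde{\mathcal{Q}}_\rho^q$, with $\tilde{\mathcal{Q}}_\phi^q=0$ and $\Delta^\top\geq 0$. One minor index slip: for the Gram form $\sum_{k,l}c_{kl}v_{kl}v_{kl}^\dagger$ to reproduce $\tilde{\mathcal{Q}}_\rho^q$ exactly, you need $(v_{kl})_i=\braket{l|X_i|k}$ rather than $\braket{k|X_i|l}$, though positivity holds either way.
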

\begin{proof}
    Necessity is the $q=1$ condition in Theorem~\ref{thm:main_theorem_conversion_rate}.
    Conversely, let $\Delta\coloneqq \mathcal{Q}_\rho^1- r \mathcal{Q}_\phi^1\geq 0$. Then we also have $\Delta^\top\geq 0$, and since $\tilde{\mathcal{Q}}_\phi^{q}=0$ for any pure state $\phi$, 
    \begin{align}
        &\mathcal{Q}_{\rho}^{q}-r\mathcal{Q}_{\phi}^{q}=q\Delta+(1-q)\Delta^\top+\tilde{\mathcal{Q}}_{\rho}^{q}\geq0,
    \end{align}
    for every $q\in[1/2,1]$. Theorem~\ref{thm:main_theorem_conversion_rate} then gives Eq.~\eqref{eq:distillation_rate}. 
\end{proof}
When the input state is also pure, the endpoint tensors are the standard QGTs, and Corollary~\ref{cor:distillation} reduces to the pure-state conversion law of Ref.~\cite{yamaguchi_QuantumGeometricTensorDeterminesPureState_2026}, which in turn recovers earlier special cases of pure-state i.i.d. asymmetry conversion~\cite{gourResourceTheoryQuantum2008,marvianOperationalInterpretationQuantum2022,yangUnitsRotationalInformation2017,shitara_IidStateConvertibilityResourceTheory_2025,marvianmashhadSymmetryAsymmetryQuantum2012,marvianAsymmetryPropertiesPure2014}.

For $G=U(1)$, all the matrix-valued quantities reduce to scalars because $\dim G=1$. A canonical application of $G=U(1)$ asymmetry theory is quantum clocks. Let $\phi$ be a pure asymmetric state serving as a reference clock, and let $\rho$ be a state with the same symmetry subgroup, or equivalently, the same period. Define the formation cost and distillation yield per copy of $\rho$ by
\begin{align}
    C_\phi(\rho)\coloneqq \frac{1}{R(\phi\to \rho)},\qquad D_\phi(\rho)\coloneqq R(\rho\to\phi). 
\end{align}
Corollary~\ref{cor:distillation} gives 
\begin{align}
    D_\phi(\rho)=\frac{\mathcal{Q}_\rho^1}{\mathcal{Q}_\phi}=\frac{\mathcal{Q}_\rho^1}{\Var_\phi(H_\phi)}\label{eq:distillation_U1},
\end{align}
where we used $\mathcal{Q}_\phi^q=\Var_\phi(H_\phi)\coloneqq \braket{\phi|H_\phi^2|\phi}-\braket{\phi|H_\phi|\phi}^2$ for any $q\in[0,1]$. 
This formula yields a necessary and sufficient criterion for bound asymmetry with respect to pure-clock distillation. 
Here, by analogy with bound entanglement~\cite{horodecki_SeparabilityCriterionInseparablemixedstates_1997,horodecki_MixedStateEntanglementDistillationthereBound_1998}, we say that an asymmetric state $\rho$ has bound asymmetry if $D_\phi(\rho)=0$. Let $H$ denote the $U(1)$ generator, i.e., the Hamiltonian, on the system of $\rho$. Equation~\eqref{eq:distillation_U1} gives
\begin{align}
    D_\phi(\rho)=0\Longleftrightarrow \mathcal{Q}_\rho^1=0\Longleftrightarrow  [\Pi_\rho,H]=0.\label{eq:distillation_condition_U1}
\end{align}
The implication $[\Pi_\rho,H]=0\,\implies\,D_\phi(\rho)=0$ was established in Ref.~\cite{marvianCoherenceDistillationMachines2020}; Eq.~\eqref{eq:distillation_U1} proves the converse. 
An asymmetric state serves as a quantum reference frame~\cite{bartlett_reference_2007,gourResourceTheoryQuantum2008,gourMeasuringQualityQuantum2009}. Thus, the above condition also implies that a quantum reference frame $\rho$ satisfying Eq.~\eqref{eq:distillation_condition_U1} cannot yield a pure-state quantum reference frame $\phi$ from $\rho$ at a nonvanishing rate, even when it is asymmetric.

On the other hand, the formation cost for $U(1)$ asymmetry is determined by the SLD QFI~\cite{marvianOperationalInterpretationQuantum2022}, or equivalently by the metric adjusted QGT at the SLD point:
\begin{align}
    C_\phi(\rho)=\frac{\mathcal{F}_\rho^{f_{1/2}}}{\mathcal{F}_\phi^{f_{1/2}}}=\frac{\mathcal{Q}_\rho^{1/2}}{\mathcal{Q}_\phi^{1/2}}=\frac{\mathcal{Q}_\rho^{1/2}}{\Var_\phi(H_\phi)}.
\end{align}
This formula also follows from Theorem~\ref{thm:main_theorem_conversion_rate} using a scalar inequality for the $U(1)$ group
\begin{align}
    \mathcal{Q}_{\sigma}^{q}\leq \mathcal{Q}_{\sigma}^{{1/2}},\label{eq:Fq_SLD_inequality}
\end{align}
which is proven in the Supplemental Material~\cite{sm}.

The formation cost and distillation yield are thus respectively measured by the two endpoints $q=1/2$ and $q=1$, in the same units set by the pure clock $\phi$. Their difference is
\begin{align}
    C_\phi(\rho)-D_\phi(\rho)=\frac{\mathcal{Q}_\rho^{1/2}-\mathcal{Q}_\rho^1}{\Var_\phi(H_\phi)}=\frac{\tilde{\mathcal{Q}}_\rho^{1/2}}{\Var_\phi(H_\phi)}.
\end{align}
The within-support contribution $\tilde{\mathcal{Q}}_\rho^{1/2}$ quantifies the formation cost that is not recovered by pure-clock distillation. Such irreversibility is also captured by the corresponding round-trip efficiency 
\begin{align}
    \eta_{\mathrm{rt}}(\rho)\coloneqq R(\phi\to\rho)R(\rho\to\phi)=\frac{\mathcal{Q}_\rho^1}{\mathcal{Q}_\rho^{1/2}}=1-\frac{\tilde{\mathcal{Q}}_\rho^{1/2}}{\mathcal{Q}_\rho^{1/2}},
\end{align}
which gives the fraction of pure-clock resource recovered after an optimal formation-and-distillation cycle. The unrecoverable fraction, $\frac{\tilde{\mathcal{Q}}_\rho^{1/2}}{\mathcal{Q}_\rho^{1/2}}$, is precisely the fraction of the QGT at the SLD point arising from transitions within the support of $\rho$.

\subsection{Necessary and sufficient condition for asymptotic reversibility}
We call the asymptotic conversion between $\rho$ and $\rho'$ reversible if $R(\rho\to\rho')R(\rho'\to\rho)=1$. 
Theorem~\ref{thm:main_theorem_conversion_rate} also yields a necessary and sufficient condition for this reversibility.
\begin{corollary}\label{cor:reversibility}
    The asymptotic conversion between $\rho$ and $\rho'$ is reversible if and only if $\Sym_G(\rho)=\Sym_G(\rho')$ and there exists a unique $r>0$ such that
    \begin{align}
        \forall q\in[1/2,1],\quad \mathcal{Q}^{q}_\rho= r \mathcal{Q}^{q}_{\rho'}.\label{eq:reversibility_F}
    \end{align}
\end{corollary}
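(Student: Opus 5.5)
The plan is to derive Corollary~\ref{cor:reversibility} directly from Theorem~\ref{thm:main_theorem_conversion_rate}, treating the two directions separately. Throughout, write $r_q(\rho\to\rho')\coloneqq\sup\{r\geq0\colon \mathcal{Q}^q_\rho\geq r\mathcal{Q}^q_{\rho'}\}$. By Eq.~\eqref{eq:main_theorem_conversion_rate_inf_q}, $R(\rho\to\rho')=\inf_q r_q(\rho\to\rho')$ whenever the symmetry-subgroup condition holds, and $R=0$ otherwise.

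\emph{Sufficiency.} Assume $\Sym_G(\rho)=\Sym_G(\rho')$ and $\mathcal{Q}^q_\rho=r\mathcal{Q}^q_{\rho'}$ for all $q\in[1/2,1]$ with a unique $r>0$. Both subgroup inclusions then hold, so the theorem applies in both directions. The conditions $\mathcal{Q}^q_\rho\geq r\mathcal{Q}^q_{\rho'}$ and $\mathcal{Q}^q_{\rho'}\geq r^{-1}\mathcal{Q}^q_\rho$ hold for every $q$. This gives $R(\rho\to\rho')\geq r$ and $R(\rho'\to\rho)\geq 1/r$, hence the product is at least $1$.

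The reverse inequality $R(\rho\to\rho')R(\rho'\to\rho)\leq1$ should hold in general by composing conversions. A rate $s$ followed by a rate $t$ yields $\rho\to\rho$ at rate $st$, with error controlled by the triangle inequality and monotonicity of trace distance. I would check that $R(\rho\to\rho)\leq1$ whenever $\rho$ is not symmetric. This follows from the theorem itself: some $\mathcal{Q}^q_\rho\neq0$, and $\mathcal{Q}^q_\rho\geq s\mathcal{Q}^q_\rho$ forces $s\leq1$. The uniqueness clause is automatic when the family is nonzero. If $\rho$ is symmetric, all $\mathcal{Q}^q$ vanish and the statement is degenerate. I would note that the uniqueness of $r$ is exactly what excludes this case: $r$ is unique precisely when some $\mathcal{Q}^q_{\rho'}\neq0$.

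\emph{Necessity.} Suppose $R(\rho\to\rho')R(\rho'\to\rho)=1$. Both rates must be positive, so both subgroup inclusions hold and $\Sym_G(\rho)=\Sym_G(\rho')$. Let $a=R(\rho\to\rho')$ and $b=R(\rho'\to\rho)=1/a$. The supremum in Eq.~\eqref{eq:main_theorem_conversion_rate} is attained, since the set of admissible $r$ is closed by closedness of the PSD cone. We therefore get $\mathcal{Q}^q_\rho\geq a\mathcal{Q}^q_{\rho'}$ and $\mathcal{Q}^q_{\rho'}\geq a^{-1}\mathcal{Q}^q_\rho$ for all $q$. Multiplying the second by $a$ gives $a\mathcal{Q}^q_{\rho'}\geq\mathcal{Q}^q_\rho\geq a\mathcal{Q}^q_{\rho'}$, so $\mathcal{Q}^q_\rho=a\mathcal{Q}^q_{\rho'}$ for every $q$.

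For uniqueness, note that reversibility requires finite rates. If all $\mathcal{Q}^q_{\rho'}=0$, then $\rho'$ would be symmetric, which makes $R(\rho\to\rho')=\infty$ (or renders the product ill-defined). So some $\mathcal{Q}^q_{\rho'}\neq0$, and the proportionality factor is then unique.

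The main subtle step is the finiteness and edge-case bookkeeping: the convention for symmetric states, and the attainment of the supremum. The rest is immediate from Theorem~\ref{thm:main_theorem_conversion_rate}.
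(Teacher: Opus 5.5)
Your proof is correct. The necessity direction is essentially the paper's. Positive rates force both subgroup inclusions, and the supremum in Eq.~\eqref{eq:main_theorem_conversion_rate} is attained; you argue this via closedness of the admissible set, the paper via a limit of achievable rates. The two-sided inequalities then give proportionality, and a vanishing family for $\rho'$ would make $R(\rho\to\rho')$ infinite.

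Sufficiency is where you depart. The paper reads the exact rates off the formula. Uniqueness of $r$ means some $\mathcal{Q}^q_{\rho'}\neq 0$, so any admissible $s$ must satisfy $(r-s)\mathcal{Q}^q_{\rho'}\geq 0$, which forces $s\leq r$. Hence $R(\rho\to\rho')=r$, and symmetrically $R(\rho'\to\rho)=1/r$.

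You instead take only the lower bounds from Theorem~\ref{thm:main_theorem_conversion_rate}. You then cap the product using the round-trip inequality $R(\rho\to\rho')R(\rho'\to\rho)\leq R(\rho\to\rho)$ together with $R(\rho\to\rho)\leq 1$. This works, and it isolates a formula-independent fact about composing covariant conversions. But it is a detour. The nonvanishing of $\mathcal{Q}^q_\rho=r\mathcal{Q}^q_{\rho'}$, which you need for $R(\rho\to\rho)\leq 1$, already gives $R(\rho\to\rho')\leq r$ in one line. You also still have to invoke the theorem for $\rho\to\rho$.

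One correction to your justifications: you identify ``$\rho$ is not $G$-symmetric'' with ``some $\mathcal{Q}^q_\rho\neq 0$''. That is true for connected $G$, but the corollary covers arbitrary compact Lie groups. For disconnected $G$, a state invariant under the identity component but not under $G$ has $\mathcal{Q}^q_\rho=0$ for all $q$. For example, take $G=U(1)\times\mathbb{Z}_2$ acting on a qubit, with $U(1)$ acting trivially, $\mathbb{Z}_2$ acting by $Z$, and $\rho=\ket{+}\bra{+}$. Here Theorem~\ref{thm:main_theorem_conversion_rate}, or a one-copy measure-and-prepare protocol in the $\ket{\pm}$ basis, gives $R(\rho\to\rho)=\infty$.

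So ``$R(\rho\to\rho)\leq 1$ whenever $\rho$ is not symmetric'' and ``all $\mathcal{Q}^q_{\rho'}=0$ implies $\rho'$ symmetric'' are false as stated. Neither is load-bearing. In sufficiency, the nonvanishing comes from the uniqueness of $r$. In necessity, $R(\rho\to\rho')=\infty$ follows directly from the theorem once the inclusion holds and the family of $\rho'$ vanishes. Phrase both steps in terms of the $\mathcal{Q}^q$ rather than symmetry.
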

\begin{proof}
    Suppose $R(\rho\to\rho')R(\rho'\to\rho)=1$. Since $R(\rho\to\rho')$ and $R(\rho'\to\rho)$ must be positive, we have $\Sym_G(\rho)\subset\Sym_G(\rho')$ and $\Sym_G(\rho')\subset\Sym_G(\rho)$, and hence $\Sym_G(\rho)=\Sym_G(\rho')$. 
    Moreover, writing $r\coloneqq R(\rho\to\rho')$, we have $R(\rho'\to\rho)=1/r$. The conversion-rate formula for each conversion direction yields $\mathcal{Q}^{q}_\rho\geq r \mathcal{Q}^{q}_{\rho'}$ and $r \mathcal{Q}^{q}_{\rho'}\geq \mathcal{Q}^{q}_\rho$ for all $q\in [1/2,1]$. Indeed, although the rate is defined as a supremum, the inequalities remain valid at the optimal rate by taking a sequence of achievable rates that converges to the supremum. 
    Consequently, we obtain Eq.~\eqref{eq:reversibility_F}. 

    Moreover, the proportionality factor $r$ must be unique; otherwise, if two distinct proportionality factors satisfied Eq.~\eqref{eq:reversibility_F}, their difference would imply $\mathcal{Q}^{q}_{\rho'}=0$, and hence $\mathcal{Q}^{q}_\rho=0$ for all $q\in(0,1)$. In this case Theorem~\ref{thm:main_theorem_conversion_rate} would give an unbounded conversion rate, contradicting $R(\rho\to\rho')R(\rho'\to\rho)=1$.
    
    Conversely, if $\Sym_G(\rho)=\Sym_G(\rho')$ and there exists a unique $r>0$ satisfying Eq.~\eqref{eq:reversibility_F}, then Theorem~\ref{thm:main_theorem_conversion_rate} gives $R(\rho\to\rho')=r$ and $R(\rho'\to\rho)=1/r$, and hence $R(\rho\to\rho')R(\rho'\to\rho)=1$. 
\end{proof}

For pure states $\psi,\phi$, Eq.~\eqref{eq:reversibility_F} reduces to proportionality of the standard QGT. Indeed, necessity follows by evaluating the condition at $q=1$:
\begin{align}
    \mathcal{Q}_\psi=r\mathcal{Q}_\phi
\end{align}
Conversely, this equality and its transpose imply $\mathcal{Q}_\psi^q=r\mathcal{Q}_\phi^q$ for every $q\in[1/2,1]$ by Eq.~\eqref{eq:QGT_q_pure}.
A condition equivalent to this QGT criterion, formulated without using the QGT, was conjectured by Marvian and Spekkens~\cite{marvianAsymmetryPropertiesPure2014}, and was later proven from the pure-state QGT conversion-rate formula in Ref.~\cite{yamaguchi_QuantumGeometricTensorDeterminesPureState_2026}.

For general states, Corollary~\ref{cor:reversibility} extends this characterization from a single QGT to the full metric adjusted QGT family: reversible conversion requires the same proportionality factor for every $q$, together with the equality of their symmetry subgroups. This is the same single-measure-versus-family distinction highlighted in Figure~\ref{fig:conversion_rate_law_comparison}. To the authors' best knowledge, no necessary and sufficient condition
for asymptotic reversibility had previously been proposed or established for general mixed states in the resource theory of asymmetry.

\subsection{Example: distillation and reversible conversion in quantum clocks}\label{sec:Example_quantum_clock}

To illustrate both distillation and reversible conversion, consider a qutrit system with an orthonormal basis $\{\ket{0},\ket{1},\ket{2}\}$, whose Hamiltonian is given by $H=\ket{1}\bra{1}+2\ket{2}\bra{2}$. Introducing another orthonormal basis
\begin{align}
    \ket{v_{\pm}}&\coloneqq \frac{1}{2}\left(\ket{0}\pm\sqrt{2}\ket{1}+\ket{2}\right),\\
    \ket{v_0}&\coloneqq \frac{1}{\sqrt{2}}(\ket{0}-\ket{2}),
\end{align}
we have
\begin{align}
    \braket{v_+|H|v_-}&=0,\\
    |\braket{v_+|H|v_0}|^2&=|\braket{v_0|H|v_-}|^2=\frac{1}{2},
\end{align}
implying that the Hamiltonian couples only adjacent states $\ket{v_+}\leftrightarrow\ket{v_0}\leftrightarrow\ket{v_-}$. Consider a state
\begin{align}
    \sigma\coloneqq \frac{3}{4}\ket{v_+}\bra{v_+}+\frac{1}{4}\ket{v_0}\bra{v_0},
\end{align}
which satisfies $\Sym_{U(1)}(\sigma)=\{e\}$.
Equation~\eqref{eq:decomposition_F_q_zeroevs} decomposes the metric adjusted QGT into the support-kernel and support-support contributions as
\begin{align}
    \mathcal{Q}_{\sigma}^{q}=\mathcal{Q}_\sigma^1+\tilde{\mathcal{Q}}^{q}_{\sigma}\label{eq:QFI_sigma}
\end{align}
with
\begin{align}
    \mathcal{Q}_\sigma^1=\frac{1}{8},\qquad \tilde{\mathcal{Q}}_\sigma^{q}=\frac{2q(1-q)}{(1+2q)(3-2q)}.
\end{align}

As a standard pure reference clock~\cite{yamaguchiIidResourceTheory2023,marvianOperationalInterpretationQuantum2022}, we adopt the coherence bit $\ket{\phi}\coloneqq (\ket{0}+\ket{1})/\sqrt{2}$ with Hamiltonian $H_\phi\coloneqq \ket{1}\bra{1}$, which satisfies $\Sym_{U(1)}(\ket{\phi}\bra{\phi})=\{e\}$. As Corollary~\ref{cor:distillation} shows, only the support-kernel term contributes to the pure-state distillation, and the distillation rate is given by
\begin{align}
    D_\phi(\sigma)=R(\sigma\to\phi)=\frac{\mathcal{Q}_\sigma^1}{\mathcal{Q}_\phi^1}=\frac{1}{2},
\end{align}
where we used $\mathcal{Q}_\phi^1=\Var_\phi(H_\phi)=1/4$ for the coherence bit. 
Moreover, since $\mathcal{Q}_{\sigma}^{1/2}=1/4$ and $\tilde{\mathcal{Q}}_{\sigma}^{1/2}=1/8$, the round-trip efficiency is given by 
\begin{align}
    \eta_{\mathrm{rt}}(\sigma)=R(\phi\to\sigma)R(\sigma\to\phi)=\frac{1}{2}.
\end{align}

We now investigate reversible conversion. Consider a two-qubit system with Hamiltonian $H_1\otimes I+I\otimes H_2$ and $H_1=H_2=\ket{1}\bra{1}$, and a state $\rho\coloneqq\rho_1\otimes \rho_2$ with
\begin{align}
    \rho_1\coloneqq \frac{1}{2}\left(I+\frac{1}{2}X+\frac{\sqrt{3}}{2}Z\right),\, \rho_2\coloneqq \frac{1}{2}\left(I+\frac{1}{2}X\right).
\end{align}
The state $\rho_1$ is pure, while $\rho_2$ is full-rank. Their symmetry subgroups are trivial: $\Sym_{U(1)}(\rho_1)=\Sym_{U(1)}(\rho_2)=\{e\}$, and hence $\Sym_{U(1)}(\rho)=\{e\}$. 
A direct calculation yields
\begin{align}
    \mathcal{Q}_{\rho_1}^{q}=\frac{1}{16},\quad \mathcal{Q}_{\rho_2}^{q}=\frac{q(1-q)}{(1+2q)(3-2q)},\label{eq:QFI_rho1_rho2}
\end{align}
implying that
\begin{align}
    \mathcal{Q}_{\rho}^{q}=\mathcal{Q}_{\rho_1}^{q}+\mathcal{Q}_{\rho_2}^{q}=\frac{1}{2} \mathcal{Q}_{\sigma}^{q}
\end{align}
for all $q\in[0,1]$. Thus, Corollary~\ref{cor:reversibility} ensures that the conversion between $\rho$ and $\sigma$ is asymptotically reversible, and indeed the conversion rates are given by
\begin{align}
    R(\sigma\to\rho_1\otimes\rho_2)=2,\qquad R(\rho_1\otimes\rho_2\to\sigma)=\frac{1}{2}.\label{eq:reversiblity_example}
\end{align}

This example highlights the different roles of the two contributions in Eq.~\eqref{eq:decomposition_F_q_zeroevs}. Pure-state distillation $\sigma\to\phi$ depends only on the support-kernel contribution $\mathcal Q_\sigma^1$. By contrast, reversibility between the mixed state $\sigma$ and $\rho=\rho_1\otimes\rho_2$ is governed by the full metric adjusted QGT family. In the present example, both contributions satisfy the proportionality condition:
\begin{align}
    \mathcal{Q}_\rho^1=\frac{1}{2}\mathcal{Q}_\sigma^1,\qquad \tilde{\mathcal{Q}}_\rho^{q}=\frac{1}{2}\tilde{\mathcal{Q}}_\sigma^{q}\text{  for all }q\in(0,1).
\end{align}
Thus, the support-support contribution, which is not recovered by pure-state distillation, can still contribute to reversible mixed-state conversion. This distinction also underlies the activation mechanism discussed next.

\section{Activation of asymmetry}
From the definition of the asymptotic conversion rate, parallel execution of separate conversion protocols implies
\begin{align}
    R(\rho_1\otimes\rho_2\to\sigma)\geq R(\rho_1\to\sigma)+R(\rho_2\to\sigma).
\end{align}
A strict inequality therefore signals a genuine advantage of joint processing over separate conversion. Such a strict superadditivity can arise when combining inputs removes a symmetry-subgroup obstruction, as in the synchronization protocol of Ref.~\cite{yamaguchi_QuantumGeometricTensorDeterminesPureState_2026}. 

The complete metric adjusted QGT characterization in Theorem~\ref{thm:main_theorem_conversion_rate} reveals a new mechanism, in which the inputs compensate for each other's deficits associated with different members of the metric adjusted QGT family. In what follows, we take $G=U(1)$ and only consider states satisfying the symmetry-subgroup condition $\Sym_G(\rho_i)\subset \Sym_G(\sigma)$.

Suppose first that the conversion rate were solely determined by an additive scalar measure $\mathcal{M}$, with $\mathcal{M}(\sigma)>0$, namely, 
\begin{align}
    R(\tau\to \sigma)=\frac{\mathcal{M}(\tau)}{\mathcal{M}(\sigma)}
\end{align}
for $\tau=\rho_1,\rho_2$ and $\rho_1\otimes\rho_2$. 
Then, strict superadditivity is impossible since
\begin{align}
    R(\rho_1\otimes \rho_2\to\sigma)&=\frac{\mathcal{M}(\rho_1\otimes \rho_2)}{\mathcal{M}(\sigma)}\\
    &=\frac{\mathcal{M}(\rho_1)+\mathcal{M}(\rho_2)}{\mathcal{M}(\sigma)}\\
    &=R(\rho_1\to\sigma)+R(\rho_2\to\sigma).
\end{align}
For the $U(1)$ group, the endpoint $\mathcal{Q}^1$ provides such a measure when the output is pure, and so does the metric adjusted QGT at the SLD point $\mathcal{Q}^{1/2}$ when the inputs $\rho_i$ are pure. Thus, these additive single-measure laws disable improvement via joint processing. 

For an asymmetric mixed output $\sigma$ and general inputs, however, Theorem~\ref{thm:main_theorem_conversion_rate} and additivity of the metric adjusted QGT give
\begin{align}
    R(\rho_1\otimes\rho_2\to \sigma)=\inf_{q\in[1/2,1)}\left\{r_{1}(q)+r_2(q)\right\},\label{eq:rate_joint_proc}
\end{align}
where 
\begin{align}
    r_{i}(q)\coloneqq \frac{\mathcal{Q}_{\rho_i}^{q}}{\mathcal{Q}_{\sigma}^{q}}.
\end{align}
On the other hand, by Theorem~\ref{thm:main_theorem_conversion_rate}, the individual rate is given by $R(\rho_i\to\sigma)=\inf_{q\in[1/2,1)}r_i(q)$, and hence
\begin{align}
    R(\rho_1\to\sigma)+R(\rho_2\to\sigma)=\inf_{q\in[1/2,1)}r_1(q)+\inf_{q\in[1/2,1)}r_2(q).\label{eq:rate_separate_proc}
\end{align}

Comparison between Eqs.~\eqref{eq:rate_joint_proc} and~\eqref{eq:rate_separate_proc} shows that joint processing adds their metric adjusted QGT contributions before taking the infimum, allowing one input to compensate for the constraint that limits the other. The resulting rate can therefore exceed the sum of the separate rates. This ``resource complementarity'' arises across the metric adjusted QGT family, even though each individual QGT remains additive. 

The states $\rho_1,\rho_2,\sigma$ in Section~\ref{sec:Example_quantum_clock} provide a concrete example. Indeed, from Eqs.~\eqref{eq:QFI_sigma} and~\eqref{eq:QFI_rho1_rho2}, we obtain their ratios
\begin{align}
    r_1(q)&=\frac{(1+2q)(3-2q)}{2(3+20q(1-q))},\quad r_2(q)=\frac{8q(1-q)}{3+20q(1-q)},
\end{align}
Figure~\ref{fig:fq_activation} shows the behavior of these quantities.
The ratio $r_1(q)$ attains its minimum $1/4$ at the SLD point $q=1/2$, while the second ratio $r_2(q)$ approaches its infimum $0$ as $q\to 1^-$. Nevertheless, their sum is constant:
\begin{align}
    r_1(q)+r_2(q)=\frac{1}{2},\qquad \forall q\in[1/2,1).
\end{align}
Therefore,
\begin{align}
    R(\rho_1\to\sigma)=\frac{1}{4}&,\qquad R(\rho_2\to\sigma)=0,\\
    R(\rho_1\otimes\rho_2\to\sigma)&=\frac{1}{2},
\end{align}
implying that the zero-rate mixed state $\rho_2$ doubles the yield from $\rho_1$. This is analogous to activation of channel capacities~\cite{duan_SuperActivationZeroErrorCapacityNoisyQuantum_2009,li_PrivateCapacityQuantumChannelsNot_2009,smith_ExtensiveNonadditivityPrivacy_2009}, and therefore, we call it asymmetry activation. 

The metric adjusted QGT family clarifies the mechanism of this improvement. The pure input $\rho_1$ supplies the support-kernel contribution required by the output state, but its individual rate is limited at the SLD point. The full-rank input $\rho_2$ does not supply the support-kernel contribution, i.e., $\mathcal{Q}_{\rho_2}^1=0$, while $\mathcal{Q}_\sigma^1=1/8>0$. Thus, the endpoint condition in Theorem~\ref{thm:main_theorem_conversion_rate} forces its individual conversion rate to vanish. Nevertheless, the support-support contribution of $\rho_2$ supplies the missing SLD-point QGT. The two inputs compensate for each other's deficits, thereby satisfying every metric adjusted QGT constraint at rate $1/2$. A contribution that cannot be recovered by pure-clock distillation thus increases the optimal yield of a mixed-state conversion, resulting in the asymptotic reversibility in Eq.~\eqref{eq:reversiblity_example}.

\begin{figure}
    \centering
    \includegraphics[width=0.85\linewidth]{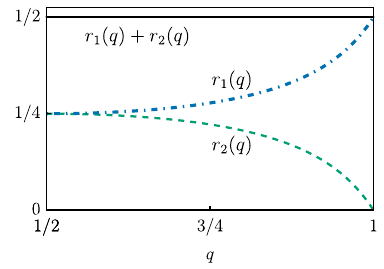}
    \caption{\textbf{Asymmetry activation via metric adjusted QGT complementarity.}
    The infima of the individual ratios $r_1(q)\coloneqq  \frac{\mathcal{Q}_{\rho_1}^{q}}{\mathcal{Q}_{\sigma}^{q}}$ and $r_2(q)\coloneqq \frac{\mathcal{Q}_{\rho_2}^{q}}{\mathcal{Q}_{\sigma}^{q}}$ are $1/4$ and $0$, which equal $R(\rho_1\to\sigma)$ and $R(\rho_2\to\sigma)$, respectively. The sum, $r_1(q)+r_2(q)$, is $1/2$ for all $q\in[1/2,1)$, hence the joint conversion rate $R(\rho_1\otimes\rho_2\to\sigma)=1/2$. Thus, the zero-rate mixed state $\rho_2$ doubles the resulting yield when jointly processed with $\rho_1$.} 
    \label{fig:fq_activation}
\end{figure}

\section{Proof sketch}\label{sec:proof_sketch}
The proof of Theorem~\ref{thm:main_theorem_conversion_rate} consists of two parts: a converse part and a direct part. The converse part provides an upper bound of the optimal conversion rate, while the direct part proves its tightness.

\subsection{Converse part}
Suppose that $\rho$ is convertible to $\rho'$ at rate $r$ with a sequence of channels $\mathcal{E}_n$. The actual output $\chi_n\coloneqq \mathcal{E}_n(\rho^{\otimes n})$ satisfies $\mathcal{Q}_{\rho^{\otimes n}}^{q}\geq\mathcal{Q}_{\chi_n}^{q}$ due to the monotonicity of metric adjusted QGTs. By the additivity of metric adjusted QGT, $n\mathcal{Q}_{\rho}^{q}=\mathcal{Q}_{\rho^{\otimes n}}^{q}$, which implies
\begin{align}
    \mathcal{Q}_{\rho}^{q}\geq \frac{1}{n}\mathcal{Q}_{\chi_n}^{q}.\label{eq:monotonicity_fq_chi}
\end{align}
Since the trace-distance error $\epsilon_n\coloneqq T(\chi_n,\rho'^{\otimes \floor{rn}})$ converges to zero, one may naively expect that $\mathcal{Q}_{\chi_n}^{q}$ can directly be replaced by $\mathcal{Q}_{\rho'^{\otimes \floor{rn}}}^{q}=m_n\mathcal{Q}_{\rho'}^{q}$ with $m_n\coloneqq \floor{rn}$, thereby obtaining $\mathcal{Q}_{\rho}^{q}\geq r \mathcal{Q}_{\rho'}^{q}$ in the limit of $n\to\infty$. 

However, a trace-distance perturbation of size $\epsilon_n$ can change the $f_q$-QFI, and hence the metric adjusted QGT, by order $\epsilon_nn^2$. Consequently, vanishing trace-distance error alone does not justify replacing the per-copy tensor by its i.i.d. value, and a more careful analysis is required~\cite{gourMeasuringQualityQuantum2009,marvianCoherenceDistillationMachines2020,marvianOperationalInterpretationQuantum2022,yamaguchiSmoothMetricAdjusted2023,yamaguchi_QuantumGeometricTensorDeterminesPureState_2026}.

To overcome this difficulty, we show that there is a real-valued function $h_q$ such that
\begin{align}
    \frac{1}{ m_n}\mathcal{Q}_{\chi_n}^{q}&\geq\mathcal{Q}_{\rho'}^{q}-q(1-q)h_q(\epsilon_n)\mathcal{X},\label{eq:continuity_iid}\\
    \lim_{\epsilon\to0^+}h_q(\epsilon)&=0
\end{align}
for $q\in(0,1)$, where $(\mathcal{X})_{ij}\coloneqq \Tr(\tilde{X}_i'\tilde{X}_j')$ with $\tilde{X}_i'\coloneqq X_i'-(\Tr(X_i')/d')I$. Combining Eqs.~\eqref{eq:monotonicity_fq_chi} and~\eqref{eq:continuity_iid}, and taking the limit of $n\to\infty$, we obtain
\begin{align}
   \mathcal{Q}_{\rho}^{q}\geq r \mathcal{Q}_{\rho'}^{q},
\end{align}
for $q\in(0,1)$ and continuity then gives the endpoint inequalities at $q=0,1$. Thus,
\begin{align}
    R(\rho\to\rho')\leq \sup\{r\geq 0\colon \forall q\in [1/2,1],\,\mathcal{Q}_{\rho}^{q}\geq r \mathcal{Q}_{\rho'}^{q}\}.
\end{align}

\subsection{Direct part}

In the direct part, assuming the symmetry-subgroup condition, we show that every rate $r$ strictly below the metric adjusted QGT bound in Eq.~\eqref{eq:main_theorem_conversion_rate} is asymptotically achievable. We use the equivalence between convertibility from $\rho$ to $\rho'$ via a $G$-covariant channel and convertibility from a statistical model $\{\mathcal{U}_g(\rho)\}_{g\in G}$ to $\{\mathcal{U}_g'(\rho')\}_{g\in G}$ via a parameter-independent quantum channel~\cite{marvianmashhadSymmetryAsymmetryQuantum2012,yamaguchi_QuantumGeometricTensorDeterminesPureState_2026}. In the latter setting, $g\in G$ is treated as an unknown parameter and the desired convertibility means $\mathcal{E}_n(\mathcal{U}_g(\rho)^{\otimes n})\approx \mathcal{U}_g'(\rho')^{\otimes m_n}$ uniformly over $g\in G$, where $\mathcal{E}_n$ is not necessarily a $G$-covariant channel.

To achieve such a conversion, we adopt the estimate-and-convert strategy~\cite{yamaguchi_QuantumGeometricTensorDeterminesPureState_2026}.
We first use a sublinear (in $n$) number of copies of $\mathcal{U}_g(\rho)$ to obtain an estimate of $g\in G$, which identifies the parameter region within an error of order $\sim n^{-1/2}$ with an exponentially small failure probability.
Since $r$ is strictly below the metric adjusted QGT bound, the local conversion can be performed at a slightly larger intermediate rate; hence, the sublinear number of copies used for estimation does not affect the final rate $r$.
After this parameter localization, QLAN applies and reversibly converts each i.i.d. statistical model into a Gaussian shift model:
\begin{align}
    \mathcal{U}_{g(un^{-1/2})}(\rho)^{\otimes n}\mathrel{\overset{T_{n}}{\underset{S_{n}}{\rightleftarrows}}}\mathcal{G}_u,\\
    \mathcal{U}'_{g(un^{-1/2})}(\rho')^{\otimes m_n}\mathrel{\overset{T_{m_n}'}{\underset{S_{m_n}'}{\rightleftarrows}}}\mathcal{G}'_{\sqrt{r}u},
\end{align}
with vanishing error as $n\to\infty$ in trace distance. Here, $u$ is the remaining unknown parameter, $T_n,S_n,T'_{m_n},S'_{m_n}$ are quantum channels, and $\mathcal{G}_u$ and $\mathcal{G}_u'$ are the corresponding Gaussian shift models whose displacement is parametrized by $u\in\mathbb{R}^{\dim G}$.
Since QLAN has been established under the assumption that positive eigenvalues are nondegenerate~\cite{gutaLocalAsymptoticNormality2006,kahnQuantumLocalAsymptotic2008,kahnLocalAsymptoticNormality2009,lahiryMinimaxEstimationLowrank2024}, we extend it to general finite-dimensional unitary models, allowing arbitrary rank and degeneracies among the positive eigenvalues, in the Supplemental Material. 
The resulting Gaussian shift models have the same family of $f_q$-QFI matrices as the corresponding unitary models.
We then prove that the family of $f_q$-QFI inequalities is equivalent to the exact convertibility between the Gaussian shift models. Using Eq.~\eqref{eq:definition_Q_F} and continuous extension to the endpoints $q=0,1$, this gives
\begin{align}
    &\mathcal{Q}_{\rho}^{q}\geq r\mathcal{Q}_{\rho'}^{q}\qquad \forall q\in[1/2,1]\nonumber\\
    &\Longleftrightarrow \exists\text{a quantum channel } \Lambda\text{ independent of $u$ such that }\nonumber\\
    &\qquad \qquad \Lambda(\mathcal{G}_u)=\mathcal{G}'_{\sqrt{r}u},\quad \forall u\in\mathbb{R}^{\dim G}.
\end{align}
Consequently, the composite channels $\Lambda_n\coloneqq S_{m_n}' \circ \Lambda \circ T_n$ achieve asymptotic local conversion
\begin{align}
    \Lambda_n\left(\mathcal{U}_{g(un^{-1/2})}(\rho)^{\otimes n}\right)\approx \mathcal{U}'_{g(un^{-1/2})}(\rho')^{\otimes m_n}
\end{align}
with vanishing error. Concretely, also taking into account the failure probability, we derive a uniform upper bound on the conversion error of order $n^{-1/2+\kappa}$ for any $\kappa\in (0,1/2)$. 
Combining these local channels with the estimation step and using the equivalence between $G$-covariant state conversion and conversion of the associated statistical models, we conclude that the rate $r$ is achievable with error of order $n^{-1/2+\kappa}$.

\section{Conclusion}

This paper establishes a formula for i.i.d. asymptotic conversion between arbitrary states on finite-dimensional systems in the resource theory of asymmetry for any compact Lie group. This identifies a complete family of quantities characterizing symmetry breaking in this setting: a one-parameter family of metric adjusted QGTs, defined from QFI matrices interpolating between the SLD and RLD QFIs. Our formula is single-letter, i.e., it requires neither optimization over pure-state ensembles nor regularization over multiple copies, in contrast to formulas for mixed states that arise in some other resource theories~\cite{hayden_asymptotic_2001,devetakDistillationSecretKey2005}.
Combined with the connection between QFIs and linear-response theory~\cite{shitara_DeterminingContinuousFamilyquantumFisher_2016}, our formula expresses the conversion rate in terms of experimentally accessible quantities. 

Our formula for general states highlights the qualitative difference from the pure-state case. Unlike in the pure-state case, the rate is governed not by a single matrix-valued measure but by the family of metric adjusted QGTs; indeed, no state-independent finite subset of this family reproduces all conversion rates, even for $U(1)$ symmetry. Despite this continuum of constraints, the rate admits a computationally tractable finite-dimensional semidefinite programming formulation. Asymptotically reversible interconversion between asymmetric states occurs precisely when their symmetry subgroups coincide, and all QGTs in the family are proportional by a unique positive factor. The full family thus determines which conversions are reversible and which incur an unavoidable loss.

For pure-state distillation, the rate is governed by the endpoint $\mathcal{Q}^1$, which depends on matrix elements of the symmetry generators connecting the state's support to its kernel. For quantum clocks with a common period, the within-support contribution at the SLD point quantifies the gap between the cost of formation from pure clock states and the distillation yield into pure clock states. We demonstrated asymmetry activation phenomena arising from this contribution: a mixed state with zero asymptotic yield to a given mixed clock state increases the yield obtainable from another input when the two are processed jointly. Asymmetry that cannot be recovered through distillation into pure clock states can therefore retain value in mixed-state conversion.

\begin{acknowledgments}
The authors thank Yui Kuramochi, Yosuke Mitsuhashi, and Tomohiro Shitara for valuable discussions. 
K.Y. acknowledges support through the Dieter Schwarz Foundation, a Discovery Grant of the Natural Sciences and Engineering Research Council of Canada (NSERC), and is grateful for the hospitality of Perimeter Institute, where part of this work was carried out. Research at Perimeter Institute is supported in part by the Government of Canada through the Department of Innovation, Science and Economic Development Canada and by the Province of Ontario through the Ministry of Colleges, Universities, Research Excellence and Security.
H.T. was supported by JSPS Grants-in-Aid for Scientific Research No. JP25K00924, MEXT KAKENHI Grant-in-Aid for Transformative
Research Areas B ``Quantum Energy Innovation” Grant Numbers 24H00830 and 24H00831, JST FOREST No. JPMJFR2365, JST MOONSHOT No. JPMJMS256E and Royal Society International Collaboration Awards 2025 Flexigrant number ICA/R2/252240.

\end{acknowledgments}

\clearpage

\appendix
\widetext

\makeatletter
\ltx@footnote@pop
\makeatother

\setcounter{footnote}{0}
\setcounter{page}{1}
\setcounter{figure}{0}
\setcounter{table}{0}
\setcounter{equation}{0}
\setcounter{secnumdepth}{3}

% Appendix sections: A, B, C, ...
\renewcommand{\thesection}{\Alph{section}}
\renewcommand{\thesubsection}{\arabic{subsection}}
\renewcommand{\thesubsubsection}{\alph{subsubsection}}

% Reset theorem counter at every appendix section
\makeatletter
\@addtoreset{theorem}{section}
\makeatother

% Theorem-like environments: A.1, A.2, ...
\renewcommand{\thetheorem}{\thesection.\arabic{theorem}}
\renewcommand{\theconjecture}{\thetheorem}
\renewcommand{\theproposition}{\thetheorem}
\renewcommand{\thecorollary}{\thetheorem}
\renewcommand{\thedefinition}{\thetheorem}
\renewcommand{\thelemma}{\thetheorem}
\renewcommand{\theassumption}{\thetheorem}
\renewcommand{\theremark}{\thetheorem}
\renewcommand{\thefact}{\thetheorem}

% Figures / tables / equations
\renewcommand{\thefigure}{S\arabic{figure}}
\renewcommand{\thetable}{S\arabic{table}}
\renewcommand{\theequation}{\thesection.\arabic{equation}}
\renewcommand{\theHequation}{\theHsection.\arabic{equation}}

% hyperref anchors
\renewcommand{\theHtheorem}{\theHsection.\arabic{theorem}}
\renewcommand{\theHconjecture}{\theHtheorem}
\renewcommand{\theHproposition}{\theHtheorem}
\renewcommand{\theHcorollary}{\theHtheorem}
\renewcommand{\theHdefinition}{\theHtheorem}
\renewcommand{\theHlemma}{\theHtheorem}
\renewcommand{\theHassumption}{\theHtheorem}
\renewcommand{\theHremark}{\theHtheorem}
\renewcommand{\theHfact}{\theHtheorem}

\begin{center}
{\large \bf Supplemental Material for\\
``Quantifying Symmetry Breaking with Metric Adjusted Quantum Geometric Tensors''}\\
\vspace*{0.3cm}
Koji Yamaguchi$^{1,2}$ and Hiroyasu Tajima$^{3,4}$\\
\vspace*{0.1cm}

$^{1}${\small \it Department of Physics, University of Waterloo, Waterloo, ON N2L 3G1, Canada}
\\
$^{2}${\small \it Perimeter Institute for Theoretical Physics, Waterloo, Ontario N2L 2Y5, Canada}

$^{3}${\small \it Department of Informatics, Faculty of Information Science and Electrical Engineering,
Kyushu University, 744 Motooka, Nishi-ku, Fukuoka, 819-0395, Japan}

$^{4}${\small \it JST, FOREST, 4-1-8 Honcho, Kawaguchi, Saitama, 332-0012, Japan}

\end{center}

\vspace{1cm}

Throughout this Supplemental Material, all Hilbert spaces are complex, and their inner product $\braket{\cdot,\cdot}$ is anti-linear in the first argument and linear in the second. For a vector $v$, its standard norm is denoted by $\|v\|\coloneqq \sqrt{\braket{v,v}}$.

For a linear operator $T$ on a vector space $V$, we use three different norms; the trace norm: $\|T\|_1\coloneqq \Tr_V|T|=\Tr\sqrt{T^\dag T}$, the Hilbert--Schmidt norm: $\|T\|_\HS\coloneqq \sqrt{\Tr_V(T^\dag T)}$, and the operator norm: $\|T\|_\op\coloneqq \sup_{\|v\|=1;\,v\in V}\|Tv\|$. For notational simplicity, the operator norm is simply written as $\|T\|$ when no confusion can arise. 

\vspace{1cm}

% =========================
% Appendix Part I
% =========================

\begin{center}    
\textbf{{\large \underline{Part I: Asymptotic Conversion Rate in the Resource Theory of Asymmetry}}}
\end{center}
In this Part, we prove the optimal asymptotic conversion rate in the resource theory of asymmetry for arbitrary states on finite-dimensional systems for any compact Lie group. We postpone the proofs of two key results to Parts II and III: the convertibility between Gaussian shift models and the unitary-orbit QLAN.

\AppendixTOCOne{Contents of Part I}

\section{Metric adjusted quantum geometric tensor}

\subsection{Monotone metric and its variational expression} 
A function $f:(0,\infty)\to (0,\infty)$ is called an operator-monotone function if and only if for any positive operators $A$ and $B$ of the same size, $0<A\leq B$ implies $f(A)\leq f(B)$. By rescaling the function, we may assume $f(1)=1$ without loss of generality. For an invertible quantum state $\rho$ on a finite-dimensional Hilbert space $\mathcal{H}$ of dimension $d$, we define a superoperator $\mathbb{J}^f_\rho\coloneqq f(\mathbb{L}_\rho \mathbb{R}_\rho^{-1})\mathbb{R}_\rho$, where $\mathbb{L}_\rho$ and $\mathbb{R}_\rho$ are left- and right-multiplications defined by 
\begin{align}
    \mathbb{L}_\rho A=\rho A,\qquad \mathbb{R}_\rho A=A\rho.
\end{align}
More explicitly, using an eigenvalue decomposition of $\rho=\sum_{i=1}^d\mu_i\ket{i}\bra{i}$, we have
\begin{align}
    \mathbb{J}^f_\rho A=\sum_{i,j=1}^df(\mu_i/\mu_j)\mu_j\braket{i|A|j}\ket{i}\bra{j}.
\end{align}
We then define 
\begin{align}
    \braket{A,B}_{f,\rho}\coloneqq \braket{A,(\mathbb{J}^{f}_\rho)^{-1}B}
   =\sum_{i,j=1}^d \frac{1}{f(\mu_i/\mu_j)\mu_j}\braket{j|A^\dag|i}\braket{i|B|j},\label{eq:innerproduct_invertible}
\end{align}
where $\braket{A,B}\coloneqq \Tr(A^\dag B)$ denotes the Hilbert--Schmidt inner product. The norm $\|\cdot\|_{f,\rho}$ is then defined by
\begin{align}
    \|A\|_{f,\rho}^2\coloneqq \braket{A,(\mathbb{J}^{f}_\rho)^{-1}A}
   =\sum_{i,j=1}^d \frac{|\braket{i|A|j}|^2}{f(\mu_i/\mu_j)\mu_j}.\label{eq:norm_invertible_state}
\end{align}
For a completely positive and trace-preserving (CPTP) map $\mathcal{E}$ such that $\mathcal{E}(\rho)$ is also invertible, this norm is non-increasing~\cite{petzMonotoneMetricsMatrix1996}, i.e., 
\begin{align}
     \|A\|_{f,\rho}^2\geq  \|\mathcal{E}(A)\|_{f,\mathcal{E}(\rho)}^2. \label{eq:monotonicity_norm}
\end{align}

When $\rho$ is not invertible, the definition of the quadratic form in Eq.~\eqref{eq:norm_invertible_state} requires a careful treatment since the definition of $\mathbb{J}^f_\rho$ includes $\mathbb{R}_\rho^{-1}$. Following the standard boundary convention (e.g., in Refs.~\cite{hiaiDifferentQuantumFdivergences2017,hiai_QuantumFDivergencesNeumannAlgebrasReversibility_2021}), we extend $yf(x/y)$ by
\begin{align}
    m_f(x,y)\coloneqq \lim_{\epsilon\to 0^+}(y+\epsilon)f\left(\frac{x+\epsilon}{y+\epsilon}\right)=
    \begin{cases}
        yf(x/y)&(x>0\land y>0)\\
        c_0y&(x=0\land y>0)\\
        c_\infty x&(x>0\land y=0)\\
        0&(x=y=0)
    \end{cases},\label{eq:definition_m_f}
\end{align}
where the boundary values are denoted by
\begin{align}
    c_0\coloneqq f(0)\coloneqq \lim_{\epsilon\to 0^+}f(\epsilon),\qquad c_\infty&\coloneqq \lim_{t\to\infty}\frac{f(t)}{t}.\label{eq:bdry_values_monotone_fct}
\end{align}
We then define $\mathbb{J}^f_\rho \coloneqq m_f(\mathbb{L}_\rho,\mathbb{R}_\rho)$, or equivalently,
\begin{align}
    \mathbb{J}^f_\rho A=\sum_{i,j=1}^dm_f(\mu_i,\mu_j)\braket{i|A|j}\ket{i}\bra{j},
\end{align}
using an eigenvalue decomposition of $\rho=\sum_{i=1}^d\mu_i\ket{i}\bra{i}$. 

Since $\mathbb{J}^f_\rho $ may have a nontrivial kernel, we use the Moore--Penrose inverse $(\mathbb{J}^f_\rho)^+$ instead of $(\mathbb{J}^f_\rho)^{-1}$ and define~\cite{petzIntroductionQuantumFisher2011}
\begin{align}
    \braket{A,B}_{f,\rho}&\coloneqq \braket{A,(\mathbb{J}^{f}_\rho)^{+}B}=\sum_{\substack{i,j=1\\m_f(\mu_i,\mu_j)>0}}^d \frac{1}{m_f(\mu_i,\mu_j)}\braket{j|A^\dag|i}\braket{i|B|j}\\
\|A\|_{f,\rho}^2&\coloneqq \braket{A,A}_{f,\rho}=\sum_{\substack{i,j=1\\m_f(\mu_i,\mu_j)>0}}^d \frac{1}{m_f(\mu_i,\mu_j)}|\braket{i|A|j}|^2.
\end{align}
On the full operator space, this is generally a seminorm, while it is a norm if restricted to $\Ran \mathbb{J}_{\rho}^f$.

Let $\rho_{\theta}$ be a family of states smoothly parameterized by $p$ real parameters $\theta\in\Theta$, where $\Theta\subset\mathbb{R}^p$ is open. Then, the $f$-QFI $\mathcal{F}^f_{\rho_\theta}$ is defined as a $p\times p$ matrix whose elements are given by
\begin{align}
    \left(\mathcal{F}^f_{\rho_\theta}\right)_{ij}\coloneqq \braket{\partial_{\theta_i}\rho_\theta,\partial_{\theta_j}\rho_\theta}_{f,\rho_\theta}=\braket{\partial_{\theta_i}\rho_\theta,(\mathbb{J}^{f}_{\rho_\theta})^{+}\partial_{\theta_j}\rho_\theta}.\label{eq:definition_QFI_matrix}
\end{align}

\subsection{Metric adjusted skew information}
For an operator monotone function satisfying the following conditions:
\begin{enumerate}[(i)]
    \item Regularity condition:
    \begin{align}
        c_0\coloneqq f(0)\coloneqq \lim_{\epsilon\to 0^+}f(\epsilon)>0,\label{eq:regularity_condition_operator_monotone}
    \end{align}
    \item Symmetry condition: 
    \begin{align}
        f(t)=tf(t^{-1})\qquad \text{for all } t>0,\label{eq:symmetry_condition_operator_monotone}
    \end{align}
\end{enumerate}
metric adjusted skew information is defined~\cite{hansen_metric_2008} by
\begin{align}
    I_{\rho}^f(A)&\coloneqq\frac{f(0)}{2} \|\ii[\rho,A]\|_{f,\rho}^2=\frac{f(0)}{2}\sum_{\substack{i,j=1\\m_f(\mu_i,\mu_j)>0}}^d \frac{(\mu_i-\mu_j)^2}{m_f(\mu_i,\mu_j)}|\braket{i|A|j}|^2.
\end{align}
The term ``skew information'' takes its name from the Wigner--Yanase skew information~\cite{wigner_information_1963}, which is a special case of the metric-adjusted skew information corresponding to $f(t)=(1+\sqrt{t})^2/4$~\cite{hansen_metric_2008}.

The qualifier ``metric adjusted'' refers to the fact that it is defined in terms of the corresponding monotone metric, rescaled by the factor $f(0)/2$. Indeed, for a unitary model $\rho_t\coloneqq e^{\ii t A} \rho e^{-\ii tA}$, its first derivative is given by $\partial_t \rho_t|_{t=0}=\ii [A,\rho]$, implying that 
\begin{align}
    I_{\rho}^f(A)=\frac{f(0)}{2} \mathcal{F}_{\rho_t}^f\biggl|_{t=0}.\label{eq:skew_info_and_QFI}
\end{align}

The normalization is chosen so that it coincides with the variance for pure states. Indeed, since the symmetry condition implies
\begin{align}
    c_\infty=\lim_{t\to\infty}\frac{f(t)}{t}=\lim_{t\to \infty}f(t^{-1})=f(0)=c_0\label{eq:c_inf_equal_c_0}
\end{align}
and hence $m_f(t,0)=m_f(0,t)=tf(0)$ for $t>0$, for a pure state $\psi$, we have
\begin{align}
    I_{\psi}^f(A)&=\frac{f(0)}{2}\left(\frac{1}{f(0)}\sum_{\substack{i=1\\\mu_i\neq 1}}^d|\braket{\psi|A|i}|^2+\frac{1}{f(0)}\sum_{\substack{i=1\\\mu_i\neq 1}}^d|\braket{i|A|\psi}|^2\right)=\braket{\psi|A(I-\psi)A|\psi}\\
    &=\braket{\psi|A^2|\psi}-\braket{\psi|A|\psi}^2.
\end{align}

\subsection{Metric adjusted quantum geometric tensor}

Here we introduce the metric adjusted QGT associated with an operator monotone function $f$. 
We consider operator monotone functions satisfying two boundary regularity conditions. In addition to the usual regularity condition in Eq.~\eqref{eq:regularity_condition_operator_monotone}, namely
\begin{align}
    c_0\coloneqq f(0)\coloneqq \lim_{\epsilon\to 0^+}f(\epsilon)>0,
\end{align}
we also impose
\begin{align}
    c_\infty=\lim_{t\to\infty}\frac{f(t)}{t}>0.
\end{align}
In contrast to the metric adjusted skew information, we do not impose the symmetry condition in Eq.~\eqref{eq:symmetry_condition_operator_monotone}. 

For a family of states $\rho_{\theta}$ smoothly parameterized by $p$ real parameters $\theta\in\Theta\subset \mathbb{R}^p$, we define the metric adjusted QGT by
\begin{align}
    \mathcal{Q}^f\coloneqq \frac{c_0c_\infty}{c_0+c_\infty}\mathcal{F}^f_{\rho_\theta}\biggl|_{\theta=0}
\end{align}
where $\mathcal{F}^f_{\rho_\theta}$ is defined in \eqref{eq:definition_QFI_matrix}. Since $\mathcal{Q}^f$ is Hermitian, its real and imaginary parts are given by
\begin{align}
    \Re \mathcal{Q}^f=\frac{1}{2}\left(\mathcal{Q}^f+\left(\mathcal{Q}^f\right)^\top\right),\qquad \Im \mathcal{Q}^f=\frac{1}{2\ii}\left(\mathcal{Q}^f-\left(\mathcal{Q}^f\right)^\top\right).
\end{align}

When $f$ is symmetric, this definition reduces to the metric adjusted skew information for unitary models. Indeed, the symmetry condition implies $c_0=c_\infty$ as shown in Eq.~\eqref{eq:c_inf_equal_c_0}, and therefore
\begin{align}
    \frac{c_0c_\infty}{c_0+c_\infty}=\frac{f(0)}{2}.
\end{align}
For symmetric $f$, $m_f(x,y)=m_f(y,x)$ holds for $x,y\in\mathbb{R}_{\geq 0}$, implying that
\begin{align}
    \left( \mathcal{Q}^{f}\right)^\top= \mathcal{Q}^{f}
\end{align}
and hence 
\begin{align} 
    \Re \mathcal{Q}^{f} =\mathcal{Q}^{f},\qquad \Im \mathcal{Q}^{f}=0.
\end{align}
Therefore, for a unitary model $\rho_\theta=e^{\ii \sum_{i=1}^p\theta_iX_i}\rho e^{-\ii \sum_{i=1}^p\theta_iX_i}$, we obtain
\begin{align}
    \forall u\in\mathbb{R}^p, \qquad u^\top \mathcal{Q}_\rho^f u= u^\top \Re \mathcal{Q}_\rho^f u=I_{\rho}^f\left(\sum_{i=1}^pu_iX_i\right).
\end{align}

Dropping the symmetry condition allows the metric adjusted QGT to have a nonzero imaginary part. This is essential for recovering the full QGT, rather than only its real part. To make this connection explicit, consider the one-parameter family
\begin{align}
    f_q(t)\coloneqq (1-q)+qt.
\end{align}
For $q\in(0,1)$, $f_q$ satisfies both boundary conditions with $c_0=1-q$ and $c_\infty=q$. The corresponding metric adjusted QGT is denoted by
\begin{align}
    \mathcal{Q}_\rho^q\coloneqq q(1-q)\mathcal{F}^{f_q}_{\rho_\theta}\biggl|_{\theta =0}.\label{eq:def_QGT_QFI_relation}
\end{align}
To establish the connection with the standard QGT for pure states, let $\ket{\psi_\theta}$ be a smoothly parameterized family of pure states and define
\begin{align}
    \ket{\partial_i\psi}\coloneqq \frac{\partial}{\partial\theta_i}\ket{\psi_{\theta}}\biggl|_{\theta=0}.
\end{align}
Then, the standard QGT~\cite{provostRiemannianStructureManifolds1980,berryQuantumPhaseFive1989}, defined for pure states, is 
\begin{align}
    \left(\mathcal{Q}_\psi\right)_{ij}\coloneqq \braket{\partial_i \psi|\partial_j\psi}-\braket{\partial_i\psi|\psi}\braket{\psi|\partial_j\psi}.\label{eq:QGT_definition}
\end{align}
The metric adjusted QGT for the pure state $\psi_\theta\coloneqq \ket{\psi_\theta}\bra{\psi_\theta}$ is given by
\begin{align}
    \mathcal{Q}_\psi^q=q(1-q)\left(\frac{1}{1-q}\mathcal{Q}_\psi+\frac{1}{q}\mathcal{Q}_\psi^\top\right)=q\mathcal{Q}_\psi+(1-q)\mathcal{Q}_\psi^\top=\Re\mathcal{Q}_\psi+\ii(2q-1)\Im\mathcal{Q}_\psi.
\end{align}
Thus, for pure states, the metric adjusted QGT connects to the standard QGT as $q\to 1^-$, i.e.,
\begin{align}
    \lim_{q\to 1^-}\mathcal{Q}^q_\psi=\mathcal{Q}_\psi.
\end{align}

The metric adjusted QGT in Eq.~\eqref{eq:def_QGT_QFI_relation} also regularizes the possible endpoint divergences of the $f_q$-QFI associated with the LLD and RLD limits. Indeed, for a general state $\rho=\sum_{k=1}^d\mu_k\ket{k}\bra{k}$, we have
\begin{align}
    \left(\mathcal{F}_{\rho}^{f_q}\right)_{ij}=\sum_{\substack{k,l=1;\\(1-q)\mu_l+q\mu_k>0} }^d\frac{\braket{l|\partial_i\rho|k}\braket{k|\partial_j\rho|l}}{(1-q)\mu_l+q\mu_k}=\frac{1}{1-q}\left(\mathcal{Q}_\rho\right)_{ij}+\frac{1}{q}\left(\mathcal{Q}_\rho^\top\right)_{ij}+\sum_{\substack{k,l=1;\\\mu_l>0,\,\mu_k>0} }^d\frac{\braket{l|\partial_i\rho|k}\braket{k|\partial_j\rho|l}}{(1-q)\mu_l+q\mu_k}.
\end{align}
Here, denoting by $\Pi_\rho$ the projector onto the support of $\rho$, we defined 
\begin{align}
    \left(\mathcal{Q}_\rho\right)_{ij}\coloneqq \sum_{\substack{k,l=1;\\\mu_l>0,\,\mu_k=0} }^d\frac{\braket{l|\partial_i\rho|k}\braket{k|\partial_j\rho|l}}{\mu_l}=\sum_{\substack{l=1;\mu_l>0} }^d\frac{\braket{l|\partial_i\rho(I-\Pi_\rho)\partial_j\rho|l}}{\mu_l},
\end{align}
which reduces to the standard QGT in Eq.~\eqref{eq:QGT_definition} when $\rho$ is pure. Unless $\mathcal{Q}_\rho=0$, the first two terms exhibit $1/(1-q)$ and $1/q$ divergences as  $q\to 1^-$ and $q\to 0^+$, respectively. The prefactor $q(1-q)$ in Eq.~\eqref{eq:def_QGT_QFI_relation} removes both endpoint divergences. Indeed,
\begin{align}
    \left(\mathcal{Q}_{\rho}^q\right)_{ij}=q\left(\mathcal{Q}_\rho\right)_{ij}+(1-q)\left(\mathcal{Q}_\rho^\top\right)_{ij}+q(1-q)\sum_{\substack{k,l=1;\\\mu_l>0,\,\mu_k>0} }^d\frac{\braket{l|\partial_i\rho|k}\braket{k|\partial_j\rho|l}}{(1-q)\mu_l+q\mu_k}.
\end{align}
Therefore, $\mathcal{Q}_\rho^q$ admits continuous extensions to both endpoints,
\begin{align}
    \mathcal{Q}_{\rho}^1\coloneqq \lim_{q\to 1^-}\mathcal{Q}_\rho^q=\mathcal{Q}_\rho,\qquad \mathcal{Q}_{\rho}^0\coloneqq \lim_{q\to 0^+}\mathcal{Q}_\rho^q=\mathcal{Q}_\rho^\top.
\end{align}
Here, the implication from the interior $q\in(0,1)$ to the endpoints follows by taking the continuous limits of $\mathcal{Q}_\rho^q-r\mathcal{Q}_\sigma^q$. 
Since $q(1-q)>0$ for $q\in(0,1)$, we have
\begin{align}
    \forall q\in[0,1],\,\mathcal{Q}_\rho^q\geq r\mathcal{Q}_\sigma^q\quad \Longleftrightarrow\quad \forall q\in(0,1),\,\mathcal{F}_\rho^{f_q}\geq r \mathcal{F}_\sigma^{f_q}
\end{align}
Moreover, since $f_{1-q}(t)=tf_q(t^{-1})$, we have
\begin{align}
    \mathcal{Q}_\rho^{1-q}=\left(\mathcal{Q}_\rho^q\right)^\top.
\end{align}
Thus, whenever the property under consideration is invariant under transposition, it is sufficient to consider $q\in[1/2,1]$ for the metric adjusted QGT (and $q\in[1/2,1)$ for the unregularized $f_q$-QFI). An important example is the matrix ordering via positive-semidefiniteness: for Hermitian matrices $A$ and $B$, 
\begin{align}
    A\geq B\quad \Longleftrightarrow\quad A^\top \geq B^\top.
\end{align}
In summary, we obtain
\begin{align}
    \forall q\in[0,1],\,\mathcal{Q}_\rho^q\geq r\mathcal{Q}_\sigma^q
    \quad &\Longleftrightarrow\quad \forall q\in[1/2,1],\,\mathcal{Q}_\rho^q\geq r\mathcal{Q}_\sigma^q\\
    & \Longleftrightarrow\quad \forall q\in(0,1),\,\mathcal{F}_\rho^{f_q}\geq r \mathcal{F}_\sigma^{f_q}
    \quad \Longleftrightarrow\quad \forall q\in[1/2,1),\,\mathcal{F}_\rho^{f_q}\geq r \mathcal{F}_\sigma^{f_q}.\label{eq:QGTordering_QFIordering}
\end{align}

\subsection{Variational representation and monotonicity}\label{sec:variational_rep_monotone_metric}
For later convenience, we introduce a variational formulation. Let $M\geq 0$ be a positive semidefinite operator on a finite-dimensional Hilbert space $\mathcal{H}$ with inner product $\braket{\cdot,\cdot}$. Since $M$ is positive semidefinite, its Moore--Penrose inverse $M^+$ is also positive semidefinite. Fix $a\in \Ran M$. For $x\in\mathcal{H}$, we have
\begin{align}
    0\leq \braket{a-Mx, M^+ (a-Mx)}
    &=\braket{a,M^+a}-2\Re\braket{x,MM^+a}+\braket{x,Mx},
\end{align}
which implies 
\begin{align}
    2\Re\braket{x,MM^+a}-\braket{x,Mx}\leq \braket{a,M^+a}
\end{align}
Since $MM^+$ is the orthogonal projector onto $\Ran M$, we have $MM^+a=a$. Moreover, equality is attained at $x=M^+a$, since $Mx=MM^+a=a$. Therefore, 
\begin{align}
    \braket{a,M^+a}=\sup_{x}\{2\Re\braket{x,a}-\braket{x,Mx}\}\qquad\qquad (a\in \Ran M).
\end{align}
Applying this relation with $M=\mathbb{J}_{\rho}^f$, we obtain a variational formulation
\begin{align}
    \|A\|_{f,\rho}^2= \sup_{X}\{2\Re \Tr(X^\dag A)-\braket{X,\mathbb{J}_{\rho}^f X}\}\qquad (A\in \Ran \mathbb{J}_{\rho}^f ). \label{eq:norm_variational_formula}
\end{align}

We remark that the monotonicity in Eq.~\eqref{eq:monotonicity_norm} is often proven only for invertible states. 
For a one-parameter family of operator monotone functions
\begin{align}
    f_q(t)\coloneqq (1-q)+qt,\qquad q\in (0,1),
\end{align}
the corresponding superoperator has a direct extension to an arbitrary quantum state $\rho$, given by 
\begin{align}
    \mathbb{J}_\rho^{f_q}=q\mathbb{L}_\rho+(1-q)\mathbb{R}_\rho,\qquad \mathbb{J}_\rho^{f_q}(A)=q\rho A+(1-q)A\rho.
\end{align}
The monotonicity of the $f_q$-norm has been established~\cite{yamaguchi_QuantumGeometricTensorDeterminesPureState_2026} for general states as a corollary of Theorem~6.1 in \cite{hayashiQuantumInformationTheory2017}. 
We provide an alternative proof using the variational formula.
\begin{lemma}\label{lem:fq_norm_monotonicity_general}
    Let $\rho$ be a quantum state. For a quantum channel $\mathcal{E}$, we define $\rho'\coloneqq \mathcal{E}(\rho)$. If $A\in\Ran\mathbb{J}_{\rho}^{f_q} $, then $\|A\|_{f_q,\rho}^2\geq  \|\mathcal{E}(A)\|_{f_q,\mathcal{E}(\rho)}^2$. 
\end{lemma}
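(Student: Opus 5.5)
The plan is to use the variational formula \eqref{eq:norm_variational_formula} on both sides and transport optimizers through the adjoint channel $\mathcal{E}^\dagger$. Write $\rho'\coloneqq\mathcal{E}(\rho)$ and $A'\coloneqq\mathcal{E}(A)$.

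The first step is to check that the formula applies on the output side, i.e., that $A'\in\Ran\mathbb{J}^{f_q}_{\rho'}$. Since $A\in\Ran\mathbb{J}^{f_q}_\rho$, we can write $A=q\rho B+(1-q)B\rho$ for some $B$. For $q\in(0,1)$, the range of $\mathbb{J}^{f_q}_\rho$ is exactly the set of operators $A$ satisfying $(I-\Pi_\rho)A(I-\Pi_\rho)=0$. To see this, note that $m_{f_q}(\mu_i,\mu_j)>0$ unless $\mu_i=\mu_j=0$, so the range is spanned by $\ket{i}\bra{j}$ with at least one index in the support.

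We then need $(I-\Pi_{\rho'})\mathcal{E}(A)(I-\Pi_{\rho'})=0$. Take a Kraus decomposition $\mathcal{E}(\cdot)=\sum_k K_k\cdot K_k^\dagger$. From $\mathcal{E}(\rho)=\rho'$, each $(I-\Pi_{\rho'})K_k\Pi_\rho=0$, that is, $K_k$ maps $\Ran\Pi_\rho$ into $\Ran\Pi_{\rho'}$. Each term in $A$ has either its left or its right factor inside the support of $\rho$. Consequently, every term of $K_kAK_k^\dagger$ has a factor $K_k\Pi_\rho$ or $\Pi_\rho K_k^\dagger$, which is killed by the kernel projector of $\rho'$ on the corresponding side. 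This gives $A'\in\Ran\mathbb{J}^{f_q}_{\rho'}$.

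Next, apply \eqref{eq:norm_variational_formula} to $\|A'\|^2_{f_q,\rho'}$. For any test operator $Y$ on the output space, set $X\coloneqq\mathcal{E}^\dagger(Y)$. The linear term transfers exactly:
\begin{align}
\Re\Tr(Y^\dagger\mathcal{E}(A))=\Re\Tr(\mathcal{E}^\dagger(Y)^\dagger A).
\end{align}
For the quadratic term, expand
\begin{align}
\braket{Y,\mathbb{J}^{f_q}_{\rho'}Y}=q\Tr(\rho' YY^\dagger)+(1-q)\Tr(\rho' Y^\dagger Y),
\end{align}
so that, since $\rho'=\mathcal{E}(\rho)$, it equals $q\Tr(\rho\,\mathcal{E}^\dagger(YY^\dagger))+(1-q)\Tr(\rho\,\mathcal{E}^\dagger(Y^\dagger Y))$.

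The only real inequality needed is the Kadison--Schwarz inequality for the unital completely positive map $\mathcal{E}^\dagger$:
\begin{align}
\mathcal{E}^\dagger(Y^\dagger Y)\geq\mathcal{E}^\dagger(Y)^\dagger\mathcal{E}^\dagger(Y),\qquad \mathcal{E}^\dagger(YY^\dagger)\geq\mathcal{E}^\dagger(Y)\mathcal{E}^\dagger(Y)^\dagger.
\end{align}
Pairing these with $\rho\geq0$ gives $\braket{Y,\mathbb{J}^{f_q}_{\rho'}Y}\geq\braket{X,\mathbb{J}^{f_q}_\rho X}$. The objective evaluated at $Y$ on the output side is therefore at most the input objective evaluated at $X=\mathcal{E}^\dagger(Y)$, which in turn is at most $\|A\|^2_{f_q,\rho}$. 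Taking the supremum over $Y$ proves the claim.

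I expect the range check to be the main subtlety, since the variational formula fails off the range (the supremum can be $+\infty$). I would handle it through the support-preservation argument for Kraus operators described above.
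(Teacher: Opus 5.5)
Your proposal is correct and follows essentially the same route as the paper. Both proofs first do the range check via Kraus operators; your statement that each $K_k$ maps the support of $\rho$ into that of $\rho'$ is the adjoint form of the paper's observation that $K_\alpha^\dagger$ maps $\ker\mathcal{E}(\rho)$ into $\ker\rho$. Both then apply the variational formula, pulling output test operators back through $\mathcal{E}^\dagger$ and using the Schwarz inequality for the unital CP dual map.

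One minor aside: the output-side range check is not strictly necessary. Inserting $Y=(\mathbb{J}^{f_q}_{\rho'})^+\mathcal{E}(A)$ already gives $\|\mathcal{E}(A)\|^2_{f_q,\rho'}\le\sup_Y\{2\Re\Tr(Y^\dagger\mathcal{E}(A))-\braket{Y,\mathbb{J}^{f_q}_{\rho'}Y}\}$ for any $\mathcal{E}(A)$. So only the input side genuinely uses $A\in\Ran\mathbb{J}^{f_q}_\rho$.
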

\begin{proof}
    We first show $\mathcal{E}(A) \in \Ran  \mathbb{J}_{\mathcal{E}(\rho)}^{f_q}$ for $A\in\Ran\mathbb{J}_{\rho}^{f_q} $. 
    Since $ \mathbb{J}_\rho^{f_q}(\ket{i}\bra{j})=(q\mu_i +(1-q)\mu_j)\ket{i}\bra{j}$, we have
    \begin{align}
        A\in \Ran \mathbb{J}_\rho^{f_q}\Longleftrightarrow PAP=0,\label{eq:kernel_rangeJ}
    \end{align}
    where $P$ denotes the projector onto $\ker \rho$. Let $\{K_\alpha\}$ be the Kraus operators of the quantum channel $\mathcal{E}$. Then, for any $\ket{\psi}\in\ker \mathcal{E}(\rho)$, we have
    \begin{align}
        0=\braket{\psi|  \mathcal{E}(\rho)|\psi}=\sum_\alpha\braket{\psi|K_\alpha\rho K_\alpha^\dag |\psi}=\sum_\alpha\|\rho^{1/2}K_\alpha^\dag \ket{\psi}\|^2,
    \end{align}
    and hence $\rho^{1/2}K_\alpha^\dag \ket{\psi}=0$ for every $\alpha$, implying that $K_\alpha^\dag \ket{\psi}\in\ker \rho$ for all $\alpha$. Therefore, for any $\ket{\psi},\ket{\psi'}\in\ker\mathcal{E}(\rho)$ and $A\in\Ran\mathbb{J}_{\rho}^{f_q}$, we have
    \begin{align}
        \braket{\psi|\mathcal{E}(A)|\psi'}=\sum_\alpha\braket{K_\alpha^\dag \psi|A|K_\alpha^\dag \psi'}\overset{\text{Eq.~\eqref{eq:kernel_rangeJ}}}{=}0.
    \end{align}
    Therefore, $\mathcal{E}(A)\in\Ran \mathbb{J}_{\mathcal{E}(\rho)}^{f_q}$. 

    By using the dual map $\Lambda \coloneqq \mathcal{E}^*$ of $\mathcal{E}$, we have
    \begin{align}
        \braket{Y,\mathcal{E}\circ\mathbb{J}_\rho^{f_q}\circ\Lambda  (Y)}= \braket{\Lambda(Y),\mathbb{J}_\rho^{f_q}(\Lambda (Y))}=q\Tr(\rho \Lambda(Y)\Lambda(Y)^\dag)+(1-q)\Tr(\rho\Lambda(Y)^\dag \Lambda(Y))
    \end{align}
    for any $Y$. 
    Since $\Lambda$ is a CP unital map, the Schwarz inequality for unital CP maps (see e.g., Ref.~\cite{choi_SchwarzInequalityPositivelinearmaps_1974_math} and Proposition~3.3 in Ref.~\cite{paulsen_CompletelyBoundedMapsOperatorAlgebras_2003}) gives
    \begin{align}
        \Lambda(Y)^\dag \Lambda(Y)\leq \Lambda (Y^\dag Y),\qquad \Lambda(Y) \Lambda(Y)^\dag\leq \Lambda (YY^\dag).
    \end{align}
    Therefore, 
    \begin{align}
        \braket{Y,\mathcal{E}\circ\mathbb{J}_\rho^{f_q}\circ\Lambda (Y)}&\leq q\Tr(\rho \Lambda(YY^\dag))+(1-q)\Tr(\rho\Lambda(Y^\dag Y))\\
        &=q\Tr(\rho' YY^\dag )+(1-q)\Tr(\rho' Y^\dag Y)=\braket{Y,\mathbb{J}_{\rho'}^{f_q}(Y)}.
    \end{align}
    Since this inequality holds for any $Y$, we get
    \begin{align}
        \mathcal{E}\circ\mathbb{J}_\rho^{f_q}\circ\Lambda \leq \mathbb{J}_{\rho'}^{f_q}.
    \end{align}
    Since $\mathcal{E}(A)\in \Ran \mathbb{J}_{\rho'}^{f_q}$ for $A\in\Ran\mathbb{J}_{\rho}^{f_q} $, we obtain
    \begin{align}
        \|\mathcal{E}(A)\|^2_{f_q,\rho'}&=\sup_{Y}\{2\Re \braket{Y, \mathcal{E}(A)}-\braket{Y,\mathbb{J}_{\rho'}^{f_q} Y}\}\\
        &\leq \sup_{Y}\{2\Re \braket{\Lambda(Y),A}-\braket{Y,\mathcal{E}\circ\mathbb{J}_\rho^{f_q}\circ\Lambda (Y)}\}=\sup_{Y}\{2\Re \braket{\Lambda(Y),A}-\braket{\Lambda(Y),\mathbb{J}_\rho^{f_q}(\Lambda(Y))}\}\\
        &\leq \sup_{X}\{2\Re \braket{X,A}-\braket{X,\mathbb{J}_\rho^{f_q}(X)}\}=\|A\|^2_{f_q,\rho}.
    \end{align}
\end{proof}

%\subsection{Quantum Fisher information matrix}
We now explain the connection between the monotonicity of the quadratic form and its associated quantum Fisher information matrix. 
Let $\rho_{\theta}$ be a family of states smoothly parameterized by $p$ real parameters $\theta\in\Theta$, where $\Theta\subset\mathbb{R}^p$ is open. Then, the associated quantum Fisher information matrix $\mathcal{F}^f_{\rho_\theta}$ is defined as a $p\times p$ matrix whose elements are given by
\begin{align}
    \left(\mathcal{F}^f_{\rho_\theta}\right)_{ij}\coloneqq \braket{\partial_{\theta_i}\rho_\theta,\partial_{\theta_j}\rho_\theta}_{f,\rho_\theta}=\braket{\partial_{\theta_i}\rho_\theta,(\mathbb{J}^{f}_{\rho_\theta})^{+}\partial_{\theta_j}\rho_\theta}
\end{align}
For $z\in\mathbb{C}^p$, we have
\begin{align}
    z^\dag \mathcal{F}^f_{\rho_\theta} z =\left\|\sum_{i=1}^pz_i\partial_{\theta_i}\rho_\theta\right\|_{f,\rho_\theta}^2. 
\end{align}
Introducing the generalized logarithmic derivative $L_i\coloneqq (\mathbb{J}^{f}_{\rho_\theta})^{+}\partial_{\theta_i}\rho_\theta$, we have 
\begin{align}
    \mathbb{J}^{f}_{\rho_\theta}L_i=P_{\Ran \mathbb{J}^{f}_{\rho_\theta}}\partial_{\theta_i}\rho_\theta,
\end{align}
where $P_{\Ran\mathbb{J}^{f}_{\rho_\theta}}$ denotes the orthogonal projector onto $\Ran\mathbb{J}^{f}_{\rho_\theta}$. In particular, if $\partial_{\theta_i}\rho_\theta\in \Ran\mathbb{J}^{f}_{\rho_\theta}$, then $\partial_{\theta_i}\rho_\theta=\mathbb{J}^{f}_{\rho_\theta}L_i$. In either case, one may also write
\begin{align}
    \left(\mathcal{F}^f_{\rho_\theta}\right)_{ij}=\braket{\partial_{\theta_i}\rho_\theta,L_j}=\braket{L_i,\mathbb{J}^{f}_{\rho_\theta}L_j}.
\end{align}

Let $\mathcal{E}$ be a parameter-independent quantum channel and define $\rho'_\theta\coloneqq \mathcal{E}(\rho_\theta)$. Then, since $\mathcal{E}$ is linear, we have $\partial_{\theta_i}\rho'_\theta=\mathcal{E}(\partial_{\theta_i}\rho_\theta)$. Therefore, the quantum Fisher information $\mathcal{F}^f_{\rho'_\theta}$ for $\rho'_\theta$ satisfies
\begin{align}
    z^\dag \mathcal{F}^f_{\rho'_\theta}z =\left\|\sum_{i=1}^pz_i\partial_{\theta_i}\rho'_\theta\right\|_{f,\rho'_\theta}^2=\left\|\mathcal{E}\left(\sum_{i=1}^pz_i\partial_{\theta_i}\rho_\theta\right)\right\|_{f,\rho_\theta'}^2.\label{eq:monotonicity_QFI_oneshot}
\end{align}
Thus, as long as the quadratic form monotonically decreases under a quantum channel, $z^\dag \mathcal{F}^f_{\rho_\theta} z \geq z^\dag \mathcal{F}^f_{\rho'_\theta}z $ for all $z\in\mathbb{C}^p$, or equivalently, $\mathcal{F}^f_{\rho_\theta}\geq \mathcal{F}^f_{\rho'_\theta}$. 
Therefore, under any parameter-independent quantum channel, the quantum Fisher information matrix is non-increasing in the sense of matrix inequality.

In particular, for $f_q(x)=(1-q) +qx$, Lemma~\ref{lem:fq_norm_monotonicity_general} proves the monotonicity of the corresponding quadratic form for any $A\in\Ran \mathbb{J}_{\rho_\theta}^{f_q}$. We now show that for a smooth state family on the open parameter domain $\Theta$, the condition $\partial_{\theta_i}\rho_\theta \in\Ran \mathbb{J}_{\rho_\theta}^{f_q}$ is automatically satisfied at every interior point $\theta_0$. Let $P_{\theta_0}$ be the projector onto $\ker \rho_{\theta_0}$ and fix $\ket{v}\in\ker\rho_{\theta_0}$. Define
\begin{align}
    g(\theta)\coloneqq \braket{v|\rho_\theta|v}.
\end{align}
Since $\ket{v}\in\ker\rho_{\theta_0}$, we have $g(\theta_0)=0$. Also, the positivity of $\rho_{\theta}$ implies $g(\theta)\geq 0$ for all $\theta\in\Theta$. Thus, $g(\theta)$ attains the minimum at $\theta=\theta_0$, implying that $\partial_{\theta_i}g(\theta)|_{\theta=\theta_0}=0$.  
Consequently, we get
\begin{align}
    \braket{v|T_i|v}=0,\qquad \forall v\in \ker\rho_{\theta_0}
\end{align}
for any $i=1,\ldots,p$, where $T_i\coloneqq\partial_{\theta_i}\rho_\theta|_{\theta=\theta_0}$. Moreover, since $T_i$ is Hermitian, and since $\ket{u}+\ket{v},\ket{u}+\ii\ket{v}\in\ker\rho_{\theta_0}$ for any $\ket{u},\ket{v}\in\ker\rho_{\theta_{0}}$, we also have
\begin{align}
    0&=\braket{u+v|T_i|u+v}=2\Re \braket{u|T_i|v}\\
    0&=\braket{u+\ii v|T_i|u+\ii v}=-2\Im \braket{u|T_i|v}
\end{align}
and hence $P_{\theta_0}T_iP_{\theta_0}=0$. Since $\theta_0\in\Theta$ was arbitrary, Eq.~\eqref{eq:kernel_rangeJ} implies 
\begin{align}
    \partial_{\theta_i}\rho_\theta\in \Ran \mathbb{J}_{\rho_\theta}^{f_q},\quad \forall \theta\in\Theta.\label{eq:tangent_in_range}
\end{align}
Because $\Ran \mathbb{J}_{\rho_\theta}^{f_q}$ is a linear subspace, it follows that
\begin{align}
    \sum_{i=1}^pz_i\partial_{\theta_i}\rho_\theta\in \Ran \mathbb{J}_{\rho_\theta}^{f_q},\quad \forall z\in\mathbb{C}^p.
\end{align}
Therefore, by Lemma~\ref{lem:fq_norm_monotonicity_general}, we establish the following: 
\begin{proposition}\label{prop:fq_QFI_monotonicity_general}
    Let $q\in(0,1)$, and let $\rho_\theta$ be a smooth state family on an open parameter domain $\Theta\subset \mathbb{R}^p$. Then, the corresponding quantum Fisher information matrix is monotonic under any parameter-independent quantum channel $\mathcal{E}$, namely
    \begin{align}
        \forall q\in(0,1),\quad \mathcal{F}^{f_q}_{\rho_\theta}\geq \mathcal{F}^{f_q}_{\mathcal{E}(\rho_\theta)}
    \end{align}
    for any $\theta\in\Theta$. 
    Multiplying this relation by $q(1-q)$, we obtain the same inequality for the metric adjusted QGTs. Continuously extending to endpoints, we obtain
    \begin{align}
        \forall q\in[0,1],\quad \mathcal{Q}^{q}_{\rho_\theta}\geq \mathcal{Q}^{q}_{\mathcal{E}(\rho_\theta)} ,\qquad 
    \end{align}
\end{proposition}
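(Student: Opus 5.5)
The statement is Proposition~\ref{prop:fq_QFI_monotonicity_general}. The plan is to assemble it from Lemma~\ref{lem:fq_norm_monotonicity_general}, the range property in Eq.~\eqref{eq:tangent_in_range}, and the quadratic-form identity in Eq.~\eqref{eq:monotonicity_QFI_oneshot}.

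\textbf{Step 1: fix the setting.} Fix $q\in(0,1)$ and an interior point $\theta\in\Theta$. Set $\rho'_\theta\coloneqq\mathcal{E}(\rho_\theta)$. Because $\mathcal{E}$ is linear and independent of $\theta$, $\rho'_\theta$ is again a smooth family on the same open domain, and $\partial_{\theta_i}\rho'_\theta=\mathcal{E}(\partial_{\theta_i}\rho_\theta)$.

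\textbf{Step 2: compare the quadratic forms.} Take an arbitrary $z\in\mathbb{C}^p$ and put $A\coloneqq\sum_i z_i\partial_{\theta_i}\rho_\theta$.
\begin{itemize}
\item By Eq.~\eqref{eq:tangent_in_range} and linearity of the range, $A\in\Ran\mathbb{J}^{f_q}_{\rho_\theta}$. This rests on the minimum argument $\braket{v|\rho_\theta|v}\geq 0$, which gives $P A P=0$ with $P$ the kernel projector.
\item Lemma~\ref{lem:fq_norm_monotonicity_general} then gives $\|A\|^2_{f_q,\rho_\theta}\geq\|\mathcal{E}(A)\|^2_{f_q,\rho'_\theta}$.
\item By Eq.~\eqref{eq:monotonicity_QFI_oneshot}, the two sides equal $z^\dag\mathcal{F}^{f_q}_{\rho_\theta}z$ and $z^\dag\mathcal{F}^{f_q}_{\rho'_\theta}z$ respectively.
\end{itemize}
For the output side, note that the $f_q$-QFI of $\rho'_\theta$ is defined through the Moore--Penrose inverse. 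One should check that $\mathcal{E}(A)$ lies in $\Ran\mathbb{J}^{f_q}_{\rho'_\theta}$, so that this pseudo-inverse expression is genuinely the norm appearing in the lemma. This holds either by the first half of the lemma's proof or by applying Eq.~\eqref{eq:tangent_in_range} directly to the smooth family $\rho'_\theta$.

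\textbf{Step 3: matrix inequality and rescaling.} Since $z$ was arbitrary, we get $\mathcal{F}^{f_q}_{\rho_\theta}\geq\mathcal{F}^{f_q}_{\mathcal{E}(\rho_\theta)}$ in the positive-semidefinite order. Multiplying by $q(1-q)>0$ preserves the order and yields $\mathcal{Q}^q_{\rho_\theta}\geq\mathcal{Q}^q_{\mathcal{E}(\rho_\theta)}$ for every $q\in(0,1)$.

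\textbf{Step 4: endpoints (the only delicate point).} Use the explicit decomposition
\[
\mathcal{Q}^q=q\mathcal{Q}+(1-q)\mathcal{Q}^\top+q(1-q)\sum_{\mu_k,\mu_l>0}\frac{\braket{l|\partial_i\rho|k}\braket{k|\partial_j\rho|l}}{(1-q)\mu_l+q\mu_k}.
\]
Each term is continuous on $[0,1]$: the first two are affine in $q$, and the last has denominator bounded below by $\min_{\mu>0}\mu$. This holds for both the input and the output state, so the difference $\mathcal{Q}^q_{\rho_\theta}-\mathcal{Q}^q_{\mathcal{E}(\rho_\theta)}$ is continuous in $q$ on $[0,1]$. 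The cone of positive-semidefinite matrices is closed, so letting $q\to0^+$ and $q\to1^-$ gives the inequality at $q=0,1$, which completes the proof.
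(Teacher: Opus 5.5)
Your proposal is correct and follows the paper's argument essentially verbatim. You feed the range property Eq.~\eqref{eq:tangent_in_range} into Lemma~\ref{lem:fq_norm_monotonicity_general}, convert the quadratic-form inequality into the matrix inequality via Eq.~\eqref{eq:monotonicity_QFI_oneshot}, then rescale and extend to the endpoints by continuity; your explicit decomposition together with closedness of the positive-semidefinite cone just spells out what the paper calls ``continuously extending'' (your extra check that $\mathcal{E}(A)\in\Ran\mathbb{J}^{f_q}_{\rho'_\theta}$ is already the first half of the lemma's proof, and the identification with $z^\dag\mathcal{F}^{f_q}_{\rho'_\theta}z$ holds by definition of the Moore--Penrose seminorm regardless).
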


The additivity of $f_q$-QFI immediately implies additivity of metric adjusted QGTs, i.e., for independent models $\rho_\theta$ and $\sigma_\theta$ with the same parameter,
\begin{align}
    \mathcal{Q}_{\rho_\theta\otimes\sigma_\theta}^q=\mathcal{Q}_{\rho_\theta}^q+\mathcal{Q}_{\sigma_\theta}^q,\qquad q\in[0,1].
\end{align}

\clearpage
\section{Converse part}

\subsection{Quantum Fisher information near i.i.d. states}
We fix a real-valued smooth function $\eta$ with compact support, satisfying
\begin{align}
    \eta(t)=t\quad (|t|\leq 1),\qquad |\eta(t)|\leq R\qquad (t\in\mathbb{R})
\end{align}
for some finite $R>0$. For example, a concrete choice is obtained by
\begin{align}
    u(x)&\coloneqq 
    \begin{cases}
        e^{-\frac{1}{x}}&\quad (x> 0)\\
        0&\quad (x\leq 0)
    \end{cases},\qquad
     \theta(x)\coloneqq \frac{u(x)}{u(x)+u(1-x)}.
\end{align}
and 
\begin{align}
    \eta(t)\coloneqq t\theta(2(t+2))\theta(2(2-t)).
\end{align}
Then, $\eta(t)=t $ for all $t\in [-1,1]$, $\mathrm{supp}\,\eta\subset[-2,2]$, and thus one may take $R=2$. Since $\eta$ has a compact support, its Fourier transformation 
\begin{align}
    \tilde{\eta}(s)\coloneqq \int_{-\infty}^\infty \eta(t) e^{-\ii st}\dd t
\end{align}
decays faster than any polynomial of $s$. Consequently,
\begin{align}
    c_\eta\coloneqq \frac{1}{2\pi}\int_{\mathbb{R}}|s||\tilde{\eta}(s)|\dd s
\end{align}
is finite. We also define 
\begin{align}
        \tau_{n,x}(t)\coloneqq x\sqrt{n}\eta\left(\frac{t}{x\sqrt{n}}\right)\label{eq:cutoff_renorm_sqrt_n}
\end{align}
for a positive number $x>0$ and a positive integer $n$. 
The cutoff in Eq.~\eqref{eq:cutoff_renorm_sqrt_n} is a smooth analogue of the $x\sqrt{n}$-scale core-tail decomposition used in the proof of Lemma~C5 in Ref.~\cite{yamaguchi_QuantumGeometricTensorDeterminesPureState_2026}. 

For a single-site operator $O$ on $\mathcal{H}$, we introduce operators on $\mathcal{H}^{\otimes n}$ by
\begin{align}
    O^{[n]}\coloneqq \sum_{i=1}^nO^{(i)},\qquad O^{(i)}\coloneqq I^{\otimes(i-1)}\otimes O\otimes I^{\otimes (n-i)}.\label{eq:iid_extension_definition}
\end{align}

In order to analyze the asymptotic behavior of quantum Fisher information near i.i.d. states, let us first derive two lemmas that we later use to bound tails. 
\begin{lemma}\label{lem:tail_N}
    Let $\sigma$ be a state on a finite-dimensional Hilbert space $\mathcal{H}$, let $C$ be a Hermitian operator on $\mathcal{H}$ satisfying $\mathrm{Tr}(\sigma C)=0$. Then, we have
    \begin{align}
        \mathrm{Tr}\left(\sigma^{\otimes n}(\tau_{n,x}(C^{[n]})-C^{[n]})^2\right)\leq \frac{\kappa_C}{x^2}n
        ,
    \end{align}
    where $\kappa_C$ is given by
    \begin{align}
        \kappa_C\coloneqq (R+1)^2\left(3\left(\mathrm{Tr}(\sigma C^2)\right)^2+\left|\mathrm{Tr}(\sigma C^4)-3\left(\mathrm{Tr}(\sigma C^2)\right)^2\right|\right),\label{eq:kappa_definition}
    \end{align}
    which is a finite constant, independent of $n$ and $x$. In particular, 
    \begin{align}
        \kappa_C\leq 5(R+1)^2\|C\|^4.\label{eq:kappa_bound}
    \end{align}
\end{lemma}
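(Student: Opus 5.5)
The plan is a pointwise (spectral) bound followed by a fourth-moment computation. Since $C^{[n]}$ is Hermitian, $\tau_{n,x}(C^{[n]})-C^{[n]}$ is a function of $C^{[n]}$. Write $g(t)\coloneqq \tau_{n,x}(t)-t$ with $a\coloneqq x\sqrt{n}$. The properties of $\eta$ give $g(t)=0$ for $|t|\le a$. For $|t|>a$ we have $|g(t)|\le aR+|t|\le (R+1)|t|$.

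These combine into the single scalar inequality
\begin{align}
    g(t)^2\le (R+1)^2 t^2\,\mathbb{1}_{|t|>a}\le (R+1)^2\frac{t^4}{a^2}
\end{align}
for all real $t$. Indeed, on $|t|>a$ we have $t^2\le t^4/a^2$, and elsewhere $g=0$. By functional calculus, this gives the operator inequality
\begin{align}
    (\tau_{n,x}(C^{[n]})-C^{[n]})^2\le \frac{(R+1)^2}{x^2 n}\,(C^{[n]})^4 .
\end{align}
Taking the expectation in the positive state $\sigma^{\otimes n}$ reduces the lemma to bounding $\Tr(\sigma^{\otimes n}(C^{[n]})^4)$ by $n^2\times$const.

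The fourth moment is the classical moment of a sum of i.i.d. mean-zero variables. All the $C^{(i)}$ commute and act on different tensor factors, and $\sigma^{\otimes n}$ is a product. Hence $\Tr(\sigma^{\otimes n}(C^{[n]})^4)$ equals $\mathbb{E}[(\sum_i Y_i)^4]$, where the $Y_i$ are i.i.d. with the distribution of $C$ in $\sigma$. For this I would expand over index tuples and use $\Tr(\sigma C)=0$ to kill every term with an index appearing exactly once. This leaves
\begin{align}
    n\,\Tr(\sigma C^4)+3n(n-1)(\Tr(\sigma C^2))^2 = 3n^2 m_2^2+n(m_4-3m_2^2),
\end{align}
with $m_k\coloneqq\Tr(\sigma C^k)$. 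This is at most $n^2\big(3m_2^2+|m_4-3m_2^2|\big)$ for $n\ge1$. Multiplying by $(R+1)^2/(x^2n)$ gives exactly $\kappa_C n/x^2$.

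For Eq.~\eqref{eq:kappa_bound}, I would use $m_2\le\|C\|^2$ and $m_4\le\|C\|^4$. The key term satisfies $|m_4-3m_2^2|\le\max(m_4,3m_2^2)\le 3\|C\|^4$ since both are nonnegative, which gives only $6\|C\|^4$. To get $5$, I would instead use $m_4\ge m_2^2$ (Cauchy--Schwarz). If $m_4\ge3m_2^2$, the total is $m_4\le\|C\|^4$. Otherwise the total is $6m_2^2-m_4\le5m_2^2\le5\|C\|^4$.

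The only delicate step is the careful handling of the cutoff inequality $|g(t)|\le(R+1)|t|\mathbb{1}_{|t|>a}$; everything else is routine combinatorics.
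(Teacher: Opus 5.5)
Your proposal is correct and follows essentially the same route as the paper. The paper first passes to the classical i.i.d.\ sum $S_n$ via the spectral decomposition of $C$ and bounds $\mathbb{E}[(\tau_{n,x}(S_n)-S_n)^2]\le (R+1)^2\,\mathbb{E}[S_n^2\mathbf{1}_{|S_n|>x\sqrt{n}}]\le (R+1)^2\,\mathbb{E}S_n^4/(x^2 n)$, which is your pointwise bound written as an expectation rather than as an operator inequality via functional calculus; the fourth-moment identity $3n^2m_2^2+n(m_4-3m_2^2)$ and the case split using $m_4\ge m_2^2$ to obtain the constant $5$ are identical to the paper's.
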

\begin{proof}
    Let $C=\sum_jc_jP_j$ be the spectral decomposition of $C$, and define a real random variable $X$ by $\mathbb{P}(X=c_j)=\mathrm{Tr}(\sigma P_j)$.
    Then, 
    \begin{align}
        \mathbb{E}X=\sum_jc_j\mathrm{Tr}(\sigma P_j)=\mathrm{Tr}(\sigma C)=0.
    \end{align}
    Let $X_1,...,X_n$ be i.i.d. copies of $X$, and define $ S_n\coloneqq \sum_{k=1}^{n}X_k$. 
    Since $C^{(1)},\cdots, C^{(n)}$ commute with each other, we have
    \begin{align}
        \mathrm{Tr}\left(\sigma^{\otimes n}\left(\tau_{n,x}(C^{[n]})-C^{[n]}\right)^2\right)=\mathbb{E}\left[(\tau_{n,x}(S_n)-S_n)^2\right].
    \end{align}

    On the event $|S_n|\leq x\sqrt{n}$, since $\tau_{n,x}(t)=t$ for $|t|\leq x\sqrt{n}$, we have $\tau_{n,x}(S_n)-S_n=0$.
    On the event $|S_n|>x\sqrt{n}$, since $|\tau_{n,x}(t)|\leq Rx\sqrt{n}$, we have
    \begin{align}
        |\tau_{n,x}(S_n)-S_n|\leq |\tau_{n,x}(S_n)|+|S_n|\leq Rx\sqrt{n}+|S_n|<(R+1)|S_n|.
    \end{align}
    Therefore,
    \begin{align}
        \mathbb{E}\left[(\tau_{n,x}(S_n)-S_n)^2\right]&=\mathbb{E}\left[(\tau_{n,x}(S_n)-S_n)^2\textbf{1}_{|S_n|\leq  x\sqrt{n}}\right]+\mathbb{E}\left[(\tau_{n,x}(S_n)-S_n)^2\textbf{1}_{|S_n|>  x\sqrt{n}}\right]\\
        &\leq (R+1)^2 \mathbb{E}\left[S_n^2\textbf{1}_{|S_n|>  x\sqrt{n}}\right]\label{eq:tail_core_decomposition}
    \end{align}

    Since $S_n^2\leq \frac{S_n^4}{x^2n}$ on the event $|S_n|>x\sqrt{n}$, we have
    \begin{align}
        \mathbb{E}\left[S_n^2\textbf{1}_{|S_n|>  x\sqrt{n}}\right]\leq \mathbb{E}\left[\frac{S_n^4}{x^2n}\textbf{1}_{|S_n|>  x\sqrt{n}}\right]\leq \mathbb{E}\left[\frac{S_n^4}{x^2n}\right]= \frac{\mathbb{E}S_n^4}{x^2n}.
    \end{align}
    Since $X$ is centered, i.e., $\mathbb{E}X=0$, we have
    \begin{align}
        \mathbb{E}S_n^4&=n\mathbb{E}X^4+\binom{4}{2}\binom{n}{2}(\mathbb{E}X^2)^2=3n^2(\mathbb{E}X^2)^2+n\left(\mathbb{E}X^4-3(\mathbb{E}X^2)^2\right)\\
        &\leq 3n^2(\mathbb{E}X^2)^2+n|\mathbb{E}X^4-3(\mathbb{E}X^2)^2|.
    \end{align}
    Since $n\leq n^2$ for any positive integer $n$, we have
    \begin{align}
        \mathbb{E}S_n^4\leq K_C n^2,\label{eq:K_C_bound}
    \end{align}
    where 
    \begin{align}
       K_C\coloneqq 3(\mathbb{E}X^2)^2 +|\mathbb{E}X^4-3(\mathbb{E}X^2)^2|=3\left(\mathrm{Tr}(\sigma C^2)\right)^2+\left|\mathrm{Tr}(\sigma C^4)-3\left(\mathrm{Tr}(\sigma C^2)\right)^2\right|,
    \end{align}
    which is independent of $n$ and $x$.

    Therefore, from Eqs.~\eqref{eq:tail_core_decomposition} and~\eqref{eq:K_C_bound}, we obtain
    \begin{align}
        \mathbb{E}\left[(\tau_{n,x}(S_n)-S_n)^2\right]\leq \frac{\kappa_C}{x^2} n,
    \end{align}
    with $\kappa_C\coloneqq (R+1)^2 K_C$. 
    
    Equation~\eqref{eq:kappa_bound} is derived as follows: First, note that $\mathrm{Tr}(\sigma C^4)\geq \left(\mathrm{Tr}(\sigma C^2)\right)^2$ since
    \begin{align}
        0\leq \mathrm{Tr}(\sigma \left(C^2-\mathrm{Tr}(\sigma C^2)\right)^2)=\mathrm{Tr}(\sigma C^4)- \left(\mathrm{Tr}(\sigma C^2)\right)^2.
    \end{align}
    If $\mathrm{Tr}(\sigma C^4)\leq 3\left(\mathrm{Tr}(\sigma C^2)\right)^2$, then
    \begin{align}
        K_C= 6\left(\mathrm{Tr}(\sigma C^2)\right)^2-\mathrm{Tr}(\sigma C^4)\leq 5\left(\mathrm{Tr}(\sigma C^2)\right)^2\leq 5\|C\|^4,
    \end{align}
    while if $\mathrm{Tr}(\sigma C^4)> 3\left(\mathrm{Tr}(\sigma C^2)\right)^2$, then
    \begin{align}
        K_C=\mathrm{Tr}(\sigma C^4)\leq \|C\|^4\leq 5\|C\|^4.
    \end{align}
\end{proof}

\begin{lemma}\label{lem:commutator_tau_norm}
    For any Hermitian operator $H$, any operator $K$, any positive integer $n$, and any $x>0$, we have
    \begin{align}
        \|[K,\tau_{n,x}(H)]\|\leq c_\eta \|[K,H]\|.
    \end{align}
\end{lemma}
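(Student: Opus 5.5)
The plan is to represent $\tau_{n,x}$ through its Fourier transform and reduce the commutator with $\tau_{n,x}(H)$ to commutators with unitaries $e^{\ii sH}$, which are controlled by Duhamel's formula. First we rescale. Put $a\coloneqq x\sqrt{n}>0$, so $\tau_{n,x}(t)=a\,\eta(t/a)$. Since $\eta$ is smooth with compact support, Fourier inversion gives
\begin{align}
    \eta(t)=\frac{1}{2\pi}\int_{\mathbb{R}}\tilde{\eta}(s)e^{\ii st}\dd s,
\end{align}
and the integral converges absolutely because $\tilde{\eta}$ decays faster than any polynomial. Hence
\begin{align}
    \tau_{n,x}(t)=\frac{a}{2\pi}\int_{\mathbb{R}}\tilde{\eta}(s)e^{\ii st/a}\dd s.
\end{align}
Because $H$ is Hermitian and acts on a finite-dimensional space, the functional calculus turns this pointwise identity into an operator identity, $\tau_{n,x}(H)=\frac{a}{2\pi}\int\tilde{\eta}(s)e^{\ii sH/a}\dd s$, with the integral converging absolutely in operator norm.

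Next we move the commutator inside the integral. By linearity and norm continuity,
\begin{align}
    [K,\tau_{n,x}(H)]=\frac{a}{2\pi}\int\tilde{\eta}(s)[K,e^{\ii sH/a}]\dd s.
\end{align}
The key estimate is $\|[K,e^{\ii uH}]\|\leq|u|\,\|[K,H]\|$ for real $u$. To prove it, differentiate $\phi(v)\coloneqq e^{-\ii vH}Ke^{\ii vH}$, which gives $\phi'(v)=-\ii e^{-\ii vH}[H,K]e^{\ii vH}$. Therefore
\begin{align}
    [K,e^{\ii uH}]=e^{\ii uH}(\phi(u)-K)=e^{\ii uH}\int_0^u\phi'(v)\dd v,
\end{align}
and unitary invariance of the operator norm bounds this by $|u|\,\|[K,H]\|$.

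Taking $u=s/a$ and substituting into the integral, the factor $a$ cancels against the $1/a$:
\begin{align}
    \|[K,\tau_{n,x}(H)]\|\leq\frac{a}{2\pi}\int|\tilde{\eta}(s)|\frac{|s|}{a}\|[K,H]\|\dd s=c_\eta\|[K,H]\|.
\end{align}
This cancellation is why the bound holds uniformly in $n$ and $x$.

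The only point that needs care is justifying the operator-valued Fourier representation and the interchange of commutator and integral. Both follow from $\int(1+|s|)|\tilde{\eta}(s)|\dd s<\infty$, which in turn follows from the rapid decay of $\tilde{\eta}$. Alternatively, one can work directly in the eigenbasis of $H$, where $\tau_{n,x}(H)$ acts by scalar multiplication on each eigenvector, and avoid the interchange question altogether.
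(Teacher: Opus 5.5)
Your proposal is correct and follows essentially the same route as the paper: write $\tau_{n,x}(H)$ as a Fourier integral of $e^{\ii sH}$, bound $\|[K,e^{\ii sH}]\|\leq|s|\,\|[K,H]\|$ by Duhamel's formula, and observe that the weighted integral $\frac{1}{2\pi}\int|s||\cdot|\,\dd s$ is invariant under the rescaling $a=x\sqrt{n}$. The only difference is bookkeeping: the paper computes $\tilde{\tau}_a(s)=a^2\tilde{\eta}(as)$ and checks that the weighted $L^1$ norm of $\tilde{\tau}_a$ equals $c_\eta$, whereas you rescale the frequency inside the exponential and cancel the factors of $a$ directly.
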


\begin{proof}
For $a\coloneqq x\sqrt{n}>0$ and $\tau_a(t)\coloneqq a\eta(t/a)$, we have
\begin{align}
    \tilde{\tau_a}(s)&=a\int_{\mathbb{R}} \eta(t/a)e^{-\ii s t}\dd t=a^2\tilde{\eta}(as)
\end{align}
and hence
\begin{align}
    \frac{1}{2\pi}\int_{\mathbb{R}}|s ||\tilde{\tau_a}(s)|\dd s= \frac{1}{2\pi}\int_{\mathbb{R}}|s ||\tilde{\eta}(s)|\dd s=c_\eta.
\end{align}

We derive an upper bound of the operator norm of the following: 
\begin{align}
    [K,\tau_a(H)]=\frac{1}{2\pi}\int_{\mathbb{R}}\tilde{\tau_a}(s)[K,e^{\ii s H}]\dd s.
\end{align}
To this end, we use an identity:
\begin{align}
    [K,e^{\ii s H}]=-\ii\int_{0}^s e^{\ii t H}[H,K]e^{\ii(s-t)H}\dd t.
\end{align}
Indeed, introducing a function $F(t)\coloneqq e^{\ii t H}Ke^{\ii (s-t) H}$, we have $F(0)=Ke^{\ii s H}$ and $F(s)=e^{\ii sH}K$. Since
\begin{align}
    \frac{\dd}{\dd t}F(t)=\ii e^{\ii t H}[H,K]e^{\ii (s-t) H},
\end{align}
we have
\begin{align}
    [K,e^{\ii s H}]=-\ii \int_0^s e^{\ii t H}[H,K]e^{\ii (s-t) H} \dd t.
\end{align}
Since $e^{\ii tH}$ is unitary for Hermitian $H$, we have $\|e^{\ii t H}[H,K]e^{\ii (s-t) H}\|= \|[H,K]\|$, and hence
\begin{align}
    \|[K,e^{\ii s H}]\| \leq\int_0^{|s|} \dd t \|e^{\ii t H}[H,K]e^{\ii (s-t) H}\|=\int_0^{|s|} \dd t \|[H,K]\|=|s| \|[H,K]\|.
\end{align}
Therefore,
\begin{align}
    \|[K,\tau_a(H)]\|&\leq \frac{1}{2\pi}\int_{\mathbb{R}}|\tilde{\tau_a}(s)|\|[K,e^{\ii s H}]\| \dd s\leq  \frac{1}{2\pi}\int_{\mathbb{R}}|\tilde{\tau_a}(s)||s| \|[H,K]\|\dd s=c_\eta\|[H,K]\| .
\end{align}
\end{proof}

We now derive a lower bound on the quantum Fisher information of a $p$-parameter family of states near its i.i.d. extension:
\begin{lemma}\label{lem:master_bound}
    Fix $q\in(0,1)$.
    Let $\{\sigma_\theta\}_{\theta\in\Theta}$ be a smooth $p$-parameter family of states on $\mathcal{H}$, where $\Theta\subset\mathbb{R}^p$ is open. We fix $\theta_0\in\Theta$ and write
    \begin{align}
        \sigma&\coloneqq \sigma_{\theta_0},\qquad \partial_{i}\sigma\coloneqq \partial_{\theta_i}\sigma_\theta|_{\theta=\theta_0},\qquad L_i\coloneqq( \mathbb{J}_\sigma^{f_q})^{+} \partial_i\sigma,\\
        \sigma_n&\coloneqq \sigma^{\otimes n},\qquad \sigma_{n,\theta}\coloneqq \sigma_{\theta}^{\otimes n},\qquad \partial_{i}{\sigma}_n\coloneqq \partial_{\theta_i}\sigma_{n,\theta}|_{\theta=\theta_0}=\sum_{k=1}^n\sigma^{\otimes(k-1)}\otimes\partial_i\sigma\otimes\sigma^{\otimes(n-k)}.
    \end{align}
    Let $\{\rho_{n,\theta}\}_{\theta\in\Theta}$ be a smooth state family on $\mathcal{H}^{\otimes n}$ and write
    \begin{align}
        \rho_n\coloneqq \rho_{n,\theta_0},\qquad \partial_i\rho_n\coloneqq \partial_{\theta_i}\rho_{n,\theta}|_{\theta=\theta_0}
    \end{align}
    
    For $ z\in\mathbb{C}^p$, we write 
    \begin{align}
         \partial_z\sigma\coloneqq \sum_{i=1}^pz_i\partial_{i}\sigma,\qquad L_z\coloneqq \sum_{i=1}^p z_iL_i,\qquad 
         \partial_z\sigma_n\coloneqq \sum_{i=1}^pz_i\partial_{i}\sigma_n,\qquad \partial_z\rho_n\coloneqq \sum_{i=1}^pz_i\partial_i\rho_n.
    \end{align}
    Then, for each $n$, 
    \begin{align}
        \|\partial_z\rho_n\|_{f_q,\rho_n}^2\geq n\|\partial_z\sigma\|_{f_q,\sigma}^2-n\left(\frac{20(R+1)^2\|L_z\|^4}{x^2}+8R^2\epsilon_n x^2+d_{n,x,z}\right),\label{eq:master_bound}
    \end{align}
    for any $x>0$, where 
    \begin{align}
        \epsilon_n&\coloneqq \frac{1}{2}\left\|\rho_n-\sigma^{\otimes n}\right\|_1,\\
        d_{n,x,z}&\coloneqq \frac{2}{n}|\Re \Tr(C_{n,x,z}^\dag(\partial_z\rho_n-\partial_z\sigma_n))|,\qquad  C_{n,x,z}\coloneqq \tau_{n,x}\left(\frac{L_z^{[n]}+(L_z^{[n]})^\dag}{2}\right)+\ii \tau_{n,x}\left(\frac{L_z^{[n]}-(L_z^{[n]})^\dag}{2\ii}\right).
    \end{align}
\end{lemma}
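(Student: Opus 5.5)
The plan is to lower-bound the $f_q$-norm of $\partial_z\rho_n$ through the variational formula \eqref{eq:norm_variational_formula}, using the cutoff operator $C\coloneqq C_{n,x,z}$ as a trial vector. By \eqref{eq:tangent_in_range}, $\partial_z\rho_n\in\Ran\mathbb{J}^{f_q}_{\rho_n}$, so the formula applies and gives
\begin{align}
    \|\partial_z\rho_n\|^2_{f_q,\rho_n}\geq 2\Re\Tr(C^\dag\partial_z\rho_n)-\braket{C,\mathbb{J}^{f_q}_{\rho_n}C}.
\end{align}
I will then move from $\rho_n$ to $\sigma_n=\sigma^{\otimes n}$ in each term. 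In the linear term, replacing $\partial_z\rho_n$ by $\partial_z\sigma_n$ costs at most $n d_{n,x,z}$, by the definition of $d_{n,x,z}$. In the quadratic term, $\braket{C,\mathbb{J}^{f_q}_{\rho}C}=q\Tr(\rho CC^\dag)+(1-q)\Tr(\rho C^\dag C)$, and I will use $|\Tr((\rho_n-\sigma_n)Y)|\leq 2\epsilon_n\|Y\|$ for $Y\geq0$. Since $|\eta|\leq R$, each of $\tau_{n,x}(\cdot)$ has norm at most $Rx\sqrt n$, so $\|C\|\leq 2Rx\sqrt n$ and $\|CC^\dag\|,\|C^\dag C\|\leq 4R^2x^2n$. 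Hence this replacement costs at most $8R^2\epsilon_nx^2n$.

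It remains to show
\begin{align}
    2\Re\Tr(C^\dag\partial_z\sigma_n)-\braket{C,\mathbb{J}^{f_q}_{\sigma_n}C}\geq n\|\partial_z\sigma\|^2_{f_q,\sigma}-\frac{20(R+1)^2\|L_z\|^4}{x^2}n.
\end{align}
Write $L\coloneqq L_z^{[n]}$, and let $H_1\coloneqq(L+L^\dag)/2$, $H_2\coloneqq(L-L^\dag)/(2\ii)$, so that $C=L+D$ with $D\coloneqq(\tau_{n,x}(H_1)-H_1)+\ii(\tau_{n,x}(H_2)-H_2)$. Let $\Phi(X)\coloneqq 2\Re\braket{X,\partial_z\sigma_n}-\braket{X,\mathbb{J}^{f_q}_{\sigma_n}X}$.

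First, the untruncated trial value. Since $\mathbb{J}^{f_q}_{\sigma}L_z=\partial_z\sigma$ (the tangent lies in the range), one checks $\mathbb{J}^{f_q}_{\sigma_n}L=\partial_z\sigma_n$. The cross terms between different sites vanish because $\Tr(\sigma L_z)=\Tr(\mathbb{J}^{f_q}_\sigma L_z)=\Tr\partial_z\sigma=0$. This gives $\Phi(L)=n\|\partial_z\sigma\|^2_{f_q,\sigma}$.

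Second, the effect of the cutoff. Expanding the quadratic function around $L$ and using $\mathbb{J}^{f_q}_{\sigma_n}L=\partial_z\sigma_n$, the linear terms in $D$ cancel exactly, leaving $\Phi(L+D)=\Phi(L)-\braket{D,\mathbb{J}^{f_q}_{\sigma_n}D}$. Writing $D=a+\ii b$ with $a,b$ Hermitian, the operator inequality $DD^\dag,D^\dag D\leq 2(a^2+b^2)$ gives
\begin{align}
    \braket{D,\mathbb{J}^{f_q}_{\sigma_n}D}\leq 2\Tr\bigl(\sigma_n(a^2+b^2)\bigr).
\end{align}

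Third, the tail bound. Each $H_k$ is of the form $h_k^{[n]}$ with $h_k$ the Hermitian or anti-Hermitian part of $L_z$. These single-site operators are centered: $\Tr(\sigma h_k)=0$, because $\Tr(\sigma L_z)=0$ and $\sigma$ is Hermitian, so $\Tr(\sigma L_z^\dag)=\overline{\Tr(\sigma L_z)}=0$. Also $\|h_k\|\leq\|L_z\|$. Lemma~\ref{lem:tail_N} together with \eqref{eq:kappa_bound} then gives $\Tr(\sigma_n a^2),\Tr(\sigma_n b^2)\leq 5(R+1)^2\|L_z\|^4n/x^2$. Collecting the factors $2\times2\times5=20$ yields the claimed bound.

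I expect the main obstacle to be the exact cancellation in the second step. It relies on $\mathbb{J}^{f_q}_{\sigma_n}L_z^{[n]}=\partial_z\sigma_n$ holding exactly rather than only after projection onto the range, and this needs care for rank-deficient $\sigma$. I would secure it via \eqref{eq:tangent_in_range}, applied both to $\sigma_\theta$ and to the product family $\sigma_\theta^{\otimes n}$.
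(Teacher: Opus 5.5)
Your proposal is correct and follows essentially the same route as the paper. You use the variational formula with trial operator $C_{n,x,z}$, the exact identity $\mathbb{J}^{f_q}_{\sigma^{\otimes n}}L_z^{[n]}=\partial_z\sigma_n$ with vanishing cross terms, and Lemma~\ref{lem:tail_N} applied to the centered Hermitian parts of $L_z$. Together with the trace-distance replacement of the quadratic term, this gives exactly the constants $20(R+1)^2\|L_z\|^4/x^2$ and $8R^2\epsilon_n x^2$.

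Your exact expansion $\Phi(L+D)=\Phi(L)-\langle D,\mathbb{J}^{f_q}_{\sigma^{\otimes n}}D\rangle$ and the operator bound $DD^\dagger,D^\dagger D\le 2(a^2+b^2)$ are only repackagings of the paper's expansion of $\langle C-L,\mathbb{J}^{f_q}_{\sigma^{\otimes n}}(C-L)\rangle$ and its parallelogram inequality. Also, the range condition is needed only for $\sigma_\theta$ (and for $\rho_{n,\theta}$ in the variational step), because the product identity then follows by direct tensor computation.
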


\begin{proof}
    We fix $z\in\mathbb{C}^p$.
    From Eq.~\eqref{eq:tangent_in_range}, $\partial_z\sigma\in\Ran \mathbb{J}_\sigma^{f_q}$ and hence $\mathbb{J}_\sigma^{f_q}(L_z)=\partial_z\sigma$. Moreover,
    \begin{align}
        \Tr(\sigma L_z)=\Tr (\mathbb{J}_\sigma^{f_q}(L_z))=\Tr(\partial_z\sigma)=0.
    \end{align}
    We define $L^{[n]}$ as in Eq.~\eqref{eq:iid_extension_definition}. Since $L_z^{[n]}\coloneqq \sum_{k=1}^nL_z^{(k)}$ satisfies
    \begin{align}
        \mathbb{J}_{\sigma^{\otimes n}}^{f_q}(L_z^{[n]})=q\sigma^{\otimes n}L_z^{[n]}+(1-q)L_z^{[n]}\sigma^{\otimes n}= \partial_z\sigma_n,
    \end{align}
    we have
    \begin{align}
        \braket{L_z^{(i)},\mathbb{J}_{\sigma^{\otimes n}}^{f_q}(L_{z}^{(j)})}=\Tr(L_z^\dag \sigma)\Tr(\partial_{z}\sigma)=0\qquad (i\neq j).
    \end{align}
    Therefore,
    \begin{align}
        \|\partial_z\sigma_n\|_{f_q,\sigma^{\otimes n}}^2=\braket{L_z^{[n]},\mathbb{J}_{\sigma^{\otimes n}}^{f_q}(L_z^{[n]})}=n\braket{L_z,\mathbb{J}_{\sigma}^{f_q}L_z}=n\|\partial_z\sigma\|_{f_q,\sigma}^2.\label{eq:additivitiy_Ln}
    \end{align}

    We decompose the logarithmic derivative into its Hermitian and anti-Hermitian parts,
    \begin{align}
        L_z=A_z+\ii B_z,\qquad A_z\coloneqq \frac{L_z+L_z^\dag}{2},\qquad B_z\coloneqq\frac{L_z-L_z^\dag}{2\ii},
    \end{align}
    which satisfies
    \begin{align}
        A_z^\dag =A_z,\quad B_z^\dag =B_z,\quad \Tr(\sigma A_z)=\Tr(\sigma B_z)=0,\quad \|A_z\|\leq \|L_z\|,\quad \|B_z\|\leq \|L_z\|,\quad L_z^{[n]}=A_z^{[n]}+\ii B_z^{[n]}.
    \end{align}
    We define
    \begin{align}
        C_{n,x,z}\coloneqq \tau_{n,x}(A_z^{[n]})+\ii \tau_{n,x}(B_z^{[n]}).
    \end{align}
    By Eq.~\eqref{eq:cutoff_renorm_sqrt_n}, we have 
    \begin{align}
        \|C_{n,x,z}\|\leq\| \tau_{n,x}(A_z^{[n]})\|+\| \tau_{n,x}(B_z^{[n]})\|\leq 2R \sqrt{n}x.
    \end{align}
    
    For a general state $\rho$, since $\braket{X\pm Y,\mathbb{J}_{\rho}^{f_q}(X\pm Y)}=\braket{X,\mathbb{J}_{\rho}^{f_q}X}+\braket{Y,\mathbb{J}_{\rho}^{f_q}Y}\pm 2\Re \braket{X,\mathbb{J}_{\rho}^{f_q}Y}$, we have
    \begin{align}
        \braket{X+ Y,\mathbb{J}_{\rho}^{f_q}(X+Y)}&=2\braket{X,\mathbb{J}_{\rho}^{f_q}X}+2\braket{Y,\mathbb{J}_{\rho}^{f_q}Y}-\braket{X- Y,\mathbb{J}_{\rho}^{f_q}(X-Y)}\\
        &\leq 2\braket{X,\mathbb{J}_{\rho}^{f_q}X}+2\braket{Y,\mathbb{J}_{\rho}^{f_q}Y}.
    \end{align}
    Therefore, by Lemma~\ref{lem:tail_N}, we obtain
    \begin{align}
        \braket{C_{n,x,z}-L_z^{[n]},\mathbb{J}_{\sigma^{\otimes n}}^{f_q}(C_{n,x,z}-L_z^{[n]})}&\leq  2\Tr \left(\sigma^{\otimes n}\left(\tau_{n,x}(A_z^{[n]})-A_z^{[n]}\right)^2\right)+2 \Tr \left(\sigma^{\otimes n}\left(\tau_{n,x}(B_z^{[n]})-B_z^{[n]}\right)^2\right)\\
        &\leq 2\frac{\kappa_A+\kappa_B}{x^2}n\leq  \frac{\kappa}{x^2}n\label{eq:C_minus_L_bound},
    \end{align}
    where $\kappa\coloneqq 20(R+1)^2\|L_z\|^4$.

    Since $\mathbb{J}_{\sigma^{\otimes n}}^{f_q}L_z^{[n]}=\partial_z\sigma_n$, we have
    \begin{align}
        \braket{C_{n,x,z}-L_z^{[n]},\mathbb{J}_{\sigma^{\otimes n}}^{f_q}(C_{n,x,z}-L_z^{[n]})}=\braket{C_{n,x,z},\mathbb{J}_{\sigma^{\otimes n}}^{f_q}(C_{n,x,z})}+\braket{L_z^{[n]},\mathbb{J}_{\sigma^{\otimes n}}^{f_q}(L_z^{[n]})}-2\Re\Tr(C_{n,x,z}^\dag\partial_z\sigma_n).
    \end{align}
    Therefore, by Eqs.~\eqref{eq:additivitiy_Ln} and~\eqref{eq:C_minus_L_bound}, we get
    \begin{align}
        2\Re\Tr(C_{n,x,z}^\dag\partial_z\sigma_n)-\braket{C_{n,x,z},\mathbb{J}_{\sigma^{\otimes n}}^{f_q}(C_{n,x,z})}\geq n\|\partial_z\sigma\|^2_{f_q,\sigma}-\frac{\kappa}{x^2}n.\label{eq:sigma_bound}
    \end{align}

    By Eq.~\eqref{eq:tangent_in_range}, applied to the smooth family $\rho_{n,\theta}$, we have $\partial_z\rho_n\in\Ran\mathbb{J}_{\rho_n}^{f_q}$. 
    Thus, the variational formula in Eq.~\eqref{eq:norm_variational_formula} gives
    \begin{align}
        \|\partial_z\rho_n\|_{f_q,\rho_n}^2&\geq 2\Re \Tr(C_{n,x,z}^\dag\partial_z\rho_n)-\braket{C_{n,x,z},\mathbb{J}_{\rho_n}^{f_q}C_{n,x,z}}\\
        &\geq n\|\partial_z\sigma\|^2_{f_q,\sigma}-\frac{\kappa}{x^2}n+2\Re \Tr(C_{n,x,z}^\dag(\partial_z\rho_n-\partial_z\sigma_n))-(\braket{C_{n,x,z},\mathbb{J}_{\rho_n}^{f_q}C_{n,x,z}}-\braket{C_{n,x,z},\mathbb{J}_{\sigma^{\otimes n}}^{f_q}C_{n,x,z}}),
    \end{align}
    where we used Eq.~\eqref{eq:sigma_bound}. Since
    \begin{align}
        |\braket{C_{n,x,z},\mathbb{J}_{\rho_n}^{f_q}C_{n,x,z}}-\braket{C_{n,x,z},\mathbb{J}_{\sigma^{\otimes n}}^{f_q}C_{n,x,z}}|&=|\Tr((\rho_n-\sigma^{\otimes n})(qC_{n,x,z}C_{n,x,z}^\dag+(1-q)C_{n,x,z}^\dag C_{n,x,z} ))| \\
        &\leq \|\rho_n-\sigma^{\otimes n}\|_1\|C_{n,x,z}\|^2\leq 8 R^2 \epsilon_n x^2 n,
    \end{align}
    we obtain Eq.~\eqref{eq:master_bound}.     
\end{proof}

The above lemma is applicable to any smooth models $\sigma_\theta$ and $\rho_{n,\theta}$. If we particularize it to unitary models, the contribution $d_{n,x,z}$, originating from tangents $\partial_z\rho_n$ and $\partial_z\sigma_n$, is directly related to the trace-norm error $\epsilon_n$, thereby obtaining the following:
\begin{theorem}\label{thm:QFI_monotonicity_unitary}
    We retain the notation from Lemma~\ref{lem:master_bound}.
    Let $X_1,\ldots,X_p$ be Hermitian operators on $\mathcal{H}$. If 
    \begin{align}
        \partial_i\sigma =\ii[\sigma,X_i],\qquad \partial_i\rho_n=\ii[\rho_n,X_i^{[n]}],\label{eq:commutator_tangent}
    \end{align}
    then for each $n$, 
    \begin{align}
        \frac{1}{n}\mathcal{F}_{\rho_n}^{f_q}\geq \mathcal{F}_{\sigma}^{f_q}-h_{q,\eta}(\epsilon_n)\mathcal{X},\label{eq:QFI_bound_each_n}
    \end{align}
    where $\mathcal{X}$ is a $p\times p$ matrix defined by
    \begin{align}
        (\mathcal{X})_{ij}\coloneqq \Tr\left(\left(X_i-\frac{\Tr X_i}{d}I\right)\left(X_j-\frac{\Tr X_j}{d}I\right)\right)
    \end{align}
    and $h_{q,\eta}$ is a real-valued function defined by
    \begin{align}
        h_{q,\eta}(\epsilon)\coloneqq \frac{8\sqrt{10}R(R+1)}{\delta_q^2}\sqrt{\epsilon}+\frac{16c_\eta}{\delta_q}\epsilon,\qquad \delta_q\coloneqq \min\{q,1-q\}>0.
    \end{align}
\end{theorem}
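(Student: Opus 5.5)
The plan is to start from Lemma~\ref{lem:master_bound} and control the remaining term $d_{n,x,z}$ using the commutator form of the tangents. After that, the free parameter $x$ is optimized against $\epsilon_n$. The three error terms in Eq.~\eqref{eq:master_bound} must be expressed through $z^\dag\mathcal{X}z$, so that the bound becomes a matrix inequality.

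\textbf{Step 1: bound $\|L_z\|$.} Since $\partial_z\sigma=\ii[\sigma,X_z]$ with $X_z\coloneqq\sum_i z_iX_i$, we may replace $X_z$ by its traceless part $\tilde X_z$ without changing the tangent. In the eigenbasis of $\sigma$,
\begin{align}
\braket{k|L_z|l}=\frac{\ii(\mu_k-\mu_l)\braket{k|\tilde X_z|l}}{q\mu_k+(1-q)\mu_l}.
\end{align}
The ratio is bounded by $1/\delta_q$ in modulus, and entries with vanishing denominator are absent. This gives $\|L_z\|\le\|L_z\|_{\HS}\le \delta_q^{-1}\|\tilde X_z\|_{\HS}=\delta_q^{-1}(z^\dag\mathcal{X}z)^{1/2}$. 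Hence $20(R+1)^2\|L_z\|^4/x^2\le 20(R+1)^2(z^\dag\mathcal{X}z)^2/(\delta_q^4x^2)$.

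\textbf{Step 2: bound $d_{n,x,z}$.} Write $\partial_z\rho_n-\partial_z\sigma_n=\ii[\rho_n-\sigma^{\otimes n},\tilde X_z^{[n]}]$. Cycling the trace moves the commutator onto $C_{n,x,z}$:
\begin{align}
|\Tr(C^\dag[\Delta,Y])|=|\Tr(\Delta[Y,C^\dag])|\le 2\epsilon_n\|[Y,C^\dag]\|.
\end{align}
Here $C^\dag=\tau_{n,x}(A_z^{[n]})-\ii\tau_{n,x}(B_z^{[n]})$, with $Y=\tilde X_z^{[n]}$ Hermitian for real $z$. For complex $z$ I would split into real and imaginary parts, or use Lemma~\ref{lem:commutator_tau_norm} with general $K$; since it allows any $K$, non-Hermitian $Y$ is fine. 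Lemma~\ref{lem:commutator_tau_norm} gives $\|[Y,\tau_{n,x}(A_z^{[n]})]\|\le c_\eta\|[Y,A_z^{[n]}]\|$. Because the operators act sitewise, $[\tilde X_z^{[n]},A_z^{[n]}]=\sum_k[\tilde X_z,A_z]^{(k)}$, whose norm is at most $n\|[\tilde X_z,A_z]\|\le 2n\|\tilde X_z\|\|L_z\|$. Then $\|\tilde X_z\|\le(z^\dag\mathcal Xz)^{1/2}$, and Step~1 bounds $\|L_z\|$. Combining, $d_{n,x,z}\le \frac{2}{n}\cdot 2\epsilon_n\cdot 2c_\eta\cdot 2n(z^\dag\mathcal Xz)/\delta_q=16c_\eta\epsilon_n(z^\dag\mathcal{X}z)/\delta_q$. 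This matches the second term of $h_{q,\eta}$, which is a good sign that this is the intended route.

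\textbf{Step 3: optimize $x$.} The remaining terms are $20(R+1)^2(z^\dag\mathcal Xz)^2/(\delta_q^4x^2)+8R^2\epsilon_nx^2$. Choose $x^2=\sqrt{20}(R+1)(z^\dag\mathcal Xz)/(\delta_q^2R\sqrt{8\epsilon_n})$, which is legitimate because $x$ may depend on $z$. The sum then equals $2\sqrt{160}\,R(R+1)\sqrt{\epsilon_n}\,(z^\dag\mathcal Xz)/\delta_q^2=8\sqrt{10}R(R+1)\sqrt{\epsilon_n}(z^\dag\mathcal Xz)/\delta_q^2$, exactly the first term. The degenerate cases are handled separately. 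If $\epsilon_n=0$, let $x\to\infty$. If $z^\dag\mathcal Xz=0$, then $\tilde X_z=0$, the tangents vanish, and the claim is trivial.

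\textbf{Step 4: conclude.} Dividing by $n$ gives $z^\dag(\frac1n\mathcal F^{f_q}_{\rho_n}-\mathcal F^{f_q}_\sigma+h_{q,\eta}(\epsilon_n)\mathcal X)z\ge0$ for all $z$, which is the matrix inequality. The main obstacle is Step~2 for complex $z$, where $C_{n,x,z}$ mixes two cutoffs and the commutator norm must be tracked carefully to keep the constant $16$.
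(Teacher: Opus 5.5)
Your proposal is correct and follows the paper's proof essentially step for step. You shift to traceless generators, write $\partial_z\rho_n-\partial_z\sigma_n=\ii[\rho_n-\sigma^{\otimes n},\tilde X_z^{[n]}]$, and control the cutoff commutator with Lemma~\ref{lem:commutator_tau_norm} plus the sitewise bound; that lemma allows non-Hermitian $K$, so complex $z$ is no obstacle, exactly as in the paper. You then use $\|L_z\|\le\delta_q^{-1}\|\tilde X_z\|_{\HS}$ and optimize $x$ to get the $2\sqrt{ab}$ term. The differences are cosmetic: the paper packages $2\Re\Tr(\cdot)$ into a Hermitian operator $K_{n,x,z}$ with $\|K_{n,x,z}\|\le 8nc_\eta\|\tilde X_z\|\|L_z\|$, and it optimizes $x$ before inserting the $\|L_z\|$ bound, arriving at the same constants.
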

\begin{proof}
    By shifting a scalar, we define $\tilde{X}_i\coloneqq X_i-\frac{\Tr X_i}{d}I$ so that $\Tr \tilde{X}_i=0$. Note that Eqs.~\eqref{eq:commutator_tangent} and~\eqref{eq:QFI_bound_each_n} remain invariant under the shift $X_i\mapsto \tilde{X}_i$. 
    We define $O_z\coloneqq \sum_{i=1}^pz_i\tilde{X}_i$.
    Since
    \begin{align}
        \partial_z\rho_n-\partial_z\sigma_n=\ii[\rho_n-\sigma^{\otimes n},O_z^{[n]}],
    \end{align}
    we have
    \begin{align}
         2\Re \Tr(C_{n,x,z}^\dag(\partial_z\rho_n-\partial_z\sigma_n))= 2\Re \Tr((\rho_n-\sigma^{\otimes n})\ii [O_z^{[n]},C_{n,x,z}^\dag])=\Tr((\rho_n-\sigma^{\otimes n})K_{n,x,z}),
    \end{align}
    where
    \begin{align}
        K_{n,x,z}\coloneqq \ii [O_z^{[n]},C_{n,x,z}^\dag]+\ii [(O_z^{[n]})^\dag,C_{n,x,z}].
    \end{align}
    
    Since $C_{n,x,z}=\tau_{n,x}(A_z^{[n]})+\ii \tau_{n,x}(B_z^{[n]})$, by Lemma~\ref{lem:commutator_tau_norm}, we obtain
    \begin{align}
       \|K_{n,x,z}\|&\leq \|[O_z^{[n]},\tau_{n,x}(A_z^{[n]})]\|+\|[O_z^{[n]},\tau_{n,x}(B_z^{[n]})]\|+\|[O_z^{[n]\dag},\tau_{n,x}(A_z^{[n]})]\|+\|[O_z^{[n]\dag},\tau_{n,x}(B_z^{[n]})]\|\\
       &\leq c_\eta \left( \|[O_z^{[n]},A_z^{[n]}]\|+\|[O_z^{[n]},B_z^{[n]}]\|+\|[O_z^{[n]\dag},A_z^{[n]}]\|+\|[O_z^{[n]\dag},B_z^{[n]}]\|\right).
    \end{align}
    Note that the right-hand side is independent of $x$. 
    Since
    \begin{align}
        [O_z^{[n]},A_z^{[n]}]=\sum_{k=1}^nI^{\otimes (k-1)}\otimes \left([O_z,A_z]\right)\otimes I^{\otimes (n-k)},
    \end{align}
    we have $\|[O_z^{[n]},A_z^{[n]}]\|\leq n \|[O_z,A_z]\|$. 
    Applying the same argument to each term, we get
    \begin{align}
        \|K_{n,x,z}\|&\leq n c_\eta \left( \|[O_z,A_z]\|+\|[O_z,B_z]\|+\|[O_z^{\dag},A_z]\|+\|[O_z^{\dag},B_z]\|\right)\\
        &\leq 8nc_\eta \|O_z\|\|L_z\|\leq 8nc_\eta \|O_z\|_\HS\|L_z\|,
    \end{align}
    where in the last inequality, we used $\|O_z\|\leq \|O_z\|_\HS$. Therefore,
    \begin{align}
        d_{n,x,z}=\frac{2}{n}|\Re \Tr(C_{n,x,z}^\dag(\partial_z\rho_n-\partial_z\sigma_n))|=\frac{1}{n}|\Tr((\rho_n-\sigma^{\otimes n})K_{n,x,z})|\leq  16c_\eta \|O_z\|_\HS\|L_z\|\epsilon_n.
    \end{align}

    For any $a,b>0$, 
    \begin{align}
        \inf_{x>0}\left\{\frac{a}{x^2}+bx^2\right\}=2\sqrt{ab}.
    \end{align}
    Therefore, by Lemma~\ref{lem:master_bound}, 
    \begin{align}
        \|\partial_z\rho_n\|_{f_q,\rho_n}^2%&\geq n\|\partial_z\sigma\|_{f_q,\sigma}^2-n\left(\frac{20(R+1)^2\|L_z\|^4}{x^2}+8R^2\epsilon_n x^2+d_{n,x,z}\right)\\
        &\geq  n\|\partial_z\sigma\|_{f_q,\sigma}^2-n\left(8\sqrt{10}R(R+1)\|L_z\|^2\sqrt{\epsilon_n}+16c_\eta \|O_z\|_\HS\|L_z\|\epsilon_n\right).\label{eq:bound_L_z_O_z}
    \end{align}

    Since $L_i=(\mathbb{J}_{\sigma}^{f_q})^+(\ii[\sigma,X_i])$, by using eigenvalue decomposition $\sigma=\sum_j\mu_j\ket{j}\bra{j}$, we have
    \begin{align}
        \braket{k|L_i|l}=
        \begin{cases}
            \frac{1}{q\mu_k+(1-q)\mu_l}\braket{k|\ii[\sigma,X_i]|l}=\frac{\ii(\mu_k-\mu_l)}{q\mu_k+(1-q)\mu_l}\braket{k|X_i|l}\qquad&(\text{if }q\mu_k+(1-q)\mu_l\neq 0)\\
            0\qquad &(\text{otherwise})
        \end{cases}
    \end{align}
    Since
    \begin{align}
        q\mu_k+(1-q)\mu_l\geq \delta_q(\mu_k+\mu_l)\geq \delta_q|\mu_k-\mu_l|,
    \end{align}
    whenever $q\mu_k+(1-q)\mu_l>0$, we have
    \begin{align}
        |\braket{k|L_z|l}| =\frac{|\mu_k-\mu_l|}{q\mu_k+(1-q)\mu_l}\left|\braket{k|O_z|l}\right|\leq \frac{1}{\delta_q}|\braket{k|O_z|l}|.
    \end{align}
    If $q\mu_k+(1-q)\mu_l=0$, then $\mu_k=\mu_l=0$ and $\braket{k|L_z|l}=0$, hence the same bound holds.
    Therefore,
    \begin{align}
        \|L_z\|\leq \|L_z\|_\HS=\sqrt{\sum_{k,l}|\braket{k|L_z|l}|^2}\leq \frac{1}{\delta_q}\sqrt{\sum_{k,l}|\braket{k|O_z|l}|^2}=\frac{1}{\delta_q}\|O_z\|_\HS.
    \end{align}
    Substituting this bound into Eq.~\eqref{eq:bound_L_z_O_z}, 
    \begin{align}
        \|\partial_z\rho_n\|_{f_q,\rho_n}^2%&\geq n\|\partial_z\sigma\|_{f_q,\sigma}^2-n\left(\frac{20(R+1)^2\|L_z\|^4}{x^2}+8R^2\epsilon_n x^2+d_{n,x,z}\right)\\
        &\geq  n\|\partial_z\sigma\|_{f_q,\sigma}^2-n h_{q,\eta}(\epsilon_n)\|O_z\|_\HS^2,
    \end{align}
    i.e.,
    \begin{align}
        z^\dag \left( \frac{1}{n}\mathcal{F}_{\rho_n}^{f_q}- \left(\mathcal{F}_{\sigma}^{f_q}-h_{q,\eta}(\epsilon_n)\mathcal{X}\right)\right)z\geq 0. 
    \end{align}
    Since this inequality holds for any $z\in\mathbb{C}^p$, we obtain Eq.~\eqref{eq:QFI_bound_each_n}. 
\end{proof}

\subsection{Proof of converse part}

We remark that Appendix~B.4 of Ref.~\cite{yamaguchi_QuantumGeometricTensorDeterminesPureState_2026} shows that $R(\rho\to\rho')=0$ if $\Sym_G(\rho)\not\subset\Sym_G(\rho')$ for arbitrary states $\rho$ and $\rho'$ on finite-dimensional Hilbert spaces. Here, we prove the following theorem, which serves as the converse part of the conversion rate formula. 
\begin{theorem}
    Let $U$ and $U'$ be (nonprojective) unitary representations of a compact Lie group $G$ on finite-dimensional Hilbert spaces $\mathcal{H}$ and $\mathcal{H}'$. Let $\rho$ and $\rho'$ be arbitrary states on $\mathcal{H}$ and $\mathcal{H}'$, respectively. 
    Suppose that for some $r>0$, there exist $G$-covariant channels $\{\mathcal{E}_n\}_n$ such that
    \begin{align}
       \lim_{n\to\infty} \left\|\mathcal{E}_n(\rho^{\otimes n})-\rho^{\prime \otimes \floor{rn}}\right\|_1=0.\label{eq:conversion_error_assumption}
    \end{align}
    
    Then
    \begin{align}
        \forall q\in [1/2,1],\qquad \mathcal{Q}^{q}_\rho\geq r\mathcal{Q}^{q}_{\rho'}.
    \end{align}
    Consequently,
    \begin{align}
        R(\rho\to \rho')\leq \sup\{r\geq 0\mid \forall q\in [1/2,1],\,\mathcal{Q}^{q}_\rho\geq r\mathcal{Q}^{q}_{\rho'}\}
    \end{align}
\end{theorem}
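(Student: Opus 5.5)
We reduce to Theorem~\ref{thm:QFI_monotonicity_unitary} applied to the output system, combined with monotonicity (Proposition~\ref{prop:fq_QFI_monotonicity_general}) and additivity on the input side. Fix $q\in(0,1)$ and write $m_n\coloneqq\floor{rn}$. Parametrize $G$ near the identity by $g(\theta)=\exp(\ii\sum_i\theta_iA_i)$ and consider the input model $\mathcal{U}_{g(\theta)}^{\otimes n}(\rho^{\otimes n})$ together with the output model $\rho_{m_n,\theta}\coloneqq\mathcal{E}_n(\mathcal{U}_{g(\theta)}^{\otimes n}(\rho^{\otimes n}))$. Covariance of $\mathcal{E}_n$ gives $\rho_{m_n,\theta}=\mathcal{U}'^{\otimes m_n}_{g(\theta)}(\chi_n)$ with $\chi_n\coloneqq\mathcal{E}_n(\rho^{\otimes n})$. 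This is a smooth unitary model on $\mathcal{H}'^{\otimes m_n}$ whose tangent is $\ii[X_i'^{[m_n]},\chi_n]$, i.e.\ of the form in Eq.~\eqref{eq:commutator_tangent} up to the sign convention, which is harmless because the sign flips both tangents simultaneously and leaves the QFI invariant. The comparison model is $\sigma_\theta\coloneqq\mathcal{U}'_{g(\theta)}(\rho')$.

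Next we chain the inequalities. Since $\mathcal{E}_n$ does not depend on $\theta$, Proposition~\ref{prop:fq_QFI_monotonicity_general} gives
\begin{align}
n\mathcal{F}^{f_q}_{\rho}=\mathcal{F}^{f_q}_{\rho^{\otimes n}}\geq\mathcal{F}^{f_q}_{\chi_n},
\end{align}
where the equality is additivity. Theorem~\ref{thm:QFI_monotonicity_unitary}, applied with $n$ replaced by $m_n$, $\sigma=\rho'$, and $\rho_{m_n}=\chi_n$, gives
\begin{align}
\frac{1}{m_n}\mathcal{F}^{f_q}_{\chi_n}\geq\mathcal{F}^{f_q}_{\rho'}-h_{q,\eta}(\epsilon_n)\mathcal{X}',
\end{align}
with $\epsilon_n=\frac12\|\chi_n-\rho'^{\otimes m_n}\|_1\to0$ by Eq.~\eqref{eq:conversion_error_assumption}. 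Combining the two displays yields
\begin{align}
\mathcal{F}^{f_q}_{\rho}\geq\frac{m_n}{n}\bigl(\mathcal{F}^{f_q}_{\rho'}-h_{q,\eta}(\epsilon_n)\mathcal{X}'\bigr).
\end{align}
Now let $n\to\infty$. We have $m_n/n\to r$, $h_{q,\eta}(\epsilon)\to0$ as $\epsilon\to0$ for fixed $q$, and $\mathcal{X}'$ is a fixed finite matrix. Because the positive-semidefinite cone is closed, this gives $\mathcal{F}^{f_q}_\rho\geq r\mathcal{F}^{f_q}_{\rho'}$. Multiplying by $q(1-q)$ gives $\mathcal{Q}^q_\rho\geq r\mathcal{Q}^q_{\rho'}$ for every $q\in(0,1)$.

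The endpoint $q=1$ follows by taking $q\to1^-$, using the continuous extension of $\mathcal{Q}^q$ and closedness of the cone again. This is legitimate because the bound of Theorem~\ref{thm:QFI_monotonicity_unitary} is applied at each fixed interior $q$, so we never need uniformity as $\delta_q\to0$. The resulting inequality holds for every achievable $r$, so taking the supremum over achievable rates yields the stated bound on $R(\rho\to\rho')$. The case where $r$ is a supremum that is not itself achieved is handled by applying the argument along achievable rates converging to it.

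The main obstacle, the discontinuity of QFI under trace-norm perturbations of size $\epsilon_n$ on $m_n$ copies, is already absorbed by Theorem~\ref{thm:QFI_monotonicity_unitary}. What remains is to check its hypotheses. First, $\chi_n$ must genuinely arise from a smooth model with the i.i.d.\ unitary tangent, which is exactly what the covariance identity $\mathcal{E}_n\circ\mathcal{U}_g^{\otimes n}=\mathcal{U}_g'^{\otimes m_n}\circ\mathcal{E}_n$ provides. Second, the representations must be differentiable, as assumed in the paper, and nonprojective, so that the tangent of $\rho_{m_n,\theta}$ has no extra phase terms.
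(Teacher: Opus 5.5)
Your proposal is correct and follows the paper's proof essentially step by step: monotonicity plus additivity on the input side, Theorem~\ref{thm:QFI_monotonicity_unitary} applied to the covariant output model $\mathcal{U}'^{\otimes m_n}_{g(\theta)}(\chi_n)$ with $m_n=\floor{rn}$ copies, the limit $n\to\infty$ at each fixed $q\in(0,1)$, and continuous extension to $q=1$. You also handle explicitly the sign convention relative to Eq.~\eqref{eq:commutator_tangent}, via $X_i\mapsto -X_i$, which leaves $\mathcal{X}$ unchanged; the paper passes over this detail silently.
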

\begin{proof}
    By Eq.~\eqref{eq:QGTordering_QFIordering}, it suffices to prove the inequalities for $f_q$-QFI for every fixed $q\in(0,1)$.
    In a neighborhood of the identity element $e\in G$, we parametrize $g(\theta)=\exp(\ii\sum_i\theta_iA_i)$, where $\{A_i\}_{i=1}^{\dim G}$ is a basis of the Lie algebra. We set $m_n\coloneqq \floor{rn}$ and define models by
    \begin{align}
        \rho_{n,\theta}\coloneqq \mathcal{U}_{g(\theta)}\left(\rho \right)^{\otimes n},\qquad \rho'_{m_n,\theta}\coloneqq \mathcal{U}'_{g(\theta)}\left(\rho'\right)^{\otimes m_n},
    \end{align}
    and 
    \begin{align}
        \chi_n\coloneqq \mathcal{E}_n(\rho^{\otimes n}),\qquad \chi_{n,\theta}\coloneqq\mathcal{E}_n(\rho_{n,\theta}).
    \end{align}
    Since $\mathcal{E}_n$ is $G$-covariant, we have
    \begin{align}
        \chi_{n,\theta}=\mathcal{E}_n(\rho_{n,\theta})=\mathcal{U}'^{\otimes m_n}_{g(\theta)}(\mathcal{E}_n(\rho^{\otimes n})).
    \end{align}
    Due to this covariance, the conversion error
    \begin{align}
        \epsilon_n\coloneqq \frac{1}{2}\left\|\chi_{n}-\rho'^{\otimes m_n}\right\|_1=\frac{1}{2}\left\|\chi_{n,\theta}-\rho'_{m_n,\theta}\right\|_1
    \end{align}
    is independent of $\theta$ and $\lim_{n\to\infty}\epsilon_n=0$ follows from Eq.~\eqref{eq:conversion_error_assumption}. 
    
    By Proposition~\ref{prop:fq_QFI_monotonicity_general}, we have
    \begin{align}
        n\mathcal{F}_{\rho}^{f_q}\geq \mathcal{F}_{\chi_{n}}^{f_q}.
    \end{align}
    By Theorem~\ref{thm:QFI_monotonicity_unitary}, we have
    \begin{align}
        \mathcal{F}_{\chi_{n}}^{f_q}\geq m_n\left(\mathcal{F}_{\rho'}^{f_q}-h_{q,\eta}(\epsilon_n)\mathcal{X}\right).
    \end{align}
    Therefore,
    \begin{align}
        \mathcal{F}_{\rho}^{f_q}\geq \frac{m_n}{n}\left(\mathcal{F}_{\rho'}^{f_q}-h_{q,\eta}(\epsilon_n)\mathcal{X}\right).
    \end{align}
    In the limit $n\to\infty$, we obtain
    \begin{align}
        \mathcal{F}_{\rho}^{f_q}\geq r\mathcal{F}_{\rho'}^{f_q}.
    \end{align}
    Multiplying by $q(1-q)$ yields the metric adjusted QGT inequalities, which continuously extends to the endpoint $q=1$. 
\end{proof}

\clearpage
\section{Direct part}

\subsection{Direct part for nonprojective unitary representations}
We here prove the direct part, which we extend to projective unitary representations in the next subsection.
\begin{theorem}\label{thm:nonproj_unitary_rep_direct_part}
    Let $U$ and $U'$ be (nonprojective) unitary representations of a compact Lie group $G$ on finite-dimensional Hilbert spaces $\mathcal{H}$ and $\mathcal{H}'$. Let $\rho$ and $\rho'$ be arbitrary states on $\mathcal{H}$ and $\mathcal{H}'$, respectively. 
    Suppose that $\Sym_G(\rho)\subset \Sym_G(\rho')$, and for some $r>0$, 
    \begin{align}
        \forall q\in [1/2,1],\qquad \mathcal{Q}^{q}_\rho\geq r\mathcal{Q}^{q}_{\rho'}.
    \end{align}
    
    Then, for any $s\in(0,r)$ and any $\kappa>0$, there exist $G$-covariant channels $\{\mathcal{E}_n\}_n$ such that
    \begin{align}
        \left\|\mathcal{E}_n(\rho^{\otimes n})-\rho^{\prime \otimes \floor{sn}}\right\|_1=O(n^{-1/2+\kappa}).
    \end{align}
    Consequently,
    \begin{align}
        R(\rho\to \rho')\geq \sup\{r\geq 0\mid \forall q\in [1/2,1],\,\mathcal{Q}^{q}_\rho\geq r\mathcal{Q}^{q}_{\rho'}\}
    \end{align}
\end{theorem}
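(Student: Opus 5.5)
The plan is to follow the estimate-and-convert strategy of the proof sketch, carried out at the level of statistical models. First, use the equivalence between $G$-covariant convertibility and convertibility of the orbit models $\{\mathcal{U}_g(\rho)^{\otimes n}\}_{g\in G}\to\{\mathcal{U}'_g(\rho')^{\otimes m}\}_{g\in G}$ by a parameter-independent channel, with error measured uniformly in $g$; this is the equivalence of Refs.~\cite{marvianmashhadSymmetryAsymmetryQuantum2012,yamaguchi_QuantumGeometricTensorDeterminesPureState_2026}, obtained by twirling over the Haar measure. Fix $s<s_1<r$. The role of $\Sym_G(\rho)\subset\Sym_G(\rho')$ is that $\mathcal{U}'_g(\rho')$ is a well-defined function of $\mathcal{U}_g(\rho)$, i.e.\ of the coset $g\Sym_G(\rho)$. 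It also guarantees that the local parametrization can be taken on $\mathfrak{g}$ modulo the stabilizer algebra, where $\mathcal{Q}^{1/2}_\rho$ is nondegenerate. This nondegeneracy is what allows estimation at rate $n^{-1/2}$.

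\textbf{Localization.} Spend $n^{1-\kappa'}$ copies, with $\kappa'$ small, on an estimator $\hat g$ of the coset, for instance by tomography of $\mathcal{U}_g(\rho)$ followed by inverting the orbit map. Hoeffding-type bounds put $\hat g^{-1}g$ within distance $n^{-1/2+\kappa}$ of $\Sym_G(\rho)$, except with exponentially small probability. Conditioned on $\hat g$, apply $\mathcal{U}_{\hat g}^\dagger$ to the remaining copies. The problem is then reduced to a local model $\mathcal{U}_{g(u n^{-1/2})}(\rho)^{\otimes n'}$ with $\|u\|\le n^{\kappa}$, and the corresponding target $\mathcal{U}'_{g(un^{-1/2})}(\rho')^{\otimes \lfloor sn\rfloor}$; at the end, $\mathcal{U}'_{\hat g}$ is reapplied to the output.

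\textbf{Local conversion.} Invoke the unitary-orbit QLAN of Part~III, which allows arbitrary rank and degeneracies. It provides channels $T_n,S_n$ and $T'_{m},S'_{m}$ that intertwine the local models with Gaussian shift models $\mathcal{G}_u$ and $\mathcal{G}'_{\sqrt{s_1}u}$, uniformly for $\|u\|\le n^{\kappa}$ and with error $O(n^{-1/2+\kappa})$. These Gaussian models carry the same $f_q$-QFI family as $\rho$ and $\rho'$ respectively. The hypothesis gives $\mathcal{Q}^q_\rho\ge r\mathcal{Q}^q_{\rho'}\ge s_1\mathcal{Q}^q_{\rho'}$ for all $q\in[1/2,1]$. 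By the Gaussian convertibility theorem of Part~II, there is therefore a $u$-independent channel $\Lambda$ with $\Lambda(\mathcal{G}_u)=\mathcal{G}'_{\sqrt{s_1}u}$. Set $\Lambda_n=S'_{m}\circ\Lambda\circ T_n$. The slack $s_1>s$ absorbs both the sublinear estimation sample and the mismatch between $\lfloor s_1 n'\rfloor$ and $\lfloor sn\rfloor$; surplus output copies are discarded by a partial trace, which preserves the local error.

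\textbf{Error accounting and main obstacle.} Summing the failure probability, the QLAN errors and the triangle inequality gives a uniform-in-$g$ error $O(n^{-1/2+\kappa})$. Haar twirling then turns the resulting channel into a $G$-covariant one with the same error, which proves the claimed rate bound. I expect the main obstacle to be uniformity. The QLAN approximation and the Gaussian conversion must hold uniformly over the growing local window $\|u\|\le n^\kappa$. The stabilizer directions, in which $\rho$ has zero QFI, must also be handled: the Gaussian models should be degenerate in exactly those directions, and $\rho'$ must be locally constant along them, which follows from the subgroup inclusion. I would handle this by quotienting to a complementary subspace of $\mathfrak{g}$ before applying QLAN.
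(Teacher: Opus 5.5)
Your proposal is essentially the paper's proof: localize $g$ on $\lceil n^{1-\eta}\rceil$ copies (Lemma~\ref{lem:POVM}), undo $\hat g$ and use $\Sym_G(\rho)\subset\Sym_G(\rho')$ to reduce to the local models of Lemma~\ref{lemma:local_model_convertibility} (QLAN of Part~III plus the exact Gaussian channel of Part~II), discard surplus copies, and Haar-twirl. Quotienting by stabilizer directions is unnecessary, since Parts~II--III accept non-injective $C$ and the $f_q$-QFI inequalities already force every direction of vanishing QFI for $\rho$ to have vanishing QFI for $\rho'$. The one thing to repair is the exponent bookkeeping: the QLAN of Part~III is established with error $O(n^{-1/2+\kappa})$ only on windows $\|u\|\le n^{\epsilon}$ with $6\epsilon<\kappa$, and the horizontal-reduction step alone already costs $O(\|u\|^2/\sqrt{n})$. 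You must therefore localize to accuracy $n^{-1/2+\eta}$ with $\eta<\epsilon<\kappa/6$, rather than to $n^{-1/2+\kappa}$ with window $n^{\kappa}$ as written; this is harmless because $\kappa>0$ is arbitrary.
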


We first prove the following lemma.
\begin{lemma}[Convertibility between local models]\label{lemma:local_model_convertibility}
    Let $\rho$ and $\rho'$ be quantum states, and $\{X_i\}_{i=1}^p$ and $\{X_i'\}_{i=1}^p$ be sets of Hermitian operators on finite-dimensional Hilbert spaces $\mathcal{H}$ and $\mathcal{H}'$, respectively. We define
    \begin{align}
        \rho_{u,n}&\coloneqq\left(\exp\left(\ii n^{-1/2}\sum_{i=1}^pu_iX_i\right)\rho\exp\left(-\ii n^{-1/2}\sum_{i=1}^pu_iX_i\right)\right)^{\otimes n}\\
        \rho'^{(r)}_{u,n}&\coloneqq\left(\exp\left(\ii n^{-1/2}\sum_{i=1}^pu_iX_i'\right)\rho'\exp\left(-\ii n^{-1/2}\sum_{i=1}^pu_iX_i'\right)\right)^{\otimes \floor{rn}},
    \end{align}
    where $r>0$. 
    Assume that 
    \begin{align}
        \forall q\in (0,1),\qquad \mathcal{F}^{f_q}_{\rho}\geq r\mathcal{F}^{f_q}_{\rho'}.\label{eq:assumption_QFI_condition}
    \end{align} 
    For $\epsilon\in[0,1/12)$, take any $\kappa$ such that $6\epsilon<\kappa<1/2$. Then, there exist quantum channels $\Lambda_n:\mathcal{T}_1(\mathcal{H}^{\otimes n})\to \mathcal{T}_1(\mathcal{H}'^{\otimes \floor{rn}})$ such that
    \begin{align}
        \sup_{\|u\|\leq n^{\epsilon}}\left\|\Lambda_n\left(\rho_{u,n}\right) -  \rho'^{(r)}_{u,n}\right\|_1=O(n^{-1/2+\kappa}).
    \end{align}
\end{lemma}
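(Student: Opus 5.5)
The plan is a sandwich argument. Two QLAN maps connect each i.i.d. local model to a Gaussian shift model, and an exact $u$-independent channel connects the two Gaussian models. For the input I will invoke the unitary-orbit QLAN of Part III, which allows arbitrary rank and degenerate positive eigenvalues. It gives channels $T_n:\mathcal{T}_1(\mathcal{H}^{\otimes n})\to\mathcal{T}_1(\mathcal{K})$ and a Gaussian shift model $\{\mathcal{G}_u\}_{u\in\mathbb{R}^p}$ such that
\begin{align}
\sup_{\|u\|\leq n^{\epsilon}}\left\|T_n(\rho_{u,n})-\mathcal{G}_u\right\|_1=O(n^{-1/2+\kappa'}),
\end{align}
for a suitable $\kappa'<\kappa$. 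Here $\mathcal{G}_u$ is a product of a classical Gaussian shift, coming from the diagonal or degenerate-block part, and quantum Gaussian displacements, coming from off-diagonal pairs of eigenvalues and from support--kernel pairs. Every $f_q$-QFI of $\mathcal{G}_u$ coincides with $\mathcal{F}^{f_q}_\rho$. For the output I will use the same QLAN theorem in the reverse direction, with $m_n=\floor{rn}$ copies. It gives channels $S'_{m_n}$ and a model $\mathcal{G}'_v$ with $\|S'_{m_n}(\mathcal{G}'_v)-\rho'^{\otimes m_n}_{v\sqrt{n/m_n}\cdots}\|_1$ small uniformly. Since $\rho'^{(r)}_{u,n}$ rotates by $n^{-1/2}u=m_n^{-1/2}(\sqrt{m_n/n}\,u)$, the relevant output local parameter is $v=\sqrt{m_n/n}\,u\approx\sqrt r\,u$. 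The discrepancy between $\sqrt{m_n/n}$ and $\sqrt r$ is $O(n^{-1})$, so it shifts the Gaussian means by $O(n^{-1+\epsilon})$. That costs only $O(n^{-1+\epsilon})$ in trace distance, which is absorbed into the error.

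The middle step uses the Part II characterization of Gaussian convertibility. Assumption \eqref{eq:assumption_QFI_condition} says exactly that the $f_q$-QFI family of $\mathcal{G}_u$ dominates that of $\mathcal{G}'_{\sqrt r u}$ for every $q\in(0,1)$. This is because the Gaussian QFI family scales as $r$ under $u\mapsto\sqrt r u$, and we know $\mathcal{F}^{f_q}_{\mathcal{G}}=\mathcal{F}^{f_q}_\rho$. Part II then provides a channel $\Lambda$, independent of $u$, with $\Lambda(\mathcal{G}_u)=\mathcal{G}'_{\sqrt r u}$ for all $u\in\mathbb{R}^p$. I set $\Lambda_n\coloneqq S'_{m_n}\circ\Lambda\circ T_n$. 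The triangle inequality and contractivity of the trace norm under channels then give
\begin{align}
\left\|\Lambda_n(\rho_{u,n})-\rho'^{(r)}_{u,n}\right\|_1\leq \left\|T_n(\rho_{u,n})-\mathcal{G}_u\right\|_1+\left\|S'_{m_n}(\mathcal{G}'_{\sqrt{r}u})-\rho'^{(r)}_{u,n}\right\|_1+O(n^{-1+\epsilon}),
\end{align}
uniformly over $\|u\|\leq n^\epsilon$.

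The main obstacle is obtaining QLAN with the quantitative rate $O(n^{-1/2+\kappa})$ uniformly on the growing ball $\|u\|\leq n^\epsilon$. This is where the constraint $6\epsilon<\kappa$ enters. I would structure that step in four parts.
\begin{enumerate}[(i)]
\item Block-diagonalize $\rho^{\otimes n}$ via Schur--Weyl duality, and use a typical Young diagram concentrated within $n^{1/2+\delta}$ of $n\mu$.
\item Map each irreducible block to oscillator modes with a Holstein--Primakoff type isometry, so that off-diagonal generators become creation and annihilation operators up to errors polynomial in $n^{-1/2}\|u\|$ and in the boson number cutoff.
\item Map the degenerate blocks and the multiplicity space to a classical Gaussian via a central limit argument.
\item Treat support--kernel transitions, where the kernel modes give pure coherent-state displacements.
\end{enumerate}
The error terms then take the form $n^{-1/2}$ times powers of $\|u\|$ and of the cutoff. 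With $\|u\|\leq n^\epsilon$ and a cutoff of order $n^{2\epsilon}$, these accumulate to at most roughly $n^{-1/2+6\epsilon}$, which is $O(n^{-1/2+\kappa})$. Since this is stated as a Part III result, here I would simply invoke it with these explicit uniform bounds, and likewise invoke the Part II Gaussian convertibility theorem.
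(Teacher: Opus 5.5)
Your proof matches the paper's: compose the input QLAN channel $T_n$, the $u$-independent Gaussian channel from Part~II (justified by $\mathcal{F}^{f_q}_{\Phi^\rho,C}=\mathcal{F}^{f_q}_\rho$ and the scaling $\mathcal{F}^{f_q}_{\Phi^{\rho'},\sqrt{r}C'}=r\,\mathcal{F}^{f_q}_{\Phi^{\rho'},C'}$), and the rate-scaled output QLAN channel $S'_{m_n}$, then apply the triangle inequality and contractivity of the trace norm. Two small corrections are needed: first, for unitary-orbit models the limit is purely quantum, since the within-eigenspace part of the generator is asymptotically negligible by horizontal reduction and the multiplicity space is maximally mixed, so there is no classical Gaussian component and Part~III treats degeneracy through the ground space $E_\lambda$ rather than a central limit argument; second, you should dispose separately of the trivial cases $\rho=I/d$ (where the assumption forces $[X_i',\rho']=0$) and $\rho'=I'/d'$, which the QLAN theorem excludes and which the paper handles with the preparation channel $A\mapsto\Tr(A)\,\rho'^{\otimes\lfloor rn\rfloor}$.
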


We prove this lemma using the following two theorems, which will be proven in Parts~II and~III.
\begin{theorem}[Theorem~\ref{thm:QLAN_explicit_expression} in \protect\hyperlink{Part3}{Part~III}]\label{thm:RTA_sec_QLAN}
    Let $\rho$ be a quantum state on a finite-dimensional Hilbert space $\mathcal{H}$ of dimension $d$, and let $\{X_i\}_{i=1}^p$ be a set of Hermitian operators on $\mathcal{H}$. Consider a unitary model at rate $r>0$
    \begin{align}
         \rho_{u,n}^{(r)}\coloneqq \left(\exp\left(\ii n^{-1/2}\sum_{i=1}^p u_i X_i \right)\rho \exp\left(-\ii n^{-1/2}\sum_{i=1}^pu_i X_i \right)\right)^{\otimes \floor{rn}},\qquad u\in\mathbb{R}^p.
    \end{align}
    Using the eigenvalue decomposition $\rho=\sum_{k=1}^d\mu_k\ket{k}\bra{k}$, we define a set of pairs of labels 
    \begin{align}
        \mathcal{J}\coloneqq \{(k,l)\colon \mu_k>\mu_l\}.
    \end{align}
    If $\rho=I/d$, the unitary orbit of $\rho$ consists of a single point, so the QLAN statement is trivial. We therefore restrict attention to $\rho\neq I/d$; equivalently, $\mathcal{J}\neq\emptyset$. 
    Let $\mathcal{F}$ be the $|\mathcal{J}|$-mode bosonic Fock space, whose creation and annihilation operators satisfy the canonical commutation relation
    \begin{align}
        [a_{kl},a_{k'l'}^\dag]=\delta_{kk'}\delta_{ll'}I,\qquad [a_{kl},a_{k'l'}]=0,\qquad [a_{kl}^\dag ,a_{k'l'}^\dag]=0.
    \end{align}
    We define the Gaussian shift family consisting of $|\mathcal{J}|$ modes by
    \begin{align}
        \Phi_{z}^\rho\coloneqq D(z) \Phi^\rho D(z)^\dag, 
    \end{align}
    where the reference Gaussian state is given by
    \begin{align}
        \Phi^\rho \coloneqq \bigotimes_{(k,l)\in\mathcal{J}}\phi_{\beta_{kl}}^{(kl)},\qquad \phi_{\beta_{kl}}^{(kl)}\coloneqq \frac{e^{-\beta_{kl}a^{\dag}_{kl}a_{kl}}}{\Tr e^{-\beta_{kl}a^{\dag}_{kl}a_{kl}}},\qquad \beta_{kl}\coloneqq -\ln\left(\frac{\mu_l}{\mu_k}\right),
    \end{align}
    with the conventions $-\ln 0\coloneqq \infty$ and $\phi_{\infty}^{(kl)}=\ket{0}\bra{0}$, and the displacement operator is defined by
    \begin{align}
        D(z)\coloneqq \exp\left(\sum_{(k,l)\in\mathcal{J}}\left(z_{kl}a^\dag_{kl}-\overline{z_{kl}}a_{kl}\right)\right),\qquad z=(z_{kl})_{(k,l)\in\mathcal{J}}\in\mathbb{C}^{|\mathcal{J}|}.
    \end{align}
    We define a matrix $C\in\mathbb{C}^{|\mathcal{J}|\times p}$ whose matrix elements are given by
    \begin{align}
         C_{(k,l),i}\coloneqq \ii\sqrt{\mu_k-\mu_l}\braket{l|X_i|k}.
    \end{align}
    
    Fix $\epsilon\in[0,1/12)$, and take any $\kappa$ such that $6\epsilon<\kappa<1/2$. Then there exist quantum channels
    \begin{align}
        T_{n}^{(r)}:\mathcal{T}_1(\mathcal{H}^{\otimes \floor{rn}})\to \mathcal{T}_1(\mathcal{F}),\qquad S_{n}^{(r)}: \mathcal{T}_1(\mathcal{F})\to\mathcal{T}_1(\mathcal{H}^{\otimes \floor{rn}})
    \end{align}
    such that
    \begin{align}
        \sup_{\|u\|\leq n^\epsilon}\left\|T_{n}^{(r)}\left( \rho_{u,n}^{(r)}\right)-\Phi_{\sqrt{r}Cu}^\rho\right\|_1&=O(n^{-1/2+\kappa}),\qquad
        \sup_{\|u\|\leq n^\epsilon}\left\|\rho_{u,n}^{(r)}-S_{n}^{(r)}\left(\Phi_{\sqrt{r}Cu}^\rho\right)\right\|_1=O(n^{-1/2+\kappa}).
    \end{align}
\end{theorem}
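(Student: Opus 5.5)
The plan is to construct $T_n^{(r)}$ and $S_n^{(r)}$ from Schur--Weyl duality and a Holstein--Primakoff-type isometry, following the strategy of Guta--Kahn~\cite{gutaLocalAsymptoticNormality2006,kahnLocalAsymptoticNormality2009} and its low-rank extension~\cite{lahiryMinimaxEstimationLowrank2024}. The new ingredient is a block decomposition that isolates the degenerate directions. Since $\rho_{u,n}^{(r)}$ is simply $\rho_{u',m}$ with $m=\floor{rn}$ and $u'=\sqrt{n^{-1}m}\,u$, it suffices to treat $r=1$ and then rescale. This rescaling is what produces the factor $\sqrt r$ in $\Phi^\rho_{\sqrt r Cu}$, at the cost of an $O(n^{-1+\epsilon})$ mismatch in the displacement.

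\textbf{Step 1 (split the generator).} Write $\rho=\sum_{j=1}^s\nu_j P_j$ with distinct $\nu_j$. Decompose each generator $X_u\coloneqq\sum_iu_iX_i=D_u+O_u$, where $D_u\coloneqq\sum_jP_jX_uP_j$ commutes with $\rho$ and $O_u$ is block off-diagonal. By a Baker--Campbell--Hausdorff expansion,
\begin{align}
e^{\ii n^{-1/2}X_u}=e^{\ii n^{-1/2}(O_u+\frac{\ii}{2}n^{-1/2}[D_u,O_u]+\cdots)}\,e^{\ii n^{-1/2}D_u}W_n,
\end{align}
where the correction $W_n$ is a unitary with $\|W_n-I\|=O(n^{-3/2}\|u\|^3)$. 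The factor $e^{\ii n^{-1/2}D_u}$ leaves $\rho$ invariant and drops out. The residual per-copy error after $n$ copies is $O(n^{-1/2}\|u\|^3)=O(n^{-1/2+3\epsilon})$. The second-order off-diagonal term shifts the effective generator by $O(n^{-1/2}\|u\|^2)$, which changes the Gaussian displacement by only $O(n^{-1/2+2\epsilon})$; Gaussian shift states are Lipschitz in trace norm with respect to the displacement, so this is harmless. Thus, up to $O(n^{-1/2+\kappa})$ errors, we may assume every generator is block off-diagonal. Note that $C$ only involves matrix elements $\braket{l|X_i|k}$ with $\mu_k>\mu_l$, i.e.\ across distinct blocks, which is consistent with this reduction.

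\textbf{Step 2 (Schur--Weyl and the typical set).} Decompose $\mathcal{H}^{\otimes n}=\bigoplus_\lambda \mathcal{V}_\lambda\otimes\mathcal{K}_\lambda$ into $U(d)$ and $S_n$ irreducibles. The model takes the form $\bigoplus_\lambda p_{n,u}(\lambda)\,\rho_{u,\lambda}\otimes I/\dim\mathcal{K}_\lambda$, and the weights $p_{n,u}(\lambda)$ do not depend on $u$ because the model is a unitary orbit. By standard concentration, $\lambda$ lies within $n^{1/2+\epsilon'}$ of $n\mu$ with overwhelming probability. I then restrict $\mathcal{V}_\lambda$ to $U(m_1)\times\cdots\times U(m_s)$, where $m_j=\Rank P_j$; the branching rule is a Gelfand--Tsetlin-type decomposition. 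Inside the degenerate blocks the state $\rho_\lambda$ is a multiple of the identity on each branching component. Off-block rotations then act through an intertwiner of these block-symmetric structures.

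On each typical branch I would build an isometry $V_\lambda$ into the $|\mathcal{J}|$-mode Fock space. It sends the highest-weight-type vector $\ket{\lambda}$ to $\ket{0}$ and the lowered vectors $E_{lk}$ ($\mu_k>\mu_l$) to number states. Here the Holstein--Primakoff estimate $(\lambda_k-\lambda_l)^{-1/2}E_{lk}\approx a_{kl}$ holds on low-excitation subspaces, because $\lambda_k-\lambda_l\approx n(\mu_k-\mu_l)\gtrsim n$ for pairs in distinct blocks. Modes with $\mu_k=\mu_l$ are not introduced at all: they are absorbed into the block-symmetric multiplicity spaces, which carry no $u$-dependence after Step 1, and these spaces are traced out in $T_n$ and reprepared in a fixed state in $S_n$. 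The kernel directions, where $\mu_l=0$, give pure vacuum modes with $\beta=\infty$, as in~\cite{lahiryMinimaxEstimationLowrank2024}. The thermal factor $\mu_l/\mu_k$ arises from the $\lambda$-averaged Schur--Weyl weights, as in~\cite{kahnLocalAsymptoticNormality2009}.

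\textbf{Step 3 (error bookkeeping) and the main obstacle.} I would set $T_n\coloneqq\bigoplus_\lambda V_\lambda(\cdot)V_\lambda^\dag$ composed with the partial trace, and $S_n$ as the adjoint embedding followed by the reassembly channel. Three error sources then have to be bounded uniformly over $\|u\|\le n^\epsilon$: the atypical $\lambda$, the cutoff on the number of excitations (of order $n^{\delta}$), and the Holstein--Primakoff nonlinearity. Each contributes a polynomial factor of $n^{\epsilon}$, and balancing them should give the condition $6\epsilon<\kappa$. I expect the main obstacle to be the degenerate case in Step 2. There one must show that the off-block displacement generated by $O_u$ commutes, up to $O(n^{-1/2+\kappa})$, with the branching decomposition. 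The danger is that an off-block rotation can leak amplitude between different $U(m_j)$-branches with $\lambda_k-\lambda_l=O(\sqrt n)$ inside a block. Such leakage would look like the excitation of non-Gaussian intra-block modes. My first attempt would be to prove that, to first order, $O_u$ acts as a tensor of the form $\sum_{k\in\text{block }j,\,l\in\text{block }j'}$, which is covariant under $\prod_j U(m_j)$. The multiplicity-space state would then be unchanged except for an $O(\|u\|^2 n^{-1/2})$ correction, controlled by a Lipschitz bound on the $U(m_j)$-twirl.
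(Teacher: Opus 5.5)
Your outer reductions are sound and match the paper. These are the rate rescaling $m_n^{-1/2}K(\sqrt{r_n}Z)=n^{-1/2}K(Z)$ with a Lipschitz correction from $\Phi_{\sqrt{r_n}Z}$ to $\Phi_{\sqrt r Z}$, the $u$-independent Schur--Weyl weights, typicality, and vacuum modes for kernel pairs. Your removal of the block-diagonal generator also works. Your second-order BCH is heavier than needed: the paper uses only $\|e^{A+B}-e^Ae^B\|\le\|A\|\|B\|$ per copy together with $\sqrt{1-F^n}\le\sqrt n\sqrt{1-F}$, which gives $O(\|u\|^2/\sqrt n)$ directly.

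\textbf{The gap.} The degenerate case, which you flag but do not resolve, is exactly the new content of the statement. Your claim that the intra-block multiplicity degrees of freedom ``carry no $u$-dependence after Step 1'' is what has to be proved, and it does not follow from Step 1. Even with purely off-block generators, mixed commutators such as $[E_{kl},E_{lk'}]=E_{kk'}$, with $k\neq k'$ in the same eigenspace, produce intra-block operators. These enter the commutators of your normalized ladder operators with a prefactor of order $1/n$, while $d\pi_\lambda(E_{kk'})$ can have norm of order $n$ on $H_\lambda$. Without a sharper bound, the approximate CCR, the intertwining of $n^{-1/2}d\pi_\lambda(K(Z))$ with $a^\dagger(Z)-a(Z)$, and the inertness of the multiplicity factor all fail at order one. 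Covariance of $O_u$ under $\prod_jU(m_j)$, or a twirl, does not bound the size of these operators on the relevant subspace.

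Your vacuum structure is also not right. Sending a single highest-weight vector to $\ket{0}$, or building one isometry per $\prod_jU(m_j)$-branch, does not work, because inter-block lowering changes the block occupations, so Fock excitations move \emph{between} branches. The vacuum must be the whole ground component $E_\lambda=U(\mathfrak l)v_\lambda$, the span of the stabilizer orbit of $v_\lambda$. The embedding is then $\mathcal F_{\le L_n}\otimes E_\lambda\to H_\lambda$.

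\textbf{The missing idea} is a weight-spread estimate on low-excitation sectors.
\begin{enumerate}
\item The $\mathfrak{gl}(V_a)$-weights of $E_\lambda$ are majorized by $\lambda^{(a)}$ and closed under intra-block permutations. For typical $\lambda$ they therefore lie within $2n^\alpha$ of the block average $\bar\lambda_a$.
\item Each inter-block lowering shifts these weights by at most one. On $F_kH_\lambda=U_{\le k}(\mathfrak n_-)E_\lambda$ the intra-block spread is therefore $O(n^\alpha+k)$.
\item Decomposing $F_kH_\lambda$ into $\mathfrak{sl}_2$-multiplets for each intra-block pair $(i,j)$ gives $\|d\pi_\lambda(E_{ij})|_{F_kH_\lambda}\|\le\tfrac52(n^\alpha+k)$. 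Hence every block-diagonal $L$ acts on $F_kH_\lambda$ as the scalar $\chi_\lambda(L)$ up to $O(\|L\|(n^\alpha+k))$.
\item With the block-averaged normalization $s_{ab}(\lambda)=\bar\lambda_a-\bar\lambda_b$, the CCR defect becomes $O(n^{\alpha-1}+k/n)$. This drives the near-isometry of the excitation map and the covariance $U_{\lambda,n}(X)\xi\approx W_\lambda(\ket{\hat X}\otimes\xi)$ for every $\xi\in E_\lambda$, hence a maximally mixed, $u$-independent state on $E_\lambda$.
\item For mixed Gaussians, the stabilizer factor $h\in H$ in $e^Ae^B=e^Mh$ is absorbed by the $H$-invariance of $P_{E_\lambda}/\dim E_\lambda$.
\end{enumerate}

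Finally, the thermal factors do not come from averaging over $\lambda$. They arise within each block, from $\pi_\lambda(\rho_0)|_{E_\lambda}=c_\lambda$ and $\pi_\lambda(\rho_0)\mathsf{C}_\lambda(x)=(\zeta_b/\zeta_a)\mathsf{C}_\lambda(x)\pi_\lambda(\rho_0)$.
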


\begin{theorem}[Conditions~(i) and~(ii) of Theorem~\ref{thm:convertibility_gaussian_shift} in \protect\hyperlink{Part2}{Part~II}]\label{thm:RTA_Gaussian_convertibility}
    Let $\Phi$ and $\Phi'$ be $m$- and $m'$-mode Gaussian states with covariance matrices $\sigma$ and $\sigma'$, and symplectic forms $\Omega$ and $\Omega'$, respectively. Using two complex matrices $C\in\mathbb{C}^{m\times p}$ and $C'\in\mathbb{C}^{m'\times p}$, define the two $p$-parameter Gaussian shift models $\mathcal{G}(\Phi,C)\coloneqq \{\Phi_{Cu}\}_{u\in\mathbb{R}^p}$ and $\mathcal{G}(\Phi',C')\coloneqq \{\Phi'_{C'u}\}_{u\in\mathbb{R}^p}$. The following conditions are equivalent:
    \begin{enumerate}[(i)]
        \item %The Gaussian shift model $\mathcal{G}(\Phi,C)$ is convertible to $\mathcal{G}(\Phi',C')$ without error. That is, 
        There exists a quantum channel $\mathcal{E}$, independent of $u\in\mathbb{R}^p$, such that $\mathcal{E}(\Phi_{Cu})=\Phi'_{C'u}$ for all $u\in\mathbb{R}^p$.
        \item For all $q\in(0,1)$, $\mathcal{F}^{f_q}_{\Phi,C}\geq \mathcal{F}^{f_q}_{\Phi',C'}$, where the two matrices denote the $f_q$-QFI matrices for $\mathcal{G}(\Phi,C)$ and $\mathcal{G}(\Phi',C')$.
            % \begin{align}
                
            % \end{align}
            % Equivalently, $\forall t\in(-1,1)$, 
            % \begin{align}
            %     K^\top \left(\sigma-\ii t\Omega\right)^{-1}K\geq K^{\prime\top} \left(\sigma'-\ii t\Omega'\right)^{-1}K',
            % \end{align}
            % where $K\in\mathbb{R}^{2m\times p}$ and $K'\in\mathbb{R}^{2m'\times p}$ are defined from $C$ and $C'$ via Eq.~\eqref{eq:definition_K_from_C}. 
        % \item There exists a real matrix $X\in \mathbb{R}^{2m'\times 2m}$ such that 
        % \begin{align}
        %     XK=K',\qquad \sigma'+\ii\Omega'\geq X(\sigma +\ii \Omega)X^\top.
        % \end{align}
        % \item There exists a Gaussian channel $\mathcal{E}_{\mathrm{G}}$, independent of $u\in\mathbb{R}^p$, such that $\mathcal{E}_{\mathrm{G}}(\Phi_{Cu})=\Phi'_{C'u}$ for all $u\in\mathbb{R}^p$.
    \end{enumerate}
\end{theorem}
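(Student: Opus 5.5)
The plan is to prove the two implications separately. The direction (i)$\Rightarrow$(ii) is a monotonicity statement. The direction (ii)$\Rightarrow$(i) will be reduced to a finite-dimensional matrix problem about Gaussian channels, which is then solved by semidefinite duality.

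\emph{(i)$\Rightarrow$(ii).} I would apply the data-processing inequality of Proposition~\ref{prop:fq_QFI_monotonicity_general} to the smooth family $u\mapsto\Phi_{Cu}$ and the $u$-independent channel $\mathcal{E}$. Two points need care, since the Fock space is infinite-dimensional.
\begin{itemize}
\item Lemma~\ref{lem:fq_norm_monotonicity_general} was proven for finite dimensions. Its proof uses only the Kraus/Schwarz inequality and the variational formula~\eqref{eq:norm_variational_formula}, so it should carry over to trace-class operators with minor changes.
\item Alternatively, I would truncate photon number, apply the finite-dimensional result, and remove the cutoff by lower semicontinuity of the variational expression.
\end{itemize}

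\emph{Explicit QFI of Gaussian shifts.} Next I would compute the $f_q$-QFI of a Gaussian shift model in closed form. Write $K\in\mathbb{R}^{2m\times p}$ for the real quadrature form of $C$. Since the tangent $\partial_u\Phi_{Cu}$ is a commutator with a linear combination of quadratures, I expect the logarithmic derivative $(\mathbb{J}^{f_q}_\Phi)^{+}\partial_i\Phi$ to also be linear in quadratures. Computing its coefficients by Williamson-diagonalizing $\sigma$ mode by mode should give
\begin{align}
\mathcal{F}^{f_q}_{\Phi,C}=K^\top(\sigma-\ii t\Omega)^{-1}K,\qquad t=2q-1\in(-1,1),
\end{align}
up to a fixed normalization constant. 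Pure modes ($\beta=\infty$) are handled by restricting to the range of $\mathbb{J}^{f_q}_\Phi$. With this formula, condition (ii) becomes the family of matrix inequalities
\begin{align}
K^\top(\sigma-\ii t\Omega)^{-1}K\geq K'^\top(\sigma'-\ii t\Omega')^{-1}K' \qquad \text{for all } t\in(-1,1).
\end{align}

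\emph{(ii)$\Rightarrow$(i).} The goal is to produce a real matrix $X$ with
\begin{align}
XK=K',\qquad \sigma'+\ii\Omega'\geq X(\sigma+\ii\Omega)X^\top .
\end{align}
Given such $X$, the Gaussian channel with linear part $X$ and additive noise $Y=\sigma'-X\sigma X^\top$ is valid. Indeed, its complete-positivity condition $Y+\ii\Omega'-\ii X\Omega X^\top\geq0$ is exactly the displayed inequality. This channel maps $\Phi_{Cu}$ to $\Phi'_{C'u}$ for every $u$, because its linear part carries the displacement $Ku$ to $K'u$.

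Existence of $X$ is a semidefinite feasibility problem. I would pass to the dual via a Farkas/Hahn--Banach argument. Infeasibility would yield a certificate consisting of a PSD Hermitian matrix $W$ on the output phase space and a multiplier $\Lambda$ for the linear constraint $XK=K'$, with the following properties:
\begin{itemize}
\item $\Tr W(\sigma'+\ii\Omega')$ is strictly smaller than the best value achievable by $\Tr W X(\sigma+\ii\Omega)X^\top$;
\item the latter optimization over $X$ is a quadratic minimization that can be solved explicitly in terms of $K$ and $K'$.
\end{itemize}

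The main obstacle is converting this certificate into the violation of a \emph{single} inequality in the $t$-family. The single-$t$ conditions correspond to the extreme PSD directions $\sigma\pm\ii t\Omega$, and a general $W$ mixes them.

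My first attempt would work mode by mode after simultaneous symplectic normal forms. I would show that the cone of relevant dual certificates is generated by rank-one forms whose values on $\sigma+\ii s\Omega$, $s\in[-1,1]$, interpolate linearly in $s$. After optimizing over $X$, the dual objective should then become a concave function of $t$ whose minimum is attained at some $t\in(-1,1)$, or in the endpoint limit handled by continuity. That minimizing $t$ would give the violated inequality in the $t$-family above and the required contradiction. If this direct route stalls, a fallback is a minimax theorem (e.g.\ Sion) applied to $\inf_X\sup_{t}$ of the Schur-complement form $K'^\top(\sigma'-\ii t\Omega')^{-1}K'-\ldots$, using convexity in $X$ and quasi-concavity in $t$.
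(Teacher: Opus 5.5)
\paragraph{The gap: producing a single real $X$.}
Several parts of your proposal agree with the paper:
\begin{itemize}
\item your (i)$\Rightarrow$(ii) sketch;
\item the closed form of the QFI (the paper obtains $\mathcal{F}^{f_q}_{\Phi,C}=2K^\top(\sigma-\ii t\Omega)^{-1}K$);
\item the final step from a real $X$ with $XK=K'$ and $\sigma'+\ii\Omega'\geq X(\sigma+\ii\Omega)X^\top$ to a Gaussian channel.
\end{itemize}
For the infinite-dimensional monotonicity, the paper restricts the variational formula to bounded $B$ and approximates the unbounded, quadrature-linear $L_z$ by spectral truncations. That is the ``minor change'' you need.

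The genuine gap is the existence of $X$, which is the whole content of (ii)$\Rightarrow$(i). You correctly locate the obstacle: a dual certificate mixes different values of $t$. But none of your proposed resolutions contains the mechanism that separates them.
\begin{itemize}
\item That a functional's values on $\sigma+\ii s\Omega$ are affine in $s$ is automatic.
\item Every PSD certificate is a sum of rank-one forms, so ``generated by rank-one forms'' says nothing.
\item Symplectic normal forms of $\sigma$ and $\sigma'$ do not decouple the problem, because $K$ and $K'$ couple all modes.
\item A certificate is not a function of $t$, so it does not single out one ``minimizing $t$''.
\item The Sion fallback never specifies a scalar function of $(X,t)$ whose saddle value encodes the matrix inequality, nor why it would be quasi-concave in $t$.
\end{itemize}

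\paragraph{The paper's key idea.}
Pair the constraint with its complex conjugate. That is, seek $Y$ with $YK=K'$, $P_2\geq YP_1Y^\dag$ and $Q_2\geq YQ_1Y^\dag$, where $P_i=V_i+\ii s\Omega_i$ and $Q_i=V_i-\ii s\Omega_i$.
\begin{itemize}
\item A separating functional is then a pair $R,S\geq 0$.
\item Congruence by $(R+S)^{-1/2}$ diagonalizes both with the \emph{same} vectors: $R=\sum_j\lambda_j\ket{w_j}\bra{w_j}$ and $S=\sum_j(1-\lambda_j)\ket{w_j}\bra{w_j}$, with a dual family $\ket{\tilde w_j}$.
\item The objective becomes $\sum_j\braket{w_j|YH_1(\lambda_j)Y^\dag|w_j}$ with $H_1(\lambda)=V_1+\ii(2\lambda-1)s\Omega_1$. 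Both it and the constraint $YK=K'$ depend on $Y$ only through the rows $\bra{w_j}Y$.
\item The problem therefore splits into independent single-$t_j$ problems. Each is solved by the Douglas-lemma witness $Y_{\lambda_j}$ supplied by (ii) at $t_j$.
\item The glued matrix $Y=\sum_j\ket{\tilde w_j}\bra{w_j}Y_{\lambda_j}$ defeats the certificate.
\end{itemize}
So (ii) is used at finitely many $t_j$ simultaneously, not at one distinguished $t$.

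\paragraph{Two further missing steps.}
First, the glued witness is complex. The paper passes to $X_s=(Y_s+\overline{Y_s})/2$. This stays feasible because the pair of constraints is conjugation-symmetric and $Y\mapsto YHY^\dag$ is matrix-convex.

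Second, and more seriously, the target inequality sits at the endpoint $t=\pm1$. There $\sigma\pm\ii\Omega$ is singular whenever $\Phi$ has pure modes, which is exactly the kernel modes in the application. Meanwhile (ii) is only assumed for $|t|<1$, and $(\sigma-\ii t\Omega)^{-1}$ can blow up as $|t|\to1$. The single-$t$ Douglas step also needs invertibility. So ``handled by continuity'' does not work at the level of the inequalities. In your $s=1$ formulation, the certificate's $t_j$ can land exactly at $\pm1$, where you have no hypothesis. The paper avoids this by working at $s<1$, where everything is positive definite. It then uses $V_2\geq X_sV_1X_s^\top$ with $V_1=\sigma>0$ to bound $\|X_s\|$ uniformly, and extracts a convergent subsequence as $s\to1$.
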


Before proving Lemma~\ref{lemma:local_model_convertibility}, we calculate $f_q$-QFI for the limit Gaussian model. 
\begin{lemma}[$f_q$-QFI for the limit Gaussian model]\label{lem:limit_model_QFI_conservation}
    We retain the notation of Theorem~\ref{thm:RTA_sec_QLAN} with $r=1$. Then for each $q\in(0,1)$, $\mathcal{F}_{\Phi^{\rho},C}^{f_q}=\mathcal{F}_\rho^{f_{q}}$. Moreover, for $r\geq 0$, $\mathcal{F}_{\Phi^{\rho},\sqrt{r}C}^{f_q}=r\mathcal{F}_{\Phi^{\rho},C}^{f_q}$. 
\end{lemma}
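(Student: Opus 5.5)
We compute $\mathcal{F}^{f_q}_{\Phi^\rho,C}$ directly from the structure of the Gaussian shift model and compare it term by term with Eq.~\eqref{eq:QFI_RTA}. Since $\Phi^\rho=\bigotimes_{(k,l)\in\mathcal{J}}\phi^{(kl)}_{\beta_{kl}}$ and $D(z)$ factorizes over modes, the model $u\mapsto\Phi^\rho_{Cu}$ is a tensor product of single-mode displaced thermal (or vacuum) models sharing the parameter $u$. By the additivity of the $f_q$-QFI for product models (the same argument as in Eq.~\eqref{eq:additivitiy_Ln}: the cross terms vanish because each single-mode tangent is traceless), it suffices to compute the contribution of one mode $(k,l)$ and sum over $\mathcal{J}$.

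For a single mode with thermal state $\phi_\beta$, where $e^{-\beta}=\mu_l/\mu_k=:t\in[0,1)$, the tangent at $u=0$ in direction $v\in\mathbb{C}^p$ is $[w a^\dag-\bar w a,\phi_\beta]$ with $w=\sum_i C_{(k,l),i}v_i$. I would make the following ansatz for the logarithmic derivative: $L=\alpha a^\dag+\gamma a$. One has $a^\dag\phi_\beta=e^{\beta}\phi_\beta a^\dag$ and $a\phi_\beta=e^{-\beta}\phi_\beta a$, so $\mathbb{J}^{f_q}_{\phi_\beta}(a^\dag)=(q e^{\beta}+(1-q))\phi_\beta a^\dag$ and similarly for $a$. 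Also $[a^\dag,\phi_\beta]=(1-e^{\beta})\phi_\beta a^\dag$ and $[a,\phi_\beta]=(1-e^{-\beta})\phi_\beta a$. Matching coefficients gives explicit $\alpha,\gamma$. The norm is then $\langle L,\mathbb{J}L\rangle$, which follows from $\Tr(\phi_\beta a a^\dag)=1/(1-t)$ and $\Tr(\phi_\beta a^\dag a)=t/(1-t)$. After simplifying, I expect the one-mode contribution to $\bar v^{\top}\mathcal{F}v$ to equal
\[
|w|^2\frac{(1-t)}{q+(1-q)t}\cdot\frac{1}{\mu_k}\cdot\frac{\mu_k}{1}\quad\text{plus the conjugate-ordering term with } q\leftrightarrow 1-q,
\]
that is, $|w|^2(\mu_k-\mu_l)^{-1}$ multiplied by $(\mu_k-\mu_l)^2\big[\frac{1}{q\mu_k+(1-q)\mu_l}\big]$-type factors for the two orderings. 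Substituting $|w|^2=(\mu_k-\mu_l)|\sum_i v_i\langle l|X_i|k\rangle|^2$ then reproduces exactly the $(k,l)$ and $(l,k)$ terms of Eq.~\eqref{eq:QFI_RTA}. To obtain the full matrix rather than only its quadratic form, I would write the bilinear form with $\overline{v_i}v_j$ and track $\langle l|X_i|k\rangle\langle k|X_j|l\rangle$ versus its transpose. Pairs with $\mu_k=\mu_l$ contribute zero on both sides and are excluded from $\mathcal{J}$, so the sums agree.

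The main obstacle is bookkeeping. First, the $a$ and $a^\dag$ components must land on the correct $(k,l)$ versus $(l,k)$ denominators with the right $q$ versus $1-q$ weighting, so that the non-symmetric matrix is matched and not only its real part. Second, the vacuum case $t=0$ ($\mu_l=0$) must be handled. There I would either take the limit $t\to0$ or redo the computation with the Moore--Penrose inverse: $a\phi_\infty=0$, so only the $a^\dag$ component survives, with denominator $q\mu_k$, which matches the kernel terms of Eq.~\eqref{eq:QFI_RTA}.

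Finally, the scaling claim is immediate. Replacing $C$ by $\sqrt r C$ multiplies every tangent by $\sqrt r$, and the QFI is quadratic in the tangents, so $\mathcal{F}^{f_q}_{\Phi^\rho,\sqrt rC}=r\mathcal{F}^{f_q}_{\Phi^\rho,C}$.
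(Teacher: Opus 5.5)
Your route works in principle and uses the same idea as the paper, but in different coordinates. The paper does not compute mode by mode with ladder operators. It invokes the general formula $\mathcal{F}^{f_q}_{\Phi,C}=2K^\top(\sigma-\ii t\Omega)^{-1}K$, $t=2q-1$, of Eq.~\eqref{eq:QFI_formula_Fock}, which was derived in Part~II from a linear quadrature ansatz $L_i=\ell^{(i)\top}(r-Ku)$ and injectivity of the Fourier--Weyl transform on trace-class operators. With $\sigma=\bigoplus_{(k,l)\in\mathcal{J}}\nu_{kl}I_2$ and $\nu_{kl}=(\mu_k+\mu_l)/(\mu_k-\mu_l)$, the lemma reduces to inverting the $2\times2$ blocks $\nu_{kl}I_2-\ii t\Omega_1$. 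Additivity over modes is then automatic from block-diagonality, vacuum modes are simply $\nu_{kl}=1$, and the trace-class issues behind manipulating unbounded $a,a^\dagger$ are already handled. Your ansatz $L=\alpha a^\dagger+\gamma a$ is the same linear ansatz in ladder form, so it computes the object defined in Eq.~\eqref{eq:QFI_Fock}. It is more elementary for thermal modes, but you must check mode additivity and those domain facts yourself. More importantly, the bookkeeping you flagged as the main obstacle is where your sketch actually goes wrong, in three places.

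\emph{First}, with the paper's convention $\mathbb{J}^{f_q}_\rho(A)=q\rho A+(1-q)A\rho$, one has $\mathbb{J}^{f_q}_{\phi_\beta}(a^\dagger)=(q+(1-q)e^{\beta})\phi_\beta a^\dagger$ and $\mathbb{J}^{f_q}_{\phi_\beta}(a)=(q+(1-q)e^{-\beta})\phi_\beta a$. Your coefficients interchange $q$ and $1-q$, which is why your expected $a^\dagger$-term is $(1-t)/(q+(1-q)t)$. Carried through, this yields $(\mathcal{F}^{f_q})^\top=\mathcal{F}^{f_{1-q}}$ instead of $\mathcal{F}^{f_q}$. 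A check on real directions cannot catch this, because $\Re\mathcal{F}^{f_q}$ is invariant under $q\leftrightarrow 1-q$.

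\emph{Second}, for complex $v$ the tangent is $[wa^\dagger-w'a,\phi_\beta]$ with $w'=\sum_i v_i\overline{C_{(k,l),i}}$, not $\bar w$. Your $|w|^2$ computation therefore only determines $\Re\mathcal{F}$. Solve per parameter instead: take $L_i=\alpha_i a^\dagger+\gamma_i a$ with $\alpha_i\propto C_{(k,l),i}$ and $\gamma_i\propto\overline{C_{(k,l),i}}$, and use $\Tr(\phi_\beta a^\dagger a)=t/(1-t)$ and $\Tr(\phi_\beta aa^\dagger)=1/(1-t)$. The mode contribution is then
\[
\Tr\bigl(L_i^\dagger\mathbb{J}^{f_q}_{\phi_\beta}(L_j)\bigr)=\frac{(\mu_k-\mu_l)\,\overline{C_{(k,l),i}}C_{(k,l),j}}{(1-q)\mu_k+q\mu_l}+\frac{(\mu_k-\mu_l)\,C_{(k,l),i}\overline{C_{(k,l),j}}}{q\mu_k+(1-q)\mu_l},
\]
and these are exactly the $(l,k)$ and $(k,l)$ summands of Eq.~\eqref{eq:QFI_RTA}.

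\emph{Third}, your description of the vacuum case is wrong. Although $a\phi_\infty=0$, one has $\phi_\infty a=\ket{0}\bra{1}\neq0$. The tangent is $w\ket{1}\bra{0}+w'\ket{0}\bra{1}$, and $L=\frac{w}{1-q}\ket{1}\bra{0}+\frac{w'}{q}\ket{0}\bra{1}$, so both components survive. They give both kernel summands of Eq.~\eqref{eq:QFI_RTA}, with denominators $(1-q)\mu_k$ and $q\mu_k$, each present for every $q\in(0,1)$. Keeping only the $a^\dagger$ component loses half of the support--kernel contribution. The simplest repair is to write the exchange relation as $\phi_\beta a^\dagger=t\,a^\dagger\phi_\beta$ with $t=e^{-\beta}$. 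Then $\mathbb{J}^{f_q}_{\phi_\beta}(a^\dagger)=(qt+1-q)a^\dagger\phi_\beta$, and the general computation holds verbatim at $t=0$, with no limit to justify.

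Your scaling argument for $\sqrt{r}C$ is fine and matches the paper's.
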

\begin{proof}
    The covariance matrix of the $(k,l)\in\mathcal{J}$ mode of $\Phi^\rho$ is 
    \begin{align}
        \sigma_{kl}=\nu_{kl}I_2,\qquad \nu_{kl}\coloneqq \frac{1+e^{-\beta_{kl}}}{1-e^{-\beta_{kl}}}=\frac{\mu_k+\mu_l}{\mu_k-\mu_l},
    \end{align}
    where we used the standard relation of the symplectic eigenvalue $\nu_{kl}$ and the inverse temperature (see, e.g., Eq.~(3.57) of Ref.~\cite{serafiniQuantumContinuousVariables2017}). As shown in Eq.~\eqref{eq:QFI_formula_Fock}, the $f_q$-QFI for $\mathcal{G}(\Phi^{\rho},C)$ is given by
    \begin{align}
        \left(\mathcal{F}_{\Phi^{\rho},C}^{f_q}\right)_{ij}=2\sum_{(k,l)\in\mathcal{J}}(k_{i}^{kl})^\top (\nu_{kl}I_2-\ii t \Omega_1)^{-1}k_{j}^{kl},
    \end{align}
    where
    \begin{align}
        \Omega_1\coloneqq 
        \begin{pmatrix}
            0&1\\
            -1&0
        \end{pmatrix},\qquad k_i^{kl}\coloneqq \sqrt{2}
        \begin{pmatrix}
            \Re C_{(k,l),i}\\
            \Im C_{(k,l),i}
        \end{pmatrix},\qquad t\coloneqq 2q-1.
    \end{align}
    Since
    \begin{align}
        (\nu I_2-\ii t \Omega_1)^{-1}=\frac{\nu I_2+\ii t \Omega_1}{\nu^2-t^2},
    \end{align}
    we obtain
    \begin{align}
       (k_{i}^{kl})^\top (\nu_{kl}I_2-\ii t \Omega_1)^{-1}k_{j}^{kl}=\frac{\overline{C_{(k,l),i}}C_{(k,l),j}}{\nu_{kl}-t}+\frac{C_{(k,l),i}\overline{C_{(k,l),j}}}{\nu_{kl}+t}=(\mu_k-\mu_l)\left(\frac{\braket{k|X_i|l}\braket{l|X_j|k}}{\nu_{kl}-t}+\frac{\braket{l|X_i|k}\braket{k|X_j|l}}{\nu_{kl}+t}\right).
    \end{align}
    Since
    \begin{align}
        (\nu_{kl}-t)(\mu_k-\mu_l)=2((1-q)\mu_k+q\mu_l),\qquad (\nu_{kl}+t)(\mu_k-\mu_l)=2((1-q)\mu_l+q\mu_k),
    \end{align}
    we obtain
    \begin{align}
        \left(\mathcal{F}_{\Phi^{\rho},C}^{f_q}\right)_{ij}&=\sum_{(k,l)\in\mathcal{J}}\frac{(\mu_k-\mu_l)^2}{(1-q)\mu_k+q\mu_l}\braket{k|X_i|l}\braket{l|X_j|k}+\sum_{(k,l)\in\mathcal{J}}\frac{(\mu_k-\mu_l)^2}{(1-q)\mu_l+q\mu_k}\braket{l|X_i|k}\braket{k|X_j|l}\\
        &=\sum_{\substack{k,l\\(1-q)\mu_l+q \mu_k>0}}\frac{(\mu_k-\mu_l)^2}{(1-q)\mu_l+q\mu_k}\braket{l|X_i|k}\braket{k|X_j|l}=\left(\mathcal{F}_\rho^{f_q}\right)_{ij}.
    \end{align}
    Moreover, rescaling $C\mapsto\sqrt{r}C$ is equivalent to rescaling $X_i\mapsto \sqrt{r}X_i$, and hence $\mathcal{F}_{\Phi^{\rho},\sqrt{r}C}^{f_q}=r\mathcal{F}_{\Phi^{\rho},C}^{f_q}$.
\end{proof}

\begin{proof}[Proof of Lemma~\ref{lemma:local_model_convertibility}]
    We set $d\coloneqq\dim\mathcal{H}$ and $d'\coloneqq\dim\mathcal{H}'$. We first deal with the trivial cases. If $\rho'=I'/d'$, then $\rho'^{(r)}_{u,n}$ is independent of $u$, and therefore the preparation channel $A\mapsto \Tr(A)\rho'^{\otimes \floor{rn}}$ proves the claim with zero conversion error. If $\rho=I/d$, then $\mathcal{F}_{\rho}^{f_q}=0$. Since $r>0$, this implies $\mathcal{F}_{\rho'}^{f_q}=0$. Consequently, we have $[X_i',\rho']=0$ for all $i$, thus $\rho'^{(r)}_{u,n}$ is again independent of $u$, and hence the same preparation channel proves the claim. Therefore, we assume $\rho\neq I/d$ and $\rho'\neq I/d'$ below.

    Let $(\Phi^\rho,C)$ and $(\Phi^{\rho'},C')$ be the Gaussian models associated with the input and output models in Theorem~\ref{thm:RTA_sec_QLAN}, and denote their Fock spaces $\mathcal{F}_\rho$ and $\mathcal{F}_{\rho'}$. From Lemma~\ref{lem:limit_model_QFI_conservation}, the assumption of Eq.~\eqref{eq:assumption_QFI_condition} is equivalent to
    \begin{align}
        \forall q\in (0,1),\qquad \mathcal{F}^{f_q}_{\Phi^\rho,C}\geq r\mathcal{F}^{f_q}_{\Phi^{\rho'},C'}=\mathcal{F}^{f_q}_{\Phi^{\rho'},\sqrt{r}C'}.
    \end{align}
    By Theorem~\ref{thm:RTA_Gaussian_convertibility}, there exists a quantum channel $\tilde{\mathcal{E}}:\mathcal{T}_1(\mathcal{F}_\rho)\to \mathcal{T}_1(\mathcal{F}_{\rho'})$ such that
    \begin{align}
        \tilde{\mathcal{E}}(\Phi^{\rho}_{Cu})=\Phi^{\rho'}_{\sqrt{r}C'u},\qquad\forall u\in\mathbb{R}^{p}.  
    \end{align}

    Applying Theorem~\ref{thm:RTA_sec_QLAN} to $\rho_{u,n}$ and $\rho'^{(r)}_{u,n}$, there exist quantum channels $T_n:\mathcal{T}_1(\mathcal{H}^{\otimes n})\to \mathcal{T}_1(\mathcal{F}_\rho)$ and $S_n^{(r)}: \mathcal{T}_1(\mathcal{F}_{\rho'})\to \mathcal{T}_1(\mathcal{H}'^{\otimes \floor{rn}})$ such that
    \begin{align}
        \sup_{\|u\|\leq n^\epsilon}\left\|T_n(\rho_{u,n})-\Phi^{\rho}_{Cu}\right\|_1=O(n^{-1/2+\kappa}),\qquad \sup_{\|u\|\leq n^\epsilon}\left\|\rho'^{(r)}_{u,n}-S_n^{(r)}(\Phi^{\rho'}_{\sqrt{r}C'u})\right\|_1=O(n^{-1/2+\kappa}).
    \end{align}
    Therefore, for the channel $\Lambda_n\coloneqq S_n^{(r)}\circ\tilde{\mathcal{E}}\circ T_n$, by the triangle inequality and trace-norm contractivity, we obtain
    \begin{align}
        \sup_{\|u\|\leq n^\epsilon}\left\|\Lambda_n\left(\rho_{u,n}\right) -  \rho'^{(r)}_{u,n}\right\|_1\leq \sup_{\|u\|\leq n^\epsilon}\left(\left\|T_n(\rho_{u,n})-\Phi^{\rho}_{Cu}\right\|_1+\left\|\rho'^{(r)}_{u,n}-S_n^{(r)}(\Phi^{\rho'}_{\sqrt{r}C'u})\right\|_1\right)=O(n^{-1/2+\kappa}).
    \end{align}
\end{proof}

We recall the parameter-localization result used in the estimate-and-convert argument of Ref.~\cite{yamaguchi_QuantumGeometricTensorDeterminesPureState_2026}. We fix a basis $\{A_i\}_{i=1}^p$ of the Lie algebra of $G$, where $p\coloneqq \dim G$. We write $g(\theta)\coloneqq \exp(\ii\sum_{i=1}^p\theta_iA_i)$. Then the following holds.
\begin{lemma}[Lemma~D2 of Ref.~\cite{yamaguchi_QuantumGeometricTensorDeterminesPureState_2026}]\label{lem:POVM}
    Let $\rho$ be an arbitrary state on a finite-dimensional Hilbert space, let $G$ be a compact Lie group, and fix any $\eta\in(0,1/2)$. We set
    \begin{align}
        k_n\coloneqq \ceil{n^{1-\eta}}.
    \end{align}
    Then there exist constants $c_1,c_2>0$ and POVMs $M_n(\dd \hat{g})$ on $\mathcal{H}^{\otimes k_n}$ such that, with
    \begin{align}
        G_{\mathrm{succ}}^{(n)}(g)\coloneqq \left\{\hat{g}\in G\colon \exists \theta\in\mathbb{R}^{\dim G},\, \|\theta\|\leq n^{-1/2+\eta},\, \mathcal{U}_{g(\theta)}\circ \mathcal{U}_{\hat{g}}(\rho)=\mathcal{U}_g(\rho)\right\}
    \end{align}
    we have
    \begin{align}
        \inf_{g\in G}\int_{G_{\mathrm{succ}}^{(n)}(g)}\Tr\left(M_n(\dd \hat{g})\mathcal{U}_g(\rho)^{\otimes k_n}\right)\geq 1-c_1e^{-c_2 n^\eta}\label{eq:rough_estimate_succ_bound}
    \end{align}
    for all sufficiently large $n$.
\end{lemma}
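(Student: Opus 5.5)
The plan is to build $M_n$ from state tomography on the $k_n$ copies, followed by a minimum-distance fit onto the orbit $\{\mathcal{U}_{\hat g}(\rho)\}_{\hat g\in G}$. The estimate then has to be converted into a group element modulo the stabilizer $H\coloneqq\Sym_G(\rho)$. Write $d$ for the dimension of the Hilbert space.

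\emph{Step 1: tomography with an exponential tail.} Fix a finite informationally complete POVM $\{E_a\}$ on $\mathcal{H}$, measure it independently on each of the $k_n$ copies, and form the linear-inversion estimate $\hat\rho$ from the empirical frequencies. Each matrix element of $\hat\rho$ is an average of $k_n$ i.i.d. bounded random variables with mean the corresponding element of $\mathcal{U}_g(\rho)$. Hoeffding's inequality and a union bound over the $d^2$ elements then give
\begin{align}
\Pr\left[\|\hat\rho-\mathcal{U}_g(\rho)\|_{\HS}>t\right]\leq c\,e^{-c' k_n t^2},
\end{align}
uniformly in $g$, with constants depending only on $d$ and the POVM. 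Choosing $t=\delta n^{-1/2+\eta}$ for a small constant $\delta>0$ gives $k_n t^2\geq \delta^2 n^{\eta}$. This is exactly the form $1-c_1e^{-c_2n^\eta}$ in Eq.~\eqref{eq:rough_estimate_succ_bound}.

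\emph{Step 2: fitting to the orbit.} Set $\hat g$ to be a minimizer of $\|\mathcal{U}_{h}(\rho)-\hat\rho\|_{\HS}$ over $h\in G$. A minimizer exists by compactness of $G$ and continuity of $U$. To obtain a genuine POVM $M_n(\dd\hat g)$, I would choose the minimizer measurably, either via a measurable selection theorem or by restricting to a fine finite net of $G$ whose mesh is $o(n^{-1/2+\eta})$. On the success event of Step 1, the triangle inequality gives $\|\mathcal{U}_{\hat g}(\rho)-\mathcal{U}_g(\rho)\|_{\HS}\leq 2t$.

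\emph{Step 3: from closeness of states to closeness in $G/H$.} This is the main obstacle. The group $H$ is closed, so $G/H$ is a compact manifold, and the orbit map $gH\mapsto\mathcal{U}_g(\rho)$ is an injective immersion. Injectivity holds by the definition of $H$. The immersion property holds because $\sum_i\theta_i\,\ii[X_i,\rho]=0$ exactly when $\sum_i\theta_iA_i$ lies in the Lie algebra of $H$. Since $G/H$ is compact, this map is an embedding.

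I would then prove an inverse-Lipschitz bound: there exist $C,t_0>0$ such that $\|\mathcal{U}_{g_1}(\rho)-\mathcal{U}_{g_2}(\rho)\|_{\HS}\leq t_0$ implies $g_2H=g(\theta)g_1H$ for some $\theta$ with $\|\theta\|\leq C\|\mathcal{U}_{g_1}(\rho)-\mathcal{U}_{g_2}(\rho)\|_{\HS}$.

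The bound is first proved locally near $g_1=e$. Take $\theta$ in a complement $\mathfrak{m}$ of $\mathrm{Lie}(H)$. The map $\theta\mapsto \mathcal{U}_{g(\theta)}(\rho)$ has injective differential on $\mathfrak{m}$, so the inverse function theorem gives $\|\mathcal{U}_{g(\theta)}(\rho)-\rho\|\geq c\|\theta\|$ for small $\theta\in\mathfrak{m}$. The local bound is then made uniform in $g_1$ by unitary invariance: $\|\mathcal{U}_{g_1}(\rho)-\mathcal{U}_{g_2}(\rho)\|=\|\rho-\mathcal{U}_{g_1^{-1}g_2}(\rho)\|$. Rewriting the resulting element in the left-multiplied form $g(\theta)g_1$ uses $\mathrm{Ad}_{g_1}$, which is uniformly bounded over compact $G$.

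For the global part, compactness together with injectivity yields a positive lower bound $\|\mathcal{U}_{g_1}(\rho)-\mathcal{U}_{g_2}(\rho)\|\geq t_0$ whenever $g_1H$ and $g_2H$ are not in the local neighbourhood. For all large $n$ we have $2t\leq t_0$, so this case is excluded.

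\emph{Conclusion.} Apply the Step 3 bound with $g_1=\hat g$ and $g_2=g$ on the success event. It produces $\theta$ with $\|\theta\|\leq 2C\delta n^{-1/2+\eta}$ and $\mathcal{U}_{g(\theta)}\circ\mathcal{U}_{\hat g}(\rho)=\mathcal{U}_g(\rho)$. Choosing $\delta\leq 1/(2C)$ gives $\|\theta\|\leq n^{-1/2+\eta}$, so $\hat g\in G^{(n)}_{\mathrm{succ}}(g)$. Hence the success probability is at least $1-c_1e^{-c_2n^\eta}$, uniformly in $g$.
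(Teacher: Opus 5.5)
Your proposal is correct and follows the same route as the cited proof (Lemma~D2 of Ref.~\cite{yamaguchi_QuantumGeometricTensorDeterminesPureState_2026}). That proof does tomography on the $k_n$ copies with an $e^{-c k_n t^2}$ tail at scale $t\sim n^{-1/2+\eta}$, chooses $\hat g$ as the closest orbit point, and applies a local inverse-Lipschitz bound for the orbit map on $G/\Sym_G(\rho)$. The only difference is that you get the concentration from Hoeffding's inequality for linear-inversion tomography rather than quoting Theorem~1 of Ref.~\cite{gutaFastStateTomography2020}, which changes only constants. Also, since the tomography outcome set is finite, $M_n$ is simply the pushforward of a finite POVM and no measurable selection is needed.
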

\begin{proof}
    This is Lemma~D2 of Ref.~\cite{yamaguchi_QuantumGeometricTensorDeterminesPureState_2026}. The quantitative bound in Eq.~\eqref{eq:rough_estimate_succ_bound} is (D164) in Ref.~\cite{yamaguchi_QuantumGeometricTensorDeterminesPureState_2026}, which follows from the state-tomography bound in Theorem~1 of Ref.~\cite{gutaFastStateTomography2020}.
\end{proof}

\begin{proof}[Proof of Theorem~\ref{thm:nonproj_unitary_rep_direct_part}]
    By Eq.~\ref{eq:QGTordering_QFIordering}, the assumed metric adjusted QGT inequalities imply Eq.~\eqref{eq:assumption_QFI_condition}. Thus, Lemma~\ref{lemma:local_model_convertibility} applies.
    Fix arbitrary $s\in(0,r)$ and $\kappa>0$. 
    It suffices to consider $\kappa\in (0,1/2)$ since the assertion for larger $\kappa$ follows from the result for any smaller positive exponent. Choose $\eta$ and $\epsilon$ such that
    \begin{align}
        0<\eta<\epsilon<\frac{\kappa}{6}.
    \end{align}
    
    We define
    \begin{align}
        k_n\coloneqq \ceil{n^{1-\eta}},\qquad N_n\coloneqq n-k_n,\qquad m_n\coloneqq \floor{rN_n},\qquad \ell_n\coloneqq \floor{sn}.
    \end{align}
    By Lemma~\ref{lemma:local_model_convertibility}, there are local channels $\Lambda_{N_n}$ at rate $r$ satisfying
    \begin{align}
        \sup_{\|u\|\leq N_n^\epsilon}\left\|\Lambda_{N_n}(\rho_{u,N_n})-\rho'^{(r)}_{u,N_n}\right\|_1=O(N_n^{-1/2+\kappa}).
    \end{align}
    For each $h\in G$, we define 
    \begin{align}
        \Lambda_{N_n}^{(h)}\coloneqq (\mathcal{U}_h')^{\otimes m_n}\circ \Lambda_{N_n}\circ (\mathcal{U}_{h^{-1}})^{\otimes N_n}.
    \end{align}

    Since $N_n/n\to1$ and $s<r$, we have $m_n\geq \ell_n$ for all sufficiently large $n$. Thus, let
    \begin{align}
        \mathcal{D}_n:\mathcal{T}_1(\mathcal{H}'^{\otimes m_n})\to \mathcal{T}_1(\mathcal{H}'^{\otimes \ell_n})
    \end{align}
    be the channel discarding $m_n-\ell_n$ output systems. We then define a channel $\tilde{\mathcal{E}}_n$ as follows: measure the first $k_n$ input systems with the POVM $M_n(\dd \hat{g})$ from Lemma~\ref{lem:POVM}, and conditional on the outcome $\hat{g}$, apply $\mathcal{D}_n\circ \Lambda_{N_n}^{(\hat{g})}$ to the remaining $N_n$ systems. Concretely,
    \begin{align}
        \tilde{\mathcal{E}}_n(\mathcal{U}_g(\rho)^{\otimes n})=\int_{G}\Tr \left(M_n(\dd \hat{g})\mathcal{U}_g(\rho)^{\otimes k_n}\right)\mathcal{D}_n\circ \Lambda_{N_n}^{(\hat{g})}(\mathcal{U}_g(\rho)^{\otimes N_n}).
    \end{align}

    Suppose that $\hat{g}\in G_{\mathrm{succ}}^{(n)}(g)$. Then there exists $\theta\in\mathbb{R}^p$ such that
    \begin{align}
        \|\theta\|\leq n^{-1/2+\eta},\qquad \mathcal{U}_{g(\theta)}\circ \mathcal{U}_{\hat{g}}(\rho)=\mathcal{U}_g(\rho).\label{eq:succ_set}
    \end{align}
    Let $V(\hat{g})$ be the real invertible matrix determined by
    \begin{align}
        \Ad_{\hat{g}^{-1}}(A_i)=\sum_{j=1}^p\left(V(\hat{g})\right)_{ji}A_j
    \end{align}
    For $v_n\coloneqq\sqrt{N_n}V(\hat{g})\theta $, we have $g(v_n/\sqrt{N_n})=\hat{g}^{-1}g(\theta)\hat{g}$. 
    Since the map $h\mapsto V(h)$ is continuous and $G$ is compact, 
    \begin{align}
        C_V\coloneqq \sup_{h\in G}\|V(h)\|<\infty.
    \end{align}
    Therefore,
    \begin{align}
         \|v_n\|\leq C_V\sqrt{N_n}n^{-1/2+\eta}\leq N_n^\epsilon,
    \end{align}
    where we used $\eta<\epsilon$ and $N_n/n\to1$. Moreover, Eq.~\eqref{eq:succ_set} implies $g^{-1}g(\theta)\hat{g}\in\Sym_G(\rho)$. By assumption $\Sym_G(\rho)\subset \Sym_G(\rho')$, we obtain $g^{-1}g(\theta)\hat{g}\in\Sym_G(\rho')$, and hence $\mathcal{U}'_{g(\theta)}\circ \mathcal{U}'_{\hat{g}}(\rho')=\mathcal{U}_g'(\rho')$. Combining it with $g(v_n/\sqrt{N_n})=\hat{g}^{-1}g(\theta)\hat{g}$, we have
    \begin{align}
        [\mathcal{U}_{\hat{g}^{-1}}\circ \mathcal{U}_g](\rho)=\mathcal{U}_{g(v_n/\sqrt{N_n})}(\rho),\qquad [\mathcal{U}'_{\hat{g}^{-1}}\circ \mathcal{U}'_g](\rho')=\mathcal{U}'_{g(v_n/\sqrt{N_n})}(\rho').
    \end{align}
    
    Therefore, by trace-norm contractivity of $\mathcal{D}_n$ and unitary invariance, uniformly for $g\in G$ and $\hat{g}\in G_{\mathrm{succ}}^{(n)}(g)$, we obtain
    \begin{align}
        \left\|\mathcal{D}_n\circ \Lambda_{N_n}^{(\hat{g})}(\mathcal{U}_g(\rho)^{\otimes N_n})-\mathcal{U}_g'(\rho')^{\otimes \ell_n}\right\|_1
        &\leq \left\| \Lambda_{N_n}^{(\hat{g})}(\mathcal{U}_g(\rho)^{\otimes N_n})-\mathcal{U}_g'(\rho')^{\otimes m_n}\right\|_1\\
        &=\left\| \Lambda_{N_n}(\rho_{v_n,N_n})-\rho'^{(r)}_{v_n,N_n}\right\|_1\\
        &\leq \sup_{\|u\|\leq N_n^\epsilon}\left\|\Lambda_{N_n}(\rho_{u,N_n})-\rho'^{(r)}_{u,N_n}\right\|_1=O(N_n^{-1/2+\kappa}).
    \end{align}

    On the failure event, the trace norm is at most $2$. Therefore, by convexity of the trace norm and Lemma~\ref{lem:POVM}, we obtain
    \begin{align}
        \sup_{g\in G}\left\|\tilde{\mathcal{E}}_n(\mathcal{U}_g(\rho)^{\otimes n})-\mathcal{U}_g'(\rho')^{\otimes \ell_n}\right\|_1\leq O(N_n^{-1/2+\kappa})+2c_1e^{-c_2 n^\eta}=O(n^{-1/2+\kappa}),
    \end{align}
    where we used $N_n/n\to1$ in the last equality. 

    Let $\mu_G$ be the normalized Haar measure on $G$, and define
    \begin{align}
        \mathcal{E}_n\coloneqq \int_G\dd \mu_G(h)(\mathcal{U}'_{h^{-1}})^{\otimes \ell_n}\circ \tilde{\mathcal{E}}_n\circ (\mathcal{U}_h)^{\otimes n}.
    \end{align}
    The right invariance of the Haar measure implies that $\mathcal{E}_n$ is $G$-covariant. Moreover, by the convexity and the unitary invariance of the trace norm, 
    \begin{align}
        \left\|\mathcal{E}_n(\rho^{\otimes n})-\rho'^{\otimes \ell_n}\right\|_1\leq \int_G\dd \mu_G(h)\left\|\tilde{\mathcal{E}}_n(\mathcal{U}_h(\rho)^{\otimes n})-\mathcal{U}_h'(\rho')^{\otimes \ell_n}\right\|_1=O(n^{-1/2+\kappa}).
    \end{align}
    Thus, every $s\in(0,r)$ is achievable, implying that $R(\rho\to\rho')\geq r$. 
\end{proof}

\section{Extension to projective unitary representations}\label{sec:extension_proj_rep}

We extend the conversion rate formula for nonprojective unitary representations to projective unitary representations using the arguments in Refs.~\cite{shitara_IidStateConvertibilityResourceTheory_2025,yamaguchi_QuantumGeometricTensorDeterminesPureState_2026}. Let $U$ and $U'$ be continuous projective unitary representations of $G$ on $d$- and $d'$-dimensional Hilbert spaces $\mathcal{H}$ and $\mathcal{H}'$, respectively. We define
\begin{align}
    \tilde{U}(g)\coloneqq \frac{U(g)^{\otimes d}}{\det U(g)}.
\end{align}
Then, $\tilde{U}$ is a nonprojective unitary representation of $G$~\cite[Lemma~S4]{shitara_IidStateConvertibilityResourceTheory_2025}. The continuity of $\tilde{U}$ also implies its smoothness~\cite[Corollary 3.50]{hallLieGroupsLie2015}. 

The asymptotic conversion rate $R(\rho\to\rho')$ depends on the representations $U$ and $U'$ of the input and output systems; in this subsection, we write it as $R_{U, U'}(\rho\to\rho')$. For the output system, we also define
\begin{align}
    \tilde{U}'(g)\coloneqq \frac{U'(g)^{\otimes d'}}{\det U'(g)}.
\end{align}
Then, Lemma~S5 of Ref.~\cite{shitara_IidStateConvertibilityResourceTheory_2025} implies
\begin{align}
    R_{U,U'}(\rho\to\rho')=\frac{d'}{d}R_{\tilde{U},\tilde{U}'}(\rho^{\otimes d}\to \rho'^{\otimes d'}).\label{eq:rate_formula_transformation}
\end{align}
Since the representations appearing on the right-hand side are nonprojective and smooth, one can apply the conversion rate formula established in the previous two sections, thereby obtaining the formula to calculate $R_{U,U'}(\rho\to\rho')$ for any continuous projective unitary representations $U,U'$. In what follows, we derive its explicit expression written in terms of metric adjusted QGTs.

For continuous projective representations $U$ and $U'$, we define their metric adjusted QGT matrices, for $q\in[0,1]$, by
\begin{align}
    \mathcal{Q}_{\rho,U}^{q}&\coloneqq \frac{1}{d}\mathcal{Q}^{q}_{\rho^{\otimes d},\tilde{U}},\\
    \mathcal{Q}_{\rho',U'}^{q}&\coloneqq \frac{1}{d'}\mathcal{Q}^{q}_{\rho'^{\otimes d'},\tilde{U}'},
\end{align}
where the right-hand sides are the metric adjusted QGTs of the differentiable unitary model generated by $\tilde{U}$ at $\rho^{\otimes d}$ and $\tilde{U}'$ at $\rho'^{\otimes d'}$, respectively.

We remark that if $U$ is differentiable, the metric adjusted QGT defined in this way agrees with the standard generator-based definition. Indeed, after choosing a phase convention such that $U(e)=I$, define
\begin{align}
    X_i&\coloneqq -\ii \partial_{i}U(g(\theta))|_{\theta=0},\qquad 
    \tilde{X}_i\coloneqq -\ii\partial_{i}\tilde{U}(g(\theta))|_{\theta=0}=\sum_{a=1}^dI^{\otimes (a-1)}\otimes X_i\otimes I^{\otimes (d-a)}-\Tr(X_i)I.
\end{align}
The term proportional to the identity does not contribute to the tangent of the unitary orbit. Thus, by additivity of metric adjusted QGT,
\begin{align}
    \mathcal{Q}^{q}_{\rho^{\otimes d},\tilde{U}}=d \mathcal{Q}_{\rho,U}^{q},
\end{align}
where the right-hand side is the standard definition of the metric adjusted QGT. Similarly, we obtain $\mathcal{Q}^{q}_{\rho'^{\otimes d'},\tilde{U}'}=d' \mathcal{Q}_{\rho',U'}^{q}$.

Moreover, since the scalar determinant factor cancels under conjugation, for $\tilde{\mathcal{U}}(\cdot)\coloneqq \tilde{U}(g)(\cdot)\tilde{U}(g)^\dag$, we have
\begin{align}
    \tilde{\mathcal{U}}_g(\rho^{\otimes d})=(\mathcal{U}_g(\rho))^{\otimes d}.
\end{align}
Consequently, $\Sym_{G,U}(\rho)=\Sym_{G,\tilde{U}}(\rho^{\otimes d})$, where we have made explicit the dependence of the symmetry subgroup on both the state and the representation. Therefore, by Eq.~\eqref{eq:rate_formula_transformation}, we obtain
\begin{align}
    R_{U,U'}(\rho\to\rho')&=\frac{d'}{d}R_{\tilde{U},\tilde{U}'}(\rho^{\otimes d}\to \rho'^{\otimes d'})\\
    &=
    \begin{cases}
        \frac{d'}{d}\sup\left\{r\geq 0\colon\forall q\in[1/2,1],\, \mathcal{Q}^{q}_{\rho^{\otimes d},\tilde{U}}\geq r\, \mathcal{Q}^{q}_{\rho'^{\otimes d'},\tilde{U}'} \right\}&\qquad (\text{if }\Sym_{G,\tilde{U}}(\rho^{\otimes d})\subset \Sym_{G,\tilde{U}'}(\rho'^{\otimes d'}))\\
        0&\qquad (\text{otherwise})
    \end{cases}\\
    &=
    \begin{cases}
        \frac{d'}{d}\sup\left\{r\geq 0\colon \forall q\in[1/2,1] ,\, d\mathcal{Q}^{q}_{\rho,U}\geq rd'\, \mathcal{Q}^{q}_{\rho',U'} \right\}&\qquad (\text{if }\Sym_{G,U}(\rho)\subset \Sym_{G,U'}(\rho'))\\
        0&\qquad (\text{otherwise})
    \end{cases}\\
    &=
    \begin{cases}
        \sup\left\{r\geq 0\colon \forall q\in[1/2,1],\,\mathcal{Q}^{q}_{\rho,U}\geq r\, \mathcal{Q}^{q}_{\rho',U'} \right\}&\qquad (\text{if }\Sym_{G,U}(\rho)\subset \Sym_{G,U'}(\rho'))\\
        0&\qquad (\text{otherwise})
    \end{cases},
\end{align}
which establishes the conversion rate formula.

Finally, as an extension of Theorem~\ref{thm:nonproj_unitary_rep_direct_part}, we prove that one can achieve the same conversion error rate as for nonprojective representation. 

\begin{theorem}[Extension to continuous projective representations]
    Let $U$ and $U'$ be continuous projective unitary representations of a compact Lie group $G$ on finite-dimensional Hilbert spaces $\mathcal{H}$ and $\mathcal{H}'$, respectively. Let $\rho$ and $\rho'$ be arbitrary states on $\mathcal{H}$ and $\mathcal{H}'$. Suppose that $\Sym_{G,U}(\rho)\subset \Sym_{G,U'}(\rho')$. If, for some $r>0$, 
    \begin{align}
        \forall q\in [1/2,1],\qquad \mathcal{Q}_{\rho,U}^{q}\geq r \,\mathcal{Q}_{\rho',U'}^{q},
    \end{align}
    then, for every $s\in (0,r)$ and every $\kappa>0$, there exist $G$-covariant channels $\mathcal{E}_n$ such that
    \begin{align}
        \left\|\mathcal{E}_n(\rho^{\otimes n})-\rho'^{\otimes \floor{sn}}\right\|_1=O(n^{-1/2+\kappa}).\label{eq:direct_part_proj_error_rate}
    \end{align}
    Consequently, 
    \begin{align}
        R_{U, U'}(\rho\to \rho')\geq \sup\left\{r\geq 0\colon \forall q\in [1/2,1],\, \mathcal{Q}_{\rho,U}^{q}\geq r \,\mathcal{Q}_{\rho',U'}^{q}\right\}\label{eq:direct_part_proj}
    \end{align}
\end{theorem}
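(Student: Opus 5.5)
The plan is to reduce to the nonprojective case through the lift $\tilde{U}(g)=U(g)^{\otimes d}/\det U(g)$. We then apply Theorem~\ref{thm:nonproj_unitary_rep_direct_part} to $\rho^{\otimes d}$ and $\rho'^{\otimes d'}$, and finally regroup copies so that the error bound transfers.

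First, translate the hypotheses. By definition, $\mathcal{Q}^q_{\rho^{\otimes d},\tilde{U}}=d\,\mathcal{Q}^q_{\rho,U}$ and $\mathcal{Q}^q_{\rho'^{\otimes d'},\tilde{U}'}=d'\,\mathcal{Q}^q_{\rho',U'}$. The assumption $\mathcal{Q}^q_{\rho,U}\geq r\,\mathcal{Q}^q_{\rho',U'}$ for all $q\in[1/2,1]$ therefore becomes
\begin{align}
\mathcal{Q}^q_{\rho^{\otimes d},\tilde{U}}\geq \tilde r\,\mathcal{Q}^q_{\rho'^{\otimes d'},\tilde{U}'},\qquad \tilde r\coloneqq \frac{rd}{d'} .
\end{align}
Since $\tilde{\mathcal{U}}_g(\rho^{\otimes d})=\mathcal{U}_g(\rho)^{\otimes d}$, the symmetry subgroups coincide, $\Sym_{G,\tilde U}(\rho^{\otimes d})=\Sym_{G,U}(\rho)$, and likewise for the output. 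The symmetry-subgroup inclusion therefore carries over. The lifted representations are nonprojective and smooth, so Theorem~\ref{thm:nonproj_unitary_rep_direct_part} applies. For any $\tilde s\in(0,\tilde r)$ and $\kappa>0$ it gives $\tilde G$-covariant channels $\tilde{\mathcal{E}}_N$ with
\begin{align}
\bigl\|\tilde{\mathcal{E}}_N(\rho^{\otimes dN})-\rho'^{\otimes d'\lfloor \tilde sN\rfloor}\bigr\|_1=O(N^{-1/2+\kappa}).
\end{align}

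Second, fix $s\in(0,r)$ and choose $\tilde s$ with $sd/d'<\tilde s<\tilde r$. For each $n$ set $N\coloneqq\lfloor n/d\rfloor$. The channel $\mathcal{E}_n$ does three things in sequence: it discards the $n-dN<d$ leftover input copies, applies $\tilde{\mathcal{E}}_N$, and discards $d'\lfloor\tilde sN\rfloor-\lfloor sn\rfloor$ output copies. This last number is nonnegative for large $n$ because $d'\tilde s/d>s$. Partial traces are contractive in trace norm, so the error is $O(N^{-1/2+\kappa})=O(n^{-1/2+\kappa})$. For the finitely many small $n$ the error is at most $2$, so they are absorbed into the big-$O$ constant.

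Third, check covariance with respect to the original projective action. A channel that is covariant under $\tilde{\mathcal{U}}_g$ on $\mathcal{H}^{\otimes dN}$ is covariant under $\mathcal{U}_g^{\otimes dN}$, because the scalar $\det U(g)$ cancels under conjugation, so $\tilde{\mathcal U}_g=\mathcal{U}_g^{\otimes d}$ as superoperators. Partial traces over full tensor factors commute with the collective action. Discarding the leftover inputs is also covariant, so the composition $\mathcal{E}_n$ is $G$-covariant in the required sense. This establishes Eq.~\eqref{eq:direct_part_proj_error_rate}, and taking the supremum over $r$ gives Eq.~\eqref{eq:direct_part_proj}.

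The only delicate point is the conversion between the lifted and original covariance and between the rate scalings $d'/d$. Both reduce to the cancellation of scalar phases and simple bookkeeping, so no substantial obstacle is expected.
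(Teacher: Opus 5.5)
Your proposal is correct and follows essentially the same route as the paper. The paper also lifts to $\tilde U=U^{\otimes d}/\det U$, rescales the QGT inequality to rate $rd/d'$, and applies the nonprojective direct part with an intermediate rate $\tilde t=\frac{d}{d'}t$, $t\in(s,r)$, which is exactly your $\tilde s$. It then takes $N_n=\lfloor n/d\rfloor$, discards the surplus input and output copies, and uses the cancellation of the determinant phase, $\tilde{\mathcal U}_g=\mathcal U_g^{\otimes d}$, to transfer covariance.
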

\begin{proof}
    Let $d\coloneqq \dim\mathcal{H}$ and $d'\coloneqq \dim\mathcal{H}'$, and define $\tilde{U}$ and $\tilde{U}'$ as above. 
    The assumption for the metric adjusted QGT implies
    \begin{align}
        \mathcal{Q}^{q}_{\rho^{\otimes d},\tilde{U}}\geq \frac{rd}{d'}\mathcal{Q}^{q}_{\rho'^{\otimes d'},\tilde{U}'}.
    \end{align}
    Moreover, the inclusion relation between the symmetry subgroup implies
    \begin{align}
        \Sym_{G,\tilde{U}}(\rho^{\otimes d})\subset \Sym_{G,\tilde{U}'}(\rho'^{\otimes d'}).
    \end{align}
    Since $\tilde{U}$ and $\tilde{U}'$ are differentiable nonprojective unitary representations, Theorem~\ref{thm:nonproj_unitary_rep_direct_part} applies. 

    Fix $s\in(0,r)$ and choose $s<t<r$. Define $\tilde{t}\coloneqq \frac{d}{d'}t$. Since $\tilde{t}<\frac{d}{d'}r$, Theorem~\ref{thm:nonproj_unitary_rep_direct_part} implies that there are $G$-covariant channels $\mathcal{E}_N'$ that asymptotically convert $N$ copies of $\rho^{\otimes d}$ into $\floor{\tilde{t}N}$ copies of $\rho'^{\otimes d'}$ with trace-norm error $O(N^{-1/2+\kappa})$. 

    For an original input of $n$ copies, we let $N_n\coloneqq \floor{n/d}$. We first discard $n-dN_n$ input systems and apply $\mathcal{E}_{N_n}'$. The number of output copies of $\rho'$ is
    \begin{align}
        M_n\coloneqq d' \floor{\tilde{t}N_n}=d'\floor*{\frac{d}{d'}tN_n}.
    \end{align}
    Since $N_n/n\to 1/d$, we have $M_n/n\to t$. Since $t>s$, for all sufficiently large $n$, we have $M_n\geq\floor{sn}$ and hence we can discard $M_n-\floor{sn}$ copies of $\rho'$. The channel $\mathcal{E}_n$ resulting from this process is $G$-covariant, and satisfies
    \begin{align}
        \|\mathcal{E}_n(\rho^{\otimes n})-\rho'^{\otimes \floor{sn}}\|_1=O(N_n^{-1/2+\kappa})=O(n^{-1/2+\kappa}),
    \end{align}
    which establishes Eq.~\eqref{eq:direct_part_proj_error_rate}. Since $s\in(0,r)$ was arbitrary, Eq.~\eqref{eq:direct_part_proj} follows.
\end{proof}

\clearpage
\section{Detailed derivations of other results in the main text}

\subsection{Proof that a finite number of $q$ is insufficient (Proposition~\ref{prop:finite_q_is_insufficient})}\label{sec:finite_number_q_insufficient}
To prove Proposition~\ref{prop:finite_q_is_insufficient}, we use $\mathcal{Q}_\rho^q/\mathcal{Q}_{\rho'}^q=\mathcal{F}_{\rho}^{f_q}/\mathcal{F}_{\rho'}^{f_q}$ for $q\in(0,1)$. 
For each $q_0\in[1/2,1)$, we construct a pair of full-rank input and output states whose QFI ratio has a unique minimum at $q=q_0$.

Let $\mathcal{H}$ be a single-qubit Hilbert space and fix an orthonormal basis $\{\ket{0},\ket{1}\}$. We define the Hamiltonian as $ H\coloneqq \ket{1}\bra{1}$, and the unitary representation of $U(1)$ by
\begin{align}
    U(e^{\ii\theta})\coloneqq e^{-\ii H\theta},\qquad \theta\in[0,2\pi).
\end{align}

For $ \ket{\pm}\coloneqq\left(\ket{0}\pm\ket{1}\right)/\sqrt{2}$, we define
\begin{align}
    \tau_s&\coloneqq\frac{1+s}{2}\ket{+}\bra{+}+\frac{1-s}{2}\ket{-}\bra{-}=\frac{1}{2}
    \begin{pmatrix}
        1&s\\
        s&1
    \end{pmatrix},
\end{align}
where $s\in(0,1)$ is a real parameter. Under the $U(1)$ action,
\begin{align}
    U(e^{\ii\theta})\tau_s U(e^{\ii\theta})^\dag=\frac{1}{2}
    \begin{pmatrix}
        1&se^{\ii\theta}\\
        se^{-\ii\theta}&1
    \end{pmatrix}.
\end{align}
Since $s>0$, invariance requires $e^{\ii\theta}=1$, and hence
\begin{align}
    \Sym_{U(1)}(\tau_s)=\{e\}.
\end{align}
Moreover, since $\left|\braket{+|H|-}\right|^2=1/4$, the $f_q$-QFI is given by
\begin{align}
    \mathcal{F}^{f_q}_{\tau_s}&=\frac{s^2/4}{(1-q)\frac{1+s}{2}+q\frac{1-s}{2}}+\frac{s^2/4}{(1-q)\frac{1-s}{2}+q\frac{1+s}{2}}=\frac{s^2}{1-s^2u},
\end{align}
where $u\coloneqq (1-2q)^2$. 

We next introduce another qubit $\mathcal{H}_{\mathrm{flag}}$ with vanishing Hamiltonian, which serves as a classical flag system. 
Fix $a,b,c\in(0,1)$ satisfying
\begin{align}
    a^2<c^2<b^2.
\end{align}
For $w\in(0,1)$, define the input state on
$\mathcal{H}\otimes\mathcal{H}_{\mathrm{flag}}$ by
\begin{align}
    \rho_w
    \coloneqq
    w\,\tau_a\otimes\ket{0}\bra{0}
    +(1-w)\,\tau_b\otimes\ket{1}\bra{1},
\end{align}
and the output state on $\mathcal{H}$ by
\begin{align}
    \rho'\coloneqq\tau_c.
\end{align}
Since the flag system is invariant under $U(1)$, we have
\begin{align}
    \Sym_{U(1)}(\rho_w)=\Sym_{U(1)}(\rho')=\{e\}.
\end{align}
Thus, there is no obstruction in converting $\rho_w$ into $\rho'$ from the symmetry-subgroup condition.

The $f_q$-QFIs of $\rho_w$ and $\rho'$ are given by
\begin{align}
    \mathcal{F}^{f_q}_{\rho_w}&=w\frac{a^2}{1-a^2u}+(1-w)\frac{b^2}{1-b^2u},\qquad \mathcal{F}^{f_q}_{\rho'}=\frac{c^2}{1-c^2u}.
\end{align}
To calculate the conversion rate from $\rho_w$ to $\rho'$, we introduce
\begin{align}
    r_w(u)
    \coloneqq\frac{\mathcal{F}^{f_q}_{\rho_w}}{\mathcal{F}^{f_q}_{\rho'}}=
    \frac{1-c^2u}{c^2}\left(w\frac{a^2}{1-a^2u}+(1-w)\frac{b^2}{1-b^2u}\right).
\end{align}
Since the symmetry-subgroup condition is satisfied, the conversion rate is given by
\begin{align}
    R(\rho_w\to\rho')=\inf_{u\in[0,1)} r_w(u).\label{eq:inf_rho_w}
\end{align}

We now show that, by appropriately choosing $w$, the infimum is uniquely attained at an arbitrarily chosen point $u_0\coloneqq(1-2q_0)^2$. 
Differentiating $r_w(u)$ with respect to $u$, we obtain
\begin{align}
    r'_w(u)=\frac{1}{c^2}\left(w\frac{a^2(a^2-c^2)}{(1-a^2u)^2}+(1-w)\frac{b^2(b^2-c^2)}{(1-b^2u)^2}\right).
\end{align}
Since $a^2<c^2<b^2$, the condition $r'_w(u_0)=0$ is satisfied by
\begin{align}
    w=w_0
    \coloneqq
    \frac{\dfrac{b^2(b^2-c^2)}{(1-b^2u_0)^2}}{\dfrac{a^2(c^2-a^2)}{(1-a^2u_0)^2}+\dfrac{b^2(b^2-c^2)}{(1-b^2u_0)^2}}
    \in(0,1).
\end{align}

To prove that $u_0$ is the unique global minimizer, define
\begin{align}
    C&\coloneqq\frac{w_0a^2(c^2-a^2)}{(1-a^2u_0)^2}=\frac{(1-w_0)b^2(b^2-c^2)}{(1-b^2u_0)^2}>0.
\end{align}
Using this relation, the derivative can be rewritten as
\begin{align}
    r'_{w_0}(u)=\frac{C}{c^2}\left(\left(\frac{1-b^2u_0}{1-b^2u}\right)^2-\left(\frac{1-a^2u_0}{1-a^2u}\right)^2\right).
\end{align}
Since
\begin{align}
    \frac{1-b^2u_0}{1-b^2u}-\frac{1-a^2u_0}{1-a^2u}=\frac{(b^2-a^2)(u-u_0)}{(1-a^2u)(1-b^2u)},
\end{align}
we obtain
\begin{align}
    r'_{w_0}(u)
    \begin{cases}
        <0,&\qquad 0\leq u<u_0,\\
        =0,&\qquad u=u_0,\\
        >0,&\qquad u_0<u<1.
    \end{cases}
\end{align}
Consequently, $u_0$ is the unique global minimizer of $r_{w_0}(u)$ on $[0,1)$, and
\begin{align}
    R(\rho_{w_0}\to\rho')=r_{w_0}(u_0).\label{eq:rate_r_w0_u0}
\end{align}
Since $u=(1-2q)^2$, the point $u=u_0$ corresponds to $q=q_0$ uniquely on the interval $q\in[1/2,1)$. 

As a concrete example, we take 
\begin{align}
    a^2=\frac{1}{4},\qquad c^2=\frac{1}{2},\qquad b^2=\frac{3}{4}.
\end{align}
Then,
\begin{align}
    r_{w}(u)=(2-u)\left(\frac{w}{4-u}+\frac{3(1-w)}{4-3u}\right),\qquad w_0(u_0)=\frac{3(4-u_0)^2}{(4-3u_0)^2+3(4-u_0)^2}.
\end{align}
Figure~\ref{fig:arbitrary_u0_bottleneck} shows $r_{w_0(u_0)}(u)$ for several values $u_0=0.1,0.3,0.5,0.7,0.9$. For each choice of $u_0$, the corresponding curve attains the infimum in Eq.~\eqref{eq:inf_rho_w} uniquely at $u=u_0$. Thus, the rate-determining point can be placed at an arbitrary interior point of the interval.
\begin{figure}
    \centering
    \includegraphics[width=0.5\linewidth]{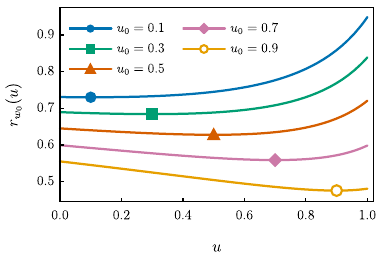}
    \caption{Tunable interior bottleneck of the metric adjusted QGT ratio (equal to the $f_q$-QFI ratio). The curves show $r_{w_0}(u)$ for several values of $u_0$. The markers indicate the unique global minima at $u=u_0$, demonstrating that the bottleneck can be moved continuously across the interior of the interval.}
    \label{fig:arbitrary_u0_bottleneck}
\end{figure}

As a consequence of the above observation, we obtain the following:
\begin{proposition}[No fixed finite $q$-sampling universally determines the conversion rate; Formal statement of Proposition~\ref{prop:finite_q_is_insufficient}]
    For $q\in[0,1]$, we define
    \begin{align}
        s_q\coloneqq \sup\left\{r\geq 0\colon \mathcal{Q}_{\rho}^{q}\geq r \mathcal{Q}_{\rho'}^{q}\right\}.
    \end{align}
    Let $S\subset [1/2,1]$ be any nonempty finite set fixed independently of the states. Then there exist full-rank input and output states $\rho,\rho'$ with $\Sym_{U(1)}(\rho)=\Sym_{U(1)}(\rho')=\{e\}$, and a point $ q_0\in (1/2,1)\setminus S$ such that
    \begin{align}
        R(\rho\to\rho')=s_{q_0}<\min_{q\in S}s_q.
    \end{align}
    Moreover, $q_0$ is the unique minimizer of $s_q$ on the interval $[1/2,1]$.  
\end{proposition}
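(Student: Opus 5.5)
The proof uses the flagged qubit construction developed just above. Given the finite set $S\subset[1/2,1]$, I first choose a point $q_0\in(1/2,1)\setminus S$; this is possible because $S$ is finite and $(1/2,1)$ is uncountable. Set $u_0\coloneqq(1-2q_0)^2\in(0,1)$. Fix $a^2=1/4$, $c^2=1/2$, $b^2=3/4$, and take $\rho\coloneqq\rho_{w_0}$ with $w_0=w_0(u_0)$ and $\rho'\coloneqq\tau_c$.

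Both states are full rank. $\rho'$ has eigenvalues $(1\pm c)/2\in(0,1)$. $\rho_{w_0}$ is a direct sum of the full-rank blocks $w_0\tau_a$ and $(1-w_0)\tau_b$ with $w_0\in(0,1)$. Both symmetry subgroups are trivial, as computed above. Hence Theorem~\ref{thm:main_theorem_conversion_rate} applies, and with Eq.~\eqref{eq:QGTordering_QFIordering} it gives $R(\rho\to\rho')=\inf_{q\in[1/2,1]}s_q$.

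Next I identify $s_q$ explicitly. For $U(1)$ all quantities are scalars, and $\mathcal{Q}^q_{\rho'}>0$ for $q\in(0,1)$. So for $q\in[1/2,1)$ we have $s_q=\mathcal{Q}^q_\rho/\mathcal{Q}^q_{\rho'}=\mathcal{F}^{f_q}_\rho/\mathcal{F}^{f_q}_{\rho'}=r_{w_0}(u)$ with $u=(1-2q)^2$. The map $q\mapsto u$ is a strictly increasing bijection from $[1/2,1)$ onto $[0,1)$.

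The endpoint $q=1$ needs separate care. Since both states are full rank, $\mathcal{Q}^1_\rho=\mathcal{Q}^1_{\rho'}=0$, so the condition $\mathcal{Q}^1_\rho\ge r\mathcal{Q}^1_{\rho'}$ holds for every $r$ and $s_1=+\infty$. This endpoint therefore never constrains the rate, and it is strictly larger than any finite value. This is the one place where I must be careful that the formal definition of $s_q$ at $q=1$ does not spoil the uniqueness claim. Because $s_1=+\infty$, it does not.

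By the derivative analysis above, $r_{w_0}$ is strictly decreasing on $[0,u_0)$ and strictly increasing on $(u_0,1)$. Hence $q_0$ is the unique minimizer of $s_q$ over $[1/2,1)$, and also over $[1/2,1]$ since $s_1=\infty$. It follows that $R(\rho\to\rho')=\inf_q s_q=s_{q_0}$.

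Finally, $S$ is finite, nonempty, and excludes $q_0$, so each $q\in S$ satisfies $s_q>s_{q_0}$ by uniqueness of the minimizer. The minimum of finitely many such values is attained, so $\min_{q\in S}s_q>s_{q_0}=R(\rho\to\rho')$.

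The main obstacle is already handled by the preceding computation, namely showing that the minimizer is unique. What remains is bookkeeping at the two ends: the point $q=1/2$ (where $u=0$), which is covered because $u_0>0$, and the point $q=1$ discussed above.
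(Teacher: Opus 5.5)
Your proposal is correct and follows the paper's proof. The paper also picks $q_0\in(1/2,1)\setminus S$, uses the flagged state $\rho_{w_0}$ with $\rho'=\tau_c$, identifies $s_q=r_{w_0}((1-2q)^2)$ on $[1/2,1)$, notes $s_1=+\infty$ because both states are full rank, and concludes from the uniqueness of the minimizer at $u_0$; your extra bookkeeping (full-rankness, the monotone bijection $q\mapsto(1-2q)^2$, and that the minimum over the finite set $S$ is attained) only makes explicit what the paper leaves implicit.
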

\begin{proof}
    Choose $q_0\in(1/2,1)\setminus S$ and use the full-rank states $\rho_{w_0},\rho'$ constructed above. For $q\in[1/2,1)$, we have
    \begin{align}
        s_q=\frac{\mathcal{Q}^q_{\rho_{w_0}}}{\mathcal{Q}_{\rho'}^q}=r_{w_0}((1-2q)^2),
    \end{align}
    which has a unique minimum at $q_0$. At $q=1$, both states are full-rank, so $\mathcal{Q}_{\rho_{w_0}}^1=\mathcal{Q}_{\rho'}^1=0$ and hence $s_1=+\infty$. Thus, the endpoint ratio condition imposes no additional constraint, and every sampled bound is strictly larger than $s_{q_0}$. 
\end{proof}

\subsection{Semidefinite programming formulation of the conversion rate}
To express the conversion rate as a semidefinite program, we use the following theorem instead of Theorem~\ref{thm:RTA_Gaussian_convertibility}:
\begin{theorem}[Conditions~(ii) and~(iii) of Theorem~\ref{thm:convertibility_gaussian_shift} in \protect\hyperlink{Part2}{Part~II}]\label{thm:RTA_Gaussian_convertibility_SDP}
    Let $\Phi$ and $\Phi'$ be $m$- and $m'$-mode Gaussian states with covariance matrices $\sigma$ and $\sigma'$, and symplectic forms $\Omega$ and $\Omega'$, respectively. Using two complex matrices $C\in\mathbb{C}^{m\times p}$ and $C'\in\mathbb{C}^{m'\times p}$, define the two $p$-parameter Gaussian shift models $\mathcal{G}(\Phi,C)\coloneqq \{\Phi_{Cu}\}_{u\in\mathbb{R}^p}$ and $\mathcal{G}(\Phi',C')\coloneqq \{\Phi'_{C'u}\}_{u\in\mathbb{R}^p}$. The following conditions are equivalent:
    \begin{enumerate}[(i)]\setcounter{enumi}{1}
        %\item %The Gaussian shift model $\mathcal{G}(\Phi,C)$ is convertible to $\mathcal{G}(\Phi',C')$ without error. That is, 
        %There exists a quantum channel $\mathcal{E}$, independent of $u\in\mathbb{R}^p$, such that $\mathcal{E}(\Phi_{Cu})=\Phi'_{C'u}$ for all $u\in\mathbb{R}^p$.
        \item For all $q\in(0,1)$, $\mathcal{F}^{f_q}_{\Phi,C}\geq \mathcal{F}^{f_q}_{\Phi',C'}$, where the two matrices denote the $f_q$-QFI matrices for $\mathcal{G}(\Phi,C)$ and $\mathcal{G}(\Phi',C')$.
        \item There exists a real matrix $X\in \mathbb{R}^{2m'\times 2m}$ such that 
        \begin{align}
            XK=K',\qquad \sigma'+\ii\Omega'\geq X(\sigma +\ii \Omega)X^\top,
        \end{align}
        where
        \begin{align}
            K\coloneqq \sqrt{2}
        \begin{pmatrix}
            \Re(C_{11})&\Re(C_{12})&\cdots&\Re(C_{1p})\\
            \Im(C_{11})&\Im(C_{12})&\cdots&\Im(C_{1p})\\
            \Re(C_{21})&\Re(C_{22})&\cdots&\Re(C_{2p})\\
            \Im(C_{21})&\Im(C_{22})&\cdots&\Im(C_{2p})\\
            \vdots&\vdots & \vdots & \vdots\\
            \Re(C_{m1})&\Re(C_{m2})&\cdots&\Re(C_{mp})\\
            \Im(C_{m1})&\Im(C_{m2})&\cdots&\Im(C_{mp})\\
        \end{pmatrix},\qquad  K'\coloneqq \sqrt{2}
        \begin{pmatrix}
            \Re(C'_{11})&\Re(C'_{12})&\cdots&\Re(C'_{1p})\\
            \Im(C'_{11})&\Im(C'_{12})&\cdots&\Im(C'_{1p})\\
            \Re(C'_{21})&\Re(C'_{22})&\cdots&\Re(C'_{2p})\\
            \Im(C'_{21})&\Im(C'_{22})&\cdots&\Im(C'_{2p})\\
            \vdots&\vdots & \vdots & \vdots\\
            \Re(C'_{m'1})&\Re(C'_{m'2})&\cdots&\Re(C'_{m'p})\\
            \Im(C'_{m'1})&\Im(C'_{m'2})&\cdots&\Im(C'_{m'p})\\
        \end{pmatrix}.
        \end{align}
    \end{enumerate}
\end{theorem}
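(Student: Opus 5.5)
The statement to be proved is the equivalence of conditions (ii) and (iii) in Theorem~\ref{thm:RTA_Gaussian_convertibility_SDP}. The plan is to first write the $f_q$-QFI of a Gaussian shift model in closed form, then prove the two directions separately: (iii)$\Rightarrow$(ii) by monotonicity, and (ii)$\Rightarrow$(iii) by a duality argument on a family of linear matrix inequalities.

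\textbf{Closed form of the QFI.} As in Lemma~\ref{lem:limit_model_QFI_conservation}, with $t\coloneqq 2q-1$ I would use
\begin{align}
\mathcal{F}^{f_q}_{\Phi,C}=2K^\top(\sigma-\ii t\Omega)^{-1}K,
\end{align}
and assume first that $\sigma\pm\ii\Omega>0$. Pure modes are handled at the end by a limiting argument. Condition (ii) then reads $K^\top(\sigma-\ii t\Omega)^{-1}K\geq K'^\top(\sigma'-\ii t\Omega')^{-1}K'$ for all $t\in(-1,1)$.

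\textbf{(iii)$\Rightarrow$(ii).} Given $X$, the Gaussian channel $x\mapsto Xx$ with added noise $Y=\sigma'-X\sigma X^\top$ is well defined, since the condition $Y+\ii\Omega'-\ii X\Omega X^\top\geq0$ is exactly the inequality in (iii). This channel maps $\Phi_{Cu}$ to $\Phi'_{C'u}$ because $XK=K'$. Monotonicity (Proposition~\ref{prop:fq_QFI_monotonicity_general}) then gives (ii).

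One could instead argue directly with a Schur complement: $\sigma'-\ii t\Omega'\geq X(\sigma-\ii t\Omega)X^\top$ for all $|t|\leq1$. This holds at $t=\pm1$ by (iii) and its complex conjugate, so by convexity it holds on the whole interval. Operator monotonicity of the inverse combined with the variational formula $\sup_y\{2\Re y^\dagger K'z-y^\dagger My\}$ then yields the QFI inequality.

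\textbf{(ii)$\Rightarrow$(iii).} This is the main step. Write $M_t\coloneqq\sigma-\ii t\Omega$. I would interpret (ii) as saying that, for each $t$, the Gram-type matrix
\begin{align}
\begin{pmatrix} K^\top M_t^{-1}K & K'^\top\\ K' & M'_t\end{pmatrix}
\end{align}
is positive semidefinite, and then seek a single real $X$ that works simultaneously for all $t$. The feasibility question
\begin{align}
\exists X:\; XK=K',\quad \sigma'+\ii\Omega'-X(\sigma+\ii\Omega)X^\top\geq0
\end{align}
is an SDP once linearized through a Schur complement. Its dual certificate is a Hermitian $Z\geq0$, together with multipliers, that witnesses infeasibility.

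I expect the dual to reduce to a positive combination of the constraints at $t=\pm1$. The key difficulty is that averaging $Z$ with $\bar Z$ produces an effective $t\in[-1,1]$ through the weights $\Tr(Z\,\cdot)$ on the $\Re$ and $\Im$ parts. An infeasibility certificate would therefore contradict (ii) at some single $t$, possibly an endpoint, which is reached by continuity. The crux is showing that the set of dual certificates collapses to a single-$t$ condition. This holds because the imaginary part enters linearly in $t$, so any positive linear functional on the matrix pencil decomposes into a real part and an $\ii\Omega$ part with ratio in $[-1,1]$.

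Concretely, I would first try minimizing $\Tr(W(\sigma'-X\sigma X^\top))$-type objectives over $X$ with $XK=K'$ in closed form; the optimum is attained at $X=K'(K^\top M_t^{-1}K)^{-1}K^\top M_t^{-1}$ for real-symmetrized $t$. I would then use a minimax theorem (Sion) over $t\in[-1,1]$ to exchange sup and inf. This minimax exchange is the main obstacle. The degenerate cases, namely singular $K^\top M_t^{-1}K$ and pure modes at $t=\pm1$, would be handled by a perturbation $\sigma\to\sigma+\delta I$ and compactness of the feasible $X$ set as $\delta\to0$.
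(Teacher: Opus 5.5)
Your (iii)$\Rightarrow$(ii) argument is fine. The direct matrix version is exactly the paper's: the inequality holds at $t=\pm1$, hence on $|t|\le1$ because it is affine in $t$, and you then invert via Lemma~\ref{lem:ordering_inversion} and sandwich with $K$. The channel version cannot simply cite Proposition~\ref{prop:fq_QFI_monotonicity_general}, which is finite-dimensional; the paper proves monotonicity for Gaussian shift models separately.

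The genuine gap is in (ii)$\Rightarrow$(iii), at the step you yourself call the crux.

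\textbf{Why the certificate does not collapse to one $t$.} A dual certificate for your real-$X$ LMI is a Hermitian $Z=Z_R+\ii Z_I\ge0$, with $Z_R$ real symmetric and $Z_I$ real antisymmetric. When $Z_R>0$ one has $Z_I=Z_R^{1/2}TZ_R^{1/2}$ with $T$ real antisymmetric and $\|T\|\le1$. In general $T$ has several distinct singular values $t_j$. So the ``ratio'' between the $\Re$ part and the $\ii\Omega$ part is a matrix, not a scalar, and the certificate superposes constraints at several values of $t$ at once. Equivalently, the multipliers $Z$ and $\bar Z$ of the constraint and its conjugate are not proportional. Linearity in $t$ does not reduce this to a single $t$.

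\textbf{Why the Sion route fails.} For fixed $X$, the violation $\lambda_{\max}\bigl(X(\sigma-\ii t\Omega)X^\top-\sigma'+\ii t\Omega'\bigr)$ is a maximum of affine functions of $t$. It is therefore convex, not quasi-concave, in $t$, so Sion's hypothesis fails on the $t$ side. Exchanging $\inf_X$ and $\sup_t$ is the whole content of the theorem.

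\textbf{No real $X$.} Your single-$t$ minimizer $K'(K^\top M_t^{-1}K)^{-1}K^\top M_t^{-1}$ is complex for $t\ne0$, and nothing in the plan produces a real $X$.

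\textbf{The missing idea: the decoupling in Theorem~\ref{thm:existence_complex_matrix}.} After this decoupling a single failing $t$ can indeed be extracted, since a negative total margin forces one negative summand. So your intuition is right in spirit, but the decoupling is where the work lies. The paper proceeds as follows.
\begin{enumerate}[(1)]
\item Work with complex $Y$ and two constraints, $P_2\ge YP_1Y^\dagger$ and $Q_2\ge YQ_1Y^\dagger$. Here $P_1=\sigma+\ii s\Omega$, $Q_1=\sigma-\ii s\Omega$, and $P_2,Q_2$ are built likewise from $(\sigma',\Omega')$, with $s<1$. Both are then positive definite, and (ii) is needed only on $|t|\le s$.
\item Separation yields multipliers $R,S\ge0$. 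Set $W=R+S$ and diagonalize $W^{-1/2}RW^{-1/2}=\sum_j\lambda_j|e_j\rangle\langle e_j|$. This gives $R=\sum_j\lambda_j|w_j\rangle\langle w_j|$ and $S=\sum_j(1-\lambda_j)|w_j\rangle\langle w_j|$, with $w_j=W^{1/2}e_j$ and dual vectors $\tilde w_j=W^{-1/2}e_j$.
\item The dual objective now splits row by row into independent single-$\lambda_j$ problems. Each is solved by Douglas' lemma (Lemma~\ref{lem:single_inversion}), giving witnesses $Y_{\lambda_j}$.
\item Glue the witnesses as $Y=\sum_j|\tilde w_j\rangle\langle w_j|Y_{\lambda_j}$. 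This satisfies $YK=K'$ and contradicts the separation.
\item A real matrix is $X_s=(Y_s+\overline{Y_s})/2$, which still satisfies the constraint by operator convexity of $Y\mapsto YHY^\dagger$.
\item The endpoint inequality at $t=1$ follows by letting $s\to1$, using the uniform bound $\|X_s\|^2\le\|\sigma'\|/\lambda_{\min}(\sigma)$.
\end{enumerate}

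\textbf{The regularization.} Your perturbation $\sigma\to\sigma+\delta I$ of the input lowers the input QFI, since $(\sigma+\delta I-\ii t\Omega)^{-1}\le(\sigma-\ii t\Omega)^{-1}$, and so can destroy hypothesis (ii). Restricting to $|t|\le s<1$ handles the pure-mode degeneracy without altering the hypothesis.
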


Using the above theorem, we derive a semidefinite programming formulation of the conversion rate. To this end, we introduce notation. For a quantum state $\rho$ on a finite-dimensional Hilbert space $\mathcal{H}$ of dimension $d$, let $\rho=\sum_{i=1}^d\mu_i\ket{i}\bra{i}$ be the eigenvalue decomposition. We define
\begin{align}
     \mathcal{J}_\rho&\coloneqq \{(k,l)\colon \mu_k>\mu_l\},\qquad m_\rho\coloneqq \max\{1,|\mathcal{J}_\rho|\}.
\end{align}
If $\mathcal{J}_\rho\neq \emptyset$, we define $C\in\mathbb{C}^{m_\rho\times p}$ and $K\in\mathbb{R}^{2m_\rho\times p}$ by
\begin{align}
    C_{(k,l),i}&\coloneqq \ii \sqrt{\mu_k-\mu_l}\braket{l|X_i|k},\qquad  K\coloneqq \sqrt{2}
        \begin{pmatrix}
            \Re(C_{11})&\Re(C_{12})&\cdots&\Re(C_{1p})\\
            \Im(C_{11})&\Im(C_{12})&\cdots&\Im(C_{1p})\\
            \Re(C_{21})&\Re(C_{22})&\cdots&\Re(C_{2p})\\
            \Im(C_{21})&\Im(C_{22})&\cdots&\Im(C_{2p})\\
            \vdots&\vdots & \vdots & \vdots\\
            \Re(C_{m_\rho 1})&\Re(C_{m_\rho 2})&\cdots&\Re(C_{m_\rho p})\\
            \Im(C_{m_\rho 1})&\Im(C_{m_\rho 2})&\cdots&\Im(C_{m_\rho p})\\
        \end{pmatrix},\\
    \sigma&\coloneqq \bigoplus_{(k,l)\in\mathcal{J}_\rho}\frac{\mu_k+\mu_l}{\mu_k-\mu_l}I_2,\qquad \Omega\coloneqq \bigoplus_{(k,l)\in\mathcal{J}_\rho}
    \begin{pmatrix}
        0&1\\
        -1&0
    \end{pmatrix},\qquad B\coloneqq \sigma+\ii\Omega,
\end{align}
If $\mathcal{J}_\rho=\emptyset$, we set
\begin{align}
    C=0_{1\times p},\qquad K=0_{2\times p},\qquad \sigma=I_2,\qquad \Omega=
    \begin{pmatrix}
        0&1\\
        -1&0
    \end{pmatrix},\qquad B\coloneqq \sigma+\ii\Omega.
\end{align}

For a primed state $\rho'$ on another finite-dimensional Hilbert space $\mathcal{H}'$, we define $\mathcal{J}_{\rho'}$, $m_{\rho'}$, $C'$, $K'$, $\sigma'$, $\Omega'$ and $B'$ in the same way. By construction, $B\geq 0$ and $B'\geq 0$. 

\begin{proposition}[Semidefinite programming]
    Let $U$ and $U'$ be continuous nonprojective unitary representations of a compact Lie group $G$ on finite-dimensional Hilbert spaces $\mathcal{H}$ and $\mathcal{H}'$, respectively. Set $p\coloneqq \dim G$. Using local coordinates $g(\theta)$ around the identity satisfying $g(0)=e$, we define $X_i\coloneqq -\ii\partial_{\theta_i}U(g(\theta))|_{\theta=0}$ and $X_i'\coloneqq -\ii\partial_{\theta_i}U'(g(\theta))|_{\theta=0}$.  Let $\rho$ and $\rho'$ be quantum states on $\mathcal{H}$ and $\mathcal{H}'$, respectively, satisfying $\Sym_G(\rho)\subset \Sym_G(\rho')$. Then
    \begin{align}
        R(\rho\to\rho')=r_{\mathrm{SDP}},
    \end{align}
    where
    \begin{align}
        r_{\mathrm{SDP}}&\coloneqq \sup_{r,L} \,r\\
        \mathrm{subject \,\, to}&\, \, r\geq 0,\qquad L\in\mathbb{R}^{2m_{\rho'}\times 2m_\rho},\\
        &\, \, LK=rK'\\
        &\, \,\mathcal{M}(r,L)\coloneqq 
        \begin{pmatrix}
            rB'&LB\\
            BL^\top & B
        \end{pmatrix}\geq 0.
    \end{align}
\end{proposition}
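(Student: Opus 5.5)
We prove $R(\rho\to\rho')=r_{\mathrm{SDP}}$ by showing that, for fixed $r\ge0$, the following are equivalent:
\begin{itemize}
\item[(a)] $\mathcal{Q}^q_\rho\ge r\mathcal{Q}^q_{\rho'}$ for all $q\in[1/2,1]$;
\item[(b)] there exists $L$ with $LK=rK'$ and $\mathcal{M}(r,L)\ge0$.
\end{itemize}
Since the symmetry-subgroup inclusion holds, Theorem~\ref{thm:main_theorem_conversion_rate} then identifies $R(\rho\to\rho')$ with the supremum of the $r$ satisfying (b), which is $r_{\mathrm{SDP}}$.

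\emph{Translate (a) to Gaussian convertibility.} By Eq.~\eqref{eq:QGTordering_QFIordering}, (a) is equivalent to $\mathcal{F}^{f_q}_\rho\ge r\mathcal{F}^{f_q}_{\rho'}$ for all $q\in(0,1)$. Lemma~\ref{lem:limit_model_QFI_conservation} rewrites this as
\[
\mathcal{F}^{f_q}_{\Phi^\rho,C}\ge\mathcal{F}^{f_q}_{\Phi^{\rho'},\sqrt{r}C'} \quad\text{for all } q\in(0,1).
\]
The degenerate case $\mathcal{J}_\rho=\emptyset$ (or the analogous one for $\rho'$) needs separate handling. There we use the dummy one-mode vacuum-like data with $K=0$, whose Fisher information is zero. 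This is consistent, since the $f_q$-QFI of a maximally mixed state vanishes.

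\emph{Apply the Gaussian convertibility theorem.} Theorem~\ref{thm:RTA_Gaussian_convertibility_SDP} makes the displayed condition equivalent to the existence of a real $X$ with
\[
X K=\sqrt{r}K',\qquad B'\ge XBX^\top,
\]
where we used $\sigma+\ii\Omega=B$ and the fact that the $K$-matrix of $\sqrt{r}C'$ is $\sqrt{r}K'$.

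\emph{Linearize in $r$.} For $r>0$, set $L\coloneqq\sqrt{r}X$. Then $LK=rK'$, and the inequality becomes $rB'\ge LBL^\top$. The map is bijective, so it suffices to show that $rB'\ge LBL^\top$ is equivalent to $\mathcal{M}(r,L)\ge0$. This is a generalized Schur complement: $B\ge0$, and the range condition $\Ran(BL^\top)\subset\Ran B$ holds automatically. Hence
\[
\mathcal{M}\ge0 \iff rB'-LB\,B^{+}BL^\top\ge0,
\]
and $BB^+B=B$ reduces the right-hand side to $rB'\ge LBL^\top$. The case $r=0$ is trivially feasible with $L=0$. This shows (a)$\Leftrightarrow$(b) for every $r$, so the two suprema coincide.

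\emph{Expected obstacle.} The main difficulty is bookkeeping rather than analysis. $B$ is complex Hermitian while $L$ is real, so the Schur complement must be taken in the Hermitian PSD sense; that is fine, because $L^\top=L^\dagger$. The other point requiring care is checking that the padded data $m_\rho=\max\{1,|\mathcal{J}_\rho|\}$ used in the degenerate cases still satisfy Theorem~\ref{thm:RTA_Gaussian_convertibility_SDP}. Here the choice $\sigma=I_2$ makes the dummy mode pure, and since $K=0$ the mode carries no parameter dependence.
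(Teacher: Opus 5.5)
Your proposal is correct and follows the paper's proof essentially step for step: the same reduction via Eq.~\eqref{eq:QGTordering_QFIordering}, Lemma~\ref{lem:limit_model_QFI_conservation} and Theorem~\ref{thm:RTA_Gaussian_convertibility_SDP} with $C'$ replaced by $\sqrt{r}C'$, the same dummy-mode convention for $\mathcal{J}_\rho=\emptyset$, the same substitution $L=\sqrt{r}X$, and the same trivial $r=0$ case. The only difference is in the last step. To handle a possibly singular $B$, you use the generalized Schur complement (the range condition together with $BB^{+}B=B$), whereas the paper uses the explicit real block congruence $T_L\mathcal{M}(r,L)T_L^\top=\mathrm{diag}(rB'-LBL^\top,B)$, which reaches the same equivalence without pseudo-inverses.
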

\begin{proof}
    Fix $r>0$. By Eq.~\eqref{eq:QGTordering_QFIordering}, Lemma~\ref{lem:limit_model_QFI_conservation} and Theorem~\ref{thm:RTA_Gaussian_convertibility_SDP}, applied to the output Gaussian model with $C'$ replaced by $\sqrt{r}C'$, we have
    \begin{align}
        \mathcal{Q}_\rho^{q}\geq r\mathcal{Q}_{\rho'}^{q},\qquad \forall q\in[1/2,1]\qquad 
        \Longleftrightarrow\qquad \exists X\in \mathbb{R}^{2m_{\rho'}\times 2m_\rho} \quad \text{such that }\quad 
            XK=\sqrt{r}K',\qquad B'\geq XBX^\top.
    \end{align}
    Note that if $\mathcal{J}_\rho=\emptyset$, the dummy Gaussian model introduced above has $K=0$ and hence zero $f_q$-QFI, in agreement with $\mathcal{F}^{f_q}_\rho=0$. The same observation applies to $\rho'$ when $\mathcal{J}_{\rho'}=\emptyset$.

    Making the invertible change of variable $L\coloneqq \sqrt{r}X$, we obtain
    \begin{align}
        \mathcal{Q}_\rho^{q}\geq r\mathcal{Q}_{\rho'}^{q},\qquad \forall q\in[1/2,1]\qquad 
        \Longleftrightarrow\qquad \exists L\in \mathbb{R}^{2m_{\rho'}\times 2m_\rho} \quad \text{such that }\quad 
            LK=rK',\qquad rB'-LBL^\top\geq 0.
    \end{align}
    For any real $L$, the matrix
    \begin{align}
        T_L\coloneqq 
        \begin{pmatrix}
            I_{2m_{\rho'}}&-L\\
            0&I_{2m_{\rho}}
        \end{pmatrix}
    \end{align}
    is invertible and satisfies
    \begin{align}
        T_L\mathcal{M}(r,L)T_L^\top=
        \begin{pmatrix}
            rB'-LBL^\top&0\\
            0& B
        \end{pmatrix}.
    \end{align}
    Since $B\geq 0$ and $T_L$ is invertible, we have
    \begin{align}
        \mathcal{M}(r,L)\geq 0\Longleftrightarrow rB'-LBL^\top\geq 0.
    \end{align}
    Consequently, for every $r>0$,
    \begin{align}
        \mathcal{Q}_\rho^{q}\geq r\mathcal{Q}_{\rho'}^{q},\qquad \forall q\in[1/2,1]\qquad 
        \Longleftrightarrow\qquad \exists L\in \mathbb{R}^{2m_{\rho'}\times 2m_\rho} \quad \text{such that }\quad 
            LK=rK',\qquad \mathcal{M}(r,L)\geq 0.
    \end{align}
    
    At $r=0$, the metric adjusted QGT condition holds automatically, while the semidefinite-program constraints are feasible with $L=0$.

    By the assumed inclusion $\Sym_G(\rho)\subset\Sym_G(\rho')$ and the conversion-rate formula,
    \begin{align}
        R(\rho\to\rho')&=\sup\{r\geq 0\colon \forall q\in [1/2,1],\,  \mathcal{Q}_\rho^{q}\geq r\mathcal{Q}_{\rho'}^{q}\}\\
        &=\sup\{r\geq 0\colon  \exists L\in \mathbb{R}^{2m_{\rho'}\times 2m_\rho} \, \text{such that }\,
            LK=rK',\, \mathcal{M}(r,L)\geq 0\}=r_{\mathrm{SDP}}.
    \end{align}

    Finally, we verify that the above optimization problem can be written in the standard form of a real semidefinite program. Since $B=\sigma+\ii\Omega$ and $B'=\sigma'+\ii\Omega'$, we write
    \begin{align}
        \mathcal{M}(r,L)
        =\mathcal{S}(r,L)+\ii \mathcal{A}(r,L),
    \end{align}
    where
    \begin{align}
        \mathcal{S}(r,L)
        \coloneqq
        \begin{pmatrix}
            r\sigma'&L\sigma\\
            \sigma L^\top&\sigma
        \end{pmatrix},\qquad 
        \mathcal{A}(r,L)
        \coloneqq
        \begin{pmatrix}
            r\Omega'&L\Omega\\
            \Omega L^\top&\Omega
        \end{pmatrix}
    \end{align}
    satisfying
    \begin{align}
        \mathcal{S}(r,L)^\top =\mathcal{S}(r,L),\qquad \mathcal{A}(r,L)^\top =-\mathcal{A}(r,L).
    \end{align}
    For any $z=x+\ii y$ with real vectors $x$ and $y$, we have
    \begin{align}
        z^\dagger\mathcal{M}(r,L)z=
        \begin{pmatrix}
            x\\
            y
        \end{pmatrix}^{\top}
       \widehat{\mathcal{M}}(r,L)
        \begin{pmatrix}
            x\\
            y
        \end{pmatrix},\qquad 
         \widehat{\mathcal{M}}(r,L)\coloneqq 
         \begin{pmatrix}
            \mathcal{S}(r,L)&-\mathcal{A}(r,L)\\
            \mathcal{A}(r,L)&\mathcal{S}(r,L)
        \end{pmatrix}
    \end{align}
    
    Therefore,
    \begin{align}
        \mathcal{M}(r,L)\geq0 \quad\Longleftrightarrow\quad \widehat{\mathcal{M}}(r,L)\geq0.
    \end{align}
    Since $\widehat{\mathcal{M}}(r,L)$ is a real symmetric matrix affine in the real variables $(r,L)$, while the objective and the constraint $LK=rK'$ are linear, the optimization problem is a finite-dimensional semidefinite program in the standard real form~\cite{boyd_ConvexOptimization_2004a}.
\end{proof}

\begin{remark}[Continuous projective representations]
The result extends to continuous projective unitary representations $U$ and $U'$ by the lifting construction in Appendix~\ref{sec:extension_proj_rep}.
Let $d\coloneqq\dim\mathcal{H}$ and $d'\coloneqq\dim\mathcal{H}'$, and define
\begin{align}
    \tilde{U}_g\coloneqq \frac{U_g^{\otimes d}}{\det U_g},\qquad\tilde{U}'_g\coloneqq \frac{(U'_g)^{\otimes d'}}{\det U'_g}.
\end{align}
These are continuous nonprojective unitary representations, and the lifted states have the same symmetry subgroups as the original states. Under the inclusion $\Sym_{G,U}(\rho)\subseteq\Sym_{G,U'}(\rho')$, we therefore obtain
\begin{align}
     R_{U,U'}(\rho\to\rho')=\frac{d'}{d}\,\tilde{r}_{\mathrm{SDP}},
\end{align}
where $\tilde{r}_{\mathrm{SDP}}$ is the value of the above semidefinite program evaluated for the states $\rho^{\otimes d}$ and $\rho'^{\otimes d'}$ with representations $\tilde{U}$ and $\tilde{U}'$, respectively.
\end{remark}

\subsection{Derivation of Eq.~\eqref{eq:Fq_SLD_inequality}}
We prove
\begin{align}
    \Re \mathcal{Q}_\rho^q\leq \mathcal{Q}_\rho^{1/2},\qquad \forall q\in[0,1],\label{eq:QGT_q_QGT_SLD}
\end{align}
from which Eq.~\eqref{eq:Fq_SLD_inequality} follows for $U(1)$. 

For a $p$-parameter family of states $\rho_{\theta}$ with $\theta\in\mathbb{R}^p$, the associated $f_q$-QFI $\mathcal{F}_{\rho}^{f_q}$ at $\rho\coloneqq\rho_{\theta_0}$ is given by a $p\times p$ matrix
\begin{align}
    \left(\mathcal{F}_{\rho}^{f_q}\right)_{ij}\coloneqq \sum_{\substack{k,l=1;\\(1-q)\mu_l+q\mu_k>0} }^d\frac{\braket{l|\partial_i\rho|k}\braket{k|\partial_j\rho|l}}{(1-q)\mu_l+q\mu_k}
\end{align}
for $i,j=1,\ldots,p$, where $\partial_i\rho\coloneqq \partial_{\theta_i}\rho_{\theta}|_{\theta=\theta_0}$ and $\rho=\sum_{i=1}^d \mu_i\ket{i}\bra{i}$ denotes the eigenvalue decomposition of $\rho$. For a \textit{real} vector $u\in\mathbb{R}^p$, we have
\begin{align}
    u^\top \mathcal{F}_{\rho}^{f_q} u&= \sum_{\substack{k,l=1;\\(1-q)\mu_l+q\mu_k>0} }^d\frac{\braket{l|\partial_u\rho|k}\braket{k|\partial_u\rho|l}}{(1-q)\mu_l+q\mu_k},
\end{align}
where $\partial_u\rho \coloneqq \sum_{i=1}^pu_i\partial_i \rho$. Since $q\in(0,1)$, we have
\begin{align}
    (1-q)\mu_l+q\mu_k>0\Longleftrightarrow\mu_k+\mu_l>0\Longleftrightarrow (1-q)\mu_k+q\mu_l>0.
\end{align}
Moreover, since $\partial_u\rho$ is Hermitian, $\braket{l|\partial_u\rho|k}\braket{k|\partial_u\rho|l}=|\braket{k|\partial_u\rho|l}|^2$. Therefore, 
\begin{align}
    u^\top \mathcal{F}_{\rho}^{f_q} u
    &=\frac{1}{2} \sum_{\substack{k,l=1;\\(1-q)\mu_l+q\mu_k>0} }^d\left(\frac{1}{(1-q)\mu_l+q\mu_k}+\frac{1}{(1-q)\mu_k+q\mu_l}\right)|\braket{k|\partial_u\rho|l}|^2.
\end{align}
Since
\begin{align}
    \frac{1}{(1-q)\mu_l+q\mu_k}+\frac{1}{(1-q)\mu_k+q\mu_l}=\frac{\mu_k+\mu_l}{\mu_k\mu_l+q(1-q)(\mu_k-\mu_l)^2}
\end{align}
and
\begin{align}
    \mu_k\mu_l+q(1-q)(\mu_k-\mu_l)^2-q(1-q)(\mu_k+\mu_l)^2=\mu_k\mu_l-4q(1-q)\mu_k\mu_l=(2q-1)^2\mu_k\mu_l\geq 0,
\end{align}
we get
\begin{align}
    \frac{1}{(1-q)\mu_l+q\mu_k}+\frac{1}{(1-q)\mu_k+q\mu_l}\leq \frac{1}{q(1-q)(\mu_k+\mu_l)}.
\end{align}
Consequently,
\begin{align}
    4q(1-q)u^\top \mathcal{F}_{\rho}^{f_q} u\leq  \sum_{\substack{k,l=1;\\(1-q)\mu_l+q\mu_k>0} }^d\frac{2}{\mu_k+\mu_l}|\braket{k|\partial_u\rho|l}|^2=u^\top  \mathcal{F}_{\rho}^{f_{1/2}}u,
\end{align}
Since $\mathcal{F}_{\rho}^{f_q}$ is a Hermitian matrix, $u^\top \mathcal{F}_{\rho}^{f_q}u=u^\top \Re\mathcal{F}_{\rho}^{f_q} u$. Therefore, $4q(1-q) \Re \mathcal{F}_{\rho}^{f_q}\leq \Re \mathcal{F}_{\rho}^{f_{1/2}}$. 
Since $\Re \mathcal{F}_{\rho}^{f_{1/2}}=\mathcal{F}_{\rho}^{f_{1/2}}$ for the SLD QFI, we obtain
\begin{align}
     4q(1-q) \Re \mathcal{F}_{\rho}^{f_q}\leq  \mathcal{F}_{\rho}^{f_{1/2}},
\end{align}
or equivalently,
\begin{align}
    \Re \mathcal{Q}_\rho^q\leq \mathcal{Q}_\rho^{1/2}
\end{align}
for $q\in(0,1)$. Continuously extending to endpoints $q=0,1$, we obtain Eq.~\eqref{eq:QGT_q_QGT_SLD}. 

For $U(1)$, the tensors are scalars, and hence $\Re \mathcal{Q}_\rho^q=\mathcal{Q}_\rho^q$, proving Eq.~\eqref{eq:Fq_SLD_inequality}.

\clearpage
\hypertarget{Part2}{}
\begin{center}    
\textbf{{\large \underline{Part II: Convertibility between Gaussian shift models}}} 
\end{center}

This part proves a necessary and sufficient condition for convertibility between Gaussian shift models, expressed by a one-parameter family of quantum Fisher information.

\AppendixTOCTwo{Contents of Part II}

% \section{Convertibility between Gaussian shift models}

\section{Covariance matrix and symplectic form}
Let us first review the basics of quantum continuous variable systems (for more details, see, e.g., Ref.~\cite{weedbrook_GaussianQuantumInformation_2012,serafiniQuantumContinuousVariables2017}). 
Let $r\coloneqq (x_1,p_1,\ldots,x_m,p_m)^\top$ be the vector of canonical operators for an $m$-mode bosonic system, which satisfies the canonical commutation relation
\begin{align}
    [r_i,r_j]=\ii\Omega_{ij},
\end{align}
where $\Omega$ is a $2m\times 2m$ real antisymmetric matrix called the symplectic form, given by
\begin{align}
    \Omega\coloneqq \bigoplus_{i=1}^m
    \begin{pmatrix}
        0&1\\
        -1&0
    \end{pmatrix}.
\end{align}

Let $\Phi$ be a Gaussian state of a system consisting of $m\in\mathbb{Z}_{\geq 1}$ bosonic modes. A Gaussian state is fully characterized by its first and centered second moments, represented by the displacement vector $d\in\mathbb{R}^{2m}$ and the covariance matrix $\sigma\in\mathbb{R}^{2m\times 2m}$, respectively, whose elements are defined by
\begin{align}
    d_i\coloneqq \Tr(\Phi r_i),\qquad \sigma_{ij}\coloneqq \Tr(\Phi \{r_i-d_i,r_j-d_j\})=\Tr\left(\Phi((r_i-d_i)(r_j-d_j)+(r_j-d_j)(r_i-d_i))\right).
\end{align}
For a fixed Gaussian state $\Phi$, one can shift quadratures by
\begin{align}
    r\mapsto \tilde{r}\coloneqq r-d,\label{eq:shift_quadrature}
\end{align}
so that the first moments $\Tr(\Phi \tilde{r}_i)$ vanish. With the above convention, the ordered second moments are
\begin{align}
    \Tr(\Phi\tilde{r}_i\tilde{r}_j)=\frac{1}{2}(\sigma_{ij}+\ii\Omega_{ij}),\qquad \Tr(\Phi\tilde{r}_j\tilde{r}_i)=\frac{1}{2}(\sigma_{ij}-\ii\Omega_{ij}).
\end{align}

A real symmetric matrix $\sigma$ is a valid covariance matrix of a Gaussian state if and only if it satisfies the uncertainty condition~\cite{simon_QuantumnoiseMatrixMultimodesystemsinvariance_1994, weedbrook_GaussianQuantumInformation_2012,serafiniQuantumContinuousVariables2017}:
\begin{align}
    \sigma +\ii\Omega\geq 0.\label{eq:uncertainty_plus}
\end{align}
We call such a real symmetric matrix physical. 
Taking the complex conjugate, it also holds
\begin{align}
    \sigma -\ii\Omega\geq 0.\label{eq:uncertainty_minus}
\end{align}
The uncertainty condition also implies $\sigma>0$. Indeed, if a real vector satisfied $v^\top \sigma v=0$, then $v^\dag (\sigma+\ii \Omega )v=0$. Then, positivity of $\sigma+\ii \Omega$ implies $(\sigma+\ii\Omega)v=0$, whose imaginary part gives $\Omega v=0$. This leads to $v=0$ because the symplectic form is non-singular,  

A quantum channel that maps Gaussian states to Gaussian states is called a Gaussian channel~\cite{giedke_CharacterizationGaussianOperationsdistillationGaussian_2002}. A Gaussian channel $\mathcal{E}_{\mathrm{G}}$ from $m$ input modes to $m'$ output modes is completely characterized by $X\in\mathbb{R}^{2m'\times 2m}$, $Y=Y^\top\in\mathbb{R}^{2m'\times 2m'}$, and $l\in\mathbb{R}^{2m'}$, satisfying
\begin{align}
    Y+\ii (\Omega' - X \Omega X^\top)\geq 0,
\end{align}
where $\Omega$ and $\Omega'$ denote $2m\times 2m$ and $2m'\times2m'$ symplectic forms, respectively~\cite{holevo_EvaluatingCapacitiesBosonicGaussianchannels_2001,weedbrook_GaussianQuantumInformation_2012}. 
Under this channel, the displacement vector and the covariance matrix transform as
\begin{align}
    d\mapsto Xd+l,\qquad \sigma\mapsto X\sigma X^\top+Y.\label{eq:dynamics_gaussian_channel}
\end{align}
In this case, we write $\mathcal{E}_{\mathrm{G}}=(X,Y,l)$ for short. 

\section{Gaussian shift model and quantum Fisher information matrices}
A Gaussian shift model is a parametric family of Gaussian states generated by displacement operators. Let $\Phi$ be an $m$-mode Gaussian state that serves as a reference. By shifting the quadratures as in Eq.~\eqref{eq:shift_quadrature}, we can assume that the expectation values of canonical variables $r$ vanish for $\Phi$. 
A displacement operator is defined for a complex amplitude $z=(z_1,\ldots,z_m)\in\mathbb{C}^m$ by
\begin{align}
    D(z)\coloneqq \exp\left(\sum_{j=1}^m\left(z_ja_j^\dag-\overline{z_j}a_j\right)\right)=\bigotimes_{j=1}^m\left(\exp\left(z_ja_j^\dag-\overline{z_j}a_j\right)\right),
\end{align}
where creation and annihilation operators are related to canonical operators by
\begin{align}
    a_j^\dag \coloneqq \frac{1}{\sqrt{2}}(x_j-\ii p_j),\qquad  a_j\coloneqq \frac{1}{\sqrt{2}}(x_j+\ii p_j).
\end{align}
When a state $\Phi$ evolves into $D(z)\Phi D(z)^\dag$, the first moments shift by
\begin{align}
    k\coloneqq \sqrt{2}
    \begin{pmatrix}
        \Re z_1&
        \Im z_1&
        \Re z_2&
        \Im z_2&
        \cdots&
        \Re z_m&
        \Im z_m
    \end{pmatrix}^\top,
    % k\coloneqq \sqrt{2}
    % \begin{pmatrix}
    %     \Re z_1\\
    %     \Im z_1\\
    %     \Re z_2\\
    %     \Im z_2\\
    %     \vdots\\
    %     \Re z_m\\
    %     \Im z_m\\
    % \end{pmatrix},
    \label{eq:displacement_complex_amplitude}
\end{align}
while the covariance matrix remains unchanged. Note that the displacement operator can also be written as
\begin{align}
    D(z)=\exp\left(-\ii k^\top \Omega r\right).\label{eq:displacement_real_shift}
\end{align}

For a given $m$-mode Gaussian state $\Phi$, we denote the displaced state by
\begin{align}
    \Phi_z\coloneqq D(z)\Phi D(z)^\dag,\qquad z\in\mathbb{C}^m.
\end{align}
For an integer $p\geq 1$, and a matrix $C\in\mathbb{C}^{m\times p}$, we denote
\begin{align}
    \mathcal{G}(\Phi,C)\coloneqq \{ \Phi_{Cu}\}_{u\in \mathbb{R}^p}.
\end{align}
Here, we remark that $\Phi$ and $C$ are fixed parameters specifying the model, while $u\in \mathbb{R}^p$ is regarded as an unknown parameter. 
The matrix $C$ allows for a real-linear dependence of the complex displacement amplitude on the parameter $u$. 
For example, if $p=2m$ and 
\begin{align}
    C=C_{\mathrm{bj}}^{(m)}\coloneqq 
    \begin{pmatrix}
        1&\ii&0&0&\cdots &0&0\\
        0&0&1&\ii&\cdots&0&0\\
        0&0&0&0&\cdots&1&\ii
    \end{pmatrix},\label{eq:definition_bijective_C}
\end{align}
then the map $u\mapsto Cu$ is a bijection from $\mathbb{R}^{2m}$ to $\mathbb{C}^m$. Under this parameterization, the shift is expressed as $k=Ku$, where $K\in\mathbb{R}^{2m\times p}$ is defined by
\begin{align}
    K\coloneqq \sqrt{2}
    \begin{pmatrix}
        \Re(C_{11})&\Re(C_{12})&\cdots&\Re(C_{1p})\\
        \Im(C_{11})&\Im(C_{12})&\cdots&\Im(C_{1p})\\
        \Re(C_{21})&\Re(C_{22})&\cdots&\Re(C_{2p})\\
        \Im(C_{21})&\Im(C_{22})&\cdots&\Im(C_{2p})\\
        \vdots&\vdots & \vdots & \vdots\\
        \Re(C_{m1})&\Re(C_{m2})&\cdots&\Re(C_{mp})\\
        \Im(C_{m1})&\Im(C_{m2})&\cdots&\Im(C_{mp})\\
    \end{pmatrix}.\label{eq:definition_K_from_C}
\end{align}

Let us derive the quantum Fisher information matrix. Fix $q\in(0,1)$ and set $t\coloneqq 2q-1\in (-1,1)$. For $u\in\mathbb{R}^p$, we write $\rho_u\coloneqq \Phi_{Cu}$ and $X_i(u)\coloneqq \partial_{u_i}\rho_u$. We seek operators $\{L_i(u)\}_{i=1}^p$ satisfying
\begin{align}
    X_i(u)=q \rho_u L_i(u) +(1-q)L_i(u) \rho_u.\label{eq:X_iqphi}
\end{align}
For a trace-class operator $T$, we define its Fourier--Weyl transform by~\cite{holevoProbabilisticStatisticalAspects2011,ferraro_GaussianStatesQuantumInformation_2005} 
\begin{align}
    \chi_T(\xi)\coloneqq \Tr(T W(\xi)),\qquad W(\xi)\coloneqq e^{\ii \xi^\top r},\qquad \xi\in\mathbb{R}^{2m}.
\end{align}
For density operators this is the standard characteristic function.
The Fourier--Weyl transform is injective on the trace class~\cite[Corollary~5.3.5]{holevoProbabilisticStatisticalAspects2011}. Hence, if two trace-class operators have the same Fourier--Weyl transform for all $\xi\in\mathbb{R}^{2m}$, then they are equal.

We show that both $\rho_u O$ and $O\rho_u$ are trace-class for any operator $O$ given by a linear combination of quadratures. Since any Gaussian state has finite first and second moments, 
\begin{align}
    \|\rho_u^{1/2}O\|_2^2=\Tr (\rho_u OO^\dag)<\infty,\qquad\|O\rho_u^{1/2}\|_2^2 =\Tr(\rho_u O^\dag O)<\infty.
\end{align}
Therefore, from H\"older inequality, we have
\begin{align}
    \|\rho_uO\|_1\leq  \|\rho_u^{1/2}\|_2\|\rho_u^{1/2}O\|_2<\infty,\qquad \|O\rho_u\|_1\leq \|\rho_u^{1/2}\|_2\|O\rho_u^{1/2}\|_2<\infty,
\end{align}
where we used $\|\rho_u^{1/2}\|_2^2=\Tr(\rho_u)=1$. Therefore, both $\rho_u O$ and $O\rho_u$ are trace-class. Since
\begin{align}
    X_i(u)=-\ii [(K^\top \Omega r)_i,\rho_u],
\end{align}
the operator $X_i(u)$ is trace-class. Using a linear ansatz 
\begin{align}
    L_i(u)=\ell^{(i)\top}(r-Ku),\qquad \ell^{(i)}\in\mathbb{C}^{2m},
\end{align}
the operator $q \rho_u L_i(u) +(1-q)L_i(u) \rho_u$ is trace-class. Therefore, it suffices to find $\ell^{(i)}$ satisfying
\begin{align}
    \chi_{X_i(u)}(\xi)=\chi_{q \rho_u L_i(u) +(1-q)L_i(u) \rho_u}(\xi),\qquad  \xi\in\mathbb{R}^{2m}.
\end{align}

For the Gaussian state $\Phi_{Cu}$, we have
\begin{align}
    \chi_{\rho_u}(\xi)=\exp\left(\ii\xi^\top K u-\frac{1}{4}\xi^\top \sigma \xi\right).
\end{align}
Differentiating with respect to $u_i$, we get
\begin{align}
   \chi_{X_i(u)}(\xi)=\Tr(\partial_{u_i}\rho_u e^{\ii \xi^\top r})= \partial_{u_i} \chi_{\rho_u}(\xi)=\ii \sum_j\xi_jK_{ji}\chi_{\rho_u}(\xi).\label{eq:char_diff}
\end{align}

For $W(\xi)\coloneqq e^{\ii\xi^\top r}$, the canonical commutation relation implies $W(\xi)W(\eta)=e^{-\frac{\ii}{2}\xi^\top \Omega\eta}W(\xi+\eta)$.
For $\eta=he_j$, where $e_j$ is the $j$th unit basis vector, we have
\begin{align}
    W(\xi+he_j)=e^{-\frac{\ii h}{2}( \Omega\xi)_j}W(\xi)e^{\ii hr_j},\qquad W(\xi+he_j)=e^{\frac{\ii h}{2}( \Omega\xi)_j}e^{\ii hr_j}W(\xi).
\end{align}
Taking derivative at $h=0$, we obtain
\begin{align}
    \partial_{\xi_j}W(\xi)=-\frac{\ii}{2}(\Omega\xi)_jW(\xi)+\ii W(\xi)r_j,\qquad \partial_{\xi_j}W(\xi)=\frac{\ii}{2}(\Omega\xi)_jW(\xi)+\ii r_jW(\xi).
\end{align}
Consequently, using the linear ansatz $L_i(u)=\ell^{(i)\top}(r-Ku)$, we have
\begin{align}
    &\Tr\left((q\rho_uL_i(u)+(1-q)L_i(u)\rho_u)W(\xi)\right)\\
    &=q\Tr(\rho_uL_i(u)W(\xi))+(1-q)\Tr (\rho_uW(\xi)L_i(u))\\
    &=\sum_j\ell_j^{(i)}\left(q\Tr(\rho_u(r-Ku)_jW(\xi))+(1-q)\Tr(\rho_uW(\xi)(r-Ku)_j)\right)
\end{align}
Now, since
\begin{align}
    \Tr(\rho_u(r-Ku)_jW(\xi))&=\Tr\left(\rho_u\left(-\ii\partial_{\xi_j}W(\xi)-\frac{1}{2}(\Omega\xi)_jW(\xi)-(Ku)_jW(\xi)\right)\right)\\
    &=-\ii\partial_{\xi_j}\chi_{\rho_u}(\xi)-\frac{1}{2}(\Omega\xi)_j\chi_{\rho_u}(\xi)-(Ku)_j\chi_{\rho_u}(\xi)\\
    &=\frac{\ii}{2}\left((\sigma+\ii\Omega)\xi\right)_j\chi_{\rho_u}(\xi),
\end{align}
and similarly
\begin{align}
    \Tr(\rho_uW(\xi)(r-Ku)_j)=\frac{\ii}{2}\left((\sigma-\ii\Omega)\xi\right)_j\chi_{\rho_u}(\xi),
\end{align}
we get
\begin{align}
    \Tr\left((q\rho_uL_i(u)+(1-q)L_i(u)\rho_u)W(\xi)\right)&=\frac{\ii}{2}\ell^{(i)\top}(\sigma+\ii t\Omega )\xi\chi_{\rho_u}(\xi)=\frac{\ii}{2}\sum_j\xi_j\left((\sigma-\ii t\Omega )\ell^{(i)}\right)_j\chi_{\rho_u}(\xi)\label{eq:chi_qq}
\end{align}
The matrix $\sigma-\ii t\Omega$ is positive definite $|t|<1$. Indeed, for $t\in[0,1)$, 
\begin{align}
    \sigma-\ii t \Omega=(1-t)\sigma +t(\sigma-\ii\Omega )\overset{\text{Eq.~\eqref{eq:uncertainty_minus}}}{\geq} (1-t)\sigma>0,
\end{align}
and for $t\in(-1,0)$, 
\begin{align}
    \sigma-\ii t \Omega=(1+t)\sigma+(-t)(\sigma+\ii\Omega)\overset{\text{Eq.~\eqref{eq:uncertainty_plus}}}{\geq}(1+t)\sigma>0.
\end{align}
Therefore, $\sigma-\ii t\Omega$ is invertible.
Comparing this with Eqs.~\eqref{eq:char_diff} and~\eqref{eq:chi_qq}, we get
\begin{align}
    \ell^{(i)}=2(\sigma-\ii t \Omega)^{-1}K_i,\qquad K_i\coloneqq 
    \begin{pmatrix}
        K_{1,i}\\
        K_{2,i}\\
        \vdots\\
        K_{2m,i}
    \end{pmatrix}\in\mathbb{R}^{2m}.
\end{align}
With this choice of $\ell^{(i)}$, Eqs.~\eqref{eq:char_diff}
and~\eqref{eq:chi_qq} show that the two trace-class operators on the two sides of Eq.~\eqref{eq:X_iqphi} have identical Fourier--Weyl transforms for all $\xi\in\mathbb{R}^{2m}$. By the injectivity of the Fourier--Weyl transform on the trace class, they are equal. Hence the above $L_i(u)$ indeed satisfies Eq.~\eqref{eq:X_iqphi}.

Since $X_i(u)=D(Cu)X_i(0)D(Cu)^\dagger$ and $L_i(u)=D(Cu)L_i(0)D(Cu)^\dagger$, the Fisher information matrix is independent of $u$. Therefore, writing $\rho_0=\Phi$ and $L_i=L_i(0)$ and evaluating it at $u=0$, we obtain 
\begin{align}
    \left(\mathcal{F}^{f_q}_{\Phi,C}\right)_{ij}&\coloneqq \Tr(X_i(u)L_j(u))=q \Tr( \Phi  L_i L_j)+(1-q)\Tr (L_i\Phi L_j)\label{eq:QFI_Fock}\\
    &=\sum_{s,s'=1}^{2m}\ell^{(i)}_{s}\ell^{(j)}_{s'}(q\Tr( \Phi  r_s r_{s'})+(1-q)\Tr( \Phi  r_{s'} r_s))\\
    &=2\left(K^\top \left(\sigma-\ii t\Omega\right)^{-1}K\right)_{ij},
\end{align}
i.e.,
\begin{align}
    \mathcal{F}^{f_q}_{\Phi,C}=2K^\top \left(\sigma-\ii t\Omega\right)^{-1}K,\qquad t\coloneqq (2q-1).\label{eq:QFI_formula_Fock}
\end{align}

\section{Necessary and sufficient condition for the convertibility between Gaussian shift models}
In this section, we prove the following theorem. 
\begin{theorem}[Convertibility between Gaussian shift model]\label{thm:convertibility_gaussian_shift}
    Let $\Phi$ and $\Phi'$ be $m$- and $m'$-mode Gaussian states with covariance matrices $\sigma$ and $\sigma'$, and symplectic forms $\Omega$ and $\Omega'$, respectively. Using two complex matrices $C\in\mathbb{C}^{m\times p}$ and $C'\in\mathbb{C}^{m'\times p}$, define the two $p$-parameter Gaussian shift models $\mathcal{G}(\Phi,C)\coloneqq \{\Phi_{Cu}\}_{u\in\mathbb{R}^p}$ and $\mathcal{G}(\Phi',C')\coloneqq \{\Phi'_{C'u}\}_{u\in\mathbb{R}^p}$. The following conditions are equivalent:
    \begin{enumerate}[(i)]
        \item The Gaussian shift model $\mathcal{G}(\Phi,C)$ is convertible to $\mathcal{G}(\Phi',C')$ without error. That is, there exists a quantum channel $\mathcal{E}$, independent of $u\in\mathbb{R}^p$, such that $\mathcal{E}(\Phi_{Cu})=\Phi'_{C'u}$ for all $u\in\mathbb{R}^p$.
        \item For all $q\in(0,1)$, 
            \begin{align}
                \mathcal{F}^{f_q}_{\Phi,C}\geq \mathcal{F}^{f_q}_{\Phi',C'}.
            \end{align}
            Equivalently, $\forall t\in(-1,1)$, 
            \begin{align}
                K^\top \left(\sigma-\ii t\Omega\right)^{-1}K\geq K^{\prime\top} \left(\sigma'-\ii t\Omega'\right)^{-1}K',
            \end{align}
            where $K\in\mathbb{R}^{2m\times p}$ and $K'\in\mathbb{R}^{2m'\times p}$ are defined from $C$ and $C'$ via Eq.~\eqref{eq:definition_K_from_C}. 
        \item There exists a real matrix $X\in \mathbb{R}^{2m'\times 2m}$ such that 
        \begin{align}
            XK=K',\qquad \sigma'+\ii\Omega'\geq X(\sigma +\ii \Omega)X^\top.
        \end{align}
        \item There exists a Gaussian channel $\mathcal{E}_{\mathrm{G}}$, independent of $u\in\mathbb{R}^p$, such that $\mathcal{E}_{\mathrm{G}}(\Phi_{Cu})=\Phi'_{C'u}$ for all $u\in\mathbb{R}^p$.
    \end{enumerate}
\end{theorem}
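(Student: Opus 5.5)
We would prove the cycle (iv)$\Rightarrow$(i)$\Rightarrow$(ii)$\Rightarrow$(iii)$\Rightarrow$(iv).

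\emph{(iv)$\Rightarrow$(i)} is trivial, since a Gaussian channel is a quantum channel.

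\emph{(i)$\Rightarrow$(ii)} follows from Proposition~\ref{prop:fq_QFI_monotonicity_general}. The models are smooth in $u$ on the open domain $\mathbb{R}^p$, and the relevant tangents lie in the range of $\mathbb{J}^{f_q}$ (Gaussian states are faithful). The variational argument of Lemma~\ref{lem:fq_norm_monotonicity_general} uses only the Schwarz inequality for the unital dual map and the variational formula, so it carries over to infinite dimensions once the logarithmic derivatives are restricted to the linear-in-quadrature operators found above. Alternatively, we can truncate $\sup_X$ to bounded $X$. Combined with Eq.~\eqref{eq:QFI_formula_Fock}, this gives $K^\top(\sigma-\ii t\Omega)^{-1}K\geq K'^\top(\sigma'-\ii t\Omega')^{-1}K'$ for all $t\in(-1,1)$.

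\emph{(iii)$\Rightarrow$(iv)} is a direct construction. Set $Y\coloneqq \sigma'-X\sigma X^\top$ and $l=0$. Then
\begin{align}
Y+\ii(\Omega'-X\Omega X^\top)=(\sigma'+\ii\Omega')-X(\sigma+\ii\Omega)X^\top\geq 0,
\end{align}
so $(X,Y,0)$ is a valid Gaussian channel. By Eq.~\eqref{eq:dynamics_gaussian_channel} it maps covariance $\sigma\mapsto\sigma'$ and displacement $Ku\mapsto XKu=K'u$. Since Gaussian states are determined by their first and second moments, $\mathcal{E}_{\mathrm G}(\Phi_{Cu})=\Phi'_{C'u}$.

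\emph{(ii)$\Rightarrow$(iii)} is the main obstacle. Write $B_t\coloneqq\sigma-\ii t\Omega$, which is positive definite for $|t|<1$, and analogously $B'_t$. The condition $B'_t\geq XB_tX^\top$ with $XK=K'$ is, by a Schur complement, equivalent to positivity of the block matrix $\begin{pmatrix}B'_t & XB_t\\ B_tX^\top & B_t\end{pmatrix}$. Minimizing $XB_tX^\top$ over $X$ subject to $XK=K'$ alone (a complex least-squares problem) gives $K'(K^\top B_t^{-1}K)^{+}K'^\top$. Condition (ii) is then equivalent, via a generalized Schur complement, to the existence of some complex $X_t$ with $X_tK=K'$ and $B'_t\geq X_tB_tX_t^\top$ at each fixed $t$.

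The difficulty is producing a single \emph{real} $X$ valid at the endpoint $t=-1$, i.e.\ for $\sigma+\ii\Omega$, from a continuum of $t$-dependent complex solutions. We would try the following route first. Symplectically diagonalize $\sigma=S(\oplus\nu_jI_2)S^\top$ and pass to complex mode coordinates, in which $B_t$ becomes block diagonal with entries $\nu_j\pm t$. The matrix-valued function $t\mapsto K^\top B_t^{-1}K$ is then a positive sum of terms $P_j/(\nu_j-t)+\bar P_j/(\nu_j+t)$, i.e.\ a matrix Pick/Nevanlinna-type function of $t$. Positivity of the difference of two such functions on $(-1,1)$ should be converted, via a Nevanlinna/Loewner-type representation, into the existence of a real contraction-like intertwiner. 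Realness is forced by the conjugation symmetry $t\mapsto -t$, combined with the transpose relation~\eqref{eq:QFI_transpose_relation}.

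Concretely, we would pick the minimal-norm solution $X_t=K'(K^\top B_t^{-1}K)^{+}K^\top B_t^{-1}$, or a modification of it. We would show that its real part, averaged over $t$ and $-t$ and taken to the limit $t\to-1^+$, satisfies $XK=K'$ and $\sigma'+\ii\Omega'\geq X(\sigma+\ii\Omega)X^\top$. Compactness of the solution set, which is bounded because $B_t$ is bounded below on compact $t$-sets, should make the limit legitimate. If this fails, the fallback is to treat (iii) as an SDP feasibility problem and prove infeasibility implies violation of (ii) by duality: a dual certificate $Z\geq0$ should be decomposed into a mixture of rank-one pieces, each realising a linear functional that (ii) evaluated at some particular $t$ contradicts.
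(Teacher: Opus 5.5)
Your arguments for (iv)$\Rightarrow$(i), (iii)$\Rightarrow$(iv) and (i)$\Rightarrow$(ii) are the paper's. For (i)$\Rightarrow$(ii) the paper also restricts the variational supremum to bounded operators, approximating $L_z$ by spectral truncations of its Hermitian and anti-Hermitian parts. Incidentally, Gaussian states need not be faithful (vacuum modes are pure), but faithfulness is not needed once $L_z$ is written down explicitly.

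The gap is in (ii)$\Rightarrow$(iii), and your concrete route fails. Because $\overline{B_t}=B_{-t}$, your minimal-norm solution obeys $X_{-t}=\overline{X_t}$. So averaging over $\pm t$ just takes $\operatorname{Re}X_t$. Convexity transfers the inequality at $t$ to $\operatorname{Re}X_t$ only if $X_t$ also satisfies it at $-t$, and nothing guarantees that. Concretely, take one mode with $\sigma=\sigma'=\nu I_2$, $\nu>1$, and $K=K'=(1,0)^\top$. Then (ii) holds with equality and $X=I_2$ satisfies (iii). But $X_t=\begin{pmatrix}1&\ii t/\nu\\0&0\end{pmatrix}$, so $\operatorname{Re}X_t=\operatorname{diag}(1,0)$ for every $t$ and in the limit $t\to-1^+$, and
\[
\sigma'+\ii\Omega'-\operatorname{diag}(1,0)\,(\sigma+\ii\Omega)\,\operatorname{diag}(1,0)=\begin{pmatrix}0&\ii\\-\ii&\nu\end{pmatrix}
\]
has determinant $-1$. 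The valid witness $I_2$ is not a minimizer at any single $t$, so a construction built from pointwise-in-$t$ solutions cannot be expected to find it. The Nevanlinna/Loewner route is not specified enough to repair this. Also, for complex $X_t$ the constraint must read $B'_t\geq X_tB_tX_t^\dagger$, not $X_tB_tX_t^\top$.

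The missing idea is to produce, for fixed $s<1$, a single complex $Y$ with $YK=K'$ that satisfies the inequality at $s$ and at $-s$ simultaneously. This is exactly where (ii) on a whole interval, rather than at a point, is used. The paper obtains it in Theorem~\ref{thm:existence_complex_matrix}, with $P_i=V_i+\ii s\Omega_i$ and $Q_i=V_i-\ii s\Omega_i$, by convex separation:
\begin{enumerate}[(a)]
\item The set of pairs $(A,B)$ with $A\geq YP_1Y^\dagger$ and $B\geq YQ_1Y^\dagger$, for some $Y$ with $YC_1=C_2$, is closed and convex.
\item A separating pair $(R,S)$ must be positive semidefinite.
\item After perturbing to $R,S>0$, the pair is diagonalized by a common congruence: $R=\sum_j\lambda_j\ket{w_j}\bra{w_j}$ and $S=\sum_j(1-\lambda_j)\ket{w_j}\bra{w_j}$.
\item The pointwise witnesses $Y_{\lambda_j}$ from Lemma~\ref{lem:single_inversion} are glued into $Y=\sum_j\ket{\tilde w_j}\bra{w_j}Y_{\lambda_j}$, which contradicts the separation.
\end{enumerate}
Only after this does realification work. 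Both $Y_s$ and $\overline{Y_s}$ satisfy both inequalities, so $X_s=(Y_s+\overline{Y_s})/2$ does as well by convexity. The bound $V_2\geq X_sV_1X_s^\top$ controls $X_s$ uniformly, and a subsequence $s_n\to1$ gives (iii).

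Your SDP-duality fallback points in this direction, but it omits exactly the gluing step. It also omits the $s<1$ regularization, which is needed because $\sigma\pm\ii\Omega$ may be singular.
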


\begin{proof}~\par
    (i)$\implies$(ii): This is the monotonicity of the quantum Fisher information. This part is nontrivial because the monotonicity is often proved only for finite-dimensional systems. The detailed proof is provided in Section~\ref{sec:monotonicity_QFI_Gaussian}.

    (ii)$\implies$(iii): This is the most intricate part of the proof, which is provided in Section~\ref{sec:realization_Gaussian}.

    (iii)$\implies$(iv): We use centered quadratures, so that  $\Tr(\Phi r)=0$ and $\Tr(\Phi' r')=0$. Define $Y\coloneqq\sigma'- X\sigma X^\top\in \mathbb{R}^{2m'\times 2m'}$, which is real and symmetric. This matrix satisfies
    \begin{align}
        Y+\ii (\Omega' - X \Omega X^\top)=(\sigma'+\ii\Omega')-X(\sigma +\ii \Omega)X^\top \geq 0.
    \end{align}
    Thus, $X\in\mathbb{R}^{2m'\times 2m}$ and $Y\in \mathbb{R}^{2m'\times 2m'}$ define a Gaussian channel, which we denote by $\mathcal{E}_{\mathrm{G}}$. By Eq.~\eqref{eq:dynamics_gaussian_channel}, for each $u\in\mathbb{R}^p$, the output $\mathcal{E}_{\mathrm{G}}(\Phi_{Cu})$ is a Gaussian state, which has first moment $XKu=K'u$ and covariance matrix $X\sigma X^\top +Y=\sigma'$. Therefore, $\mathcal{E}_{\mathrm{G}}(\Phi_{Cu})=\Phi'_{C'u}$. 

    (iv)$\implies$(i): This is immediate because any Gaussian channel is a quantum channel. 
\end{proof}

Before proceeding to subsections that provide the detailed proof, we show its extension from states to channels as an immediate corollary of Theorem~\ref{thm:convertibility_gaussian_shift}.

\begin{corollary}[Gaussian post-processing]
    Let $\Lambda_{\mathrm{G}}$ be an $m$-mode input $s$-mode output Gaussian channel and $\Lambda_{\mathrm{G}}'$ be an $m$-mode input $s'$-mode output Gaussian channel. The following conditions are equivalent:
    \begin{enumerate}[(I)]
        \item There exists a quantum channel $\mathcal{E}$ such that $\Lambda_{\mathrm{G}}'=\mathcal{E}\circ \Lambda_{\mathrm{G}}$. 
        % \item For all $q\in(0,1)$, 
        %     \begin{align}
        %         \mathfrak{F}^{f_q}_{\Lambda_{\mathrm{G}}}\geq \mathfrak{F}^{f_q}_{\Lambda_{\mathrm{G}}'}.
        %     \end{align}
        %     Equivalently, $\forall t\in(-1,1)$, 
        %     \begin{align}
        %         X^\top ((X X^\top+Y)+\ii t\Omega)^{-1}X\geq X^{\prime\top} ((X' X^{\prime \top}+Y')+\ii t\Omega')^{-1}X'.
        %     \end{align}
        % \item There exists a real matrix $L\in \mathbb{R}^{2m'\times 2m}$ such that 
        % \begin{align}
        %     LK=K',\qquad Y'-LYL^\top +\ii\Omega'-\ii L\Omega L^\top\geq 0.
        % \end{align}
        \item There exists a Gaussian channel $\mathcal{E}_{\mathrm{G}}$ such that $\Lambda_{\mathrm{G}}'=\mathcal{E}_{\mathrm{G}}\circ \Lambda_{\mathrm{G}}$.
    \end{enumerate}
\end{corollary}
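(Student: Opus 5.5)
The plan is to reduce the statement to Theorem~\ref{thm:convertibility_gaussian_shift} by viewing a Gaussian channel as acting on a suitable Gaussian shift model. The direction (II)$\implies$(I) is immediate, since every Gaussian channel is a quantum channel. For (I)$\implies$(II), write $\Lambda_{\mathrm{G}}=(X_0,Y_0,l_0)$ and $\Lambda'_{\mathrm{G}}=(X_0',Y_0',l_0')$. Take the input to be a faithful $m$-mode Gaussian probe, for instance the $m$-mode vacuum $\Phi_{\mathrm{in}}$ with $\sigma_{\mathrm{in}}=I_{2m}$, displaced over all of phase space. Concretely, use the $2m$-parameter model $\{D(C^{(m)}_{\mathrm{bj}}u)\Phi_{\mathrm{in}}D(C^{(m)}_{\mathrm{bj}}u)^\dag\}_{u\in\mathbb{R}^{2m}}$. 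Because $C^{(m)}_{\mathrm{bj}}$ is bijective, the associated $K_{\mathrm{in}}$ is invertible.

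Feeding this model through $\Lambda_{\mathrm{G}}$ produces the Gaussian shift model $\mathcal{G}(\Phi,C)$. Its reference state $\Phi$ has covariance $\sigma=X_0X_0^\top+Y_0$ and displacement $l_0$. After recentering quadratures, its shift matrix is $K=X_0K_{\mathrm{in}}$, and $C$ is read off from $K$ via Eq.~\eqref{eq:definition_K_from_C}. The same construction applied to $\Lambda'_{\mathrm{G}}$ gives $\mathcal{G}(\Phi',C')$ with $\sigma'=X_0'X_0'^\top+Y_0'$ and $K'=X_0'K_{\mathrm{in}}$. The constant displacements $l_0,l_0'$ are undone by fixed unitary displacements, which are Gaussian channels, so they can be ignored. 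Assumption (I) then gives $\mathcal{E}(\Phi_{Cu})=\Phi'_{C'u}$ for all $u$. By (i)$\implies$(iii)$\implies$(iv) of Theorem~\ref{thm:convertibility_gaussian_shift}, there is a Gaussian channel $\mathcal{E}_{\mathrm{G}}=(X,Y,l)$ satisfying $XK=K'$ and $\sigma'=X\sigma X^\top+Y$, possibly after adjusting $l$ to absorb the recentering.

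It remains to promote agreement on this family of inputs to agreement as channels. Since $K_{\mathrm{in}}$ is invertible, $XX_0K_{\mathrm{in}}=X_0'K_{\mathrm{in}}$ implies $XX_0=X_0'$. Substituting into the covariance relation gives
\[
X(X_0X_0^\top+Y_0)X^\top+Y=X_0'X_0'^\top+Y_0',
\]
and hence $XY_0X^\top+Y=Y_0'$. The displacement vectors agree once $l$ is chosen as $l=l_0'-Xl_0$. Under composition of Gaussian channels the $(X,Y,l)$ data combine as $(XX_0,\,XY_0X^\top+Y,\,Xl_0+l)$, so $\mathcal{E}_{\mathrm{G}}\circ\Lambda_{\mathrm{G}}$ has exactly the triple of $\Lambda'_{\mathrm{G}}$. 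Since a Gaussian channel is uniquely determined by its triple, this yields $\Lambda'_{\mathrm{G}}=\mathcal{E}_{\mathrm{G}}\circ\Lambda_{\mathrm{G}}$.

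The main obstacle is the bookkeeping in the reduction. The model $\mathcal{G}(\Phi,C)$ is defined relative to the output reference state with centered quadratures, and the channel must preserve that centering, so the recentering displacements have to be tracked explicitly. The other point that needs checking is that the vacuum input suffices, i.e., that the derived $X$ does not depend on $\sigma_{\mathrm{in}}$. This holds because the $X$-relation comes only from the shift matrices, and the $Y$-relation follows algebraically once $XX_0=X_0'$ is established.
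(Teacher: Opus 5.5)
Your proposal is correct and follows essentially the same route as the paper: a vacuum probe displaced by the bijective matrix $C^{(m)}_{\mathrm{bj}}$, invertibility of $K_{\mathrm{in}}$ to upgrade $XX_0K_{\mathrm{in}}=X_0'K_{\mathrm{in}}$ to $XX_0=X_0'$, and matching of the $(X,Y,l)$ triples under composition. The only cosmetic difference is that you pass through condition (iv) and then readjust $l$, whereas the paper stops at (iii), defines $\mathcal{E}_{\mathrm{G}}=(L,\,Y'-LYL^\top,\,l'-Ll)$ directly, and checks its complete-positivity condition by hand; the resulting channel is the same.
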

\begin{proof}
    The implication (II)$\implies$(I) is immediate since any Gaussian channel is a quantum channel. We prove (I)$\implies$(II). Write the two Gaussian channels as $\Lambda_{\mathrm{G}}=(X,Y,l)$ and $\Lambda_{\mathrm{G}}'=(X',Y',l')$. 

    Let $\Phi_0$ be the $m$-mode vacuum state, whose covariance matrix is $I_{2m}$. Let $C_0\coloneqq C_{\mathrm{bj}}^{(m)}$ be the matrix defined in Eq.~\eqref{eq:definition_bijective_C} and consider the full coherent-state shift model
    \begin{align}
        \mathcal{G}_0\coloneqq \{(\Phi_0)_{C_0u}\colon u\in\mathbb{R}^{2m}\}.
    \end{align}
    By Eq.~\eqref{eq:definition_K_from_C}, its real displacement matrix is $K_0=\sqrt{2}I_{2m}$, which is invertible. 

    Applying $\Lambda_{\mathrm{G}}$ and $\Lambda'_{\mathrm{G}}$, we obtain two Gaussian shift models
    \begin{align}
        \mathcal{G}_1\coloneqq \{\Lambda_{\mathrm{G}}((\Phi_0)_{C_0u})\colon u\in\mathbb{R}^{2m}\},\qquad  \mathcal{G}_2\coloneqq \{\Lambda'_{\mathrm{G}}((\Phi_0)_{C_0u})\colon u\in\mathbb{R}^{2m}\},
    \end{align}
    and their covariance matrices are given by 
    \begin{align}
        \Sigma\coloneqq XX^\top +Y,\qquad \Sigma'\coloneqq X'X'^\top +Y'.
    \end{align}
    Their first moments at $u=0$ are $l$ and $l'$, respectively. 

    From condition (I), $\mathcal{E} (\Lambda_{\mathrm{G}}((\Phi_0)_{C_0u}))=\Lambda'_{\mathrm{G}}((\Phi_0)_{C_0u})$ for all $u\in \mathbb{R}^{2m}$. By using (i)$\implies$(iii) in Theorem~\ref{thm:convertibility_gaussian_shift}, there exists a real matrix $L\in\mathbb{R}^{2s'\times 2s}$ such that
    \begin{align}
        LXK_0=X'K_0,\qquad \Sigma'+\ii\Omega_{s'}\geq L(\Sigma+\ii\Omega_{s})L^\top.
    \end{align}
    Since $K_0$ is invertible, we have $LX=X'$. Define 
    \begin{align}
        Z\coloneqq Y'-LYL^\top,\qquad a\coloneqq l'-Ll.
    \end{align}
    Then, $Z$ is real and symmetric, and satisfies
    \begin{align}
        Z+\ii (\Omega_{s'}-L\Omega_sL^\top)=(\Sigma'+\ii\Omega_{s'})-L(\Sigma+\ii \Omega_s)L^\top\geq 0.
    \end{align}
    Thus, $\mathcal{E}_{\mathrm{G}}\coloneqq (L,Z,a)$ defines a Gaussian channel. Moreover, the composition of the Gaussian channels yields 
    \begin{align}
        \mathcal{E}_{\mathrm{G}}\circ \Lambda_{\mathrm{G}}=(LX,LYL^\top +Z,Ll+a)=(X',Y',l')=\Lambda'_{\mathrm{G}}.
    \end{align}
\end{proof}

\subsection{Monotonicity of quantum Fisher information for Gaussian shift model (Proof of (i)$\implies$(ii))}\label{sec:monotonicity_QFI_Gaussian}
We fix $q\in(0,1)$ and set $t\coloneqq 2q-1$. 
Let $\mathcal{F}$ be the $m$-mode Fock space. Let $\mathcal{B}(\mathcal{F})$ and $\mathcal{S}_2(\mathcal{F})$ be the set of bounded operators and the set of Hilbert--Schmidt operators on $\mathcal{F}$. The logarithmic derivative
\begin{align}
    L_i =\sum_{j=1}^{2m}r_j\ell^{(i)}_j=2\sum_{j=1}^{2m}r_j\left(\left(\sigma-\ii t\Omega\right)^{-1}K\right)_{ji}
\end{align}
is in general unbounded. However, since
\begin{align}
    \Tr(\Phi L_i^\dag L_i)=\frac{1}{2}\ell^{(i)\dag} (\sigma +\ii \Omega)\ell^{(i)}<\infty,\qquad \Tr(\Phi L_i L_i^\dag)=\frac{1}{2}\ell^{(i)\dag} (\sigma -\ii \Omega)\ell^{(i)}<\infty,
\end{align}
the operators $L_i\Phi^{1/2},\Phi^{1/2}L_i$ are Hilbert--Schmidt.

For $z\in\mathbb{C}^p$, we define
\begin{align}
    X_z\coloneqq \sum_{i=1}^pz_i\partial_{u_i}\Phi_{Cu}|_{u=0},\qquad L_z\coloneqq \sum_{i=1}^pz_iL_i,
\end{align}
which satisfies
\begin{align}
    X_z=q\Phi L_z+(1-q)L_z\Phi. 
\end{align}

For $B\in\mathcal{B}(\mathcal{F})$, we define
\begin{align}
    T_{\Phi}(B)&\coloneqq \left(\sqrt{q}\Phi^{1/2}B,\sqrt{1-q} B\Phi^{1/2}\right)\in\mathcal{S}_2(\mathcal{F})\oplus \mathcal{S}_2(\mathcal{F}).
\end{align}
For any $B\in\mathcal{B}(\mathcal{F})$, 
\begin{align}
    \Tr(B^\dag X_z)=q\Tr(B^\dag \Phi L_z)+(1-q)\Tr(B^\dag L_z \Phi)=\braket{T_{\Phi}(B),T_{\Phi}(L_z)}_{\mathcal{S}_2\oplus\mathcal{S}_2},
\end{align}
where $\braket{\cdot,\cdot}_{\mathcal{S}_2\oplus\mathcal{S}_2}$ denotes the Hilbert--Schmidt inner product on $\mathcal{S}_2(\mathcal{F})\oplus\mathcal{S}_2(\mathcal{F})$. Defining
\begin{align}
     \|B\|_{\Phi}^2&\coloneqq  q \Tr(B^\dag \Phi  B)+(1-q)\Tr(B^\dag B\Phi )=\braket{T_{\Phi}(B),T_{\Phi}(B)}_{\mathcal{S}_2\oplus\mathcal{S}_2},
\end{align}
we obtain
\begin{align}
    2\Re\Tr(B^\dag X_z)- \|B\|_{\Phi }^2= \|L_z\|_{\Phi }^2-\|B-L_z\|_{\Phi }^2\leq \|L_z\|_{\Phi }^2 \overset{\text{Eq.~\eqref{eq:QFI_Fock}}}{=}z^\dag \mathcal{F}^{f_q}_{\Phi,C} z\label{eq:variation_Fock}
\end{align}
for any $B\in\mathcal{B}(\mathcal{F})$. 

Let us now approximate $L_z$ by bounded operators. We decompose
\begin{align}
    L_z=R+\ii S,\qquad R\coloneqq \frac{L_z+L_z^\dag}{2},\qquad S\coloneqq \frac{L_z-L_z^\dag}{2\ii}.
\end{align}
Since $R$ and $S$ are linear combinations of quadratures with finite real coefficients, we have
\begin{align}
    \Tr(\Phi R^2)<\infty,\qquad \Tr(\Phi S^2)<\infty.
\end{align}
Let $E_R$ be projection-valued measure such that $R=\int_{\mathbb{R}}\lambda \dd E_R(\lambda)$. We define the truncated operator 
\begin{align}
    R_N\coloneqq \int_{|\lambda|\leq N}\lambda \dd E_R(\lambda)
\end{align}
for a positive integer $N>0$. Since $R$ and $R_N$ are self-adjoint, we have
\begin{align}
    \|(R-R_N)\|_{\Phi}^2&=q\Tr((R-R_N)^\dag \Phi (R-R_N))+(1-q)\Tr((R-R_N)^\dag (R-R_N) \Phi)\\
    &=\Tr(\Phi(R-R_N)^2)=\int_{|\lambda|>N}\lambda^2 \dd\mu_{\Phi,R}(\lambda),
\end{align}
where $\dd\mu_{\Phi,R}(\lambda)\coloneqq \Tr(\Phi\dd E_R(\lambda))$. Defining $h_N(\lambda)\coloneqq \lambda^2\mathbbm{1}_{\{|\lambda|>N\}}$, for each fixed $\lambda$, 
\begin{align}
    h_N(\lambda)\to0\qquad (N\to\infty).
\end{align}
Since $0\leq h_N(\lambda)\leq \lambda^2$, dominated convergence theorem implies
\begin{align}
    \lim_{N\to\infty} \|(R-R_N)\|_{\Phi}^2=\lim_{N\to\infty}\int_{\mathbb{R}}h_N(\lambda)\dd\mu_{\Phi,R}(\lambda)=\int_{\mathbb{R}}\lim_{N\to\infty}h_N(\lambda)\dd\mu_{\Phi,R}(\lambda)=0.
\end{align}
Similarly, for $S_N\coloneqq \int_{|\lambda|<N}\lambda \dd E_S(\lambda)$, we have $ \lim_{N\to\infty} \|(S-S_N)\|_{\Phi}^2=0$. Therefore, a bounded operator 
\begin{align}
    B_N\coloneqq R_N+\ii S_N
\end{align}
satisfies
\begin{align}
    \|B_N-L_z\|_{\Phi }\leq \|(R-R_N)\|_{\Phi}+\|(S-S_N)\|_{\Phi}\to 0\qquad (N\to\infty).
\end{align}
Thus, Eq.~\eqref{eq:variation_Fock} implies
\begin{align}
    \lim_{N\to\infty}\left( 2\Re\Tr(B_N^\dag X_z)- \|B_N\|_{\Phi }^2\right)=\|L_z\|^2_\Phi
\end{align}
and hence
\begin{align}
    z^\dag \mathcal{F}^{f_q}_{\Phi,C} z= \sup_{B\in\mathcal{B}(\mathcal{F})}\left\{2\Re\Tr(B^\dag X_z)- \|B\|_{\Phi }^2\right\}.
\end{align}

For the dual map $\Lambda\coloneqq \mathcal{E}^*$, the Schwarz inequality for unital CP maps (see e.g., Ref.~\cite{choi_SchwarzInequalityPositivelinearmaps_1974_math} and Proposition~3.3 in Ref.~\cite{paulsen_CompletelyBoundedMapsOperatorAlgebras_2003}) gives
\begin{align}
    \Lambda(Y)^\dag \Lambda(Y)\leq \Lambda (Y^\dag Y),\qquad \Lambda(Y) \Lambda(Y)^\dag\leq \Lambda (YY^\dag).
\end{align}
Therefore, we obtain
\begin{align}
    \|\Lambda(B)\|_{\Phi }^2=q \Tr(\Phi  \Lambda(B) \Lambda(B)^\dag )+(1-q)\Tr(\Phi \Lambda(B)^\dag \Lambda(B))\leq q \Tr(\Phi \Lambda( B B^\dag) )+(1-q)\Tr(\Phi \Lambda(B^\dag B))=\|B\|_{\mathcal{E}(\Phi) }^2.
\end{align}

Let $\mathcal{F}'$ be the output Fock space. Let $L_i'$ be the logarithmic derivative for $\Phi'_{C'u}$, and define $X_z'\coloneqq\sum_{i=1}^pz_i\partial_{u_i}\Phi'_{C'u}|_{u=0} $ and $L_z'\coloneqq \sum_{i=1}^pz_iL_i'$. For a parameter-independent channel $\mathcal{E}$ such that $\mathcal{E}(\Phi_{Cu})=\Phi'_{C'u}$ for all $u\in\mathbb{R}^p$, the linearity of $\mathcal{E}$ gives
\begin{align}
    X_z'\coloneqq\sum_{i=1}^pz_i\partial_{u_i}\Phi'_{C'u}|_{u=0}=\sum_{i=1}^pz_i\partial_{u_i}\mathcal{E}(\Phi_{Cu})|_{u=0}=\sum_{i=1}^pz_i\mathcal{E}(\partial_{u_i}\Phi_{Cu}|_{u=0})=\mathcal{E}(X_z).
\end{align}
Therefore,
\begin{align}
    2\Re\Tr(B^\dag X_z')-\|B\|_{\mathcal{E}(\Phi) }^2&=2 \Re\Tr(B^\dag \mathcal{E}(X_z))-\|B\|_{\mathcal{E}(\Phi) }^2\leq 2\Re\Tr(\Lambda(B)^\dag X_z)-\|\Lambda(B)\|_{\Phi }^2\leq z^\dag \mathcal{F}^{f_q}_{\Phi,C} z.
\end{align}
Taking the supremum over $B\in\mathcal{B}(\mathcal{F}')$, we obtain
\begin{align}
     z^\dag \mathcal{F}^{f_q}_{\Phi,C} z\geq z^\dag \mathcal{F}^{f_q}_{\Phi',C'} z,
\end{align}
i.e.,
\begin{align}
    \forall q\in(0,1),\qquad \mathcal{F}^{f_q}_{\Phi,C}\geq  \mathcal{F}^{f_q}_{\Phi',C'} ,
\end{align}
or equivalently,
\begin{align}
    \forall t \in (-1,1),\qquad K^\top(\sigma-\ii t \Omega)^{-1}K\geq  K^{\prime\top}(\sigma'-\ii t \Omega')^{-1}K'
\end{align}
by Eq.~\eqref{eq:QFI_formula_Fock}.

\subsection{A realization theorem for Gaussian convertibility (Proof of (ii)$\implies$(iii))}\label{sec:realization_Gaussian}

\begin{lemma}\label{lem:ordering_inversion}
    Let $M$ and $N$ be positive-definite Hermitian matrices of sizes $m\times m$ and $n\times n$, respectively. For $Y\in \mathbb{C}^{m\times n}$, the following two are equivalent: (i) $M\geq YN Y^\dag$, (ii) $N^{-1}\geq Y^\dag M^{-1}Y$.
\end{lemma}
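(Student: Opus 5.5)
The plan is to reduce both directions to the Schur complement characterization of block positivity. Since $M>0$ and $N>0$, I will consider the Hermitian block matrix
\begin{align}
    \mathcal{B}\coloneqq
    \begin{pmatrix}
        M & Y\\
        Y^\dag & N^{-1}
    \end{pmatrix},
\end{align}
and show that each of the conditions (i) and (ii) is equivalent to $\mathcal{B}\geq 0$. That gives (i)$\Leftrightarrow$(ii) at once.

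For the first equivalence I will use the invertible congruence
\begin{align}
    T\coloneqq
    \begin{pmatrix}
        I_m & -YN\\
        0 & I_n
    \end{pmatrix}.
\end{align}
A direct multiplication gives
\begin{align}
    T\mathcal{B}T^\dag=
    \begin{pmatrix}
        M-YNY^\dag & 0\\
        0 & N^{-1}
    \end{pmatrix}.
\end{align}
The key point is that congruence by an invertible matrix preserves positive semidefiniteness in both directions. Because $N^{-1}>0$, this shows $\mathcal{B}\geq 0$ if and only if $M-YNY^\dag\geq 0$, which is condition (i).

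For the second equivalence I will eliminate the other off-diagonal block using
\begin{align}
    T'\coloneqq
    \begin{pmatrix}
        I_m & 0\\
        -Y^\dag M^{-1} & I_n
    \end{pmatrix}.
\end{align}
This yields
\begin{align}
    T'\mathcal{B}T'^\dag=
    \begin{pmatrix}
        M & 0\\
        0 & N^{-1}-Y^\dag M^{-1}Y
    \end{pmatrix}.
\end{align}
Since $M>0$, $\mathcal{B}\geq 0$ if and only if $N^{-1}\geq Y^\dag M^{-1}Y$, which is condition (ii).

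There is no real obstacle here. The only point needing care is to check the two block products, namely the off-diagonal cancellations $Y-YNN^{-1}=0$ and $Y^\dag-Y^\dag M^{-1}M=0$. Both hold because $M$ and $N$ are invertible, and this is exactly where strict positive definiteness is used. The argument places no condition on the sizes $m$ and $n$, and $Y$ may be an arbitrary complex matrix, which matches the later use with real $X$ and complex Hermitian $\sigma\pm\ii t\Omega$.
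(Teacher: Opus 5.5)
Your proof is correct, and it takes a different route from the paper's. The paper sets $Z\coloneqq M^{-1/2}YN^{1/2}$ and notes that (i) and (ii) become $I\geq ZZ^\dag$ and $I\geq Z^\dag Z$. Both of these say that every singular value of $Z$ is at most one, i.e.\ $Z$ is a contraction, which gives the equivalence. You instead build a single Hermitian matrix $\mathcal{B}$ and reduce it by two invertible congruences to its two Schur complements. Both block products are right: they give $\mathrm{diag}(M-YNY^\dag,\,N^{-1})$ and $\mathrm{diag}(M,\,N^{-1}-Y^\dag M^{-1}Y)$, and positivity of the other diagonal block finishes each direction.

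The paper's argument is shorter and makes the contraction picture explicit. It reuses that picture immediately afterwards through Douglas' lemma, with $D_i=H_i^{-1/2}C_i$ and $Y=H_2^{1/2}KH_1^{-1/2}$. Your argument avoids matrix square roots and singular-value facts and uses only invertibility of $M$ and $N$. It is also the same congruence device the paper later uses in its semidefinite-programming reformulation, where $T_L\mathcal{M}(r,L)T_L^\top$ isolates $rB'-LBL^\top$. So your version shows directly why both forms of the inequality are captured by one linear matrix inequality.
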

\begin{proof}
    Define $Z\coloneqq M^{-1/2}Y N^{1/2}$. The conditions (i) and (ii) are equivalent to $I\geq ZZ^\dag $ and $I\geq Z^\dag Z$, respectively. Both of these conditions are equivalent to the condition that all the singular values of $Z$ are less than or equal to one. 
\end{proof}

\begin{lemma}[Douglas' Lemma]\label{lem:Douglas}
    Let $\mathcal{H},\mathcal{K}_1,\mathcal{K}_2$ be finite-dimensional Hilbert spaces. For linear operators $D_1:\mathcal{H}\to \mathcal{K}_1$ and $D_2:\mathcal{H}\to\mathcal{K}_2$, the following two are equivalent: 
    \begin{enumerate}[(i)]
        \item There exists a contraction $K:\mathcal{K}_1\to\mathcal{K}_2$ such that $D_2=KD_1$. 
        \item $D_1^\dag D_1\geq D_2^\dag D_2$.
    \end{enumerate}
\end{lemma}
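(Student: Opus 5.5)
The plan is to prove the two implications separately, using only the quadratic-form meaning of the operator order. For each $v\in\mathcal{H}$ we have $\braket{v,D_1^\dag D_1 v}=\|D_1v\|^2$, and likewise for $D_2$. So condition (ii) is equivalent to
\begin{align}
    \|D_2 v\|\leq \|D_1v\|\qquad \forall v\in\mathcal{H}.
\end{align}
Both directions reduce to this norm comparison.

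For (i)$\implies$(ii), I substitute $D_2=KD_1$ and get $D_2^\dag D_2=D_1^\dag K^\dag K D_1$. Since $K$ is a contraction, $K^\dag K\leq I$ on $\mathcal{K}_1$. Conjugating this inequality by $D_1$ preserves the positive-semidefinite order, which gives $D_1^\dag K^\dag KD_1\leq D_1^\dag D_1$.

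For (ii)$\implies$(i), I will construct $K$ explicitly, in two steps.
\begin{enumerate}[(a)]
    \item On the subspace $\Ran D_1\subset\mathcal{K}_1$, set $K(D_1v)\coloneqq D_2v$. To see this is well defined, suppose $D_1v=D_1w$. Then $\|D_2(v-w)\|\leq\|D_1(v-w)\|=0$, so $D_2v=D_2w$. The map is linear because $D_1$ and $D_2$ are linear. The norm comparison gives $\|K(D_1v)\|=\|D_2v\|\leq\|D_1v\|$, so $K$ is contractive on $\Ran D_1$.
    \item Extend $K$ to all of $\mathcal{K}_1$ by setting $K\coloneqq 0$ on the orthogonal complement $(\Ran D_1)^\perp$. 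In finite dimensions $\Ran D_1$ is closed, so $\mathcal{K}_1=\Ran D_1\oplus(\Ran D_1)^\perp$. Writing $y=y_1+y_2$ in this decomposition, we get $\|Ky\|=\|Ky_1\|\leq\|y_1\|\leq\|y\|$.
\end{enumerate}
The extended $K$ is therefore a contraction, and $KD_1=D_2$ holds by construction.

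The only step that needs care is the well-definedness in (a). It rests on the fact that $\ker D_1\subset\ker D_2$, which follows directly from the norm comparison. Once that is in place, the argument is routine linear algebra and needs no earlier result from the paper.
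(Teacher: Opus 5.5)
Your proof is correct and follows essentially the same route as the paper: (i)$\implies$(ii) by conjugating $K^\dag K\leq I$ with $D_1$, and (ii)$\implies$(i) by defining $K(D_1v)\coloneqq D_2v$ on $\Ran D_1$ (well defined because $\ker D_1\subset\ker D_2$) and extending by zero on $(\Ran D_1)^\perp$. You also state the check $\|Ky\|\leq\|y_1\|\leq\|y\|$ for the extension explicitly, which the paper leaves implicit.
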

\begin{proof}
    (i)$\implies$(ii): $D_2^\dag D_2=D_1^\dag K^\dag K D_1\leq D_1^\dag D_1$.

    (ii)$\implies$(i): Since $\|D_1x\|=0$ implies $\|D_2 x\|=0$, $\ker D_1\subset \ker D_2$. 
    On $\Ran D_1$, define $K_0$ by $K_0(D_1 x)\coloneqq D_2 x$. This is well-defined since if $D_1x=D_1y$, then $x-y\in \ker D_1\subset \ker D_2$, and hence $D_2(x-y)=0$. This map $K_0$ is contraction on $\Ran D_1$ since $\|K_0(D_1x)\|^2=\|D_2x\|^2\leq \|D_1x\|^2$. We then extend $K_0$ on $\Ran D_1$ to $\mathcal{K}_1$ by $K|_{\Ran D_1}\coloneqq K_0$ and $K|_{(\Ran D_1)^\perp}\coloneqq 0$. 
\end{proof}

\begin{lemma}\label{lem:single_inversion}
    Let $H_i>0$ be an $m_i\times m_i$ Hermitian matrix, and $C_i\in\mathbb{C}^{m_i\times p}$ for $i=1,2$. Then the following two are equivalent:
    \begin{enumerate}[(i)]
        \item $\exists Y\in \mathbb{C}^{m_2\times m_1}$ such that $YC_1=C_2$ and $H_2\geq Y H_1Y^\dag$.
        \item $C_1^\dag H_1^{-1} C_1\geq C_2^\dag H_2^{-1} C_2$. 
    \end{enumerate}
\end{lemma}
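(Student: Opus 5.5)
The plan is to reduce both conditions to statements about the single matrix $Z$ (or its analogue) built by whitening with $H_1$ and $H_2$, and then to invoke Lemma~\ref{lem:ordering_inversion} and Lemma~\ref{lem:Douglas}.

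\emph{(i)$\implies$(ii).} Given $Y$ with $YC_1=C_2$ and $H_2\geq YH_1Y^\dag$, Lemma~\ref{lem:ordering_inversion} (applied with $M=H_2$, $N=H_1$) gives $H_1^{-1}\geq Y^\dag H_2^{-1}Y$. Sandwiching between $C_1^\dag$ and $C_1$ then yields
\[
C_1^\dag H_1^{-1}C_1\geq C_1^\dag Y^\dag H_2^{-1}YC_1=C_2^\dag H_2^{-1}C_2 .
\]

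\emph{(ii)$\implies$(i).} Set $D_1\coloneqq H_1^{-1/2}C_1$ and $D_2\coloneqq H_2^{-1/2}C_2$, both viewed as maps from $\mathbb{C}^p$. Condition (ii) reads $D_1^\dag D_1\geq D_2^\dag D_2$. By Douglas' Lemma~\ref{lem:Douglas} there is a contraction $K:\mathbb{C}^{m_1}\to\mathbb{C}^{m_2}$ with $D_2=KD_1$. Define
\[
Y\coloneqq H_2^{1/2}KH_1^{-1/2}.
\]
Then $YC_1=H_2^{1/2}KD_1=H_2^{1/2}D_2=C_2$. Moreover,
\[
YH_1Y^\dag=H_2^{1/2}KK^\dag H_2^{1/2}\leq H_2,
\]
where the inequality uses $KK^\dag\leq I$, which holds because $K$ is a contraction.

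\emph{Expected obstacle.} There is essentially none here, since $H_i>0$ makes all square roots and inverses well defined. The care is only in checking that Douglas' Lemma, stated for finite-dimensional spaces, applies with the common domain $\mathbb{C}^p$. The real difficulty in the surrounding argument comes later: applying this lemma uniformly in $t$ with the complex Hermitian matrices $H=\sigma\mp\ii t\Omega$ while insisting that $Y$ be real, which presumably requires combining the $t$ and $-t$ conditions.
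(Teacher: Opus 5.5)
Your proposal is correct and follows the paper's proof essentially step for step. For (i)$\implies$(ii) you use Lemma~\ref{lem:ordering_inversion} and then sandwich with $C_1$. For (ii)$\implies$(i) you whiten with $D_i=H_i^{-1/2}C_i$, obtain a contraction $K$ from Douglas' Lemma~\ref{lem:Douglas}, and set $Y=H_2^{1/2}KH_1^{-1/2}$, exactly as the paper does.
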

\begin{proof}
    (i)$\implies$(ii): From Lemma~\ref{lem:ordering_inversion}, $H_2\geq Y H_1 Y^\dag$ implies $H_1^{-1}\geq Y^\dag H_2^{-1} Y$. Therefore, $C_1^\dag H_1^{-1} C_1\geq C_1^\dag ( Y^\dag H_2^{-1} Y)C_1=C_2^\dag H_2^{-1}C_2$. 

    (ii)$\implies$(i): Defining $D_i\coloneqq H_i^{-1/2}C_i$ for $i=1,2$, we have $D_1^\dag D_1\geq D_2^\dag D_2$. By Lemma~\ref{lem:Douglas}, there exists a contraction $K$ such that $D_2=KD_1$. Thus, for $Y\coloneqq H_2^{1/2}KH_1^{-1/2}$, we have
    \begin{align}
        YC_1=H_2^{1/2}KH_1^{-1/2}C_1=H_2^{1/2}KD_1=H_2^{1/2}D_2=C_2
    \end{align}
    and 
    \begin{align}
        YH_1Y^\dag=H_2^{1/2}KK^\dag H_2^{1/2}\leq H_2^{1/2}H_2^{1/2}=H_2.
    \end{align}
\end{proof}

As an extension, we prove an interpolated version of this lemma.
\begin{theorem}\label{thm:existence_complex_matrix}
    For $i=1,2$, let $P_i,Q_i\in\mathbb{C}^{m_i\times m_i}$ be positive-definite Hermitian matrices, and $C_i\in\mathbb{C}^{m_i\times p}$ be a matrix. For
    \begin{align}
        H_i(\lambda)\coloneqq \lambda P_i+(1-\lambda)Q_i,\qquad \lambda\in[0, 1],
    \end{align}
    the following two are equivalent:
    \begin{enumerate}[(i)]
        \item $\exists Y\in\mathbb{C}^{m_2\times m_1}$ such that $YC_1=C_2$, $P_2\geq Y P_1 Y^\dag$ and $Q_2\geq Y Q_1Y^\dag$. 
        \item For all $\lambda\in[0,1]$, $C_1^\dag H_1(\lambda)^{-1} C_1\geq C_2^\dag H_2(\lambda)^{-1}C_2$.
    \end{enumerate}
\end{theorem}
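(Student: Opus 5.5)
The plan is to prove (i)$\implies$(ii) directly from Lemma~\ref{lem:single_inversion}, and (ii)$\implies$(i) by a convex-duality (minimax) argument that reduces to the interpolated inequalities.

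\emph{(i)$\implies$(ii).} If $Y$ satisfies $YC_1=C_2$, $P_2\geq YP_1Y^\dag$ and $Q_2\geq YQ_1Y^\dag$, take the convex combination with weights $\lambda$ and $1-\lambda$. This gives $H_2(\lambda)\geq YH_1(\lambda)Y^\dag$ for every $\lambda\in[0,1]$. Since $H_i(\lambda)>0$, the direction (i)$\implies$(ii) of Lemma~\ref{lem:single_inversion} then yields $C_1^\dag H_1(\lambda)^{-1}C_1\geq C_2^\dag H_2(\lambda)^{-1}C_2$.

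\emph{(ii)$\implies$(i): setting up the saddle-point problem.} Lemma~\ref{lem:single_inversion} already supplies, for each fixed $\lambda$, some $Y_\lambda$ with $Y_\lambda C_1=C_2$ and $H_2(\lambda)\geq Y_\lambda H_1(\lambda)Y_\lambda^\dag$. The task is to obtain one common $Y$ satisfying both endpoint constraints. Let $\mathcal{A}\coloneqq\{Y\colon YC_1=C_2\}$, which is affine and nonempty; nonemptiness follows from (ii) at any $\lambda$ by the kernel-inclusion argument of Lemma~\ref{lem:Douglas}. Let $\mathcal{D}\coloneqq\{(A,B)\colon A,B\geq0,\ \Tr A+\Tr B=1\}$, which is compact and convex. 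Define
\begin{align}
\Psi(Y;A,B)\coloneqq \Tr A(P_2-YP_1Y^\dag)+\Tr B(Q_2-YQ_1Y^\dag).
\end{align}
Condition (i) is equivalent to $\sup_{Y\in\mathcal{A}}\min_{(A,B)\in\mathcal{D}}\Psi\geq0$, provided the supremum is attained. Attainment follows from coercivity: the map $Y\mapsto YP_1Y^\dag$ blows up at infinity because $P_1>0$.

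The function $\Psi$ is linear in $(A,B)$ and concave in $Y$, since $Y\mapsto YP_1Y^\dag$ is operator convex. Sion's minimax theorem therefore allows exchanging the order, so it suffices to show
\begin{align}
\sup_{Y\in\mathcal{A}}\Psi(Y;A,B)\geq0\qquad\text{for every }(A,B)\in\mathcal{D}.
\end{align}
For fixed $(A,B)$ this is a linearly constrained convex quadratic problem in $\mathrm{vec}(Y)$, with quadratic form $P_1^\top\otimes A+Q_1^\top\otimes B$. It can be solved in closed form by Lagrange multipliers, or by the Douglas-lemma technique of Lemma~\ref{lem:single_inversion} applied to the row-stacked variable. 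The outcome should be an inequality of the form $\Tr(AP_2+BQ_2)\geq$ (a quadratic expression in $C_2$ weighted by the inverse of the constraint-restricted form built from $C_1$).

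\emph{Main obstacle.} The hard step is showing that this weighted inequality, for arbitrary non-commuting $A,B$, follows from the scalar-interpolated family (ii). I would try first to show that the worst case over $\mathcal{D}$ is attained at pairs $A=\lambda W$, $B=(1-\lambda)W$ with a common $W\geq0$. With such a pair the form collapses to $H_1(\lambda)^\top\otimes W$, and the dual value becomes $\Tr W\bigl(H_2(\lambda)-C_2(C_1^\dag H_1(\lambda)^{-1}C_1)^{+}C_2^\dag\bigr)$ up to transposition conventions. This is nonnegative by (ii) combined with Lemma~\ref{lem:ordering_inversion}.

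To justify the reduction to aligned pairs, I would use the $\lambda$-dependent witnesses $Y_\lambda$. The plan is a continuity/fixed-point argument on $\lambda\mapsto Y_\lambda$ chosen canonically (the minimal-norm Douglas contraction), looking for a $\lambda^*$ where the two endpoint slacks balance. If that fails, the fallback is a direct analysis of the optimality (KKT) conditions for $(A,B)$, showing that a minimizing pair must share its range.
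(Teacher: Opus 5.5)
\textbf{Verdict: genuine gap.} The step you label the main obstacle, reducing an arbitrary non-commuting dual pair $(A,B)$ to aligned pairs, is the entire content of (ii)$\implies$(i), and neither route you sketch closes it.

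\textbf{What is correct.} Your (i)$\implies$(ii) is right. Your Sion-minimax reformulation of (ii)$\implies$(i) is a valid substitute for the paper's strict-separation step. Both reduce the claim to
$\inf_{YC_1=C_2}\{\Tr(AYP_1Y^\dag)+\Tr(BYQ_1Y^\dag)\}\leq\Tr(AP_2)+\Tr(BQ_2)$ for all $A,B\geq0$. Your evaluation at an aligned pair $A=\lambda W$, $B=(1-\lambda)W$ is just Lemma~\ref{lem:single_inversion} at that $\lambda$.

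\textbf{Why the proposed routes fail.} Searching for a single $\lambda^*$ at which the canonical witness $Y_{\lambda^*}$ satisfies both endpoint inequalities is the $m_2=1$ argument. It fails for $m_2>1$. Take a direct sum of two decoupled one-row instances, each of whose feasible sets is a single point reached by its canonical witness at a different value of $\lambda$. Then (i) holds with a block-diagonal $Y$. But the canonical witness of the direct sum is block diagonal, so every single $\lambda$ leaves one block infeasible. Your KKT fallback is unproven. Even if it succeeded, showing $\Ran A=\Ran B$ for a minimizer would not give $B\propto A$. In the direct-sum example the normalized sum of the two tight aligned pairs is itself a non-aligned minimizer.

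\textbf{The missing idea.} An arbitrary pair splits into aligned rank-one pieces with direction-dependent $\lambda_j$. The constraint $YC_1=C_2$ then decouples along the dual basis, so witnesses for different $\lambda_j$ can be glued together.

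First reduce to $A+B>0$. Replacing $(A,B)$ by $(A+\epsilon I,B+\epsilon I)$ can only increase the left-hand side and increases the right-hand side by $\epsilon\Tr(P_2+Q_2)$, so one may let $\epsilon\to0$ afterwards.

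Set $W\coloneqq A+B>0$ and diagonalize $W^{-1/2}AW^{-1/2}=\sum_j\lambda_je_je_j^\dag$ with $\lambda_j\in[0,1]$. Put $w_j\coloneqq W^{1/2}e_j$ and $\tilde w_j\coloneqq W^{-1/2}e_j$. Then $w_j^\dag\tilde w_k=\delta_{jk}$, $\sum_j\tilde w_jw_j^\dag=I$, $A=\sum_j\lambda_jw_jw_j^\dag$ and $B=\sum_j(1-\lambda_j)w_jw_j^\dag$.

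Take $Y_{\lambda_j}$ from Lemma~\ref{lem:single_inversion} and define $Y\coloneqq\sum_j\tilde w_jw_j^\dag Y_{\lambda_j}$. Then $YC_1=C_2$ and $w_j^\dag Y=w_j^\dag Y_{\lambda_j}$, hence
\begin{align*}
\Tr(AYP_1Y^\dag)+\Tr(BYQ_1Y^\dag)=\sum_jw_j^\dag Y_{\lambda_j}H_1(\lambda_j)Y_{\lambda_j}^\dag w_j\leq\sum_jw_j^\dag H_2(\lambda_j)w_j=\Tr(AP_2)+\Tr(BQ_2).
\end{align*}

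This is the paper's construction. It also confirms a posteriori that the dual value is controlled by aligned pairs, but only after splitting $(A,B)$ into rank-one pieces with different $\lambda_j$, not through one common $\lambda$ or a common $W$.
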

\begin{proof}
    (i)$\implies$(ii): From $P_2\geq Y P_1 Y^\dag$ and $Q_2\geq Y Q_1Y^\dag$, we have $H_2(\lambda)\geq YH_1(\lambda)Y^\dag$ for any $\lambda\in[0,1]$. Then, the condition (ii) follows from Lemma~\ref{lem:single_inversion}. 

    (ii)$\implies$(i): 
    
    Let $\mathcal{H}$ be a real vector space consisting of all $m_2\times m_2$ Hermitian matrices. As a subset of $\mathcal{H}\times \mathcal{H}$, define
    \begin{align}
        \mathcal{K}\coloneqq \{(A,B)\in\mathcal{H}\times \mathcal{H}\colon\exists Y \text{ such that }YC_1=C_2,\quad A\geq YP_1 Y^\dag,\quad B \geq YQ_1Y^\dag\}\subset \mathcal{H}\times \mathcal{H}.
    \end{align}
    Then, the condition (i) is equivalent to $(P_2,Q_2)\in\mathcal{K}$.
    
    Let us now show that $\mathcal{K}$ is convex and closed: 
    \begin{itemize}
        \item 
            To prove convexity, let $(A_j,B_j)\in\mathcal{K}$, $j=1,2$, and choose corresponding witnesses $Y_j$ such that $Y_jC_1=C_2$ and $A_j\geq Y_jP_1 Y_j^\dag,\,B_j \geq Y_jQ_1Y_j^\dag$. Then, for $Y_\theta\coloneqq \theta Y_1+(1-\theta)Y_2$, we have $Y_\theta C_1=C_2$ and 
        \begin{align}
            \theta A_1+(1-\theta)A_2-Y_\theta P_1Y_\theta^\dag&\geq \theta Y_1 P_1 Y_1^\dag+(1-\theta)Y_2 P_1 Y_2^\dag-Y_\theta P_1Y_\theta^\dag=\theta(1-\theta)(Y_1-Y_2)P_1(Y_1-Y_2)^\dag\geq 0,\label{eq:convexity_sandwitch}\\
            \theta B_1+(1-\theta)B_2-Y_\theta Q_1Y_\theta^\dag&\geq \theta Y_1 Q_1 Y_1^\dag+(1-\theta)Y_2 Q_1 Y_2^\dag-Y_\theta Q_1Y_\theta^\dag=\theta(1-\theta)(Y_1-Y_2)Q_1(Y_1-Y_2)^\dag\geq 0,
        \end{align}
        implying that $\theta(A_1,B_1)+(1-\theta)(A_2,B_2)\in \mathcal{K}$, i.e., $\mathcal{K}$ is convex. 
        \item 
            To prove the closedness, let $(A_n,B_n)$ be a sequence such that $(A_n,B_n)\to (A,B)$ in a matrix norm. For each $n$, let $Y_n$ be a matrix such that $Y_nC_1=C_2$ and $A_n\geq Y_nP_1 Y_n^\dag,\,B_n \geq Y_nQ_1Y_n^\dag$. Since $P_1$ is positive-definite, one can take $p>0$ such that $P_1\geq pI$. Then, $A_n\geq Y_nP_1Y_n^\dag \geq pY_nY_n^\dag$. Since $(A_n)$ is bounded, it follows that $(Y_n)$ is bounded as well. Therefore, since the space of matrices is finite-dimensional, $(Y_n)$ has a convergent subsequence $(Y_{n_k})$ converging to some matrix $Y$. Along the same subsequence, $A_{n_k}\to A$ and $B_{n_k}\to B$. Taking $k\to\infty$, we obtain
        \begin{align}
            YC_1=C_2,\qquad A\geq YP_1 Y^\dag,\qquad B \geq YQ_1Y^\dag
        \end{align}
        Therefore, $(A,B)\in\mathcal{K}$, implying that $\mathcal{K}$ is closed.
    \end{itemize}
    Moreover, $\mathcal{K}\neq \emptyset$. Indeed, from condition~(ii) at $\lambda=0$, by Lemma~\ref{lem:single_inversion}, there exists a complex matrix $Y_0$ such that $Y_0C_1=C_2$ and $Q_2\geq Y_0Q_1Y_0^\dag$. In particular, the first condition implies $(Y_0P_1Y_0^\dag,Y_0Q_1Y_0^\dag)\in\mathcal{K}$.

    Now, we prove $(P_2,Q_2)\in\mathcal{K}$ by contradiction. Assume that $(P_2,Q_2)\notin\mathcal{K}$. 
    We introduce an inner product $\braket{\cdot,\cdot}$ on real vector space $\mathcal{H}\times \mathcal{H}$ by
    \begin{align}
        \braket{(R,S),(A,B)}\coloneqq \Tr(RA)+\Tr(SB).
    \end{align}
    Since $\mathcal{K}$ is a nonempty closed convex subset of the finite-dimensional real vector space $\mathcal{H}\times\mathcal{H}$ and $(P_2,Q_2)\notin\mathcal{K}$, from the strict separation theorem (see e.g., Corollary 4.1.3 in Ref.~\cite{hiriart-urrutyFundamentalsConvexAnalysis2001}), the assumption that $(P_2,Q_2)\notin\mathcal{K}$ implies that there exist Hermitian matrices $R,S$ such that
    \begin{align}
        \braket{(R,S),(P_2,Q_2)}<\inf_{(A,B)\in\mathcal{K}}\{\braket{(R,S),(A,B)}\},
    \end{align}
    i.e., 
    \begin{align}
        \Tr(RP_2)+\Tr(SQ_2)<\inf_{(A,B)\in\mathcal{K}}\{\Tr(RA)+\Tr(SB)\}.\label{eq:separation}
    \end{align}
    
    These matrices $R$ and $S$ are positive semidefinite. Let us first prove $R\geq 0$ by contradiction. Suppose that $R\not\geq  0$. Then there exists a vector $v$ such that $v^\dag R v<0$. Defining $U\coloneqq vv^\dag\geq 0$, for any $(A,B)\in \mathcal{K}$, $(A+t U,B)\in\mathcal{K}$ for any $t\in\mathbb{R}_{\geq0}$ since $A+tU\geq A$. However, since $\Tr(RU)=v^\dag R v<0$, we have
    \begin{align}
        \lim_{t\to\infty}\braket{(R,S),(A+tU,B)}=\lim_{t\to\infty}\left(\Tr(RA)+\Tr(SB)+t\Tr(RU)\right)=-\infty,
    \end{align}
    which contradicts Eq.~\eqref{eq:separation}. Thus, $R\geq 0$. Repeating the same argument, we also obtain $S\geq 0$. 

    Now, we show
    \begin{align}
        \inf_{(A,B)\in\mathcal{K}}\{\Tr(RA)+\Tr(SB)\}=\inf_{YC_1=C_2}\{\Tr (RYP_1Y^\dag)+\Tr(S Y Q_1 Y^\dag)\}. 
    \end{align}
    Indeed, if $A,B,Y$ satisfy
    \begin{align}
        A\geq YP_1 Y^\dag,\qquad B\geq Y Q_1 Y^\dag,
    \end{align}
    then 
    \begin{align}
        \Tr(RA)+\Tr(SB)\geq \Tr(RYP_1 Y^\dag)+\Tr(SY Q_1 Y^\dag)
    \end{align}
    Therefore, taking infimum over $Y$ satisfying $YC_1=C_2$, we get
    \begin{align}
          \inf_{(A,B)\in\mathcal{K}}\{\Tr(RA)+\Tr(SB)\}\geq \inf_{YC_1=C_2}\{\Tr (RYP_1Y^\dag)+\Tr(S Y Q_1 Y^\dag)\}.
    \end{align}
    Conversely, for any $Y$ satisfying $YC_1=C_2$, the pair $(YP_1Y^\dag,YQ_1Y^\dag)$ belongs to $\mathcal{K}$, which gives the reverse inequality. Therefore, Eq.~\eqref{eq:separation} is equivalent to
    \begin{align}
        \Tr(RP_2)+\Tr(SQ_2)<\inf_{YC_1=C_2}\{\Tr (RYP_1Y^\dag)+\Tr(S Y Q_1 Y^\dag)\}.\label{eq:separation_Y}
    \end{align}

    We now relate the right-hand side of Eq.~\eqref{eq:separation_Y} to condition~(ii). By Lemma~\ref{lem:single_inversion}, condition (ii) implies that, for each $\lambda\in[0,1]$, there exists $Y_\lambda\in\mathbb{C}^{m_2\times m_1}$ such that
    \begin{align}
        Y_\lambda C_1&=C_2,\label{eq:YCC_lambda}\\
        H_2(\lambda)&\geq Y_\lambda H_1(\lambda)Y_\lambda^\dag.\label{eq:YHY_lambda}
    \end{align}    

    Let us first consider the case where $R>0$ and $S>0$. Define
    \begin{align}
        W\coloneqq R+S>0,\qquad E\coloneqq W^{-1/2}R W^{-1/2}.
    \end{align}
    Since $R<R+S$, we have $E< W^{-1/2}(R+S)W^{-1/2}=I$. Let
    \begin{align}
        E=\sum_j \lambda_j \ket{e_j}\bra{e_j},\qquad 0<\lambda_j<1
    \end{align}
    be the eigenvalue decomposition of $E$ with an orthonormal basis $\{\ket{e_j}\}_{j=1}^{m_2}$. Define
    \begin{align}
        \ket{w_j}\coloneqq W^{1/2}\ket{e_j},\qquad \ket{\tilde{w}_j}\coloneqq W^{-1/2}\ket{e_j}.
    \end{align}
    Then,
    \begin{align}
         R&=W^{1/2}\left(W^{-1/2}RW^{-1/2}\right)W^{1/2}=\sum_j\lambda_j\ket{w_j}\bra{w_j},\\
         S&=W-R=W^{1/2}\left(\sum_j  \ket{e_j}\bra{e_j}\right)W^{1/2}-R=\sum_j(1-\lambda_j)\ket{w_j}\bra{w_j},
    \end{align}
    and
    \begin{align}
        \braket{w_j|\tilde{w}_k}=\delta_{jk},\qquad \sum_j\ket{\tilde{w}_j}\bra{w_j}=I,
    \end{align}
    where $\delta_{jk}$ denotes the Kronecker delta. 
    For each $j$, choose $Y_{\lambda_j}$ satisfying
    \begin{align}
        Y_{\lambda_j}C_1=C_2,\qquad H_2(\lambda_j)\geq Y_{\lambda_j}H_1(\lambda_j)Y_{\lambda_j}^\dag,
    \end{align}
    and define
    \begin{align}
        Y\coloneqq \sum_j\ket{\tilde{w}_j}\bra{w_j}Y_{\lambda_j},
    \end{align}
    Then,
    \begin{align}
        YC_1=\sum_j\ket{\tilde{w}_j}\bra{w_j}Y_{\lambda_j}C_1=\sum_j\ket{\tilde{w}_j}\bra{w_j}C_2=C_2.
    \end{align}
    Moreover, since $\bra{w_j}Y=\bra{w_j}Y_{\lambda_j}$ and $Y^\dag\ket{w_j}=Y^\dag_{\lambda_j}\ket{w_j}$, we obtain
    \begin{align}
        &\Tr (RYP_1Y^\dag)+\Tr(S Y Q_1 Y^\dag)\\
        &=\sum_j\lambda_j\braket{w_j|YP_1Y^\dag|w_j} + \sum_j(1-\lambda_j)\braket{w_j|Y Q_1 Y^\dag|w_j}=\sum_j\braket{w_j|Y(\lambda_jP_1+(1-\lambda_j)Q_1)Y^\dag|w_j}\\
        &=\sum_j\braket{w_j|Y_{\lambda_j}H_1(\lambda_j)Y_{\lambda_j}^\dag|w_j}\overset{\text{Eq.~\eqref{eq:YHY_lambda}}}{\leq }\sum_j\braket{w_j|H_2(\lambda_j)|w_j}=\Tr(R P_2)+\Tr(SQ_2).
    \end{align}
    Thus, 
    \begin{align}
        \inf_{YC_1=C_2}\{\Tr (RYP_1Y^\dag)+\Tr(S Y Q_1 Y^\dag)\}\leq \Tr(R P_2)+\Tr(SQ_2),
    \end{align}
    which contradicts Eq.~\eqref{eq:separation_Y}, and hence $(P_2,Q_2)\in\mathcal{K}$. 
    
    When $R\geq0$ and $S\geq 0$, defining $R_{\epsilon}\coloneqq R+\epsilon I>0$ and $S_{\epsilon}\coloneqq S+\epsilon I>0$ for $\epsilon>0$, the above argument yields
    \begin{align}
        \inf_{YC_1=C_2}\{\Tr (R_\epsilon YP_1Y^\dag)+\Tr(S_\epsilon Y Q_1 Y^\dag)\}\leq \Tr(R_\epsilon P_2)+\Tr(S_\epsilon Q_2).
    \end{align}
    Since
    \begin{align}
        \Tr (RYP_1Y^\dag)+\Tr(S Y Q_1 Y^\dag)\leq \Tr (R_\epsilon YP_1Y^\dag)+\Tr(S_\epsilon Y Q_1 Y^\dag),
    \end{align}
    we get
    \begin{align}
        \inf_{YC_1=C_2}\{\Tr (RYP_1Y^\dag)+\Tr(S Y Q_1 Y^\dag)\}\leq \Tr(R_\epsilon P_2)+\Tr(S_\epsilon Q_2).
    \end{align}
    Taking the limit $\epsilon \to 0$, we obtain
    \begin{align}
        \inf_{YC_1=C_2}\{\Tr (RYP_1Y^\dag)+\Tr(S Y Q_1 Y^\dag)\}\leq \Tr(RP_2)+\Tr(S Q_2),
    \end{align}
    which again contradicts Eq.~\eqref{eq:separation_Y}, and hence $(P_2,Q_2)\in\mathcal{K}$. 
\end{proof}

\begin{lemma}\label{lem:inner_t}
    Let $A,B\in\mathbb{R}^{m\times m}$ such that $A^\top =A$ and $B^{\top}=-B$. If $A+\ii B\geq0$, then $A+\ii t B\geq 0$ for all $t\in[-1,1]$. 
\end{lemma}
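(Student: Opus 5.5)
The claim follows from convexity once we write $A+\ii tB$ as a convex combination of $A+\ii B$ and $A-\ii B$. The plan has two steps.

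\emph{Step 1: $A-\ii B\geq 0$.} Since $A$ and $B$ are real, complex conjugation of the matrix $A+\ii B$ gives $\overline{A+\ii B}=A-\ii B$. For any $v\in\mathbb{C}^m$ we then have
\begin{align}
v^\dag (A-\ii B)v=\overline{\bar v^\dag (A+\ii B)\bar v}\geq 0,
\end{align}
because $\bar v^\dag (A+\ii B)\bar v$ is real and nonnegative by hypothesis. The same argument shows that $A-\ii B$ is Hermitian, which also follows directly from $A^\top=A$ and $B^\top=-B$. This is the argument used in the paper to pass from Eq.~\eqref{eq:uncertainty_plus} to Eq.~\eqref{eq:uncertainty_minus}.

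\emph{Step 2: convex combination.} For $t\in[-1,1]$, set $\lambda\coloneqq (1+t)/2\in[0,1]$. Then
\begin{align}
A+\ii tB=\lambda(A+\ii B)+(1-\lambda)(A-\ii B).
\end{align}
Both summands are positive semidefinite and the coefficients are nonnegative, so $A+\ii tB\geq 0$.

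There is no real obstacle here. The only point needing care is Step 1, where positivity must be transferred under complex conjugation using the reality of $A$ and $B$.
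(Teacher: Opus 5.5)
Your proof is correct and is essentially the paper's argument. The paper evaluates the form at $z=x+\ii y$ and at $\bar z$, which is exactly your Step 1 (positivity of $A-\ii B$), and obtains $x^\top Ax+y^\top Ay\geq 2|x^\top By|$. It then invokes $|t|\leq 1$, which is your convex-combination step written in real coordinates, since $z^\dag(A+\ii tB)z=x^\top Ax+y^\top Ay-2t\,x^\top By$ is affine in $t$.
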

\begin{proof}
    Let $z=x+\ii y$ with $x,y\in\mathbb{R}^m$. By $z^\dag (A+\ii B)z\geq 0$ and $\bar{z}^\dag (A+\ii B)\bar{z}\geq 0$, we have $x^\top A x+y^\top A y\geq 2 |x^\top B y|$. 
    Then, when $|t|\leq 1$, we have $x^\top A x+y^\top A y\geq 2 |x^\top tB y|$, implying that $A+\ii t B\geq 0$. 
\end{proof}

\begin{theorem}
    For $i=1,2$, let $V_i,\Omega_i\in\mathbb{R}^{m_i\times m_i}$ such that
    \begin{align}
        V_i=V_i^\top>0,\qquad \Omega_i^\top =-\Omega_i,\qquad V_i+\ii\Omega_i\geq 0.
    \end{align}
    Also, let $A_i\in\mathbb{R}^{ m_i\times p}$. Then, the following two are equivalent:
    \begin{enumerate}[(i)]
        \item $\exists X\in\mathbb{R}^{m_2\times m_1}$ such that $XA_1=A_2$ and $V_2+\ii \Omega_2\geq X(V_1+\ii \Omega_1)X^\top$.
        \item For all $t\in(-1,1)$, $A_1^\top(V_1+\ii t\Omega_1)^{-1}A_1\geq A_2^\top (V_2+\ii t \Omega_2)^{-1}A_2$. 
    \end{enumerate}
\end{theorem}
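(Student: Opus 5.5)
The plan is to reduce the statement to Theorem~\ref{thm:existence_complex_matrix}. Take $P_i=V_i+\ii s\Omega_i$ and $Q_i=V_i-\ii s\Omega_i$ with $s$ slightly below $1$, then produce a \emph{real} intertwiner by symmetrizing under complex conjugation, and finally let $s\to1^-$ using a compactness argument.

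\emph{Direction (i)$\implies$(ii).} Let $D\coloneqq (V_2-XV_1X^\top)+\ii(\Omega_2-X\Omega_1X^\top)$. Since $X$ is real, $D\geq 0$ has the form $A+\ii B$ with $A$ real symmetric and $B$ real antisymmetric. Lemma~\ref{lem:inner_t} then gives
\begin{align}
V_2+\ii t\Omega_2\geq X(V_1+\ii t\Omega_1)X^\top \qquad \text{for all } t\in[-1,1].
\end{align}
For $|t|<1$ both $V_i+\ii t\Omega_i$ are positive definite, by the same convex-combination argument used for $\sigma-\ii t\Omega$ earlier. Because $X^\dag=X^\top$, Lemma~\ref{lem:single_inversion} with $Y=X$ and $C_i=A_i$ yields (ii).

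\emph{Direction (ii)$\implies$(i).} Fix $s\in(0,1)$ and set $P_i\coloneqq V_i+\ii s\Omega_i$ and $Q_i\coloneqq V_i-\ii s\Omega_i$. These are positive definite, and $H_i(\lambda)=V_i+\ii s(2\lambda-1)\Omega_i$. Since $|s(2\lambda-1)|<1$ for all $\lambda\in[0,1]$, hypothesis (ii) gives condition (ii) of Theorem~\ref{thm:existence_complex_matrix}. That theorem supplies $Y_s\in\mathbb{C}^{m_2\times m_1}$ with $Y_sA_1=A_2$, $P_2\geq Y_sP_1Y_s^\dag$ and $Q_2\geq Y_sQ_1Y_s^\dag$.

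Complex-conjugating these inequalities swaps $P_i\leftrightarrow Q_i$, because $V_i$ and $\Omega_i$ are real. Hence $\overline{Y_s}$ satisfies the same three constraints; in particular $\overline{Y_s}A_1=A_2$ because $A_i$ is real. The feasible set of $Y$ is convex, by the sandwich computation in Eq.~\eqref{eq:convexity_sandwitch}. Therefore $X_s\coloneqq \Re Y_s=(Y_s+\overline{Y_s})/2$ is a real feasible matrix satisfying
\begin{align}
X_sA_1=A_2,\qquad V_2+\ii s\Omega_2\geq X_s(V_1+\ii s\Omega_1)X_s^\top .
\end{align}

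\emph{Limit $s\to1^-$.} Averaging the $\pm$ inequalities gives $V_2\geq X_sV_1X_s^\top\geq v X_sX_s^\top$, where $v>0$ is the smallest eigenvalue of $V_1$. So $\{X_s\}$ is bounded, and a subsequence $s_k\to1$ has $X_{s_k}\to X$. The constraints are closed, so passing to the limit gives $XA_1=A_2$ and $V_2+\ii\Omega_2\geq X(V_1+\ii\Omega_1)X^\top$, which is (i).

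The main obstacle is that the endpoint matrices $V_i\pm\ii\Omega_i$ may be singular, so Theorem~\ref{thm:existence_complex_matrix} cannot be applied at $s=1$ directly. The regularization by $s<1$ combined with the uniform bound on $X_s$ is what handles this.
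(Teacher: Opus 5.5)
Your proposal is correct and follows essentially the same route as the paper: Lemma~\ref{lem:inner_t} plus an inversion lemma for (i)$\implies$(ii), and for the converse Theorem~\ref{thm:existence_complex_matrix} with $P_i=V_i+\ii s\Omega_i$ and $Q_i=V_i-\ii s\Omega_i$, real symmetrization $X_s=(Y_s+\overline{Y_s})/2$ via convexity, the uniform bound from $V_2\geq X_sV_1X_s^\top$, and a compactness limit $s\to1^-$. The only differences are cosmetic: you cite Lemma~\ref{lem:single_inversion} instead of applying Lemma~\ref{lem:ordering_inversion} directly, and you get $V_2\geq X_sV_1X_s^\top$ by averaging the two inequalities rather than by evaluating on real vectors.
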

\begin{proof}
    Note that for $t\in[0,1)$, we have
    \begin{align}
        V_i+\ii t \Omega_i=(1-t)V_i+t(V_i+\ii\Omega_i)\geq (1-t)V_i>0. 
    \end{align}
    From the transpose of this inequality, we also have $V_i-\ii t \Omega_i>0$. Therefore, $V_i+\ii t \Omega_i$ is invertible for any $|t|<1$. 

    (i)$\implies$(ii): Applying Lemma~\ref{lem:inner_t} to $A\coloneqq V_2-XV_1X^\top$ and $B\coloneqq \Omega_2-X\Omega_1 X^\top$, we have
    \begin{align}
        V_2+\ii t \Omega_2\geq X(V_1+\ii t \Omega_1)X^\top
    \end{align}
    for all $t\in[-1,1]$. For $|t|<1$, since $V_i+\ii t \Omega_i$ is invertible, Lemma~\ref{lem:ordering_inversion} implies
    \begin{align}
        (V_1+\ii t \Omega_1)^{-1}\geq X^\top (V_2+\ii t \Omega_2)^{-1}X.
    \end{align}
    Therefore, $A_1^\top (V_1+\ii t \Omega_1)^{-1}A_1\geq A_1^\top X^\top (V_2+\ii t \Omega_2)^{-1}XA_1=A_2^\top (V_2+\ii t \Omega_2)^{-1}A_2$. 

    (ii)$\implies$(i): By condition~(ii), 
    \begin{align}
        A_1^\top(V_1+\ii t\Omega_1)^{-1}A_1\geq A_2^\top (V_2+\ii t \Omega_2)^{-1}A_2,\qquad t\in(-1,1).\label{eq:H_inv_pre}
    \end{align}
    For a fixed $s\in(0,1)$, we define
    \begin{align}
        P_i\coloneqq V_i+\ii s \Omega_i,\qquad Q_i\coloneqq V_i-\ii s\Omega_i.
    \end{align}
    Then, $P_i$ and $Q_i$ are positive definite, and $ H_i(\lambda)\coloneqq \lambda P_i+(1-\lambda)Q_i$ is given by
    \begin{align}
        H_i(\lambda)=V_i+\ii (2\lambda-1)s\Omega_i.
    \end{align}
    Therefore, from Eq.~\eqref{eq:H_inv_pre}, 
    \begin{align}
        A_1^\top H_1(\lambda)^{-1}A_1\geq A_2^\top H_2(\lambda)^{-1}A_2,\qquad \lambda \in [0,1].
    \end{align}
    
    Therefore, by Theorem~\ref{thm:existence_complex_matrix}, there exists a complex matrix $Y_s$ such that
    \begin{align}
        Y_sA_1&=A_2,\quad V_2+  \ii s \Omega_2\geq Y_s(V_1+ \ii s \Omega_1)Y_s^\dag,\qquad V_2-  \ii s \Omega_2\geq Y_s(V_1- \ii s \Omega_1)Y_s^\dag.\label{eq:Y_s_ineq}
    \end{align}
    Taking the complex conjugate, we also get
    \begin{align}
         \overline{Y_s}A_1&=A_2,\quad V_2-  \ii s \Omega_2\geq \overline{Y_s}(V_1- \ii s \Omega_1)\overline{Y_s}^\dag,\qquad V_2+  \ii s \Omega_2\geq \overline{Y_s}(V_1+ \ii s \Omega_1)\overline{Y_s}^\dag. \label{eq:bar_Y_s_ineq}
    \end{align}
    Defining a real matrix $X_s$ by
    \begin{align}
        X_s\coloneqq \frac{Y_s+\overline{Y_s}}{2},
    \end{align}
    it satisfies $X_sA_1=A_2$. Moreover, the convexity of the map $Y\mapsto Y H_i Y^\dag$, which is shown by the same argument as in Eq.~\eqref{eq:convexity_sandwitch}, we obtain
    \begin{align}
        V_2+\ii s \Omega_2\geq X_s(V_1+\ii s \Omega_1)X_s^\top. 
    \end{align}
    Evaluating this inequality for real vectors, we have
    \begin{align}
        V_2\geq X_sV_1X_s^\top
    \end{align}
    and therefore, $X_s$ is uniformly bounded since
    \begin{align}
        \|X_s\|^2\leq \frac{\|V_2\|}{\lambda_{\min} (V_1)}.
    \end{align}
    Choose a sequence $(s_n)_{n=1}^{\infty}\subset(0,1)$ such that $s_n\nearrow 1$. Since the family $\{X_s:0<s<1\}$ is uniformly bounded and the space of matrices is finite-dimensional, the sequence $(X_{s_n})$ admits a convergent subsequence. We denote this subsequence by $(X_{s_{n_k}})$ and write
    \begin{align}
        X_{s_{n_k}}\longrightarrow X.
    \end{align}
    Since each $X_{s_{n_k}}$ is real and the space of real matrices is closed, $X$ is also real. Moreover,
    \begin{align}
        X_{s_{n_k}}A_1=A_2,\qquad 
        V_2+\ii s_{n_k}\Omega_2\geq
        X_{s_{n_k}}
        (V_1+\ii s_{n_k}\Omega_1)
        X_{s_{n_k}}^\top.
    \end{align}
    Taking $k\to\infty$ and using continuity, we obtain
    \begin{align}
        XA_1=A_2,\qquad V_2+\ii\Omega_2\geq X(V_1+\ii\Omega_1)X^\top.
    \end{align}
    Thus, condition~(i) follows.
\end{proof}

We now apply the preceding theorem with
\begin{align}
    V_1=\sigma,\quad \Omega_1=\Omega,\quad A_1=K,\qquad V_2=\sigma',\quad \Omega_2=\Omega',\quad A_2=K'.
\end{align}
Since condition~(ii) of Theorem~\ref{thm:convertibility_gaussian_shift} holds for all $t\in(-1,1)$, replacing $t$ by $-t$ puts it in the form required by the preceding theorem. We therefore obtain a real matrix $X$ such that
\begin{align}
    XK=K',\qquad \sigma'+\ii\Omega'\geq X(\sigma+\ii\Omega)X^\top,
\end{align}
which is precisely condition~(iii).

\clearpage
\hypertarget{Part3}{}
\begin{center}
\textbf{{\large \underline{Part III: Quantum local asymptotic normality for finite-dimensional unitary-orbit models}}}
\end{center}
\AppendixTOCThree{Contents of Part III}

For vector spaces $V,V'$, we denote by $\Hom(V,V')$ the set of all linear maps from $V$ to $V'$. In particular, we also write $\End(V)\coloneqq \Hom(V,V)$. 

The notation in Part III is independent of that used elsewhere in the manuscript. For example, $\mathcal{F}$ denotes the Fock space rather than the quantum Fisher information.

\section{Setup, main theorem, and proof strategy}

\subsection{Unitary-orbit model on finite-dimensional system}\label{subsec:unitary_orbit_finite_dim}
Fix a density matrix $\rho_0$ on a finite-dimensional Hilbert space $\mathcal{H}$, which we call a reference state. Let 
\begin{align}
    \rho_0=\sum_{a\in\mathcal{A}_+}\zeta_aP_a,\quad \zeta_a>0
\end{align}
be the spectral decomposition of $\rho_0$, where the labels $a\in\mathcal{A}_+$ denote the distinct positive eigenvalues, $P_a$ is the projector onto the eigenspace $V_a$ of dimension $d_a\coloneqq \dim V_a\geq 1$. If $\rho_0$ is not full-rank, we denote by $V_0\coloneqq \ker \rho_0$ with $\zeta_0=0$. 
Introducing
\begin{align}
    \mathcal{A}\coloneqq
    \begin{cases}
        \mathcal{A}_+,\quad &\ker \rho_0=\{0\},\\
        \mathcal{A}_+\cup\{0\},\quad &\ker \rho_0\neq\{0\},
    \end{cases}
\end{align}
the Hilbert space $\mathcal{H}$ is orthogonally decomposed as $\mathcal{H}=\bigoplus_{a\in\mathcal{A}}V_a$.

We call $H\coloneqq \prod_{a\in\mathcal{A}}U(V_a)$, embedded block-diagonally in $U(\mathcal{H})$, the stabilizer of the state $\rho_0$, where $U(V)$ is the set of all unitary operators on a Hilbert space $V$. The state $\rho_0$ is invariant under the conjugate action of any stabilizer element $h\in H$, namely, $h\rho_0 h^{-1}=\rho_0$. A state with degenerate eigenvalues has non-trivial stabilizers. 

The Lie algebra associated with $H$ is $\mathfrak{h}\coloneqq \bigoplus_{a\in\mathcal{A}}\mathfrak{u}(V_a)$, where $\mathfrak{u}(V)$ denotes the set of all anti-Hermitian operators on $V$. We denote the complexification of $\mathfrak{h}$ by
\begin{align}
    \mathfrak{l}\coloneqq \mathfrak{h}\oplus \ii \mathfrak{h}=\bigoplus_{a\in\mathcal{A}}\End (V_a), \label{eq:l_alg}
\end{align}
where $\End(V)$ is the set of all linear operators on $V$. 

An arbitrary infinitesimal unitary perturbation splits into the stabilizer part $\mathfrak{h}$, and a horizontal part, which moves the state between distinct spectral eigenspaces. Precisely, we define 
\begin{align}
     \mathfrak{u}_{\hor}\coloneqq \{G\in \mathfrak{u}(\mathcal{H})\mid P_aG P_a=0 \, \forall a \in \mathcal{A}\}.
\end{align}
so that $\mathfrak{u}(\mathcal{H})=\mathfrak{h}\oplus \mathfrak{u}_{\hor}$ as an orthogonal direct sum of real vector spaces. 
Introducing the spectral pinching map $\mathcal{D}(\cdot )\coloneqq \sum_{a\in \mathcal{A}}P_a( \cdot) P_a$, any anti-Hermitian operator $K\in \mathfrak{u}(\mathcal{H})$ can be uniquely decomposed as
\begin{align}
    K=K^{\ver}+K^{\hor},\quad K^{\ver}\coloneqq \mathcal{D}(K)\in \mathfrak{h},\quad K^{\hor}\coloneqq K- \mathcal{D}(K)\in \mathfrak{u}_{\hor}. \label{eq:hor_ver_decomposition}
\end{align}

As a starting point of our study on the asymptotic behavior of statistical models, let us first prove the following proposition, which shows that at the QLAN scale, the asymptotic behavior of a unitary-orbit statistical model depends only on the horizontal part. 
\begin{proposition}[Horizontal reduction; extension of Lemma~D6. in Ref.~\cite{yamaguchi_QuantumGeometricTensorDeterminesPureState_2026}]\label{prop:horizontal_reduction}
    Let $V$ be a finite-dimensional normed real vector space and let $K:V\to \mathfrak{u}(\mathcal{H})$ be real-linear. Define $K^{\ver}\coloneqq \mathcal{D}\circ K$, $K^{\hor}\coloneqq K- K^{\ver}$ and  
    \begin{align}
        \rho_{u,n}^K\coloneqq e^{n^{-1/2}K(u)}\rho_0 e^{-n^{-1/2}K(u)},\quad \rho_{u,n}^{K^\hor}\coloneqq e^{n^{-1/2}K^{\hor}(u)}\rho_0 e^{-n^{-1/2}K^{\hor}(u)}.
    \end{align}
    Then there exists a finite constant $C_K$, independent of $n$ and $u$, such that
    \begin{align}
        \left\|\left(\rho_{u,n}^K\right)^{\otimes n}-\left(\rho_{u,n}^{K^\hor}\right)^{\otimes n}\right\|_1\leq C_K\frac{\|u\|^2}{\sqrt{n}}
    \end{align}
    for all $n$ and $u$. Consequently, for any $\epsilon<1/4$, we have
    \begin{align}
       \sup_{\|u\|\leq n^\epsilon}\left\|\left(\rho_{u,n}^K\right)^{\otimes n}-\left(\rho_{u,n}^{K^\hor}\right)^{\otimes n}\right\|_1=O(n^{-1/2+2\epsilon})
    \end{align}
    as $n\to\infty$. 
\end{proposition}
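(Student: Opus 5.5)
The plan is to compare the two single-copy states to second order in $\delta\coloneqq n^{-1/2}$, show that they agree at first order, and then control the $n$-fold tensor products using fidelity rather than the trace norm. A trace-norm triangle bound would give $n\cdot O(\|u\|^2/n)=O(\|u\|^2)$, which is too weak. We therefore pass to a purification or fidelity argument that only pays the \emph{square} of the per-copy error.

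\textbf{Step 1: first-order agreement.} Write $K(u)=K^{\ver}(u)+K^{\hor}(u)$. Since $K^{\ver}(u)\in\mathfrak{h}$ commutes with $\rho_0$, the first derivatives coincide:
\begin{align}
\partial_\delta\bigl(e^{\delta K(u)}\rho_0e^{-\delta K(u)}\bigr)\big|_{\delta=0}=[K^{\hor}(u),\rho_0].
\end{align}
Hence the two states agree to first order. However, their difference is still $O(\delta^2\|u\|^2)$ in trace norm, which is insufficient by itself.

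\textbf{Step 2: exact comparison via a stabilizer unitary.} The sharper tool is the following. By the BCH formula, $e^{\delta K(u)}=e^{\delta K^{\hor}(u)+\delta^2 Z(u)+O(\delta^3)}e^{\delta K^{\ver}(u)}$ for a suitable anti-Hermitian $Z(u)$ that is quadratic in $u$. Since $e^{\delta K^{\ver}(u)}\in H$ stabilizes $\rho_0$, this gives
\begin{align}
\rho_{u,n}^K=e^{\delta K^{\hor}(u)+\delta^2 Z(u)+O(\delta^3\|u\|^3)}\rho_0e^{-(\cdots)}.
\end{align}
So both states lie on the unitary orbit of $\rho_0$, and their generators differ only at order $\delta^2$. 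Equivalently, $\rho_{u,n}^K=W\rho_{u,n}^{K^\hor}W^\dagger$ with $W=e^{\delta^2 Z'(u)+O(\delta^3)}$ after conjugating through $e^{\delta K^{\hor}}$. Here $Z'$ is the conjugate of $Z$, and $Z'-\mathcal{D}(Z')$ together with $\mathcal{D}(Z')$ can be handled separately: the vertical part again stabilizes $\rho_0$ up to the conjugation.

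\textbf{Step 3: tensor power and the main obstacle.} For the tensor power we need a bound that scales like $n\cdot(\text{per-copy error})^2$ for the first-order part. The per-copy residual generator $Y\coloneqq \delta^2 Z'$ has size $\|u\|^2/n$. A naive bound $\|W^{\otimes n}\rho W^{\dagger\otimes n}-\rho\|_1\le 2n\|Y\|$ gives exactly $O(\|u\|^2)$, which is not enough. I expect this to be the main obstacle.

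To get the extra $n^{-1/2}$, I would use that, with $\sigma\coloneqq\rho_{u,n}^{K^{\hor}}$, the derivative $\ii[Y^{[n]},\sigma^{\otimes n}]$ has trace-norm size $\sqrt{n}\,\|Y\|$ rather than $n\|Y\|$. This holds once we subtract the part of $Y$ commuting with $\sigma$. The reason is that $\|[Y^{[n]},\sigma^{\otimes n}]\|_1$ is controlled by the SLD Fisher information: it is at most $\sqrt{\mathcal{F}^{f_{1/2}}_{\sigma^{\otimes n}}(Y)}=\sqrt{n\mathcal{F}^{f_{1/2}}_\sigma(Y)}\lesssim\sqrt{n}\|Y\|$, by Cauchy--Schwarz with the SLD. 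Integrating along the path $s\mapsto e^{sY^{[n]}}\sigma^{\otimes n}e^{-sY^{[n]}}$ for $s\in[0,1]$, and noting that the Fisher information is unitarily invariant along the path, yields
\begin{align}
\|(\rho_{u,n}^K)^{\otimes n}-(\rho_{u,n}^{K^\hor})^{\otimes n}\|_1\le\sqrt{n}\,C\|Y\|\le C_K\|u\|^2/\sqrt{n},
\end{align}
where $C_K$ depends only on $\|K\|$ through a uniform bound on $\|Z'\|$. For large $\|u\|$ (beyond the BCH convergence range) the bound is trivial, since the left side is at most $2$; so $C_K$ can be chosen uniformly.

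\textbf{Step 4: corollary.} The displayed corollary follows immediately by substituting $\|u\|\le n^\epsilon$, which gives $O(n^{-1/2+2\epsilon})$.
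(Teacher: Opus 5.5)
Your argument is correct, but it gets the $\sqrt{n}$ gain by a different mechanism than the paper.

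\textbf{The paper's route.} The paper never invokes BCH or Fisher information. Set $A\coloneqq n^{-1/2}K^{\hor}(u)$ and $B\coloneqq n^{-1/2}K^{\ver}(u)$. Since $[B,\rho_0]=0$, the paper writes $\rho^{K^\hor}_{u,n}=e^Ae^B\rho_0e^{-B}e^{-A}$. It then proves, by an integral formula, the exact global estimate $\|e^{A+B}-e^Ae^B\|\le\frac12\|[A,B]\|\le\|A\|\|B\|$. Next it compares the purifications $(e^{A+B}\otimes I)\ket{\Psi_0}$ and $(e^Ae^B\otimes I)\ket{\Psi_0}$, whose distance is $O(\|u\|^2/n)$. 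The tensor-power step is $\|\psi^{\otimes n}-\phi^{\otimes n}\|_1=2\sqrt{1-|\braket{\psi|\phi}|^{2n}}\le2\sqrt{n}\,\|\psi-\phi\|$, so the $\sqrt{n}$ comes from multiplicativity of fidelity.

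\textbf{Your route.} You instead write the discrepancy as a per-copy unitary $W=e^{Y}$. You get $\sqrt{n}$ from three facts: additivity of the SLD QFI, $\|\partial\rho\|_1\le\sqrt{\mathcal{F}^{f_{1/2}}}$, and the bound $4\|Y\|^2$ on the SLD QFI of the orbit generated by $Y$. This is essentially the infinitesimal (Bures-metric) version of the same fidelity mechanism.

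\textbf{Comparison.} The paper's argument is shorter and global, and gives the explicit constant $C_K=2\|K^{\ver}\|\|K^{\hor}\|$. Yours needs control of the BCH remainder and a case split for large $\|u\|$. Its merit is that it links the estimate to the QFI machinery used elsewhere in the paper.

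\textbf{Simplifying your proof.} You can drop BCH entirely. Define $W\coloneqq e^{A+B}e^{-B}e^{-A}$. Then $\rho^K_{u,n}=W\rho^{K^\hor}_{u,n}W^\dagger$ exactly, and $\|W-I\|\le\|A\|\|B\|$ by the product estimate above. The principal logarithm $Y$ of $W$ satisfies $\|Y\|\le\frac{\pi}{2}\|W-I\|$, so your Step~3 gives $\pi\sqrt{n}\,\|A\|\|B\|\le\pi\|K^{\hor}\|\|K^{\ver}\|\,\|u\|^2/\sqrt{n}$ for all $u$, with no case distinction.

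\textbf{Minor points.}
Subtracting the part of $Y$ that commutes with $\sigma$ is unnecessary, because the QFI bound holds for any anti-Hermitian $Y$.
The remark about handling $\mathcal{D}(Z')$ separately does not help: $\mathcal{D}(Z')$ commutes with $\rho_0$, but $W$ acts on $\rho^{K^\hor}_{u,n}$, which $\mathcal{D}(Z')$ need not commute with. Drop that sentence.
Since $Y$ is anti-Hermitian, the tangent is $[Y^{[n]},\sigma_s]$, not $\ii[Y^{[n]},\sigma^{\otimes n}]$.
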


\begin{proof}
    Fix $u$ and write $A_n\coloneqq n^{-1/2} K^\hor(u)$ and $B_n\coloneqq n^{-1/2}K^{\ver}(u)$. Since $[B_n,\rho_0]=0$, we have
    \begin{align}
        e^{A_n}e^{B_n}\rho_0 e^{-B_n}e^{-A_n}=e^{A_n}\rho_0e^{-A_n}= \rho_{u,n}^{K^\hor}.
    \end{align}
    
    For anti-Hermitian operators $A,B$, defining $F_1(s)\coloneqq e^{(1-s)(A+B)}e^{sA}e^{sB}$, we get
    \begin{align}
        e^{A+B}-e^Ae^B=F_1(0)-F_1(1)=-\int_{0}^1 F_1'(s)\dd s =\int_0^1 e^{(1-s)(A+B)}[B,e^{sA}]e^{sB}\dd s.
    \end{align}
    Similarly, for $F_2(t)\coloneqq e^{tA}Be^{(s-t)A}$, we have
    \begin{align}
        [B,e^{sA}]=F_2(0)-F_2(s)=-\int_0^s F_2'(t)\dd t=-\int_0^s e^{tA}[A,B] e^{(s-t)A}\dd t.
    \end{align}
    Therefore, since $e^{tA}$ and $e^{tB}$ are unitary, we obtain
    \begin{align}
        \left\|e^{A+B}-e^Ae^B\right\|\leq \int_0^1s\|[A,B]\|\dd s=\frac{1}{2}\|[A,B]\|\leq \|A\|\|B\|.\label{eq:BCH}
    \end{align}

    Let $\ket{\Psi_0}$ be a purification of $\rho_0$. Then, 
    \begin{align}
       \ket{\Phi^K}\coloneqq \left(e^{A_n+B_n}\otimes I\right)\ket{\Psi_0},\quad  \ket{\Phi^{K^\hor}}\coloneqq \left(e^{A_n}e^{B_n}\otimes I\right)\ket{\Psi_0}
    \end{align}
    are purifications of $\rho_{u,n}^K$ and $\rho_{u,n}^{K^\hor}$, respectively. The distance between these vectors is bounded as
    \begin{align}
        \|\ket{\Phi^K}-\ket{\Phi^{K^\hor}}\|\leq 
        \|e^{A_n+B_n}\otimes I-e^{A_n}e^{B_n}\otimes I\|=\|e^{A_n+B_n}-e^{A_n}e^{B_n}\|\leq \|A_n\|\|B_n\|\leq \frac{C_K}{2}\frac{\|u\|^2}{n},
    \end{align}
    where we have used Eq.~\eqref{eq:BCH} and defined $C_K\coloneqq 2\|K^\ver\|\|K^\hor\|<\infty$. Since the trace distance is contractive under partial trace, we get
    \begin{align}
        \left\|\left(\rho_{u,n}^K\right)^{\otimes n}-\left(\rho_{u,n}^{K^\hor}\right)^{\otimes n}\right\|_1&\leq \left\|(\ket{\Phi^{K}}\bra{\Phi^{K}})^{\otimes n}-(\ket{\Phi^{K^\hor}}\bra{\Phi^{K^\hor}})^{\otimes n}\right\|_1\\
        &\leq 2\sqrt{1-|\braket{\Phi^{K}|\Phi^{K^\hor}}|^{2n}},
    \end{align}
    where in the second line, we have used $\|\ket{u}\bra{u}-\ket{v}\bra{v}\|_1= 2\sqrt{1-|\braket{u,v}|^2}$ for unit vectors $u,v$.
    
    Since, for $x\in[0,1]$, 
    \begin{align}
        1-x^n=(1-x)(1+x+x^2+\cdots+x^{n-1})\leq n(1-x),
    \end{align}
    we get
    \begin{align}
        \sqrt{1-|\braket{\Phi^{K}|\Phi^{K^\hor}}|^{2n}}\leq \sqrt{n}\sqrt{1-|\braket{\Phi^{K}|\Phi^{K^\hor}}|^{2}}\leq \sqrt{n}\|\ket{\Phi^{K}}-\ket{\Phi^{K^\hor}}\|.
    \end{align}
    Therefore,
    \begin{align}
        \left\|\left(\rho_{u,n}^K\right)^{\otimes n}-\left(\rho_{u,n}^{K^\hor}\right)^{\otimes n}\right\|_1\leq C_K\frac{\|u\|^2}{\sqrt{n}}.
    \end{align}
    Consequently, for $\|u\|\leq n^\epsilon$, 
    \begin{align}
        \left\|\left(\rho_{u,n}^K\right)^{\otimes n}-\left(\rho_{u,n}^{K^\hor}\right)^{\otimes n}\right\|_1\leq C_Kn^{-1/2+2\epsilon},
    \end{align}
    which converges to zero as $n\to\infty$ if $\epsilon<1/4$. 
\end{proof}

Intuitively, this proposition implies that, among the generators of unitary transformation, only the ``jump'' operators that map a vector in an eigenspace to a vector in a different eigenspace contribute to the asymptotic behavior of i.i.d. statistical models. In the rest of this paper, therefore, we only consider a unitary orbit generated by the horizontal part. We remark that if $\rho_0=\frac{1}{d}I_{\mathcal{H}}$, $\rho_0$ is invariant under any unitary evolution. Consequently, the asymptotic behavior of its i.i.d. statistical model is trivial, as is also reflected by $\mathfrak{u}_\hor =\{0\}$. Thus, we only consider $\rho_0\neq \frac{1}{d}I_{\mathcal{H}}$ below. We shall repeatedly use the strictly positive minimum spectral gap
\begin{align}
    \delta_*\coloneqq \min_{a,b\in \mathcal{A};\zeta_a>\zeta_b}(\zeta_a-\zeta_b).
\end{align}

To characterize the unitary orbit generated by the horizontal part, let us introduce several notations. For each ordered pair $(a,b)$ with $\zeta_a>\zeta_b$, we define a vector space $W_{ab}\coloneqq\Hom(V_b,V_a)$ equipped with the Hilbert--Schmidt inner product $\braket{X,Y}_{W_{ab}}\coloneqq \Tr_{V_b}(X^\dag Y)$. We denote its conjugate Hilbert space by
\begin{align}
    \mathcal{K}_{ab}\coloneqq \overline{W_{ab}}=\overline{\Hom(V_b,V_a)},
\end{align}
where its elements are written $\bar{X}$ for $X\in W_{ab}$, scalar multiplication is $\alpha\cdot \bar{X}\coloneqq \overline{\bar{\alpha}X}$, and its inner product is
\begin{align}
    \braket{\bar{X},\bar{Y}}_{\mathcal{K}_{ab}}\coloneqq \braket{Y,X}_{W_{ab}}=\Tr_{V_b}(Y^\dag X).\label{eq:inner_prod_K_ab}
\end{align}
For $x=\bar{X}\in\mathcal{K}_{ab}$, we write $\|x\|\coloneqq \|X\|_{\HS}$. As we shall explain soon, the use of the conjugate Hilbert space $\mathcal{K}_{ab}$ is essential to our argument.

The horizontal tangent space is a complex Hilbert space
\begin{align}
    \mathcal{K}_{\hor}\coloneqq \bigoplus_{\zeta_a>\zeta_b\geq 0}\mathcal{K}_{ab}\label{eq:horizontal_one_particle_space}
\end{align}
of dimension $\sum_{\zeta_a>\zeta_b\geq 0}d_ad_b$. Since the Fock space describing the limit model is introduced as a direct sum of symmetric subspaces of the tensor powers of $\mathcal{K}_{\hor}$ (see Section~\ref{sec:Gaussian_limit_model}), we also call it the horizontal one-particle space. 

For $\bar{X}\in\mathcal{K}_{\hor}$, we define
\begin{align}
    \ell_{ab}(\bar{X})\coloneqq X^\dag\in \Hom(V_a,V_b),\quad r_{ab}(\bar{X})\coloneqq X\in \Hom(V_b,V_a)\label{definition:ell_r}
\end{align}
and extend it to maps $\ell,r:\mathcal{K}_\hor\to\mathfrak{gl}(\mathcal{H})$ by direct sum. That is, $\ell_{ab}$ maps the higher-eigenvalue block $V_a$ to the lower-eigenvalue block $V_b$, whereas $r_{ab}$ maps $V_b$ to $V_a$. 

Since $\ell_{ab}$ maps the higher-eigenvalue block $V_a$ to the lower-eigenvalue block $V_b$, $\ell$ is a lowering root vector. Similarly, $r$ is a raising root vector that maps $V_b$ to $V_a$. 

Since $\ell(\alpha\cdot\bar{X})=\ell(\overline{\bar{\alpha}X})=(\bar{\alpha}X)^\dag =\alpha\ell(\bar{X})$, and $r(\alpha\cdot\bar{X})=\bar{\alpha}X=\bar{\alpha}r(\bar{X})$, the map $\ell$ is linear and $r$ is antilinear on $\mathcal{K}_\hor$. This explains our use of the conjugate space $\mathcal{K}_{\mathrm{hor}}$: As we shall show in Theorem~\ref{thm:QLAN}, $\ell$ corresponds to the creation operator that depends linearly on the complex amplitude, whereas $r$ corresponds to the annihilation operator that depends antilinearly on it.

The stabilizer $h=\bigoplus_{a\in\mathcal{A}}h_a\in H$ acts on $X\in W_{ab}$ by $h\cdot X\coloneqq h_a Xh_{b}^{-1}$, and hence on $\mathcal{K}_{ab}$ as $h\cdot \bar{X}\coloneqq \overline{h_a Xh_{b}^{-1}}$. Therefore, 
\begin{align}
    \ell_{ab}(h\cdot \bar{X})&=(h_a Xh_{b}^{-1})^\dag =h_bX^\dag h_a^{-1}=\Ad(h)\ell_{ab}(\bar{X}),\label{eq:block_h_action_ell}\\
    r_{ab}(h\cdot \bar{X})&=h_a Xh_{b}^{-1}=\Ad(h)r_{ab}(\bar{X}).\label{eq:block_h_action_r}
\end{align}

Since $h_a,h_b$ are unitary, the action of $H$ preserves the inner product in Eq.~\eqref{eq:inner_prod_K_ab}, meaning that this action is unitary on $\mathcal{K}_{ab}$ (and hence on $\mathcal{K}_\hor$). 

In this study, we analyze the unitary statistical model:
\begin{definition}[Generator and finite model]
    For $Z=\bigoplus_{\zeta_a>\zeta_b}Z_{ab}\in \mathcal{K}_\hor$, we define anti-Hermitian operator
    \begin{align}
        K(Z)\coloneqq \sum_{\zeta_a>\zeta_b}\frac{\ell_{ab}(Z_{ab})-r_{ab}(Z_{ab})}{\sqrt{\zeta_a-\zeta_b}}\in\mathfrak{u}(\mathcal{H}).\label{eq:definition_K_X}
    \end{align}
    Then, the statistical model for a finite-dimensional system is defined by
    \begin{align}
        \rho_{Z,n}\coloneqq \left(U_n(Z)\rho_0U_n(Z)^\dag\right)^{\otimes n},\qquad U_n(Z)\coloneqq \exp(n^{-1/2}K(Z)).
    \end{align}
\end{definition}

Let us now confirm that this model properly describes the unitary orbits generated by the horizontal part $\mathfrak{u}_\hor$. 
\begin{lemma}\label{lem:real_lin_K}
    The generator $K(Z)$ is anti-Hermitian. The map $K:\mathcal{K}_\hor \to \mathfrak{u}(\mathcal{H})$ is real-linear and isomorphism onto $\mathfrak{u}_\hor$. 
\end{lemma}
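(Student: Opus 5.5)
The plan is to verify the three claims in turn, working blockwise. First, anti-Hermiticity. For each pair $(a,b)$ with $\zeta_a>\zeta_b$ and $Z_{ab}=\bar X$ with $X\in\Hom(V_b,V_a)$, the definitions in Eq.~\eqref{definition:ell_r} give $\ell_{ab}(\bar X)=X^\dag$ and $r_{ab}(\bar X)=X$. Both are viewed as operators on $\mathcal{H}$, extended by zero off the relevant blocks. Since $(X^\dag)^\dag=X$, we get $\ell_{ab}(\bar X)^\dag=r_{ab}(\bar X)$. Hence
\begin{align}
    \bigl(\ell_{ab}(\bar X)-r_{ab}(\bar X)\bigr)^\dag=-\bigl(\ell_{ab}(\bar X)-r_{ab}(\bar X)\bigr).
\end{align}
The factor $(\zeta_a-\zeta_b)^{-1/2}$ is real, so summing over pairs shows $K(Z)$ is anti-Hermitian.

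Second, real-linearity and the image. The map $\ell$ is complex-linear and $r$ is antilinear, as noted before the definition, so both are real-linear; therefore $K$ is real-linear. For the image, $\ell_{ab}(Z_{ab})$ maps $V_a\to V_b$ and $r_{ab}(Z_{ab})$ maps $V_b\to V_a$, with $a\neq b$. Thus every diagonal block $P_cK(Z)P_c$ vanishes, and $K(Z)\in\mathfrak{u}_\hor$.

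Third, bijectivity onto $\mathfrak{u}_\hor$. For injectivity, compress to a block: $P_bK(Z)P_a=(\zeta_a-\zeta_b)^{-1/2}X_{ab}^\dag$, where $Z_{ab}=\bar X_{ab}$. This holds because among all pairs only the $(a,b)$ term has a $V_a\to V_b$ component, the pairs being ordered with $\zeta_a>\zeta_b$. So $K(Z)=0$ forces every $X_{ab}=0$, i.e.\ $Z=0$. For surjectivity, take $G\in\mathfrak{u}_\hor$. Decompose $G=\sum_{c\neq c'}P_cGP_{c'}$; the diagonal blocks vanish by definition of $\mathfrak{u}_\hor$. Group the off-diagonal blocks into unordered pairs $\{a,b\}$, and since the eigenvalues are distinct, label them so that $\zeta_a>\zeta_b$. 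Anti-Hermiticity of $G$ gives $P_aGP_b=-(P_bGP_a)^\dag$. Now set
\begin{align}
    X_{ab}\coloneqq \sqrt{\zeta_a-\zeta_b}\,(P_bGP_a)^\dag\in\Hom(V_b,V_a).
\end{align}
Then the $(a,b)$ term of $K(\bar X)$ reproduces $P_bGP_a+P_aGP_b$, so $K(\bigoplus\bar X_{ab})=G$.

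If a dimension check is preferred to the explicit inverse, note that $\dim_\mathbb{R}\mathcal{K}_\hor=2\sum d_ad_b$, which equals the real dimension of the off-diagonal anti-Hermitian blocks. Injectivity then suffices.

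No step is a real obstacle. The only point needing care is bookkeeping of the antilinear conjugate-space conventions, so that $\ell_{ab}(\bar X)^\dag=r_{ab}(\bar X)$ holds exactly with the scalar conventions of $\mathcal{K}_{ab}$. This is also why only real-linearity, not complex-linearity, is claimed.
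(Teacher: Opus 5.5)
Your proof is correct and follows essentially the same route as the paper: anti-Hermiticity from $\ell_{ab}(\bar X)^\dag=r_{ab}(\bar X)$, injectivity by reading off the block $P_bK(Z)P_a=(\zeta_a-\zeta_b)^{-1/2}X_{ab}^\dag$, and surjectivity via the explicit preimage $X_{ab}=\sqrt{\zeta_a-\zeta_b}\,P_aG^\dag P_b$ (your $(P_bGP_a)^\dag$), with the opposite block fixed by anti-Hermiticity. Your dimension count is a valid alternative to the explicit inverse that the paper does not use.
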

\begin{proof}
    For $Z=\sum_{\zeta_a>\zeta_b}\bar{X}_{ab}\in\mathcal{K}_\hor$, 
    \begin{align}
        K(Z)=\sum_{\zeta_a>\zeta_b}\frac{\ell_{ab}(\bar{X}_{ab})-r_{ab}(\bar{X}_{ab})}{\sqrt{\zeta_a-\zeta_b}}=\sum_{\zeta_a>\zeta_b}\frac{X_{ab}^\dag -X_{ab}}{\sqrt{\zeta_a-\zeta_b}}
    \end{align}
    is anti-Hermitian. 
    
    For $t\in\mathbb{R}$, $\ell_{ab}(t\bar{X}_{ab})-r(t\bar{X}_{ab})=t\ell_{ab}(\bar{X}_{ab})-\bar{t}r(\bar{X}_{ab})=t(\ell_{ab}(\bar{X}_{ab})-r(\bar{X}_{ab}))$. Therefore, the map $K$ is real-linear. 

    For any $a\in\mathcal{A}$, $P_aK(Z)P_a=0$. Moreover, the $(b,a)$ block of $K(Z)$ is $X_{ab}^\dag/\sqrt{\zeta_a-\zeta_b}$; hence $K(Z)=0$ implies $X_{ab}=0$ for any ordered pair, so $K$ is injective. For any $M\in\mathfrak{u}_\hor$, define $X_{ab}\coloneqq \sqrt{\zeta_a-\zeta_b}P_aM^\dag P_b$. Then for $Z_{ab}\coloneqq \bar{X}_{ab}\in\mathcal{K}_{ab}$, we have 
    \begin{align}
        P_bK(Z)P_a=\frac{X_{ab}^\dag}{\sqrt{\zeta_a-\zeta_b}}=P_bM P_a.
    \end{align}
    for any $(a,b)$ such that $\zeta_a>\zeta_b$. From the anti-Hermiticity, we also get $P_aK(Z)P_b=P_aM P_b$. Since the $P_aMP_a=0=P_aK(Z)P_a$ for any $a\in\mathcal{A}$, we get $K(Z)=M$. Thus, $K$ is also surjective. 
\end{proof}

Although the use of $\mathcal{K}_\hor$ is convenient for formal arguments, the connection to concrete models may have been obscured. Thus, we here explain a more explicit expression by using the eigenbasis of $\rho_0$. 

We fix an orthonormal basis $\{\ket{e_k}\}_{k=1}^d$ of $\mathcal{H}$ by using the eigenvalue decomposition of $\rho_0$:
\begin{align}
    \rho_0=\sum_{k=1}^d\mu_k\ket{e_k}\bra{e_k},\label{eq:eigenvalue_decomposition_state}
\end{align}
where the eigenvalues are ordered in decreasing order $\mu_1\geq \mu_2\geq\cdots\mu_R>0=\mu_{R+1}=\cdots\mu_d$ with $R\coloneqq\rank \rho_0$. Note that although there is no preferred choice for eigenvectors in a degenerate eigenspace, one can freely fix the basis. 

We introduce $I_a\coloneqq \{k\colon \mu_k=\zeta_a\}$ so that $V_a$ is spanned by $\{\ket{e_k}\colon k\in I_a\}$ and $|I_a|=d_a$. We denote the matrix unit with respect to this basis by $E_{kl}\coloneqq \ket{e_k}\bra{e_l}$. Then, $\rho_0=\sum_{k=1}^d \mu_k E_{kk}$ is diagonal, and a strictly upper-triangular matrix unit $E_{kl}$ ($k<l$) maps an eigenvalue direction to a weakly-higher-eigenvalue direction. In the rest of this paper, all highest-weight notions below refer to this fixed order basis. 

The set $\{\overline{E_{ij}}\colon i\in I_a,\, j\in I_b\}$ forms a basis of $\mathcal{K}_{ab}$. Therefore, $Z\in\mathcal{K}_\hor$ can be expanded as
\begin{align}
    Z=\sum_{(a,b);\,\zeta_a>\zeta_b}\sum_{k\in I_a}\sum_{l\in I_b}z_{kl}\overline{E_{kl}}\label{eq:Z_expansion}
\end{align}
with $z_{kl}\in\mathbb{C}$. Then
\begin{align}
    K(Z)=\sum_{(a,b);\,\zeta_a>\zeta_b}\sum_{k\in I_a}\sum_{l\in I_b}\frac{z_{kl}\ket{e_l}\bra{e_k} - \overline{z_{kl}}\ket{e_k}\bra{e_l}}{\sqrt{\zeta_a-\zeta_b}}=\sum_{(a,b);\,\zeta_a>\zeta_b}\sum_{k\in I_a}\sum_{l\in I_b}\frac{z_{kl}\ket{e_l}\bra{e_k} - \overline{z_{kl}}\ket{e_k}\bra{e_l}}{\sqrt{\mu_k-\mu_l}},\label{eq:expansion_basis_KZ}
\end{align}
which provides an explicit parameterization. 

We here explain the connection to the parameterization in Ref.~\cite{lahiryMinimaxEstimationLowrank2024} by Lahiry and Nussbaum, where QLAN is established under the assumption that positive eigenvalues are nondegenerate. In their study, the generator of the unitary is parameterized by $\mathfrak{z}_{ij}^{(\mathrm{LH})}\in\mathbb{C}$ as
\begin{align}
    \ii \sum_{1\leq k \leq R}\sum_{k<l\leq d}\frac{\Re \left(\mathfrak{z}_{kl}^{(\mathrm{LH})}\right)T_{k,l}+\Im\left(\mathfrak{z}_{kl}^{(\mathrm{LH})}\right)T_{l,k}}{\sqrt{\mu_k-\mu_l}},
\end{align}
where $T_{k,l}\coloneqq \ii E_{kl}-\ii E_{lk}$ and $T_{l,k}\coloneqq E_{lk}+E_{kl}$ for $k<l$. Since
\begin{align}
    \ii\left(\Re \left(\mathfrak{z}_{kl}^{(\mathrm{LH})}\right)T_{k,l}+\Im\left(\mathfrak{z}_{kl}^{(\mathrm{LH})}\right)T_{l,k}\right)&=\ii\frac{\mathfrak{z}_{kl}^{(\mathrm{LH})}+\overline{\mathfrak{z}_{kl}^{(\mathrm{LH})}}}{2}(\ii E_{kl}-\ii E_{lk})+\ii \frac{\mathfrak{z}_{kl}^{(\mathrm{LH})}-\overline{\mathfrak{z}_{kl}}^{(\mathrm{LH})}}{2\ii}(E_{lk}+E_{kl})\\
    &=\mathfrak{z}_{kl}^{(\mathrm{LH})}\ket{e_l}\bra{e_k}-\overline{\mathfrak{z}_{kl}^{(\mathrm{LH})}}\ket{e_k}\bra{e_l},
\end{align}
our parameter $z_{kl}$ equals the parameter $\mathfrak{z}^{(\mathrm{LH})}_{kl}$ in Ref.~\cite{lahiryMinimaxEstimationLowrank2024} when positive eigenvalues are nondegenerate.

% \clearpage
\subsection{Gaussian limit model and the main theorem}\label{sec:Gaussian_limit_model}
For a one-particle space $\mathcal{K}_\hor$, we introduce the corresponding bosonic Fock space by
\begin{align}
    \mathcal{F}_\hor\coloneqq \Gamma_{\mathrm{s}}(\mathcal{K}_\hor)\coloneqq \bigoplus_{k=0}^\infty \Sym^k\mathcal{K}_\hor,
\end{align}
where $\Sym^k\mathcal{K}_\hor$ denotes the symmetric subspace of $\mathcal{K}_{\hor}^{\otimes k}$, i.e., the subspace consisting of symmetric vectors. For readers unfamiliar with symmetric subspaces and the construction of bosonic Fock spaces, we refer to Appendix~\ref{app:Fock_space_review}. We denote the dimension of $\mathcal{K}_\hor$ by $D\coloneqq \dim \mathcal{K}_\hor$.

We introduce the Gaussian limit model on $\mathcal{F}_\hor$. To explain the notation, let us first consider a single-mode system, whose creation and annihilation operators are denoted by $a^\dag$ and $a$. The single-mode thermal state at inverse temperature $\beta$ is given by
\begin{align}
    \phi_\beta\coloneqq \frac{e^{-\beta a^\dag a}}{\Tr (e^{-\beta a^\dag a})}.
\end{align}
Introducing a number basis $\{\ket{m}\}_{m=0}^\infty$, it can also be expressed as
\begin{align}
    \phi_\beta=(1-e^{-\beta})\sum_{m=0}^\infty e^{-\beta m}\ket{m}\bra{m}.\label{eq:thermal_mode_number_basis}
\end{align}
At $\beta=\infty$, i.e., at zero temperature, it is equal to the vacuum state: $\phi_\infty=\ket{0}\bra{0}$. For a coherent amplitude $z\in\mathbb{C}$, the displacement operator $D(z)$ is defined by
\begin{align}
    D(z)\coloneqq \exp\left(za^\dag - \bar{z} a\right).\label{eq:displacement_single_mode} 
\end{align}

Now let us consider a multi-mode Fock space $\mathcal{F}_\hor$. Let $a_{kl}^\dag$ and $a_{kl}$ be single-particle creation and annihilation operators for each $(k,l)$. Let $\beta_{ab}\coloneqq - \ln (\zeta_b/\zeta_a)$ with $\zeta_a>\zeta_b$. For the kernel block, $\zeta_b=0$, we set $\beta_{a0}=-\ln 0\coloneqq \infty$. We define a $D$-mode Gaussian mode
\begin{align}
    \Phi_0\coloneqq \bigotimes_{\zeta_a>\zeta_b\geq 0}\phi_{\beta_{ab}}^{\otimes d_ad_b}=\bigotimes_{\substack{(a,b);\,\zeta_a>\zeta_b\\k\in I_a,\,l\in I_b}}\phi_{\beta_{ab}}^{(kl)},
\end{align}
where $\phi_{\beta_{ab}}^{(kl)}$ denotes a single-mode Gaussian state, i.e., 
\begin{align}
    \phi_{\beta_{ab}}^{(kl)}\coloneqq \frac{e^{-\beta_{ab} a_{kl}^\dag a_{kl}}}{\Tr (e^{-\beta_{ab} a_{kl}^\dag a_{kl}})}\propto \sum_{m=0}^\infty \left(\frac{\zeta_b}{\zeta_a}\right)^m\ket{m}\bra{m}.\label{eq:thermal_state_mode_kl}
\end{align}

For the expansion in Eq.~\eqref{eq:Z_expansion}, $Z=\sum_{(a,b);\,\zeta_a>\zeta_b}\sum_{k\in I_a}\sum_{l\in I_b}z_{kl}\overline{E_{kl}}\in\mathcal{K}_\hor$, we define creation and annihilation operators by
\begin{align}
    a^\dag(Z)\coloneqq \sum_{(a,b);\,\zeta_a>\zeta_b}\sum_{k\in I_a}\sum_{l\in I_b}z_{kl}a^\dag_{kl},\qquad a(Z)\coloneqq \sum_{(a,b);\,\zeta_a>\zeta_b}\sum_{k\in I_a}\sum_{l\in I_b}\overline{z_{kl}}a_{kl}.
\end{align}
The canonical commutation relations $[a_{kl},a_{k'l'}^\dag]=\delta_{kk'}\delta_{ll'}I$ imply that for $X=\sum_{(a,b);\,\zeta_a>\zeta_b}\sum_{k\in I_a}\sum_{l\in I_b}x_{kl}\overline{E_{kl}}\in\mathcal{K}_\hor$ and $Y=\sum_{(a,b);\,\zeta_a>\zeta_b}\sum_{k\in I_a}\sum_{l\in I_b}y_{kl}\overline{E_{kl}}\in\mathcal{K}_\hor$, we have
\begin{align}
    [a(X),a^\dag(Y)]=\sum_{(a,b);\,\zeta_a>\zeta_b}\sum_{k\in I_a}\sum_{l\in I_b}\overline{x_{kl}}y_{kl}I=\braket{X,Y}_{\mathcal{K}_\hor}I.\label{eq:CCR_Fock}
\end{align}
The multi-mode displacement operator is given by
\begin{align}
    D(Z)\coloneqq \exp\left(a^\dag(Z)-a(Z)\right)= \exp\left(\sum_{(a,b);\,\zeta_a>\zeta_b}\sum_{k\in I_a}\sum_{l\in I_b}\left(z_{kl}a^\dag_{kl}-\overline{z_{kl}}a_{kl}\right)\right)
\end{align}
The linearity and anti-linearity conventions for $a(Z)$ and $a^\dag(Z)$ are chosen so that $D(Z)$ is consistent with the standard notation in Eq.~\eqref{eq:displacement_single_mode}. Accordingly, $\ell_{ab}$ and $r_{ab}$ in Eq.~\eqref{definition:ell_r} are defined with the corresponding linearity and anti-linearity properties, using the conjugate Hilbert space $\mathcal{K}_{ab}$.

\begin{definition}[Limit model]
    For $\Phi_0=\bigotimes_{\zeta_a>\zeta_b\geq 0}\phi_{\beta_{ab}}^{\otimes d_ad_b}$, we define $\Phi_Z\coloneqq D(Z)\Phi_0D(Z)^\dag$. 
\end{definition}
We remark that 
\begin{align}
    \Phi_0=\bigotimes_{\zeta_a>\zeta_b>0}\phi_{\beta_{ab}}^{\otimes d_ad_b}\otimes \bigotimes_{\zeta_a>0,\,\zeta_b=0}(\ket{0}\bra{0})^{\otimes d_ad_b}.
\end{align}
Thus, in the limit model, each positive-positive-eigenvalue pair corresponds to a mixed Gaussian state, while each positive-zero-eigenvalue pair corresponds to a pure Gaussian state.

We now state the main theorem of Part III, which concerns QLAN for unitary-orbit models of arbitrary states on a finite-dimensional system. We write $\mathcal{T}_1(\mathcal{H})$ for the trace-class operators on a Hilbert space $\mathcal{H}$.

\begin{theorem}[Unitary-orbit QLAN]\label{thm:QLAN}
    Fix $\beta\in [0,\frac{1}{12})$. Take any $\kappa$ such that $6\beta<\kappa<\frac{1}{2}$. Then there exist quantum channels
    \begin{align}
        T_{n}:\mathcal{T}_1(\mathcal{H}^{\otimes n})\to \mathcal{T}_1(\mathcal{F}_{\hor}),\qquad S_{n}: \mathcal{T}_1(\mathcal{F}_{\hor})\to\mathcal{T}_1(\mathcal{H}^{\otimes n})
    \end{align}
    such that
    \begin{align}
        \sup_{\|Z\|\leq n^\beta}\left\|T_{n}(\rho_{Z,n})-\Phi_Z\right\|_1&=O(n^{-1/2+\kappa}),\qquad
        \sup_{\|Z\|\leq n^\beta}\left\|\rho_{Z,n}-S_{n}(\Phi_Z)\right\|_1=O(n^{-1/2+\kappa}).
    \end{align}
\end{theorem}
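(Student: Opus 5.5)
The natural route follows the Kahn--Gu\c{t}\u{a} strategy, generalized to degenerate spectra by Schur--Weyl duality. First decompose $\mathcal{H}^{\otimes n}=\bigoplus_{\lambda}\mathcal{H}_\lambda\otimes\mathcal{M}_\lambda$ into irreducible $U(\mathcal{H})\times S_n$ blocks, indexed by Young diagrams $\lambda$ with at most $d$ rows. Since $\rho_{Z,n}$ commutes with permutations, it takes the form $\bigoplus_\lambda p_{Z,n}(\lambda)\,\rho_{Z,n}^{\lambda}\otimes I_{\mathcal{M}_\lambda}/\dim\mathcal{M}_\lambda$. The channel $T_n$ will first measure $\lambda$ and then act on $\mathcal{H}_\lambda$. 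In the other direction, $S_n$ will prepare $\lambda$ from a classical register, and we must then show that this register can be generated independently of $Z$ up to small error.

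Two ingredients carry this out. For the classical part, I will show that $p_{Z,n}$ is close to $p_{0,n}$ in total variation, uniformly over $\|Z\|\le n^\beta$. The reason is that the distribution of $\lambda$ depends only on the spectrum, and the spectrum of $U_n(Z)\rho_0U_n(Z)^\dagger$ equals that of $\rho_0$ exactly. So the classical part is trivially $Z$-independent, and $p_{0,n}$ concentrates on typical diagrams $\lambda$ with $|\lambda_i/n-\mu_i|\lesssim n^{-1/2+\kappa'}$. The quantum part must handle degeneracies. When $\rho_0$ has degenerate eigenvalues, the row lengths of a typical $\lambda$ within a degenerate block $I_a$ are no longer widely separated: they fluctuate by $O(\sqrt{n})$ around the same value. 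In this regime the usual ``highest-weight vector plus small lowering'' picture fails within a block. The plan is to treat only the horizontal part, using Proposition~\ref{prop:horizontal_reduction} to discard vertical generators.

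We then exploit the fact that the stabilizer $H=\prod_a U(V_a)$ acts on both sides, on $\mathcal{K}_\hor$ unitarily via \eqref{eq:block_h_action_ell} and \eqref{eq:block_h_action_r}. Concretely, restrict $\mathcal{H}_\lambda$ to an $H$-adapted subspace. This is the span of vectors obtained from the highest-weight vector of $\lambda$ with respect to the ordered basis \eqref{eq:eigenvalue_decomposition_state} by acting with at most $n^{\gamma}$ lowering operators $E_{lk}$, with $k\in I_a$, $l\in I_b$, and $\zeta_a>\zeta_b$. We also quotient by the within-block $\mathfrak{l}$-action. Dropping this cut produces a tail of order $n^{-1/2+\kappa}$.

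On that subspace I will construct an isometry $V_\lambda$ into $\mathcal{F}_\hor$ that maps these lowered vectors to number states $\prod a_{kl}^{\dagger m_{kl}}|0\rangle$, suitably normalized. The main step is a Holstein--Primakoff-type estimate. Under $V_\lambda$, the collective raising and lowering operators $\sum_i\ell(Z)^{(i)}/\sqrt{n}$ and $\sum_i r(Z)^{(i)}/\sqrt{n}$ must act as $\sqrt{\zeta_a-\zeta_b}\,a^\dagger(Z)$ and $\sqrt{\zeta_a-\zeta_b}\,a(Z)$ up to $O(n^{-1/2+\kappa})$ on low-excitation states. The operators $\ell$ and $r$ in \eqref{definition:ell_r} were chosen precisely so that the linearity conventions match. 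From this, $V_\lambda\rho^\lambda_{Z,n}V_\lambda^\dagger$ is close to $D(Z)\,\Phi_0\,D(Z)^\dagger$. The argument has two parts. First, at $Z=0$ the block state is $\propto\bigotimes_i\mu_i^{\lambda\text{-weight}}$, and the populations of lowered vectors give ratios $(\zeta_b/\zeta_a)^{m}$, matching \eqref{eq:thermal_state_mode_kl}. Second, a Duhamel or Trotter comparison of $e^{n^{-1/2}K(Z)}$ with $D(Z)$ on the truncated space has errors of order $\|Z\|^3 n^{-1/2}$ times polynomial factors in $n^\gamma$, which explains the constraint $6\beta<\kappa$.

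For $S_n$, pick $\lambda\sim p_{0,n}$, apply a truncation of $\Phi_Z$ to $n^\gamma$ photons, map back with $V_\lambda^\dagger$, and tensor with $I_{\mathcal{M}_\lambda}/\dim\mathcal{M}_\lambda$. Finally, average over $\lambda$ with convexity of the trace norm.

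The main obstacle is the degenerate case. For a typical diagram, rows within the same block $I_a$ differ by only $O(\sqrt{n})$. The Gelfand--Tsetlin norms of the lowered vectors then depend on these differences, and the within-block lowering operators are not small. So one cannot use the separated-rows estimate of Kahn--Gu\c{t}\u{a} or Lahiry--Nussbaum. I would first try to avoid the issue by working $H$-equivariantly. Decompose $\mathcal{H}_\lambda$ restricted to $\mathfrak{l}$ into $\prod_a U(V_a)$-irreps, and note that the horizontal excitations transform in $\mathcal{K}_\hor$ as a unitary $H$-module. The aim is then to show that the relevant norm ratios depend only on the inter-block gaps $\zeta_a-\zeta_b\ge\delta_*$. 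Within-block structure would enter only through an $H$-invariant multiplicity space, which we can trace out because it is $Z$-independent.
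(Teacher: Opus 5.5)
You have the paper's architecture: Schur--Weyl blocks with exactly $Z$-independent weights $p_{\lambda,n}$, typical diagrams, a Fock embedding of a low-excitation sector of $H_\lambda$, a Duhamel comparison of generators, and a traced-out multiplicity factor. But the step you flag as the main obstacle is left open, and the route you suggest around it fails.

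\textbf{Equivariance does not control the Gram form.} $H$-equivariance cannot make ``the relevant norm ratios depend only on the inter-block gaps''. The excitation map $x_1\otimes\cdots\otimes x_k\otimes\xi\mapsto\cre(x_1)\cdots\cre(x_k)\xi$ on $\mathcal{K}_\hor^{\otimes k}\otimes E_\lambda$ is covariant only under the diagonal action $u_h^{\otimes k}\otimes\pi_\lambda(h)$. That action has large multiplicities, so Schur's lemma leaves the Gram form unconstrained within isotypic components. The Gram form also genuinely fails to factor. The commutator $[\ann(u),\cre(x)]$ contains $\dd\pi_\lambda(\Qmat^0(\bar u,x))$, built from $r_{ab}(u)\ell_{ab}(x)\in\End(V_a)$ and $\ell_{ab}(x)r_{ab}(u)\in\End(V_b)$, and its traceless part acts nontrivially inside the blocks. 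Concretely, write $E_{pq}$ for $\dd\pi_\lambda(E_{pq})$, take $i_1<i_2$ in $I_a$ with $\lambda_{i_1}>\lambda_{i_2}$, and take $l\in I_b$ with $\zeta_b<\zeta_a$. Then $\langle E_{li_1}v_\lambda,E_{li_2}E_{i_2i_1}v_\lambda\rangle=\lambda_{i_1}-\lambda_{i_2}\neq0$, although $\overline{E_{i_1l}}\otimes v_\lambda$ and $\overline{E_{i_2l}}\otimes E_{i_2i_1}v_\lambda$ are orthogonal in $\mathcal{K}_\hor\otimes E_\lambda$.

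\textbf{The missing idea is quantitative.} It is Theorem~\ref{thm:l_estimate}: on $F_kH_\lambda=U_{\le k}(\mathfrak n_-)E_\lambda$, every $L\in\mathfrak l$ acts as the scalar $\chi_\lambda(L)$ up to $C\|L\|(n^\alpha+k)$. The proof has three steps. The within-block weights of $E_\lambda$ are majorized by $\lambda^{(a)}$ and closed under within-block permutations, so they lie within $2n^\alpha$ of $\bar\lambda_a$. Each inter-block lowering moves them by at most one. Finally, the $\mathfrak{sl}_2$-triple $(E_{ij},E_{ji},\tfrac12(E_{ii}-E_{jj}))$ on the invariant space $F_kH_\lambda$ has spin at most half this spread. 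So the within-block generators are large, but only $O(n^\alpha+k)$ rather than $O(n)$. After the $s_{ab}(\lambda)^{-1}\sim n^{-1}$ normalization, the CCR defect $\Dmat$ is therefore $O(n^{\alpha-1}+k/n)$. That bound is what yields $\|\gram_\lambda-I\|=O(\delta_n)$ and the polar-corrected isometry $W_\lambda=\symword_\lambda\gram_\lambda^{-1/2}$. Without it, mapping lowered vectors to ``suitably normalized'' number states is not an isometry.

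\textbf{The within-block orbit must be kept, not quotiented away.} The ground component is $E_\lambda=U(\mathfrak l)v_\lambda=\Span\{\pi_\lambda(h)v_\lambda:h\in H\}$, and $\pi_\lambda(\rho_0)=c_\lambda I$ on it, so $\rho_{\lambda,0}$ is maximally mixed over $E_\lambda$. The dimension of $E_\lambda$ is generically unbounded; for a single doubly degenerate eigenvalue it is $\lambda_{i_1}-\lambda_{i_2}+1$, of order $\sqrt n$. Vectors lowered from $v_\lambda$ alone therefore carry only about a $1/\dim E_\lambda$ fraction of the state, so your cut does not have a small tail. The embedding must be $\mathcal F_{\le N}\otimes E_\lambda\to H_\lambda$. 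Then $T_n$ traces out $E_\lambda$, and $S_n$ must re-prepare $\omega_\lambda=I_{E_\lambda}/\dim E_\lambda$ before applying $W_\lambda$. Your $S_n$ tensors only with $I_{\mathcal M_\lambda}/\dim\mathcal M_\lambda$, so it stays far from $\rho_{\lambda,Z,n}$ in trace norm.

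It is also not automatic that this factor stays maximally mixed and uncorrelated under $U_{\lambda,n}(Z)$. In the paper this follows from the intertwining estimate applied to each $\xi\in E_\lambda$, together with the $H$-invariance of $\omega_\lambda^{H_\lambda}$. That invariance is needed when the thermal modes are handled through the $P$-representation, where $e^{n^{-1/2}K(Z)}e^{n^{-1/2}K(\iota Y)}$ is rewritten as $e^{n^{-1/2}K(X_n)}h_n$ with $h_n\in H$.
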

As a corollary of this theorem, we also obtain a rate-scaled QLAN.
\begin{corollary}[Rate-scaled unitary-orbit QLAN]\label{cor:QLAN_with_rate}
    Fix a rate $r>0$, and set $m_n\coloneqq \floor{rn}$ and $r_n\coloneqq m_n/n$. For $Z\in\mathcal{K}_\hor$, define
    \begin{align}
        \rho_{Z,n}^{(r)}\coloneqq \left(U_{m_n}(\sqrt{r_n}Z)\rho_0 U_{m_n}(\sqrt{r_n}Z)^\dag\right)^{\otimes m_n}.
    \end{align}
    Fix $\beta\in[0,1/12)$, and take any $\kappa$ such that $6\beta<\kappa<1/2$. Then there exist quantum channels
    \begin{align}
        T_{n}^{(r)}:\mathcal{T}_1(\mathcal{H}^{\otimes m_n})\to \mathcal{T}_1(\mathcal{F}_{\hor}),\qquad S_{n}^{(r)}: \mathcal{T}_1(\mathcal{F}_{\hor})\to\mathcal{T}_1(\mathcal{H}^{\otimes m_n})
    \end{align}
    such that
    \begin{align}
        \sup_{\|Z\|\leq n^\beta}\left\|T_{n}^{(r)}(\rho_{Z,n}^{(r)})-\Phi_{\sqrt{r}Z}\right\|_1&=O(n^{-1/2+\kappa}),\qquad
        \sup_{\|Z\|\leq n^\beta}\left\|\rho_{Z,n}^{(r)}-S_{n}^{(r)}(\Phi_{\sqrt{r}Z})\right\|_1=O(n^{-1/2+\kappa}).
    \end{align}
\end{corollary}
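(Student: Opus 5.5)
Corollary~\ref{cor:QLAN_with_rate} follows from Theorem~\ref{thm:QLAN} by a reparametrization in $n$ plus two small error estimates. Note that $\rho^{(r)}_{Z,n}$ is exactly the finite model of Theorem~\ref{thm:QLAN} at sample size $m_n$ and parameter $\sqrt{r_n}Z$:
\begin{align}
    \rho^{(r)}_{Z,n}=\rho_{\sqrt{r_n}Z,m_n}.
\end{align}
So I would take $T^{(r)}_n\coloneqq T_{m_n}$ and $S^{(r)}_n\coloneqq S_{m_n}$, the channels of Theorem~\ref{thm:QLAN} at sample size $m_n$.

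First I reconcile the parameter ranges. Since $m_n/n\to r$, for $\|Z\|\leq n^\beta$ we have $\|\sqrt{r_n}Z\|\leq \sqrt{r_n}\,n^\beta\leq c\,m_n^{\beta}$ for a constant $c$ and all large $n$. To absorb $c$, choose $\beta'$ with $\beta<\beta'<\kappa/6$, which is possible because $6\beta<\kappa$. Then $c\,m_n^\beta\leq m_n^{\beta'}$ eventually, and Theorem~\ref{thm:QLAN} with exponent $\beta'$ (still $6\beta'<\kappa<1/2$) gives
\begin{align}
    \sup_{\|Z\|\leq n^\beta}\bigl\|T_{m_n}(\rho^{(r)}_{Z,n})-\Phi_{\sqrt{r_n}Z}\bigr\|_1=O(m_n^{-1/2+\kappa})=O(n^{-1/2+\kappa}),
\end{align}
and the analogous bound for $S_{m_n}$ with $\Phi_{\sqrt{r_n}Z}$.

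Next I replace $\Phi_{\sqrt{r_n}Z}$ by $\Phi_{\sqrt r Z}$. Write $W\coloneqq(\sqrt{r}-\sqrt{r_n})Z$. Since $D(\sqrt r Z)=D(W)D(\sqrt{r_n}Z)$ up to a phase (a consequence of the CCR in Eq.~\eqref{eq:CCR_Fock}), unitary invariance of the trace norm gives
\begin{align}
    \|\Phi_{\sqrt r Z}-\Phi_{\sqrt{r_n}Z}\|_1=\|D(W)\Phi' D(W)^\dag-\Phi'\|_1,\qquad \Phi'\coloneqq\Phi_{\sqrt{r_n}Z}.
\end{align}
I then need a Lipschitz bound $\|\Phi_{Y+W}-\Phi_Y\|_1\leq C\|W\|$, uniform in $Y$. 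This reduces to $Y=0$ by unitary invariance. There, I differentiate along $s\mapsto D(sW)\Phi_0D(sW)^\dag$; the derivative $[a^\dag(W)-a(W),\Phi_0]$ has trace norm $O(\|W\|)$ because $\Phi_0$ has finite second moments, via the Hölder-type bound $\|\rho_u O\|_1\leq \|\rho_u^{1/2}O\|_2$ used earlier in the paper. Since $|r-r_n|\leq 1/n$, we get $|\sqrt r-\sqrt{r_n}|=O(1/n)$, so $\|W\|=O(n^{-1+\beta})$, which is well inside $O(n^{-1/2+\kappa})$.

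Finally, the triangle inequality combines the two bounds for $T^{(r)}_n$. For $S^{(r)}_n$ the same argument applies, using trace-norm contractivity of $S_{m_n}$ to carry the Gaussian perturbation through the channel. The only mildly delicate steps are the exponent bookkeeping ($\beta\to\beta'$) and the Lipschitz estimate for displaced Gaussian states; neither is a real obstacle. Small $n$ with $m_n=0$ is irrelevant to the asymptotic statement.
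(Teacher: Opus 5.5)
Your proposal is correct and follows the paper's proof essentially step for step. The shared steps are the identification $\rho^{(r)}_{Z,n}=\rho_{\sqrt{r_n}Z,m_n}$, the choice $\beta<\beta'<\kappa/6$ that absorbs the constant in $\|\sqrt{r_n}Z\|\le m_n^{\beta'}$, Theorem~\ref{thm:QLAN} applied at sample size $m_n$, and a trace-norm Lipschitz bound for Gaussian shifts, combined with the triangle inequality and contractivity of $S_{m_n}$.

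The only difference is how you prove that Lipschitz bound. The paper (Lemma~\ref{lem:gaussian-shift-continuity}) averages the exact pure coherent-state distance over the Glauber--Sudarshan $P$-representation, which gives $\|\Phi_X-\Phi_Y\|_1\le 2\|X-Y\|$ with a universal constant. Your derivative/second-moment argument gives a constant that depends on $\Phi_0$. It is cleanest made rigorous by expanding $\Phi_0$ in number states, which lie in the domain of $a^\dag(W)-a(W)$. It works equally well here and does not need $\Phi_0$ to have a positive $P$-function.
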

The proof of Theorem~\ref{thm:QLAN} and Corollary~\ref{cor:QLAN_with_rate} is provided in Sections~\ref{sec:proof_of_QLAN_main_theorem} and~\ref{sec:ratescaled_QLAN}, respectively.

%\clearpage
\subsection{Schur-Weyl Reduction and Typical Blocks}
In preparation for explaining the proof strategy of Theorem~\ref{thm:QLAN}, we review the Schur--Weyl decomposition, which decomposes an $n$-fold i.i.d. Hilbert space into a direct sum of Hilbert spaces labeled by Young diagrams. Following the pioneering works on QLAN~\cite{kahnLocalAsymptoticNormality2009,lahiryMinimaxEstimationLowrank2024}, we also introduce a set of typical Young diagrams. As we shall see, the probability of observing atypical diagrams is superpolynomially small in $n$, making it sufficient to restrict attention to typical diagrams in the construction of the conversion channels.

\subsubsection{Block decomposition}
We say that $\lambda=(\lambda_1,\ldots,\lambda_d)\in\mathbb{Z}_{\geq 0}^d$ is a partition of $n$ with at most $d$ parts if $\lambda_1\geq \cdots\geq \lambda_d\geq 0$ and $\sum_{i=1}^d\lambda_i=n$. In this case, we write $\lambda\vdash n$. We denote by $\ell(\lambda)$ the number of nonzero parts in $\lambda$. For notational convenience, we set $\lambda_i\coloneqq 0$ for all $i>d$. 

Let $\mathfrak{S}_n$ denote the permutation group on $\{1,\ldots,n\}$, which acts unitarily on $\mathcal{H}^{\otimes n}$ by permuting the tensor factors.
For each $X\in\End (\mathcal{H})$, define an operator $\pi_n(X)\coloneqq X^{\otimes n}$ on $\mathcal{H}^{\otimes n}$. The permutation action commutes with any $\pi_n(X)$. This is the basic mechanism behind the following Schur--Weyl duality (Theorem~4.1 in Ref.~\cite{kahnLocalAsymptoticNormality2009}; Appendix~A in Ref.~\cite{lahiryMinimaxEstimationLowrank2024}; see e.g., Corollary 4.2.11 and Section~9.1.1 in Ref.~\cite{goodmanSymmetryRepresentationsInvariants2009} for more details). 

\begin{fact}[Schur--Weyl duality]
    There exist an irreducible $\GL(\mathcal{H})$-module $H_\lambda$ of highest weight $\lambda$ and an irreducible $\mathfrak{S}_n$-module $K_\lambda$ such that
    \begin{align}
        \mathcal{H}^{\otimes n}\cong \bigoplus_{\lambda\vdash n,\,\ell(\lambda)\leq d}H_\lambda\otimes K_\lambda,
    \end{align}
    as a representation of $\GL(\mathcal{H})\times\mathfrak{S}_n$. 
\end{fact}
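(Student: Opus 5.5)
This is a classical fact, and the natural proof runs through the double centralizer theorem applied to the two commuting actions on $\mathcal{H}^{\otimes n}$. Write $\mathcal{A}\coloneqq \operatorname{span}\{\pi_n(X)\colon X\in\End(\mathcal{H})\}$ and let $\mathcal{B}$ be the span of the permutation operators, i.e., the image of the group algebra $\mathbb{C}[\mathfrak{S}_n]$. We already know $\mathcal{A}\subset\mathcal{B}'$, where the prime denotes the commutant in $\End(\mathcal{H}^{\otimes n})$.

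\textbf{Step 1: $\mathcal{B}$ is semisimple.} By Maschke's theorem, $\mathbb{C}[\mathfrak{S}_n]$ is semisimple. Concretely, averaging an inner product over the finite group makes the action unitary, so $\mathcal{B}$ is a $*$-closed subalgebra of a finite-dimensional matrix algebra.

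\textbf{Step 2: $\mathcal{B}'=\mathcal{A}$.} This is the key step. Identify $\End(\mathcal{H}^{\otimes n})\cong\End(\mathcal{H})^{\otimes n}$. Under this identification, $\mathfrak{S}_n$ acts on operators by permuting tensor factors, and the commutant $\mathcal{B}'$ is exactly the symmetric subspace $\Sym^n(\End(\mathcal{H}))$.
\begin{itemize}
\item The symmetric subspace is spanned by the pure powers $X^{\otimes n}$. This follows by polarization: expanding $(\sum_i t_iX_i)^{\otimes n}$ and extracting the coefficient of $t_1\cdots t_n$ recovers the symmetrization of $X_1\otimes\cdots\otimes X_n$.
\item Invertible $X$ suffice. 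Since $\GL(\mathcal{H})$ is Zariski dense in $\End(\mathcal{H})$ and $X\mapsto X^{\otimes n}$ is polynomial, the span over invertible $X$ is the same.
\end{itemize}
Hence $\mathcal{B}'=\mathcal{A}$, and $\mathcal{A}$ is also the linear span of the image of $\GL(\mathcal{H})$. I expect this to be the main obstacle, or at least the one step needing a genuine argument rather than a citation.

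\textbf{Step 3: Decomposition.} Apply the double centralizer theorem to the semisimple $*$-algebra $\mathcal{B}$ with commutant $\mathcal{A}$. Decompose $\mathcal{H}^{\otimes n}$ into $\mathfrak{S}_n$-isotypic components $\bigoplus_\lambda H_\lambda\otimes K_\lambda$, where $K_\lambda$ is irreducible for $\mathfrak{S}_n$ and $H_\lambda\coloneqq\Hom_{\mathfrak{S}_n}(K_\lambda,\mathcal{H}^{\otimes n})$ is the multiplicity space. Schur's lemma gives
\begin{align}
\mathcal{B}'=\bigoplus_\lambda \End(H_\lambda)\otimes I_{K_\lambda}.
\end{align}
By Step 2 this algebra equals $\mathcal{A}$, so each nonzero $H_\lambda$ is an irreducible $\mathcal{A}$-module, hence an irreducible $\GL(\mathcal{H})$-module. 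Distinct $\lambda$ give inequivalent modules, because the central projections separate them.

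\textbf{Step 4: Labelling by partitions.} The irreducible $\mathfrak{S}_n$-modules $K_\lambda$ are the Specht modules, indexed by $\lambda\vdash n$. It remains to show two things: $H_\lambda\neq0$ exactly when $\ell(\lambda)\le d$, and $H_\lambda$ then has highest weight $\lambda$.
\begin{itemize}
\item Take the ordered basis and the Young symmetrizer $c_\lambda$. For $\ell(\lambda)\le d$, apply $c_\lambda$ to the tensor $e_1^{\otimes\lambda_1}\otimes\cdots\otimes e_d^{\otimes\lambda_d}$, arranged along the rows of the tableau. The column antisymmetrization does not kill this vector, and it yields a vector of torus weight $\lambda$ annihilated by the raising operators $E_{kl}$ with $k<l$. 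So $H_\lambda\neq 0$ and $H_\lambda$ has highest weight $\lambda$.
\item Conversely, if $\ell(\lambda)>d$, column antisymmetrization over more than $d$ slots vanishes on $\mathcal{H}^{\otimes n}$, since $\Lambda^{k}\mathcal{H}=0$ for $k>d$. Hence $H_\lambda=0$.
\end{itemize}
Together with Step 3, this gives the stated isomorphism as $\GL(\mathcal{H})\times\mathfrak{S}_n$ representations. Since this is standard, it is also reasonable simply to cite Goodman--Wallach for Step 4.
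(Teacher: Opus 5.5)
Your proposal is correct. It is the standard double-centralizer proof, which is what the paper relies on: the paper states this Fact without proof and cites Goodman--Wallach's general duality theorem together with its Schur--Weyl specialization~\cite{goodmanSymmetryRepresentationsInvariants2009}, and Theorem~4.1 of~\cite{kahnLocalAsymptoticNormality2009}. The one link you leave implicit in Step~4 is standard. Namely, $c_\lambda$ acts as zero on $K_\mu$ for $\mu\neq\lambda$; equivalently, $c_\lambda\mathcal{H}^{\otimes n}\cong\mathrm{Hom}_{\mathfrak{S}_n}(\mathbb{C}[\mathfrak{S}_n]c_\lambda,\mathcal{H}^{\otimes n})$. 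This is what places your highest-weight vector $c_\lambda v$ inside the $\lambda$-isotypic component and hence fixes the highest weight of $H_\lambda$.
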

Note that the above identification can be chosen to be unitary with suitable inner products on $H_\lambda$ and $K_\lambda$.

Under the Schur--Weyl decomposition, $X^{\otimes n}$ commutes with $\mathfrak{S}_n$-action. Hence, by Schur's lemma, for $X\in\End(\mathcal H)$, there exist unique operators $\pi_\lambda(X)\in\End(H_\lambda)$ such that
\begin{align}
    X^{\otimes n}&=\bigoplus_{\substack{\lambda\vdash n,\, \ell(\lambda)\leq d}}\pi_\lambda(X)\otimes I_{K_\lambda}.\label{eq:tensor_lambda_decomposition}
\end{align}
The relations $(XY)^{\otimes n}=X^{\otimes n}Y^{\otimes n}$ and $(X^\dag)^{\otimes n}=(X^{\otimes n})^\dag$ implies
\begin{align}
    \pi_\lambda(XY)=\pi_\lambda(X)\pi_\lambda(Y),\quad \pi_{\lambda}(X^\dag)=\pi_\lambda(X)^\dag.
\end{align}

Define the derived representation by
\begin{align}
    \dd \pi_\lambda(X)\coloneqq \left(\frac{\dd}{\dd t}\pi_\lambda (e^{tX})\right)\biggl|_{t=0},\quad X\in \End(\mathcal{H}).
\end{align}
Then, since $\pi_n(e^{tX})=(e^{tX})^{\otimes n}$, 
\begin{align}
    \sum_{i=1}^nI^{\otimes (i-1)}\otimes X\otimes I^{\otimes (n-i)}\biggl|_{H_\lambda\otimes K_\lambda}=\dd \pi_\lambda(X)\otimes I_{K_\lambda}. \label{eq:collective_action_der}
\end{align}
From $\pi_\lambda(T^\dag)=\pi_\lambda(T)^\dag$, 
\begin{align}
    \dd\pi_\lambda(X^\dag)=\dd\pi_\lambda(X)^\dag.
\end{align}
As a standard property of the differential of a representation (see e.g., Section~1 in Ref.~\cite{goodmanSymmetryRepresentationsInvariants2009}), we also have
\begin{align}
    \dd\pi_\lambda([X,Y])=[\dd\pi_\lambda(X),\dd\pi_\lambda(Y)],\quad \pi_\lambda (e^X)=e^{\dd\pi_\lambda(X)}.
\end{align}

From Eq.~\eqref{eq:tensor_lambda_decomposition}, we have
\begin{align}
    \rho_{Z,n}=\bigoplus_{\lambda\vdash n,\,\ell(\lambda)\leq d}p_{\lambda,n}\left(\rho_{\lambda,Z,n}\otimes \frac{I_{K_\lambda}}{\dim K_\lambda}\right),\label{eq:rho_decomposition}
\end{align}
where
\begin{align}
    \pi_\lambda(\rho_0)&\geq0,\quad Z_\lambda\coloneqq \Tr_{H_\lambda}\pi_\lambda(\rho_0),\quad \rho_{\lambda,0}\coloneqq Z^{-1}_\lambda \pi_\lambda(\rho_0)\qquad (Z^{-1}_\lambda>0),\\
    U_{\lambda,n}(Z)&\coloneqq \pi_\lambda(U_n(Z)),\quad \rho_{\lambda,Z,n}\coloneqq U_{\lambda,n}(Z) \rho_{\lambda,0} U_{\lambda,n}(Z)^\dag,\quad p_{\lambda,n}\coloneqq Z_\lambda\dim K_{\lambda}. 
\end{align}
Note that for $\lambda$ such that $\pi_\lambda(\rho_0)=0$, we can formally take $\rho_{\lambda,0}$ to be any fixed state on $H_\lambda$. The choice does not contribute to Eq.~\eqref{eq:rho_decomposition}. Moreover, such $\lambda$ is not a typical diagram in the sense of Definition~\ref{def:typical_set} below, and therefore does not affect the discussion.

Graphically, a partition $\lambda\in \mathbb{Z}^d_{\geq 0}$ is represented by a Young diagram with at most $d$ rows, where the $i$-th row contains $\lambda_i$ boxes (see e.g., Ref.~\cite{fulton_YoungTableauxApplicationsRepresentationTheory_1996} for more details). A way of putting a positive integer in each box is called a filling of the diagram. A semistandard Young tableau is a filling that is weakly increasing across each row and strictly increasing down each column. 

A polynomial $s_\lambda$ defined by
\begin{align}
    s_\lambda(x_1,\cdots,x_d)\coloneqq \sum_{T}\prod_{i=1}^d(x_i)^{m_i(T)}\label{eq:Schur_poly}
\end{align}
is called the Schur polynomial, where the summation runs over semistandard Young tableaux $T$ of shape $\lambda$ with entries in $\{1,\cdots,d\}$ and $m_i(T)$ is the number of entries equal to $i$. From Theorem 6.3 in Ref.~\cite{fulton_RepresentationTheory_2004}, for any $g\in\GL(\mathcal{H})$, the trace of $\pi_\lambda(g)$ is given by the Schur polynomial $s_\lambda$ on the eigenvalues $x_1,\cdots x_d$ of $g$ as
\begin{align}
    \Tr_{H_\lambda} \pi_\lambda(g)=s_\lambda (x_1,\cdots x_d).
\end{align}
Since both sides are polynomial in the matrix entries, the character identity extends by continuity to arbitrary $g\in \End(\mathcal{H})$, including noninvertible $g$. 
Therefore, 
\begin{align}
    Z_\lambda=\Tr_{H_\lambda}\pi_\lambda(\rho_0)=s_\lambda(\mu_1,\cdots,\mu_d),\quad p_{\lambda,n}=\dim K_{\lambda}s_\lambda(\mu_1,\cdots,\mu_d).\label{eq:p_lambda_schur}
\end{align}

For a state with rank $R\coloneqq \rank \rho_0$, we obtain the following:
\begin{lemma}\label{lem:rank_poly}
    $p_{\lambda,n}>0$ holds if and only if $\lambda_{R+1}=0$. 
\end{lemma}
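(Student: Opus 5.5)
The plan is to read off positivity of $p_{\lambda,n}$ directly from Eq.~\eqref{eq:p_lambda_schur}, namely $p_{\lambda,n}=\dim K_\lambda\, s_\lambda(\mu_1,\ldots,\mu_d)$. Since $K_\lambda$ is a nonzero irreducible $\mathfrak{S}_n$-module, $\dim K_\lambda\geq 1$. Hence $p_{\lambda,n}>0$ if and only if $s_\lambda(\mu_1,\ldots,\mu_d)>0$. By Eq.~\eqref{eq:Schur_poly}, $s_\lambda$ is a sum of monomials with nonnegative (unit) coefficients over semistandard Young tableaux of shape $\lambda$ with entries in $\{1,\ldots,d\}$. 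The eigenvalues are ordered so that $\mu_1\geq\cdots\geq\mu_R>0=\mu_{R+1}=\cdots=\mu_d$. Each term $\prod_i \mu_i^{m_i(T)}$ is nonnegative. It is strictly positive exactly when $T$ uses only entries in $\{1,\ldots,R\}$, and vanishes otherwise. So $s_\lambda(\mu)>0$ if and only if there exists a semistandard tableau of shape $\lambda$ with all entries at most $R$.

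The remaining step is purely combinatorial: such a tableau exists if and only if $\ell(\lambda)\leq R$, i.e.\ $\lambda_{R+1}=0$.

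For the \emph{if} direction, when $\lambda$ has at most $R$ nonzero rows, I would take the canonical tableau that fills row $i$ entirely with $i$. Rows are then constant, hence weakly increasing. Columns read $1,2,\ldots$ from the top, hence strictly increasing, and all entries are at most $\ell(\lambda)\leq R$.

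For the \emph{only if} direction, suppose $\lambda_{R+1}\geq 1$. Then the first column has at least $R+1$ boxes. Strict increase down the column forces its entries to be $R+1$ distinct positive integers, so some entry is at least $R+1$. Every tableau therefore contains an entry $i>R$, and its monomial carries a factor $\mu_i=0$. Thus $s_\lambda(\mu)=0$ and $p_{\lambda,n}=0$.

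There is no real obstacle here. The only point needing care is that the character identity $\Tr_{H_\lambda}\pi_\lambda(\rho_0)=s_\lambda(\mu)$ is used for the noninvertible $\rho_0$. The paper already justifies this by polynomial continuity, so Eq.~\eqref{eq:p_lambda_schur} may be invoked as stated. The case $R=d$ (full rank) is covered by the convention $\lambda_{d+1}=0$, under which both sides of the equivalence hold trivially.
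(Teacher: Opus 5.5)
Your proposal is correct and follows essentially the same argument as the paper: positivity of $p_{\lambda,n}=\dim K_\lambda\, s_\lambda(\mu_1,\ldots,\mu_d)$ reduces to whether some semistandard tableau of shape $\lambda$ has all entries at most $R$. Column strictness rules this out when $\lambda_{R+1}>0$, and the tableau with row $i$ filled by $i$ gives a positive monomial when $\lambda_{R+1}=0$. You also make explicit two points the paper uses implicitly: that $\dim K_\lambda\geq 1$, and that every monomial is nonnegative.
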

\begin{proof}
    For a semistandard Young tableau $T$, entries strictly increase down columns, implying that an entry in row $j$ is at least $j$. Suppose $\lambda_{R+1}>0$. Then, for any semistandard Young tableau $T$ of shape $\lambda$, $m_{i}(T)>0$ for some $i>R$. Therefore, we get
    \begin{align}
        s_\lambda(\mu_1,\cdots,\mu_d)=\sum_{T}\prod_{i=1}^d(\mu_i)^{m_i(T)}=0,
    \end{align}
    and hence $p_{\lambda,n}=0$. 

    Conversely, if $\lambda_{R+1}=0$, then the semistandard Young tableau filling every box in row $i$ with $i$ contributes to the positive monomial $\prod_{i\leq R} \mu_i^{\lambda_i}>0$. Thus, $p_{\lambda,n}>0$. 
\end{proof}
Therefore, only a shape $\lambda$ with $\lambda_{R+1}=0$ contributes to the decomposition in Eq.~\eqref{eq:rho_decomposition}.

\subsubsection{Typical diagrams and concentration}
We introduce a typical set of Young diagrams.
\begin{definition}[Typical set]\label{def:typical_set}
    For a fixed $\alpha\in(1/2,1)$, define
    \begin{align}
        \Lambda_n\coloneqq \{\lambda\vdash n\colon \lambda_{R+1}=0,\quad |\lambda_i-n\mu_i|\leq n^\alpha\text{ for }1\leq i \leq R\}.\label{eq:typical_set_definition}
    \end{align}
\end{definition}
The aim of this subsection is to prove that the sum of $p_{\lambda,n}$ over atypical diagrams is superpolynomially small in $n$. Although this fact has already been established in Refs.~\cite{kahnLocalAsymptoticNormality2009,lahiryMinimaxEstimationLowrank2024}, we provide its proof for completeness.

Let us first show the following bound for the Schur character:

\begin{lemma}\label{lem:Schur_char_bound}
    For any $\lambda\vdash n$ with $\lambda_{R+1}=0$, 
    \begin{align}
        s_\lambda(\mu_1,\ldots,\mu_R)\leq (n+1)^{R(R-1)/2}\prod_{i=1}^R\mu_i^{\lambda_i}.
    \end{align}
\end{lemma}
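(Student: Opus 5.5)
The plan is to use the combinatorial definition of the Schur polynomial in Eq.~\eqref{eq:Schur_poly} and bound each term individually. Since $\lambda_{R+1}=0$, every semistandard tableau of shape $\lambda$ that contributes a nonzero term has entries in $\{1,\dots,R\}$. This is because entries $i>R$ carry a factor $\mu_i^{m_i(T)}=0$ whenever $m_i(T)>0$. So it suffices to control two things: the number of tableaux, and the largest single monomial $\prod_{i\le R}\mu_i^{m_i(T)}$.

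\textbf{Step 1: the maximal monomial.} Assume, as the ordering convention gives, $\mu_1\ge\cdots\ge\mu_R>0$. For a semistandard tableau $T$ with entries in $\{1,\dots,R\}$, the weight $m(T)=(m_1(T),\dots,m_R(T))$ is dominated by $\lambda$. Indeed, the entries $\le k$ must occupy the first $k$ rows, because columns strictly increase and row $j$ contains only entries $\ge j$. Hence
\begin{align}
\sum_{i\le k}m_i(T)\le\sum_{i\le k}\lambda_i \quad \text{for all } k,
\end{align}
with equality at $k=R$. By Abel summation,
\begin{align}
\log\prod_i\mu_i^{m_i}=\sum_{k=1}^{R}\Bigl(\sum_{i\le k}m_i\Bigr)(\log\mu_k-\log\mu_{k+1})+n\log\mu_R\cdot(\text{boundary term}).
\end{align}
Writing it cleanly, $\sum_i m_i\log\mu_i=\sum_{k<R}M_k(\log\mu_k-\log\mu_{k+1})+n\log\mu_R$ with $M_k=\sum_{i\le k}m_i$. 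Each coefficient $\log\mu_k-\log\mu_{k+1}$ is nonnegative, so this expression is maximized when $M_k=\sum_{i\le k}\lambda_i$. This gives
\begin{align}
\prod_i\mu_i^{m_i(T)}\le\prod_{i=1}^{R}\mu_i^{\lambda_i}.
\end{align}

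\textbf{Step 2: counting tableaux.} The number of semistandard tableaux of shape $\lambda$ with entries in $\{1,\dots,R\}$ equals $s_\lambda(1,\dots,1)=\dim H_\lambda$ for $\GL_R$. By Weyl's dimension formula,
\begin{align}
\dim H_\lambda=\prod_{1\le i<j\le R}\frac{\lambda_i-\lambda_j+j-i}{j-i}.
\end{align}
Each factor is at most $\lambda_i-\lambda_j+1\le n+1$, because $(x+j-i)/(j-i)\le x+1$ for $j-i\ge1$ and $x\ge0$. There are $R(R-1)/2$ factors, so the count is at most $(n+1)^{R(R-1)/2}$. As an alternative that avoids Weyl's formula, one can encode a tableau by its counts $c_{ij}$, the number of entries $j$ in row $i$, where $i\le j\le R$. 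The strictly upper pairs $i<j$ number $R(R-1)/2$, each count lies in $\{0,\dots,n\}$, and the diagonal counts are then determined by the row lengths. This gives the same bound directly.

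Multiplying the bounds from Steps 1 and 2 yields the claim. I expect the only delicate point to be the dominance argument in Step 1. In particular, one must check the case of degenerate eigenvalues, where some $\log\mu_k-\log\mu_{k+1}=0$; the inequality still holds there. One must also confirm that the restriction to entries $\le R$ is legitimate, which is the point handled in Lemma~\ref{lem:rank_poly}.
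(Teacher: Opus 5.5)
Your proposal is correct and follows the paper's proof: Step 1 is exactly the paper's argument (entries $\le k$ lie in the first $k$ rows, then Abel summation against the nonnegative gaps $\ln\mu_k-\ln\mu_{k+1}$), and your encoding of a tableau by the off-diagonal counts $c_{ij}$, $i<j$, is a reparametrization of the paper's chain $\lambda^{(1)}\subset\cdots\subset\lambda^{(R)}=\lambda$. The two are linked by $\lambda^{(k)}_i=\sum_{j=i}^{k}c_{ij}$, so both give the same $(n+1)^{R(R-1)/2}$ count. Your primary count via Weyl's dimension formula is a valid alternative: each factor satisfies $(\lambda_i-\lambda_j+j-i)/(j-i)\le\lambda_i-\lambda_j+1\le n+1$, so it reaches the same bound through a standard representation-theoretic identity rather than elementary counting.
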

\begin{proof}
    Let $T$ be a semistandard tableau of shape $\lambda=(\lambda_1,\ldots,\lambda_R)$ with entries in $\{1,\ldots,R\}$. Since the entries strictly increase down column, an entry in row $j$ is at least $j$. Thus, entries from $\{1,\ldots,k\}$ occurs only in the first $k$ rows, which contains $\sum_{i=1}^k\lambda_i$ boxes, implying that
    \begin{align}
        \sum_{i=1}^km_i(T)\leq \sum_{i=1}^k\lambda_i.
    \end{align}

    Define $w_i\coloneqq \lambda_i-m_i(T)$ and $W_k\coloneqq \sum_{i=1}^k w_i\geq 0$. Since $W_R=0$, we have
    \begin{align}
        \sum_{i=1}^Rw_i\ln(\mu_i)=\sum_{i=1}^{R}(W_{i}-W_{i-1})\ln(\mu_i)=\sum_{k=1}^{R-1}W_k(\ln(\mu_k)-\ln(\mu_{k+1}))\geq 0,
    \end{align}
    where we have used $\mu_{k}\geq \mu_{k+1}$ in the last inequality. Exponentiating this relation, we get $\prod_{i=1}^R\mu_i^{\lambda_i-m_i(T)}\geq 1$, or equivalently,
    \begin{align}
        \prod_{i=1}^R\mu_i^{\lambda_i}\geq \prod_{i=1}^R\mu_i^{m_i(T)}.
    \end{align}
    Therefore, from Eq.~\eqref{eq:Schur_poly}, we obtain
    \begin{align}
        s_\lambda(\mu_1,\ldots,\mu_R) \leq\sum_T  \prod_{i=1}^R\mu_i^{\lambda_i}.\label{eq:Schur_poly_estimate1}
    \end{align}

    Let us now count the number of semistandard Young tableaux of shape $\lambda$. For a semistandard Young tableau $T$, let $\lambda^{(k)}(T)$ denote the shape formed by the boxes with entries $\leq k$. Each $\lambda^{(k)}(T)$ is again a Young-diagram shape  and satisfies $\lambda^{(1)}(T)\subset\lambda^{(2)}(T)\subset \cdots\subset \lambda^{(R)}(T)=\lambda$. Importantly, the chain $\{\lambda^{(k)}\}_{k=0}^R$ uniquely determines $T$, and therefore, the number of semistandard Young tableaux is equal to the number of such chains. Each Young tableau $\lambda^{(k)}$ is specified by its row length $\lambda_i^{(k)}$. Since entries in a column are strictly increasing, $\lambda^{(k)}$ has at most $k$ rows. Since $\lambda^{(R)}=\lambda$, a chain is specified by numbers $\{\lambda_i^{(k)}\}_{1\leq i\leq k,1\leq k\leq R-1}$, each of which takes values in $\{0,\ldots,n\}$. Thus, the number of semistandard Young tableaux is at most
    \begin{align}
        (n+1)^{1+2+\cdots+(R-1)}=(n+1)^{R(R-1)/2}.
    \end{align}
    From Eq.~\eqref{eq:Schur_poly_estimate1}, we obtain
    \begin{align}
         s_\lambda(\mu_1,\ldots,\mu_R) \leq (n+1)^{R(R-1)/2} \prod_{i=1}^R\mu_i^{\lambda_i}.
    \end{align}
\end{proof}

From Eq.~(4.11) in Ref.~\cite{fulton_RepresentationTheory_2004}, we have
\begin{align}
    \dim K_\lambda =\frac{n!}{l_1!\cdots l_R!}\prod_{1\leq i <j\leq R}(l_i-l_j),
\end{align}
where we have defined $l_i\coloneqq \lambda_i+R-i$ for $1\leq i \leq R$. Since
\begin{align}
    l_i-l_j=\lambda_i-\lambda_j+j-i\leq \lambda_i+j-i,
\end{align}
we have
\begin{align}
    \prod_{j=i+1}^R( l_i-l_j)\leq \prod_{j=i+1}^R( \lambda_i+j-i)=\frac{(\lambda_i+R-i)!}{\lambda_i!}=\frac{l_i!}{\lambda_i!}.
\end{align}
Therefore,
\begin{align}
    \dim K_\lambda= \frac{n!}{l_1!\cdots l_R!}\prod_{1\leq i <j\leq R}(l_i-l_j)\leq \frac{n!}{\lambda_1!\cdots\lambda_R!}
\end{align}
From Eq.~\eqref{eq:p_lambda_schur} and Lemma~\ref{lem:Schur_char_bound}, we obtain
\begin{align}
    p_{\lambda,n}\leq (n+1)^{R(R-1)/2}\frac{n!}{\lambda_1!\cdots\lambda_R!} \prod_{i=1}^R\mu_i^{\lambda_i}=(n+1)^{R(R-1)/2}\mathbb{P}\left(X^{(n)}=\lambda\right),
\end{align}
where $X^{(n)}=(X_1^{(n)},\ldots,X_R^{(n)})\sim \mathrm{Multinomial}(n;\mu_1,\ldots,\mu_R)$. By using this bound, we obtain the following:

\begin{proposition}[Typical concentration]\label{prop:typical_concentration}
    For any $\alpha\in(1/2,1)$, there are constants $C_1,C_2,C_3>0$ such that
    \begin{align}
        \varepsilon_{n}^{\atyp}\coloneqq \sum_{\lambda\notin\Lambda_n}p_{\lambda,n}\leq C_1n^{C_2}e^{-C_3n^{2\alpha-1}}.
    \end{align}
\end{proposition}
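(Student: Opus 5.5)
The plan is to use the bound just derived,
\begin{align}
    p_{\lambda,n}\leq (n+1)^{R(R-1)/2}\,\mathbb{P}\left(X^{(n)}=\lambda\right),\qquad X^{(n)}\sim \mathrm{Multinomial}(n;\mu_1,\ldots,\mu_R),
\end{align}
together with Lemma~\ref{lem:rank_poly}, which says that $p_{\lambda,n}=0$ unless $\lambda_{R+1}=0$. Any diagram $\lambda\notin\Lambda_n$ with $p_{\lambda,n}>0$ must therefore satisfy $\lambda_{R+1}=0$ and $|\lambda_i-n\mu_i|>n^\alpha$ for some $1\leq i\leq R$. Summing over such $\lambda$, and noting that distinct $\lambda$ correspond to distinct outcomes of $X^{(n)}$, we obtain
\begin{align}
    \varepsilon_n^{\atyp}\leq (n+1)^{R(R-1)/2}\sum_{\lambda\notin\Lambda_n,\,\lambda_{R+1}=0}\mathbb{P}(X^{(n)}=\lambda)\leq (n+1)^{R(R-1)/2}\,\mathbb{P}\left(\exists i\leq R:\ |X_i^{(n)}-n\mu_i|>n^\alpha\right).
\end{align}
The second inequality holds because the events $\{X^{(n)}=\lambda\}$ are disjoint and each lies inside the deviation event. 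We drop the ordering constraint $\lambda_1\geq\cdots\geq\lambda_R$ since that only enlarges the set. Here we use $\sum_{i\le R}\mu_i=1$, so the multinomial is well defined on $R$ categories.

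Next, a union bound over $i$ reduces the problem to single coordinates. Each marginal is $X_i^{(n)}\sim\mathrm{Binomial}(n,\mu_i)$, a sum of $n$ i.i.d. $\{0,1\}$-valued variables. Hoeffding's inequality gives
\begin{align}
    \mathbb{P}(|X_i^{(n)}-n\mu_i|>n^\alpha)\leq 2\exp\left(-2n^{2\alpha}/n\right)=2e^{-2n^{2\alpha-1}}.
\end{align}
Combining these,
\begin{align}
    \varepsilon_n^{\atyp}\leq 2R(n+1)^{R(R-1)/2}e^{-2n^{2\alpha-1}}.
\end{align}
Since $(n+1)\leq 2n$, this takes the claimed form with $C_1=2R\cdot 2^{R(R-1)/2}$, $C_2=R(R-1)/2$ (or $1$ if $R=1$, to keep it positive) and $C_3=2$. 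The exponent $2\alpha-1$ is positive because $\alpha>1/2$, so the decay is superpolynomial.

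There is no real obstacle beyond bookkeeping. The only points needing care are the injectivity of $\lambda\mapsto\{X^{(n)}=\lambda\}$ and the exclusion of $\lambda$ with $\lambda_{R+1}>0$, which is handled by Lemma~\ref{lem:rank_poly}. The edge case $R=1$ is also fine: then $\mu_1=1$, $X_1^{(n)}=n$ deterministically, and the only $\lambda$ with $p_{\lambda,n}>0$ is $(n)$, which lies in $\Lambda_n$, so $\varepsilon_n^{\atyp}=0$.
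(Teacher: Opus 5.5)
Your proposal is correct and follows essentially the same argument as the paper. You restrict to $\lambda_{R+1}=0$ via Lemma~\ref{lem:rank_poly}, bound $p_{\lambda,n}$ by $(n+1)^{R(R-1)/2}\mathbb{P}(X^{(n)}=\lambda)$, pass to the deviation event, and finish with a union bound and Hoeffding's inequality, arriving at the same bound $2R(n+1)^{R(R-1)/2}e^{-2n^{2\alpha-1}}$. Your remarks on keeping $C_2>0$ and on the trivial case $R=1$ are harmless refinements that the paper leaves implicit.
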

\begin{proof}
    By Lemma~\ref{lem:rank_poly}, only diagrams with $\lambda_{R+1}=0$ contribute. For such a diagram, $\lambda\notin \Lambda_n$ implies $\max_{1\leq i\leq R}|\lambda_i-n\mu_i|> n^\alpha$. Therefore, 
    \begin{align}
        \varepsilon_{n}^{\atyp}&\leq (n+1)^{R(R-1)/2}\sum_{\substack{\lambda\vdash n\\ \max_i |\lambda_i-n\mu_i|> n^\alpha}}\mathbb{P}\left(X^{(n)}=\lambda\right)\\
        &\leq (n+1)^{R(R-1)/2}\mathbb{P}\left(\max_{1\leq i\leq R}|X_i^{(n)}-n\mu_i|> n^\alpha\right)\\
        &\leq (n+1)^{R(R-1)/2}\sum_{i=1}^R\mathbb{P}\left(|X_i^{(n)}-n\mu_i|> n^\alpha\right).
    \end{align}
    From Hoeffding's inequality, 
    \begin{align}
        \mathbb{P}\left(|X_i^{(n)}-n\mu_i|> n^\alpha\right)\leq 2e^{-2n^{2\alpha-1}}.
    \end{align}
    Therefore,
    \begin{align}
        \varepsilon_{n}^{\atyp}\leq 2R (n+1)^{R(R-1)/2} e^{-2n^{2\alpha-1}}.
    \end{align}
\end{proof}
This proposition implies that the block corresponding to an atypical Young diagram $\lambda\notin\Lambda_n$ is not essential in asymptotic conversion. Therefore, in the rest of the manuscript, our main focus is to construct channels for $\lambda\in \Lambda_n$. 

Before closing this subsection, we introduce the block gap:
\begin{definition}[Block gap]
    Define the block average of the length by
    \begin{align}
        \bar{\lambda}_a\coloneqq \frac{1}{d_a}\sum_{i\in I_a}\lambda_i,\quad a\in\mathcal{A}_+
    \end{align}
    and set $\bar{\lambda}_0\coloneqq 0$ for the kernel block. For $\zeta_a>\zeta_b$, we define the block gap
    \begin{align}
        s_{ab}(\lambda)\coloneqq \bar{\lambda}_a-\bar{\lambda}_b.
    \end{align}
\end{definition}
The following lemma provides a typical size of block gap:
\begin{lemma}\label{lem:s_ab_lower_upper_bound}
    Uniformly over $\lambda\in \Lambda_n$ and for all sufficiently large $n$, 
    \begin{align}
        \frac{\delta_*}{2}n\leq s_{ab}(\lambda)\leq 2n.\label{eq:s_ab_lower_upper_bound}
    \end{align}
    In particular, for any ordered pairs $(a,b)$ and $(c,d)$ satisfying $\zeta_a>\zeta_b$ and $\zeta_c>\zeta_d$, $s_{ab}(\lambda)/s_{cd}(\lambda)\leq C$ and $n/s_{ab}(\lambda)\leq C'$ with constants $C\coloneqq 4/\delta_*$ and $C'\coloneqq 2/\delta_*$ for all sufficiently large $n$. 
\end{lemma}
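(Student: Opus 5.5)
The approach is to compare the block averages $\bar\lambda_a$ with $n\zeta_a$ using the typicality condition in Eq.~\eqref{eq:typical_set_definition}, and then let the spectral gap $\delta_*$ dominate the sublinear error $n^\alpha$.

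First, for each positive block $a\in\mathcal{A}_+$ and each $i\in I_a$, we have $\mu_i=\zeta_a$. Typicality then gives $|\lambda_i-n\zeta_a|\le n^\alpha$. Averaging over the $d_a$ indices in $I_a$ yields
\begin{align}
    |\bar\lambda_a-n\zeta_a|\le n^\alpha .
\end{align}
For the kernel block we have $\bar\lambda_0=0=n\zeta_0$ exactly, so the same bound holds trivially.

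Hence, for any ordered pair with $\zeta_a>\zeta_b$,
\begin{align}
    s_{ab}(\lambda)=n(\zeta_a-\zeta_b)+O(n^\alpha),
\end{align}
with the error bounded in absolute value by $2n^\alpha$, uniformly over $\Lambda_n$. For the lower bound, $\zeta_a-\zeta_b\ge\delta_*$ gives $s_{ab}(\lambda)\ge\delta_* n-2n^\alpha$. Since $\alpha<1$, this is at least $\delta_* n/2$ once $2n^{\alpha}\le \delta_* n/2$, i.e.\ for all $n\ge (4/\delta_*)^{1/(1-\alpha)}$. This threshold depends only on $\rho_0$ and $\alpha$, so the bound is uniform in $\lambda$.

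For the upper bound, $\zeta_a-\zeta_b\le \zeta_a\le 1$ gives $s_{ab}(\lambda)\le n+2n^\alpha\le 2n$ for $n\ge 2^{1/(1-\alpha)}$. A cruder argument also works: $\bar\lambda_a\le n$ and $\bar\lambda_b\ge 0$ give $s_{ab}\le n$ directly, since row lengths are nonnegative and at most $n$.

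The ratio statements then follow immediately:
\begin{align}
    \frac{s_{ab}}{s_{cd}}\le\frac{2n}{\delta_* n/2}=\frac{4}{\delta_*},\qquad \frac{n}{s_{ab}}\le\frac{2}{\delta_*}.
\end{align}

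There is no real obstacle. The only points needing care are the kernel-block convention $\bar\lambda_0=0$, which matches $\zeta_0=0$ exactly, and making sure the ``sufficiently large $n$'' threshold is chosen independently of $\lambda$. Both are handled above.
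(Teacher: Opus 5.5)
Your proof is correct and follows the paper's argument: bound $|\bar\lambda_a-n\zeta_a|\le n^\alpha$ blockwise via typicality (with $\bar\lambda_0=0=n\zeta_0$ for the kernel block), deduce $|s_{ab}(\lambda)-n(\zeta_a-\zeta_b)|\le 2n^\alpha$, and absorb $2n^\alpha$ using $\alpha<1$ and $\delta_*\le\zeta_a-\zeta_b\le 1$, with thresholds independent of $\lambda$. Your side remark is also valid: $s_{ab}(\lambda)\le n$ holds for every $\lambda\vdash n$, which gives the upper bound without any largeness condition.
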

\begin{proof}
    For $a\in\mathcal{A}_+$, typicality implies $|\lambda_i-n\zeta_a|\leq n^\alpha$, and hence
    \begin{align}
        |\bar{\lambda}_a-n\zeta_a|\leq \frac{1}{d_a}\sum_{i\in I_a}|\lambda_i-n\zeta_a|\leq n^\alpha.
    \end{align}
    For the kernel block, $\bar{\lambda}_0=0=n\zeta_0$. Therefore,
    \begin{align}
        |s_{ab}(\lambda)-n(\zeta_a-\zeta_b)|\leq 2n^\alpha.\label{eq:s_ab_upper_bound_2nalpha}
    \end{align}
    Since $\alpha<1$, for all sufficiently large $n$, $2n^\alpha\leq \min \{\frac{\delta_*}{2}n,n\}$. Using $\delta_*\leq \zeta_a-\zeta_b\leq 1$, we then obtain
    \begin{align}
        s_{ab}(\lambda)&\geq n(\zeta_a-\zeta_b)-2n^\alpha\geq \frac{\delta_*}{2}n,\\
        s_{ab}(\lambda)&\leq  n(\zeta_a-\zeta_b)+2n^\alpha\leq 2n,
    \end{align}
    which proves Eq.~\eqref{eq:s_ab_lower_upper_bound}. The two ratio bounds follow immediately from these lower and upper bounds.
\end{proof}

%\clearpage
\subsection{Proof strategy}
As shown in Eq.~\eqref{eq:rho_decomposition}, the Schur--Weyl decomposition separates the $n$-copy i.i.d. unitary-orbit model into a classical label $\lambda$ and a quantum state $\rho_{\lambda,Z,n}$ on $H_\lambda$. The multiplicity space $K_\lambda$ is maximally mixed and carries no dependence on $Z$. Intuitively, the row lengths of $\lambda$ encode the macroscopic occupation profile associated with the eigenbasis of $\rho_0$. Proposition~\ref{prop:typical_concentration} shows that the total probability of atypical blocks is superpolynomially small in $n$. It is therefore enough to establish the approximation uniformly on typical blocks and to use fixed output states on atypical blocks. Section~\ref{sec:corresp_fin_Fock} establishes the correspondence between each typical block and the Fock space model, while Section~\ref{sec:forward_reverse_channels} assembles the blockwise constructions into conversion channels. Since the detailed estimates are involved, we first summarize the logic of the proof.

\vspace{5mm}

\underline{Section~\ref{sec:ground_component}:\emph{The ground component and low excitations.}} Inside each typical block $H_\lambda$, we first identify the analog of the zero-excitation sector. In earlier QLAN results without degeneracy in the positive eigenvalues~\cite{kahnLocalAsymptoticNormality2009,lahiryMinimaxEstimationLowrank2024}, the highest-weight vector $v_\lambda$ itself plays this role. With degenerate eigenspaces, however, there is no preferred choice of basis within each eigenspace, and hence no canonical choice of a single highest-weight vector. We therefore fix one eigenbasis, take the corresponding highest-weight vector $v_\lambda$, and then take the linear span of its orbit under the stabilizer. This gives the subspace $E_\lambda$ (Eq.~\eqref{eq:E_lambda_highest_weight_vector_and_stabilizer}), which captures the freedom associated with basis choices within the degenerate eigenspaces. 
 Lemma~\ref{lem:ground_component} shows that inter-block raising operators annihilate $E_\lambda$, while the repeated inter-block lowering operators generate $H_\lambda$ from $E_\lambda$. This naturally leads to the filtration $F_kH_\lambda$, which is spanned by states obtained from $E_\lambda$ by at most $k$ inter-block lowering operations.

The reason for introducing this filtration is that the finite-size estimates become more accurate in the low-excitation regime $k\ll n$. We therefore first construct the finite-to-bosonic correspondence in these low-excitation sectors and later show, after identifying the relevant states, that the weight outside the chosen cutoff is negligible. On $F_kH_\lambda$, the occupations of the eigenspace blocks remain close to their values in $E_\lambda$. This allows us to separate the action of a block-diagonal operator into a dominant scalar part and a controlled non-scalar remainder (Theorem~\ref{thm:l_estimate}). This estimate becomes important in the next section: the commutator of the finite-dimensional ladder operators contains such a block-diagonal term, whose scalar part gives the leading scalar term in the canonical commutation relation (CCR), while the remaining part becomes a finite-size correction.

\vspace{5mm}
\underline{Section~\ref{sec:approximate_commutation_relation}: \emph{Approximate Canonical Commutation Relations.}} 
Local unitary perturbations generate transitions between different eigenspaces, which, after an appropriate rescaling, play the roles of bosonic creation and annihilation operators. The goal of this section is to identify this rescaling and to control the finite-size corrections to the canonical commutation relations. Recall that a transition from an eigenspace with a higher eigenvalue to one with a lower eigenvalue corresponds to a creation operator, while the reverse transition corresponds to an annihilation operator.

The key observation is that the commutator of a transition and its reverse measures an occupation-number difference. For example, for $k\in I_a$ and $l\in I_b$ with $\zeta_a>\zeta_b$,
\begin{align}
    [E_{kl},E_{lk}]=E_{kk}-E_{ll}.
\end{align}
On a Schur block, the diagonal operators $\dd\pi_\lambda(E_{kk})$ and $\dd\pi_\lambda(E_{ll})$ measure occupations. In the presence of degeneracies, the natural basis-independent scalar contribution is governed by the difference between the
average occupations of the two blocks, 
\begin{align}
    s_{ab}(\lambda)\coloneqq \bar{\lambda}_a-\bar{\lambda}_b.
\end{align}
For a typical block, $s_{ab}(\lambda)$ is of order $n$ (Lemma~\ref{lem:s_ab_lower_upper_bound}). We therefore normalize both ladder operators by $s_{ab}(\lambda)^{-1/2}$, so that the leading scalar part of their mixed commutator becomes of order one (see Definitions~\ref{def:rescaling_blocks} and \ref{def:creation_and_annihilation_finite}). With this normalization, Eq.~\eqref{eq:comm_cre_ann_finite} shows
\begin{align}
    [\ann(X),\cre(Y)]=\braket{X,Y}_{\mathcal{K}_\hor}I+\text{finite-$n$ corrections,}
\end{align}
which is the finite-dimensional analogue of the bosonic canonical commutation relation (Eq.~\eqref{eq:CCR_Fock}). The results of this section show that these correction terms are uniformly small in the low-excitation regime used in the subsequent construction.

\vspace{5mm}
\underline{Section~\ref{sec:Fock_embedding}: \emph{Construction of the Fock-Space Embedding.}} Having identified approximate bosonic ladder operators, we use them to build multiparticle excitations. We first introduce a map for an ordered $k$-particle tensor by
\begin{align}
    x_1\otimes \cdots \otimes x_k\otimes \xi\mapsto \frac{1}{\sqrt{k!}}\cre(x_1)\cdots  \cre(x_k)\xi,\qquad x_i\in \mathcal{K}_\hor,\,\xi\in E_\lambda,
\end{align}
and then restrict this map to the symmetric tensors, as in the usual construction of bosonic Fock space. (For a brief
review of the symmetric-tensor construction of Fock space, see Appendix~\ref{app:Fock_space_review} for a brief review.) We perform this construction up to a cutoff $L_n$ that grows slowly with $n$ and denote the resulting map by $\symword_{\lambda,\leq L_n}$. 

The remaining question is whether this map preserves inner products. Because the finite-dimensional ladder operators satisfy the bosonic commutation relations only approximately, this is not automatic. Propositions~\ref{prop:intw_cre} and~\ref{prop:intw_ann} show that the excitation map approximately intertwines the finite-dimensional ladder operators with the Fock creation and annihilation operators. These relations imply that the excitation map approximately preserves the Fock-space inner products. Equivalently, the Gram operator
\begin{align}
    \gram_{\lambda,L_n}\coloneqq \symword_{\lambda,\leq L_n}^\dag \symword_{\lambda,\leq L_n}
\end{align}
is close to the identity on the chosen low-excitation cutoff (Theorem~\ref{thm:qi}). 
A polar correction then turns this approximate embedding into the exact isometry
\begin{align}
    W_\lambda\coloneqq \symword_{\lambda,\leq L_n}\gram_{\lambda,L_n}^{-1/2}\colon \mathcal{F}_{\leq L_n}\otimes E_\lambda \to H_\lambda, 
\end{align}
which remains close to the original excitation map $\symword_{\lambda,\leq L_n}$.

\vspace{5mm}
\underline{Section~\ref{sec:Gaussian_description}: \emph{Gaussian Description of Typical Schur Blocks.}} We now use the isometry $W_\lambda$ to establish the finite-Fock correspondence for both the state and its local dynamics.  First, commuting a creation-type inter-block transition through the block reference state $\rho_{\lambda,0}$ multiplies it by $\zeta_b/\zeta_a$. Repeated excitations therefore acquire geometric weights, exactly as in a thermal bosonic mode (see Eq.~\eqref{eq:thermal_state_mode_kl}). Consequently, Proposition~\ref{prop:b_lambda_n} shows that, uniformly over typical blocks, $\rho_{\lambda,0}$ is asymptotically represented under $W_\lambda$ by the Gaussian reference state $\Phi_0$, up to a negligible cutoff error.

We next compare the local dynamics. For typical blocks, $s_{ab}(\lambda)/n$ approaches the spectral gap $\zeta_a-\zeta_b$. Hence, the finite-dimensional ladder operators are normalized so that they asymptotically match the normalization inherited from the original local generator. Consequently, under the embedding, $n^{-1/2}\dd\pi_\lambda(K(Z))$ is approximated by the Fock displacement generator $a^\dag (Z)-a(Z)$. Proposition~\ref{prop:single_covariance} integrates this generator comparison along coherent-state trajectories and shows that local finite-dimensional unitaries correspond to coherent displacements with a uniformly vanishing error.

Finally, to extend this relation from pure coherent states to mixed displaced Gaussian states, we use the Glauber--Sudarshan $P$-representation, which expresses a mixed Gaussian state as a Gaussian mixture of pure coherent states. The corresponding successive local perturbations of the finite-dimensional system can be combined into an effective local perturbation, with negligible error. Integrating the coherent-state approximation over this Gaussian mixture then yields the uniform approximation of $\rho_{\lambda,Z,n}$ by the displaced Gaussian state $\Phi_Z$. This is also where the earlier low-excitation restriction is justified for the actual states: the probability weight outside the chosen cutoff is negligible.

\vspace{5mm}
\underline{Section~\ref{sec:forward_reverse_channels}: \emph{Forward and Reverse Channels, and QLAN.}} The final step is to assemble the typical-block isometries into global forward and reverse quantum channels. On typical blocks, the preceding approximation is used directly; on atypical blocks, fixed output states are sufficient because their total probability is superpolynomially small. Thus, no further approximation inside a typical block is required. Combining the typical-block approximation with the concentration and cutoff estimates gives the error rate stated in Theorem~\ref{thm:QLAN}.

\vspace{7mm}
Before proceeding to the detailed proof, we fix the following terminology concerning constants used in the uniform estimates. In what follows, we call a positive constant $C>0$ a \emph{structural constant} if it is independent of $n$, $\lambda$, and $Z$. Note that structural constants may depend on the fixed data of the finite-dimensional model, such as the Hilbert space $\mathcal{H}$ and the reference state $\rho_0$.

\clearpage
\section{Correspondence between Finite-dimensional Space and Fock Space on Typical Blocks}\label{sec:corresp_fin_Fock}

\subsection{The ground component and low excitations}\label{sec:ground_component}
From now on, for notational convenience, we abbreviate $T v\coloneqq \dd\pi_\lambda(T)v$ for $T\in \mathfrak{g}\coloneqq \End(\mathcal{H})$ and $v\in H_\lambda$ when no confusion can arise. 

\subsubsection{Triangular decomposition, ground component, and $k$-excitation filtration}
Define
\begin{align}
    \mathfrak{n}_+\coloneqq \bigoplus_{\zeta_a>\zeta_b}\Hom(V_b,V_a),\quad \mathfrak{n}_-\coloneqq \bigoplus_{\zeta_a>\zeta_b}\Hom(V_a,V_b).
\end{align}
Then, $\mathfrak{g}\coloneqq \End(\mathcal{H})=\mathfrak{n}_-\oplus \mathfrak{l}\oplus\mathfrak{n}_+$ as a vector space where $\mathfrak{l}\coloneqq \bigoplus_{a\in \mathcal{A}}\End(V_a)$ (which was defined in Eq.~\eqref{eq:l_alg}). The spaces $\mathfrak{n}_+$ and $\mathfrak{n}_-$ are Lie subalgebras formed by elements that strictly raise and lower, respectively, the block eigenvalues, and they satisfy
\begin{align}
    [\mathfrak{l},\mathfrak{n}_{\pm}]\subset \mathfrak{n}_{\pm}.
\end{align}

Because of the convention for the basis fixed around Eq.~\eqref{eq:eigenvalue_decomposition_state}, $\mathfrak{n}_+$ (resp. $\mathfrak{n}_-$) is contained in the strictly upper-triangular (resp. lower-triangular) subalgebra, while $\mathfrak{l}$ contains strictly upper- and lower-triangular parts within a block. 

We here prove two basic properties of Lie subalgebras:
\begin{lemma}\par~\label{lem:ordering_subalg}
    \begin{enumerate}[(i)]
        \item Let $\mathfrak{a}\subset \mathfrak{g}$ be a Lie subalgebra with ordered basis $(Y_1,\ldots,Y_A)$. Then, any operator $\dd\pi_\lambda(X_1)\cdots \dd\pi_\lambda(X_m)$ with $X_i\in\mathfrak{a}$ is a linear combination of ordered operator monomials $\dd \pi_\lambda(Y_1)^{n_1}\cdots \dd \pi_\lambda(Y_A)^{n_A}$ of total degree at most $m$, i.e., $n_1+\cdots+n_A\leq m$. 
        \item Let $\mathfrak{b}\subset \mathfrak{g}$ be a Lie subalgebra with a decomposition $\mathfrak{b}=\mathfrak{b}_1\oplus\mathfrak{b}_2\oplus\mathfrak{b}_3$. Then, any vector $\dd\pi_\lambda(L_1)\cdots \dd\pi_\lambda(L_m)v$ with $L_i\in\mathfrak{b}$ and $v\in H_\lambda$ is a linear combination of vectors of the sorted form
        \begin{align}
            \dd\pi_\lambda (X_1)\cdots \dd\pi_\lambda (X_p)\dd\pi_\lambda (Y_1)\cdots \dd\pi_\lambda (Y_q)\dd\pi_\lambda (Z_1)\cdots \dd\pi_\lambda (Z_r)v\label{eq:ordering_triangular}
        \end{align}
        with $X_i\in \mathfrak{b}_1$, $Y_i\in\mathfrak{b}_2$, $Z_i\in \mathfrak{b}_3$, with $p+q+r\leq m$. 
    \end{enumerate}
\end{lemma}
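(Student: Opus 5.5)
Both parts are Poincaré--Birkhoff--Witt type reordering arguments carried out inside the image of the derived representation $\dd\pi_\lambda$. The only structural input is $\dd\pi_\lambda([X,Y])=[\dd\pi_\lambda(X),\dd\pi_\lambda(Y)]$, together with real-linearity of $\dd\pi_\lambda$. Since $\dd\pi_\lambda$ is the derivative of a polynomial representation of $\GL(\mathcal{H})$ restricted to $\End(\mathcal{H})$, it is complex-linear. We can therefore expand every factor in a basis and reduce to monomials in basis elements.

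For (i), we first write each $X_i=\sum_j c_{ij}Y_j$. Multilinearity then reduces the claim to words $\dd\pi_\lambda(Y_{j_1})\cdots\dd\pi_\lambda(Y_{j_m})$. We proceed by induction on the length $m$, and for fixed $m$ by induction on the number of inversions, i.e.\ pairs $s<t$ with $j_s>j_t$. When an adjacent pair is out of order, we use
\begin{align}
    \dd\pi_\lambda(Y_j)\dd\pi_\lambda(Y_k)=\dd\pi_\lambda(Y_k)\dd\pi_\lambda(Y_j)+\dd\pi_\lambda([Y_j,Y_k]).
\end{align}
The first term has the same length and one fewer inversion. In the second term, $[Y_j,Y_k]\in\mathfrak{a}$ because $\mathfrak{a}$ is a subalgebra, so it expands in the basis $(Y_i)$ and produces words of length $m-1$. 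The induction hypothesis on length handles these, with total degree at most $m-1\leq m$. The base cases are $m\le1$ and zero inversions, which are already ordered monomials. The double induction terminates because each swap strictly decreases the pair (length, inversions) in lexicographic order.

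For (ii), we apply the same argument with a coarser ordering. We assign each element of $\mathfrak{b}_1,\mathfrak{b}_2,\mathfrak{b}_3$ the label $1,2,3$, and we decompose each $L_i=L_i^{(1)}+L_i^{(2)}+L_i^{(3)}$ along the direct sum. Expanding gives words in homogeneous factors of length $m$, applied to $v$. We again induct on $(m,\#\text{inversions of labels})$. For an adjacent pair with labels in decreasing order, swapping it yields one fewer inversion. The commutator term lies in $\mathfrak{b}$ since $\mathfrak{b}$ is a subalgebra. Decomposing that commutator into its three components gives words of length $m-1$, which are covered by the induction hypothesis. The vector $v$ plays no role beyond being acted on, so the result is the sorted form \eqref{eq:ordering_triangular} with $p+q+r\le m$.

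The only point needing care, and it is mild, is that the commutator need not stay in any single $\mathfrak{b}_i$. Only closure of $\mathfrak{b}$ is needed, which is why the induction is on total length rather than within components. Note also that we do not require $\mathfrak{b}_i$ to be subalgebras, and no ordering within each group is needed.
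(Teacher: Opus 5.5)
Your proof is correct and follows the paper's route: reorder with $\dd\pi_\lambda(Y_j)\dd\pi_\lambda(Y_k)=\dd\pi_\lambda(Y_k)\dd\pi_\lambda(Y_j)+\dd\pi_\lambda([Y_j,Y_k])$ and use closure of the subalgebra, with your double induction on (length, inversions) making explicit the termination that the paper summarizes as ``repeating this procedure.'' The only difference is in (ii): the paper applies (i) to a concatenated ordered basis of $\mathfrak{b}_1,\mathfrak{b}_2,\mathfrak{b}_3$, whereas you rerun the same induction with the coarser three-label ordering, and both are valid.
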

\begin{proof}
(i)
Since $\dd\pi_\lambda$ is a Lie-algebra representation, adjacent out-of-order factors can be reordered using
\begin{align}
\dd\pi_\lambda(Y_s)\dd\pi_\lambda(Y_t)=\dd\pi_\lambda(Y_t)\dd\pi_\lambda(Y_s)+\dd\pi_\lambda([Y_s,Y_t]).
\end{align}
Because $\mathfrak a$ is a Lie subalgebra, the commutator $[Y_s,Y_t]$ is again a linear combination of the $\{Y_j\}_{j=1}^A$. Repeating this procedure expresses every operator word of length $m$ as a linear combination of ordered operator monomials of degree at most $m$.

(ii) 
Choose ordered bases of $\mathfrak{b}_1$, $\mathfrak{b}_2$, and $\mathfrak{b}_3$, and concatenate them in this order to obtain an ordered basis of $\mathfrak{b}$ as a vector space. Applying part~(i) for $\mathfrak{b}$ expresses every operator word of length $m$ as a linear combination of ordered monomials of total degree at most $m$, obtaining Eq.~\eqref{eq:ordering_triangular}. 
\end{proof}

Now, we introduce two important spaces: 
\begin{definition}
    Let $v_\lambda \in H_\lambda$ be the highest-weight vector, normalized as $\|v_\lambda\|=1$. We define
    \begin{align}
        E_\lambda\coloneqq U(\mathfrak{l})v_\lambda=\Span\{\dd\pi_\lambda(L_1)\cdots \dd\pi_\lambda(L_m)v_\lambda \colon m\geq 0,\, L_i\in\mathfrak{l}\}\subset H_\lambda,
    \end{align}
    which we call ground component. In addition, we define the $k$-excitation filtration
    \begin{align}
        F_k H_\lambda\coloneqq U_{\leq k}(\mathfrak{n}_-)E_\lambda\coloneqq \Span\{\dd \pi_\lambda(X_1)\cdots \dd \pi_\lambda(X_m)\xi\colon 0\leq m \leq k,\, X_i\in\mathfrak{n}_-,\, \xi\in E_\lambda\}.
    \end{align}
\end{definition}

To explain the intuition behind the vector space $E_\lambda$, we note that we have fixed an ordered orthonormal basis adapted to the eigenspace decomposition $\mathcal H=\bigoplus_{a\in\mathcal{A}}V_a$. This choice singles out a
highest-weight vector $v_\lambda$. When some eigenspaces are degenerate, however, the choice of basis within each block $V_a$ is not canonical. Any other adapted orthonormal basis is obtained by the action of an element of the
stabilizer $H= \prod_{a\in\mathcal{A}}U(V_a)$. Under such a change of basis, $v_\lambda$ is transformed into
$\pi_\lambda(h)v_\lambda$ with $h\in H$, which is the highest-weight vector relative to the transformed basis. Since the complexification of the Lie algebra of $H$ is $\mathfrak{l}$, we have
\begin{align}
    E_\lambda=U(\mathfrak{l})v_\lambda=\Span_{\mathbb{C}}\{\pi_\lambda(h)v_\lambda\colon h\in H\}.\label{eq:E_lambda_highest_weight_vector_and_stabilizer}
\end{align} 
Indeed, $W\coloneqq \Span_{\mathbb{C}}\{\pi_\lambda(h)v_\lambda\colon h\in H\}$ is $H$-invariant and therefore invariant under $\mathfrak{h}$ and its complexification $\mathfrak{l}$. As $v_\lambda \in W$, this gives $U(\mathfrak{l})v_\lambda\subset W$. Conversely, $U(\mathfrak{l})v_\lambda$ is invariant under $\dd\pi_\lambda(\mathfrak{l})$. Since any $h\in H$ has a block-diagonal anti-Hermitian logarithm $A\in\mathfrak{h}\subset \mathfrak{l}$ with $h=e^A$, the identity $\pi_\lambda(h)=e^{\dd\pi_\lambda(A)}$ shows that $U(\mathfrak{l})v_\lambda$ is $H$-invariant. As it contains $v_\lambda$, we obtain $U(\mathfrak{l})v_\lambda\supset W$, and hence $U(\mathfrak{l})v_\lambda= W$. Thus, $E_\lambda$ collects the degrees of freedom associated with the non-uniqueness of the basis inside the degenerate eigenspaces. 

The terminology ``ground component'' reflects the algebraic role stated in the following lemma:
\begin{lemma} (i) $\mathfrak{n}_+E_\lambda=0$. (ii) $H_\lambda=U(\mathfrak{n}_-)E_\lambda = \bigcup_{k\geq 0}F_kH_\lambda$.\label{lem:ground_component}
\end{lemma}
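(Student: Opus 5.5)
For (i), I plan to argue via weights. Since $[\mathfrak{l},\mathfrak{n}_+]\subset\mathfrak{n}_+$, for $X\in\mathfrak{n}_+$ and $L_1,\ldots,L_m\in\mathfrak{l}$ I can commute $\dd\pi_\lambda(X)$ to the right through $\dd\pi_\lambda(L_1)\cdots\dd\pi_\lambda(L_m)$. This yields a linear combination of terms of the form $\dd\pi_\lambda(L_{i_1})\cdots\dd\pi_\lambda(L_{i_j})\dd\pi_\lambda(X')v_\lambda$ with $X'\in\mathfrak{n}_+$. It therefore suffices to show $\mathfrak{n}_+v_\lambda=0$. In the fixed ordered basis, $\mathfrak{n}_+$ is spanned by matrix units $E_{kl}$ with $k<l$, and these lie in the strictly upper-triangular Borel nilradical. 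The highest-weight vector is annihilated by every strictly upper-triangular element, so $\dd\pi_\lambda(E_{kl})v_\lambda=0$ for $k<l$, which gives (i). I would write the commutation step as an induction on $m$, using $\dd\pi_\lambda(X)\dd\pi_\lambda(L)=\dd\pi_\lambda(L)\dd\pi_\lambda(X)+\dd\pi_\lambda([X,L])$ with $[X,L]\in\mathfrak{n}_+$.

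For (ii), the second equality is immediate: $U(\mathfrak{n}_-)E_\lambda$ is the union of the increasing filtration $F_kH_\lambda$ by definition. For the first equality I will use irreducibility. Because $H_\lambda$ is an irreducible $\GL(\mathcal{H})$-module, it is also irreducible under $\dd\pi_\lambda(\mathfrak{g})$ (the group is connected and $\pi_\lambda(e^X)=e^{\dd\pi_\lambda(X)}$). Hence $U(\mathfrak{g})v_\lambda=H_\lambda$. I then apply Lemma~\ref{lem:ordering_subalg}(ii) to $\mathfrak{b}=\mathfrak{g}=\mathfrak{n}_-\oplus\mathfrak{l}\oplus\mathfrak{n}_+$, using $\mathfrak{b}_1=\mathfrak{n}_-$, $\mathfrak{b}_2=\mathfrak{l}$ and $\mathfrak{b}_3=\mathfrak{n}_+$. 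Every vector $\dd\pi_\lambda(L_1)\cdots\dd\pi_\lambda(L_m)v_\lambda$ becomes a combination of sorted words $(\mathfrak{n}_-\text{-word})(\mathfrak{l}\text{-word})(\mathfrak{n}_+\text{-word})v_\lambda$. Whenever the $\mathfrak{n}_+$-word is nonempty the term vanishes, since $\mathfrak{n}_+v_\lambda=0$. The surviving terms have the form $(\mathfrak{n}_-\text{-word})\xi$ with $\xi\in U(\mathfrak{l})v_\lambda=E_\lambda$. This gives $H_\lambda\subset U(\mathfrak{n}_-)E_\lambda$, and the reverse inclusion is trivial.

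The only delicate point is checking that the fixed basis ordering really places $\mathfrak{n}_+$ inside the strictly upper-triangular subalgebra. This holds because eigenvalues were ordered decreasingly and blocks $I_a$ are contiguous, so $\zeta_a>\zeta_b$ forces indices in $I_a$ to precede those in $I_b$. It also requires that "highest weight" refers to this same Borel, as the paper stipulates. With that in place, the argument is a short PBW-type reordering.
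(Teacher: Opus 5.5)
Your proposal is correct and follows the paper's proof essentially step for step. For (i), you commute $\dd\pi_\lambda(X)$ through the $\mathfrak{l}$-word using $[\mathfrak{n}_+,\mathfrak{l}]\subset\mathfrak{n}_+$ and then use $\mathfrak{n}_+v_\lambda=0$; for (ii), you use irreducibility to get $H_\lambda=U(\mathfrak{g})v_\lambda$, followed by the triangular reordering of Lemma~\ref{lem:ordering_subalg}(ii). Your side remarks fill in details the paper leaves implicit: Lie-algebra irreducibility via connectedness of $\GL(\mathcal{H})$, and the fact that the decreasing block ordering places $\mathfrak{n}_+$ in the strictly upper-triangular subalgebra.
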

In our construction of the correspondence between the finite-dimensional Schur module and the Fock space, the block-raising and block-lowering operators, namely the operators associated with $\mathfrak n_+$ and $\mathfrak n_-$, play the roles of annihilation and creation operators, respectively. Accordingly, properties~(i) and~(ii) parallel the familiar facts that the Fock vacuum is annihilated by all annihilation operators and that the Fock space is generated from the vacuum by repeated applications of creation operators.

\begin{proof}[Proof of Lemma~\ref{lem:ground_component}]
    (i)
    Any element in $E_\lambda$ is a linear combination of $\dd \pi_\lambda(L_1)\cdots \dd \pi_\lambda(L_m)v_\lambda$ with $L_i\in\mathfrak{l}$. For $X\in \mathfrak{n}_+$, iterating $XL=LX+[X,L]$ and using $[X,L_i]\in\mathfrak{n}_+$, we find that $\dd \pi_\lambda (X)\dd \pi_\lambda(L_1)\cdots \dd \pi_\lambda(L_m)v_\lambda$ is a linear combination of terms of the form $\dd\pi_\lambda(L_{1}')\cdots\dd\pi_\lambda(L_{k}')\dd\pi_\lambda(Y')v_\lambda$, where $L_j'\in\mathfrak{l}$, $Y'\in\mathfrak{n}_+$ and $k\leq m$. Since $\dd\pi_\lambda(Y')v_\lambda=0$ for any $Y'\in\mathfrak{n}_+$, we get
    \begin{align}
        \dd \pi_\lambda (X)\dd \pi_\lambda(L_1)\cdots \dd \pi_\lambda(L_m)v_\lambda=0,
    \end{align}
    and hence $\mathfrak{n}_+E_\lambda=0$. 

    (ii) We first prove $H_\lambda=U(\mathfrak{g})v_\lambda$ (see, e.g., Proposition~14.13(ii) in Ref.~\cite{fulton_RepresentationTheory_2004}). Let $W\coloneqq U(\mathfrak{g})v_\lambda$, which is not empty since $v_\lambda\neq0$. Since $X W\subset W$ for any $X\in\mathfrak{g}$, $W$ is an invariant subspace of $H_\lambda$. Since $H_\lambda$ is irreducible, we get $W=H_\lambda$. 

    From the property~(ii) of Lemma~\ref{lem:ordering_subalg}, $U(\mathfrak{g})v_\lambda$ is spanned by the sorted vectors of the form 
    \begin{align}
        \dd\pi_\lambda (X_1)\cdots \dd\pi_\lambda (X_p)\dd\pi_\lambda (Y_1)\cdots \dd\pi_\lambda (Y_q)\dd\pi_\lambda (Z_1)\cdots \dd\pi_\lambda (Z_r)v_\lambda,
    \end{align}
    where $X_i\in\mathfrak{n}_-$, $Y_i\in\mathfrak{l}$, and $Z_i\in\mathfrak{n}_+$, i.e., $U(\mathfrak{g})v_\lambda=U(\mathfrak{n}_-)U(\mathfrak{l})U(\mathfrak{n}_+)v_\lambda$. Since $\mathfrak{n}_+E_\lambda=0$ and hence $U(\mathfrak{n}_+)v_\lambda=\mathbb{C}v_\lambda$, we get $U(\mathfrak{g})v_\lambda=U(\mathfrak{n}_-)U(\mathfrak{l})v_\lambda=U(\mathfrak{n}_-)E_\lambda$. 
\end{proof}

The following lemma shows the stability of $k$-excitation filtration under $\mathfrak{l}$, which we shall use later. 
\begin{lemma}\label{lem:stability_filtration}
    For each $k\geq 0$, $\mathfrak{l}F_kH_\lambda\subset F_kH_\lambda$. Moreover, for any $g\in\prod_{a\in\mathcal{A}}\GL(V_a)$, $\pi_\lambda(g)F_kH_\lambda=F_kH_\lambda$. 
\end{lemma}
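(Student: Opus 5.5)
The plan is to prove the infinitesimal statement first and then integrate it to the group level. For the first claim, I would take a generator of $F_kH_\lambda$, namely $\dd\pi_\lambda(X_1)\cdots\dd\pi_\lambda(X_m)\xi$ with $m\leq k$, $X_i\in\mathfrak{n}_-$ and $\xi\in E_\lambda$, and apply $\dd\pi_\lambda(L)$ with $L\in\mathfrak{l}$. I would push $L$ to the right one factor at a time using
\begin{align}
    \dd\pi_\lambda(L)\dd\pi_\lambda(X_i)=\dd\pi_\lambda(X_i)\dd\pi_\lambda(L)+\dd\pi_\lambda([L,X_i]).
\end{align}
Since $[\mathfrak{l},\mathfrak{n}_-]\subset\mathfrak{n}_-$, each commutator term is again a word of length $m$ in $\mathfrak{n}_-$ applied to $\xi$, so it lies in $F_kH_\lambda$. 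The term in which $L$ reaches the far right is $\dd\pi_\lambda(X_1)\cdots\dd\pi_\lambda(X_m)\dd\pi_\lambda(L)\xi$, and $\dd\pi_\lambda(L)\xi\in E_\lambda$ because $E_\lambda=U(\mathfrak{l})v_\lambda$ is invariant under $\dd\pi_\lambda(\mathfrak{l})$ by definition. An induction on $m$ then gives $\mathfrak{l}F_kH_\lambda\subset F_kH_\lambda$.

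For the group statement, I would use the fact that $\prod_{a\in\mathcal{A}}\GL(V_a)$ is connected, since each $\GL(V_a)$ over $\mathbb{C}$ is connected, and that its Lie algebra is $\mathfrak{l}$. A connected Lie group is generated by the image of the exponential map, so every $g$ can be written as a finite product $g=e^{L_1}\cdots e^{L_s}$ with $L_j\in\mathfrak{l}$. Alternatively, every invertible complex block matrix has a block-diagonal logarithm, which gives $g=e^{L}$ with a single $L\in\mathfrak{l}$.

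Next I would apply $\pi_\lambda(e^{L})=e^{\dd\pi_\lambda(L)}$. Because $F_kH_\lambda$ is a finite-dimensional subspace invariant under $\dd\pi_\lambda(L)$, it is also invariant under the power series $e^{\dd\pi_\lambda(L)}$. Composing over the factors of $g$ gives $\pi_\lambda(g)F_kH_\lambda\subset F_kH_\lambda$.

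To obtain equality, I would apply the same inclusion to $g^{-1}$, which also lies in the group. This gives $F_kH_\lambda\subset\pi_\lambda(g)F_kH_\lambda$. Alternatively, $\pi_\lambda(g)$ is invertible, so it maps $F_kH_\lambda$ injectively into itself, and finite dimensionality forces surjectivity.

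I expect the only mildly delicate step to be justifying the exponential representation of $g$, together with the identity $\pi_\lambda(e^{X})=e^{\dd\pi_\lambda(X)}$ for non-anti-Hermitian $X$. Both are standard for $\GL$, and the paper already records the latter identity for all $X\in\End(\mathcal{H})$.
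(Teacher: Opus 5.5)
Your proposal is correct and follows essentially the same route as the paper: the same commutator-pushing argument for $\mathfrak{l}F_kH_\lambda\subset F_kH_\lambda$, followed by exponentiation via $\pi_\lambda(e^{L})=e^{\dd\pi_\lambda(L)}$ and the inverse $g^{-1}$ for equality. The one minor difference is in writing $g$ as exponentials. The paper uses the polar decomposition $g=up$ and takes block-diagonal logarithms $A,B\in\mathfrak{l}$ of $u$ and $p$ by the spectral theorem. You instead use connectedness of $\prod_{a}\GL(V_a)$, or a blockwise complex logarithm, and either justification works.
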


\begin{proof}
    Since $E_\lambda=U(\mathfrak{l})v_\lambda$, $\dd\pi_\lambda (L)E_\lambda \subset E_\lambda$ for any $L\in\mathfrak{l}$. The space $F_kH_\lambda$ is spanned by vectors of the form
    \begin{align}
        w=\dd\pi_\lambda(X_1)\cdots \dd\pi_\lambda(X_m)\xi,
    \end{align}
    where $0\leq m\leq k$, $X_i\in\mathfrak{n}_-$ and $\xi\in E_\lambda$. For $L\in\mathfrak{l}$, consider $\dd\pi_\lambda(L)w$. Moving $\dd\pi_\lambda(L)$ successively to the right, and by using $\dd\pi_\lambda(L)\xi\in E_\lambda$ and $[\mathfrak{l},\mathfrak{n}_-]\subset \mathfrak{n}_-$, we obtain a linear combination of vectors containing at most $m \leq k$ factors from $\mathfrak{n}_-$, and hence $\dd\pi_\lambda(L)w\in F_kH_\lambda$. Thus, $\mathfrak{l}F_kH_\lambda\subset F_kH_\lambda$. 

    Since $F_kH_\lambda$ is invariant under $\dd\pi_\lambda(\mathfrak{l})$, it is also invariant under $e^{\dd\pi_\lambda(L)}=\pi_\lambda(\exp(L))$ for any $L\in\mathfrak{l}$. Let $g\in\prod_{a\in\mathcal{A}}\GL(V_a)$. Its polar decomposition is given by $g=up$, where $u$ is unitary and $p=(g^\dag g)^{1/2}$ is a positive definite operator. Since $g$ is block diagonal, both $u$ and $p$ are also block diagonal. Due to the spectral theorem, there exist anti-Hermitian operator $A$ and Hermitian operator $B$ such that $u=e^A$ and $p=e^B$, and moreover, $A,B\in\bigoplus_{a}\End(V_a)=\mathfrak{l}$. Therefore, $\pi_\lambda(g)=\pi_\lambda(e^A)\pi_\lambda(e^B)$ leaves $F_kH_\lambda$ invariant, i.e., $\pi_\lambda(g)F_kH_\lambda\subset F_kH_\lambda$. Applying the same argument to $g^{-1}$ yields the inverse inclusion. Thus, $\pi_\lambda(g)F_kH_\lambda= F_kH_\lambda$.
\end{proof}

\subsubsection{Weight bookkeeping and blockwise spread}
The main aim of this subsection is to evaluate the spread of the weights of $F_kH_\lambda$. 

Let $\mathfrak{t}\coloneqq \Span\{E_{ii}\colon 1\leq i \leq d\}\subset \mathfrak{l}$. Since the operators $\dd\pi_\lambda(E_{ii})$ are commuting and self-adjoint, $H_\lambda$ decomposes orthogonally into their joint eigenspaces
\begin{align}
    (H_\lambda)_\omega\coloneqq \{v\colon \forall i,\, \dd\pi_\lambda(E_{ii})v=\omega_iv\},\quad \omega=(\omega_1,\ldots,\omega_d)\in\mathbb{R}^d,
\end{align}
which are called weight spaces. We denote the set of weights by
\begin{align}
    \Wt(H_\lambda)\coloneqq \{\omega\colon (H_\lambda)_\omega\neq \{0\}\}.
\end{align}
On $H_\lambda$, the operator $\dd\pi_\lambda(E_{ii})=\sum_jI^{\otimes(j-1)}\otimes E_{ii}\otimes I^{\otimes (n-j)}$ counts the occurrence of the basis index $i$ among the tensor slots. Therefore, $\omega\in\mathbb{Z}_{\geq 0}^d$ and 
\begin{align}
    \omega \in\Wt(H_\lambda)\implies \omega_i\in \{0,1,\ldots,n\},\quad \sum_{i=1}^d\omega_i=n.
\end{align}

If $O\in\mathfrak{g}$ satisfies $[E_{ii},O]=\theta_iO$ for all $i$, we call it an $\mathfrak{t}$-weight element of weight $\theta=(\theta_1,\ldots,\theta_d)$. In this case, $\dd\pi_\lambda(O)$ maps $(H_\lambda)_\omega$ to $(H_\lambda)_{\omega+\theta}$. Indeed, for any $v\in (H_\lambda)_\omega$, 
\begin{align}
    \dd\pi_\lambda(E_{ii})\dd\pi_\lambda(O)v=\left(\dd\pi_\lambda(O)\dd\pi_\lambda(E_{ii})+[\dd\pi_\lambda(E_{ii}),\dd\pi_\lambda(O)]\right)v=(\omega_i+\theta_i)\dd\pi_\lambda(O)v,
\end{align}
implying that $\dd\pi_\lambda(O)v\in (H_\lambda)_{\omega+\theta}$. In particular, the matrix units are weight elements. Indeed, since $[E_{ii},E_{pq}]=(\delta_{ip}-\delta_{iq})E_{pq}$, $E_{pq}$ has weight $e_p-e_q$, where $\{e_i\}_{i=1}^d$ denotes the standard basis vector of $\mathbb{Z}^d$. Consequently, we obtain the following:

\begin{lemma}[Weights in the filtration]\par~\label{lem:wt_filt}
    \begin{enumerate}[(i)]
        \item $E_\lambda$ is spanned by $\mathfrak{t}$-weight vectors. Any weight $\omega'$ of $E_\lambda$ has the form
        \begin{align}
            \omega'=\lambda+\vartheta_1+\cdots+\vartheta_l
        \end{align}
        with $\vartheta_j\in \{e_p-e_q\colon p,q\in I_a\text{ for some }a\}$ and a nonnegative integer $l$. In particular, for any $a\in\mathcal{A}$, weights $\omega'$ of $E_\lambda$ satisfy $\sum_{i\in I_a}\omega_i'=\sum_{i\in I_a}\lambda_i\eqqcolon|\lambda^{(a)}|$. 
        \item $F_kH_\lambda$ is spanned by $\mathfrak{t}$-weight vectors. Any weight $\omega$ of $F_kH_\lambda$ has the form
        \begin{align}
            \omega=\omega'+\theta_1+\cdots+\theta_m,\quad 0\leq m\leq k,
         \end{align}
        with $\omega'\in\Wt(E_\lambda)$ and $\theta_j\in\{e_p-e_q\colon p\in I_b,\,q\in I_a,\,\zeta_a>\zeta_b\}$. Consequently, 
        \begin{align}
            \left|\sum_{i\in I_a}\omega_i-|\lambda^{(a)}|\right|\leq k.
        \end{align}
    \end{enumerate}
\end{lemma}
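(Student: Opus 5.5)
The plan is to use the weight-element calculus set up just before the lemma: every matrix unit $E_{pq}$ is a $\mathfrak{t}$-weight element of weight $e_p-e_q$, so $\dd\pi_\lambda(E_{pq})$ maps $(H_\lambda)_\omega$ into $(H_\lambda)_{\omega+e_p-e_q}$. Together with the spanning descriptions of $E_\lambda$ and $F_kH_\lambda$, this gives both parts by induction on word length. The starting point is that the highest-weight vector $v_\lambda$ (with respect to the fixed ordered eigenbasis) is a weight vector of weight $\lambda$.

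For (i), I expand each $L_i\in\mathfrak{l}=\bigoplus_a\End(V_a)$ in the matrix units $E_{pq}$ with $p,q\in I_a$ for a common $a$. Multilinearity then writes $\dd\pi_\lambda(L_1)\cdots\dd\pi_\lambda(L_m)v_\lambda$ as a linear combination of terms $\dd\pi_\lambda(E_{p_1q_1})\cdots\dd\pi_\lambda(E_{p_mq_m})v_\lambda$. Each such term either vanishes or is a weight vector of weight $\lambda+\sum_j(e_{p_j}-e_{q_j})$. Hence $E_\lambda$ is spanned by weight vectors with weights of the stated form. Moreover, every weight of $E_\lambda$ arises as the weight of one of these spanning vectors, because the weight spaces are orthogonal, so projecting onto $(H_\lambda)_{\omega'}$ keeps only spanning vectors of weight $\omega'$. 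Since each $\vartheta_j=e_p-e_q$ has $p,q$ in the same block, it leaves $\sum_{i\in I_a}\omega_i$ unchanged for every $a$, which gives the identity $\sum_{i\in I_a}\omega'_i=|\lambda^{(a)}|$.

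For (ii), I argue the same way. Expand each $X_i\in\mathfrak{n}_-=\bigoplus_{\zeta_a>\zeta_b}\Hom(V_a,V_b)$ in matrix units $E_{pq}$ with $p\in I_b$, $q\in I_a$, $\zeta_a>\zeta_b$, and take $\xi$ ranging over the weight vectors spanning $E_\lambda$ from part (i). This yields spanning weight vectors of weight $\omega'+\sum_{j\le m}\theta_j$ with $m\le k$, and the orthogonality argument from (i) shows all weights are of this form. For the bound, each $\theta_j=e_p-e_q$ changes the block sum $\sum_{i\in I_c}\omega_i$ by $+1$ if $c=b$, by $-1$ if $c=a$, and by $0$ otherwise; since $a\neq b$, the change is at most $1$ in absolute value. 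With at most $k$ such steps, starting from the value $|\lambda^{(c)}|$ guaranteed by (i), we get $\bigl|\sum_{i\in I_c}\omega_i-|\lambda^{(c)}|\bigr|\le k$.

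The only point that needs care is justifying that $v_\lambda$ has weight exactly $\lambda$ in the chosen basis. This is the standard convention for highest weights in Schur--Weyl duality: $\dd\pi_\lambda(E_{ii})v_\lambda=\lambda_iv_\lambda$, and $v_\lambda$ is annihilated by the upper-triangular $E_{pq}$. Beyond that, the argument is bookkeeping.
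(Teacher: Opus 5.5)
Your proposal is correct and follows essentially the same route as the paper: expand elements of $\mathfrak{l}$ and $\mathfrak{n}_-$ in block-adapted matrix units $E_{pq}$, use that each $E_{pq}$ shifts weights by $e_p-e_q$ starting from the weight-$\lambda$ vector $v_\lambda$, and track the block sums (unchanged for intra-block steps, changed by at most one per inter-block step). Your orthogonal-projection remark, which shows that every weight of $E_\lambda$ or $F_kH_\lambda$ is actually the weight of one of the spanning vectors, makes explicit a step the paper leaves implicit.
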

\begin{proof}
    (i)
    The algebra $\mathfrak{l}$ is spanned by the matrix units $E_{pq}$ for which $p$ and $q$ belong to the same block. Therefore, $E_\lambda$ is spanned by the vectors of the form $\dd\pi_\lambda(E_{p_1q_1})\cdots \dd\pi_\lambda(E_{p_lq_l})v_\lambda$ with some $l\geq 0$ and, for each $j$, $p_j$ and $q_j$ belong to the same block. Any nonzero vector of this form has weight $\lambda+\vartheta_1+\cdots+\vartheta_l$ with $\vartheta_j =e_{p_j}-e_{q_j}$. Note that each $\vartheta_j$ transfers one unit of weight between two indices in the same block. Therefore, 
    \begin{align}
        \sum_{i\in I_a}\omega_i'=\sum_{i\in I_a}\lambda_i.
    \end{align}

    (ii) 
    Since $E_\lambda$ is invariant under commuting self-adjoint operators $\dd \pi_\lambda(E_{11}),\ldots,\dd \pi_\lambda(E_{dd})$, it is spanned by weight vectors. By definition, $F_kH_\lambda=U_{\leq k}(\mathfrak{n}_-)E_\lambda$. The space $\mathfrak{n}_-$ is spanned by the matrix units $E_{pq}$ satisfying $p\in I_b$, $q\in I_a$ and $\zeta_a>\zeta_b$. Consequently, $F_kH_\lambda$ is spanned by vectors of the form $\dd\pi_\lambda(E_{p_1q_1})\cdots \dd\pi_\lambda(E_{p_mq_m})\xi$ with $0\leq m\leq k$ and $\xi\in E_\lambda$. Note that $E_\lambda$ is spanned by weight vectors. If $\xi$ is a weight vector with weight $\omega'\in \Wt(E_\lambda)$, then, $\dd\pi_\lambda(E_{p_1q_1})\cdots \dd\pi_\lambda(E_{p_mq_m})\xi$ has weights $\omega'+\theta_1+\cdots+\theta_m$ with $\theta_i\coloneqq e_{p_i}-e_{q_i}$. For a fixed block $I_a$, each $\theta_j$ changes the total weight in the block by either $-1,0$, or $1$. Therefore,
    \begin{align}
        \left|\sum_{i\in I_a}(\theta_j)_i\right|\leq 1
    \end{align}
    and hence
    \begin{align}
        \left|\sum_{i\in I_a}\omega_i-|\lambda^{(a)}|\right|=\left|\sum_{j=1}^m\sum_{i\in I_a}(\theta_j)_i\right|\leq m \leq k.
    \end{align}
\end{proof}

Let us now investigate the spread of $E_\lambda$. We prove several lemmas.

\begin{lemma}\label{lem:maj_spread}
    Let $\mu=(\mu_1,\ldots,\mu_m)\in\mathbb{R}^m$ with $\mu_1\geq \cdots\geq \mu_m$. Define $\bar{\mu}\coloneqq \frac{1}{m}\sum_i\mu_i$ and $\Delta\coloneqq \max_i|\mu_i-\bar{\mu}|$. If $y\in\mathbb{R}^m$ satisfies $y_1\geq \cdots\geq y_m$, $\sum_{i=1}^my_i=\sum_{i=1}^m\mu_i$, and $\sum_{i=1}^sy_i\leq \sum_{i=1}^s\mu_i$ for $1\leq s<m$, then $\max_i|y_i-\bar{\mu}|\leq \Delta$. 
\end{lemma}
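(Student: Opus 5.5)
The plan is to bound $\max_i(y_i-\bar\mu)$ and $\max_i(\bar\mu-y_i)$ separately. Since $y$ is sorted in decreasing order, it suffices to control the two extremes, $y_1-\bar\mu\le\Delta$ and $\bar\mu-y_m\le\Delta$: every other $y_i$ lies between $y_m$ and $y_1$.

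For the upper extreme, take $s=1$ in the majorization hypothesis. This gives $y_1\le\mu_1$ when $m>1$; when $m=1$, equality of the sums gives $y_1=\mu_1$ directly. Hence $y_1-\bar\mu\le \mu_1-\bar\mu\le\Delta$.

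For the lower extreme, use the total-sum equality together with the partial-sum inequality at $s=m-1$. Writing
\begin{align}
    y_m=\sum_{i=1}^m y_i-\sum_{i=1}^{m-1}y_i\geq \sum_{i=1}^m\mu_i-\sum_{i=1}^{m-1}\mu_i=\mu_m,
\end{align}
we obtain $\bar\mu-y_m\le\bar\mu-\mu_m\le\Delta$. The case $m=1$ is again trivial.

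Combining the two bounds, every $i$ satisfies $\mu_m\le y_m\le y_i\le y_1\le\mu_1$. Therefore $|y_i-\bar\mu|\le\max\{\mu_1-\bar\mu,\bar\mu-\mu_m\}=\Delta$. There is no real obstacle in this argument; the only point needing care is to use the sortedness of $y$ to reduce the claim to the two endpoints, since the hypothesis itself only constrains partial sums.
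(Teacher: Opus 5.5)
Your proof is correct and follows essentially the same argument as the paper: you use $s=1$ to get $y_1\le\mu_1$ and the total sum with $s=m-1$ to get $y_m\ge\mu_m$. Sortedness of $y$ then confines every $y_i$ to $[\bar\mu-\Delta,\bar\mu+\Delta]$, with $m=1$ handled separately.
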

\begin{proof}
    If $m=1$, $y_1=\mu_1=\bar{\mu}$, and hence the claim is trivial. Assume $m\geq 2$. The partial-sum inequality gives $y_1\leq \mu_1\leq \bar{\mu}+\Delta$. Using the total sums $\sum_{i=1}^my_i=\sum_{i=1}^m\mu_i$ together with the partial-sum inequality for $s=m-1$ gives $y_m\geq \mu_m\geq \bar{\mu}-\Delta$. Since $y_1\geq y_i\geq y_m$ for any $i$, we obtain $|y_i-\bar{\mu}|\leq \Delta$.
\end{proof}

\begin{lemma}\label{lem:weight_maj}
    Fix a block $a\in \mathcal{A}$ and list $I_a=\{i_1<i_2<\cdots<i_{d_a}\}$. Then any $\omega\in \Wt(E_\lambda)$ satisfies  
    \begin{align}
        \sum_{r=1}^k\omega_{i_r}\leq \sum_{r=1}^k\lambda_{i_r}\quad (1\leq k \leq d_a),\qquad \sum_{r=1}^{d_a}\omega_{i_r}=\sum_{r=1}^{d_a}\lambda_{i_r}=|\lambda^{(a)}|.
    \end{align}
\end{lemma}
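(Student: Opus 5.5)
The equality of the full block sums was already established in Lemma~\ref{lem:wt_filt}(i), so what remains is the partial-sum (dominance) inequalities. I would prove them by induction on the length $l$ of the spanning words $\dd\pi_\lambda(E_{p_1q_1})\cdots\dd\pi_\lambda(E_{p_lq_l})v_\lambda$ used in that proof. However, the dominance order is not preserved by arbitrary intra-block moves $e_p-e_q$ with $p>q$, which raise the later coordinate. A naive induction on word length therefore fails. Instead, I would use a representation-theoretic argument.

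The key observation is that $E_\lambda=U(\mathfrak{l})v_\lambda$ is a $\mathfrak{gl}(V_a)$-module for each block, and it is generated by $v_\lambda$. The vector $v_\lambda$ is annihilated by all intra-block raising operators $E_{pq}$ with $p<q$ and $p,q\in I_a$, since these lie in the upper-triangular subalgebra that kills the highest-weight vector. It is also a weight vector whose weight restricted to $I_a$ is $\lambda^{(a)}\coloneqq(\lambda_{i_1},\ldots,\lambda_{i_{d_a}})$. By the PBW-type reordering of Lemma~\ref{lem:ordering_subalg}(ii), applied to $\mathfrak{l}=\mathfrak{l}_-\oplus\mathfrak{t}\oplus\mathfrak{l}_+$ (the strictly lower, diagonal, and strictly upper parts within blocks), every spanning vector of $E_\lambda$ becomes a linear combination of vectors of the form $\dd\pi_\lambda(X_1)\cdots\dd\pi_\lambda(X_p)\dd\pi_\lambda(Y_1)\cdots\dd\pi_\lambda(Z_1)\cdots v_\lambda$ with $X_i\in\mathfrak{l}_-$, $Y_i\in\mathfrak{t}$, $Z_i\in\mathfrak{l}_+$. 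The $\mathfrak{l}_+$ factors annihilate $v_\lambda$, and the $\mathfrak{t}$ factors act by scalars. Hence $E_\lambda$ is spanned by vectors $\dd\pi_\lambda(E_{p_1q_1})\cdots\dd\pi_\lambda(E_{p_mq_m})v_\lambda$ in which each $E_{p_jq_j}$ is intra-block with $p_j>q_j$ (a lowering operator).

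Each such factor $E_{pq}$ with $p,q\in I_a$, $p>q$ shifts the weight by $e_p-e_q$. This moves one unit from an earlier index $q=i_s$ to a later index $p=i_t$ with $s<t$. For every $k$, the partial sum $\sum_{r=1}^k\omega_{i_r}$ therefore either stays the same or decreases by one. Factors belonging to other blocks do not affect the coordinates in $I_a$. Starting from $\omega=\lambda$, every nonzero spanning weight vector thus has partial sums over $I_a$ bounded by those of $\lambda^{(a)}$. Since $E_\lambda$ is spanned by these weight vectors and weight spaces are orthogonal, every $\omega\in\Wt(E_\lambda)$ arises this way, which gives the inequality. The equality at $k=d_a$ follows because each move preserves the block total.

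The main point requiring care is the reordering step. I would check that the triple $\mathfrak{l}_-,\mathfrak{t},\mathfrak{l}_+$ is a valid decomposition of the Lie subalgebra $\mathfrak{l}$ in the sense of Lemma~\ref{lem:ordering_subalg}(ii), and that $\mathfrak{l}_+$ is contained in the strictly upper-triangular subalgebra relative to the fixed ordered basis, so that it annihilates $v_\lambda$. Both follow from the basis convention fixed around Eq.~\eqref{eq:eigenvalue_decomposition_state}, in which indices within each block are consecutive and increasing.
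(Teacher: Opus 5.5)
Your proof is correct and follows essentially the paper's argument. Both use the PBW reordering of Lemma~\ref{lem:ordering_subalg}(ii) to reduce to intra-block lowering operators acting on highest-weight vectors, and both note that each such move can only lower the partial sums over $I_a$. The only difference is that you triangularize all of $\mathfrak{l}$ at once to get $E_\lambda=U(\mathfrak{l}_-)v_\lambda$, whereas the paper first separates $\mathfrak{gl}(V_a)$ from the commuting other blocks and applies the triangular decomposition to the $\mathfrak{gl}(V_a)$-highest-weight vectors $w'v_\lambda$. One small slip in your motivating remark: it is the raising moves $e_p-e_q$ with $p<q$, not those with $p>q$, that can break the dominance inequalities, though this does not affect the proof.
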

\begin{proof}
    Since $\mathfrak{l}=\End(V_a)\oplus(\bigoplus_{b\neq a}\End(V_b))$ and $[\End(V_a),\End(V_b)]=0$ for $a\neq b$, $E_\lambda$ is spanned by vectors $uw$, where $u\in U(\End(V_a))$ and $w\coloneqq w'v_\lambda$ with $w'\in U(\bigoplus_{b\neq a}\End(V_b))$. 

    Let $p,q\in I_a$ such that $p<q$. Since $E_{pq}$ is a raising operator within the block $a$, $[E_{pq},w']=0$. Thus, $E_{pq}w=E_{pq}w'v_\lambda=w'E_{pq}v_\lambda=0$. Similarly, since $[E_{ii},w']=0$ for $i\in I_a$, we obtain $ E_{ii}w=E_{ii}w'v_\lambda=w'E_{ii}v_\lambda=\lambda_iw$. Therefore, $w$ is a highest-weight vector for $\mathfrak{gl}(V_a)$ with block-$a$ weight $\lambda^{(a)}=(\lambda_{i_1},\ldots,\lambda_{i_{d_a}})$. 

    Define $M_w\coloneqq U(\mathfrak{gl}(V_a))w$. For a triangular decomposition of $\mathfrak{gl}(V_a)=\mathfrak{n}^{(a)}_-\oplus \mathfrak{t}^{(a)}\oplus\mathfrak{n}_+^{(a)}$, by using (ii) in Lemma~\ref{lem:ordering_subalg}, we have $U(\mathfrak{gl}(V_a))=U(\mathfrak{n}^{(a)}_-)U(\mathfrak{t}^{(a)})U(\mathfrak{n}_+^{(a)})$. Since $\mathfrak{n}_+^{(a)}w=0$ and $\mathfrak{t}^{(a)}$ acts as a scalar on $w$, we get $M_w=U(\mathfrak{n}^{(a)}_-) w$. Therefore,
    \begin{align}
        E_\lambda=\sum_{w\in U(\bigoplus_{b\neq a}\End(V_b))v_\lambda}M_w.
    \end{align}

    For a matrix unit $E_{pq}$ of $\mathfrak{n}_-^{(a)}$, we let $p=i_t$ and $q=i_{t'}$ with $t>t'$. Define a partial sum $S_k(\omega)\coloneqq \sum_{r=1}^k\omega_{i_r}$. Under $E_{pq}$, the change in the partial sum is given by
    \begin{align}
        S_k(\omega+e_p-e_q)-S_k(\omega)= 
        \begin{cases}
            0\quad &\text{ if }k< t' \lor t\leq k\\
            -1\quad &\text{ if }t'\leq k <t
        \end{cases},
    \end{align}
    and, in particular, $S_k(\omega+e_p-e_q)\leq S_k(\omega)$ and $S_{d_a}(\omega+e_p-e_q)=S_{d_a}(\omega)$. Thus, for any $\omega\in \Wt(E_\lambda)$, 
    \begin{align}
        \sum_{r=1}^k\omega_{i_r}=S_k(\omega)\leq S_k(\lambda)=\sum_{r=1}^k\lambda_{i_r}\quad (1\leq k \leq d_a),\qquad \sum_{r=1}^{d_a}\omega_{i_r}=S_{d_a}(\omega)=S_{d_a}(\lambda)=\sum_{r=1}^{d_a}\lambda_{i_r}=|\lambda^{(a)}| .
    \end{align}
\end{proof}

\begin{lemma}\label{lem:permutation_weight}
    Let $\sigma$ be a permutation of $\{1,\ldots,d\}$ that permutes $I_a$ and fixes all other indices. For a weight $\omega\in\Wt(E_\lambda)$, we define $\sigma\cdot\omega$ by $(\sigma\cdot\omega)_i\coloneqq \omega_{\sigma^{-1}(i)}$. Then, $\sigma\cdot\omega\in\Wt(E_\lambda)$. 
\end{lemma}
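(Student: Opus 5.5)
The plan is to realize the permutation $\sigma$ as a block-diagonal unitary in the stabilizer and to use the invariance of $E_\lambda$ under $\pi_\lambda(H)$ established in Eq.~\eqref{eq:E_lambda_highest_weight_vector_and_stabilizer}. Let $P_\sigma$ be the permutation operator defined by $P_\sigma\ket{e_i}=\ket{e_{\sigma(i)}}$. Since $\sigma$ permutes $I_a$ and fixes all other indices, $P_\sigma$ is a unitary acting as a permutation matrix on $V_a$ and as the identity on every $V_b$ with $b\neq a$. In particular, $P_\sigma\in H=\prod_{b\in\mathcal{A}}U(V_b)$. By Eq.~\eqref{eq:E_lambda_highest_weight_vector_and_stabilizer}, $E_\lambda$ is $H$-invariant, so $\pi_\lambda(P_\sigma)E_\lambda\subset E_\lambda$; alternatively, this follows from Lemma~\ref{lem:stability_filtration} with $k=0$.

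Next I would track how $\pi_\lambda(P_\sigma)$ moves weight spaces. We have $P_\sigma E_{ii}P_\sigma^{-1}=E_{\sigma(i)\sigma(i)}$. Since $\pi_\lambda$ is a group representation, $\pi_\lambda(g)\,\dd\pi_\lambda(X)\,\pi_\lambda(g)^{-1}=\dd\pi_\lambda(gXg^{-1})$; this is obtained by differentiating $\pi_\lambda(g e^{tX}g^{-1})$. Now take a nonzero $v\in(E_\lambda)_\omega$, which exists because $E_\lambda$ is spanned by weight vectors (Lemma~\ref{lem:wt_filt}(i)), and set $v'\coloneqq\pi_\lambda(P_\sigma)v$. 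Then
\begin{align}
\dd\pi_\lambda(E_{ii})v'=\pi_\lambda(P_\sigma)\dd\pi_\lambda(E_{\sigma^{-1}(i)\sigma^{-1}(i)})v=\omega_{\sigma^{-1}(i)}v'=(\sigma\cdot\omega)_iv'.
\end{align}
Since $\pi_\lambda(P_\sigma)$ is unitary, $v'\neq0$. Moreover $v'\in E_\lambda$, so $\sigma\cdot\omega\in\Wt(E_\lambda)$.

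The only delicate point is the weight space in question. The statement concerns weights of $E_\lambda$, so I need the joint eigenspace of $E_\lambda$ and not just of $H_\lambda$. This is handled by the invariance step above: because $v'\in E_\lambda$, the new weight $\sigma\cdot\omega$ is a weight of $E_\lambda$. I also need to check the index convention, namely that $P_\sigma E_{jj}P_\sigma^{-1}=E_{\sigma(j)\sigma(j)}$, so that the index shift is $\sigma^{-1}$ as written in the definition of $\sigma\cdot\omega$. Beyond these two checks the argument is routine, and I do not expect a genuine obstacle.
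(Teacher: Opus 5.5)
Your proposal is correct and follows essentially the same route as the paper: realize $\sigma$ as the block-diagonal permutation matrix $P_\sigma$, use $\pi_\lambda(g)\,\dd\pi_\lambda(T)\,\pi_\lambda(g)^{-1}=\dd\pi_\lambda(gTg^{-1})$ to show that $\pi_\lambda(P_\sigma)$ maps a weight vector of weight $\omega$ to a nonzero vector of weight $\sigma\cdot\omega$, and invoke the invariance $\pi_\lambda(P_\sigma)E_\lambda=E_\lambda$ from Lemma~\ref{lem:stability_filtration} with $k=0$. Your index convention $P_\sigma^{-1}E_{ii}P_\sigma=E_{\sigma^{-1}(i)\sigma^{-1}(i)}$ also matches the paper's.
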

\begin{proof}
    Defining the permutation matrix  $P_\sigma\in \GL(V_a)\subset \prod_{b}\GL(V_b)$ such that $P_\sigma e_j=e_{\sigma(j)}$, $P_\sigma^{-1} E_{ii}P_\sigma=E_{\sigma^{-1}(i)\sigma^{-1}(i)}$. By using the relation $\pi_\lambda(g)\dd\pi_\lambda(T)\pi_\lambda(g)^{-1}=\dd\pi_\lambda(gTg^{-1})$, we get
    \begin{align}
        \dd \pi_\lambda(E_{ii})\pi_\lambda(P_\sigma)v&=\pi_\lambda(P_\sigma)\dd \pi_\lambda(P_\sigma^{-1}E_{ii}P_\sigma)v\\
        &=\pi_\lambda(P_\sigma)\dd\pi_\lambda(E_{\sigma^{-1}(i)\sigma^{-1}(i)})v.
    \end{align}
    If $v$ is a nonzero weight vector with $\omega\in\Wt(E_\lambda)$, 
    \begin{align}
        \dd\pi_\lambda(E_{ii})\pi_\lambda(P_\sigma) v=\omega _{\sigma^{-1}(i)}\pi_\lambda(P_\sigma) v.
    \end{align}
    Since $\pi_\lambda(P_\sigma)$ is invertible $\pi_\lambda(P_\sigma) v\neq 0$. Moreover, from Lemma~\ref{lem:stability_filtration}, $\pi_\lambda(P_\sigma)E_\lambda=E_\lambda$ since $P_\sigma\in\prod_b\GL(V_b)$. 
    Thus, $\sigma\cdot \omega\in\Wt(E_\lambda)$. 
\end{proof}

Combining these lemmas, we obtain the spread of $E_\lambda$-weights.
\begin{lemma}\label{lem:wt_E_lambda}
    Let $\lambda\in\Lambda_n$. For any $\omega\in \Wt(E_\lambda)$, 
    \begin{align}
        |\omega_i-\bar{\lambda}_a|\leq 2 n^\alpha
    \end{align}
    holds for any $a\in\mathcal{A}$ and $i\in I_a$. In particular, $|\omega_i-\omega_j|\leq 4n^\alpha$ for $i,j$ in a common block. 
\end{lemma}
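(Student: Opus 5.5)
The plan is to combine Lemma~\ref{lem:weight_maj}, Lemma~\ref{lem:permutation_weight} and Lemma~\ref{lem:maj_spread}, and then use typicality of $\lambda$ to bound the spread of the reference vector $\lambda^{(a)}$.

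Fix a block $a\in\mathcal{A}$ and a weight $\omega\in\Wt(E_\lambda)$, and write $y\coloneqq(\omega_{i_1},\ldots,\omega_{i_{d_a}})$ for the block-$a$ part of $\omega$.

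\emph{Step 1 (sorting).} Let $\sigma$ be a permutation of $I_a$, fixing all other indices, that sorts $y$ in weakly decreasing order. By Lemma~\ref{lem:permutation_weight}, $\sigma\cdot\omega\in\Wt(E_\lambda)$. Lemma~\ref{lem:weight_maj}, applied to $\sigma\cdot\omega$, shows that the decreasingly sorted vector $y^\downarrow$ satisfies
\begin{align}
    \sum_{r=1}^s y^\downarrow_r\leq \sum_{r=1}^s\lambda_{i_r}\quad(1\leq s<d_a),\qquad \sum_{r=1}^{d_a}y^\downarrow_r=\sum_{r=1}^{d_a}\lambda_{i_r}.
\end{align}
The vector $\lambda^{(a)}=(\lambda_{i_1},\ldots,\lambda_{i_{d_a}})$ is itself weakly decreasing, because $\lambda$ is a partition and $i_1<\cdots<i_{d_a}$. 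Lemma~\ref{lem:maj_spread} therefore applies with $\mu=\lambda^{(a)}$, whose mean is exactly $\bar\lambda_a$. It gives $\max_r|y_r-\bar\lambda_a|\leq\Delta_a$, where $\Delta_a\coloneqq\max_{i\in I_a}|\lambda_i-\bar\lambda_a|$. This holds for every $i\in I_a$, since sorting does not change the set of entries.

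\emph{Step 2 (typicality).} It remains to show $\Delta_a\leq 2n^\alpha$.
\begin{itemize}
    \item For a positive block $a\in\mathcal{A}_+$: every $i\in I_a$ has $\mu_i=\zeta_a$, so Definition~\ref{def:typical_set} gives $|\lambda_i-n\zeta_a|\leq n^\alpha$. Averaging over $I_a$ gives $|\bar\lambda_a-n\zeta_a|\leq n^\alpha$, and the triangle inequality yields $|\lambda_i-\bar\lambda_a|\leq 2n^\alpha$.
    \item For the kernel block $a=0$: $\lambda_{R+1}=0$ forces $\lambda_i=0$ for all $i\in I_0$, and $\bar\lambda_0=0$, so $\Delta_0=0$. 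Here the bound $\omega_i=0$ also follows directly from Lemma~\ref{lem:weight_maj}, since the total block sum is $0$ and $\omega\geq 0$.
\end{itemize}

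\emph{Step 3 (conclusion).} The estimate $|\omega_i-\omega_j|\leq 4n^\alpha$ for $i,j$ in a common block follows from the triangle inequality through $\bar\lambda_a$.

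The only delicate point is Step 1. Lemma~\ref{lem:weight_maj} is stated in the fixed index order, not the sorted order, so it must be applied to the permuted weight $\sigma\cdot\omega$, which Lemma~\ref{lem:permutation_weight} guarantees is again a weight of $E_\lambda$. Everything else is bookkeeping.
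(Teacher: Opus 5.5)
Your proposal is correct and follows essentially the same argument as the paper. Both sort the block-$a$ entries via Lemma~\ref{lem:permutation_weight}, apply Lemmas~\ref{lem:weight_maj} and~\ref{lem:maj_spread} to the sorted weight against $\lambda^{(a)}$, and then bound the spread of $\lambda^{(a)}$ by $2n^\alpha$ using typicality, with the kernel block, where all entries vanish, handled separately.
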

\begin{proof}
    Fix $a\in\mathcal{A}_+$ and let $I_a=\{i_1<\cdots<i_{d_a}\}$. For $\omega\in \Wt(E_\lambda)$, its $a$-block components are $(\omega_{i_1},\cdots,\omega_{i_{d_a}})$. Let $\omega^\downarrow\in\mathbb{Z}^d$ be the element obtained by sorting the components within $a$-block in decreasing order, while leaving other components unchanged. It satisfies $\omega^\downarrow=\sigma\cdot \omega$ with a permutation $\sigma$ in block $a$. Thus, from Lemma~\ref{lem:permutation_weight}, $\omega^\downarrow\in\Wt(E_\lambda)$. We now consider two decreasing vectors $\mu\coloneqq (\lambda_{i_1},\ldots,\lambda_{i_{d_a}})$ and $y\coloneqq (\omega^\downarrow_{i_1},\ldots,\omega^\downarrow_{i_{d_a}})$. From Lemma~\ref{lem:weight_maj}, they satisfy
    \begin{align}
         \sum_{r=1}^ky_{r}\leq \sum_{r=1}^k\mu_{r}\quad (1\leq k \leq d_a),\qquad \sum_{r=1}^{d_a}y_{r}=\sum_{r=1}^{d_a}\mu_{r}.
    \end{align}
    Since $\bar{\lambda}_a= \bar{\mu}$, from Lemma~\ref{lem:maj_spread},
    \begin{align}
        \max_{i\in I_a}|\omega_{i}-\bar{\lambda}_a|=\max_{i\in I_a}|\omega^\downarrow_{i}-\bar{\mu}|\leq \Delta,
    \end{align}
    where $\Delta\coloneqq \max_{i\in I_a}|\mu_i-\bar{\mu}|$.
    
    Now, since $\lambda\in\Lambda_n$, we have $|\lambda_i-n\zeta_a|\leq n^\alpha$ for any $i\in I_a$. Thus, we get
    \begin{align}
        |\bar{\lambda}_a-n\zeta_a|\leq \frac{1}{d_a}\sum_{i\in I_a}|\lambda_i-n\zeta_a|\leq n^\alpha
    \end{align}
    and hence
    \begin{align}
        |\lambda_i-\bar{\lambda}_a|\leq |\lambda_i-n\zeta_a|+|\bar{\lambda}_a-n\zeta_a|\leq 2n^\alpha\quad (i\in I_a),
    \end{align}
    implying that $\Delta\leq 2n^\alpha$. 

    For the kernel block, if $\lambda\in\Lambda_n$, then $\lambda_i=0$ for all $i\in I_0$. Thus, $\bar{\lambda}_0=0$. Moreover, from Lemma~\ref{lem:weight_maj}, $\omega_i=0$ for all $i\in I_0$. Thus, $|\omega_i-\bar{\lambda}_0|=0$. 
\end{proof}

By using these lemmas, we obtain the following:
\begin{proposition}[Spread of $F_kH_\lambda$-weights]\label{prop:spread_FkHlambda_weights}
    Let $\lambda\in\Lambda_n$ and $k\geq 0$. Then any $\omega\in \Wt(F_kH_\lambda)$ satisfies
    \begin{align}
        |\omega_i-\bar{\lambda}_a|\leq 2n^\alpha+k
    \end{align}
    for all $a\in\mathcal{A}$ and $i\in I_a$. In particular, for $i,j$ in a common block,
    \begin{align}
        |\omega_i-\omega_j|\leq 4(n^\alpha+k).
    \end{align}
\end{proposition}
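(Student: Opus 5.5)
The plan is to combine Lemma~\ref{lem:wt_filt}(ii) with Lemma~\ref{lem:wt_E_lambda}. By Lemma~\ref{lem:wt_filt}(ii), $F_kH_\lambda$ is spanned by $\mathfrak{t}$-weight vectors, so every $\omega\in\Wt(F_kH_\lambda)$ arises as the weight of a spanning vector. Concretely, we may write
\begin{align}
    \omega=\omega'+\theta_1+\cdots+\theta_m,
\end{align}
where $\omega'\in\Wt(E_\lambda)$, $0\leq m\leq k$, and each $\theta_j=e_p-e_q$ with $p\in I_b$, $q\in I_{a'}$, $\zeta_{a'}>\zeta_b$. (Strictly speaking, one should note that the weights of a space spanned by weight vectors are exactly the weights of those spanning vectors. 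This holds because weight spaces are orthogonal, so projecting onto $(H_\lambda)_\omega$ picks out the relevant spanning vectors.)

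Fix $a\in\mathcal{A}$ and $i\in I_a$. The coordinate $i$ changes by at most one under each $\theta_j$, since $(e_p-e_q)_i\in\{-1,0,1\}$. Hence $|\omega_i-\omega'_i|\leq m\leq k$. Lemma~\ref{lem:wt_E_lambda} gives $|\omega'_i-\bar{\lambda}_a|\leq 2n^\alpha$ for $\lambda\in\Lambda_n$. The triangle inequality then yields $|\omega_i-\bar{\lambda}_a|\leq 2n^\alpha+k$.

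For $i,j$ in a common block $a$, apply the triangle inequality through $\bar{\lambda}_a$:
\begin{align}
    |\omega_i-\omega_j|\leq |\omega_i-\bar{\lambda}_a|+|\bar{\lambda}_a-\omega_j|\leq 2(2n^\alpha+k)\leq 4(n^\alpha+k).
\end{align}

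No real obstacle is expected, since the bound is a direct combination of the two earlier lemmas. The only point requiring care is the spanning claim in the first step. The kernel block needs no separate treatment, because Lemma~\ref{lem:wt_E_lambda} already covers it.
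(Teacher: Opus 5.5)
Your proposal is correct and follows essentially the same route as the paper: decompose $\omega=\omega'+\theta_1+\cdots+\theta_m$ via Lemma~\ref{lem:wt_filt}(ii), bound $|\omega_i-\omega'_i|\leq m\leq k$ coordinatewise, invoke Lemma~\ref{lem:wt_E_lambda} for $|\omega'_i-\bar{\lambda}_a|\leq 2n^\alpha$, and pass through $\bar{\lambda}_a$ with the triangle inequality for the second claim. Your remark on the spanning step is sound, though Lemma~\ref{lem:wt_filt}(ii) as stated already asserts that every weight of $F_kH_\lambda$ has this form, so it can be cited directly.
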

\begin{proof}
    From (ii) of Lemma~\ref{lem:wt_filt}, $\omega\in\Wt(F_kH_\lambda)$ satisfies
    \begin{align}
        \omega=\omega'+\theta_1+\cdots+\theta_m
    \end{align}
    with $\omega'\in\Wt(E_\lambda)$, $\theta_j\in\{e_p-e_q\colon p\in I_b,\,q\in I_a,\,\zeta_a>\zeta_b\}$, and $0\leq m\leq k$. Thus, 
    \begin{align}
        |\omega_i-\omega_i'|\leq \left|\sum_{r=1}^m(\theta_r)_i\right|\leq \sum_{r=1}^m \left|(\theta_r)_i\right|\leq m\leq k.
    \end{align}
    By using Lemma~\ref{lem:wt_E_lambda},
    \begin{align}
        |\omega_i-\bar{\lambda}_a|\leq |\omega_i'-\bar{\lambda}_a|+|\omega_i-\omega_i'|\leq 2n^\alpha+k.
    \end{align}
    Moreover, for $i,j\in I_a$, 
    \begin{align}
        |\omega_i-\omega_j|\leq |\omega_i-\bar{\lambda}_a|+|\omega_j-\bar{\lambda}_a|\leq 4n^\alpha+2k\leq 4(n^\alpha+k).
    \end{align}
\end{proof}

\subsubsection{Uniform bounds for block-diagonal operators}
An operator acting within an eigenspace block may have norm of order $O(n)$ on $H_\lambda$. Using the weight estimates
from the previous subsection, this subsection proves that its centered action has norm $O(n^\alpha+k)$ on the low-excitation sector $F_kH_\lambda$.

We first show that a set of (anti-)Hermitian operators induces a direct sum decomposition of a vector space into invariant subspaces.
\begin{lemma}\label{lem:orthogonal_direct_sum}
    Let $W$ be a finite-dimensional complex inner-product space. Let $\mathcal{M}$ be a set of Hermitian or anti-Hermitian operators on $W$. Then $W$ admits an orthogonal direct sum decomposition $W=\bigoplus_m^\perp W_m$, where each $W_m$ is invariant under $\mathcal{M}$ and minimal in the sense that no $W_m$ contains a nonzero proper $\mathcal{M}$-invariant subspace. 
\end{lemma}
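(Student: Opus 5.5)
The key fact I will use is that an orthogonal complement of an $\mathcal{M}$-invariant subspace is again $\mathcal{M}$-invariant. Once this is in hand, the decomposition follows by induction on $\dim W$. Concretely, let $U\subset W$ be invariant under every $M\in\mathcal{M}$. Each $M$ satisfies $M^\dag=\pm M$, so $M^\dag U\subset U$. For $w\in U^\perp$ and $u\in U$ we then have $\braket{u,Mw}=\braket{M^\dag u,w}=0$, hence $MU^\perp\subset U^\perp$. The same computation applies uniformly to Hermitian and anti-Hermitian elements, so $\mathcal{M}$ may mix both kinds.

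The induction runs as follows. If $W=\{0\}$, the empty direct sum works. Otherwise, among all nonzero $\mathcal{M}$-invariant subspaces of $W$ (the family is nonempty since it contains $W$), pick one, $W_1$, of minimal dimension; this is possible because $W$ is finite-dimensional. By construction $W_1$ contains no nonzero proper $\mathcal{M}$-invariant subspace. By the key fact, $W_1^\perp$ is $\mathcal{M}$-invariant and has strictly smaller dimension. With the restricted inner product, $W_1^\perp$ is again a finite-dimensional inner-product space, and the restrictions of the elements of $\mathcal{M}$ to it remain Hermitian or anti-Hermitian, since restricting to an invariant subspace preserves adjointness. The induction hypothesis therefore gives an orthogonal decomposition of $W_1^\perp$ into minimal $\mathcal{M}$-invariant pieces. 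Adjoining $W_1$ yields $W=W_1\oplus^\perp W_1^\perp=\bigoplus_m^\perp W_m$.

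One point needs checking: minimality inside $W_1^\perp$ must coincide with minimality inside $W$. It does, because any $\mathcal{M}$-invariant subspace of $W_m\subset W_1^\perp$ is invariant in both senses. I expect no real obstacle here; the only step needing care is the adjoint-stability argument, specifically verifying that anti-Hermitian operators behave like Hermitian ones, which holds because only $M^\dag U\subset U$ is used.
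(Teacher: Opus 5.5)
Your proof is correct and follows the paper's argument: pick a nonzero $\mathcal{M}$-invariant subspace $W_1$ of minimal dimension, use $M^\dag=\pm M$ to show that $W_1^\perp$ is $\mathcal{M}$-invariant, and recurse on $W_1^\perp$ (which you write as an explicit induction on $\dim W$). Your extra check, that minimality with respect to the restricted operators is the same as minimality in $W$, is a detail the paper leaves implicit.
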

\begin{proof}
    If $W=\{0\}$, there is nothing to prove. Suppose that $W\neq\{0\}$. Since $W$ itself is invariant under any operator in $\mathcal{M}$, the collection of nonzero common $\mathcal{M}$-invariant subspaces of $W$ is nonempty. From these $\mathcal{M}$-invariant subspaces, choose one with minimal dimension, denoted by $W_1$. By its minimality, $W_1$ contains no nonzero proper common $\mathcal{M}$-invariant subspace.
    
    Now, for any $v\in W_1^\perp$, $w\in W_1$ and $M\in\mathcal{M}$, 
    \begin{align}
        \braket{Mv,w}=\braket{v,M^\dag w}=\pm\braket{v,M w}=0
    \end{align}
    since $Mw\in W_1$. Therefore, $W_1^\perp$ is also a $\mathcal{M}$-invariant subspace. Since $\dim W_1^\perp< \dim W<\infty$, and the restriction of $M\in\mathcal{M}$ on $W_1^\perp$ is Hermitian or anti-Hermitian, repeatedly applying the same argument on $W_1^\perp$, we find an orthogonal direct sum decomposition $W=\bigoplus_m^\perp W_m$. 
\end{proof}

We also utilize a standard fact about irreducible angular-momentum multiplets. 
\begin{lemma}\label{lem:sl2_basis}
    Let $W\neq \{0\}$ be a finite-dimensional complex inner-product space, and let $J_+,J_-,J_3$ be operators on $W$ such that
    \begin{align}
        [J_3,J_\pm]=\pm J_\pm,\quad [J_+,J_-]=2 J_3,\quad J_+^\dag =J_-,\quad J_3^\dag=J_3,
    \end{align}
    Suppose that $W$ is irreducible in the sense that it contains no nonzero proper subspace invariant under all of $J_+,J_-,J_3$. Then, there exists $j\in \frac{1}{2}\mathbb{Z}_{\geq 0}$ and orthonormal basis 
    \begin{align}
        \{\ket{j,m}\colon m=-j,-j+1,\ldots,j\}
    \end{align}
    of $W$ such that
    \begin{align}
        J_3\ket{j,m}=m\ket{j,m},\quad J_{\pm}\ket{j,m}=\sqrt{(j\mp m)(j\pm m+1)}\ket{j,m\pm1}.
    \end{align}
    Consequently, $\dim W=2j+1$, and
    \begin{align}
        \|J_+\|=\|J_-\|\leq j+\frac{1}{2}.
    \end{align}
\end{lemma}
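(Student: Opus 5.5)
The plan is to run the standard highest-weight construction for $\mathfrak{sl}_2$, using only the listed commutation relations and adjointness. Since $J_3$ is Hermitian on the finite-dimensional space $W$, it has real eigenvalues, and $W$ splits orthogonally into $J_3$-eigenspaces. The relation $[J_3,J_\pm]=\pm J_\pm$ shows that $J_\pm$ shifts a $J_3$-eigenvalue by $\pm1$. Because $W$ is finite-dimensional, we may take the largest eigenvalue $j$ and a unit vector $v$ with $J_3 v=jv$; then $J_+v=0$. Set $v_k\coloneqq J_-^k v$. Each $v_k$ is a $J_3$-eigenvector with eigenvalue $j-k$, and induction on $k$ using $J_+J_-=J_-J_++2J_3$ gives
\begin{align}
    J_+v_k=k(2j-k+1)v_{k-1}.
\end{align}

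Next I show that the span $W'$ of the $v_k$ is invariant under $J_+$, $J_-$ and $J_3$, so irreducibility gives $W'=W$. Since $W$ is finite-dimensional, some $v_{N+1}$ must vanish; take $N$ minimal, so $v_N\neq0$. Applying $J_+$ to $v_{N+1}=0$ gives $(N+1)(2j-N)v_N=0$, hence $j=N/2\in\frac12\mathbb{Z}_{\geq0}$ and $\dim W=2j+1$, since the $v_k$ have distinct eigenvalues and are therefore orthogonal. For the norms, adjointness gives $\|v_k\|^2=\braket{v_{k-1},J_+v_k}=k(2j-k+1)\|v_{k-1}\|^2$. Normalizing, $\ket{j,m}\coloneqq v_{j-m}/\|v_{j-m}\|$, produces the stated coefficients $\sqrt{(j\mp m)(j\pm m+1)}$, with the $J_+$ formula following from the $J_-$ formula by taking adjoints.

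For the operator-norm bound, $J_+$ maps the orthonormal basis vector $\ket{j,m}$ to a multiple of a distinct basis vector $\ket{j,m+1}$. It is therefore a weighted shift, and its norm equals the maximum coefficient $\max_m\sqrt{(j-m)(j+m+1)}$. By AM–GM,
\begin{align}
    \sqrt{(j-m)(j+m+1)}\leq \frac{(j-m)+(j+m+1)}{2}=j+\frac12.
\end{align}
Finally, $\|J_-\|=\|J_+^\dag\|=\|J_+\|$.

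This is all routine; the only point needing care is the induction formula for $J_+v_k$ together with the orthogonality and invariance bookkeeping. I do not expect a real obstacle.
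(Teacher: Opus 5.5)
Your proposal is correct and follows the paper's route. The paper obtains the orthonormal basis by citing the standard highest-weight construction from a textbook, which you carry out explicitly, and it gets $\|J_+\|=\max_{m}\sqrt{(j-m)(j+m+1)}\leq j+\tfrac12$ and $\|J_-\|=\|J_+\|$ by the same weighted-shift and adjointness argument you give.
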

\begin{proof}
    For the construction of the orthonormal basis, see, for example, Chapter~3 of Ref.~\cite{georgiLieAlgebrasParticle2018}. The norm bound follows directly from
    \begin{align}
        \|J_+\|=\max_{-j\leq m\leq j}\sqrt{(j-m)(j+m+1)}\leq j+\frac{1}{2}
    \end{align}
    and $\|J_-\|=\|J_+\|$ since $J_-^\dag =J_+$.
\end{proof}

\begin{theorem}\label{thm:l_estimate}
    For $L\in\mathfrak{l}$, let $L^{(a)}$ denote its block components so that $L=\sum_{a\in\mathcal{A}}L^{(a)}$, and define
    \begin{align}
        \chi_\lambda(L)\coloneqq \sum_{a\in\mathcal{A}}\frac{\Tr_{V_a}L^{(a)}}{d_a}|\lambda^{(a)}|.\label{eq:chi_lambda_definition}
    \end{align}
    Then there exists a constant $C_{\mathrm{tl}}$ depending only on $d$ such that, for any $L\in\mathfrak{l}$, $\lambda\in\Lambda_n$, and $k\geq0$, 
    \begin{align}
        \|(\dd\pi_\lambda(L)-\chi_\lambda(L)I)|_{F_kH_\lambda}\|_\op\leq C_{\mathrm{tl}}\|L\|_\op(n^\alpha+k).
    \end{align}
\end{theorem}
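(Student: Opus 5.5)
The plan is to split each block component of $L$ into a scalar part and a traceless part. The scalar part is compared with $\chi_\lambda(L)$ using the block-occupation bound of Lemma~\ref{lem:wt_filt}. The traceless part is bounded term by term in the matrix-unit basis, with the weight-spread estimate of Proposition~\ref{prop:spread_FkHlambda_weights} controlling both diagonal and off-diagonal terms.

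Throughout, Lemma~\ref{lem:stability_filtration} ensures that $F_kH_\lambda$ is invariant under $\dd\pi_\lambda(\mathfrak{l})$. Hence all restrictions below are operators on the finite-dimensional space $F_kH_\lambda$.

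\emph{Scalar part.} Write $L^{(a)}=\frac{\Tr_{V_a}L^{(a)}}{d_a}P_a+\tilde L^{(a)}$ with $\Tr\tilde L^{(a)}=0$. Since $\dd\pi_\lambda(P_a)=\sum_{i\in I_a}\dd\pi_\lambda(E_{ii})$, it acts on a weight vector of weight $\omega$ as the scalar $\sum_{i\in I_a}\omega_i$. By Lemma~\ref{lem:wt_filt}(ii), this scalar differs from $|\lambda^{(a)}|$ by at most $k$ on $F_kH_\lambda$. Since $F_kH_\lambda$ is spanned by orthogonal weight vectors, this gives
\begin{align}
\Bigl\|\Bigl(\sum_a\tfrac{\Tr L^{(a)}}{d_a}\dd\pi_\lambda(P_a)-\chi_\lambda(L)I\Bigr)\Big|_{F_kH_\lambda}\Bigr\|_\op\leq |\mathcal{A}|\,\|L\|_\op\,k .
\end{align}
Here we use $|\Tr L^{(a)}|/d_a\leq\|L\|_\op$.

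\emph{Diagonal traceless terms.} Expand $\tilde L^{(a)}=\sum_{p,q\in I_a}c_{pq}E_{pq}$ with $|c_{pq}|\leq 2\|L\|_\op$. The diagonal part is traceless, so it can be written as a combination of differences $E_{pp}-E_{qq}$ with coefficients bounded by $O(d)\|L\|_\op$. Each $E_{pp}-E_{qq}$ acts diagonally on weight vectors by $\omega_p-\omega_q$. By Proposition~\ref{prop:spread_FkHlambda_weights}, $|\omega_p-\omega_q|\leq 4(n^\alpha+k)$.

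\emph{Off-diagonal terms.} For $p\neq q$ in the same block, set
\begin{align}
J_+=\dd\pi_\lambda(E_{pq}),\quad J_-=\dd\pi_\lambda(E_{qp}),\quad J_3=\tfrac12\dd\pi_\lambda(E_{pp}-E_{qq}).
\end{align}
These satisfy the relations of Lemma~\ref{lem:sl2_basis}, including $J_+^\dag=J_-$ by $\dd\pi_\lambda(X^\dag)=\dd\pi_\lambda(X)^\dag$. All three preserve $F_kH_\lambda$.

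Apply Lemma~\ref{lem:orthogonal_direct_sum} on $F_kH_\lambda$ to the Hermitian operators $J_3$, $J_++J_-$ and $\ii(J_+-J_-)$. This gives an orthogonal decomposition into irreducible multiplets. The operator norm of $J_\pm$ on $F_kH_\lambda$ is the maximum of its norms over these multiplets.

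In a spin-$j$ multiplet, $j$ is an eigenvalue of $J_3$ on a weight vector lying in $F_kH_\lambda$, so $j\leq 2(n^\alpha+k)$ by Proposition~\ref{prop:spread_FkHlambda_weights}. Hence $\|J_\pm|_{F_kH_\lambda}\|\leq 2(n^\alpha+k)+\tfrac12\leq 3(n^\alpha+k)$, using $n^\alpha\geq1$. Summing over at most $d^2$ pairs with coefficients $|c_{pq}|\leq 2\|L\|_\op$ gives a bound $O(d^2)\|L\|_\op(n^\alpha+k)$.

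Combining the three pieces with the triangle inequality yields the claim, with $C_{\mathrm{tl}}$ depending only on $d$.

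The main obstacle is the off-diagonal step. One must justify that the spin of each irreducible $\mathfrak{sl}_2$ component inside $F_kH_\lambda$ is controlled by weights of $F_kH_\lambda$ itself, not by weights of the whole $H_\lambda$, where the spread is of order $n$. This is exactly why the invariance from Lemma~\ref{lem:stability_filtration} and the orthogonal decomposition of the restricted operators are used, rather than a decomposition of all of $H_\lambda$.
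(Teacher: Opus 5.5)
Your proposal is correct and follows essentially the same route as the paper. Both arguments split each block into scalar and traceless parts, bound the central part using the block-occupation estimate of Lemma~\ref{lem:wt_filt}, and bound the off-diagonal matrix units via an orthogonal $\mathfrak{sl}_2$-multiplet decomposition of the invariant space $F_kH_\lambda$ (Lemma~\ref{lem:stability_filtration}) together with Proposition~\ref{prop:spread_FkHlambda_weights}. The only cosmetic differences are that you treat the traceless diagonal part through differences $E_{pp}-E_{qq}$ rather than by centering at $\bar{\lambda}_a$, and you decompose using three Hermitian generators rather than the pair $J_3$, $J_+-J_-$ plus the commutator identity.
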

\begin{proof}
    By Lemma~\ref{lem:wt_filt}, $F_kH_\lambda$ is spanned by weight vectors. Since the weight spaces $(H_\lambda)_\omega$ are joint eigenspaces of commuting Hermitian operators $\dd\pi_\lambda(E_{11}),\ldots,\dd\pi_\lambda(E_{dd})$, $W\coloneqq F_kH_\lambda$ admits the orthogonal weight-space decomposition
    \begin{align}
        W=\bigoplus_{\omega\in\Wt(W)}^\perp\tilde{W}_\omega,\quad \tilde{W}_\omega\coloneqq W\cap (H_\lambda)_\omega.
    \end{align}
    
    Fix $a\in\mathcal{A}$ and distinct indices $i,j\in I_a$. Let $J_\pm,J_3$ be operators on $W$ defined by
    \begin{align}
        J_+\coloneqq \dd\pi_\lambda(E_{ij})|_W,\quad J_-\coloneqq \dd\pi_\lambda(E_{ji})|_W,\quad J_3\coloneqq \frac{1}{2}\dd\pi_\lambda(E_{ii}-E_{jj})|_W.
    \end{align}
    These restrictions are well-defined since $E_{ij},E_{ji},E_{ii}-E_{jj}\in\mathfrak{l}$, and $\mathfrak{l}$ preserves $W$ by Lemma~\ref{lem:stability_filtration}. The operators satisfy
    \begin{align}
        [J_3,J_\pm]=\pm J_\pm,\quad [J_+,J_-]=2 J_3,\quad J_+^\dag =J_-,\quad J_3^\dag=J_3.
    \end{align}
    Applying Lemma~\ref{lem:orthogonal_direct_sum} to Hermitian/anti-Hermitian pair $J_3$ and $J_+ -J_-$ gives an orthogonal decomposition $W=W_1\oplus^\perp\cdots\oplus^\perp W_N$ into minimal invariant subspaces. Each $W_r$ is also invariant under $J_++J_-=[J_3,J_+-J_-]$, and hence under $J_+$ and $J_-$. Therefore, $W_r$ is irreducible under $J_+,J_-,J_3$. Applying Lemma~\ref{lem:sl2_basis} for each $W_r$, there exists $j_r\in\frac{1}{2}\mathbb{Z}_{\geq 0}$ such that
    \begin{align}
        j_r\in\spec(J_3|_{W_r}),\quad \|J_+|_{W_r}\|\leq j_r+\frac{1}{2}.
    \end{align}

    For $v\in \tilde{W}_\omega$, 
    \begin{align}
        J_3v=\frac{1}{2}\dd\pi_\lambda(E_{ii}-E_{jj})v=\frac{\omega_i-\omega_j}{2}v.
    \end{align}
    Therefore, 
    \begin{align}
        \spec(J_3|_W)\subset \left\{\frac{\omega_i-\omega_j}{2}\colon \omega\in\Wt(W)\right\}.
    \end{align}
    Since $W_r\subset W$, we obtain
    \begin{align}
        j_r\leq \max_{\omega\in\Wt(W)}\left\{\frac{1}{2}|\omega_i-\omega_j|\right\}\leq 2(n^\alpha+k),
    \end{align}
    where we have used Proposition~\ref{prop:spread_FkHlambda_weights}. Therefore,
    \begin{align}
        \|J_+|_{W_r}\|_\op\leq j_r+\frac{1}{2}\leq  \frac{5}{2}(n^\alpha+k),
    \end{align}
    where we have used $n^\alpha+k\geq 1$. Since the subspaces $W_r$ are mutually orthogonal and invariant under $J_+$, we obtain
    \begin{align}
        \|\dd\pi_\lambda(E_{ij})|_W\|_\op\leq\frac{5}{2}(n^\alpha+k).
    \end{align}

    For $L=\sum_{a\in\mathcal{A}}L^{(a)}\in\mathfrak{l}$, decompose each block as
    \begin{align}
        L^{(a)}=M^{(a)}+c_aI_{V_a},\quad c_a\coloneqq \frac{\Tr_{V_a}L^{(a)}}{d_a}
    \end{align}
    so that $\Tr_{V_a} M^{(a)}=0$. For $Z_a\coloneqq \sum_{i\in I_a}E_{ii}$, 
    \begin{align}
        \dd\pi_\lambda(L)-\chi_\lambda(L)I=\sum_{a\in\mathcal{A}}\dd\pi_\lambda(M^{(a)})+\sum_{a\in\mathcal{A}}c_a(\dd\pi_\lambda(Z_a)-|\lambda^{(a)}|I)
    \end{align}
    Furthermore, $|c_a|\leq \|L^{(a)}\|_\op\leq \|L\|_\op$ and hence
    \begin{align}
        \|M^{(a)}\|_\op&\leq \|L^{(a)}\|_\op+|c_a|\leq 2\|L\|_\op.\label{eq:tr_norm_M_a}
    \end{align}

    We first derive a bound for the central part. For $v\in \tilde{W}_\omega$, 
    \begin{align}
        \sum_{a\in\mathcal{A}}c_a(\dd\pi_\lambda(Z_a)-|\lambda^{(a)}|I)v=\sum_{a\in\mathcal{A}}c_a\left(\sum_{i\in I_a}\omega_{i}-|\lambda^{(a)}|\right)v.
    \end{align}
    Therefore, 
    \begin{align}
        \left\|\sum_{a\in\mathcal{A}}c_a(\dd\pi_\lambda(Z_a)-|\lambda^{(a)}|I)\biggl|_{W}\right\|_\op &\leq \sum_{a\in\mathcal{A}}|c_a|\max_{\omega\in\Wt(F_kH_\lambda)}\left|\sum_{i\in I_a}\omega_{i}-|\lambda^{(a)}|\right|\\
        &\leq d \|L\|_\op k\leq d\|L\|_\op (n^\alpha+k)
    \end{align}
    where we have used Lemma~\ref{lem:wt_filt} and $|\mathcal{A}|\leq d$. 

    Next, we derive a bound for the traceless part. Write
    \begin{align}
        M^{(a)}=\sum_{\substack{i,j\in I_a\\i\neq j}}m_{ij}E_{ij}+\sum_{i\in I_a}m_{ii}E_{ii}.
    \end{align}
    From Eq.~\eqref{eq:tr_norm_M_a}, we have
    \begin{align}
        |m_{ij}|\leq \|M^{(a)}\|_\op\leq 2\|L\|_\op.
    \end{align}
    Therefore,
    \begin{align}
        \left\|\sum_{a\in\mathcal{A}}\sum_{\substack{i,j\in I_a\\i\neq j}}m_{ij}\dd\pi_\lambda(E_{ij})\biggl|_{W}\right\|_\op\leq \sum_{a\in\mathcal{A}}\sum_{\substack{i,j\in I_a\\i\neq j}}|m_{ij}|\|\dd\pi_\lambda(E_{ij})|_W\|\leq 5d^2\|L\|_\op (n^\alpha+k).
    \end{align}

    For $v\in \tilde{W}_\omega$, 
    \begin{align}
        \sum_{a\in\mathcal{A}}\sum_{i\in I_a}m_{ii}\dd\pi_\lambda(E_{ii})v=\left(\sum_{a\in\mathcal{A}}\sum_{i\in I_a}m_{ii}\omega_i\right)v
    \end{align}
    Since $\Tr M^{(a)}=0$, $\sum_{i\in I_a}m_{ii}=0$. Therefore,
    \begin{align}
        \sum_{a\in\mathcal{A}}\sum_{i\in I_a}m_{ii}\omega_i=\sum_{a\in\mathcal{A}}\sum_{i\in I_a}m_{ii}(\omega_i-\bar{\lambda}_a).
    \end{align}
    Since $|\omega_i-\bar{\lambda}_a|\leq 2n^\alpha+k$ by Proposition~\ref{prop:spread_FkHlambda_weights}, we obtain
    \begin{align}
        \left\|\sum_{a\in\mathcal{A}}\sum_{i\in I_a}m_{ii}\dd\pi_\lambda(E_{ii})\biggl|_{W}\right\|_\op&\leq \sum_{a\in\mathcal{A}}\sum_{i\in I_a}|m_{ii}|\max_{\omega\in\Wt(F_kH_\lambda)}|(\omega_i-\bar{\lambda}_a)|\\
        &\leq 2d\|L\|_\op(2n^\alpha+k)\\
        &\leq 4d\|L\|_\op(n^\alpha+k).
    \end{align}

    Therefore,
    \begin{align}
        &\|(\dd\pi_\lambda(L)-\chi_\lambda(L)I)|_{F_kH_\lambda}\|_\op\nonumber\\
        &\leq \left\|\sum_{a\in\mathcal{A}}c_a(\dd\pi_\lambda(Z_a)-|\lambda^{(a)}|I)\biggl|_{W}\right\|_\op+\left\|\sum_{a\in\mathcal{A}}\sum_{\substack{i,j\in I_a\\i\neq j}}m_{ij}\dd\pi_\lambda(E_{ij})\biggl|_{W}\right\|_\op+\left\|\sum_{a\in\mathcal{A}}\sum_{i\in I_a}m_{ii}\dd\pi_\lambda(E_{ii})\biggl|_{W}\right\|_\op\\
        &\leq C_{\mathrm{tl}}\|L\|_\op(n^\alpha+k),
    \end{align}
    where $C_{\mathrm{tl}}\coloneqq 5d(d+1)$. 
\end{proof}

% \clearpage
\subsection{Approximate Canonical Commutation Relations}\label{sec:approximate_commutation_relation}
In this subsection, we introduce normalized raising and lowering operators on each typical Schur block and compare their commutation relations with the canonical commutation relations on bosonic Fock space. The normalization is determined by the block gaps $s_{ab}(\lambda)$, which are of order $n$ uniformly over typical diagrams.

The resulting operators $\cre(x)$ and $\ann(x)$ behave as approximate creation and annihilation operators. Their mixed commutator has the scalar term proportional to the identity, together with three different finite-$n$ corrections: off-diagonal remainders of order $n^{-1/2}$, a centered block-diagonal error of order $n^{\alpha-1}+k/n$ on $F_kH_\lambda$, and commutators of this error with the ladder operators of order $n^{-1}$.

Throughout this subsection, we fix $\lambda\in \Lambda_n$ and take $n$ sufficiently large so that Eq.~\eqref{eq:s_ab_lower_upper_bound} holds. In particular, every $s_{ab}(\lambda)$ is strictly positive, and therefore the square-root normalizations below are well-defined.

\subsubsection{Normalized ladder operators}

Recall that for $\bar{X}\in\mathcal{K}_{\hor}$, we defined
\begin{align}
    \ell_{ab}(\bar{X})\coloneqq X^\dag\in \Hom(V_a,V_b),\quad r_{ab}(\bar{X})\coloneqq X\in \Hom(V_b,V_a).
\end{align}

\begin{definition}\label{def:rescaling_blocks}
     We define two linear bijections $\Lmat:\mathcal{K}_{\hor}\to \mathfrak{n}_-$ and $\Umat:\overline{\mathcal{K}_{\hor}}\to\mathfrak{n}_+$ by
    \begin{align}
        \Lmat(x)&\coloneqq \sum_{\zeta_a>\zeta_b}s_{ab}^{-1/2}(\lambda)\ell_{ab}(x_{ab})\in\mathfrak{n}_-,\qquad x=\sum_{\zeta_a>\zeta_b}x_{ab}\in\mathcal{K}_{\hor}, \\
        \Umat(\bar{u})&\coloneqq  \sum_{\zeta_a>\zeta_b}s_{ab}^{-1/2}(\lambda)r_{ab}(u_{ab})\in\mathfrak{n}_+,\qquad    \bar{u}=\sum_{\zeta_a>\zeta_b}\overline{u_{ab}}\in\overline{\mathcal{K}_{\hor}}.
    \end{align}
\end{definition}

The following is a key property of these maps which we use throughout in this subsection. 
\begin{lemma}\label{lem:norm_L_U_inv}
    For any $x,u\in\mathcal{K}_{\hor}$, 
    \begin{align}
        \|\Lmat(x)\|_\HS\leq \sqrt{\frac{2}{\delta_*n}}\|x\|,\quad \|\Umat(\bar{u})\|_\HS\leq \sqrt{\frac{2}{\delta_*n}}\|u\|.
    \end{align}
    For $X\in\mathfrak{n}_-$ and $Y\in\mathfrak{n}_{+}$,
    \begin{align}
        \|(\Lmat)^{-1}(X)\|\leq \sqrt{2n}\|X\|_\HS,\quad \|(\Umat)^{-1}(Y)\|\leq \sqrt{2n}\|Y\|_\HS.
    \end{align}
\end{lemma}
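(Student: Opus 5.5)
The plan is a direct blockwise computation that uses only the two-sided bounds on the block gaps from Lemma~\ref{lem:s_ab_lower_upper_bound}. We take $n$ large enough that Eq.~\eqref{eq:s_ab_lower_upper_bound} holds uniformly over $\lambda\in\Lambda_n$. This gives $\frac{\delta_*}{2}n\leq s_{ab}(\lambda)\leq 2n$ for every ordered pair with $\zeta_a>\zeta_b$.

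\emph{Forward bounds.} Write $x=\sum_{\zeta_a>\zeta_b}x_{ab}$ with $x_{ab}=\overline{X_{ab}}$. Then $\ell_{ab}(x_{ab})=X_{ab}^\dag$ lives in the $(b,a)$ block $\Hom(V_a,V_b)$. Distinct ordered pairs occupy distinct, mutually Hilbert--Schmidt-orthogonal off-diagonal blocks of $\End(\mathcal{H})$, so
\begin{align}
    \|\Lmat(x)\|_\HS^2=\sum_{\zeta_a>\zeta_b}s_{ab}(\lambda)^{-1}\|X_{ab}^\dag\|_\HS^2\leq \frac{2}{\delta_* n}\sum_{\zeta_a>\zeta_b}\|x_{ab}\|^2=\frac{2}{\delta_*n}\|x\|^2.
\end{align}
Here we use $\|X^\dag\|_\HS=\|X\|_\HS=\|\bar{X}\|$ and the fact that $\mathcal{K}_\hor$ is an orthogonal direct sum of the $\mathcal{K}_{ab}$. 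The same computation, with $r_{ab}(u_{ab})=U_{ab}$ in block $(a,b)$, gives the bound for $\Umat(\bar u)$.

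\emph{Inverse bounds.} Any $X\in\mathfrak{n}_-$ decomposes uniquely as $X=\sum_{\zeta_a>\zeta_b}P_bXP_a$ into orthogonal blocks. The preimage is then $x_{ab}=\overline{s_{ab}(\lambda)^{1/2}(P_bXP_a)^\dag}$. Hence
\begin{align}
    \|(\Lmat)^{-1}(X)\|^2=\sum_{\zeta_a>\zeta_b}s_{ab}(\lambda)\|P_bXP_a\|_\HS^2\leq 2n\|X\|_\HS^2,
\end{align}
using the upper bound $s_{ab}\leq 2n$. The argument for $\Umat$ and $Y\in\mathfrak{n}_+$ is identical, with blocks $P_aYP_b$.

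The same block decomposition also confirms that $\Lmat$ and $\Umat$ are the bijections claimed in Definition~\ref{def:rescaling_blocks}, since each is an invertible rescaling on each block. No step is a real obstacle. The only points needing care are bookkeeping ones: the conjugate-space convention, under which $\ell$ is linear and $r$ antilinear but both are norm-preserving blockwise, and the observation that the inverse bound needs the upper gap estimate while the forward bound needs the lower one.
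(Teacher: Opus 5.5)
Your proposal is correct and follows essentially the same argument as the paper. Both use the Hilbert--Schmidt orthogonality of the distinct off-diagonal blocks and the norm preservation of $\ell_{ab}$ and $r_{ab}$, together with $s_{ab}(\lambda)\geq \delta_* n/2$ for the forward bounds and $s_{ab}(\lambda)\leq 2n$ for the inverse bounds.
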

\begin{proof}
    Since distinct matrix blocks are orthogonal with respect to the Hilbert--Schmidt inner product, and $\ell_{ab}$ and $r_{ab}$ preserve the Hilbert--Schmidt norm, we get
    \begin{align}
         \|\Lmat(x)\|_\HS^2=\sum_{\zeta_a>\zeta_b}\frac{\|x_{ab}\|^2}{s_{ab}}\leq \frac{2}{\delta_*n}\|x\|^2,\quad \|\Umat(\bar{u})\|_\HS^2=\sum_{\zeta_a>\zeta_b}\frac{\|u_{ab}\|^2}{s_{ab}}\leq \frac{2}{\delta_*n}\|u\|^2,
    \end{align}
    where we have used $s_{ab}\geq \frac{\delta_*n}{2}$. Similarly, for $X=\sum_{\zeta_a>\zeta_b}X_{ba}\in\mathfrak{n}_-$ and $Y=\sum_{\zeta_a>\zeta_b}Y_{ab}\in\mathfrak{n}_+$, by using $s_{ab}\leq 2n$
    \begin{align}
         \|(\Lmat)^{-1}(X)\|^2\leq \sum_{\zeta_a>\zeta_b}s_{ab}(\lambda)\|X_{ba}\|^2\leq2n\|X\|_\HS^2,\quad  \|(\Umat)^{-1}(Y)\|^2\leq \sum_{\zeta_a>\zeta_b}s_{ab}(\lambda)\|Y_{ab}\|^2\leq2n\|Y\|_\HS^2.
    \end{align}
\end{proof}

By using the maps $\Lmat$ and $\Umat$, the approximate creation and annihilation operators are defined as follows.
\begin{definition}\label{def:creation_and_annihilation_finite}
     Define
    \begin{align}
        \cre(x)\coloneqq \dd\pi_\lambda(\Lmat(x)),\quad \ann^{\lin}(\bar{u})\coloneqq \dd\pi_\lambda(\Umat(\bar{u})).
    \end{align}
    The anti-linear annihilation operator is defined by
    \begin{align}
        \ann(u)\coloneqq \ann^{\lin}(\bar{u}). 
    \end{align}
\end{definition}

\begin{lemma}\label{lem:properties_cre_ann}
    For any $x\in\mathcal{K}_{\hor}$, 
    \begin{enumerate}[(i)]
        \item $\cre(x)^\dag=\ann(x)$,
        \item $\ann(x)\xi=0$ for any $\xi\in E_\lambda$, 
        \item $\cre(x)F_kH_\lambda\subset F_{k+1}H_\lambda$ and $\ann(x)F_kH_\lambda\subset F_kH_\lambda$ for any $k\geq 0$, 
        \item For $h\in H$, $\pi_\lambda(h)\cre(x)\pi_\lambda(h)^\dag =\cre(h\cdot x)$ and $\pi_\lambda(h)\ann(x)\pi_\lambda(h)^\dag =\ann(h\cdot x)$. 
    \end{enumerate}
\end{lemma}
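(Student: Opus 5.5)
The four properties follow directly from the definitions and earlier lemmas, so I will verify them one at a time, each by a short algebraic computation.

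For (i), I use $\dd\pi_\lambda(X^\dag)=\dd\pi_\lambda(X)^\dag$. It then suffices to show $\Lmat(x)^\dag=\Umat(\bar{x})$ at the level of matrices. Blockwise, $\ell_{ab}(x_{ab})^\dag=(X_{ab}^\dag)^\dag=X_{ab}=r_{ab}(x_{ab})$. The scalar weights $s_{ab}(\lambda)^{-1/2}$ are real, so $\Lmat(x)^\dag=\sum s_{ab}^{-1/2}r_{ab}(x_{ab})=\Umat(\bar{x})$. Hence $\cre(x)^\dag=\ann^{\lin}(\bar{x})=\ann(x)$.

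For (ii) and (iii), I note that $\Umat(\bar{x})\in\mathfrak{n}_+$ and $\Lmat(x)\in\mathfrak{n}_-$.
\begin{itemize}
\item Property (ii) is then exactly Lemma~\ref{lem:ground_component}(i), which gives $\mathfrak{n}_+E_\lambda=0$.
\item The inclusion $\cre(x)F_kH_\lambda\subset F_{k+1}H_\lambda$ is immediate from the definition of $F_kH_\lambda$: prepending one more $\mathfrak{n}_-$ factor to a word of length at most $k$ acting on $E_\lambda$ gives a word of length at most $k+1$.
\item For $\ann(x)F_kH_\lambda\subset F_kH_\lambda$, take a spanning vector $\dd\pi_\lambda(X_1)\cdots\dd\pi_\lambda(X_m)\xi$ with $m\le k$, $X_i\in\mathfrak{n}_-$ and $\xi\in E_\lambda$. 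Commute $Y\coloneqq\Umat(\bar{x})\in\mathfrak{n}_+$ to the right. Each commutator $[Y,X_i]$ lies in $\mathfrak{g}=\mathfrak{n}_-\oplus\mathfrak{l}\oplus\mathfrak{n}_+$. I decompose it into its three components and induct on $m$: the $\mathfrak{n}_-$ component produces a word of the same length $m$; the $\mathfrak{l}$ component preserves $F_{m-1}H_\lambda$ by Lemma~\ref{lem:stability_filtration}; the $\mathfrak{n}_+$ component acts on a shorter word and is handled by the induction hypothesis. When $Y$ finally reaches $\xi$, the term vanishes by (ii).
\end{itemize}
This induction on $m$, carried out jointly for all elements of $\mathfrak{n}_+$, is the only place requiring care. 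I expect the bookkeeping of the three components to be the main obstacle, though a minor one.

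For (iv), I use the identity $\pi_\lambda(g)\dd\pi_\lambda(T)\pi_\lambda(g)^{-1}=\dd\pi_\lambda(gTg^{-1})$ together with the unitarity $\pi_\lambda(h)^\dag=\pi_\lambda(h^{-1})$. Eqs.~\eqref{eq:block_h_action_ell} and~\eqref{eq:block_h_action_r} give $h\ell_{ab}(x_{ab})h^{-1}=\ell_{ab}(h\cdot x_{ab})$ and likewise for $r_{ab}$. The weights $s_{ab}(\lambda)$ do not depend on $h$, so conjugation commutes with them. Therefore $h\Lmat(x)h^{-1}=\Lmat(h\cdot x)$ and $h\Umat(\bar{x})h^{-1}=\Umat(\overline{h\cdot x})$, which yields both identities in (iv).
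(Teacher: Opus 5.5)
Your proposal is correct and follows essentially the same route as the paper. Parts (i), (ii), and (iv) are the same direct computations. For (iii), your induction on word length $m$ (commuting an element of $\mathfrak{n}_+$ rightward and splitting each commutator into its $\mathfrak{n}_-$, $\mathfrak{l}$, $\mathfrak{n}_+$ components) matches the paper's induction on the filtration index: the paper writes $u=Xu'$ with $X\in\mathfrak{n}_-$ and $u'\in F_lH_\lambda$, and uses the hypothesis $\mathfrak{n}_+F_lH_\lambda\subset F_lH_\lambda$ for all of $\mathfrak{n}_+$ at once, as you correctly noted is required.
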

\begin{proof}
    (i) 
    For $x_{ab}=\bar{X}_{ab}\in\mathcal{K}_{ab}$, $\ell_{ab}(\bar{X}_{ab})^\dag=(X_{ab}^\dag)^\dag =X_{ab}=r_{ab}(\bar{X}_{ab})$, and hence $\cre(x)^\dag = \ann(x)$. 

    (ii) Since $\Umat(\bar{x})\in\mathfrak{n}_+$, the claim directly follows from Lemma~\ref{lem:ground_component}. 

    (iii) Since $\Lmat(x)\in\mathfrak{n}_-$, $\cre(x)F_kH_\lambda\subset F_{k+1}H_\lambda$ directly follows from the definition of $F_kH_\lambda$. We now prove $\mathfrak{n}_+F_kH_\lambda\subset F_kH_\lambda$ by induction on $k$. For $k=0$, the statement is equivalent to the above property (ii). Suppose that $\mathfrak{n}_+F_lH_\lambda\subset F_lH_\lambda$ for $l\geq 0$. 
    The space $F_{l+1}H_\lambda$ is spanned by $F_lH_\lambda$ and by vectors of the form $u=Xu'$, where $X\in\mathfrak n_-$ and $u'\in F_lH_\lambda$. 
    For $Y\in\mathfrak{n}_+$, $Yu=XYu'+[Y,X]u'$. 
    Since $\mathfrak{n}_+F_lH_\lambda\subset F_lH_\lambda$, we have $Yu'\in F_{l}H_\lambda$, implying that $XYu'\in F_{l+1}H_\lambda$. Since $[Y,X]\in\mathfrak{g}=\mathfrak{n}_-\oplus\mathfrak{l}\oplus\mathfrak{n}_+$, it can be decomposed as $[Y,X]=N_++L+N_-$, where $L\in\mathfrak{l}$ and $N_{\pm}\in\mathfrak{n}_{\pm}$. Since $N_-u' \in F_{l+1}H_\lambda$ by the definition of $F_{l+1}H_\lambda$, $Lu'\in F_lH_\lambda$ from Lemma~\ref{lem:stability_filtration}, and $N_+ u'\in F_lH_\lambda$ by the assumption. Since $F_lH_\lambda\subset F_{l+1}H_\lambda$, we get $[Y,X] u'\in F_{l+1}H_\lambda$. Therefore, $\mathfrak{n}_+F_{l+1}H_\lambda\subset F_{l+1}H_\lambda$. 

    (iv) For $g\in \GL(\mathcal{H})$ and $T\in\mathfrak{g}$, $\pi_\lambda(g)\dd\pi_\lambda(T)\pi_\lambda(g)^{-1}=\dd\pi_\lambda(gTg^{-1})$. For unitary $h\in H$, the operator $\pi_\lambda(h)$ is unitary, and hence $\pi_\lambda(h)^{-1}=\pi_\lambda(h)^\dag$. From Eqs.~\eqref{eq:block_h_action_ell} and \eqref{eq:block_h_action_r}, $h\ell(x)h^{-1}=\ell (h\cdot x)$ and $hr(x)h^{-1}=r(h\cdot x)$. Since $h\cdot x$ stays in the same block summand, and the normalization $s_{ab}^{-1/2}$ is unchanged for each block, we obtain $\pi_\lambda(h)\cre(x)\pi_\lambda(h)^\dag =\cre(h\cdot x)$ and $\pi_\lambda(h)\ann(x)\pi_\lambda(h)^\dag =\ann(h\cdot x)$. 
\end{proof}

\subsubsection{Commutators between creation and annihilation operators}
Since $\mathfrak{n}_-$ is a Lie subalgebra, $[\Lmat(x),\Lmat(y)]\in\mathfrak{n}_-$ for $x,y\in\mathcal{K}_\hor$. Defining a bilinear map $\Xi_1:\mathcal{K}_\hor\times \mathcal{K}_\hor\to \mathcal{K}_\hor$ by
\begin{align}
    \Xi_1(x,y)\coloneqq \sqrt{n}(\Lmat)^{-1}\left([\Lmat(x),\Lmat(y)]\right)\in \mathcal{K}_\hor,
\end{align}
we have
\begin{align}
    [\cre(x),\cre(y)]=n^{-1/2}\cre(\Xi_1(x,y)).
\end{align}
Since $\ann(x)=\cre(x)^\dag$, we also have
\begin{align}
    [\ann(x),\ann(y)]=n^{-1/2}\ann(\Xi_1(y,x)),
\end{align}
where we used $\Xi_1(y,x)=-\Xi_1(x,y)$ and the anti-linearity of $\ann$.

For $\bar{u}\in\overline{\mathcal{K}_\hor}$ and $x\in\mathcal{K}_{\hor}$, we define 
\begin{align}
    \Qmat(\bar{u},x)\coloneqq [\Umat (\bar{u}),\Lmat(x)]\in\mathfrak{g}.
\end{align}
The decomposition $\mathfrak{g}=\End(\mathcal{H})=\mathfrak{n}_-\oplus \mathfrak{l}\oplus\mathfrak{n}_+$ is orthogonal with respect to the Hilbert--Schmidt product. We denote the corresponding orthogonal projections by $P_-:\mathfrak{g}\to \mathfrak{n}_-$, $P_0:\mathfrak{g}\to \mathfrak{l}$, and $P_+:\mathfrak{g}\to \mathfrak{n}_+$. Then we write
\begin{align}
    \Qmat=\Qmat^-+\Qmat ^0 +\Qmat ^+,\qquad \Qmat ^\sigma\coloneqq P_\sigma\Qmat\quad (\sigma\in \{-,0,+\}).
\end{align}
We define bilinear maps $\Xi_2:\overline{\mathcal{K}_\hor}\times \mathcal{K}_\hor\to \mathcal{K}_\hor$ and $\Xi_3:\overline{\mathcal{K}_\hor}\times \mathcal{K}_\hor\to \overline{\mathcal{K}_\hor}$ by
\begin{align}
    \Xi_2(\bar{u},x)\coloneqq \sqrt{n}(\Lmat)^{-1}(\Qmat^-(\bar{u},x))\in\mathcal{K}_\hor,\qquad
    \Xi_3(\bar{u},x)\coloneqq \sqrt{n}(\Umat)^{-1}(\Qmat^+(\bar{u},x))\in\overline{\mathcal{K}_\hor}.
\end{align}
Introducing
\begin{align}
    \Dmat(\bar{u},x)\coloneqq \dd\pi_\lambda(\Qmat^0(\bar{u},x))-\braket{u,x}I,
\end{align}
we get
\begin{align}
    [\ann^\lin(\bar{u}),\cre(x)]&=\dd\pi_\lambda\left(\Qmat^-(\bar{u},x)+\Qmat ^0(\bar{u},x) +\Qmat ^+(\bar{u},x)\right)\\
    &=\braket{u,x}I+ \Dmat(\bar{u},x)+n^{-1/2}\cre(\Xi_2(\bar{u},x))+n^{-1/2}\ann^\lin(\Xi_3(\bar{u},x)).
\end{align}

At higher orders, not only commutators among the creation and annihilation operators but also commutators involving $\Dmat$ and these operators appear. Since $\Qmat^0(\bar{u},x)\in\mathfrak{l}$ and $[\mathfrak{l},\mathfrak{n}_{\pm}]\subset \mathfrak{n}_{\pm}$, we introduce the trilinear maps $\Xi_4:\overline{\mathcal{K}_\hor}\times \mathcal{K}_\hor\times \mathcal{K}_\hor\to \mathcal{K}_\hor$ and $\Xi_5:\overline{\mathcal{K}_\hor}\times \mathcal{K}_\hor\times \overline{\mathcal{K}_\hor}\to \overline{\mathcal{K}_\hor}$:
\begin{align}
    \Xi_4(\bar{u},x,y)\coloneqq n(\Lmat)^{-1}([\Qmat^0(\bar{u},x),\Lmat(y)])\in\mathcal{K}_\hor,\quad  \Xi_5(\bar{u},x,\bar{y})\coloneqq n(\Umat)^{-1}([\Qmat^0(\bar{u},x),\Umat(\bar{y})])\in\overline{\mathcal{K}_\hor}.
\end{align}
Then, we get
\begin{align}
    [\Dmat(\bar{u},x),\cre(y)]=n^{-1}\cre(\Xi_4(\bar{u},x,y)),\quad [\Dmat(\bar{u},x),\ann^\lin(\bar{y})]=n^{-1}\ann^\lin(\Xi_5(\bar{u},x,\bar{y})).
\end{align}

We now prove the following theorem, which summarizes the above commutation relations and includes bounds on the norm of multilinear maps $\Xi_i$.
\begin{theorem}\label{thm:commutators_multilin_maps}
    For $x,y,u,v\in\mathcal{K}_\hor$, the following identity holds on $H_\lambda$:
    \begin{align}
        [\cre(x),\cre(y)]&=n^{-1/2}\cre(\Xi_1(x,y)),\\
        [\ann(u),\ann(v)]&=n^{-1/2}\ann(\Xi_1(v,u))\\
        [\ann^\lin(\bar{u}),\cre(x)]&=\braket{u,x}I+ \Dmat(\bar{u},x)+n^{-1/2}\cre(\Xi_2(\bar{u},x))+n^{-1/2}\ann^\lin(\Xi_3(\bar{u},x)),\label{eq:comm_cre_ann_finite}\\
        [\Dmat(\bar{u},x),\cre(y)]&=n^{-1}\cre(\Xi_4(\bar{u},x,y)),\label{eq:comm_D_C}\\
        [\Dmat(\bar{u},x),\ann^\lin(\bar{y})]&=n^{-1}\ann^\lin(\Xi_5(\bar{u},x,\bar{y})).
    \end{align}
    The multilinear maps $\Xi_i$ satisfy
    \begin{align}
        \|\Xi_1(x,y)\|&\leq \frac{4\sqrt{2}}{\delta_*}\|x\|\|y\|, \quad  \|\Xi_2(\bar{u},x)\|\leq \frac{4\sqrt{2}}{\delta_*}\|u\|\|x\|,\quad  \|\Xi_3(\bar{u},x)\|\leq \frac{4\sqrt{2}}{\delta_*}\|u\|\|x\|\label{eq:bound_xi_1_xi_2_xi_3},\\
        \|\Xi_4(\bar{u},x,y)\|&\leq \frac{4}{\delta_*}\|u\|\|x\|\|y\|,\quad \|\Xi_5(\bar{u},x,\bar{y})\|\leq \frac{4}{\delta_*}\|u\|\|x\|\|y\|.
    \end{align}
    The defect term $\Dmat$ satisfies
    \begin{align}
        \|\Dmat(\bar{u},x)|_{F_kH_\lambda}\|_\op \leq C_{\mathsf{D}}\|u\|\|x\|\varepsilon_0(k),\quad \varepsilon_0(k)\coloneqq n^{\alpha-1}+\frac{k}{n}
    \end{align}
    for a structural constant $C_{\mathsf{D}}$. Moreover, $\chi_\lambda(\Qmat^0(\bar{u},x))=\braket{u,x}$ and $\Qmat^0(\bar{u},x)^\dag=\Qmat^0(\bar{x},u)$.
\end{theorem}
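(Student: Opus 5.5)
The identities themselves are essentially bookkeeping, and I will treat them first. Since $\dd\pi_\lambda$ is a Lie algebra representation, $[\cre(x),\cre(y)]=\dd\pi_\lambda([\Lmat(x),\Lmat(y)])$. Because $\mathfrak{n}_-$ is a subalgebra, the definition of $\Xi_1$ gives the first line at once.

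The second line follows by taking adjoints via Lemma~\ref{lem:properties_cre_ann}(i), using antisymmetry of $\Xi_1$ and the antilinearity of $\ann$. For Eq.~\eqref{eq:comm_cre_ann_finite} I will split $\Qmat=\Qmat^-+\Qmat^0+\Qmat^+$ and rewrite $\dd\pi_\lambda(\Qmat^\pm)$ through $(\Lmat)^{-1}$ and $(\Umat)^{-1}$. The lines involving $\Dmat$ follow because the scalar $\braket{u,x}I$ commutes with everything and $[\mathfrak{l},\mathfrak{n}_\pm]\subset\mathfrak{n}_\pm$.

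For the norm bounds on $\Xi_i$, I will combine Lemma~\ref{lem:norm_L_U_inv} with the Hilbert--Schmidt estimate $\|[A,B]\|_\HS\le 2\|A\|_\op\|B\|_\HS\le2\|A\|_\HS\|B\|_\HS$. For $\Xi_1$ this gives
\[
\|\Xi_1\|\le\sqrt n\,\sqrt{2n}\cdot 2\cdot\frac{2}{\delta_* n}\|x\|\|y\| .
\]
This equals $\frac{4\sqrt2}{\delta_*}\|x\|\|y\|$. The same chain applies to $\Xi_2$ and $\Xi_3$, using that $P_\pm$ are HS-contractions.

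The bound for $\Xi_4$ and $\Xi_5$ needs one extra observation. $\Qmat^0(\bar u,x)$ is a sum of products $s_{ab}^{-1/2}s_{cd}^{-1/2}r_{ab}(u)\ell_{cd}(x)$, so $\|\Qmat^0\|_\HS\le\frac{2}{\delta_* n}\|u\|\|x\|$. Multiplying by $n$, $\sqrt{2n}$, the factor $2$ and $\|\Lmat(y)\|_\HS\le\sqrt{2/(\delta_*n)}\|y\|$ gives a constant of order $\delta_*^{-3/2}$. To get the stated $4/\delta_*$, I will work blockwise and use the block-dependent weights $s_{ab}$ rather than the crude uniform bounds, or else use $\delta_*\le1$. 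I expect only constant-tracking here.

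For the scalar part, the key computation is $\chi_\lambda(\Qmat^0(\bar u,x))$. The only $\mathfrak l$-components of $[r_{ab}(u_{ab})/\sqrt{s_{ab}},\ell_{cd}(x_{cd})/\sqrt{s_{cd}}]$ arise for $(c,d)=(a,b)$. They equal
\[
s_{ab}^{-1}\bigl(U_{ab}X_{ab}^\dagger-X_{ab}^\dagger U_{ab}\bigr),
\]
one piece in block $a$ and one in block $b$. Here $u=\bar U$ and $x=\bar X$.

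Applying \eqref{eq:chi_lambda_definition} gives $s_{ab}^{-1}\Tr(X^\dagger U)\,(|\lambda^{(a)}|/d_a-|\lambda^{(b)}|/d_b)$. Since $|\lambda^{(a)}|/d_a=\bar\lambda_a$, this is $s_{ab}^{-1}\Tr(X^\dagger U)\,s_{ab}$, and $\Tr(X^\dagger U)=\braket{u,x}_{\mathcal K}$ by \eqref{eq:inner_prod_K_ab}. This is exactly why $s_{ab}$ was chosen as the normalization. The kernel block is included consistently through $\bar\lambda_0=0$. The adjoint relation $\Qmat^0(\bar u,x)^\dagger=\Qmat^0(\bar x,u)$ follows from $[A,B]^\dagger=[B^\dagger,A^\dagger]$, together with $\Lmat(x)^\dagger=\Umat(\bar x)$ and the fact that $P_0$ commutes with the adjoint.

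The main step is the defect bound. Given $\chi_\lambda(\Qmat^0)=\braket{u,x}$, I will write
\[
\Dmat=\dd\pi_\lambda(\Qmat^0)-\chi_\lambda(\Qmat^0)I
\]
and apply Theorem~\ref{thm:l_estimate} with $L=\Qmat^0(\bar u,x)$. This gives $C_{\mathrm{tl}}\|\Qmat^0\|_\op(n^\alpha+k)$. Inserting $\|\Qmat^0\|_\op\le\|\Qmat^0\|_\HS\le\frac{2}{\delta_*n}\|u\|\|x\|$ yields $C_{\mathsf D}=2C_{\mathrm{tl}}/\delta_*$ and the factor $\varepsilon_0(k)$.
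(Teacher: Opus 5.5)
Your treatment of the five commutator identities, of $\Xi_1,\Xi_2,\Xi_3$, of $\chi_\lambda(\Qmat^0(\bar u,x))=\braket{u,x}$, of the adjoint relation, and of the defect bound via Theorem~\ref{thm:l_estimate} follows the paper's proof and is fine.

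The gap is the bound on $\Xi_4,\Xi_5$, which you defer as constant-tracking. Your uniform chain gives $8\delta_*^{-3/2}$. The fallback ``use $\delta_*\le 1$'' goes the wrong way: $\delta_*\le1$ gives $\delta_*^{-3/2}\ge\delta_*^{-1}$, so it can never bring you down to $4/\delta_*$. The loss comes from combining the lower bound $s_{ab}\ge\delta_*n/2$ (inside $\|\Lmat(y)\|_\HS$) with the upper bound $s_{ab}\le 2n$ (inside $(\Lmat)^{-1}$).

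The missing observation is that $\Qmat^0(\bar u,x)\in\mathfrak l$ is block diagonal. Write $Q_c$ for its diagonal blocks and $y_{ab}=\overline{Y_{ab}}$. Then $[\Qmat^0,\ell_{ab}(y_{ab})]=Q_bY_{ab}^\dagger-Y_{ab}^\dagger Q_a$ stays in $\Hom(V_a,V_b)$. Hence $[\Qmat^0,\Lmat(y)]=\sum_{ab}s_{ab}^{-1/2}[\Qmat^0,\ell_{ab}(y_{ab})]$, and $(\Lmat)^{-1}$ restores exactly $s_{ab}^{1/2}$ on each block:
\[
\|\Xi_4(\bar u,x,y)\|=n\Bigl(\sum_{\zeta_a>\zeta_b}\bigl\|[\Qmat^0(\bar u,x),Y_{ab}^\dagger]\bigr\|_\HS^2\Bigr)^{1/2}\le 2n\,\|\Qmat^0(\bar u,x)\|_\op\,\|y\|.
\]
The same holds for $\Xi_5$ with $\Umat$ in place of $\Lmat$.

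You then need $\|\Qmat^0(\bar u,x)\|_\op\le\frac{2}{\delta_*n}\|u\|\|x\|$, and your Hilbert--Schmidt claim does not deliver it. Each pair contributes two products, $s_{ab}^{-1}(U_{ab}X_{ab}^\dagger-X_{ab}^\dagger U_{ab})$, not one. From $s_{ab}\ge\delta_*n/2$ you only get $\|\Qmat^0\|_\HS\le\frac{4}{\delta_*n}\|u\|\|x\|$. Using the orthogonality of the two pieces improves this only to $\frac{2\sqrt2}{\delta_*n}\|u\|\|x\|$; for a single pair with $d_a=d_b=1$ the HS norm is exactly $\sqrt2\,|u||x|/s_{ab}$. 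Either version, inserted in the display, gives a worse constant than $4/\delta_*$.

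The paper uses the operator norm instead. The pieces $U_{ab}X_{ab}^\dagger$ and $X_{ab}^\dagger U_{ab}$ act on the different diagonal blocks $V_a$ and $V_b$. So the operator norm of their difference is the maximum of the two, at most $\|u_{ab}\|\|x_{ab}\|$. The triangle inequality over pairs and Cauchy--Schwarz then give $\frac{2}{\delta_*n}\|u\|\|x\|$, hence $4/\delta_*$.

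For the defect estimate your slip only changes the structural constant $C_{\mathsf D}$, which is harmless. For the explicitly stated constants of $\Xi_4,\Xi_5$ it matters.
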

\begin{proof}
    Below, we repeatedly use $\|[R,S]\|_\HS\leq \|RS\|_\HS+\|SR\|_\HS\leq 2\|R\|_\HS\|S\|_\HS$. 

    By using the bound on $\Lmat$ and $\Umat$ in Lemma~\ref{lem:norm_L_U_inv}, 
    \begin{align}
        \|[\Lmat(x),\Lmat(y)]\|_\HS&\leq 2\|\Lmat(x)\|_\HS\|\Lmat(y)\|_\HS\leq \frac{4}{\delta_*n}\|x\|\|y\|,\\
        \|\Qmat^\pm(\bar{u},x)\|_\HS&\leq \|\Qmat(\bar{u},x)\|_\HS\leq 2\|\Umat (\bar{u})\|_\HS\|\Lmat(x)\|_\HS\leq  \frac{4}{\delta_*n}\|u\|\|x\|,
    \end{align}
    where we used contractivity of $P_\pm$ in the first inequality in the second line. Using the bound on $(\Lmat)^{-1}$ and $(\Umat)^{-1}$ in Lemma~\ref{lem:norm_L_U_inv}, we obtain Eq.~\eqref{eq:bound_xi_1_xi_2_xi_3}. 

    We now prove $\chi_\lambda(\Qmat^0(\bar{u},x))=\braket{u,x}$, where $\chi_\lambda$ is defined in Eq.~\eqref{eq:chi_lambda_definition}. Since the block-diagonal projection of $[\Umat(\bar{u}),\Lmat(x)]$ only has contributions from the commutators belonging to the same ordered block pair, we have
    \begin{align}
        \Qmat^0(\bar{u},x)=\sum_{\zeta_a>\zeta_b}s_{ab}^{-1}[r_{ab}(u_{ab}),\ell_{ab}(x_{ab})].\label{eq:Qzero_block}
    \end{align}
    Since for $x_{ab}=\bar{X}_{ab}$ and $u_{ab}=\bar{U}_{ab}$, 
    \begin{align}
        \chi_\lambda ([r_{ab}(u_{ab}),\ell_{ab}(x_{ab})])=\chi_\lambda([U_{ab},X_{ab}^\dag])=\Tr_{V_a}(U_{ab}X_{ab}^\dag)\frac{|\lambda^{(a)}|}{d_a}-\Tr_{V_b}(X_{ab}^\dag U_{ab})\frac{|\lambda^{(b)}|}{d_b}=\braket{u_{ab},x_{ab}}s_{ab}(\lambda),
    \end{align}
    we get $\chi_\lambda(\Qmat^0(\bar{u},x))=\braket{u,x}$. Taking adjoint of Eq.~\eqref{eq:Qzero_block}, we also have $\Qmat^0(\bar{u},x)^\dag=\Qmat^0(\bar{x},u)$. From Eq.~\eqref{eq:Qzero_block},
    \begin{align}
        \| \Qmat^0(\bar{u},x)\|_\op&\leq \sum_{\zeta_a>\zeta_b}s_{ab}^{-1}\|[r_{ab}(u_{ab}),\ell_{ab}(x_{ab})]\|_\op\\
        &\leq \frac{2}{\delta_*n}\sum_{\zeta_a>\zeta_b}\max\{\|U_{ab}X_{ab}^\dag\|_\op,\|X_{ab}^\dag U_{ab}\|_\op\}\\
        &\leq \frac{2}{\delta_*n}\sum_{\zeta_a>\zeta_b}\|U_{ab}\|_\op\|X_{ab}\|_\op&& (\|U_{ab}X_{ab}^\dag\|_\op\leq \|U_{ab}\|_\op\|X_{ab}\|_\op,\,\|X_{ab}^\dag U_{ab}\|_\op\leq \|U_{ab}\|_\op\|X_{ab}\|_\op)\\
        &\leq  \frac{2}{\delta_*n}\sum_{\zeta_a>\zeta_b}\|u_{ab}\|\|x_{ab}\|&&(\|U_{ab}\|_\op\leq \|u_{ab}\|,\, \|X_{ab}\|_\op\leq \|x_{ab}\|)\\
        &\leq \frac{2}{\delta_*n} \|u\|\|x\|&& (\text{Cauchy--Schwarz inequality}).
    \end{align}
    From Theorem~\ref{thm:l_estimate}, we therefore get
    \begin{align}
         \|\Dmat(\bar{u},x)|_{F_kH_\lambda}\|&=\left\|\left(\dd\pi_\lambda(\Qmat^0(\bar{u},x))-\chi_\lambda(\Qmat^0(\bar{u},x))I\right)\big|_{F_kH_\lambda}\right\|\\
         &\leq C_{\mathrm{tl}}\| \Qmat^0(\bar{u},x)\|_\op(n^\alpha+k)\\
         &\leq C_{\mathsf{D}}\|u\|\|x\|\varepsilon_0(k)&&(C_{\mathsf{D}}\coloneqq 2C_{\mathrm{tl}}/\delta_*).
    \end{align}

    Finally, we bound $\Xi_4$ and $\Xi_5$. Write $y_{ab}=\bar{Y_{ab}}$. Because $\Qmat^0(\bar{u},x)$ is block diagonal, the normalization $s_{ab}^{-1/2}$ is canceled when $(\Lmat)^{-1}$ is applied. Thus,
    \begin{align}
        \|\Xi_4(\bar{u},x,y)\|&=n\left(\sum_{\zeta_a>\zeta_b}\|[\Qmat^0(\bar{u},x),Y_{ab}^\dag]\|_\HS^2\right)^{1/2}\\
        &\leq 2n\|\Qmat^0(\bar{u},x)\|_\op\left(\sum_{\zeta_a>\zeta_b}\|Y_{ab}\|_\HS^2\right)^{1/2}\\
        &\leq  2n\frac{2}{\delta_*n}\|u\|\|x\|\|y\|=\frac{4}{\delta_*}\|u\|\|x\|\|y\|.
    \end{align}
    The same argument, with $\Umat$ in place of $\Lmat$, gives the stated bound for $\Xi_5$. 
    
\end{proof}

In particular, for $C_{\Xi}\coloneqq 8\sqrt{2}/\delta_*$, we obtain the following bounds
\begin{align}
        &\|\Xi_1(x,y)\|\leq C_\Xi\|x\|\|y\|,\quad \|\Xi_2(\bar{u},x)\|+\|\Xi_3(\bar{u},x)\|\leq C_\Xi \|u\|\|x\|,\\
        &\|\Xi_4(\bar{u},x,y)\|\leq C_{\Xi}\|u\|\|x\|\|y\|,\quad \|\Xi_5(\bar{u},x,\bar{y})\|\leq C_{\Xi}\|u\|\|x\|\|y\|
\end{align}
which we shall use in the next section.

% \clearpage
\subsection{Construction of the Fock-Space Embedding}\label{sec:Fock_embedding}
Throughout this subsection, $\lambda\in\Lambda_n$ is fixed. We denote by $N$ a positive integer describing a cutoff, which will later be chosen as $N=L_n\coloneqq \floor{n^\eta}$ with $0<\eta<1/6$. 

We write $\mathcal{F}_{\leq N}\coloneqq \bigoplus_{k=0}^N\Sym^k\mathcal{K}_\hor$ for the truncated Fock space. We fix a block-adapted orthonormal basis $\{e_j\}_{j=1}^D$ with $D\coloneqq \dim \mathcal{K}_\hor$ such that for all $j$, $e_j\in\mathcal{K}_{a_jb_j}$ for some pair $(a_j,b_j)$ with $\zeta_{a_j}>\zeta_{b_j}$. We denote by $a^\dag (x)$ and $a(x)$ the bosonic creation and annihilation operators on $\Gamma_{\mathrm{s}}(\mathcal{K}_\hor)$. For the fixed basis $\{e_j\}_{j=1}^D$, we introduce shorthand notations:
\begin{align}
    a_j^\dag \coloneqq a^\dag (e_j),\quad a_j\coloneqq a(e_j),\quad \crej\coloneqq \cre(e_j),\quad \annj\coloneqq \ann(e_j).
\end{align}

\subsubsection{Multiparticle excitations generated by ladder operators}

\begin{definition}[Multiparticle excitation map]
    For $k\geq 0$, we define a multiparticle excitation map $\word_{\lambda,k}:\mathcal{K}_\hor^{\otimes k}\otimes E_\lambda\to H_\lambda$ by linear extension of
    \begin{align}
        \word_{\lambda,k}(x_1\otimes\cdots\otimes x_k\otimes \xi)\coloneqq \frac{1}{\sqrt{k!}}\cre(x_1)\cdots \cre(x_k)\xi,\qquad \word_{\lambda,0}(\xi)\coloneqq\xi.
    \end{align}
    We further define $\symword_{\lambda,k}\coloneqq \word_{\lambda,k}\big|_{\Sym^k\mathcal{K}_\hor\otimes E_\lambda}$, 
    and $\symword_{\lambda, \leq N}:\mathcal{F}_{\leq N}\otimes E_\lambda\to H_\lambda$ by
    \begin{align}
        \symword_{\lambda,\leq N}\left(\bigoplus_{k=0}^Nu_k\right)\coloneqq \sum_{k=0}^N\symword_{\lambda,k}u_k,
    \end{align}
    and the Gram operator $\gram_{\lambda,N}:\mathcal{F}_{\leq N}\otimes E_\lambda\to \mathcal{F}_{\leq N}\otimes E_\lambda$ by $\gram_{\lambda,N}\coloneqq  \symword_{\lambda,\leq N}^\dag  \symword_{\lambda,\leq N}$.

\end{definition}
Because the finite-dimensional ladder operators do not commute exactly, the excitation map is first defined on the full tensor product and then restricted to the symmetric sector.

We first prove the $H$-covariance of these maps.
\begin{lemma}\label{lemma:h_equivalence_symword}
    For $h\in H$, let $u_h:\mathcal{K}_\hor\to\mathcal{K}_\hor$, $u_hx\coloneqq h\cdot x$ and define
    \begin{align}
        U_{\lambda,h}^{(\leq N)}\coloneqq \bigoplus_{k=0}^N\left(u_h^{\otimes k}|_{\Sym^k\mathcal{K}_\hor}\otimes \pi_\lambda(h)|_{E_\lambda}\right)
    \end{align}
    on $\mathcal{F}_{\leq N}\otimes E_\lambda$. Then, for any $h\in H$, 
    \begin{align}
        \pi_\lambda(h)\word_{\lambda,k}=\word_{\lambda,k}(u_h^{\otimes k}\otimes \pi_\lambda(h)|_{E_\lambda})\quad (0\leq k\leq N),\quad \pi_\lambda(h)\symword_{\lambda,\leq N}=\symword_{\lambda,\leq N}U_{\lambda,h}^{(\leq N)},\quad [\gram_{\lambda,N},U_{\lambda,h}^{(\leq N)}]=0.
    \end{align}
\end{lemma}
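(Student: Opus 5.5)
The plan is to derive all three identities from Lemma~\ref{lem:properties_cre_ann}(iv), which gives the covariance $\pi_\lambda(h)\cre(x)\pi_\lambda(h)^\dag=\cre(h\cdot x)$ for $h\in H$, together with the fact that $\pi_\lambda(h)$ is unitary and leaves $E_\lambda$ invariant. The latter follows from Eq.~\eqref{eq:E_lambda_highest_weight_vector_and_stabilizer}, or from Lemma~\ref{lem:stability_filtration} with $k=0$. In particular $\pi_\lambda(h)|_{E_\lambda}$ is a well-defined unitary on $E_\lambda$.

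For the first identity, I would check it on product vectors $x_1\otimes\cdots\otimes x_k\otimes\xi$ and extend by linearity. Inserting $\pi_\lambda(h)^\dag\pi_\lambda(h)=I$ between successive factors gives
\begin{align}
\pi_\lambda(h)\cre(x_1)\cdots\cre(x_k)\xi=\cre(h\cdot x_1)\cdots\cre(h\cdot x_k)\,\pi_\lambda(h)\xi .
\end{align}
The right-hand side is $\sqrt{k!}\,\word_{\lambda,k}(u_h x_1\otimes\cdots\otimes u_hx_k\otimes\pi_\lambda(h)\xi)$. This step needs $u_h$ to be complex-linear on $\mathcal{K}_\hor$. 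That holds because $\alpha\cdot\bar X=\overline{\bar\alpha X}$ is mapped to $\overline{h_a\bar\alpha X h_b^{-1}}=\alpha\cdot(h\cdot\bar X)$, so $u_h$ commutes with scalar multiplication. The case $k=0$ is immediate.

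For the second identity, I would note that $u_h^{\otimes k}$ commutes with permutations of the tensor factors, so it preserves $\Sym^k\mathcal{K}_\hor$. Restricting the first identity to $\Sym^k\mathcal{K}_\hor\otimes E_\lambda$ and summing over $0\le k\le N$ then yields $\pi_\lambda(h)\symword_{\lambda,\leq N}=\symword_{\lambda,\leq N}U_{\lambda,h}^{(\leq N)}$.

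For the third identity, the key point is that $U^{(\leq N)}_{\lambda,h}$ is unitary. Indeed, $u_h$ is unitary on $\mathcal{K}_\hor$ because the $H$-action preserves the inner product~\eqref{eq:inner_prod_K_ab}, and $\pi_\lambda(h)|_{E_\lambda}$ is unitary as noted above. Taking adjoints of the second identity gives $\symword^\dag_{\lambda,\leq N}\pi_\lambda(h)^\dag=U^{(\leq N)\dag}_{\lambda,h}\symword^\dag_{\lambda,\leq N}$. Combining the two relations,
\begin{align}
U^{(\leq N)\dag}_{\lambda,h}\gram_{\lambda,N}U^{(\leq N)}_{\lambda,h}=\symword^\dag_{\lambda,\leq N}\pi_\lambda(h)^\dag\pi_\lambda(h)\symword_{\lambda,\leq N}=\gram_{\lambda,N},
\end{align}
which is equivalent to the stated commutation relation.

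The only point requiring care is the linearity bookkeeping on the conjugate space, namely checking that $u_h$ is complex-linear rather than antilinear. Once that is settled, the rest is routine.
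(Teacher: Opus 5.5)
Your proposal is correct and follows essentially the same route as the paper. The paper likewise applies the covariance of $\cre$ on decomposable tensors, restricts to the symmetric sectors because $u_h^{\otimes k}$ commutes with permutations, and conjugates the Gram operator by $U_{\lambda,h}^{(\leq N)}$. Your explicit checks that $u_h$ is complex-linear and that $U_{\lambda,h}^{(\leq N)}$ is unitary fill in points the paper uses only implicitly, the latter in its ``or equivalently'' step.
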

\begin{proof}
    For a decomposable tensor, the covariance of $\cre$ gives
    \begin{align}
        \pi_\lambda(h)\word_{\lambda,k}(x_1\otimes\cdots\otimes x_k\otimes \xi)=\frac{1}{\sqrt{k!}}\cre(h\cdot x_1)\cdots \cre(h\cdot x_k)\pi_\lambda(h)\xi=\word_{\lambda,k}(h\cdot x_1\otimes \cdots \otimes h\cdot x_k\otimes \pi_\lambda(h)\xi).
    \end{align}
    Thus, by linearity, we obtain $\pi_\lambda(h)\word_{\lambda,k}=\word_{\lambda,k}(u_h^{\otimes k}\otimes \pi_\lambda(h)|_{E_\lambda})$ for $0\leq k\leq N$. Since $u_h^{\otimes k}$ commutes with the permutation action on tensor factors in $\mathcal{K}_\hor^{\otimes k}$, the restriction of this identity gives $\pi_\lambda(h)\symword_{\lambda,k}=\symword_{\lambda,k}(u_h^{\otimes k}|_{\Sym^k\mathcal{K}_\hor}\otimes \pi_\lambda(h)|_{E_\lambda})$ and hence $\pi_\lambda(h)\symword_{\lambda,\leq N}=\symword_{\lambda,\leq N}U_{\lambda,h}^{(\leq N)}$. Thus, 
    \begin{align}
        U_{\lambda,h}^{(\leq N)\dag }\symword_{\lambda,\leq N}^\dag \symword_{\lambda,\leq N}U_{\lambda,h}^{(\leq N)} =\symword_{\lambda,\leq N}^\dag \pi_\lambda(h)^\dag \pi_\lambda(h)\symword_{\lambda,\leq N} =\symword_{\lambda,\leq N}^\dag \symword_{\lambda,\leq N},
    \end{align}
    or equivalently, $[\gram_{\lambda,N},U_{\lambda,h}^{(\leq N)}]=0$. 
\end{proof}

\subsubsection{Approximate intertwining for creation operators}
%\subsubsection{Approximate exchange symmetry of the excitations}
From this subsubsection, we will evaluate errors originating by deviation of commutators between $\cre$ and $\ann$ from the canonical commutation relation. We first establish two facts on tensor norms that will be repeatedly used.  

\begin{lemma}[Insertion into tensor slots]\label{lem:insertion_tensor_slot}
    Let $T:U_1\otimes\cdots\otimes U_r\to V$ be a bounded linear map between finite-dimensional Hilbert spaces. For arbitrary nonzero Hilbert spaces $\mathcal{H}_L,\mathcal{H}_R$, the map $I_{\mathcal{H}_L}\otimes T\otimes I_{\mathcal{H}_R}$ has operator norm $\|T\|$. The same conclusion also holds if tensor slots are permuted before or after applying $T$.  
\end{lemma}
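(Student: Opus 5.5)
The plan is to reduce everything to the elementary fact that tensoring a bounded operator with an identity does not change its operator norm, and that permuting tensor slots is implemented by a unitary. I will write the map $I_{\mathcal{H}_L}\otimes T\otimes I_{\mathcal{H}_R}$ as $(I_{\mathcal{H}_L}\otimes T\otimes I_{\mathcal{H}_R})=I_{\mathcal{H}_L}\otimes (T\otimes I_{\mathcal{H}_R})$, so it suffices to prove $\|T\otimes I_{\mathcal{K}}\|=\|T\|$ and $\|I_{\mathcal{K}}\otimes T\|=\|T\|$ for an arbitrary nonzero Hilbert space $\mathcal{K}$. Here the domain $U_1\otimes\cdots\otimes U_r$ is simply treated as one finite-dimensional Hilbert space $U$.

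For the upper bound, I will use the singular value decomposition $T=\sum_i s_i\ket{f_i}\bra{g_i}$ with orthonormal $\{g_i\}\subset U$, $\{f_i\}\subset V$, and $s_i\le\|T\|$. Any vector $\Psi\in U\otimes\mathcal{K}$ can be expanded as $\Psi=\sum_i g_i\otimes k_i+\Psi^\perp$, where $\Psi^\perp$ lies in $(\ker T)\otimes\mathcal{K}$. Then
\begin{align}
(T\otimes I)\Psi=\sum_i s_i f_i\otimes k_i .
\end{align}
By orthonormality of the $f_i$, this gives $\|(T\otimes I)\Psi\|^2=\sum_i s_i^2\|k_i\|^2\le\|T\|^2\|\Psi\|^2$. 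This avoids any issue if $\mathcal{K}$ is infinite-dimensional, since only orthogonality in the $U,V$ factors is used.

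For the lower bound, I will pick a unit vector $\kappa\in\mathcal{K}$, which exists because $\mathcal{K}\neq\{0\}$, and a unit $u\in U$ with $\|Tu\|=\|T\|$. Then $\|(T\otimes I)(u\otimes\kappa)\|=\|Tu\|\|\kappa\|=\|T\|$. The left-identity case is symmetric.

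For the permuted version, slot permutations $P_\sigma$ on a tensor product of Hilbert spaces are unitary. Hence $\|P_\tau(I\otimes T\otimes I)P_\sigma\|=\|I\otimes T\otimes I\|$ by unitary invariance of the operator norm, and the first part applies.

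There is no real obstacle. The only point needing care is not to assume $\mathcal{H}_L,\mathcal{H}_R$ are finite-dimensional. The SVD-based computation above handles this, since the singular decomposition lives entirely in the finite-dimensional factors.
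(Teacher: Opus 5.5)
Your core estimate is sound, but it is organized differently from the paper's. The paper never combines a left case and a right case. It first applies a slot-permutation unitary so that $T$ acts on one block and the identity acts on $\mathcal{H}_L\otimes\mathcal{H}_R$. It then expands $v=\sum_\mu f_\mu\otimes u_\mu$ in an orthonormal basis $\{f_\mu\}$ of the identity factor and applies $\|Tu_\mu\|\le\|T\|\|u_\mu\|$ termwise, giving $\|(I\otimes T\otimes I)v\|^2=\sum_\mu\|Tu_\mu\|^2\le\|T\|^2\|v\|^2$. The lower bound is the same product-vector argument as yours.

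You instead put the orthogonal decomposition on the $T$ side, via the singular value decomposition and orthonormality of the singular vectors. Both routes work. The paper's upper bound needs no spectral information about $T$ and holds for any bounded $T$ between arbitrary Hilbert spaces; finite dimensionality is used only to find a norm-attaining $u$. Yours relies on $T$ actually having an SVD.

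That difference matters in your reduction step. You write $I_{\mathcal{H}_L}\otimes T\otimes I_{\mathcal{H}_R}=I_{\mathcal{H}_L}\otimes(T\otimes I_{\mathcal{H}_R})$ and invoke the left-identity case. This needs that case for $T\otimes I_{\mathcal{H}_R}$, whose domain $U\otimes\mathcal{H}_R$ is infinite-dimensional whenever $\mathcal{H}_R$ is. Such an operator need not have an SVD at all; take $T=I_U$. So in exactly the infinite-dimensional situation you single out as needing care, your SVD argument does not cover the second step.

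The fix is immediate, and it is what the paper does. Use the permutation unitaries you already invoke at the end to write $I_{\mathcal{H}_L}\otimes T\otimes I_{\mathcal{H}_R}=P_\tau\,(T\otimes I_{\mathcal{H}_L\otimes\mathcal{H}_R})\,P_\sigma$. Then apply your one-sided result once, with $\mathcal{K}=\mathcal{H}_L\otimes\mathcal{H}_R\neq\{0\}$. Equivalently, run your SVD expansion directly with coefficients $k_i=(I_{\mathcal{H}_L}\otimes\bra{g_i}\otimes I_{\mathcal{H}_R})\Psi\in\mathcal{H}_L\otimes\mathcal{H}_R$.

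In every application in the paper the spaces $\mathcal{H}_L,\mathcal{H}_R$ are finite-dimensional, so nothing downstream is affected.
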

\begin{proof}
    Let $\{f_\mu\}$ be an orthonormal basis of $\mathcal{H}_L\otimes \mathcal{H}_R$. After a unitary permutation of the slots, write a vector as $v=\sum_\mu f_\mu \otimes u_\mu$ with $u_\mu\in U_1\otimes\cdots\otimes U_r$. Then, 
    \begin{align}
        \|(I_{\mathcal{H}_L}\otimes T\otimes I_{\mathcal{H}_R})v\|^2=\sum_\mu \|Tu_\mu\|^2\leq \|T\|^2\sum_\mu\|u_\mu\|^2=\|T\|^2\|v\|^2.\label{eq:definition_symword_cutoff}
    \end{align}
    For a unit vector $u\in U_1\otimes\cdots\otimes U_r$ such that $\|Tu\|=\|T\|$, $v=f_L\otimes u\otimes f_R$ with unit vectors $f_L,f_R$ attains the equality. Therefore, $\|I_{\mathcal{H}_L}\otimes T\otimes I_{\mathcal{H}_R}\|=\|T\|$. The norm is invariant under slot permutations since they are unitary.
\end{proof}

\begin{lemma}[Tensorization of multilinear estimates]\label{lem:tensorization_multilin}
    Let $U_1,\ldots,U_r$ be finite-dimensional Hilbert spaces and let $E,H$ be Hilbert spaces. Suppose that $T:U_1\times \cdots\times U_r\times E\to H$ is multilinear and satisfies
    \begin{align}
        \|T(u_1,\ldots,u_r,\xi)\|\leq C_T\left(\prod_{i=1}^r\|u_i\|\right)\|\xi\|
    \end{align}
    with a constant $C_T$. Then its linear extension $\widetilde{T}:U_1\otimes \cdots\otimes U_r\otimes E\to H$ satisfies $ \left\|\widetilde{T}\right\|\leq C_T\sqrt{\prod_{i=1}^r \dim U_i}$. 
\end{lemma}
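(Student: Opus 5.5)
The plan is to expand everything in orthonormal bases and reduce the claim to a single application of the Cauchy--Schwarz inequality. Fix orthonormal bases $\{f^{(i)}_{j}\}_{j=1}^{\dim U_i}$ of each $U_i$. Every vector $v\in U_1\otimes\cdots\otimes U_r\otimes E$ can be written uniquely as
\begin{align}
    v=\sum_{\mathbf{j}} f^{(1)}_{j_1}\otimes\cdots\otimes f^{(r)}_{j_r}\otimes \xi_{\mathbf{j}},\qquad \xi_{\mathbf{j}}\in E,
\end{align}
where $\mathbf{j}=(j_1,\ldots,j_r)$ runs over a set of cardinality $M\coloneqq\prod_{i=1}^r\dim U_i$. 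Orthonormality of the product basis gives $\|v\|^2=\sum_{\mathbf{j}}\|\xi_{\mathbf{j}}\|^2$. This holds even if $E$ is infinite-dimensional, because only the $U_i$ factors are expanded.

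By definition of the linear extension,
\begin{align}
    \widetilde{T}v=\sum_{\mathbf{j}}T(f^{(1)}_{j_1},\ldots,f^{(r)}_{j_r},\xi_{\mathbf{j}}).
\end{align}
This is well defined: the multilinear map is linear in the last slot, so it induces a linear map on the algebraic tensor product, and the finite-dimensionality of the $U_i$ makes the decomposition above exhaustive. The triangle inequality and the assumed bound, with $\|f^{(i)}_{j_i}\|=1$, give
\begin{align}
    \|\widetilde{T}v\|\leq C_T\sum_{\mathbf{j}}\|\xi_{\mathbf{j}}\|\leq C_T\sqrt{M}\Bigl(\sum_{\mathbf{j}}\|\xi_{\mathbf{j}}\|^2\Bigr)^{1/2}=C_T\sqrt{M}\,\|v\|,
\end{align}
where the middle step is Cauchy--Schwarz over the $M$ indices. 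This is exactly the claimed bound $\|\widetilde{T}\|\leq C_T\sqrt{\prod_i\dim U_i}$.

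There is no real obstacle here. The only point needing care is that $E$ may be infinite-dimensional (for example a Fock-type space). The expansion above avoids needing a basis of $E$, so the argument goes through unchanged, and boundedness of $\widetilde{T}$ on the (completed) tensor product follows from the same estimate by density if required.
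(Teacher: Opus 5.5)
Your proof is correct and follows essentially the same route as the paper: expand $v=\sum_{\mathbf{j}} f_{\mathbf{j}}\otimes\xi_{\mathbf{j}}$ in an orthonormal product basis of $U_1\otimes\cdots\otimes U_r$, bound each term by $C_T\|\xi_{\mathbf{j}}\|$, and apply Cauchy--Schwarz over the $\prod_i\dim U_i$ indices. Your remark that only the $U_i$ factors need to be expanded, so $E$ may be infinite-dimensional, is a correct clarification of the stated generality.
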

\begin{proof}
    Take an orthonormal basis $\{e_j^{(i)}\}_j$ of $U_i$. Then, $f_\mu\coloneqq e_{\mu_1}^{(1)}\otimes \cdots\otimes e_{\mu_r}^{(r)}$ with $\mu=(\mu_1,\ldots,\mu_r)$ forms an orthonormal basis of $U_1\otimes \cdots \otimes U_r$. Note that the number of such labels is $ \prod_{i=1}^r\dim U_i$. Since $\widetilde{T}(f_\mu\otimes \xi_\mu)=T( e_{\mu_1}^{(1)},\ldots, e_{\mu_r}^{(r)},\xi_\mu)$,
    \begin{align}
        \left\|\widetilde{T}(f_\mu\otimes \xi_\mu)\right\|=\left\|T( e_{\mu_1}^{(1)},\ldots, e_{\mu_r}^{(r)},\xi_\mu)\right\|\leq C_T\|\xi_\mu\|,
    \end{align}
    where we have used $\|e_{\mu_i}^{(i)}\|=1$. Thus, for $v=\sum_\mu f_\mu \xi_\mu$, 
    \begin{align}
        \left\|\widetilde{T}v\right\|\leq \sum_\mu \left\|\widetilde{T}(f_\mu\otimes \xi_\mu)\right\|\leq C_T\sum_\mu\|\xi_\mu\|\leq C_T\sqrt{\prod_{i=1}^r \dim U_i}\|v\|,
    \end{align}
    where we used the Cauchy--Schwarz inequality in the last inequality. Thus, $ \left\|\widetilde{T}\right\|\leq C_T\sqrt{\prod_{i=1}^r \dim U_i}$. 
\end{proof}

Let $D\coloneqq \dim\mathcal{K}_\hor$. The bilinear maps $\Xi_1,\Xi_2,\Xi_3$ induce tensor-slot insertions of norm at most $DC_\Xi$, and the trilinear maps $\Xi_4$, $\Xi_5$ induce tensor-slot insertions of norm at most $D^{3/2}C_\Xi$. For later use, define a single structural constant that bounds all insertion norms by
\begin{align}
    \cins\coloneqq \max\{DC_\Xi, D^{3/2}C_\Xi,\sqrt{D},D\}\label{eq:definition_C_ins}.
\end{align}

By using these lemmas, we now show that the multiparticle excitations are approximately symmetric under permutations of the creation operators and derive the corresponding approximate intertwining of creation operators.

For a given fixed cutoff $N\geq 1$, we define
\begin{align}
     \Delta_N\coloneqq \left\|\symword_{\lambda,\leq N}^\dag \symword_{\lambda,\leq N}-I\right\|.
\end{align}
Also, for $m\geq 0$, we define $q_m\coloneqq \max_{0\leq k \leq m}\|\word_{\lambda,k}\|$. 
These quantities are finite since the spaces involved are finite-dimensional. We first prove the following:

\begin{lemma}\label{lem:norm_adjacent}
    Let $U_{\tau_i}$ be the unitary on $\mathcal{K}_{\hor}^{\otimes k}$ that exchanges $i$-th and $(i+1)$-st tensor slots.
    For $1\leq i\leq k-1$ and $1\leq k \leq N$,
    \begin{align}
        \|\word_{\lambda,k}-\word_{\lambda,k}(U_{\tau_i}\otimes I_{E_\lambda})\|\leq \cins n^{-1/2}k^{-1/2}q_{k-1}. 
    \end{align}
\end{lemma}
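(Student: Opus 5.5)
The plan is to evaluate the difference on decomposable tensors and recognise it as a single commutator. For $x_1\otimes\cdots\otimes x_k\otimes\xi$, swapping slots $i$ and $i+1$ gives
\begin{align}
    \left(\word_{\lambda,k}-\word_{\lambda,k}(U_{\tau_i}\otimes I_{E_\lambda})\right)(x_1\otimes\cdots\otimes x_k\otimes\xi)
    =\frac{1}{\sqrt{k!}}\cre(x_1)\cdots\cre(x_{i-1})\,[\cre(x_i),\cre(x_{i+1})]\,\cre(x_{i+2})\cdots\cre(x_k)\xi.
\end{align}
By Theorem~\ref{thm:commutators_multilin_maps}, $[\cre(x_i),\cre(x_{i+1})]=n^{-1/2}\cre(\Xi_1(x_i,x_{i+1}))$. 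The resulting vector is a product of only $k-1$ creation operators applied to $\xi$, with the new argument $\Xi_1(x_i,x_{i+1})$ in slot $i$. Using $\sqrt{k!}=\sqrt{k}\sqrt{(k-1)!}$, it equals
\begin{align}
    n^{-1/2}k^{-1/2}\,\word_{\lambda,k-1}\left(x_1\otimes\cdots\otimes x_{i-1}\otimes\Xi_1(x_i,x_{i+1})\otimes x_{i+2}\otimes\cdots\otimes x_k\otimes\xi\right).
\end{align}

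Extending by linearity, the difference operator factorises as
\begin{align}
    \word_{\lambda,k}-\word_{\lambda,k}(U_{\tau_i}\otimes I_{E_\lambda})=n^{-1/2}k^{-1/2}\,\word_{\lambda,k-1}\circ\left(I^{\otimes(i-1)}\otimes\widetilde{\Xi}_1\otimes I^{\otimes(k-i-1)}\otimes I_{E_\lambda}\right),
\end{align}
where $\widetilde{\Xi}_1:\mathcal{K}_\hor\otimes\mathcal{K}_\hor\to\mathcal{K}_\hor$ is the linear extension of the bilinear map $\Xi_1$. The identity holds on all of $\mathcal{K}_\hor^{\otimes k}\otimes E_\lambda$ because both sides are linear and agree on decomposable tensors, which span the space.

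The norm bound then follows from the earlier lemmas. Lemma~\ref{lem:tensorization_multilin}, applied with $r=2$ and $E$ trivial, uses $\|\Xi_1(x,y)\|\leq C_\Xi\|x\|\|y\|$ to give $\|\widetilde{\Xi}_1\|\leq DC_\Xi\leq\cins$. Lemma~\ref{lem:insertion_tensor_slot} shows that inserting it into slots $i,i+1$ leaves the norm unchanged. Finally, $\|\word_{\lambda,k-1}\|\leq q_{k-1}$. Multiplying the three factors gives $\cins n^{-1/2}k^{-1/2}q_{k-1}$.

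The only step needing care is the bookkeeping in the first paragraph: the factorial normalisation must produce exactly the $k^{-1/2}$ factor, and the extension by linearity must be justified by the spanning argument. I expect no real obstacle, since $\cins$ was defined in Eq.~\eqref{eq:definition_C_ins} precisely to absorb the factor $DC_\Xi$.
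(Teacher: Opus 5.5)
Your proposal is correct and follows the paper's proof essentially step for step. It uses the same commutator identity $[\cre(x_i),\cre(x_{i+1})]=n^{-1/2}\cre(\Xi_1(x_i,x_{i+1}))$, the same factorization $n^{-1/2}k^{-1/2}\word_{\lambda,k-1}\circ(I\otimes\widetilde{\Xi}_1\otimes I)$, and the same norm bounds from Lemmas~\ref{lem:insertion_tensor_slot} and~\ref{lem:tensorization_multilin}, which give $\|\widetilde{\Xi}_1\|\leq DC_\Xi\leq\cins$.
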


\begin{proof}
    On a decomposable tensor, 
    \begin{align}
        \left(\word_{\lambda,k}-\word_{\lambda,k}(U_{\tau_i}\otimes I_{E_\lambda})\right)(x_1\otimes \cdots \otimes x_k\otimes \xi)&=\frac{1}{\sqrt{k!}}\cre(x_1)\cdots[\cre(x_i),\cre(x_{i+1})]\cdots \cre(x_k)\xi\\
        &=\frac{n^{-1/2}}{\sqrt{k!}}\cre(x_1)\cdots \cre(x_{i-1})\cre(\Xi_1(x_i,x_{i+1}))\cre(x_{i+2})\cdots \cre(x_k)\xi\\
        &=\frac{n^{-1/2}}{\sqrt{k}}\word_{\lambda,k-1}\circ f_i(x_1\otimes \cdots \otimes x_k\otimes \xi),
    \end{align}
    where $f_i:\mathcal{K}_\hor^{\otimes k}\otimes E_\lambda \to \mathcal{K}_\hor^{\otimes (k-1)}\otimes E_\lambda$ is the map replacing the adjacent pair $x_i\otimes x_{i+1}$ by $\Xi_1(x_i\otimes x_{i+1})$, i.e., 
    \begin{align}
        f_i(x_1\otimes \cdots \otimes x_k\otimes \xi)\coloneqq x_1\otimes \cdots \otimes x_{i-1}\otimes \Xi_1(x_i,x_{i+1})\otimes x_{i+2}\otimes \cdots \otimes x_k\otimes\xi.
    \end{align}
    Since $\Xi_1:\mathcal{K}_\hor\times \mathcal{K}_\hor\to \mathcal{K}_\hor$ is a bilinear map, there is a unique linear extension $\tilde{\Xi}_1:\mathcal{K}_\hor\otimes \mathcal{K}_\hor\to \mathcal{K}_\hor$, satisfying $\Xi_1(x,y)=\tilde{\Xi}_1(x\otimes y)$. Since $f_i=I_{\mathcal{K}_\hor^{\otimes i-1}}\otimes\tilde{\Xi}_1\otimes I_{\mathcal{K}_\hor^{\otimes k-i-1}\otimes E_\lambda}$, we get
    \begin{align}
        \|f_i\|\overset{\text{Lemma}~\ref{lem:insertion_tensor_slot}}{=}\|\tilde{\Xi}_1\|\overset{\text{Lemma}~\ref{lem:tensorization_multilin}}{\leq} DC_\Xi \overset{\text{Eq.}~\eqref{eq:definition_C_ins}}{\leq }\cins.
    \end{align}
    Therefore, we obtain $\| \left(\word_{\lambda,k}-\word_{\lambda,k}(U_{\tau_i}\otimes I_{E_\lambda})\right)\|\leq n^{-1/2}k^{-1/2}\|\word_{\lambda,k-1}\|\|f_i\|\leq \cins n^{-1/2}k^{-1/2}q_{k-1}$. 
\end{proof}

As an immediate consequence, we obtain the following bound regarding symmetrization:
\begin{lemma}\label{lem:sym_word_bound}
    Let $P_\sym^{(k)}\coloneqq \frac{1}{k!}\sum_{\sigma\in S_k}U_\sigma$. For $1\leq k \leq N$, 
    \begin{align}
        \|\word_{\lambda,k}-\word_{\lambda,k}(P_{\sym}^{(k)}\otimes I_{E_\lambda})\|\leq  \frac{\cins}{2}n^{-1/2}k^{3/2}q_{k-1}. 
    \end{align}
\end{lemma}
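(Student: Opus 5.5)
We plan to reduce the symmetrization estimate to the adjacent-transposition bound of Lemma~\ref{lem:norm_adjacent}. Since $\word_{\lambda,k}(P_{\sym}^{(k)}\otimes I_{E_\lambda})=\frac{1}{k!}\sum_{\sigma\in S_k}\word_{\lambda,k}(U_\sigma\otimes I_{E_\lambda})$, we can write
\begin{align}
    \word_{\lambda,k}-\word_{\lambda,k}(P_{\sym}^{(k)}\otimes I_{E_\lambda})=\frac{1}{k!}\sum_{\sigma\in S_k}\left(\word_{\lambda,k}-\word_{\lambda,k}(U_\sigma\otimes I_{E_\lambda})\right).
\end{align}
The triangle inequality then bounds the left-hand side by the average over $\sigma$ of $\|\word_{\lambda,k}-\word_{\lambda,k}(U_\sigma\otimes I_{E_\lambda})\|$.

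For a fixed $\sigma$, we write $\sigma$ as a product of $\mathrm{inv}(\sigma)$ adjacent transpositions, where $\mathrm{inv}(\sigma)$ is the number of inversions: $U_\sigma=U_{\tau_{i_1}}\cdots U_{\tau_{i_m}}$ with $m=\mathrm{inv}(\sigma)$. Telescoping along the partial products $U^{(j)}\coloneqq U_{\tau_{i_1}}\cdots U_{\tau_{i_j}}$ gives
\begin{align}
    \word_{\lambda,k}-\word_{\lambda,k}(U_\sigma\otimes I)=\sum_{j=1}^{m}\left(\word_{\lambda,k}-\word_{\lambda,k}(U_{\tau_{i_j}}\otimes I)\right)(U^{(j-1)}\otimes I).
\end{align}
Each summand is one adjacent-swap difference composed with a unitary on the right, so by Lemma~\ref{lem:norm_adjacent} its norm is at most $\cins n^{-1/2}k^{-1/2}q_{k-1}$. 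Hence the difference for $\sigma$ is at most $\mathrm{inv}(\sigma)\,\cins n^{-1/2}k^{-1/2}q_{k-1}$.

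It remains to average the inversion count. We use the standard fact that the mean number of inversions over $S_k$ is $\binom{k}{2}/2=k(k-1)/4$, since each pair is inverted in exactly half of the permutations. This gives the bound $\frac{k(k-1)}{4}\cins n^{-1/2}k^{-1/2}q_{k-1}\le \frac{\cins}{4}n^{-1/2}k^{3/2}q_{k-1}$, which is within the claimed $\frac{\cins}{2}n^{-1/2}k^{3/2}q_{k-1}$.

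The only point needing care is the telescoping order. The unitary $U^{(j-1)}$ must sit on the right of each difference, where it does not affect the norm; placing it on the left, inside $\word_{\lambda,k}$, would not be covered by the adjacent bound. There is no genuine obstacle beyond this bookkeeping.
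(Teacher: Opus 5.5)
Your route is the paper's: factor each $\sigma$ into adjacent transpositions, telescope so that every summand is an adjacent-swap difference followed by a unitary on the right, apply Lemma~\ref{lem:norm_adjacent}, and average over $S_k$. The one difference is that the paper bounds every $\sigma$ by the worst-case length $k(k-1)/2$. You instead average $\mathrm{inv}(\sigma)$ to $k(k-1)/4$, which gives the sharper constant $\cins/4$; either one suffices for the stated bound.

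There is a slip in your displayed telescoping identity, at exactly the point you flagged as needing care. With $U^{(j)}\coloneqq U_{\tau_{i_1}}\cdots U_{\tau_{i_j}}$ you have $U^{(j)}=U^{(j-1)}U_{\tau_{i_j}}$. But your $j$th summand equals $\word_{\lambda,k}(U^{(j-1)}\otimes I)-\word_{\lambda,k}(U_{\tau_{i_j}}U^{(j-1)}\otimes I)$. Since adjacent transpositions need not commute, consecutive terms do not cancel in general, and the identity is false as written.

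The fix is to accumulate the partial products on the left. Set $V_j\coloneqq U_{\tau_{i_j}}\cdots U_{\tau_{i_m}}$ with $V_{m+1}\coloneqq I$, so that $V_1=U_\sigma$ and $V_j=U_{\tau_{i_j}}V_{j+1}$. Then $\word_{\lambda,k}-\word_{\lambda,k}(U_\sigma\otimes I_{E_\lambda})=\sum_{j=1}^{m}\bigl(\word_{\lambda,k}-\word_{\lambda,k}(U_{\tau_{i_j}}\otimes I_{E_\lambda})\bigr)(V_{j+1}\otimes I_{E_\lambda})$, which is exactly the paper's bookkeeping. With this correction, the rest of your argument, including the averaged bound $\frac{\cins}{4}n^{-1/2}k^{3/2}q_{k-1}$, goes through unchanged.
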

\begin{proof}
    Let $\tau_i\in S_k$ be the adjacent exchange between $i$th and $(i+1)$-st slot. 
    Any permutation $\sigma\in S_k$ is a product of at most $k(k-1)/2$ adjacent transpositions, and therefore, $U_\sigma=U_{\tau_{i_1}}\cdots U_{\tau_{i_l}}$ with $l\leq k(k-1)/2$. Defining $V_m\coloneqq U_{\tau_{i_m}}\cdots U_{\tau_{i_l}}$ for $m=1,\ldots,l$ with $V_{l+1}\coloneqq I$, we have $V_m=U_{\tau_{i_m}}V_{m+1}$ and hence
    \begin{align}
        \word_{\lambda,k}-\word_{\lambda,k}(U_\sigma\otimes I_{E_\lambda})&=\word_{\lambda,k}(V_{l+1}\otimes I_{E_\lambda})-\word_{\lambda,k}(V_{1}\otimes I_{E_\lambda})\\
        &=\sum_{m=1}^l(\word_{\lambda,k}(V_{m+1}\otimes I_{E_\lambda})-\word_{\lambda,k}(V_{m}\otimes I_{E_\lambda}))\\
        &=\sum_{m=1}^l(\word_{\lambda,k}-\word_{\lambda,k}(U_{\tau_{i_m}}\otimes I_{E_\lambda}))V_{m+1}\otimes I_{E_\lambda}.
    \end{align}
    Since $V_{m+1}\otimes I_{E_\lambda}$ is unitary, $\|(\word_{\lambda,k}-\word_{\lambda,k}(U_{\tau_{i_m}}\otimes I_{E_\lambda}))V_{m+1}\otimes I_{E_\lambda}\|=\|\word_{\lambda,k}-\word_{\lambda,k}(U_{\tau_{i_m}}\otimes I_{E_\lambda})\|$ and hence
    \begin{align}
         \|\word_{\lambda,k}-\word_{\lambda,k}(U_\sigma\otimes I_{E_\lambda})\|
         &\leq \sum_{m=1}^l\|\word_{\lambda,k}-\word_{\lambda,k}(U_{\tau_{i_m}}\otimes I_{E_\lambda})\|\\
         &\leq \frac{k(k-1)}{2}\cins n^{-1/2}k^{-1/2}q_{k-1}\leq \frac{\cins}{2}n^{-1/2}k^{3/2}q_{k-1}.
    \end{align}
    Therefore,
    \begin{align}
        \|\word_{\lambda,k}-\word_{\lambda,k}(P_{\sym}^{(k)}\otimes I_{E_\lambda})\|\leq \frac{1}{k!}\sum_{\sigma\in S_k}\|\word_{\lambda,k}-\word_{\lambda,k}(U_\sigma\otimes I_{E_\lambda})\|\leq \frac{\cins}{2}n^{-1/2}k^{3/2}q_{k-1}.
    \end{align}
\end{proof}

By using this lemma, we now prove approximate intertwining for creation operators.

\begin{proposition}\label{prop:intw_cre}
    For $1\leq k \leq N$ and $1\leq j \leq D$, there is an operator $\mathsf{E}_{j,k}^+:\Sym^{k-1}\mathcal{K}_\hor\otimes E_\lambda\to H_\lambda$ such that
    \begin{align}
        \crej \symword_{\lambda,k-1}=\symword_{\lambda,k}a_j^\dag+\mathsf{E}_{j,k}^+,\qquad \|\mathsf{E}_{j,k}^+\|\leq  C_+n^{-1/2}k^{2}q_{N},
    \end{align}
    for some structural constant $C_+$.
\end{proposition}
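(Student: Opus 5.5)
Since $\crej$ applied to a word just prepends a creation operator, the plan is to compare $\crej\symword_{\lambda,k-1}$ with $\word_{\lambda,k}$ evaluated on the tensor $e_j\otimes(\cdot)$ and then symmetrize. On a decomposable input $x_1\otimes\cdots\otimes x_{k-1}\otimes\xi$ with $x_1\otimes\cdots\otimes x_{k-1}\in\Sym^{k-1}\mathcal{K}_\hor$ (extended linearly), we have
\begin{align}
    \crej\word_{\lambda,k-1}(x_1\otimes\cdots\otimes x_{k-1}\otimes\xi)=\sqrt{k}\,\word_{\lambda,k}(e_j\otimes x_1\otimes\cdots\otimes x_{k-1}\otimes\xi).
\end{align}
Hence $\crej\symword_{\lambda,k-1}=\sqrt{k}\,\word_{\lambda,k}\circ(\iota_j\otimes I_{E_\lambda})$, where $\iota_j:\Sym^{k-1}\mathcal{K}_\hor\to\mathcal{K}_\hor^{\otimes k}$ is the map $v\mapsto e_j\otimes v$.

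The bosonic creation operator on the symmetric sector is $a_j^\dag v=\sqrt{k}\,P_\sym^{(k)}(e_j\otimes v)$ for $v\in\Sym^{k-1}\mathcal{K}_\hor$. This is the standard Fock-space formula (Appendix~\ref{app:Fock_space_review}), and it is consistent with the CCR normalization. Therefore
\begin{align}
    \symword_{\lambda,k}a_j^\dag=\sqrt{k}\,\word_{\lambda,k}(P_\sym^{(k)}\otimes I_{E_\lambda})(\iota_j\otimes I_{E_\lambda}).
\end{align}
We define $\mathsf{E}^+_{j,k}$ as the difference
\begin{align}
    \mathsf{E}^+_{j,k}\coloneqq\sqrt{k}\bigl(\word_{\lambda,k}-\word_{\lambda,k}(P_\sym^{(k)}\otimes I_{E_\lambda})\bigr)(\iota_j\otimes I_{E_\lambda}).
\end{align}
The map $\iota_j\otimes I_{E_\lambda}$ is an isometry because $\|e_j\|=1$. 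Lemma~\ref{lem:sym_word_bound} then gives
\begin{align}
    \|\mathsf{E}^+_{j,k}\|\le\sqrt{k}\cdot\tfrac{\cins}{2}n^{-1/2}k^{3/2}q_{k-1}=\tfrac{\cins}{2}n^{-1/2}k^2q_{k-1}.
\end{align}
Since $k\le N$, we have $q_{k-1}\le q_N$, so the claim holds with $C_+\coloneqq\cins/2$.

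The only step needing care is matching the Fock-space normalization convention for $a_j^\dag$ on $\Sym^{k-1}$ with the $1/\sqrt{k!}$ in $\word_{\lambda,k}$. We check that $a^\dag(x)$ sends $x_1\otimes\cdots\otimes x_{k-1}$, symmetrized with the same $1/\sqrt{(k-1)!}$ convention, to $\sqrt{k}P_\sym(x\otimes\cdots)$. If the appendix instead uses a different normalization, only the constant $C_+$ changes, because the structure of the argument, an exact prepend identity plus the symmetrization bound, is unaffected.
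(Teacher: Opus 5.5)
Your proof is correct and matches the paper's argument step for step: the exact prepend identity $\crej\word_{\lambda,k-1}=\sqrt{k}\,\word_{\lambda,k}(e_j\otimes\cdot)$, the formula $a_j^\dag u=\sqrt{k}(P_\sym^{(k)}\otimes I_{E_\lambda})(e_j\otimes u)$, and Lemma~\ref{lem:sym_word_bound}, giving $C_+=\cins/2$ (note the prepend identity holds on all of $\mathcal{K}_\hor^{\otimes(k-1)}\otimes E_\lambda$ and is then restricted, since decomposable tensors are generally not symmetric). Your closing hedge is unnecessary, because the appendix uses exactly this normalization; it is also not right in general, since a different normalization of $a_j^\dag$ would break the identity at order one rather than just change $C_+$.
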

\begin{proof}
    For $u\in\Sym^{k-1}\mathcal{K}_\hor\otimes E_\lambda$, the Fock creation operator satisfies $a_j^\dag u=\sqrt{k}(P_\sym^{(k)}\otimes I_{E_\lambda})(e_j\otimes u)$. Since $a_j^\dag u\in \Sym^{k}\otimes E_\lambda$, we have 
    \begin{align}
        \symword_{\lambda,k}a_j^\dag u= \word_{\lambda,k}a_j^\dag u=\sqrt{k}\word_{\lambda,k}(P_\sym^{(k)}\otimes I_{E_\lambda})(e_j\otimes u).
    \end{align}
    On the other hand, for a decomposable vector $v= x_1\otimes \cdots\otimes x_{k-1}\otimes \xi\in\mathcal{K}_\hor^{\otimes (k-1)}\otimes E_\lambda$, 
    \begin{align}
        \word_{\lambda,k}(e_j\otimes v)=\frac{1}{\sqrt{k!}}\cre(e_j)\cre(x_1)\cdots \cre(x_{k-1})\xi=\frac{1}{\sqrt{k}}\crej\word_{\lambda,k-1}(v).
    \end{align}
    Since decomposable vectors spans $\mathcal{K}_\hor^{\otimes (k-1)}\otimes E_\lambda$, $\crej\word_{\lambda,k-1}=\sqrt{k}\word_{\lambda,k}L_j$, where $L_jw\coloneqq e_j\otimes w$ for any $w\in\mathcal{K}_\hor^{\otimes (k-1)}\otimes E_\lambda$. Since $\symword_{\lambda,k-1}u=\word_{\lambda,k-1}u$ for $u\in\Sym^{k-1}\mathcal{K}_{\hor}\otimes E_\lambda$, 
    \begin{align}
        \crej\symword_{\lambda,k-1}u=\sqrt{k}\word_{\lambda,k}(e_j\otimes u).
    \end{align}

    Therefore,
    \begin{align}
        \mathsf{E}_{j,k}^+u=\crej\symword_{\lambda,k-1}u- \symword_{\lambda,k}a_j^\dag u=\sqrt{k}\word_{\lambda,k}(I_{\mathcal{K}_\hor^{\otimes k}\otimes E_\lambda}-P_\sym^{(k)}\otimes I_{E_\lambda})(e_j\otimes u).
    \end{align}
    From Lemma~\ref{lem:sym_word_bound}, $\|\word_{\lambda,k}(I_{\mathcal{K}_\hor^{\otimes k}\otimes E_\lambda}-P_\sym^{(k)}\otimes I_{E_\lambda})(e_j\otimes u)\|\leq   \frac{\cins}{2}n^{-1/2}k^{3/2}q_{k-1}\|u\|$ and therefore,
    \begin{align}
        \|\mathsf{E}_{j,k}^+u\|\leq \frac{\cins}{2}n^{-1/2}k^{2}q_{k-1}\|u\|\leq \frac{\cins}{2}n^{-1/2}k^{2}q_{N}\|u\|.
    \end{align}
\end{proof}

If $\|\mathsf{E}_{j,k}^+\|\approx0$, then this proposition shows the approximate intertwining relation $\crej \symword_{\lambda,k-1}\approx\symword_{\lambda,k}a_j^\dag$. The following proposition provides an upper bound on $q_N$ that appears in the upper bound on $\|\mathsf{E}_{j,k}^+\|$. 

\begin{proposition}\label{prop:cins_bound}
    If $ \cins n^{-1/2}N^{3/2}\leq 1$, then $q_N\leq 2\sqrt{1+\Delta_N}$.  
\end{proposition}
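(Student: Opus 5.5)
The plan is to prove the bound by induction on $k$, and to control the unsymmetrized map $\word_{\lambda,k}$ through the symmetrized one $\symword_{\lambda,k}$. The Gram operator controls the latter: $\|\symword_{\lambda,\leq N}\|^2=\|\gram_{\lambda,N}\|\leq 1+\Delta_N$. Since $\symword_{\lambda,k}$ is the restriction of $\symword_{\lambda,\leq N}$ to the summand $\Sym^k\mathcal{K}_\hor\otimes E_\lambda$, this gives $\|\symword_{\lambda,k}\|\leq\sqrt{1+\Delta_N}$ for every $0\le k\le N$.

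For the transfer, split
\begin{align}
\word_{\lambda,k}=\word_{\lambda,k}(P_\sym^{(k)}\otimes I_{E_\lambda})+\bigl(\word_{\lambda,k}-\word_{\lambda,k}(P_\sym^{(k)}\otimes I_{E_\lambda})\bigr).
\end{align}
The first term equals $\symword_{\lambda,k}(P_\sym^{(k)}\otimes I_{E_\lambda})$, because $P_\sym^{(k)}$ is an orthogonal projection onto $\Sym^k\mathcal{K}_\hor$. Its norm is therefore at most $\sqrt{1+\Delta_N}$. Lemma~\ref{lem:sym_word_bound} bounds the second term by $\frac{\cins}{2}n^{-1/2}k^{3/2}q_{k-1}$. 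Hence, for $1\le k\le N$,
\begin{align}
\|\word_{\lambda,k}\|\leq \sqrt{1+\Delta_N}+\frac{\cins}{2}n^{-1/2}k^{3/2}q_{k-1}.
\end{align}

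The induction then runs as follows. The base case is $q_0=\|\word_{\lambda,0}\|=1\le 2\sqrt{1+\Delta_N}$, since $\word_{\lambda,0}$ is the inclusion of $E_\lambda$ into $H_\lambda$. For the inductive step, suppose $q_{k-1}\le 2\sqrt{1+\Delta_N}$ with $k\le N$. Using $k^{3/2}\le N^{3/2}$ and the hypothesis $\cins n^{-1/2}N^{3/2}\le1$, the display gives
\begin{align}
\|\word_{\lambda,k}\|\le \sqrt{1+\Delta_N}+\tfrac12\cdot 2\sqrt{1+\Delta_N}=2\sqrt{1+\Delta_N}.
\end{align}
Therefore $q_k=\max\{q_{k-1},\|\word_{\lambda,k}\|\}\le 2\sqrt{1+\Delta_N}$. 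Iterating up to $k=N$ gives the claim.

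The main obstacle is making sure the bound is genuinely closed. The constant $2$ must reproduce itself, and that works only because the factor $\frac12$ in Lemma~\ref{lem:sym_word_bound} combines with the hypothesis to give exactly $\frac12 q_{k-1}$. The restriction bound $\|\symword_{\lambda,k}\|\le\|\symword_{\lambda,\leq N}\|$ also needs checking: it holds because the truncated Fock space is an orthogonal direct sum and $\symword_{\lambda,\le N}$ acts on each summand by $\symword_{\lambda,k}$.
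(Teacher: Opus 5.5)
Your proof is correct and follows the paper's argument. It uses the same splitting of $\word_{\lambda,k}$ via Lemma~\ref{lem:sym_word_bound}, together with the bound $\|\symword_{\lambda,k}\|\leq\sqrt{1+\Delta_N}$ obtained from the Gram operator; the only difference is the closing step: the paper replaces $q_{k-1}$ by $q_N$, takes the maximum over $k$, and absorbs the term $\frac{1}{2}q_N$ into the left-hand side (using that $q_N$ is finite), whereas you close by induction on $k$, which is an equally valid route to the same bound.
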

\begin{proof}
    From Lemma~\ref{lem:sym_word_bound}, for $1\leq k\leq N$, 
    \begin{align}
        \|\word_{\lambda,k}\|\leq \|\word_{\lambda,k}(P_{\sym}^{(k)}\otimes I_{E_\lambda})\|+\frac{\cins}{2}n^{-1/2}k^{3/2}q_{k-1}\leq \|\symword_{\lambda,k}\|+\frac{\cins}{2}n^{-1/2}N^{3/2}q_{N}.
    \end{align}
    Let $Q_k$ be the projector onto the $k$-particle sector. Since $Q_k(\symword_{\lambda,\leq N}^\dag \symword_{\lambda,\leq N} -I)Q_k=\symword_{\lambda,k}^\dag \symword_{\lambda,k}-I_k$, and the projector $Q_k$ does not increase the operator norm, we obtain
    \begin{align}
        \|\symword_{\lambda,k}^\dag \symword_{\lambda,k}-I_k\|\leq \|\symword_{\lambda,\leq N}^\dag \symword_{\lambda,\leq N} -I\|=\Delta_N.
    \end{align}
    Therefore,
    \begin{align}
        \|\symword_{\lambda,k}\|^2=\|\symword_{\lambda,k}^\dag \symword_{\lambda,k}\|\leq \|I_k\|+\|\symword_{\lambda,k}^\dag \symword_{\lambda,k}-I_k\|=1+\Delta_N,
    \end{align}
    which implies
    \begin{align}
         \|\word_{\lambda,k}\|\leq \sqrt{1+\Delta_N}+\frac{\cins}{2}n^{-1/2}N^{3/2}q_{N}.
    \end{align}
    Moreover, for $k=0$, since $\word_{\lambda,0}=I_{E_\lambda}$, $\|\word_{\lambda,0}\|=1\leq \sqrt{1+\Delta_N}$. Thus, taking the maximum over $0\leq k \leq N$, we obtain
    \begin{align}
        q_N\leq  \sqrt{1+\Delta_N}+\frac{\cins}{2}n^{-1/2}N^{3/2}q_{N}.
    \end{align}
    From the assumption, $\frac{\cins}{2}n^{-1/2}N^{3/2}\leq \frac{1}{2}$. Therefore, $q_N\leq  \sqrt{1+\Delta_N}+\frac{1}{2}q_{N}$, which implies $q_N\leq 2\sqrt{1+\Delta_N}$. 
\end{proof}

\subsubsection{Approximate intertwining for annihilation operators}
This subsubsection derives approximate intertwining for annihilation operators. We first introduce several notations. From Theorem~\ref{thm:commutators_multilin_maps}, 
\begin{align}
    \|\Dmat(\bar{u},x)|_{F_kH_\lambda}\|\leq C_{\mathsf{D}}\|u\|\|x\|\varepsilon_0(k),\quad \varepsilon_0(k)\coloneqq n^{\alpha-1}+\frac{k}{n}
\end{align}
for some constant $C_{\mathsf{D}}>0$. Introducing $\tilde{\varepsilon}_0(k)\coloneqq C_{\mathsf{D}}\varepsilon_0(k)$, 
\begin{align}
    \|\Dmat(\bar{u},x)|_{F_mH_\lambda}\|\leq \|u\|\|x\|\tilde{\varepsilon}_0(k)\quad \text{for } 0\leq m \leq k.\label{eq:dmax_norm_fm}
\end{align}

Throughout this subsubsection, we impose the standing smallness assumptions
\begin{align}
    \max_{1\leq k\leq N}\{\tilde{\varepsilon}_0(k)+k(n^{-1}+n^{-1/2})\}\leq C_{\mathrm{sc}},\quad \cins n^{-1/2}\sqrt{N}\leq \frac{1}{2}.\label{eq:standing_assumption}
\end{align}
Here, $C_{\mathrm{sc}}$ is a fixed structural constant. Since $\alpha<1$, these conditions hold for $N=\floor{n^\eta}$ with $\eta<1/2$ and all sufficiently large $n$. 

For $1\leq k \leq N$, we define the unsymmetrized Fock contraction $a_j^{[k]}:\mathcal{K}_\hor^{\otimes k} \otimes E_\lambda\to \mathcal{K}_\hor^{\otimes (k-1)}\otimes E_\lambda$ by
\begin{align}
    a_j^{[k]}(x_1\otimes\cdots\otimes x_k\otimes \xi)\coloneqq \frac{1}{\sqrt{k}}\sum_{i=1}^k\braket{e_j,x_i}x_1\otimes \cdots \widehat{x_i}\cdots \otimes x_k\otimes\xi,
\end{align}
whose restriction on $\Sym^k\mathcal{K}_\hor\otimes E_\lambda$ is $a_j\otimes I_{E_\lambda}$. 

For fixed $j$ and $1\leq k \leq N$, we introduce the three different branch maps on decomposable tensors by
\begin{align}
    \mathsf{B}^{\mathrm{cre}}_{j,k}(x_1\otimes \cdots\otimes x_k\otimes \xi)&\coloneqq \frac{1}{\sqrt{k!}}\sum_{i=1}^k\cre(x_1)\cdots \cre(x_{i-1})\cre(\Xi_2(\bar{e}_j,x_i))\cre(x_{i+1})\cdots \cre(x_k)\xi,\\
    \mathsf{B}^{\mathrm{diag}}_{j,k}(x_1\otimes \cdots\otimes x_k\otimes \xi)&\coloneqq \frac{1}{\sqrt{k!}}\sum_{i=1}^k\cre(x_1)\cdots \cre(x_{i-1})\Dmat(\bar{e}_j,x_i)\cre(x_{i+1})\cdots \cre(x_k)\xi,\\
    \mathsf{B}^{\mathrm{ann}}_{j,k}(x_1\otimes \cdots\otimes x_k\otimes \xi)&\coloneqq \frac{1}{\sqrt{k!}}\sum_{i=1}^k\cre(x_1)\cdots \cre(x_{i-1})\annlin_\lambda(\Xi_3(\bar{e}_j,x_i))\cre(x_{i+1})\cdots \cre(x_k)\xi.
\end{align}
Each map extends uniquely and linearly to $\mathcal{K}_\hor^{\otimes k}\otimes E_\lambda$. 

For a decomposable tensor $v\coloneqq x_1\otimes \cdots\otimes x_k\otimes \xi$, 
\begin{align}
    &\annj\word_{\lambda,k}v\\
    &=\frac{1}{\sqrt{k!}}\sum_{i=1}^k\cre(x_1)\cdots \cre(x_{i-1})[\annj,\cre(x_i)]\cre (x_{i+1})\cdots \cre(x_k)\xi\\
    &=\frac{1}{\sqrt{k!}}\sum_{i=1}^k\cre(x_1)\cdots \cre(x_{i-1})\left(\braket{e_j,x_i}I+\Dmat(\bar{e}_j,x_i)+n^{-1/2}\cre(\Xi_2(\bar{e}_j,x_i))+n^{-1/2}\annlin_\lambda(\Xi_3(\bar{e}_j,x_i))\right)\cre (x_{i+1})\cdots \cre(x_k)\xi\\
    &=\word_{\lambda,k-1}a_j^{[k]}v+\mathsf{B}^{\mathrm{diag}}_{j,k}v+n^{-1/2}\mathsf{B}^{\mathrm{cre}}_{j,k}v+n^{-1/2}\mathsf{B}^{\mathrm{ann}}_{j,k}v.\label{eq:aw_wa}
\end{align}
Below we derive bounds on the norms of the maps $\mathsf{B}^{\mathrm{cre}}_{j,k}, \mathsf{B}^{\mathrm{diag}}_{j,k},\mathsf{B}^{\mathrm{ann}}_{j,k}$. 

\begin{lemma}\label{lem:Bcre}
    For $1\leq k \leq N$, $\|\mathsf{B}^{\mathrm{cre}}_{j,k}\|\leq \cins k q_N$.
\end{lemma}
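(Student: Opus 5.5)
The plan is to write $\mathsf{B}^{\mathrm{cre}}_{j,k}$ as a sum of $k$ terms, each of which is $\word_{\lambda,k}$ composed with a single-slot insertion, and then bound each term by $\cins q_N$. Fix $1\leq i\leq k$ and define the linear map $g_i:\mathcal{K}_\hor^{\otimes k}\otimes E_\lambda\to \mathcal{K}_\hor^{\otimes k}\otimes E_\lambda$ on decomposable tensors by
\begin{align}
    g_i(x_1\otimes\cdots\otimes x_k\otimes\xi)\coloneqq x_1\otimes\cdots\otimes x_{i-1}\otimes \Xi_2(\bar{e}_j,x_i)\otimes x_{i+1}\otimes\cdots\otimes x_k\otimes \xi .
\end{align}
It is well defined by multilinearity. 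Comparing with the definition of $\word_{\lambda,k}$, the $i$-th summand of $\mathsf{B}^{\mathrm{cre}}_{j,k}$ is exactly $\word_{\lambda,k}\circ g_i$ on decomposable tensors: the prefactor $1/\sqrt{k!}$ and the number of creation operators are both unchanged. By linearity this gives the identity $\mathsf{B}^{\mathrm{cre}}_{j,k}=\sum_{i=1}^k \word_{\lambda,k}\circ g_i$ on all of $\mathcal{K}_\hor^{\otimes k}\otimes E_\lambda$.

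Next I bound $\|g_i\|$. The map $x\mapsto \Xi_2(\bar{e}_j,x)$ is a linear operator $T_j$ on $\mathcal{K}_\hor$, since $\bar{e}_j$ is fixed. By Theorem~\ref{thm:commutators_multilin_maps} and $\|e_j\|=1$, its norm satisfies
\begin{align}
    \|T_j\|\leq \frac{4\sqrt{2}}{\delta_*}\leq C_\Xi .
\end{align}
No tensorization loss occurs, because $T_j$ acts on a single slot. Since $g_i=I_{\mathcal{K}_\hor^{\otimes(i-1)}}\otimes T_j\otimes I_{\mathcal{K}_\hor^{\otimes(k-i)}\otimes E_\lambda}$, Lemma~\ref{lem:insertion_tensor_slot} gives $\|g_i\|=\|T_j\|\leq C_\Xi\leq DC_\Xi\leq \cins$, using $D\geq1$ and Eq.~\eqref{eq:definition_C_ins}. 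Alternatively, I could apply Lemma~\ref{lem:tensorization_multilin} to the bilinear map $\Xi_2$ and fix $\bar{e}_j$; that route also yields a bound of $\cins$.

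Combining these with $\|\word_{\lambda,k}\|\leq q_k\leq q_N$ for $k\leq N$, the triangle inequality gives
\begin{align}
    \|\mathsf{B}^{\mathrm{cre}}_{j,k}\|\leq \sum_{i=1}^k\|\word_{\lambda,k}\|\,\|g_i\|\leq k\,\cins\, q_N .
\end{align}
There is no real obstacle here. The only care points are to check that the $i$-th summand really equals $\word_{\lambda,k}$ applied to $g_i$, including the normalization $1/\sqrt{k!}$, and to confirm that the constant $\cins$ dominates the single-slot norm of $\Xi_2(\bar{e}_j,\cdot)$.
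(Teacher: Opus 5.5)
Your proposal is correct and follows essentially the same route as the paper. The paper also writes $\mathsf{B}^{\mathrm{cre}}_{j,k}=\sum_{i=1}^k\word_{\lambda,k}\circ f_{i,j}$, where $f_{i,j}$ is the single-slot insertion of $x\mapsto\Xi_2(\bar{e}_j,x)$, bounds $\|f_{i,j}\|\le C_\Xi\le\cins$ via Lemma~\ref{lem:insertion_tensor_slot}, and concludes using $\|\word_{\lambda,k}\|\le q_N$.
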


\begin{proof}
    Define $L_j:\mathcal{K}_\hor\to \mathcal{K}_\hor$ by $L_jx\coloneqq \Xi_2(\bar{e}_j,x)$. Since $e_j$ is a unit vector, $\|L_jx\|\leq \| \Xi_2(\bar{e}_j,x)\|\leq C_\Xi\|x\|$, and hence $\|L_j\|\leq C_\Xi$. Defining $f_{i,j}:\mathcal{K}_\hor^{\otimes k}\otimes E_\lambda\to \mathcal{K}_\hor^{\otimes k}\otimes E_\lambda$ by $f_{i,j}\coloneqq I_{\mathcal{K}_\hor^{\otimes (i-1)}}\otimes L_j\otimes I_{\mathcal{K}_\hor^{\otimes (k-i)}\otimes E_\lambda}$, from Lemma~\ref{lem:insertion_tensor_slot}, we get $ \|f_{i,j}\|=\|L_j\|$, and hence $ \|f_{i,j}\|\leq C_\Xi$. 
    Since $\mathsf{B}^{\mathrm{cre}}_{j,k}=\sum_{i=1}^k\word_{\lambda,k}\circ f_{i,j}$, we get
    \begin{align}
        \|\mathsf{B}^{\mathrm{cre}}_{j,k}\|\leq \sum_{i=1}^k\|\word_{\lambda,k}\|\| f_{i,j}\|\leq  \sum_{i=1}^kq_N C_\Xi= C_\Xi kq_N. 
    \end{align}
    Since $ C_\Xi\leq \cins$, we get $\|\mathsf{B}^{\mathrm{cre}}_{j,k}\|\leq \cins k q_N$.
\end{proof}

\begin{lemma}\label{lem:Bdiag}
    For $1\leq k \leq N$, $\|\mathsf{B}^{\mathrm{diag}}_{j,k}\|\leq \cins (\tilde{\varepsilon}_0(k)\sqrt{k}+n^{-1}k^{3/2})q_N\leq \cins (\tilde{\varepsilon}_0(k)k+n^{-1}k^{2})q_N$. 
\end{lemma}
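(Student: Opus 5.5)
The plan is to bound $\mathsf{B}^{\mathrm{diag}}_{j,k}$ term by term. In each term, the defect $\Dmat(\bar e_j,x_i)$ acts on a vector in a low filtration level, where Theorem~\ref{thm:commutators_multilin_maps} controls it. The one subtlety is that $\Dmat$ is not a one-slot insertion into $\word_{\lambda,k}$: it sits between creation operators. I would therefore handle the $i$-th term in one of two ways. Either bound it directly as an operator on $F_{k-i}H_\lambda$, or commute $\Dmat$ to the right through the creation operators using Eq.~\eqref{eq:comm_D_C}, which produces a leftover $\Dmat$ acting on $E_\lambda$ plus a sum of $\Xi_4$-insertions.

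The direct route comes first. The vector $\cre(x_{i+1})\cdots\cre(x_k)\xi$ lies in $F_{k-i}H_\lambda$ by Lemma~\ref{lem:properties_cre_ann}(iii), and $\Dmat(\bar e_j,x_i)$ preserves $F_{k-i}H_\lambda$ by Lemma~\ref{lem:stability_filtration}. So on that subspace $\|\Dmat(\bar e_j,x_i)\|\le\|x_i\|\tilde\varepsilon_0(k)$ by Eq.~\eqref{eq:dmax_norm_fm}. The difficulty is that this bound is multilinear in the slots, not an operator bound on the whole map. Tensorizing it with Lemma~\ref{lem:tensorization_multilin} over all $k$ slots would cost a factor $D^{k/2}$, which is unacceptable.

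The commutator route avoids this. Writing $D_i\coloneqq\Dmat(\bar e_j,x_i)$ and $C_m\coloneqq\cre(x_m)$, I would use
\begin{align}
D_i C_{i+1}\cdots C_k\xi = C_{i+1}\cdots C_k D_i\xi+\sum_{m=i+1}^k n^{-1}\,C_{i+1}\cdots C_{m-1}\,\cre(\Xi_4(\bar e_j,x_i,x_m))\,C_{m+1}\cdots C_k\xi .
\end{align}
Each commutator term has the same form as a term of $\word_{\lambda,k-1}$, up to normalization, applied to a two-slot insertion $(x_i,x_m)\mapsto\Xi_4(\bar e_j,x_i,x_m)$ combined with a slot permutation. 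By Lemmas~\ref{lem:insertion_tensor_slot} and \ref{lem:tensorization_multilin} this insertion has norm at most $\cins$. The prefactor $1/\sqrt{k!}$ against the $1/\sqrt{(k-1)!}$ of $\word_{\lambda,k-1}$ leaves $k^{-1/2}$, so each such term is bounded by $n^{-1}k^{-1/2}\cins q_N$. There are at most $k^2$ pairs $(i,m)$, giving a total of $\cins n^{-1}k^{3/2}q_N$.

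The leftover term is $\frac{1}{\sqrt{k!}}C_1\cdots\widehat{C_i}\cdots C_k D_i\xi$, with $D_i$ acting on $E_\lambda=F_0H_\lambda$. Define $T_j:\mathcal{K}_\hor\otimes E_\lambda\to E_\lambda$ by $x\otimes\xi\mapsto\Dmat(\bar e_j,x)\xi$; the image stays in $E_\lambda$ because $\Qmat^0\in\mathfrak{l}$ preserves $E_\lambda$. This is a single-slot insertion, and Lemma~\ref{lem:tensorization_multilin} gives $\|T_j\|\le\sqrt{D}\,\tilde\varepsilon_0(0)\le\cins\tilde\varepsilon_0(k)$. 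The $i$-th leftover term is then $k^{-1/2}\word_{\lambda,k-1}$ composed with a slot permutation and $T_j$. Summing over the $k$ values of $i$ gives $\cins\tilde\varepsilon_0(k)\sqrt{k}\,q_N$.

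Adding the two contributions gives the first inequality, and the second follows from $\sqrt{k}\le k$ and $k^{3/2}\le k^2$. The main obstacle is exactly the slot-growth issue above: the argument must ensure that $\Dmat$ contributes only through single-slot or two-slot insertions, whose constants are structural, and never through a $k$-fold tensorization.
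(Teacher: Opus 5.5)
Your proposal is correct and follows essentially the same route as the paper. You commute $\Dmat(\bar e_j,x_i)$ to the right via Eq.~\eqref{eq:comm_D_C}, bound the leftover term on $E_\lambda$ as a single-slot insertion of norm at most $\sqrt{D}\,\tilde\varepsilon_0(k)\le\cins\tilde\varepsilon_0(k)$, and bound the $\binom{k}{2}$ commutator terms as two-slot $\Xi_4$-insertions of norm at most $DC_\Xi\le\cins$, each carrying a factor $k^{-1/2}$ relative to $\word_{\lambda,k-1}$.
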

\begin{proof}
    By repeatedly using Eq.~\eqref{eq:comm_D_C},
    \begin{align}
        &\cre(x_1)\cdots \cre(x_{i-1})\Dmat(\bar{e}_j,x_i)\cre(x_{i+1})\cdots \cre(x_k)\xi\\
        &=\cre(x_1)\cdots \widehat{\cre(x_{i})}\cdots \cre(x_k)\Dmat(\bar{e}_j,x_i)\xi+n^{-1}\sum_{l=i+1}^k\cre(x_1)\cdots \widehat{\cre(x_{i})}\cdots\cre(x_{l-1}) \cre(\Xi_4(\bar{e}_j,x_i,x_l))\cre(x_{l+1})\cdots \cre (x_k)\xi.
    \end{align}
    
   We first consider the first contribution. From Eq.~\eqref{eq:dmax_norm_fm}, $\|\Dmat(\bar{e}_j,x_i)\xi\|\leq \|x_i\|\|\xi\|\tilde{\varepsilon}_0(k)$. Since $\Dmat(\bar{e}_j,x_i)$ is block-diagonal, which preserves each filtration space $F_kH_\lambda$ including $E_\lambda=F_0H_\lambda$, we have $\Dmat(\bar{e}_j,x_i)\xi\in E_\lambda$. Defining a bilinear map $T:\mathcal{K}_\hor\times E_\lambda\to E_\lambda$ by $T(x,\xi)\coloneqq \Dmat(\bar{e}_j,x)\xi$, we have $\|T(x,\xi)\|\leq \tilde{\varepsilon}_0(k)\|x\|\|\xi\|$. Let $\tilde{T}:\mathcal{K}_\hor\otimes E_\lambda\to E_\lambda$ be the unique linear extension of $T$. From Lemma~\ref{lem:tensorization_multilin}, $\|\tilde{T}\|\leq \sqrt{D} \tilde{\varepsilon}_0(k)$. Defining a unitary permutation $P_i:\mathcal{K}_\hor^{\otimes k}\otimes E_\lambda\to \mathcal{K}_\hor^{\otimes (k-1)}\otimes (\mathcal{K}_\hor\otimes E_\lambda)$ by $ P_i(x_1\otimes\cdots \otimes x_k\otimes \xi)\coloneqq x_1\otimes \cdots \widehat{x_i}\cdots\otimes x_k\otimes x_i\otimes \xi$, we get
   \begin{align}
       \frac{1}{\sqrt{k!}}\cre(x_1)\cdots \widehat{\cre(x_{i})}\cdots \cre(x_k)\Dmat(\bar{e}_j,x_i)\xi=k^{-1/2}\left(\word_{\lambda,k-1}\circ(I_{\mathcal{K}_\hor^{\otimes (k-1)}}\otimes \tilde{T})\circ P_i\right)(x_1\otimes \cdots\otimes x_k\otimes\xi).
   \end{align}
   By using Lemma~\ref{lem:insertion_tensor_slot}, $\|I_{\mathcal{K}_\hor^{\otimes (k-1)}}\otimes \tilde{T}\|=\|\tilde{T}\|$, and hence
   \begin{align}
       \left\|\word_{\lambda,k-1}\circ(I_{\mathcal{K}_\hor^{\otimes (k-1)}}\otimes \tilde{T})\circ P_i\right\|\leq q_N\sqrt{D} \tilde{\varepsilon}_0(k)\overset{\text{Eq.~\eqref{eq:definition_C_ins}}}{\leq} \cins q_N\tilde{\varepsilon}_0(k)
   \end{align}

    Next, we consider the commutator terms. Define a bilinear map $T':\mathcal{K}_\hor\times \mathcal{K}_\hor\to \mathcal{K}_\hor$ by $T'(x_i,x_l)\coloneqq \Xi_4(\bar{e}_j,x_i,x_l)$. Since $\|\Xi_4(\bar{e}_j,x_i,x_l)\|\leq C_\Xi\|x_i\|\|x_l\|$, its linear extension $\tilde{T'}:\mathcal{K}_\hor\otimes \mathcal{K}_\hor\to \mathcal{K}_\hor$ satisfies $\|\tilde{T'}\|\leq DC_\Xi$ by Lemma~\ref{lem:tensorization_multilin}. Defining a unitary permutation $P_{i,l}:\mathcal{K}_\hor^{\otimes k}\otimes E_\lambda \to \mathcal{K}_\hor^{\otimes k}\otimes E_\lambda$ by
    \begin{align}
        P_{i,l}(x_1\otimes \cdots\otimes x_k\otimes \xi)\coloneqq x_1\otimes \cdots \widehat{x_i}\cdots \otimes x_{l-1}\otimes (x_i\otimes x_l)\otimes x_{l+1}\otimes \cdots \otimes x_k\otimes \xi,
    \end{align}
    we have
    \begin{align}
        &\frac{1}{\sqrt{k!}}\cre(x_1)\cdots \widehat{\cre(x_{i})}\cdots\cre(x_{l-1}) \cre(\Xi_4(\bar{e}_j,x_i,x_l))\cre(x_{l+1})\cdots \cre (x_k)\xi\\
        &=k^{-1/2}\left(\word_{\lambda,k-1}\circ( I_{\mathcal{K}_\hor^{\otimes (l-2)}}\otimes \tilde{T'}\otimes I_{\mathcal{K}_\hor^{\otimes (k-l)}\otimes E_\lambda})\circ P_{i,l} \right)(x_1\otimes \cdots\otimes x_k\otimes \xi).
    \end{align}
    Again, by using Lemma~\ref{lem:insertion_tensor_slot}, $\|I_{\mathcal{K}_\hor^{\otimes (l-2)}}\otimes \tilde{T'}\otimes I_{\mathcal{K}_\hor^{\otimes (k-l)}\otimes E_\lambda}\|=\| \tilde{T'}\| $, and hence
    \begin{align}
        \left\|\word_{\lambda,k-1}\circ( I_{\mathcal{K}_\hor^{\otimes (l-2)}}\otimes \tilde{T'}\otimes I_{\mathcal{K}_\hor^{\otimes (k-l)}\otimes E_\lambda})\circ P_{i,l} \right\|\leq q_NDC_\Xi\overset{\text{Eq.~\eqref{eq:definition_C_ins}}}{\leq}\cins q_N.
    \end{align}
    
    To summarize, we obtain
    \begin{align}
        \mathsf{B}^{\mathrm{diag}}_{j,k}=\sum_{i=1}^kk^{-1/2}\left(\word_{\lambda,k-1}\circ(I_{\mathcal{K}_\hor^{\otimes (k-1)}}\otimes \tilde{T})\circ P_i\right)+n^{-1}\sum_{1\leq i < l \leq k}k^{-1/2}\left(\word_{\lambda,k-1}\circ( I_{\mathcal{K}_\hor^{\otimes (l-2)}}\otimes \tilde{T'}\otimes I_{\mathcal{K}_\hor^{\otimes (k-l)}\otimes E_\lambda})\circ P_{i,l} \right).
    \end{align}
    Therefore, using $\binom{k}{2}\leq k^2/2\leq k^2$, we get
    \begin{align}
         \|\mathsf{B}^{\mathrm{diag}}_{j,k}\|&\leq k k^{-1/2}q_N\sqrt{D} \tilde{\varepsilon}_0(k)+\binom{k}{2}n^{-1}k^{-1/2}q_NDC_\Xi\leq \cins\left(\sqrt{k} \tilde{\varepsilon}_0(k)+n^{-1}k^{3/2}\right)q_N
    \end{align}
\end{proof}

\begin{lemma}\label{lem:Bann}
    Assume that
    \begin{align}
        \max_{1\leq k \leq N}\{\tilde{\varepsilon}_0(k)+kn^{-1}+k^{1/2}n^{-1/2}\}\leq C_{\mathrm{sc}},\quad n^{-1/2}\sqrt{N}\leq \frac{1}{2\cins}.\label{eq:assumptions_Bann}
    \end{align}
    Then
    \begin{align}
        \|\mathsf{B}^{\mathrm{ann}}_{j,k}\|\leq 2\cins^2(1+C_{\mathrm{sc}})kq_N.
    \end{align}
\end{lemma}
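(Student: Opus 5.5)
The plan is to push the annihilation-type factor $\annlin_\lambda(\Xi_3(\bar{e}_j,x_i))$ to the right through the creation operators $\cre(x_{i+1}),\ldots,\cre(x_k)$ until it hits $\xi\in E_\lambda$. There it vanishes by Lemma~\ref{lem:properties_cre_ann}(ii). Every commutator produced along the way is expanded with Eq.~\eqref{eq:comm_cre_ann_finite}. Write $w\coloneqq \Xi_3(\bar{e}_j,x_i)$, so that $\|w\|\leq C_\Xi\|x_i\|$ (using $\|e_j\|=1$). Passing $\annlin_\lambda(w)$ through $\cre(x_l)$ for $l>i$ gives four types of terms:
\begin{itemize}
\item a scalar term $\braket{\bar w, x_l}$ that removes the slot $x_l$;
\item a $\Dmat(w,x_l)$ term;
\item $n^{-1/2}\cre(\Xi_2(w,x_l))$;
\item $n^{-1/2}\annlin_\lambda(\Xi_3(w,x_l))$.
\end{itemize}
The last type is again of annihilation form, so the procedure recurses. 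The prefactor $n^{-1/2}$ makes this recursion a convergent geometric series, and this is where the hypothesis $\cins n^{-1/2}\sqrt{N}\leq 1/2$ will be used.

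I would carry out the bookkeeping as follows, expressing each term through $\word_{\lambda,k-1}$ or $\word_{\lambda,k-2}$ composed with tensor-slot insertions.
\begin{itemize}
\item \emph{Scalar contractions.} A term in which $w$ contracts with $x_l$ deletes two slots. With the $\frac{1}{\sqrt{k!}}$ normalization it becomes $\frac{1}{\sqrt{k(k-1)}}\word_{\lambda,k-2}$ composed with a contraction map. By Lemmas~\ref{lem:insertion_tensor_slot} and~\ref{lem:tensorization_multilin}, that contraction map has norm at most $\cins C_\Xi\leq \cins^2$. Summing over $i<l$ gives at most $\binom{k}{2}$ terms, hence a contribution of order $\cins^2 k\, q_N$.
\item \emph{$\Dmat$ terms.} These are handled exactly as in the proof of Lemma~\ref{lem:Bdiag}. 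One commutes $\Dmat$ to the right using Eq.~\eqref{eq:comm_D_C} and bounds it on $E_\lambda$ by $\tilde{\varepsilon}_0(k)$. Over the $O(k^2)$ pairs $(i,l)$ this yields $\cins^2\big(\tilde{\varepsilon}_0(k)+kn^{-1}\big)k\,q_N$, which is at most $\cins^2 C_{\mathrm{sc}}k\,q_N$.
\item \emph{$\cre(\Xi_2)$ terms.} These carry $n^{-1/2}$, give $O(k^2)$ terms with $k-1$ creation slots, and contribute $n^{-1/2}k^{3/2}\cins^2 q_N$, which is at most $\cins^2 C_{\mathrm{sc}}k\,q_N$ since $k^{1/2}n^{-1/2}\leq C_{\mathrm{sc}}$.
\end{itemize}

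To control the recursion cleanly, I would define $\beta_k\coloneqq \sup\|\cdot\|$ over the family of maps of the form
\[
v=x_1\otimes\cdots\otimes x_k\otimes\xi\;\mapsto\;\frac{1}{\sqrt{k!}}\sum_i \cre(x_1)\cdots\cre(x_{i-1})\,\annlin_\lambda(\Theta(x_i))\,\cre(x_{i+1})\cdots\cre(x_k)\,\xi,
\]
where $\Theta$ ranges over linear maps $\mathcal{K}_\hor\to\overline{\mathcal{K}_\hor}$ with $\|\Theta\|\leq \cins$. The expansion above gives an inequality of the form
\[
\beta_k\leq \cins^2(1+C_{\mathrm{sc}})k\,q_N+\cins n^{-1/2}\sqrt{k}\,\beta_{k-1}\quad\text{or}\quad \beta_k\leq \ldots+\cins n^{-1/2}\sqrt{k}\,\beta_k .
\]
In the latter case one absorbs the last term using $\cins n^{-1/2}\sqrt{N}\leq 1/2$, which produces the factor $2$ and gives $\|\mathsf{B}^{\mathrm{ann}}_{j,k}\|\leq 2\cins^2(1+C_{\mathrm{sc}})kq_N$.

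The main obstacle is exactly this recursive step. Each time $\annlin_\lambda(\Xi_3(\cdot))$ is reproduced, $\Xi_3$ is iterated, and its norm compounds like $C_\Xi^m$. A naive term-by-term expansion therefore blows up with the depth of the recursion. I would try first to set up the self-referential bound on $\beta_k$, using linear extensions so that the tensorized insertion norm stays at $\cins$ rather than growing with depth, and to check that the counting of the $\sqrt{k}$ factors against the $\frac{1}{\sqrt{k!}}$ normalization really matches the $\sqrt{N}$ in the hypothesis.
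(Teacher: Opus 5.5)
\textbf{Verdict: there is a genuine gap.} Your first-level bookkeeping is sound: the scalar, $\Dmat$ and $\cre(\Xi_2)$ branches, each written through $\word_{\lambda,k-1}$ or $\word_{\lambda,k-2}$ with insertion norms at most $\cins^2$, are estimated correctly. The gap is the recursive step, which is the real content of the lemma. You leave it unresolved, and the family you propose for it is not closed under the commutator expansion.

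When $\annlin_\lambda(\Theta(x_i))$ is pushed past $\cre(x_l)$, the annihilation branch is $n^{-1/2}\annlin_\lambda(\Xi_3(\Theta(x_i),x_l))$. It sits among the $k-2$ surviving creation operators and is summed over pairs $i<l$. Its amplitude depends on two input slots. The sum runs over pairs, not over all positions with one common single-slot map. So the term belongs neither to the family defining $\beta_{k-1}$ nor to the one defining $\beta_k$, and neither of your two hedged inequalities can be derived.

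Freezing $x_i$ to recover a single-slot map does not help. The new map $y\mapsto\Xi_3(\Theta(x_i),y)$ can have norm up to $C_\Xi\cins\|x_i\|$, which violates your constraint $\|\Theta\|\le\cins$. Also, the sum over $l$ then runs only over positions to the right of $i$. The class therefore degrades at every level.

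The growth $C_\Xi^m$ you worry about is not by itself fatal. Every regeneration carries a factor $n^{-1/2}\sqrt{s}$, and the hypothesis $\cins n^{-1/2}\sqrt N\le 1/2$ absorbs one factor of $\cins$ per level. What you need is a class in which each regeneration costs exactly one such factor.

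The missing idea is to make the annihilation amplitude a free tensor slot and to fix its position instead of summing over positions. Define $\mathsf{M}_{r,s}:\overline{\mathcal{K}_\hor}\otimes\mathcal{K}_\hor^{\otimes s}\otimes E_\lambda\to H_\lambda$ by
\[
\mathsf{M}_{r,s}(\bar u\otimes x_1\otimes\cdots\otimes x_s\otimes\xi)=\frac{1}{\sqrt{s!}}\cre(x_1)\cdots\cre(x_r)\,\annlin_\lambda(\bar u)\,\cre(x_{r+1})\cdots\cre(x_s)\xi,
\]
and set $m_s\coloneqq\max_{0\le r\le s}\|\mathsf{M}_{r,s}\|$. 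Then $m_0=0$ by Lemma~\ref{lem:properties_cre_ann}(ii).

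\begin{enumerate}[(i)]
\item Push $\annlin_\lambda(\bar u)$ to the right. The scalar, $\Dmat$ and $\cre(\Xi_2)$ branches are bounded by $\cins(\sqrt s+\tilde\varepsilon_0(s)\sqrt s+n^{-1}s^{3/2}+n^{-1/2}s)q_N\le\cins(1+C_{\mathrm{sc}})\sqrt s\,q_N$.
\item The annihilation branch is exactly $n^{-1/2}s^{-1/2}\sum_{i>r}\mathsf{M}_{i-1,s-1}\circ(\tilde\Xi_3\otimes I)\circ U_i$, where the $U_i$ are slot permutations. Here $\tilde\Xi_3:\overline{\mathcal{K}_\hor}\otimes\mathcal{K}_\hor\to\overline{\mathcal{K}_\hor}$ is the linear extension of $\Xi_3$, with $\|\tilde\Xi_3\|\le DC_\Xi\le\cins$. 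Because the amplitude is a free slot, iterating $\Xi_3$ costs only this one factor per level.
\item Combining (i) and (ii) gives $m_s\le\cins(1+C_{\mathrm{sc}})\sqrt s\,q_N+\cins n^{-1/2}\sqrt N\,m_{s-1}$. Using $\cins n^{-1/2}\sqrt N\le 1/2$ and induction on $s$, this yields $m_s\le 2\cins(1+C_{\mathrm{sc}})\sqrt s\,q_N$.
\item Finally, write $\mathsf{B}^{\mathrm{ann}}_{j,k}=k^{-1/2}\sum_{i=1}^k\mathsf{M}_{i-1,k-1}\circ g_{i,j}$, where $g_{i,j}(x_1\otimes\cdots\otimes x_k\otimes\xi)=\Xi_3(\bar e_j,x_i)\otimes x_1\otimes\cdots\widehat{x_i}\cdots\otimes x_k\otimes\xi$ and $\|g_{i,j}\|\le\cins$. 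Hence $\|\mathsf{B}^{\mathrm{ann}}_{j,k}\|\le\sqrt k\,\cins\,m_{k-1}\le 2\cins^2(1+C_{\mathrm{sc}})k\,q_N$.
\end{enumerate}
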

\begin{proof}
    For $0\leq r\leq s\leq N$, we define $\mathsf{M}_{r,s}:\overline{\mathcal{K}_\hor}\otimes \mathcal{K}_\hor^{\otimes s}\otimes E_\lambda \to H_\lambda$ on decomposable tensors by
    \begin{align}
        \mathsf{M}_{r,s}(\bar{u}\otimes x_1\otimes \cdots \otimes x_s\otimes \xi)\coloneqq \frac{1}{\sqrt{s!}}\cre(x_1)\cdots\cre(x_r)\annlin_\lambda(\bar{u})\cre (x_{r+1})\cdots \cre(x_s)\xi,
    \end{align}
    and set $m_s\coloneqq \max_{0\leq r\leq s}\|\mathsf{M}_{r,s}\|$. 

    We first prove 
    \begin{align}
        m_s\leq C_{\mathsf{M}} \sqrt{s} q_N\qquad C_{\mathsf{M}} \coloneqq 2\cins(1+C_{\mathrm{sc}})\label{eq:m_s_bound_ind}
    \end{align}
    for $0\leq s \leq N$ by induction, where $C_{\mathsf{M}}$ is a structural constant. For $s=0$, $m_s=0$ since $\mathsf{M}_{0,0}(\bar{u}\otimes \xi)=\annlin_\lambda(\bar{u})\xi=0$, and hence Eq.~\eqref{eq:m_s_bound_ind} trivially holds. Assume that Eq.~\eqref{eq:m_s_bound_ind} holds at $s-1$ with $s\geq 1$. For $0\leq r\leq s$, we have
    \begin{align}
        &\cre(x_1)\cdots\cre(x_r)\annlin_\lambda(\bar{u})\cre (x_{r+1})\cdots \cre(x_s)\xi\\
        &=\sum_{i=r+1}^s\cre(x_1)\cdots\cre(x_{i-1})[\annlin_\lambda(\bar{u}),\cre (x_{i})]\cre(x_{i+1})\cdots \cre(x_s)\xi\\
        &=\sum_{i=r+1}^s\cre(x_1)\cdots\cre(x_{i-1})\left(\braket{u,x_i}I+\Dmat(\bar{u},x_i)+n^{-1/2}\cre(\Xi_2(\bar{u},x_i))+n^{-1/2}\annlin_\lambda(\Xi_3(\bar{u},x_i))\right)\cre(x_{i+1})\cdots \cre(x_s)\xi.
    \end{align}
    We derive bound for each contributions.

    \textit{Scalar branch:} Define $f^{\mathrm{sc}}_i:\overline{\mathcal{K}}_\hor\otimes \mathcal{K}_\hor^{\otimes s}\otimes E_\lambda\to \mathcal{K}_\hor^{\otimes (s-1)}\otimes E_\lambda$ by $f^{\mathrm{sc}}_i(\bar{u}\otimes x_1\otimes\cdots\otimes x_s\otimes \xi)\coloneqq \braket{u,x_i} x_1\otimes\cdots\otimes\widehat{x_i}\cdots\otimes x_s\otimes \xi$. Then, the scalar branch is expressed as
    \begin{align}
        \frac{1}{\sqrt{s!}}\sum_{i=r+1}^s\cre(x_1)\cdots\cre(x_{i-1})\left(\braket{u,x_i}I\right)\cre(x_{i+1})\cdots \cre(x_s)\xi=s^{-1/2}\sum_{i=r+1}^s\word_{\lambda,s-1}\circ f^{\mathrm{sc}}_i(\bar{u}\otimes x_1\otimes \cdots \otimes x_s\otimes \xi).
    \end{align}
    For an orthonormal basis $\{e_j\}_{j=1}^D$ of $\mathcal{K}_\hor$, any vector $v\in\overline{\mathcal{K}}_\hor\otimes \mathcal{K}_\hor$ can be expanded as $v=\sum_{jk}c_{jk}\bar{e}_j\otimes e_k$. Since the linear pairing map $p:\overline{\mathcal{K}_{\hor}}\otimes \mathcal{K}_{\hor}\to \mathbb{C}$, defined by $p(\bar{u}\otimes x)=\braket{u,x}$ satisfies
    \begin{align}
        |p(v)|=\left|\sum_{jk}c_{jk}\braket{e_j,e_k}\right|=\left|\sum_{j}c_{jj}\right|\leq \sqrt{D}\left(\sum_j|c_{jj}|^2\right)^{1/2}\leq \sqrt{D}\|v\|,
    \end{align}
    we find $\|f^{\mathrm{sc}}_i\|\leq \sqrt{D}$. Therefore, 
    \begin{align}
       \left \|s^{-1/2}\sum_{i=r+1}^s\word_{\lambda,s-1}\circ f^{\mathrm{sc}}_i\right\|\leq \sqrt{D}s^{1/2}q_N\leq \cins s^{1/2}q_N.
    \end{align}

    \textit{Block-diagonal branch:} The block-diagonal branch can be expanded as
    \begin{align}
        &\frac{1}{\sqrt{s!}}\sum_{i=r+1}^s\cre(x_1)\cdots\cre(x_{i-1})\Dmat(\bar{u},x_i)\cre(x_{i+1})\cdots \cre(x_s)\xi\\
        &=\frac{1}{\sqrt{s!}}\sum_{i=r+1}^s\cre(x_1)\cdots\widehat{\cre(x_{i})}\cdots \cre(x_s)\Dmat(\bar{u},x_i)\xi\label{eq:bd_branch_term_part}\\
        &\quad +n^{-1}\frac{1}{\sqrt{s!}}\sum_{r+1\leq i <j\leq s}\cre(x_1)\cdots\widehat{\cre(x_{i})}\cdots \cre(x_{j-1})\cre(\Xi_4(\bar{u},x_i,x_{j}))\cre(x_{j+1})\cdots \cre(x_s)\xi.\label{eq:bd_branch_comm_part}
    \end{align}
    We evaluate each part separately. 
    \begin{itemize}
        \item Terminal part (Eq.~\eqref{eq:bd_branch_term_part}): Define a trilinear map $f_i^{(\text{diag,term})}:\overline{\mathcal{K}_\hor}\times \mathcal{K}_\hor\times E_\lambda\to E_\lambda$ by $f_i^{(\text{diag,term})}(\bar{u},x,\xi)\coloneqq \Dmat(\bar{u},x)\xi$ and let $ \tilde{f}_i^{(\text{diag,term})}:\overline{\mathcal{K}_\hor}\otimes \mathcal{K}_\hor\otimes E_\lambda\to E_\lambda$ be its linear extension.
        Using a permutation unitary $U_i(\bar{u}\otimes x_1\otimes \cdots\otimes x_s\otimes\xi)\coloneqq x_1\otimes \cdots\widehat{x_i}\cdots\otimes x_s\otimes \bar{u}\otimes x_i\otimes \xi$, the terminal part is expressed as
        \begin{align}
            &\frac{1}{\sqrt{s!}}\sum_{i=r+1}^s\cre(x_1)\cdots\widehat{\cre(x_{i})}\cdots \cre(x_s)\Dmat(\bar{u},x_i)\xi\\
            &=s^{-1/2}\left(\sum_{i=r+1}^s\word_{\lambda,s-1}\circ \left(I_{\mathcal{K}_{\hor}^{\otimes (s-1)}}\otimes \tilde{f}_i^{(\text{diag,term})}\right)\circ U_i\right)(\bar{u}\otimes x_1\otimes \cdots\otimes x_s\otimes\xi).
        \end{align}
        Since $\|\Dmat(\bar{u},x)\xi\|\leq \tilde{\varepsilon}_0(s)\|u\|\|x\|\|\xi\|$, its linear extension $ \tilde{f}_i^{(\text{diag,term})}$ satisfies $\|\tilde{f}_i^{(\text{diag,term})}\|\leq D\tilde{\varepsilon}_0(s)\leq \cins \tilde{\varepsilon}_0(s)$ by Lemma~\ref{lem:tensorization_multilin}. Therefore,
        \begin{align}
            \left\|s^{-1/2}\sum_{i=r+1}^s\word_{\lambda,s-1}\circ \left(I_{\mathcal{K}_{\hor}^{\otimes (s-1)}}\otimes \tilde{f}_i^{(\text{diag,term})}\right)\circ U_i\right\|\leq \cins s^{1/2}q_N\tilde{\varepsilon}_0(s).
        \end{align}

        \item Commutator part (Eq.~\eqref{eq:bd_branch_comm_part}): Using a permutation unitary $U_{i,l}(\bar{u}\otimes x_1\otimes \cdots\otimes x_s\otimes\xi)\coloneqq x_1\otimes \cdots\widehat{x_i}\cdots\otimes x_{l-1}\otimes (\bar{u}\otimes x_i\otimes x_l)\otimes x_{l+1}\otimes \cdots\otimes x_s\otimes \xi$, we have
        \begin{align}
            &n^{-1}\frac{1}{\sqrt{s!}}\sum_{r+1\leq i <l\leq s}\cre(x_1)\cdots\widehat{\cre(x_{i})}\cdots \cre(x_{l-1})\cre(\Xi_4(\bar{u},x_i,x_{l}))\cre(x_{l+1})\cdots \cre(x_s)\xi\\
            &=n^{-1}s^{-1/2}\left(\sum_{r+1\leq i <l\leq s}\word_{\lambda, s-1}\circ (I_{\mathcal{K}_\hor^{\otimes (l-2)}}\otimes \tilde{\Xi}_4\otimes I_{\mathcal{K}_\hor^{\otimes (s-l)}\otimes E_\lambda})\circ U_{i,l}\right)(\bar{u}\otimes x_1\otimes \cdots\otimes x_s\otimes\xi),
        \end{align}
        where $ \tilde{\Xi}_4 $ denotes the linear extension of $\Xi_4$. Since $\|\tilde{\Xi}_4\|\leq D^{3/2}C_\Xi\leq \cins$, we have
        \begin{align}
            \left\|n^{-1}s^{-1/2}\left(\sum_{r+1\leq i <l\leq s}\word_{\lambda, s-1}\circ(I_{\mathcal{K}_\hor^{\otimes (l-2)}}\otimes \tilde{\Xi}_4\otimes I_{\mathcal{K}_\hor^{\otimes (s-l)}\otimes E_\lambda})\circ U_{i,l}\right)\right\|\leq \cins n^{-1}s^{3/2}q_N.
        \end{align}
    \end{itemize}
    Thus, the norm of block-diagonal branch is bounded by $\cins(\tilde{\varepsilon}_0(s)\sqrt{s}+n^{-1}s^{3/2})q_N$. 

    \textit{Creation-operator branch:} Defining a permutation unitary $U_i'(\bar{u}\otimes x_1\otimes \cdots x_s\otimes \xi)\coloneqq x_1\otimes\cdots \otimes x_{i-1}\otimes(\bar{u}\otimes x_i)\otimes x_{i+1}\otimes \cdots \otimes x_s\otimes \xi $, 
    \begin{align}
        &n^{-1/2}\frac{1}{\sqrt{s!}}\sum_{i=r+1}^s\cre(x_1)\cdots\cre(x_{i-1})\cre(\Xi_2(\bar{u},x_i))\cre(x_{i+1})\cdots \cre(x_s)\xi\\
        &=n^{-1/2}\sum_{i=r+1}^s\left(\word_{\lambda,s}\circ \left(I_{\mathcal{K}_\hor^{\otimes (i-1)}}\otimes \tilde{\Xi}_2\otimes I_{\mathcal{K}_\hor^{\otimes (s-i)}\otimes E_\lambda}\right)\circ U'_i\right)(\bar{u}\otimes x_1\otimes \cdots x_s\otimes \xi),
    \end{align}
    where $\tilde{\Xi}_2$ denotes the linear extension of $\Xi_2$. Since $\|\tilde{\Xi}_2\|\leq DC_{\Xi}\leq \cins$, we get
    \begin{align}
        n^{-1/2}\left\|\sum_{i=r+1}^s\word_{\lambda,s}\circ \left(I_{\mathcal{K}_\hor^{\otimes (i-1)}}\otimes \tilde{\Xi}_2\otimes I_{\mathcal{K}_\hor^{\otimes (s-i)}\otimes E_\lambda}\right)\circ U'_i\right\|\leq \cins n^{-1/2}sq_N.
    \end{align}

    \textit{Annihilation-operator branch:} Defining a permutation unitary $U''_i(\bar{u}\otimes x_1\otimes\cdots\otimes x_s\otimes\xi)\coloneqq \bar{u}\otimes x_i\otimes x_1\otimes\cdots \widehat{x_i}\cdots\otimes x_s\otimes \xi$, 
    \begin{align}
        &n^{-1/2}\frac{1}{\sqrt{s!}}\sum_{i=r+1}^s\cre(x_1)\cdots\cre(x_{i-1})\annlin_\lambda(\Xi_3(\bar{u},x_i))\cre(x_{i+1})\cdots \cre(x_s)\xi\\
        &=n^{-1/2}s^{-1/2}\left(\sum_{i=r+1}^s\mathsf{M}_{i-1,s-1}\circ \left(\tilde{\Xi}_3\otimes I_{\mathcal{K}_\hor^{\otimes s-1}\otimes E_\lambda}\right)\circ U''_i\right)(\bar{u}\otimes x_1\otimes\cdots\otimes x_s\otimes\xi),
    \end{align}
    where $\tilde{\Xi}_3$ is a linear extension of $\Xi_3$. Since $\|\tilde{\Xi}_3\|\leq DC_\Xi\leq \cins$, 
    \begin{align}
        n^{-1/2}s^{-1/2}\left\|\sum_{i=r+1}^s\mathsf{M}_{i-1,s-1}\circ \left(\tilde{\Xi}_3\otimes I_{\mathcal{K}_\hor^{\otimes s-1}\otimes E_\lambda}\right)\circ U''_i\right\|\leq \cins n^{-1/2}\sqrt{s}m_{s-1}.
    \end{align}

    Summarizing these four contributions, by using $s\leq N$, 
    \begin{align}
        m_s&\leq \cins (\sqrt{s}+\tilde{\varepsilon}_0(s)\sqrt{s}+n^{-1}s^{3/2}+n^{-1/2}s)q_N+\cins n^{-1/2}\sqrt{s}m_{s-1}\\
        &\leq  \cins(1 +C_{\mathrm{sc}})\sqrt{s}q_N+\cins n^{-1/2}\sqrt{N}m_{s-1}\\
        &\leq \frac{1}{2}C_{\mathsf{M}}\sqrt{s}q_N+\frac{1}{2}m_{s-1}.
    \end{align}
    From the assumption of the mathematical induction, $m_{s-1}\leq C_{\mathsf{M}}\sqrt{s-1}q_N\leq C_{\mathsf{M}}\sqrt{s}q_N$. Thus,
    \begin{align}
        m_s\leq   \frac{1}{2}C_{\mathsf{M}}\sqrt{s}q_N+\frac{1}{2}C_{\mathsf{M}}\sqrt{s}q_N=C_{\mathsf{M}}\sqrt{s}q_N.
    \end{align}

    Now, by using this inequality, we evaluate the norm of $\|\mathsf{B}^{\mathrm{ann}}_{j,k}\|$. Define $g_{i,j}:\mathcal{K}_\hor^{\otimes k}\otimes E_\lambda\to \overline{\mathcal{K}_\hor}\otimes \mathcal{K}_\hor^{\otimes (k-1)}\otimes E_\lambda$ by
    \begin{align}
        g_{i,j}(x_1\otimes\cdots\otimes x_k\otimes \xi)\coloneqq \Xi_3(\bar{e}_j,x_i)\otimes x_1\otimes \cdots \widehat{x_i}\cdots \otimes x_k\otimes \xi.
    \end{align}
    Since $\|\Xi_3(\bar{e}_j,x_i)\|\leq C_\Xi\|x_i\|\leq \cins \|x_i\|$, $\|g_{i,j}\|\leq \cins$. Since $\mathsf{B}^{\mathrm{ann}}_{j,k}=k^{-1/2}\sum_{i=1}^k\mathsf{M}_{i-1,k-1}\circ g_{i,j}$, we get
    \begin{align}
        \|\mathsf{B}^{\mathrm{ann}}_{j,k}\|\leq k^{1/2}m_{k-1}\cins\leq C_{\mathsf{M}}\cins kq_N=2\cins^2(1+C_{\mathrm{sc}})kq_N.
    \end{align}
\end{proof}

We finally obtain the following.
\begin{proposition}\label{prop:intw_ann}
    Under the assumption in Eq.~\eqref{eq:standing_assumption}, for $1\leq k \leq N$ and $1\leq j \leq D$, there exists an operator $\mathsf{E}_{j,k}^{-}:\Sym^k\mathcal{K}_\hor\otimes E_\lambda\to H_\lambda$ such that
    \begin{align}
        \annj\symword_{\lambda,k}=\symword_{\lambda,k-1}a_j+\mathsf{E}_{j,k}^{-},\quad \|\mathsf{E}_{j,k}^{-}\|\leq C_-\left(\sqrt{k} \tilde{\varepsilon}_0(k)+n^{-1}k^{3/2}+n^{-1/2}k\right)q_N\label{eq:ann_intw}
    \end{align}
    for some structural constant $C_-$.
\end{proposition}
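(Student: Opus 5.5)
The plan is to read off Proposition~\ref{prop:intw_ann} from the decomposition in Eq.~\eqref{eq:aw_wa}, restricted to the symmetric sector, and then bound the three branch maps with Lemmas~\ref{lem:Bcre}, \ref{lem:Bdiag} and~\ref{lem:Bann}.

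\textbf{Step 1 (identification).} Fix $u\in\Sym^k\mathcal{K}_\hor\otimes E_\lambda$. By definition $\symword_{\lambda,k}u=\word_{\lambda,k}u$. Since Eq.~\eqref{eq:aw_wa} holds on decomposable tensors and all maps are linear, it extends to all of $\mathcal{K}_\hor^{\otimes k}\otimes E_\lambda$, so
\begin{align}
    \annj\symword_{\lambda,k}u=\word_{\lambda,k-1}a_j^{[k]}u+\Big(\mathsf{B}^{\mathrm{diag}}_{j,k}+n^{-1/2}\mathsf{B}^{\mathrm{cre}}_{j,k}+n^{-1/2}\mathsf{B}^{\mathrm{ann}}_{j,k}\Big)u.
\end{align}
The restriction of $a_j^{[k]}$ to the symmetric sector is the Fock annihilation operator $a_j\otimes I_{E_\lambda}$, which maps into $\Sym^{k-1}\mathcal{K}_\hor\otimes E_\lambda$. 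Hence $\word_{\lambda,k-1}a_j^{[k]}u=\symword_{\lambda,k-1}a_ju$. I will therefore define $\mathsf{E}^-_{j,k}$ as the restriction of $\mathsf{B}^{\mathrm{diag}}_{j,k}+n^{-1/2}(\mathsf{B}^{\mathrm{cre}}_{j,k}+\mathsf{B}^{\mathrm{ann}}_{j,k})$ to $\Sym^k\mathcal{K}_\hor\otimes E_\lambda$. Restriction does not increase the operator norm.

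\textbf{Step 2 (bounds).} The standing assumption Eq.~\eqref{eq:standing_assumption} implies the hypothesis Eq.~\eqref{eq:assumptions_Bann}, because $k^{1/2}n^{-1/2}\le kn^{-1/2}$ for $k\ge1$. Applying the three lemmas:
\begin{itemize}
    \item Lemma~\ref{lem:Bdiag} (first form) gives $\|\mathsf{B}^{\mathrm{diag}}_{j,k}\|\le \cins(\sqrt{k}\,\tilde{\varepsilon}_0(k)+n^{-1}k^{3/2})q_N$.
    \item Lemma~\ref{lem:Bcre} gives $n^{-1/2}\|\mathsf{B}^{\mathrm{cre}}_{j,k}\|\le \cins n^{-1/2}kq_N$.
    \item Lemma~\ref{lem:Bann} gives $n^{-1/2}\|\mathsf{B}^{\mathrm{ann}}_{j,k}\|\le 2\cins^2(1+C_{\mathrm{sc}})n^{-1/2}kq_N$.
\end{itemize}
Summing these yields Eq.~\eqref{eq:ann_intw} with $C_-\coloneqq \cins+2\cins^2(1+C_{\mathrm{sc}})$, which is a structural constant.

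\textbf{Main obstacle.} The only delicate point is Step 1: we need the symmetric restriction of the unsymmetrized contraction $a_j^{[k]}$ to coincide exactly with $a_j\otimes I_{E_\lambda}$ under the paper's normalization, including the factor $1/\sqrt{k}$ and the antilinearity convention for $\braket{e_j,\cdot}$. To check this, I would verify it on symmetrized product vectors $P_\sym^{(k)}(x_1\otimes\cdots\otimes x_k)$, comparing against the convention recalled in Appendix~\ref{app:Fock_space_review}. Everything else follows directly from the earlier lemmas.
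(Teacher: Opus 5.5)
Your proposal is correct and follows the paper's proof essentially step for step. Both arguments restrict Eq.~\eqref{eq:aw_wa} to the symmetric sector and note that the standing assumption implies Eq.~\eqref{eq:assumptions_Bann}. They then bound the three branch maps with Lemmas~\ref{lem:Bcre}, \ref{lem:Bdiag} and~\ref{lem:Bann}, arriving at the same constant $C_-=\cins+2\cins^2(1+C_{\mathrm{sc}})$. Your check that $a_j^{[k]}$ restricted to $\Sym^k\mathcal{K}_\hor\otimes E_\lambda$ equals $a_j\otimes I_{E_\lambda}$ goes through. For a symmetric tensor, contracting any single slot gives the same result, so $\tfrac{1}{\sqrt{k}}\sum_i$ of the slot contractions equals $\sqrt{k}$ times the first-slot contraction, which is the appendix convention. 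The paper simply asserts this identification.
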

\begin{proof}
    Since $\sqrt{k}\leq k$ for $k\geq 1$, the standing assumptions in Eq.~\eqref{eq:standing_assumption} imply the assumptions of Lemma~\ref{lem:Bann} in Eq.~\eqref{eq:assumptions_Bann}.
    
    From Eq.~\eqref{eq:aw_wa},
    \begin{align}
    \annj\word_{\lambda,k}-\word_{\lambda,k-1}a_j^{[k]}=\mathsf{B}^{\mathrm{diag}}_{j,k}+n^{-1/2}\mathsf{B}^{\mathrm{cre}}_{j,k}+n^{-1/2}\mathsf{B}^{\mathrm{ann}}_{j,k}.
    \end{align}
    From Lemmas~\ref{lem:Bcre}, \ref{lem:Bdiag} and \ref{lem:Bann}, 
    \begin{align}
        \left\|\annj\word_{\lambda,k}-\word_{\lambda,k-1}a_j^{[k]}\right\|
        &\leq \left\|\mathsf{B}^{\mathrm{diag}}_{j,k}\right\|+\left\|n^{-1/2}\mathsf{B}^{\mathrm{cre}}_{j,k}\right\|+\left\|n^{-1/2}\mathsf{B}^{\mathrm{ann}}_{j,k}\right\|\\
        &\leq \cins\left(\sqrt{k} \tilde{\varepsilon}_0(k)+n^{-1}k^{3/2}\right)q_N+n^{-1/2} \cins k q_N+ 2\cins^2(1+C_{\mathrm{sc}})n^{-1/2}kq_N\\
        &\leq C_-\left(\sqrt{k} \tilde{\varepsilon}_0(k)+n^{-1}k^{3/2}+n^{-1/2}k\right)q_N,
    \end{align}
    where we have introduced a constant $C_-\coloneqq 2\cins^2(1+C_{\mathrm{sc}})+\cins$. Since
    \begin{align}
        \annj\word_{\lambda,k}\biggl|_{\Sym^k\mathcal{K}_\hor\otimes E_\lambda}=\annj\symword_{\lambda,k},\quad \word_{\lambda,k-1}a_j^{[k]}\biggl|_{\Sym^k\mathcal{K}_\hor\otimes E_\lambda}=\symword_{\lambda,k-1}a_j,
    \end{align}
    we obtain Eq.~\eqref{eq:ann_intw} with 
    \begin{align}
        \mathsf{E}_{j,k}^{-}\coloneqq \left(\mathsf{B}^{\mathrm{diag}}_{j,k}+n^{-1/2}\mathsf{B}^{\mathrm{cre}}_{j,k}+n^{-1/2}\mathsf{B}^{\mathrm{ann}}_{j,k}\right)\biggl|_{\Sym^k\mathcal{K}_\hor\otimes E_\lambda}.
    \end{align}
\end{proof}

\subsubsection{Near-orthogonality of the excitation sectors}
Throughout this subsubsection, the standing assumptions in Eq.~\eqref{eq:standing_assumption} remain in force. 
We here show that excitation vectors of the same degree have almost the Fock-space inner product, while excitation sectors of different degrees are nearly orthogonal. The argument uses only the approximate intertwining relations for the creation and annihilation operators proved in the preceding two subsubsections, i.e., Propositions~\ref{prop:intw_cre} and \ref{prop:intw_ann}, which prove
\begin{align}
        \crej \symword_{\lambda,k-1}&=\symword_{\lambda,k}a_j^\dag+\mathsf{E}_{j,k}^+,\qquad \|\mathsf{E}_{j,k}^+\|\leq  C_+n^{-1/2}k^{2}q_{N},\label{eq:intw_cre}\\
        \annj\symword_{\lambda,k}&=\symword_{\lambda,k-1}a_j+\mathsf{E}_{j,k}^{-},\quad \|\mathsf{E}_{j,k}^{-}\|\leq C_-\left(\sqrt{k} \tilde{\varepsilon}_0(k)+n^{-1}k^{3/2}+n^{-1/2}k\right)q_N.\label{eq:intw_ann}
\end{align}

For notational simplicity, we introduce
\begin{align}
    \varepsilon(k)\coloneqq n^{-1/2}k^{2}+\sqrt{k} \tilde{\varepsilon}_0(k)+n^{-1}k^{3/2}+n^{-1/2}k,\qquad 1\leq k\leq N
\end{align}
and $C\coloneqq C_++C_-$ so that
\begin{align}
    \|\mathsf{E}_{j,k}^{\pm}\|\leq C\varepsilon(k)q_N.\label{eq:bound_E_pm}
\end{align}
We also define
\begin{align}
    \mathscr{E}_N\coloneqq \sum_{k=1}^N\varepsilon(k), \quad \varepsilon_N^*\coloneqq \max_{1\leq k\leq N}\varepsilon(k).
\end{align}

In order to evaluate $\Delta_N =\|\symword_{\lambda,\leq N}^\dag \symword_{\lambda,\leq N}-I\|$, we define
\begin{align}
    \Gamma_{r,s}\coloneqq \symword_{\lambda,r}^\dag \symword_{\lambda,s}-\delta_{r,s}I_{\Sym^r\mathcal{K}_\hor\otimes E_\lambda}
\end{align}
so that the $(r,s)$-block of $\symword_{\lambda,\leq N}^\dag \symword_{\lambda,\leq N}-I$ is $\Gamma_{r,s}$. We separately derive an upper bound for $\Gamma_{r,s}$ in the three contributions, namely, (i) diagonal block: $r=s$, (ii) vacuum sector: $r=0$ or $s=0$, (iii) off-diagonal block: $1\leq r,s\leq N$ and $r\neq s$. 

\textit{Case (i): Diagonal block.}
\begin{lemma}\label{lem:diagonal_block}
    For $\Delta_k^{\mathrm{diag}}\coloneqq \|\Gamma_{k,k}\|$, $\Delta_k^{\mathrm{diag}}\leq \Delta_{k-1}^{\mathrm{diag}}+C_{\mathrm{diag}}\varepsilon(k)q_N^2$ with a structural constant $C_{\mathrm{diag}}$. Consequently,
    \begin{align}
        \max_{0\leq k\leq N}\Delta_k^{\mathrm{diag}}\leq C_{\mathrm{diag}} \mathscr{E}_N q_N^2.
    \end{align}
\end{lemma}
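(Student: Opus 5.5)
We argue by induction on $k$, with $\Delta_0^{\mathrm{diag}}=0$ because $\symword_{\lambda,0}$ acts as the identity on $E_\lambda$, so $\Gamma_{0,0}=0$. The approach is to write the $k$-particle Gram block $\symword_{\lambda,k}^\dag\symword_{\lambda,k}$ in terms of the $(k-1)$-particle block. Fock creation operators generate $\Sym^k\mathcal{K}_\hor\otimes E_\lambda$ from $\Sym^{k-1}\mathcal{K}_\hor\otimes E_\lambda$ via the bosonic identity
\begin{align}
    \frac{1}{k}\sum_{j=1}^D a_j^\dag a_j = I \quad\text{on }\Sym^k\mathcal{K}_\hor\otimes E_\lambda,
\end{align}
that is, $N=\sum_j a_j^\dag a_j$ is the number operator. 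We therefore write
\begin{align}
    \symword_{\lambda,k}^\dag\symword_{\lambda,k}=\frac{1}{k}\sum_{j=1}^D\symword_{\lambda,k}^\dag\symword_{\lambda,k}a_j^\dag a_j,
\end{align}
restricted to the $k$-sector.

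The first step replaces $\symword_{\lambda,k}a_j^\dag$ by $\crej\symword_{\lambda,k-1}-\mathsf{E}^+_{j,k}$ using Eq.~\eqref{eq:intw_cre}. This gives
\begin{align}
    \symword_{\lambda,k}^\dag\symword_{\lambda,k}a_j^\dag=\symword_{\lambda,k}^\dag\crej\symword_{\lambda,k-1}-\symword_{\lambda,k}^\dag\mathsf{E}^+_{j,k}.
\end{align}
Since $\crej^\dag=\annj$ by Lemma~\ref{lem:properties_cre_ann}(i), we have $\symword_{\lambda,k}^\dag\crej=(\annj\symword_{\lambda,k})^\dag$. Applying Eq.~\eqref{eq:intw_ann} turns this into $a_j^\dag\symword_{\lambda,k-1}^\dag+(\mathsf{E}^-_{j,k})^\dag$. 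Combining the two steps,
\begin{align}
    \symword_{\lambda,k}^\dag\symword_{\lambda,k}=\frac{1}{k}\sum_j a_j^\dag\,\symword_{\lambda,k-1}^\dag\symword_{\lambda,k-1}\,a_j+R_k,
\end{align}
where $R_k$ collects the terms $\frac1k\sum_j[(\mathsf{E}^-_{j,k})^\dag\symword_{\lambda,k-1}a_j-\symword_{\lambda,k}^\dag\mathsf{E}^+_{j,k}a_j]$. Subtracting the identity and using $\frac1k\sum_j a_j^\dag a_j=I$ once more gives
\begin{align}
    \Gamma_{k,k}=\frac1k\sum_j a_j^\dag\,\Gamma_{k-1,k-1}\,a_j+R_k.
\end{align}

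The main term is controlled by a positivity argument. Writing $\Gamma_{k-1,k-1}\le\Delta_{k-1}^{\mathrm{diag}}I$ and $\Gamma_{k-1,k-1}\ge-\Delta_{k-1}^{\mathrm{diag}}I$, conjugation by $a_j$ preserves these operator inequalities. Summing and using the number-operator identity yields
\begin{align}
    \Bigl\|\frac1k\sum_j a_j^\dag\Gamma_{k-1,k-1}a_j\Bigr\|\le\Delta_{k-1}^{\mathrm{diag}}.
\end{align}
This is the key step, because it propagates the error without any $k$-dependent amplification.

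For the remainder $R_k$, the bound $\|a_j|_{\Sym^k}\|\le\sqrt k$ alone would be lossy. Instead we bound the sum over $j$ by Cauchy--Schwarz. For $u$ in the $k$-sector, $\sum_j\|a_ju\|^2=k\|u\|^2$, hence
\begin{align}
    \Bigl\|\sum_j B_j a_j u\Bigr\|\le\Bigl(\sum_j\|B_j\|^2\Bigr)^{1/2}\sqrt{k}\,\|u\|.
\end{align}
With $\|\mathsf{E}^\pm_{j,k}\|\le C\varepsilon(k)q_N$ from Eq.~\eqref{eq:bound_E_pm}, and $\|\symword_{\lambda,k-1}\|,\|\symword_{\lambda,k}\|\le q_N$ (as restrictions of $\word_{\lambda,\cdot}$), each piece is at most
\begin{align}
    \frac1k\sqrt D\,C\varepsilon(k)q_N\cdot q_N\sqrt k\le \sqrt D\,C\varepsilon(k)q_N^2.
\end{align}
Hence $\|R_k\|\le 2\sqrt D\,C\varepsilon(k)q_N^2$, and we may take $C_{\mathrm{diag}}\coloneqq 2\sqrt D\,C$, a structural constant. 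Summing the recursion from $\Delta_0^{\mathrm{diag}}=0$ gives the stated bound with $\mathscr{E}_N$.

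The only real obstacle is the contraction step for the main term. We must make sure that $a_j$ maps $\Sym^k\mathcal{K}_\hor\otimes E_\lambda$ into $\Sym^{k-1}\mathcal{K}_\hor\otimes E_\lambda$, where $\Gamma_{k-1,k-1}$ is defined, and that the adjoint $(\mathsf{E}^-_{j,k})^\dag$ is taken with respect to the correct domains. Both follow from the definitions restricted to the symmetric sectors.
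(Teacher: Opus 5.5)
Your proof is correct and follows essentially the paper's argument. The paper bounds the quadratic form $\bigl|\|\symword_{\lambda,k}u\|^2-1\bigr|$ for unit $u$ using the same ingredients: $\sum_j a_j^\dag a_j=k$, the two intertwining relations with $\annj=\crej^\dag$, the inductive bound via $\Delta_{k-1}^{\mathrm{diag}}$, and Cauchy--Schwarz with $\sum_j\|a_ju\|^2=k$, and it arrives at the same constant $C_{\mathrm{diag}}=2C\sqrt{D}$. Your identity $\Gamma_{k,k}=\frac1k\sum_j a_j^\dag\Gamma_{k-1,k-1}a_j+R_k$, together with the conjugation bound, is simply the operator-level form of that computation.
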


\begin{proof}
    Since $\Gamma_{k,k}=\symword_{\lambda,k}^\dag \symword_{\lambda,k}-I$ is self-adjoint,
    \begin{align}
        \Delta_k^{\mathrm{diag}}=\sup_{u\in\Sym^k\mathcal{K}_\hor\otimes E_\lambda,\,\|u\|=1}\left|\|\symword_{\lambda,k}u\|^2-1\right|.
    \end{align}

    Fix any unit vector $u\in\Sym^k\mathcal{K}_\hor\otimes E_\lambda$. Since
    \begin{align}
        \sum_{j=1}^Da_j^\dag a_j=kI\quad \text{on }\Sym^k\mathcal{K}_\hor\otimes E_\lambda, \label{eq:num_op_identity}
    \end{align}
    we have
    \begin{align}
        k\|\symword_{\lambda,k}u\|^2&=k\braket{\symword_{\lambda,k}u,\symword_{\lambda,k}u}=\sum_{j=1}^D\braket{\symword_{\lambda,k}u,\symword_{\lambda,k}a_j^\dag a_j u}\\
        &=\sum_{j=1}^D\braket{\symword_{\lambda,k}u,(\crej\symword_{\lambda,k-1}-\mathsf{E}_{j,k}^{+})u_j}      &&(\text{Eq.}~\eqref{eq:intw_cre},\,\,u_j\coloneqq a_ju)\\
        &=\sum_{j=1}^D\braket{\annj\symword_{\lambda,k}u,\symword_{\lambda,k-1} u_j} -\sum_{j=1}^D\braket{\symword_{\lambda,k}u,\mathsf{E}_{j,k}^{+}u_j}   &&(\annj=\crej^\dag)\\
        &=\sum_{j=1}^D\braket{(\symword_{\lambda,k-1}a_j+\mathsf{E}_{j,k}^{-})u,\symword_{\lambda,k-1} u_j} -\sum_{j=1}^D\braket{\symword_{\lambda,k}u,\mathsf{E}_{j,k}^{+}u_j}   &&(\text{Eq.}~\eqref{eq:intw_ann})\\
        &=\sum_{j=1}^D\|\symword_{\lambda,k-1}u_j\|^2+\sum_{j=1}^D\left(\braket{\mathsf{E}_{j,k}^-u,\symword_{\lambda,k-1}u_j}-\braket{\symword_{\lambda,k}u,\mathsf{E}_{j,k}^{+}u_j}\right).\label{eq:k_sym_u}
    \end{align}
    By using $\|\symword_{\lambda,k}\|,\|\symword_{\lambda,k-1}\|\leq q_N$ and $\|\mathsf{E}_{j,k}^{\pm}\|\leq C\varepsilon(k)q_N$, we find
    \begin{align}
       \left| \sum_{j=1}^D\left(\braket{\mathsf{E}_{j,k}^-u,\symword_{\lambda,k-1}u_j}-\braket{\symword_{\lambda,k}u,\mathsf{E}_{j,k}^{+}u_j}\right)\right|\leq  2C\varepsilon(k)q_N^2\sum_{j=1}^D\|u_j\|.
    \end{align}
    Since 
    \begin{align}
        \sum_{j=1}^D\|u_j\|^2= \sum_{j=1}^D\braket{u,a_j^\dag a_j u}= k\|u\|^2=k,
    \end{align}
    by using Cauchy--Schwarz inequality, we obtain
    \begin{align}
        \left| \sum_{j=1}^D\left(\braket{\mathsf{E}_{j,k}^-u,\symword_{\lambda,k-1}u_j}-\braket{\symword_{\lambda,k}u,\mathsf{E}_{j,k}^{+}u_j}\right)\right|\leq  2C\varepsilon(k)q_N^2\sqrt{kD}.\label{eq:residual_Ju}
    \end{align}
    On the other hand, by the definition of $\Delta_{k-1}^{\mathrm{diag}}$, 
    \begin{align}
        \left|\sum_{j=1}^D\|\symword_{\lambda,k-1}u_j\|^2-k\right|=\left|\sum_{j=1}^D(\|\symword_{\lambda,k-1}u_j\|^2-\|u_j\|^2)\right|\leq \Delta_{k-1}^{\mathrm{diag}}\sum_j\|u_j\|^2=k\Delta_{k-1}^{\mathrm{diag}}.
    \end{align}
    Therefore, 
    \begin{align}
        |\|\symword_{\lambda,k}u\|^2-1|&=\frac{1}{k}|k\|\symword_{\lambda,k}u\|^2-k|\\
        &\leq \frac{1}{k}\left(\left|\sum_{j=1}^D\|\symword_{\lambda,k-1}u_j\|^2-k\right|+\left| \sum_{j=1}^D\left(\braket{\mathsf{E}_{j,k}^-u,\symword_{\lambda,k-1}u_j}-\braket{\symword_{\lambda,k}u,\mathsf{E}_{j,k}^{+}u_j}\right)\right|\right)\\
        &\leq \Delta_{k-1}^{\mathrm{diag}}+2Ck^{-1/2}\varepsilon(k)q_N^2\sqrt{D}\leq \Delta_{k-1}^{\mathrm{diag}}+C_{\mathrm{diag}}\varepsilon(k)q_N^2 \qquad\qquad (k^{-1/2}\leq 1,\, C_{\mathrm{diag}}\coloneqq 2C\sqrt{D}).
    \end{align}
    Taking the supremum over unit vector $u$, we obtain
    \begin{align}
        \Delta_{k}^{\mathrm{diag}}\leq \Delta_{k-1}^{\mathrm{diag}}+C_{\mathrm{diag}}\varepsilon(k)q_N^2 .
    \end{align}
    Since $\Delta_0^{\mathrm{diag}}=0$, iterating the inequality up $k\leq N$ gives
    \begin{align}
        \Delta_{k}^{\mathrm{diag}}\leq C_{\mathrm{diag}}\sum_{m=1}^k\varepsilon(m)q_N^2\leq   C_{\mathrm{diag}}\mathscr{E}_Nq_N^2.
    \end{align}
    Therefore, $\max_{0\leq k \leq N}\Delta_{k}^{\mathrm{diag}}\leq C_{\mathrm{diag}}\mathscr{E}_Nq_N^2$. 
\end{proof}

\textit{Case (ii): Vacuum sector.}

\begin{lemma}\label{lem:vacuum_sector}
    For $r\geq 1$, $\symword_{\lambda,0}^\dag \symword_{\lambda,r}=0$ and $\symword_{\lambda,r}^\dag \symword_{\lambda,0}=0$. 
\end{lemma}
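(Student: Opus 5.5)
The plan is to prove $\symword_{\lambda,r}^\dag\symword_{\lambda,0}=0$ directly. The second identity then follows by taking adjoints, since $\symword_{\lambda,0}^\dag\symword_{\lambda,r}=(\symword_{\lambda,r}^\dag\symword_{\lambda,0})^\dag$. Here $\symword_{\lambda,0}$ is just the inclusion $E_\lambda\hookrightarrow H_\lambda$, because $\word_{\lambda,0}(\xi)=\xi$. So it suffices to show that for every $\xi\in E_\lambda$ and every $u\in\Sym^r\mathcal{K}_\hor\otimes E_\lambda$ we have $\braket{\symword_{\lambda,r}u,\xi}=0$.

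By linearity it is enough to treat decomposable tensors $x_1\otimes\cdots\otimes x_r\otimes\eta$ in $\mathcal{K}_\hor^{\otimes r}\otimes E_\lambda$. This works because $\symword_{\lambda,r}$ is the restriction of $\word_{\lambda,r}$, and the vanishing will hold for $\word_{\lambda,r}$ on the whole unsymmetrized space. For such a tensor the inner product is
\begin{align}
\braket{\word_{\lambda,r}(x_1\otimes\cdots\otimes x_r\otimes\eta),\xi}=\frac{1}{\sqrt{r!}}\braket{\cre(x_1)\cdots\cre(x_r)\eta,\xi}.
\end{align}
Using $\cre(x_1)^\dag=\ann(x_1)$ from (i) of Lemma~\ref{lem:properties_cre_ann}, this equals $\frac{1}{\sqrt{r!}}\braket{\cre(x_2)\cdots\cre(x_r)\eta,\ann(x_1)\xi}$. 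Since $r\geq 1$, the operator $\ann(x_1)$ is genuinely present, and (ii) of Lemma~\ref{lem:properties_cre_ann} gives $\ann(x_1)\xi=0$ for $\xi\in E_\lambda$. Hence every such inner product vanishes.

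If one prefers a structural argument, it runs as follows. By Lemma~\ref{lem:wt_filt}, vectors in $\cre(x_1)\cdots\cre(x_r)E_\lambda$ with block-adapted $x_i$ have strictly lowered block occupation, so they are orthogonal to the weight spaces of $E_\lambda$. The adjoint argument above is shorter, though, and I will use it.

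There is no real obstacle here. The only point needing care is that $\ann$ is antilinear while the pairing with $\xi$ is sesquilinear. This is harmless, because the zero vector is produced regardless of coefficients. The sole requirement is that at least one creation operator is present, which is exactly the hypothesis $r\geq1$.
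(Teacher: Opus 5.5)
Your proof is correct, but it takes a different route from the paper. You move one creation operator across the inner product, $\braket{\cre(x_1)w,\xi}=\braket{w,\ann(x_1)\xi}$, and then use the vacuum-annihilation property $\ann(x_1)E_\lambda=0$. This is Lemma~\ref{lem:properties_cre_ann}(ii), which in turn comes from $\mathfrak{n}_+E_\lambda=0$ in Lemma~\ref{lem:ground_component}. Your argument is the literal finite-dimensional copy of the Fock-space computation $\braket{a^\dag(x)\psi,\ket{0}}=\braket{\psi,a(x)\ket{0}}=0$.

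The paper instead grades $H_\lambda$ by the Hermitian operator $\dd\pi_\lambda(\rho_0)$. The subspace $E_\lambda$ lies in its eigenspace with eigenvalue $\theta_\lambda=\sum_{a\in\mathcal{A}_+}\zeta_a|\lambda^{(a)}|$. Each block-adapted creation operator satisfies $[\dd\pi_\lambda(\rho_0),\cre(x_i)]=-(\zeta_{a_i}-\zeta_{b_i})\cre(x_i)$. Hence $\cre(x_1)\cdots\cre(x_r)\xi$ lies in an eigenspace with strictly smaller eigenvalue, and is therefore orthogonal to $E_\lambda$.

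Your route needs only the adjoint relation and the highest-weight property; it uses no spectral information about $\rho_0$. The paper's route needs neither of those. Instead it uses the ladder structure of $\rho_0$, the same mechanism behind $\pi_\lambda(\rho_0)\cre(x)=\frac{\zeta_b}{\zeta_a}\cre(x)\pi_\lambda(\rho_0)$ exploited in Theorem~\ref{thm:unnormalized_states_embedding}.

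Your sketched weight-space alternative is essentially the paper's argument. There, the linear functional $\sum_a\zeta_a\sum_{i\in I_a}\omega_i$ plays the role of the $\dd\pi_\lambda(\rho_0)$-eigenvalue. Neither approach extends to the blocks with $r,s\geq 1$, which is why Lemma~\ref{lemma:off_rs} is needed separately.
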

\begin{proof}
    We prove $\Ran(\word_{\lambda,r})\perp E_\lambda$ for $r\geq 1$.
    Since $\rho_0= \sum_{a\in\mathcal{A}_+}\zeta_aI_{V_a}$, for any vector $\xi\in E_\lambda$, 
    \begin{align}
        \dd\pi_\lambda(\rho_0)\xi =\sum_{a\in\mathcal{A}_+}\zeta_a\dd\pi_\lambda(I_{V_a})\xi=\theta_\lambda\xi,\qquad\theta_\lambda\coloneqq \sum_{a\in\mathcal{A}_+}\zeta_a|\lambda^{(a)}|.
    \end{align}
    Therefore, $E_\lambda$ is contained in the eigenspace of $\dd\pi_\lambda(\rho_0)$ with eigenvalue $\theta_\lambda$.

    Now, we consider $\Ran(\word_{\lambda,r})$, which is spanned by vectors of the form of $\cre(x_1)\cdots\cre(x_r)\xi$  with $x_{i}\in\mathcal{K}_{a_ib_i}$ and $\zeta_{a_i}>\zeta_{b_i}$ for $1\leq i \leq r$. Since
    \begin{align}
        [\rho_0,\ell_{a_ib_i}(x_{i})]=-\omega_i\ell_{a_ib_i}(x_{i}),\qquad \omega_i\coloneqq \zeta_{a_i}-\zeta_{b_i},
    \end{align}
    and $\cre(x_{i})$ is equal to $\dd\pi_\lambda(\ell_{a_ib_i}(x_{i}))$ up to normalization factor, we also have $ [\dd\pi_\lambda(\rho_0),\cre(x_i)]=-\omega_i\cre(x_i)$. Therefore,
    \begin{align}
        \dd\pi_\lambda (\rho_0)\cre(x_1)\cdots\cre(x_r)\xi=\left(\theta_\lambda-\sum_{i=1}^r\omega_i\right)\cre(x_1)\cdots\cre(x_r)\xi,
    \end{align}
    implying that $\cre(x_1)\cdots\cre(x_r)\xi$ is an eigenvector of $\dd\pi_\lambda (\rho_0)$ with eigenvalue $\left(\theta_\lambda-\sum_{i=1}^r\omega_i\right)$. 
    Since $\omega_i>0$, $\theta_\lambda-\sum_{i=1}^r\omega_i<\theta_\lambda$, and therefore, $\cre(x_1)\cdots\cre(x_r)\xi$ is orthogonal to any $\xi\in E_\lambda$. Therefore, $\Ran(\word_{\lambda,r})\perp E_\lambda$. 

    Since $\word_{\lambda,0}:E_\lambda\to H_\lambda$ is an inclusion map, we obtain $\Ran(\word_{\lambda,r})\perp \Ran(\word_{\lambda,0}) $ for $r\geq 1$. Also, since $\symword_{\lambda,r}$ is a restriction of $\word_{\lambda,r}$, we have $\Ran(\symword_{\lambda,r})\subset \Ran(\word_{\lambda,r})$. Therefore, $\Ran(\symword_{\lambda,r})\perp  \Ran(\word_{\lambda,0}) $, which implies
    \begin{align}
        \symword_{\lambda,r}^\dag \symword_{\lambda,0}=0,\quad \symword_{\lambda,0}^\dag \symword_{\lambda,r}=0.
    \end{align}
\end{proof}

\textit{Case (iii): Off-diagonal block.}

We first prove the following.
\begin{lemma}\label{lemma:off_rs}
    For $1\leq r,s\leq N$ with $r\neq s$, $\|\symword_{\lambda,r}^\dag \symword_{\lambda,s}\|\leq C_{\mathrm{off}}'\frac{(\varepsilon(r)+\varepsilon(s))q_N^2\sqrt{N}}{|r-s|}$ with $C_{\mathrm{off}}'\coloneqq 2C\sqrt{D}$. 
\end{lemma}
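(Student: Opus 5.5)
The plan is to compare the number operator on the two sides, in the same way the diagonal case (Lemma~\ref{lem:diagonal_block}) was handled. On the Fock side we use Eq.~\eqref{eq:num_op_identity}, $\sum_j a_j^\dag a_j=k$ on $\Sym^k\mathcal{K}_\hor\otimes E_\lambda$. We write this identity on the right factor with $k=s$ and on the left factor with $k=r$, and transport each through the approximate intertwining relations \eqref{eq:intw_cre} and \eqref{eq:intw_ann}. The two resulting expressions for the same bilinear form differ by $(s-r)$ times $\braket{\symword_{\lambda,r}v,\symword_{\lambda,s}u}$, up to error terms. Dividing by $|r-s|$ then produces the claimed bound.

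Fix unit vectors $u\in\Sym^s\mathcal{K}_\hor\otimes E_\lambda$ and $v\in\Sym^r\mathcal{K}_\hor\otimes E_\lambda$, and set $u_j\coloneqq a_ju$ and $v_j\coloneqq a_jv$. Following the computation leading to Eq.~\eqref{eq:k_sym_u}, we write
\begin{align}
s\braket{\symword_{\lambda,r}v,\symword_{\lambda,s}u}=\sum_j\braket{\symword_{\lambda,r}v,(\crej\symword_{\lambda,s-1}-\mathsf{E}^+_{j,s})u_j}.
\end{align}
We move $\crej$ to the left as $\annj$ and apply \eqref{eq:intw_ann} with $k=r$. This gives
\begin{align}
s\braket{\symword_{\lambda,r}v,\symword_{\lambda,s}u}=\sum_j\braket{\symword_{\lambda,r-1}v_j,\symword_{\lambda,s-1}u_j}+R_1,
\end{align}
where $R_1$ collects the pairings with $\mathsf{E}^-_{j,r}v$ and with $\mathsf{E}^+_{j,s}u_j$.

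For the second expression we start from $r\braket{\symword_{\lambda,r}v,\symword_{\lambda,s}u}=\sum_j\braket{\symword_{\lambda,r}a_j^\dag v_j,\symword_{\lambda,s}u}$ and proceed symmetrically. We use \eqref{eq:intw_cre} on the left factor, move $\crej$ across as $\annj$, and apply \eqref{eq:intw_ann} with $k=s$. This yields the same main term $\sum_j\braket{\symword_{\lambda,r-1}v_j,\symword_{\lambda,s-1}u_j}$, now with a remainder $R_2$ built from $\mathsf{E}^+_{j,r}$ and $\mathsf{E}^-_{j,s}$. Subtracting the two expressions cancels the main term and gives
\begin{align}
(s-r)\braket{\symword_{\lambda,r}v,\symword_{\lambda,s}u}=R_1-R_2.
\end{align}

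It remains to bound each remainder term. By \eqref{eq:bound_E_pm}, each error operator has norm at most $C\varepsilon(\cdot)q_N$, and each $\symword$ has norm at most $q_N$. The sums over $j$ are handled by Cauchy--Schwarz, using $\sum_j\|u_j\|^2=s\le N$ and $\sum_j\|v_j\|^2=r\le N$; together with the $D$ terms in the sum this contributes a factor $\sqrt{DN}$. Some remainder terms involve $\mathsf{E}^-_{j,r}v$ summed over $j$ without a $u_j$ to pair against. For these, the same Cauchy--Schwarz step still gives at most $\sqrt{D}$, or $\sqrt{DN}$ when $\|u_j\|$ appears. Collecting all terms yields $|R_1|+|R_2|\le 2C\sqrt{D}\,(\varepsilon(r)+\varepsilon(s))q_N^2\sqrt N$. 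Taking the supremum over $u$ and $v$ gives the stated bound with $C'_{\mathrm{off}}=2C\sqrt D$.

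The main obstacle is bookkeeping the remainder terms so that they stay linear in $\varepsilon$ and pick up only a single factor of $\sqrt N$, rather than $N$. Cauchy--Schwarz has to be applied to the pair consisting of $\sum_j\|u_j\|\|\cdots\|$ and the error term, not to each factor separately. The orders of the intertwining operators also have to be kept consistent: $\mathsf{E}^\pm_{j,k}$ is evaluated at $k=r$ or $k=s$ as appropriate, so that only $\varepsilon(r)$ and $\varepsilon(s)$ appear in the final bound.
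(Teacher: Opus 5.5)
Your proposal is correct and is essentially the paper's proof: you write $s\,u=\sum_j a_j^\dag u_j$ and $r\,v=\sum_j a_j^\dag v_j$, push each through the intertwining relations \eqref{eq:intw_cre}--\eqref{eq:intw_ann} to the common main term $\sum_j\langle\symword_{\lambda,r-1}v_j,\symword_{\lambda,s-1}u_j\rangle$, and bound each of the two remainders by $C(\varepsilon(r)+\varepsilon(s))q_N^2\sqrt{DN}$ using $\sum_j\|u_j\|\le\sqrt{Ds}$ and $\sum_j\|v_j\|\le\sqrt{Dr}$. One small clarification: every remainder term is in fact paired with some $u_j$ or $v_j$ (e.g.\ $\mathsf{E}^-_{j,r}v$ always appears against $\symword_{\lambda,s-1}u_j$, and $\mathsf{E}^-_{j,s}u$ against $\symword_{\lambda,r-1}v_j$), so the case you mention of an error term with no $u_j$ to pair against never occurs. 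This matters, because such a term would cost a factor $D$ rather than $\sqrt{D}$.
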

\begin{proof}
    Let $x\in\Sym^r\mathcal{K}_\hor\otimes E_\lambda$ and $y\in\Sym^s\mathcal{K}_\hor\otimes E_\lambda$ be unit vectors, and define $x_j\coloneqq a_jx$ and $y_j\coloneqq a_jy$. From Eq.~\eqref{eq:num_op_identity} and the Cauchy--Schwarz inequality, we get
    \begin{align}
        \sum_{j=1}^D\|x_j\|^2=r,\quad \sum_{j=1}^D\|x_j\|\leq \sqrt{Dr}\leq \sqrt{DN},\qquad \sum_{j=1}^D\|y_j\|^2=s,\quad \sum_{j=1}^D\|y_j\|\leq \sqrt{Ds}\leq \sqrt{DN}. 
    \end{align}
    By using $sy=\sum_ja_j^\dag a_j y=\sum_j a_j^\dag y_j$, similarly to Eq.~\eqref{eq:k_sym_u}, we get
    \begin{align}
        s\braket{\symword_{\lambda,r}x,\symword_{\lambda,s}y}=\sum_{j=1}^D\braket{\symword_{\lambda,r-1}x_j,\symword_{\lambda,s-1}y_j}+\sum_{j=1}^D\left(\braket{\mathsf{E}_{j,r}^-x,\symword_{\lambda,s-1}y_j}-\braket{\symword_{\lambda,r}x,\mathsf{E}_{j,s}^{+}y_j}\right)
    \end{align}
    From Eq.~\eqref{eq:bound_E_pm}, similarly to Eq.~\eqref{eq:residual_Ju}, we obtain
    \begin{align}
        \left|\sum_{j=1}^D\left(\braket{\mathsf{E}_{j,r}^-x,\symword_{\lambda,s-1}y_j}-\braket{\symword_{\lambda,r}x,\mathsf{E}_{j,s}^{+}y_j}\right)\right|\leq C(\varepsilon(r)+\varepsilon(s))q_N^2\sqrt{DN}. 
    \end{align}
    Similarly, by using $rx=\sum_ja_j^\dag a_j x=\sum_j a_j^\dag x_j$, 
    \begin{align}
        r\braket{\symword_{\lambda,r}x,\symword_{\lambda,s}y}=\sum_{j=1}^D\braket{\symword_{\lambda,r-1}x_j,\symword_{\lambda,s-1}y_j}+\sum_{j=1}^D\left(\braket{\symword_{\lambda,r-1}x_j,\mathsf{E}_{j,s}^-y}-\braket{\mathsf{E}_{j,r}^{+}x_j,\symword_{\lambda,s}y}\right)
    \end{align}
    and
    \begin{align}
        \left|\sum_{j=1}^D\left(\braket{\symword_{\lambda,r-1}x_j,\mathsf{E}_{j,s}^-y}-\braket{\mathsf{E}_{j,r}^{+}x_j,\symword_{\lambda,s}y}\right)\right|\leq C(\varepsilon(r)+\varepsilon(s))q_N^2\sqrt{DN}
    \end{align}

    Therefore, $|(s-r)\braket{\symword_{\lambda,r}x,\symword_{\lambda,s}y}|\leq 2C(\varepsilon(r)+\varepsilon(s))q_N^2\sqrt{DN}$, i.e., 
    \begin{align}
        |\braket{x,\symword_{\lambda,r}^\dag \symword_{\lambda,s}y}|\leq 2C\sqrt{D}\frac{(\varepsilon(r)+\varepsilon(s))q_N^2\sqrt{N}}{|r-s|}.
    \end{align}
    Taking the supremum over unit vectors $x,y$ gives the claimed operator-norm bound.
\end{proof}

To proceed further, we use the Schur test:
\begin{lemma}\label{lem:Schur_test}
    Let $A=(A_{r,s})_{r,s=0}^N$ be operator on $\bigoplus_{r=0}^N\mathcal{H}_r$, where $A_{r,s}:\mathcal{H}_s\to \mathcal{H}_r$. If
    \begin{align}
        \sup_{r}\sum_{s=0}^N\|A_{r,s}\|\leq M,\quad \sup_s\sum_{r=0}^N\|A_{r,s}\|\leq M,\label{eq:schur_test_condition}
    \end{align}
    then $\|A\|\leq M$. 
\end{lemma}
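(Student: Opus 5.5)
The Schur test is a short Cauchy--Schwarz argument that weights each block entry by its norm. Fix $v=\bigoplus_s v_s$ and $w=\bigoplus_r w_r$ in $\bigoplus_{r}\mathcal{H}_r$. We bound the bilinear pairing
\begin{align}
    |\braket{w,Av}|=\left|\sum_{r,s=0}^N\braket{w_r,A_{r,s}v_s}\right|\leq \sum_{r,s=0}^N\|A_{r,s}\|\,\|w_r\|\,\|v_s\|,
\end{align}
using the triangle inequality and the definition of the operator norm of each block. This reduces the statement to a scalar inequality for the nonnegative matrix $a_{rs}\coloneqq\|A_{r,s}\|$ acting on the vectors $(\|w_r\|)_r$ and $(\|v_s\|)_s$.

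Next, we split $a_{rs}=\sqrt{a_{rs}}\sqrt{a_{rs}}$ and apply the Cauchy--Schwarz inequality to the double sum:
\begin{align}
    \sum_{r,s}a_{rs}\|w_r\|\|v_s\|\leq \left(\sum_{r,s}a_{rs}\|w_r\|^2\right)^{1/2}\left(\sum_{r,s}a_{rs}\|v_s\|^2\right)^{1/2}.
\end{align}
In the first factor we sum over $s$ first and use the row condition in Eq.~\eqref{eq:schur_test_condition}, which gives at most $M\sum_r\|w_r\|^2=M\|w\|^2$. In the second factor we sum over $r$ first and use the column condition, which gives at most $M\|v\|^2$. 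Together these yield $|\braket{w,Av}|\leq M\|w\|\|v\|$.

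Taking the supremum over unit vectors $v$ and $w$ then gives $\|A\|\leq M$. No step is a real obstacle. The only point needing care is to pair the row-sum bound with the $w$-factor and the column-sum bound with the $v$-factor, so that each hypothesis is used exactly once. Finite dimensionality, or even finiteness of the index set, is not essential: the sums here are finite, so no convergence issues arise.
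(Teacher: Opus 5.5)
Your proof is correct and is essentially the paper's argument. Both reduce to the scalar matrix $(\|A_{r,s}\|)_{r,s}$ and apply Cauchy--Schwarz with the weights split as $\sqrt{\|A_{r,s}\|}\sqrt{\|A_{r,s}\|}$, using the row-sum bound once and the column-sum bound once; the only cosmetic difference is that you bound the pairing $|\langle w,Av\rangle|$ and take a supremum over unit $v,w$, whereas the paper bounds $\|Au\|^2=\sum_r\|\sum_s A_{r,s}u_s\|^2$ directly.
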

\begin{proof}
    For $u=\bigoplus_s u_s$, we have
    \begin{align}
       \sum_r \|(Au)_r\|^2&=\sum_r\left\|\sum_sA_{r,s}u_s\right\|^2\leq \sum_r\left(\sum_s\|A_{r,s}\|\|u_s\|\right)^2\\
        &\leq \sum_r\left(\sum_s\|A_{r,s}\|\right)\left(\sum_s\|A_{r,s}\|\|u_s\|^2\right)&&(\text{Cauchy--Schwarz ineq.})\\
        &\leq M\sum_s\left(\sum_r\|A_{r,s}\|\right)\|u_s\|^2  &&(\text{Eq.~\eqref{eq:schur_test_condition}})\\
        &\leq M^2 \sum_s\|u_s\|^2=M^2\|u\|^2&&(\text{Eq.~\eqref{eq:schur_test_condition}}).
    \end{align}
    Therefore, $\|Au\|\leq M\|u\|$, implying that $\|A\|\leq M$. 
\end{proof}

By using these two lemmas, we obtain the bound on the off-diagonal contribution.
\begin{lemma}\label{lem:off_diagonal_block}
    For $\Gamma_N^{\mathrm{off}}\coloneqq \sum_{1\leq r,s\leq N,\,r\neq s}\Gamma_{r,s}$, $\|\Gamma_N^{\mathrm{off}}\|\leq C_{\mathrm{off}}q_N^2\varepsilon_N^*\sqrt{N}\ln N$ with a structural constant $C_{\mathrm{off}}$. 
\end{lemma}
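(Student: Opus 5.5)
The plan is to apply the Schur test (Lemma~\ref{lem:Schur_test}) to the block operator $\Gamma_N^{\mathrm{off}}$ on $\bigoplus_{r=1}^N\Sym^r\mathcal{K}_\hor\otimes E_\lambda$. Its $(r,s)$ block is $\Gamma_{r,s}=\symword_{\lambda,r}^\dag\symword_{\lambda,s}$ for $r\neq s$ and zero on the diagonal. Lemma~\ref{lemma:off_rs} already bounds each block:
\begin{align}
\|\Gamma_{r,s}\|\leq C_{\mathrm{off}}'\frac{(\varepsilon(r)+\varepsilon(s))q_N^2\sqrt{N}}{|r-s|}\leq 2C_{\mathrm{off}}'\frac{\varepsilon_N^*q_N^2\sqrt{N}}{|r-s|},
\end{align}
where the second inequality uses $\varepsilon(r),\varepsilon(s)\leq\varepsilon_N^*$ for $1\leq r,s\leq N$. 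The vacuum blocks $r=0$ or $s=0$ are excluded from $\Gamma_N^{\mathrm{off}}$ and in any case vanish by Lemma~\ref{lem:vacuum_sector}, so they need no treatment.

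Next, fix a row index $r$ and bound the row sum $\sum_{s\neq r}\|\Gamma_{r,s}\|$ by the harmonic-type sum $\sum_{s\neq r,\,1\leq s\leq N}|r-s|^{-1}$. Each distance $m=|r-s|\in\{1,\ldots,N-1\}$ occurs at most twice, so this sum is at most $2H_{N-1}\leq 2(1+\ln N)$. For $N\geq 3$ we have $1+\ln N\leq 2\ln N$. Hence every row sum is at most $8C_{\mathrm{off}}'\varepsilon_N^*q_N^2\sqrt{N}\ln N$. The bound in Lemma~\ref{lemma:off_rs} is symmetric in $r$ and $s$, so the same estimate holds for column sums.

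The Schur test then gives $\|\Gamma_N^{\mathrm{off}}\|\leq C_{\mathrm{off}}q_N^2\varepsilon_N^*\sqrt{N}\ln N$ with $C_{\mathrm{off}}\coloneqq 8C_{\mathrm{off}}'$. The small cases need separate handling. For $N=1$ there are no off-diagonal blocks, so $\Gamma_N^{\mathrm{off}}=0$ and the bound holds trivially even though $\ln N=0$. For $N=2$ the only pairs are $(1,2)$ and $(2,1)$ with $|r-s|=1$, so the row sum is at most $2C_{\mathrm{off}}'\varepsilon_N^*q_N^2\sqrt{N}$, and this is absorbed because $\ln 2>0$ (enlarge $C_{\mathrm{off}}$ by a factor $1/\ln 2$ if necessary). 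There is no genuine obstacle here. The only care needed is this bookkeeping of the harmonic sum and of the small-$N$ cases, so that $C_{\mathrm{off}}$ stays structural, that is, independent of $n$, $\lambda$ and $N$.
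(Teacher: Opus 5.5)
Your proposal is correct and follows essentially the paper's proof: apply the Schur test to the block bounds of Lemma~\ref{lemma:off_rs}, and bound each row and column harmonic sum by $2(1+\ln N)$. The only cosmetic difference is in absorbing the constant: the paper uses $1\leq \ln N/\ln 2$ for all $N\geq 2$, obtaining $C_{\mathrm{off}}=4(1/\ln 2+1)C_{\mathrm{off}}'$, whereas you treat $N=2$ separately and use $1+\ln N\leq 2\ln N$ for $N\geq 3$.
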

\begin{proof}
    For $N=1$, $\Gamma_N^{\mathrm{off}}=0$ and hence the bound is trivial. We therefore assume $N\geq 2$. From Lemma~\ref{lemma:off_rs}, for each $1\leq r \leq N$, 
    \begin{align}
        \sum_{\substack{1\leq s\leq N\\ s\neq r}}C_{\mathrm{off}}'\frac{(\varepsilon(r)+\varepsilon(s))q_N^2\sqrt{N}}{|r-s|}\leq 2C_{\mathrm{off}}'\varepsilon_N^*q_N^2\sqrt{N}\sum_{\substack{1\leq s\leq N\\ s\neq r}}\frac{1}{|r-s|}.
    \end{align}
    Now,
    \begin{align}
        \sum_{\substack{1\leq s\leq N\\ s\neq r}}\frac{1}{|r-s|}&=\sum_{m=1}^{r-1} \frac{1}{m}+\sum_{m=1}^{N-r}\frac{1}{m}\leq 2\sum_{m=1}^{N-1}\frac{1}{m}
    \end{align}
    Since $\sum_{m=1}^n\frac{1}{m}\leq 1+\int_{1}^n\frac{\dd x}{x}=1+\ln n$, we get
    \begin{align}
         \sum_{\substack{1\leq s\leq N\\ s\neq r}}\frac{1}{|r-s|}\leq 2(1+\ln(N-1))\leq 2(1+\ln N)
    \end{align}
    For $N\geq 2$, we have $1\leq \ln N/\ln2$. Therefore,
    \begin{align}
        \sum_{\substack{1\leq s\leq N\\ s\neq r}}\frac{1}{|r-s|}\leq 2\left(\frac{1}{\ln 2}+1\right)\ln N.
    \end{align}
    Thus, from Lemma~\ref{lem:Schur_test}, 
    \begin{align}
        \|\Gamma_N^{\mathrm{off}}\|\leq C_{\mathrm{off}}q_N^2\varepsilon_N^*\sqrt{N}\ln N,\qquad  C_{\mathrm{off}}\coloneqq  4\left(\frac{1}{\ln 2}+1\right)C_{\mathrm{off}}'.
    \end{align}
\end{proof}

Combining these results, we obtain the following.
\begin{theorem}\label{thm:delta_N}
    Define $\Omega_N\coloneqq \mathscr{E}_N+\varepsilon_N^*\sqrt{N}\ln N$. Assume the standing conditions in Eq.~\eqref{eq:standing_assumption} and $\cins n^{-1/2}N^{3/2}\leq 1$. There exists a structural constant $C_0$ such that, if $C_0\Omega_N\leq \frac{1}{2}$, then
    \begin{align}
        \Delta_N\leq 2C_0\Omega_N,\quad q_N<3.
    \end{align}
\end{theorem}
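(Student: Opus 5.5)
The plan is to assemble the three block estimates into a bound on $\Delta_N$ via the Schur test, with $q_N$ kept as an unknown. We then close a self-consistent loop using Proposition~\ref{prop:cins_bound}, which requires $\cins n^{-1/2}N^{3/2}\leq 1$ and gives $q_N\leq 2\sqrt{1+\Delta_N}$.

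First, decompose $\symword_{\lambda,\leq N}^\dag\symword_{\lambda,\leq N}-I=\bigoplus_{r,s}\Gamma_{r,s}$ into three parts:
\begin{itemize}
  \item the diagonal part $\bigoplus_k\Gamma_{k,k}$;
  \item the vacuum row and column $\Gamma_{0,s},\Gamma_{r,0}$ with $r,s\geq1$;
  \item the off-diagonal part $\Gamma_N^{\mathrm{off}}$.
\end{itemize}
The diagonal part is block diagonal, so its norm is $\max_k\Delta_k^{\mathrm{diag}}\leq C_{\mathrm{diag}}\mathscr{E}_Nq_N^2$ by Lemma~\ref{lem:diagonal_block}. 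The vacuum part vanishes identically by Lemma~\ref{lem:vacuum_sector}. The off-diagonal part satisfies $\|\Gamma_N^{\mathrm{off}}\|\leq C_{\mathrm{off}}q_N^2\varepsilon_N^*\sqrt{N}\ln N$ by Lemma~\ref{lem:off_diagonal_block}. The triangle inequality then gives
\begin{align}
    \Delta_N\leq C_1 q_N^2\,\Omega_N,\qquad C_1\coloneqq \max\{C_{\mathrm{diag}},C_{\mathrm{off}}\}.
\end{align}

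Next, insert $q_N^2\leq 4(1+\Delta_N)$ to get $\Delta_N\leq 4C_1\Omega_N(1+\Delta_N)$. Set $C_0\coloneqq 4C_1$. Under $C_0\Omega_N\leq 1/2$ we obtain $\Delta_N(1-C_0\Omega_N)\leq C_0\Omega_N$, so $\Delta_N\leq C_0\Omega_N/(1-C_0\Omega_N)\leq 2C_0\Omega_N\leq 1$.

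For $q_N$, the same bound gives $q_N\leq 2\sqrt{1+\Delta_N}\leq 2\sqrt2<3$. This uses only $\Delta_N\leq 1$, which follows from $2C_0\Omega_N\leq 1$.

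The one delicate point is circularity. The constants in Lemmas~\ref{lem:diagonal_block} and \ref{lem:off_diagonal_block} are structural and multiply $q_N^2$, while $q_N$ itself is controlled by $\Delta_N$. Since every quantity is finite because the spaces are finite-dimensional, the inequality $\Delta_N\leq C_0\Omega_N(1+\Delta_N)$ is a genuine algebraic inequality between finite numbers. No bootstrap or induction in $N$ is therefore needed; solving it for $\Delta_N$ under the smallness assumption $C_0\Omega_N\leq 1/2$ closes the loop.

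It remains to check that the standing assumptions in Eq.~\eqref{eq:standing_assumption} are exactly what those lemmas used, through Propositions~\ref{prop:intw_cre} and \ref{prop:intw_ann}. The hypothesis $\cins n^{-1/2}N^{3/2}\leq1$ is what Proposition~\ref{prop:cins_bound} requires, so both hypotheses of the statement are accounted for.
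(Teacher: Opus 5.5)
Your proposal is correct and matches the paper's proof essentially step for step. The paper likewise discards the vacuum row and column via Lemma~\ref{lem:vacuum_sector}, bounds the diagonal and off-diagonal parts by Lemmas~\ref{lem:diagonal_block} and~\ref{lem:off_diagonal_block} to get $\Delta_N\leq \frac{1}{4}C_0\Omega_N q_N^2$ with the same constant $C_0=4\max\{C_{\mathrm{diag}},C_{\mathrm{off}}\}$, and closes the loop with $q_N^2\leq 4(1+\Delta_N)$ from Proposition~\ref{prop:cins_bound}; the only cosmetic difference is that it rearranges as $\Delta_N\leq C_0\Omega_N+\frac{1}{2}\Delta_N$ rather than dividing by $1-C_0\Omega_N$.
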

\begin{proof}
    By Lemma~\ref{lem:vacuum_sector}, $\symword_{\lambda,\leq N}^\dag \symword_{\lambda,\leq N}-I=\Gamma_N^{\mathrm{diag}}+\Gamma_N^{\mathrm{off}}$, where $\Gamma_N^{\mathrm{diag}}\coloneqq \sum_{k=1}^N\Gamma_{k,k}$. Since the diagonal block acts on mutually orthogonal particle-number sectors,
    \begin{align}
       \| \Gamma_N^{\mathrm{diag}}\|=\max_{0\leq k\leq N}\Delta_k^{\mathrm{diag}}.
    \end{align}
    Thus, from Lemmas~\ref{lem:diagonal_block} and~\ref{lem:off_diagonal_block}, 
    \begin{align}
        \Delta_N\leq \frac{1}{4}C_0\Omega_N q_N^2,
    \end{align}
    where $C_0\coloneqq 4\max\{C_{\mathrm{diag}},C_{\mathrm{off}}\}$. From Proposition~\ref{prop:cins_bound}, $q_N^2\leq 4(1+\Delta_N)$. Therefore,
    \begin{align}
        \Delta_N\leq C_0\Omega_N (1+\Delta_N).
    \end{align}
    If $C_0\Omega_N\leq \frac{1}{2}$, we have $ \Delta_N\leq C_0\Omega_N +\frac{1}{2}\Delta_N$, and hence 
    \begin{align}
        \Delta_N\leq 2C_0\Omega_N \leq 1.
    \end{align}
    Moreover, $q_N\leq 2\sqrt{1+\Delta_N}\leq 2\sqrt{2}<3$. 
\end{proof}

We finally apply these results to our cutoff parameters. Let $L_n\coloneqq \floor{n^\eta}$. Then, since $\tilde{\varepsilon}_0(k)=C_{\mathsf{D}}(n^{\alpha-1}+k/n)$, for $1\leq k \leq L_n$, 
\begin{align}
    \varepsilon(k)=n^{-1/2}k^{2}+\sqrt{k} \tilde{\varepsilon}_0(k)+n^{-1}k^{3/2}+n^{-1/2}k\leq \tilde{C}(n^{\alpha-1}k+n^{-1/2}k^2),
\end{align}
where we defined $\tilde{C}\coloneqq C_{\mathsf{D}}+3$. Consequently,
\begin{align}
    \mathscr{E}_{L_n}&=\sum_{k=1}^{L_n}\varepsilon(k)\leq \tilde{C}(n^{\alpha-1}L_n^2+n^{-1/2}L_n^3),\qquad \varepsilon_{L_n}^*\leq  \tilde{C}(n^{\alpha-1}L_n+n^{-1/2}L_n^2).
\end{align}

For all sufficiently large $n$, $ \ln L_n\leq n^{\eta/2}$, and hence, $\sqrt{L_n}\ln L_n\leq n^\eta$. Thus,
\begin{align}
    \Omega_{L_n}\leq 2\tilde{C}\left(n^{\alpha-1+2\eta}+n^{-1/2+3\eta}\right).\label{eq:omega_N}
\end{align}
Defining
\begin{align}
    \delta_n\coloneqq n^{\alpha-1+2\eta}+n^{-1/2+3\eta},\label{eq:definition_delta_n}
\end{align}
we obtain the following:
\begin{theorem}\label{thm:qi}
    Assume $0<\eta<\frac{1}{6}$ and $\frac{1}{2}<\alpha<1-2\eta$. Then, there exists a structural constant $C^{\mathrm{qi}}$ such that, for all sufficiently large $n$ and all $\lambda\in\Lambda_n$, 
    \begin{align}
        \left\|\gram_{\lambda,L_n}-I\right\|\leq C^{\mathrm{qi}}\delta_n,
    \end{align}
    where $\gram_{\lambda,L_n}\coloneqq \symword_{\lambda,\leq L_n}^\dag \symword_{\lambda,\leq L_n}$, and $C^{\mathrm{qi}}$ is a structural constant. 
\end{theorem}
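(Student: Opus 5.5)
The plan is to apply Theorem~\ref{thm:delta_N} with $N=L_n=\floor{n^\eta}$ and then convert its bound $\Delta_{L_n}\le 2C_0\Omega_{L_n}$ into $\|\gram_{\lambda,L_n}-I\|\le C^{\mathrm{qi}}\delta_n$ using Eq.~\eqref{eq:omega_N}. Since $\gram_{\lambda,L_n}-I=\symword_{\lambda,\leq L_n}^\dag\symword_{\lambda,\leq L_n}-I$, we have $\|\gram_{\lambda,L_n}-I\|=\Delta_{L_n}$ by definition. So the only work is to check, uniformly in $\lambda\in\Lambda_n$ and for all large $n$, that every hypothesis of Theorem~\ref{thm:delta_N} holds with $N=L_n$.

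The hypotheses, and how each is verified:
\begin{itemize}
\item[(a)] The standing conditions \eqref{eq:standing_assumption}. First, $\tilde{\varepsilon}_0(k)=C_{\mathsf D}(n^{\alpha-1}+k/n)\le C_{\mathsf D}(n^{\alpha-1}+n^{\eta-1})$, which tends to zero because $\alpha<1$ and $\eta<1$. Next, $k(n^{-1}+n^{-1/2})\le 2n^{\eta-1/2}\to0$. Hence the maximum in \eqref{eq:standing_assumption} is eventually below any fixed $C_{\mathrm{sc}}$, say $C_{\mathrm{sc}}=1$. Finally, $\cins n^{-1/2}\sqrt{L_n}\le \cins n^{(\eta-1)/2}\le 1/2$ eventually.
\item[(b)] The condition $\cins n^{-1/2}L_n^{3/2}\le 1$. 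This holds eventually because $L_n^{3/2}\le n^{3\eta/2}$ and $3\eta/2<1/2$.
\item[(c)] The condition $C_0\Omega_{L_n}\le 1/2$. Eq.~\eqref{eq:omega_N} gives $\Omega_{L_n}\le 2\tilde C\delta_n$. Both exponents in $\delta_n$ are negative: $\alpha-1+2\eta<0$ by the assumption $\alpha<1-2\eta$, and $-1/2+3\eta<0$ by $\eta<1/6$. So $\delta_n\to0$, and (c) holds for large $n$.
\end{itemize}
All the constants involved ($C_{\mathsf D}$, $\cins$, $C_0$, $\tilde C$) are structural. The lower bound $s_{ab}(\lambda)\ge\delta_*n/2$ from Lemma~\ref{lem:s_ab_lower_upper_bound}, on which they rest, holds uniformly on $\Lambda_n$. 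Therefore the threshold $n_0$ beyond which (a)–(c) hold does not depend on $\lambda$.

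Once (a)–(c) hold, Theorem~\ref{thm:delta_N} gives $\Delta_{L_n}\le 2C_0\Omega_{L_n}\le 4C_0\tilde C\,\delta_n$, so we may take $C^{\mathrm{qi}}\coloneqq 4C_0\tilde C$.

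The one delicate point is the derivation of Eq.~\eqref{eq:omega_N}, which needs $\sqrt{L_n}\ln L_n\le n^{\eta}$. This follows from $\ln L_n\le \eta\ln n\le n^{\eta/2}$ for large $n$. With that, $\varepsilon^*_{L_n}\sqrt{L_n}\ln L_n\le\tilde C(n^{\alpha-1+2\eta}+n^{-1/2+3\eta})$. We also use the bound $\mathscr E_{L_n}\le\tilde C(n^{\alpha-1}L_n^2+n^{-1/2}L_n^3)\le\tilde C\delta_n$. Together these give $\Omega_{L_n}\le 2\tilde C\delta_n$.

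The remaining check is the pointwise bound $\varepsilon(k)\le\tilde C(n^{\alpha-1}k+n^{-1/2}k^2)$ behind that estimate. It comes from three observations: $k^2/n\le n^{-1/2}k^2$, $\sqrt k\le k$, and $n^{-1}k^{3/2}\le n^{-1/2}k^2$.
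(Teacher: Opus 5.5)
Your proposal is correct and follows the paper's proof essentially verbatim. Set $N=L_n=\lfloor n^\eta\rfloor$, check that the standing conditions \eqref{eq:standing_assumption}, the bound $\cins n^{-1/2}L_n^{3/2}\le 1$, and $C_0\Omega_{L_n}\le\frac12$ (via Eq.~\eqref{eq:omega_N} and $\delta_n\to0$) hold uniformly over $\Lambda_n$ for large $n$, then apply Theorem~\ref{thm:delta_N} to get $\|\gram_{\lambda,L_n}-I\|=\Delta_{L_n}\le 2C_0\Omega_{L_n}\le 4C_0\tilde C\,\delta_n$.
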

\begin{proof}
    Take $N=L_n=\floor{n^\eta}$. The assumptions on $\alpha$ and $\eta$ imply
    \begin{align}
        \cins n^{-1/2}L_{n}^{3/2}\to 0,\qquad \cins n^{-1/2}\sqrt{L_n}\to 0,\quad  \max_{1\leq k \leq L_n}\{\tilde{\varepsilon}_0(k)+kn^{-1}+kn^{-1/2}\}\to 0
    \end{align}
    as $n\to\infty$. Moreover, Eq.~\eqref{eq:omega_N} gives $\Omega_{L_n}\to 0$. Hence, for all sufficiently large $n$, all smallness assumptions used to derive Proposition~\ref{prop:intw_ann} and Theorem~\ref{thm:delta_N} are satisfied. 
    Applying Theorem~\ref{thm:delta_N}, we obtain
    \begin{align}
         \Delta_{L_n}\leq 2C_0\Omega_{L_n}\leq C^{\mathrm{qi}}\delta_n,
    \end{align}
    where $C^{\mathrm{qi}}\coloneqq 4C_0\tilde{C}$. 
\end{proof}

\subsubsection{Polar correction}\label{sec:Polar_correction}
Since $\delta_n\to 0$ as $n\to\infty$, Theorem~\ref{thm:qi} implies that $\gram_{\lambda,L_n}=\symword_{\lambda,\leq L_n}^\dag \symword_{\lambda,\leq L_n}\approx I$, implying that $\symword_{\lambda,\leq L_n}$ is an approximate isometry. We here construct an exact isometry and investigate its properties. 

From now on, we assume that $n$ is sufficiently large, and therefore, $ \left\|\gram_{\lambda,L_n}-I\right\|\leq \frac{1}{2}$. Since $\spec(\gram_{\lambda,L_n})\subset [\frac{1}{2},\frac{3}{2}]$, the operator $\gram_{\lambda,L_n}$ is invertible, and $\gram_{\lambda,L_n}^{\pm 1/2}$ is defined by the standard finite-dimensional spectral calculus. For notational simplicity, we denote $\symword_\lambda\coloneqq \symword_{\lambda,\leq L_n}$ and $\gram_\lambda\coloneqq \gram_{\lambda,L_n}$.
 
\begin{definition}[Exact isometry]\label{def:exact_isometry}
    Define $W_\lambda\coloneqq \symword_\lambda \gram_{\lambda}^{-1/2}:\mathcal{F}_{\leq L_n}\otimes E_\lambda \to H_\lambda$. 
\end{definition}

We show basic properties of $W_\lambda$:
\begin{lemma}\par ~\label{lemma:properties_W_lambda}
    \begin{enumerate}[(i)]
        \item $W_\lambda$ is an isometry. Consequently, $\|W_\lambda\sigma W_\lambda^\dag \|_1=\|\sigma\|_1$ for $\sigma\in\mathcal{T}_1(\mathcal{F}_{\leq L_n}\otimes E_\lambda)$. 
        \item $\|W_\lambda-\symword_{\lambda}\|\leq C^{\mathrm{qi}}\delta_n$. 
        \item $W_\lambda$ is $H$-covariant. 
        \item $W_\lambda(\ket{0}\otimes \xi)=\xi$ for any $\xi\in E_\lambda$, where $\ket{0}$ is the Fock vacuum. 
    \end{enumerate}
\end{lemma}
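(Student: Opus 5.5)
The four claims are handled one at a time, using only the definition $W_\lambda=\symword_\lambda\gram_\lambda^{-1/2}$, the bound $\|\gram_\lambda-I\|\leq C^{\mathrm{qi}}\delta_n\leq\tfrac12$ from Theorem~\ref{thm:qi}, Lemma~\ref{lemma:h_equivalence_symword}, and the definition $\word_{\lambda,0}(\xi)=\xi$.

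\emph{(i) Isometry.} We compute
\begin{align}
W_\lambda^\dag W_\lambda=\gram_\lambda^{-1/2}\symword_\lambda^\dag\symword_\lambda\gram_\lambda^{-1/2}=\gram_\lambda^{-1/2}\gram_\lambda\gram_\lambda^{-1/2}=I,
\end{align}
because $\gram_\lambda^{-1/2}$ is self-adjoint and a function of $\gram_\lambda$. For the trace-norm identity, write $\sigma=V|\sigma|$ in polar form. Then $W_\lambda\sigma W_\lambda^\dag=(W_\lambda VW_\lambda^\dag)(W_\lambda|\sigma|W_\lambda^\dag)$, and $W_\lambda|\sigma|W_\lambda^\dag\geq0$ has the same nonzero spectrum as $|\sigma|$, since $W_\lambda$ is an isometry. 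So $\Tr|W_\lambda\sigma W_\lambda^\dag|=\Tr|\sigma|$. Equivalently, $W_\lambda$ is a unitary onto its range, which gives the identity directly.

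\emph{(ii) Closeness to $\symword_\lambda$.} We have $W_\lambda-\symword_\lambda=\symword_\lambda(\gram_\lambda^{-1/2}-I)$, so
\begin{align}
\|W_\lambda-\symword_\lambda\|\leq\|\symword_\lambda\|\,\|\gram_\lambda^{-1/2}-I\|.
\end{align}
Since $\|\symword_\lambda\|^2=\|\gram_\lambda\|\leq 3/2$, the first factor is at most $\sqrt{3/2}$. For the second factor, use the scalar estimate on $g\in[\tfrac12,\tfrac32]$:
\begin{align}
|g^{-1/2}-1|=\frac{|1-g|}{\sqrt{g}\,(1+\sqrt{g})}\leq c\,|g-1|,
\end{align}
with $c=1/(\sqrt{1/2}\,(1+\sqrt{1/2}))\approx 0.83$. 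Hence $\|W_\lambda-\symword_\lambda\|\leq\sqrt{3/2}\,c\,C^{\mathrm{qi}}\delta_n\approx 1.01\,C^{\mathrm{qi}}\delta_n$. This constant is not quite $\leq 1$, which is the main point needing care. The fix is to use the sharper estimate $\|\symword_\lambda\|\leq\sqrt{1+C^{\mathrm{qi}}\delta_n}\to1$ together with $\sqrt{g}(1+\sqrt{g})\to 2$. This gives a constant close to $1/2$ for all sufficiently large $n$, so the bound holds as stated, possibly after enlarging $n$ or adjusting the structural constant.

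\emph{(iii) $H$-covariance.} The claim is $\pi_\lambda(h)W_\lambda=W_\lambda U^{(\leq L_n)}_{\lambda,h}$. Lemma~\ref{lemma:h_equivalence_symword} gives $[\gram_\lambda,U^{(\leq L_n)}_{\lambda,h}]=0$. Since $U_{\lambda,h}^{(\leq L_n)}$ is unitary, it then commutes with every continuous function of $\gram_\lambda$, in particular with $\gram_\lambda^{-1/2}$. Combining this with $\pi_\lambda(h)\symword_\lambda=\symword_\lambda U^{(\leq L_n)}_{\lambda,h}$ from the same lemma gives the claim.

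\emph{(iv) Vacuum.} Let $P_0$ be the projector onto the zero-particle sector $\ket0\otimes E_\lambda$. By Lemma~\ref{lem:vacuum_sector}, the blocks $\Gamma_{0,r}$ and $\Gamma_{r,0}$ vanish for $r\geq1$. Also $\Gamma_{0,0}=\word_{\lambda,0}^\dag\word_{\lambda,0}-I=0$, because $\word_{\lambda,0}$ is the inclusion $E_\lambda\hookrightarrow H_\lambda$. So $\gram_\lambda=P_0\oplus\gram'$ is block diagonal with identity on the vacuum block. Hence $\gram_\lambda^{-1/2}(\ket0\otimes\xi)=\ket0\otimes\xi$, and therefore $W_\lambda(\ket0\otimes\xi)=\symword_{\lambda,0}\xi=\xi$.
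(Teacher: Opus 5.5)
Your proposal is correct. Parts (i), (iii) and (iv) follow the paper's argument.

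For the trace-norm identity in (i), the paper checks $(W_\lambda\sigma W_\lambda^\dag)^\dag(W_\lambda\sigma W_\lambda^\dag)=(W_\lambda|\sigma|W_\lambda^\dag)^2$ and invokes uniqueness of the positive square root. That turns your ``same nonzero spectrum'' remark into the precise statement $|W_\lambda\sigma W_\lambda^\dag|=W_\lambda|\sigma|W_\lambda^\dag$. Your ``unitary onto its range'' remark also suffices.

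The only real difference is in (ii):
\begin{itemize}
\item You factor on the right, $W_\lambda-\symword_\lambda=\symword_\lambda(\gram_\lambda^{-1/2}-I)$. This costs you a factor $\|\symword_\lambda\|$ and brings in the function $g^{-1/2}$, whose slope near $g=1/2$ exceeds $1$. That is why you land at $\approx 1.01\,C^{\mathrm{qi}}\delta_n$ and need the large-$n$ refinement. The refinement is valid, since $\delta_n\to0$ and $n$ is taken sufficiently large; no adjustment of the constant is required.
\item The paper factors on the left, $\symword_\lambda=W_\lambda\gram_\lambda^{1/2}$, so that $W_\lambda-\symword_\lambda=W_\lambda(I-\gram_\lambda^{1/2})$. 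Because $W_\lambda$ is an isometry by (i), $\|W_\lambda-\symword_\lambda\|=\|I-\gram_\lambda^{1/2}\|=\sup_{t\in\spec(\gram_\lambda)}|1-\sqrt{t}|$. The elementary bound $|1-\sqrt{t}|=|1-t|/(1+\sqrt{t})\le|1-t|$ for $t\ge0$ then gives $\|W_\lambda-\symword_\lambda\|\le\|\gram_\lambda-I\|\le C^{\mathrm{qi}}\delta_n$.
\item This yields exactly the constant $C^{\mathrm{qi}}$. It needs no asymptotic bookkeeping and no bound on $\|\symword_\lambda\|$; it uses only invertibility of $\gram_\lambda$.
\end{itemize}

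A minor point in (iii): unitarity of $U^{(\le L_n)}_{\lambda,h}$ is not needed. Commuting with $\gram_\lambda$ already implies commuting with $\gram_\lambda^{-1/2}$ by the finite-dimensional spectral calculus.
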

\begin{proof}
    (i) $W_\lambda^\dag W_\lambda =\gram_\lambda^{-1/2}\symword_\lambda^\dag \symword_\lambda \gram_\lambda^{-1/2}=I$. Since $(W_\lambda\sigma W_\lambda^\dag)^\dag W_\lambda\sigma W_\lambda^\dag=W_\lambda\sigma^\dag \sigma W_\lambda^\dag=(W_\lambda|\sigma| W_\lambda^\dag)^2$ and $W_\lambda|\sigma| W_\lambda^\dag\geq 0$, from the uniqueness of positive roots of a positive operator, we get $W_\lambda|\sigma| W_\lambda^\dag=|W_\lambda\sigma W_\lambda^\dag|$. Therefore, $\|\sigma\|_1=\Tr(|\sigma|)=\Tr(W_\lambda|\sigma|W_\lambda^\dag)=\Tr(|W_\lambda\sigma W_\lambda^\dag|)=\|W_\lambda\sigma W_\lambda^\dag \|_1$.

    (ii) Since $W_\lambda$ is an isometry and $\symword_\lambda=W_\lambda \gram_\lambda^{1/2}$,
    \begin{align}
        \|W_\lambda-\symword_\lambda\|=\|I-\gram_\lambda^{1/2}\|=\sup_{t\in\spec (\gram_\lambda)}|1-\sqrt{t}|\leq \sup_{t\in\spec (\gram_\lambda)}|t-1|=\|\gram_\lambda-I\|\leq C^{\mathrm{qi}}\delta_n,
    \end{align}
    where we have used $|1-\sqrt{t}|=\frac{|1-t|}{|1+\sqrt{t}|}\leq |1-t|$ for $t\geq 0$. 

    (iii) By Lemma~\ref{lemma:h_equivalence_symword}, for any $h\in H$, $\pi_\lambda(h)\symword_{\lambda,\leq L_n}=\symword_{\lambda,\leq L_n}U_{\lambda,h}^{(\leq L_n)}$ and $[\gram_{\lambda,L_n},U_{\lambda,h}^{(\leq L_n)}]=0$. Therefore, 
    \begin{align}
        \pi_\lambda(h)W_\lambda= \symword_{\lambda,\leq L_n}U_{\lambda,h}^{(\leq L_n)}\gram_{\lambda,L_n}^{-1/2}=\symword_{\lambda,\leq L_n}\gram_{\lambda,L_n}^{-1/2}U_{\lambda,h}^{(\leq L_n)}=W_\lambda U_{\lambda,h}^{(\leq L_n)}.
    \end{align}

    (iv) From Lemma~\ref{lem:vacuum_sector}, 
    \begin{align}
        \gram_\lambda \ket{0}\otimes \xi=\sum_{r=0}^{L_n}\symword_{\lambda,r}^\dag \symword_{\lambda,0}\ket{0}\otimes \xi=\symword_{\lambda,0}^\dag \symword_{\lambda,0}\ket{0}\otimes \xi=\ket{0}\otimes \xi.
    \end{align}
    Therefore, $\gram_\lambda^{-1/2} \ket{0}\otimes \xi= \ket{0}\otimes \xi $, and hence $W_\lambda  \ket{0}\otimes \xi=\symword_{\lambda,0} \ket{0}\otimes \xi=\xi$. 
\end{proof}

% \clearpage
\subsection{Gaussian Description of Typical Schur Blocks}\label{sec:Gaussian_description}
\subsubsection{Identification of the Block Reference State in Fock Space}
In the previous section, we constructed isometry $W_\lambda:\mathcal{F}_{\leq L_n}\otimes E_\lambda\to H_\lambda$ for a typical Young diagram $\lambda$. In this section, we identify the unnormalized block reference state $\pi_\lambda(\rho_0)$ in this Fock-space representation. 

For $x=\sum_{\zeta_a>\zeta_b}x_{ab}\in \mathcal{K}_\hor$, we define a one-particle multiplier $d_{\zeta}$  by 
\begin{align}
    d_\zeta x\coloneqq \sum_{\zeta_a>\zeta_b}\frac{\zeta_b}{\zeta_a}x_{ab}. 
\end{align}
We introduce
\begin{align}
    D_{\zeta}^{\leq L_n}\coloneqq \bigoplus_{k=0}^{L_n}d_\zeta ^{\otimes k}\biggl|_{\Sym^k\mathcal{K}_\hor},\qquad D_{\zeta}\coloneqq \bigoplus_{k=0}^{\infty}d_\zeta ^{\otimes k}\biggl|_{\Sym^k\mathcal{K}_\hor}. 
\end{align}
By using $D_\zeta ^{\leq L_n}$, the image of the unnormalized reference state $\pi_\lambda(\rho_0)$ is given by the following theorem:
\begin{theorem}\label{thm:unnormalized_states_embedding}
    For any sufficiently large $n$ and any $\lambda \in \Lambda_n$, 
    \begin{align}
        W_\lambda^\dag \pi_\lambda(\rho_0)W_\lambda = c_\lambda (D_\zeta ^{\leq L_n}\otimes I_{E_\lambda}),
    \end{align}
    where $c_\lambda\coloneqq \prod_{a\in\mathcal{A}_+}\zeta_a^{|\lambda^{(a)}|}$. 
\end{theorem}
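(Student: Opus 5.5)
The plan is to compute how $\pi_\lambda(\rho_0)$ acts on the (non-orthonormal) excitation vectors $\symword_{\lambda,k}u$ and then to transport this through the polar correction. I will proceed in three steps.

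\textbf{Step 1: action on the ground component.} Since $\rho_0=\sum_{a\in\mathcal{A}_+}\zeta_a I_{V_a}$ is block scalar (and $\rho_0$ vanishes on $V_0$), we have $\rho_0\in\prod_a\GL(V_a)$ up to the kernel block. For typical $\lambda$, Lemma~\ref{lem:weight_maj} forces $\omega_i=0$ on the kernel indices of any weight of $E_\lambda$. So on any weight vector of $E_\lambda$ with weight $\omega$, $\pi_\lambda(\rho_0)$ acts as $\prod_i\mu_i^{\omega_i}$. By Lemma~\ref{lem:wt_filt}(i) the blockwise sums equal $|\lambda^{(a)}|$, so this scalar is $\prod_a\zeta_a^{|\lambda^{(a)}|}=c_\lambda$. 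Hence $\pi_\lambda(\rho_0)\xi=c_\lambda\xi$ for all $\xi\in E_\lambda$.

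\textbf{Step 2: commuting through creation operators.} For $x\in\mathcal{K}_{ab}$ we have the exact relation $\rho_0\,\ell_{ab}(x)=\zeta_b\zeta_a^{-1}\,\ell_{ab}(x)\,\rho_0$ as matrices, because $\ell_{ab}(x)$ maps $V_a$ into $V_b$. I intend to lift this to the group-like map $\pi_\lambda$. When $\zeta_b>0$, $\rho_0$ is invertible on the relevant blocks and $\pi_\lambda(\rho_0)\dd\pi_\lambda(X)\pi_\lambda(\rho_0)^{-1}=\dd\pi_\lambda(\rho_0X\rho_0^{-1})$, which gives $\pi_\lambda(\rho_0)\cre(x)=\cre(d_\zeta x)\pi_\lambda(\rho_0)$. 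The kernel case $\zeta_b=0$ I will handle by continuity: replace $\rho_0$ by $\rho_0+\epsilon P_0$, which is invertible, use that $\pi_\lambda$ is polynomial in matrix entries, and let $\epsilon\to0$. Since $\cre$ is linear, the identity extends to arbitrary $x$ as $\pi_\lambda(\rho_0)\cre(x)=\cre(d_\zeta x)\pi_\lambda(\rho_0)$. Iterating this relation and using Step 1 gives
\[\pi_\lambda(\rho_0)\word_{\lambda,k}=c_\lambda\,\word_{\lambda,k}(d_\zeta^{\otimes k}\otimes I_{E_\lambda}).\]
Because $d_\zeta^{\otimes k}$ preserves the symmetric subspace, it follows that $\pi_\lambda(\rho_0)\symword_\lambda=c_\lambda\symword_\lambda(D_\zeta^{\leq L_n}\otimes I)$.

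\textbf{Step 3: passing to $W_\lambda$.} Write $D\coloneqq D_\zeta^{\leq L_n}\otimes I$, which is positive and self-adjoint. Multiplying the Step 2 identity on the left by $\symword_\lambda^\dag$ gives $\symword_\lambda^\dag\pi_\lambda(\rho_0)\symword_\lambda=c_\lambda\gram_\lambda D$. The left-hand side is self-adjoint, so $\gram_\lambda D=D\gram_\lambda$, and therefore $\gram_\lambda^{-1/2}$ also commutes with $D$. Then
\[W_\lambda^\dag\pi_\lambda(\rho_0)W_\lambda=\gram_\lambda^{-1/2}\,c_\lambda\gram_\lambda D\,\gram_\lambda^{-1/2}=c_\lambda D.\]

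I expect the main obstacle to be Step 2 in the rank-deficient case, where $\pi_\lambda(\rho_0)$ is not invertible and the conjugation identity has to be justified by the limiting argument. A secondary point to verify carefully is the kernel-block weight claim used in Step 1.
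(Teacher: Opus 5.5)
Your proposal is correct and is essentially the paper's proof: the same intertwining $\pi_\lambda(\rho_0)\cre(x)=\cre(d_\zeta x)\pi_\lambda(\rho_0)$, the same eigenvalue $c_\lambda$ of $\pi_\lambda(\rho_0)$ on $E_\lambda$, and the same self-adjointness argument showing that $\gram_\lambda$ commutes with $D_\zeta^{\leq L_n}\otimes I_{E_\lambda}$. For $c_\lambda$ you use the weight bookkeeping of Lemmas~\ref{lem:wt_filt} and~\ref{lem:weight_maj}, whereas the paper uses the commutation of $\pi_\lambda(\rho_0)$ with $\dd\pi_\lambda(\mathfrak{l})$ together with the weight of $v_\lambda$; both work.

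One correction to Step 2 concerns the case $\zeta_b>0$. Whenever $\ker\rho_0\neq\{0\}$, $\pi_\lambda(\rho_0)$ is not invertible on $H_\lambda$ at all, because $H_\lambda$ has weights with positive kernel-index entries. So the conjugation formula is unavailable for pairs with $\zeta_b>0$ too, and your $\epsilon$-perturbation has to be applied to every pair. Alternatively, as the paper does, you can drop the perturbation altogether: for $Y\in\Hom(V_a,V_b)$ the identity $\rho_0^{\otimes n}Y^{[n]}=\frac{\zeta_b}{\zeta_a}Y^{[n]}\rho_0^{\otimes n}$ holds slot by slot on $\mathcal{H}^{\otimes n}$. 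It therefore restricts directly to $H_\lambda\otimes K_\lambda$ with no invertibility needed.
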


\begin{proof}
    For $Y\in \Hom(V_a,V_b)$ with $\zeta_a>\zeta_b\geq 0$, we have $\rho_0Y=\frac{\zeta_b}{\zeta_a}Y\rho_0$. Writing $\dd\pi_n(Y)\coloneqq \sum_{i=1}^nI^{\otimes (i-1)}\otimes Y \otimes I^{\otimes (n-i)}$, this implies $\rho_0^{\otimes n}\dd\pi_n(Y)=\frac{\zeta_b}{\zeta_a}\dd\pi_n(Y)\rho_0^{\otimes n}$. Restricting both sides to $H_\lambda\otimes K_\lambda$ gives $\pi_\lambda(\rho_0)\dd\pi_\lambda(Y)=\frac{\zeta_b}{\zeta_a}\dd\pi_\lambda(Y)\pi_\lambda(\rho_0)$. Applying this relation to $Y\coloneqq s_{ab}^{-1/2}\ell_{ab}(x)$ with $x\in\mathcal{K}_{ab}$ yields
    \begin{align}
        \pi_\lambda(\rho_0)\cre(x)=\frac{\zeta_b}{\zeta_a}\cre (x)\pi_\lambda(\rho_0). 
    \end{align}

    The space $\mathcal{K}_\hor^{\otimes k}\otimes E_\lambda$ is spanned by vectors of the form $x_1\otimes\cdots\otimes x_k\otimes \xi$, where $x_i\in\mathcal{K}_{a_ib_i}$ and $\xi\in E_\lambda$. By using the above relation repeatedly, we have
    \begin{align}
        \pi_\lambda(\rho_0)\word_{\lambda,k}(x_1\otimes\cdots\otimes x_k\otimes \xi)&= \frac{1}{\sqrt{k!}}\pi_\lambda(\rho_0)\cre(x_1)\cdots\cre(x_k)\xi=\frac{1}{\sqrt{k!}}\left(\prod_{i=1}^k\frac{\zeta_{b_i}}{\zeta_{a_i}}\right)\cre(x_1)\cdots\cre(x_k)\pi_\lambda(\rho_0)\xi\\
        &=\frac{1}{\sqrt{k!}}\cre(d_\zeta x_1)\cdots\cre(d_\zeta x_k)\pi_\lambda(\rho_0)\xi.
    \end{align}
    
    Now, since $\rho_0$ commutes with any block-diagonal operator $L\in\mathfrak{l}$, $\pi_\lambda(\rho_0)$ commutes with $\dd\pi_\lambda(L)$. Since $v_\lambda$ is a weight vector of weight $\lambda$, the diagonal operator $\pi_\lambda(\rho_0)$ acts on it by the scalar $\prod_{i=1}^d \mu_i^{\lambda_i}$, i.e., $\pi_\lambda(\rho_0)v_\lambda=c_\lambda v_\lambda$ with $c_\lambda=\prod_{i=1}^R \mu_i^{\lambda_i}=\prod_{a\in\mathcal{A}_+}\zeta_a^{|\lambda^{(a)}|}$.
    Therefore, $\pi_\lambda(\rho_0)\xi=c_\lambda \xi$ for any $\xi\in E_\lambda =U(\mathfrak{l})v_\lambda$. We note that for $\lambda\in \Lambda_n$, 
    \begin{align}
        c_\lambda=\prod_{a\in\mathcal{A}_+}\zeta_a^{|\lambda^{(a)}|}>0
    \end{align}
    since the kernel-block part of $\lambda$ vanishes.
    Thus, 
      \begin{align}
        \pi_\lambda(\rho_0)\word_{\lambda,k}(x_1\otimes\cdots\otimes x_k\otimes \xi)&=\frac{c_\lambda}{\sqrt{k!}}\cre(d_\zeta x_1)\cdots\cre(d_\zeta x_k)\xi=c_\lambda \word_{\lambda,k} (d_\zeta^{\otimes k}\otimes I_{E_\lambda})(x_1\otimes\cdots\otimes x_k\otimes \xi).
    \end{align}
    Restricting to the symmetric tensor sectors, we get
    \begin{align}
        \pi_\lambda(\rho_0)\symword_{\lambda,k}=c_\lambda \symword_{\lambda,k}(d_\zeta ^{\otimes k}|_{\Sym^k\mathcal{K}_\hor}\otimes I_{E_\lambda}). 
    \end{align}
    Summing over $0\leq k \leq L_n$, we obtain
    \begin{align}
        \pi_\lambda(\rho_0)\symword_{\lambda,\leq L_n}=c_\lambda  \symword_{\lambda,\leq L_n}(D_\zeta ^{\leq L_n}\otimes I_{E_\lambda}). 
    \end{align}
    Consequently,
    \begin{align}
        \symword_{\lambda,\leq L_n}^\dag  \pi_\lambda(\rho_0)\symword_{\lambda,\leq L_n}=c_\lambda  \gram_{\lambda,L_n}(D_\zeta ^{\leq L_n}\otimes I_{E_\lambda}).
    \end{align}
    
    Since the left-hand side is self-adjoint, we have
    \begin{align}
        c_\lambda  \gram_{\lambda, L_n}(D_\zeta ^{\leq L_n}\otimes I_{E_\lambda})=\overline{c_\lambda} (D_\zeta^{\leq L_n}\otimes I_{E_\lambda})^\dag  \gram_{\lambda, L_n}^\dag.
    \end{align}
    Since $c_\lambda>0$ for $\lambda\in\Lambda_n$, and $\gram_{\lambda, L_n}$ and $D_\zeta^{\leq L_n}\otimes I_{E_\lambda}$ are self-adjoint, we obtain
    \begin{align}
        \gram_{\lambda,L_n}(D_\zeta^{\leq L_n}\otimes I_{E_\lambda})=(D_\zeta^{\leq L_n}\otimes I_{E_\lambda}) \gram_{\lambda, L_n},
    \end{align}
    i.e., $D_\zeta^{\leq L_n}\otimes I_{E_\lambda}$ commutes with $\gram_{\lambda,L_n}$. Therefore, we get
    \begin{align}
        W_{\lambda}^\dag \pi_\lambda(\rho_0) W_\lambda&= \gram_{\lambda, L_n}^{-1/2}\symword_{\lambda,\leq L_n}^\dag  \pi_\lambda(\rho_0)\symword_{\lambda,\leq L_n}\gram_{\lambda, L_n}^{-1/2}=c_\lambda (D_\zeta^{\leq L_n}\otimes I_{E_\lambda}).
    \end{align}
\end{proof}

We now derive relation between normalized states by using Theorem~\ref{thm:unnormalized_states_embedding}. 
Let $P_{\leq M}$ denote the orthogonal projection of $\Gamma_{\mathrm{s}}$ onto $\mathcal{F}_{\leq M}$ with $M\in\mathbb{Z}_{\geq 0}$. We define
\begin{align}
    Z_\lambda\coloneqq \Tr \pi_\lambda(\rho_0),\quad Z_F\coloneqq \Tr(D_\zeta),\quad \Phi_0\coloneqq Z_F^{-1}D_\zeta,\quad Z_F^{\leq M}\coloneqq \Tr(P_{\leq M}D_\zeta).
\end{align}
For the fixed block-adapted orthonormal basis $\{e_j\}_{j=1}^D$ of $\mathcal{K}_\hor$, let $q_j\coloneqq \zeta_{b_j}/\zeta_{a_j}$. Then
\begin{align}
    Z_F^{\leq M}=\sum_{\substack{m=(m_1,\ldots,m_D)\in\mathbb{Z}_{\geq 0}^D\\|m|\leq M}}\prod_{j=1}^Dq_j^{m_j},\quad |m|\coloneqq \sum_{j=1}^D m_j.
\end{align}
Since $0\leq q_j<1$ for each mode $j$, the product for $Z_F$ converges:
\begin{align}
    Z_F=\sum_{m=(m_1,\ldots,m_D)\in\mathbb{Z}_{\geq 0}^D}\prod_{j=1}^Dq_j^{m_j}=\prod_{j=1}^D\frac{1}{1-q_j},
\end{align}
and therefore $D_\zeta$ is trace class. We use the convention $0^0=1$. Define the normalization factor
\begin{align}
    b_{\lambda,n}\coloneqq \frac{c_\lambda Z_F\dim E_{\lambda}}{Z_\lambda}.
\end{align}

Also, we introduce normalized states on $E_\lambda$ and $H_\lambda$ defined by
\begin{align}
    \omega_\lambda\coloneqq \frac{I_{E_\lambda}}{\dim E_\lambda},\quad \omega_\lambda^{H_\lambda}\coloneqq \frac{P_{E_\lambda}}{\dim E_\lambda},
\end{align}
where $P_{E_\lambda}$ is the orthogonal projection of $H_\lambda$ onto $E_\lambda$. Also, we define 
\begin{align}
    P_{\lambda,M}\coloneqq W_\lambda(P_{\leq M}\otimes I_{E_\lambda})W_\lambda^\dag,\qquad 0\leq M\leq L_n.\label{eq:projector_onto_E_lambda}
\end{align}
These operators satisfy the following properties.
\begin{lemma}\label{lem:properties_omegas}\par ~ (i) $W_\lambda (\ket{0}\bra{0}\otimes \omega_\lambda)W_\lambda^\dag =\omega_\lambda^{H_\lambda}$. (ii)  $\pi_\lambda(h)\omega_\lambda^{H_\lambda}\pi_\lambda(h)^\dag=\omega_\lambda^{H_\lambda}$ for any $h\in H$. (iii) $P_{\lambda,M}$ is an orthogonal projection and $P_{\lambda,M}\leq W_\lambda W_\lambda^\dag$. 
\end{lemma}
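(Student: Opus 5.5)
Items (i) and (iii) follow from the isometry property of $W_\lambda$ in Lemma~\ref{lemma:properties_W_lambda}; item (ii) follows from the definition of the ground component $E_\lambda$.

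For (i), write $\omega_\lambda=\frac{1}{\dim E_\lambda}\sum_{m}\ket{\xi_m}\bra{\xi_m}$ with $\{\xi_m\}$ an orthonormal basis of $E_\lambda$. Then
\begin{align}
W_\lambda(\ket{0}\bra{0}\otimes\omega_\lambda)W_\lambda^\dag=\frac{1}{\dim E_\lambda}\sum_m W_\lambda(\ket{0}\otimes\xi_m)\bigl(W_\lambda(\ket{0}\otimes\xi_m)\bigr)^\dag .
\end{align}
By Lemma~\ref{lemma:properties_W_lambda}(iv), $W_\lambda(\ket{0}\otimes\xi_m)=\xi_m$, so the sum is $P_{E_\lambda}/\dim E_\lambda=\omega_\lambda^{H_\lambda}$.

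For (ii), it suffices to show that $\pi_\lambda(h)$ restricts to a unitary on $E_\lambda$. By Eq.~\eqref{eq:E_lambda_highest_weight_vector_and_stabilizer}, $E_\lambda$ is the span of the $H$-orbit of $v_\lambda$, so it is $\pi_\lambda(H)$-invariant; Lemma~\ref{lem:stability_filtration} with $k=0$ gives the same. Since $\pi_\lambda(h)$ is unitary on $H_\lambda$ and maps $E_\lambda$ into itself, and $\dim E_\lambda<\infty$, it maps $E_\lambda$ onto itself and hence also preserves $E_\lambda^\perp$. Therefore $\pi_\lambda(h)$ commutes with $P_{E_\lambda}$, which gives $\pi_\lambda(h)P_{E_\lambda}\pi_\lambda(h)^\dag=P_{E_\lambda}$, and (ii) follows after dividing by $\dim E_\lambda$.

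For (iii), set $Q\coloneqq P_{\leq M}\otimes I_{E_\lambda}$, an orthogonal projection on $\mathcal{F}_{\leq L_n}\otimes E_\lambda$ for $M\leq L_n$. By Lemma~\ref{lemma:properties_W_lambda}(i), $W_\lambda^\dag W_\lambda=I$, so
\begin{align}
P_{\lambda,M}^2=W_\lambda Q W_\lambda^\dag W_\lambda Q W_\lambda^\dag=W_\lambda Q W_\lambda^\dag=P_{\lambda,M}.
\end{align}
Self-adjointness of $P_{\lambda,M}$ is immediate, so it is an orthogonal projection. Finally, $Q\leq I$ and congruence by $W_\lambda$ preserve the operator order, giving $P_{\lambda,M}\leq W_\lambda W_\lambda^\dag$.

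None of these steps is a real obstacle. The one point needing care is in (ii): one must observe that invariance of $E_\lambda$ under the unitary $\pi_\lambda(h)$ also forces invariance of $E_\lambda^\perp$, which is what makes $\pi_\lambda(h)$ commute with the projector $P_{E_\lambda}$.
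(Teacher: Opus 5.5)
Your proof is correct and follows essentially the same route as the paper: (i) uses $W_\lambda(\ket{0}\otimes\xi)=\xi$ from Lemma~\ref{lemma:properties_W_lambda}(iv), (ii) uses that the unitary $\pi_\lambda(h)$ preserves $E_\lambda$ and therefore commutes with $P_{E_\lambda}$, and (iii) uses $W_\lambda^\dag W_\lambda=I$ together with congruence applied to $P_{\leq M}\otimes I_{E_\lambda}\leq I$. Your remark that invariance of $E_\lambda$ forces invariance of $E_\lambda^\perp$ just spells out a step that the paper states in one line.
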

\begin{proof}
    (i) Let $\{\xi_i\}_i$ be an orthonormal basis of $E_\lambda$. From (iv) of Lemma~\ref{lemma:properties_W_lambda}, $W_\lambda(\ket{0}\otimes \xi_i)=\xi_i$ for all $i$. Since $\omega_\lambda=\frac{1}{\dim E_\lambda}\sum_i\xi_i\xi_i^\dag$, we get $W_\lambda(\ket{0}\bra{0}\otimes\omega_\lambda)W_\lambda^\dag =\omega_\lambda^{H_\lambda}$. 

    (ii) Since $\pi_\lambda(h)$ is unitary and preserves $E_\lambda$, it commutes with the orthogonal projection $P_{E_\lambda}$. Therefore, $\pi_\lambda(h)\omega_{\lambda}^{H_\lambda}\pi_\lambda(h)^\dag =\omega_{\lambda}^{H_\lambda}\pi_\lambda(h)\pi_\lambda(h)^\dag =\omega_{\lambda}^{H_\lambda}$. 

    (iii) Since $W_\lambda$ is an isometry, $(W_\lambda (P_{\leq M}\otimes I_{E_\lambda})W_\lambda^\dag)^2=W_\lambda (P_{\leq M}\otimes I_{E_\lambda})W_\lambda^\dag$. Thus, since $W_\lambda (P_{\leq M}\otimes I_{E_\lambda})W_\lambda^\dag$ is self-adjoint, it is an orthogonal projection. Since $ (P_{\leq M}\otimes I_{E_\lambda})\leq I$, we get $P_{\lambda,M}=W_\lambda (P_{\leq M}\otimes I_{E_\lambda})W_\lambda^\dag\leq W_\lambda W_\lambda^\dag$. 
\end{proof}

\begin{proposition}\label{prop:b_lambda_n}
    There exist structural constants $C_\tail,C_\tail'>0$ such that, for all sufficiently large $n$, all typical $\lambda\in\Lambda_n$ and any $0\leq M\leq L_n$, the following hold:
    \begin{align}
        P_{\lambda, M}\rho_{\lambda,0}P_{\lambda, M}&=b_{\lambda,n}W_{\lambda}(P_{\leq M}\Phi_{0} P_{\leq M}\otimes \omega_\lambda)W_\lambda^\dag,\\
        0\leq b_{\lambda,n}-1&\leq C_\tail e^{-C_\tail'L_n},\\
        \Tr((I-P_{\lambda,M})\rho_{\lambda,0})&\leq \Tr((I-P_{\leq M})\Phi_0)\leq C_\tail e^{-C_\tail'M}\\
        \|\rho_{\lambda,0}-P_{\lambda,M}\rho_{\lambda,0}P_{\lambda,M}\|_1&\leq 2\sqrt{C_\tail}e^{-C_\tail'M/2}.
    \end{align}
\end{proposition}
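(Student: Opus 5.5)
The plan is to derive all four statements from Theorem~\ref{thm:unnormalized_states_embedding} together with the orthogonality of $W_\lambda$-compressions.

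\emph{First identity.} Since $\rho_{\lambda,0}=Z_\lambda^{-1}\pi_\lambda(\rho_0)$ and $P_{\lambda,M}=W_\lambda(P_{\leq M}\otimes I)W_\lambda^\dag$, I expand
\begin{align}
    P_{\lambda,M}\rho_{\lambda,0}P_{\lambda,M}=Z_\lambda^{-1}W_\lambda(P_{\leq M}\otimes I)\,W_\lambda^\dag\pi_\lambda(\rho_0)W_\lambda\,(P_{\leq M}\otimes I)W_\lambda^\dag .
\end{align}
Inserting Theorem~\ref{thm:unnormalized_states_embedding} gives $Z_\lambda^{-1}c_\lambda\,W_\lambda(P_{\leq M}D_\zeta P_{\leq M}\otimes I_{E_\lambda})W_\lambda^\dag$. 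Here $D_\zeta^{\leq L_n}$ and $D_\zeta$ agree on $\mathcal{F}_{\leq M}$ because $M\le L_n$. Writing $D_\zeta=Z_F\Phi_0$ and $I_{E_\lambda}=\dim E_\lambda\,\omega_\lambda$ then produces exactly the prefactor $b_{\lambda,n}$.

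\emph{Bounds on $b_{\lambda,n}$.} Take $M=L_n$ and use that $P_{\lambda,L_n}$ is a projection below $I$ (Lemma~\ref{lem:properties_omegas}(iii)). Then
\begin{align}
    1\ge\Tr(P_{\lambda,L_n}\rho_{\lambda,0})=b_{\lambda,n}\Tr(P_{\leq L_n}\Phi_0)=b_{\lambda,n}Z_F^{\leq L_n}/Z_F,
\end{align}
so $b_{\lambda,n}\le Z_F/Z_F^{\le L_n}$. This upper bound is fine once the tail $t_M\coloneqq\Tr((I-P_{\le M})\Phi_0)$ is shown to decay exponentially.

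The tail estimate is a standard product-of-geometric bound. Let $q_{\max}\coloneqq\max_j q_j<1$. Counting occupation vectors with $|m|=k$ gives
\begin{align}
    t_M\le (1-q_{\max})^{-D}\sum_{k>M}\binom{k+D-1}{D-1}q_{\max}^k\le C_\tail e^{-C_\tail'M}
\end{align}
for structural constants obtained by absorbing the polynomial factor into a slightly larger exponential base.

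The lower bound $b_{\lambda,n}\ge1$ is the main obstacle. I would prove it by showing $Z_\lambda\le c_\lambda Z_F\dim E_\lambda$ through a weight-space count. Each weight space $(H_\lambda)_\omega$ with $\omega=\lambda-\sum_j m_j(e_{q_j}-e_{p_j})$ (lowering moves between blocks) carries $\pi_\lambda(\rho_0)$-eigenvalue $c_\lambda\prod_j q_j^{m_j}$. By Lemma~\ref{lem:ground_component}(ii), $H_\lambda=U(\mathfrak n_-)E_\lambda$. Using Lemma~\ref{lem:ordering_subalg}, $U(\mathfrak n_-)E_\lambda$ is spanned by ordered monomials $\prod_j \dd\pi_\lambda(\ell(e_j))^{m_j}\xi$ with $\xi$ ranging over a basis of $E_\lambda$. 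Each such monomial is an eigenvector of $\pi_\lambda(\rho_0)$ with eigenvalue $c_\lambda\prod_jq_j^{m_j}$, as in the commutation relation in the proof of Theorem~\ref{thm:unnormalized_states_embedding}. Since $\pi_\lambda(\rho_0)$ is positive and diagonalizable, its trace is bounded by summing eigenvalues over any spanning set of eigenvectors. This gives $Z_\lambda\le\dim E_\lambda\,c_\lambda\sum_m\prod_jq_j^{m_j}=c_\lambda Z_F\dim E_\lambda$, i.e.\ $b_{\lambda,n}\ge1$.

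Combining with the tail estimate gives $b_{\lambda,n}-1\le Z_F/Z_F^{\le L_n}-1=t_{L_n}/(1-t_{L_n})\le 2C_\tail e^{-C_\tail' L_n}$ for large $n$, after adjusting constants.

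\emph{Third and fourth inequalities.} From the first identity, $\Tr(P_{\lambda,M}\rho_{\lambda,0})=b_{\lambda,n}(1-t_M)\ge 1-t_M$, using $b_{\lambda,n}\ge1$. Hence $\Tr((I-P_{\lambda,M})\rho_{\lambda,0})\le t_M$. The trace-norm bound then follows from the gentle measurement lemma, $\|\rho-P\rho P\|_1\le 2\sqrt{\Tr((I-P)\rho)}$, which yields $2\sqrt{C_\tail}e^{-C_\tail'M/2}$.
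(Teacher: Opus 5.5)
Your proposal is correct and follows the paper's proof essentially step for step. You use the same first identity from Theorem~\ref{thm:unnormalized_states_embedding}, the same upper bound $b_{\lambda,n}\le Z_F/Z_F^{\le L_n}$, the same spanning-set-of-ordered-lowering-monomial-eigenvectors overcount (the paper's Lemma~\ref{lem:overcount}) for $b_{\lambda,n}\ge 1$, and the gentle measurement lemma. The only cosmetic differences are your binomial/$q_{\max}$ tail count versus the paper's exponential tilting, and your $t_{L_n}/(1-t_{L_n})\le 2t_{L_n}$ step, where the paper instead uses $Z_F^{\le L_n}\ge 1$ and so avoids enlarging the constant.
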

Let us first prove a lemma which we use in the proof of Proposition~\ref{prop:b_lambda_n}.
\begin{lemma}\label{lem:overcount}
    Let $V$ be a finite-dimensional Hilbert space, and let $K$ be a positive operator on $V$. Let $\{e_i\}_{i=1}^m$ be a spanning family of nonzero eigenvectors of $K$, with $Ke_i=k_ie_i$. Then
    \begin{align}
        \Tr(K)\leq \sum_{i=1}^m k_i .\label{eq:trace_overcount}
    \end{align}
\end{lemma}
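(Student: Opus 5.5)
The plan is to decompose $V$ into the eigenspaces of $K$ and count dimensions. $K$ is positive, hence self-adjoint, so write
\begin{align}
V=\bigoplus_{\kappa\in\spec(K)}V_\kappa
\end{align}
as an orthogonal sum. This gives
\begin{align}
\Tr(K)=\sum_{\kappa}\kappa\dim V_\kappa .
\end{align}
Each $e_i$ is an eigenvector, so it lies in exactly one $V_{k_i}$. Let $S_\kappa\coloneqq\{i\colon k_i=\kappa\}$. The key step is to show that $\{e_i\}_{i\in S_\kappa}$ spans $V_\kappa$.

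To see this, take $v\in V_\kappa$ and write $v=\sum_i c_ie_i$, which is possible because the family spans $V$. Apply the orthogonal projection $P_\kappa$ onto $V_\kappa$. Since $P_\kappa e_i=e_i$ for $i\in S_\kappa$ and $P_\kappa e_i=0$ otherwise, we get $v=P_\kappa v=\sum_{i\in S_\kappa}c_ie_i$. Hence $\dim V_\kappa\le|S_\kappa|$.

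Multiplying by $\kappa\ge 0$ gives $\kappa\dim V_\kappa\le\sum_{i\in S_\kappa}k_i$. Positivity of $K$ is exactly what guarantees the sign needed here. Summing over $\kappa$, and using that the sets $S_\kappa$ partition $\{1,\ldots,m\}$, yields Eq.~\eqref{eq:trace_overcount}.

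There is no real obstacle. The only point needing care is the projection argument showing that the eigenvectors with a given eigenvalue span the whole eigenspace. That is also where nonnegativity enters: the overcounting $|S_\kappa|-\dim V_\kappa$ is weighted by $\kappa\ge0$, so the excess terms can only increase the right-hand side.
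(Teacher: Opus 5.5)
Your proof is correct and follows the paper's route: decompose $V$ into eigenspaces of $K$, project the spanning family onto each $V_\kappa$ to get $\dim V_\kappa\le|S_\kappa|$, and then sum with the weights $\kappa$. You also state explicitly that $\kappa\ge 0$ is what justifies the final inequality, a step the paper leaves implicit.
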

\begin{proof}
    Let $V_k$ be the eigenspace of $K$ with eigenvalue $k$, and let $n(k)$ be the number of vectors in $\{e_i\}_{i=1}^m$ such that $Ke_i=ke_i$. 
    Projecting the spanning family onto $V_k$ shows that its members with eigenvalue $k$ span $V_k$. Therefore,
    \begin{align}
        \dim V_k\leq n(k),
    \end{align}
    which implies
    \begin{align}
        \Tr(K)=\sum_{k\in\spec(K)}k \dim V_k \leq \sum_{k\in\spec(K)}kn(k)= \sum_{i=1}^m k_i.
    \end{align}
\end{proof}

\begin{proof}[Proof of Proposition~\ref{prop:b_lambda_n}]
    Dividing the identity in Theorem~\ref{thm:unnormalized_states_embedding} by $Z_\lambda$, 
    \begin{align}
        W_\lambda^\dag \rho_{\lambda,0}W_\lambda = \frac{c_\lambda}{Z_\lambda} (D_\zeta ^{\leq L_n}\otimes I_{E_\lambda}).
    \end{align}
    Therefore, 
    \begin{align}
        P_{\lambda, M}\rho_{\lambda,0}P_{\lambda, M}&=\frac{c_\lambda}{Z_\lambda}W_\lambda (P_{\leq M} D_\zeta ^{\leq L_n}P_{\leq M} \otimes I_{E_\lambda})W_\lambda ^\dag\\
        &=\frac{c_\lambda Z_F\dim E_\lambda}{Z_\lambda}W_\lambda \left(P_{\leq M} \frac{D_\zeta}{Z_F} P_{\leq M} \otimes \frac{I_{E_\lambda}}{\dim E_\lambda}\right)W_\lambda ^\dag\\
        &=b_{\lambda,n}W_{\lambda}(P_{\leq M}\Phi_{0} P_{\leq M}\otimes \omega_\lambda)W_\lambda^\dag.\label{eq:p_rho_p}
    \end{align}

    Since $P_{\lambda,L_n}=W_\lambda W_\lambda^\dag$ and $\pi_\lambda(\rho_0)\geq 0$, we get
    \begin{align}
        Z_\lambda=\Tr (\pi_\lambda(\rho_0))\geq \Tr (\pi_\lambda(\rho_0)P_{\lambda,L_n})=\Tr ( W_\lambda^\dag \pi_\lambda(\rho_0)W_\lambda)=c_\lambda \dim E_{\lambda} Z_F^{\leq L_n},
    \end{align}
    where we have used the identity in Theorem~\ref{thm:unnormalized_states_embedding} in the last equality. 
    To derive an upper bound on $Z_\lambda$, use the ordered block-adapted basis $Y_j=\Lmat(e_j)$ of $\mathfrak{n}_-$, so that $Y_j\in \Hom(V_{a_j},V_{b_j})$, and let $\{\xi_r\}_{r=1}^{\dim E_\lambda}$ be an orthonormal basis of $E_\lambda$. 
    For $m\coloneqq (m_1,\ldots,m_D)\in\mathbb{Z}_{\geq 0}^D$, the vectors of the form
    \begin{align}
        f_{m,r}\coloneqq \dd\pi_\lambda (Y_1)^{m_1}\cdots \dd\pi_\lambda (Y_D)^{m_D}\xi_r,
    \end{align}
    span $H_\lambda$ by Lemmas~\ref{lem:ordering_subalg} and~\ref{lem:ground_component}. By $\pi_\lambda(\rho_0)\dd\pi_\lambda(Y_j)=q_j\dd\pi_\lambda(Y_j)\pi_\lambda(\rho_0)$ and $\pi_\lambda(\rho_0)\xi=c_\lambda \xi$, any nonzero $f_{m,r}$ is an eigenvector of $\pi_\lambda(\rho_0)$ with
    \begin{align}
        \pi_\lambda(\rho_0)f_{m,r}=\kappa_{m}f_{m,r},\qquad \kappa_{m}\coloneqq c_\lambda\prod_{j=1}^Dq_j^{m_j}.
    \end{align}
    Since $H_\lambda$ is finite-dimensional, there is a finite set $\mathcal{M}\subset \mathbb{Z}_{\geq 0}^D\times \{1,\ldots,\dim E_\lambda\}$ such that the nonzero vectors $\{f_{m,r}\}_{(m,r)\in\mathcal{M}}$ span $H_\lambda$. Therefore, by Lemma~\ref{lem:overcount}, we obtain
    \begin{align}
        Z_\lambda &=\Tr \pi_\lambda(\rho_0)\leq \sum_{(m,r)\in\mathcal{M}}\kappa_m\leq \sum_{m\in\mathbb{Z}_{\geq 0}^D}\sum_{r=1}^{\dim E_\lambda}\kappa_{m}= c_\lambda\sum_{i=1}^{\dim E_\lambda}\sum_{m\in\mathbb{Z}_{\geq 0}^D}\left(\prod_{j=1}^Dq_j^{m_j}\right)=c_\lambda \dim E_\lambda Z_F.
    \end{align}
    Thus,
    \begin{align}
        1\leq b_{\lambda,n}\leq \frac{Z_F}{Z_F^{\leq L_n}}.
    \end{align}

    Let $q_*\coloneqq \max_{\zeta_a>\zeta_b}\frac{\zeta_b}{\zeta_a}<1$ and choose $C_\tail'>0$ such that $e^{C_\tail'}q_*<1$. Then, for any integer $M\geq 0$,
    \begin{align}
        Z_F-Z_F^{\leq M}&=\sum_{|m|>M}\prod_{j=1}^Dq_j^{m_j}\leq e^{-C_\tail'M}\sum_{|m|>M}\prod_{j=1}^D\left(e^{C_\tail'}q_j\right)^{m_j}\leq e^{-C_\tail'M}\sum_{m\in\mathbb{Z}_{\geq 0}^D}\prod_{j=1}^D\left(e^{C_\tail'}q_j\right)^{m_j}=C_\tail e^{-C_\tail' M},
    \end{align}
    where we have used $1\leq e^{C_\tail'(\sum_jm_j-M)}$ for any $|m|>M$ in the first inequality, and defined 
    \begin{align}
        C_\tail\coloneqq \sum_{m\in\mathbb{Z}_{\geq 0}^D}\prod_{j=1}^D\left(e^{C_\tail'}q_j\right)^{m_j}= \prod_{j=1}^D\left(\frac{1}{1-e^{C_\tail'}q_j}\right).
    \end{align}

    Since the vacuum occupation $m=0$ contributes $1$, we have $Z_{F}^{\leq L_n}\geq 1$. Therefore, 
    \begin{align}
        0\leq b_{\lambda,n}-1\leq \frac{Z_F-Z_{F}^{\leq L_n}}{Z_{F}^{\leq L_n}}\leq Z_F-Z_{F}^{\leq L_n}\leq C_\tail e^{-C_\tail' L_n}.
    \end{align}
    Similarly, 
    \begin{align}
        \Tr((I-P_{\leq M})\Phi_0)=\frac{Z_F-Z_F^{\leq M}}{Z_F}\leq C_\tail e^{-C_\tail' M}. 
    \end{align}

    From Eq.~\eqref{eq:p_rho_p} and $b_{\lambda,n}\geq 1$, we get
    \begin{align}
        \Tr((I-P_{\lambda,M})\rho_{\lambda,0})=1-b_{\lambda,n}\Tr(P_{\leq M}\Phi_{0} )\leq 1- \Tr(P_{\leq M}\Phi_0)\leq C_\tail e^{-C_\tail'M}.
    \end{align}
    Therefore, from the gentle measurement lemma~\cite{winter_CodingTheoremStrongconversequantum_1999,ogawa_MakingGoodCodesClassicalQuantumChannel_2007a}, we have
    \begin{align}
        \|\rho_{\lambda,0}-P_{\lambda,M}\rho_{\lambda,0}P_{\lambda,M}\|_1\leq 2\sqrt{\Tr(I-P_{\lambda,M})\rho_{\lambda,0}}\leq 2\sqrt{C_\tail}e^{- C_\tail'M/2}.
    \end{align}
\end{proof}

Proposition~\ref{prop:b_lambda_n} establishes the following asymptotic relation:
\begin{align}
    \rho_{\lambda,0}&\approx W_{\lambda}(P_{\leq M}\Phi_{0} P_{\leq M}\otimes \omega_\lambda)W_\lambda^\dag.
\end{align}
The following two subsubsections extend this relation to
\begin{align}
    U_{\lambda,n}(Z)\rho_{\lambda,0}U_{\lambda,n}(Z)^\dag &\approx W_{\lambda}(P_{\leq M}D(Z)\Phi_{0}D(Z)^\dag P_{\leq M}\otimes \omega_\lambda)W_\lambda^\dag
\end{align}
for $\|Z\|\leq n^\beta$.

\subsubsection{Coherent-State Dynamics under Local Unitary Transformations}
In the previous subsubsection, we identified the block reference state with the Gaussian reference state in the Fock-space representation. The purpose of this subsubsection is to show that this state correspondence is approximately covariant under local unitary transformations. 

Throughout this subsubsection, we assume that
\begin{align}
    0<\eta<\frac{1}{6},\qquad 0\leq \beta <\frac{\eta}{2},\qquad \frac{1}{2}<\alpha<1-2\eta.\label{eq:scale_eta_beta_alpha}
\end{align}
We also introduce an intermediate exponent $\bar{\beta}\coloneqq \frac{1}{2}\left(\beta+\frac{\eta}{2}\right)$ so that
\begin{align}
    \beta<\bar{\beta}<\frac{\eta}{2}.
\end{align}
We also introduce a smaller cutoff
\begin{align}
    N_n\coloneqq \floor*{\frac{L_n}{2}}
\end{align}
so that $N_n+1\leq L_n$ holds for all sufficiently large $n$. Under $N_n+1\leq L_n$, one creation operator can act on $\mathcal{F}_{\leq N_n}$ without leaving the domain of $\symword_\lambda$ and $W_\lambda$. 

We shall also use the error scale $\delta_n$ defined in Eq.~\eqref{eq:definition_delta_n}:
\begin{align}
    \delta_n\coloneqq n^{\alpha-1+2\eta}+n^{-1/2+3\eta}.
\end{align}
Under Eq.~\eqref{eq:scale_eta_beta_alpha}, $\delta_n\to0$ as $n\to\infty$. 

As in Section~\ref{sec:Polar_correction}, we use notations $\symword_\lambda\coloneqq \symword_{\lambda,\leq L_n}$ and $\gram_\lambda\coloneqq \gram_{\lambda,L_n}$ for short. 

For $X\in\mathcal{K}_{\hor}$, the corresponding coherent state is given by
\begin{align}
    \ket{X}\coloneqq e^{-\|X\|^2/2}\sum_{k=0}^\infty \frac{X^{\otimes k}}{\sqrt{k!}}\in\Gamma_{\mathrm{s}}(\mathcal{K}_\hor).
\end{align}
Differentiating the coherent state gives
\begin{align}
    \frac{\dd}{\dd t}\ket{tX}=\mathsf{B}(X)\ket{tX},\qquad \mathsf{B}
    (X)\coloneqq a^\dag (X)-a(X).
\end{align}
Hereafter, a ket vector with a capital letter always describes a coherent state. 
Let $\Pi_m$ denote the projection onto the $m$-particle sector and $P_{\leq M}$ be the projector defined by $P_{\leq M}\coloneqq \sum_{m=0}^{M}\Pi_m$ for $M\in\mathbb{Z}_{\geq 0}$. Then, $\Pi_m\ket{X}=e^{-\|X\|^2/2}\frac{X^{\otimes m}}{\sqrt{m!}}$. Therefore, 
\begin{align}
   \| \Pi_m \ket{X}\|^2=e^{-\|X\|^2}\frac{\|X\|^{2m}}{m!},
\end{align}
implying that the particle-number distribution of $\ket{X}$ is Poisson with mean $\mu=\|X\|^2$. 

We introduce a coherent state with a cutoff: 
\begin{align}
    t_X\coloneqq \|P_{\leq N_n}\ket{X}\|^2,\qquad \ket{\widehat{X}}\coloneqq t^{-1/2}_XP_{\leq N_n}\ket{X}.\label{eq:cutoff_coherent_state}
\end{align}
The following lemma provides an estimate for the tail.
\begin{lemma}\label{lem:m_particle_sector_bound}
    For all sufficiently large $n$, any $X\in\mathcal{K}_\hor$ satisfying $\|X\|\leq 3 n^{\bar{\beta}}$, and any $m\geq N_n$, 
    \begin{align}
        1-t_X\leq 2^{-N_n},\qquad \|\Pi_m\ket{X}\|\leq 2^{-m/2}\leq 2^{-N_n/2}.
    \end{align}
    In particular $t_X\geq \frac{1}{2}$. 
\end{lemma}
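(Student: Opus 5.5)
The plan is to treat the particle-number distribution of $\ket{X}$ as Poisson with mean $\mu=\|X\|^2\leq 9n^{2\bar\beta}$, and to compare it with the cutoff $N_n=\floor{\floor{n^\eta}/2}\sim n^\eta/2$. Because $2\bar\beta<\eta$, we have $\mu/N_n\to0$, so the relevant Poisson tail is far beyond the mean. Both claimed bounds should follow from the elementary estimate $\mu^m/m!\leq (e\mu/m)^m$, which comes from $m!\geq (m/e)^m$.

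\textbf{Step 1 (pointwise sector bound).} For any $m\geq N_n$ we have
\begin{align}
\|\Pi_m\ket{X}\|^2=e^{-\mu}\frac{\mu^m}{m!}\leq \left(\frac{e\mu}{m}\right)^m\leq \left(\frac{9e\,n^{2\bar\beta}}{N_n}\right)^m .
\end{align}
Since $N_n\geq n^\eta/4$ for large $n$ and $2\bar\beta<\eta$, the base $9e\,n^{2\bar\beta}/N_n$ tends to $0$. In particular it is at most $1/4$ for all sufficiently large $n$, uniformly in $X$ with $\|X\|\leq 3n^{\bar\beta}$ and in $m\geq N_n$. This gives $\|\Pi_m\ket{X}\|^2\leq 4^{-m}\leq 2^{-m}$, and hence $\|\Pi_m\ket{X}\|\leq 2^{-m/2}\leq 2^{-N_n/2}$ because $m\geq N_n$.

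\textbf{Step 2 (tail of $t_X$).} The coherent state is normalized, so
\begin{align}
1-t_X=\sum_{m>N_n}\|\Pi_m\ket{X}\|^2\leq \sum_{m\geq N_n+1}4^{-m}=\frac{4^{-N_n}}{3}\leq 2^{-N_n}.
\end{align}
Finally, $N_n\geq1$ for large $n$, so $t_X\geq 1-2^{-1}=1/2$.

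No real obstacle is expected. The only point requiring care is uniformity: the threshold for ``sufficiently large $n$'' must depend only on $\eta,\beta$, and not on $X$ or $m$. This is exactly what the uniform base bound in Step 1 provides, and it rests on the strict inequality $\bar\beta<\eta/2$ fixed earlier.
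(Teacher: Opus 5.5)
Your proof is correct and follows the paper's argument. Both use $\mu=\|X\|^2\le 9n^{2\bar\beta}\ll N_n$ together with $m!\ge (m/e)^m$ to obtain the sector bound $\|\Pi_m\ket{X}\|^2\le (e\mu/N_n)^m$. The only difference is in the tail estimate. The paper applies the exponential Markov bound $\mathbb{P}[S\ge N_n]\le (e\mu/N_n)^{N_n}$ to the Poisson variable directly, whereas you sum the pointwise sector bounds as a geometric series, which is why you need the base to be at most $1/4$ rather than $1/2$.
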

\begin{proof}
    In the case with mean $\mu=0$, $\|\Pi_0\ket{X}\|=1$ and $\|\Pi_m\ket{X}\|=0$ for $m\geq 1$, and hence the bound is trivial. Assume $\mu\neq 0$. Since $2\bar{\beta}<\eta$, for all sufficiently large $n$, 
    \begin{align}
        9en^{2\bar{\beta}}\leq \frac{N_n}{2}.
    \end{align}
    For $\|X\|\leq 3n^{\bar{\beta}}$, this implies $e\mu/N_n\leq 1/2$. For a random variable $S$ that follows the Poisson distribution with mean $\mu$, for $N>\mu$, the exponential Markov bound gives
    \begin{align}
        \mathbb{P}[S\geq N]\leq \left(\frac{e\mu}{N}\right)^N.\label{eq:exp_Mar}
    \end{align}
    Indeed, for any $r>0$, the Markov inequality yields
    \begin{align}
         \mathbb{P}[S\geq N]= \mathbb{P}[e^{rS}\geq e^{rN}]\leq e^{-rN}\mathbb{E}[e^{rS}]=\exp(\mu(e^r-1)-rN).
    \end{align}
    Substituting $r\coloneqq \ln(N/\mu)>0$, we get Eq.~\eqref{eq:exp_Mar}. By using Eq.~\eqref{eq:exp_Mar}, 
    \begin{align}
        1-t_X=\mathbb{P}[S> N_n]\leq \mathbb{P}[S\geq N_n] \leq \left(\frac{e\mu}{N_n}\right)^{N_n}\leq 2^{-N_n}.
    \end{align}
    In particular, $t_X\geq 1-2^{-N_n}\geq 1/2$ as long as $N_n\geq 1$. 
    Moreover, $m!\geq (m/e)^m$ gives, for $m\geq N_n$, 
    \begin{align}
        \|\Pi_m\ket{X}\|^2=e^{-\mu}\frac{\mu^m}{m!}\leq \left(\frac{e\mu}{N_n}\right)^m\leq 2^{-m}.
    \end{align}
\end{proof}

The displacement generator on Fock space is given by
\begin{align}
    \fdisgen(X) = a^\dag (X)-a(X),
\end{align}
while on the Schur block,
\begin{align}
    \sdisgen_{\lambda,n}(X)\coloneqq n^{-1/2}\dd\pi_\lambda(K(X)).
\end{align}
We prove an approximate intertwining relation between these generators. To this end, we first show the following.

\begin{lemma}\label{lem:sdisgen}
    Let $X=\sum_{j=1}^Dx_je_j\in\mathcal{K}_{\hor}$, where $\{e_j\}_{j=1}^D$ denotes the block-adapted orthonormal basis. We define
    \begin{align}
        \sdisgen_{\lambda,n}(X)\coloneqq n^{-1/2}\dd\pi_\lambda(K(X))=\sum_{j=1}^D\gamma_{j}(\lambda)(x_j\crej-\overline{x_j}\annj),\quad \gamma_{j}(\lambda)\coloneqq \sqrt{\frac{s_{a_jb_j}(\lambda)}{n(\zeta_{a_j}-\zeta_{b_j})}}. 
    \end{align}
    Then, there exists a structural constant $C_\gamma$ such that
    \begin{align}
        |\gamma_j-1|\leq C_\gamma n^{\alpha-1}
    \end{align}
    uniformly over $\lambda\in \Lambda_n$ and $j\in \{1,\ldots,D\}$. In particular, for all sufficiently large $n$, $|\gamma_j|\leq 2$. 
\end{lemma}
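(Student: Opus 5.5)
The plan has two parts: an exact algebraic identity for $\sdisgen_{\lambda,n}(X)$, followed by a bound on the normalization ratio using typicality.

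\emph{Step 1: the identity.} Expand $X=\sum_j x_je_j$ with $e_j\in\mathcal{K}_{a_jb_j}$. By Lemma~\ref{lem:real_lin_K} and the definition in Eq.~\eqref{eq:definition_K_X},
\begin{align}
K(X)=\sum_j\frac{x_j\ell_{a_jb_j}(e_j)-\overline{x_j}\,r_{a_jb_j}(e_j)}{\sqrt{\zeta_{a_j}-\zeta_{b_j}}},
\end{align}
since $\ell$ is linear and $r$ is antilinear. Definition~\ref{def:rescaling_blocks} gives $\ell_{a_jb_j}(e_j)=s_{a_jb_j}^{1/2}\Lmat(e_j)$ and $r_{a_jb_j}(e_j)=s_{a_jb_j}^{1/2}\Umat(\bar e_j)$, because each $e_j$ lies in a single block summand. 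Applying $\dd\pi_\lambda$ and Definition~\ref{def:creation_and_annihilation_finite} turns these into $s_{a_jb_j}^{1/2}\crej$ and $s_{a_jb_j}^{1/2}\annj$. Multiplying by $n^{-1/2}$ then yields exactly the claimed expression with coefficient $\gamma_j(\lambda)=\sqrt{s_{a_jb_j}/(n(\zeta_{a_j}-\zeta_{b_j}))}$.

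\emph{Step 2: the bound on $\gamma_j$.} From Eq.~\eqref{eq:s_ab_upper_bound_2nalpha} in the proof of Lemma~\ref{lem:s_ab_lower_upper_bound}, typicality gives
\begin{align}
|s_{ab}(\lambda)-n(\zeta_a-\zeta_b)|\leq 2n^\alpha .
\end{align}
Dividing by $n(\zeta_a-\zeta_b)\geq n\delta_*$ shows that $g_j\coloneqq\gamma_j^2$ satisfies $|g_j-1|\leq 2n^{\alpha-1}/\delta_*$. Since $|\sqrt{g}-1|=|g-1|/(\sqrt{g}+1)\leq|g-1|$, this gives $|\gamma_j-1|\leq C_\gamma n^{\alpha-1}$ with $C_\gamma\coloneqq 2/\delta_*$, uniformly in $\lambda\in\Lambda_n$ and $j$.

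Since $\alpha<1$, we have $C_\gamma n^{\alpha-1}\leq 1$ for large $n$, and hence $|\gamma_j|\leq 2$.

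The only delicate point is the bookkeeping of linearity versus antilinearity in $r$, which is needed to get the coefficient $\overline{x_j}$ on $\annj$. This works out because $\ann(u)=\ann^{\lin}(\bar u)$ is antilinear, matching the antilinearity of $r$.
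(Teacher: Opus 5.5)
Your proposal is correct and matches the paper's proof. The paper obtains the bound exactly as you do: it divides Eq.~\eqref{eq:s_ab_upper_bound_2nalpha} by $n(\zeta_{a_j}-\zeta_{b_j})\geq \delta_* n$ and uses $|\gamma_j-1|=|\gamma_j^2-1|/(\gamma_j+1)\leq|\gamma_j^2-1|$, giving $C_\gamma=2/\delta_*$. The paper takes the expansion of $\sdisgen_{\lambda,n}(X)$ as immediate from the definitions, so your Step~1 check of the linear $\ell$ versus the antilinear $r$ is a correct, more explicit version of what the paper leaves implicit.
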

\begin{proof}
    From Eqs.~\eqref{eq:s_ab_lower_upper_bound} and \eqref{eq:s_ab_upper_bound_2nalpha}, 
    \begin{align}
        |\gamma_j^2-1|=\frac{|s_{a_jb_j}(\lambda)-n(\zeta_{a_j}-\zeta_{b_j})|}{n(\zeta_{a_j}-\zeta_{b_j})}\leq \frac{2n^\alpha}{\delta_*n}=\frac{2}{\delta_*}n^{\alpha-1}.
    \end{align}
    Since $\gamma_{j}>0$, $\gamma_{j}+1>1$, and hence $|\gamma_{j}-1|\leq |\gamma_{j}^2-1| \leq C_\gamma n^{\alpha-1}$ with $C_\gamma\coloneqq \frac{2}{\delta_*}$. 
\end{proof}

We now prove an approximate intertwining relation on the truncated space $\mathcal{F}_{\leq N_n}\otimes E_\lambda$. 
\begin{lemma}\label{lem:intw_gen}
    For all sufficiently large $n$, any $\lambda\in \Lambda_n$, $X\in\mathcal{K}_\hor$, and $\psi\in\mathcal{F}_{\leq N_n}\otimes E_\lambda$, 
    \begin{align}
        \|\sdisgen_{\lambda,n}(X)\symword_{\lambda} \psi-\symword_{\lambda}(\fdisgen(X)\otimes I_{E_{\lambda}})\psi\|\leq C_{\mathrm{gen}}  (n^{\alpha-1}L_n^{3/2}+n^{-1/2}L_n^{5/2})\|X\|\|\psi\|.
    \end{align}
    Consequently, if $\|X\|\leq 3n^{\bar{\beta}}$, then
    \begin{align}
         \|\sdisgen_{\lambda,n}(X)\symword_{\lambda } \psi-\symword_{\lambda}(\fdisgen(X)\otimes I_{E_{\lambda}})\psi\|\leq 3C_{\mathrm{gen}}  \delta_n\|\psi\|\label{eq:bound_intw_gen}
    \end{align}
\end{lemma}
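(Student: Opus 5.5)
The proof decomposes the difference by linearity in the basis expansion $X=\sum_j x_j e_j$. We have $\fdisgen(X)=\sum_j (x_j a_j^\dag-\overline{x_j}a_j)$, and by Lemma~\ref{lem:sdisgen}, $\sdisgen_{\lambda,n}(X)=\sum_j\gamma_j(\lambda)(x_j\crej-\overline{x_j}\annj)$. Hence
\begin{align}
\sdisgen_{\lambda,n}(X)\symword_\lambda\psi-\symword_\lambda(\fdisgen(X)\otimes I)\psi
&=\sum_j(\gamma_j-1)\bigl(x_j\crej-\overline{x_j}\annj\bigr)\symword_\lambda\psi\nonumber\\
&\quad+\sum_j\Bigl(x_j\bigl(\crej\symword_\lambda-\symword_\lambda a_j^\dag\bigr)-\overline{x_j}\bigl(\annj\symword_\lambda-\symword_\lambda a_j\bigr)\Bigr)\psi .
\end{align}
We then write $\psi=\bigoplus_{k=0}^{N_n}\psi_k$. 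Since $N_n+1\leq L_n$, every $a_j^\dag\psi_k$ stays in the domain of $\symword_\lambda$, and $\symword_\lambda a_j^\dag\psi_k=\symword_{\lambda,k+1}a_j^\dag\psi_k$.

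For the second sum, apply Proposition~\ref{prop:intw_cre} with $k+1\leq L_n$ and Proposition~\ref{prop:intw_ann} sector by sector. Theorem~\ref{thm:delta_N} (via Theorem~\ref{thm:qi}) gives $q_{L_n}<3$, so the sector-$k$ errors are bounded by $3C\varepsilon(k+1)\|\psi_k\|$, with $\varepsilon(k)\leq\tilde C(n^{\alpha-1}k+n^{-1/2}k^2)$. The main subtlety is summing over sectors. The errors $\mathsf{E}^{\pm}_{j,k}\psi_k$ lie in different images and need not be orthogonal, so we cannot use Pythagoras. Instead we use the triangle inequality followed by Cauchy--Schwarz:
\begin{align}
\sum_{k\leq N_n}\varepsilon(k+1)\|\psi_k\|\leq\Bigl(\sum_{k\leq L_n}\varepsilon(k)^2\Bigr)^{1/2}\|\psi\|\lesssim (n^{\alpha-1}L_n^{3/2}+n^{-1/2}L_n^{5/2})\|\psi\|.
\end{align}
This is exactly the stated rate. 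Summing over $j$ with $\sum_j|x_j|\leq\sqrt D\|X\|$ yields a bound of the form $C'(n^{\alpha-1}L_n^{3/2}+n^{-1/2}L_n^{5/2})\|X\|\|\psi\|$.

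For the first sum, we bound $\|\crej\symword_\lambda\psi\|$ and $\|\annj\symword_\lambda\psi\|$. By the same intertwining relations, these equal $\|\symword_\lambda a_j^{\dag}\psi\|$ and $\|\symword_\lambda a_j\psi\|$ up to the errors just controlled. On sector $k$ one has $\|a_j^{(\dag)}\psi_k\|\leq\sqrt{k+1}\|\psi_k\|$, and $\|\symword_\lambda\|\leq 3$. Thus each term is at most of order $\sqrt{L_n}\|\psi\|$ plus the error term. Multiplying by $|\gamma_j-1|\leq C_\gamma n^{\alpha-1}$ from Lemma~\ref{lem:sdisgen} gives $O(n^{\alpha-1}L_n^{1/2})\|X\|\|\psi\|$, which is dominated by $n^{\alpha-1}L_n^{3/2}$. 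This proves the first inequality with a structural constant $C_{\mathrm{gen}}$.

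For the consequence, substitute $\|X\|\leq 3n^{\bar\beta}$ and $L_n\leq n^\eta$. This gives the bound $3C_{\mathrm{gen}}(n^{\alpha-1+3\eta/2+\bar\beta}+n^{-1/2+5\eta/2+\bar\beta})\|\psi\|$. Since $\bar\beta<\eta/2$, the exponents are at most $\alpha-1+2\eta$ and $-1/2+3\eta$ respectively, which gives the stated bound $3C_{\mathrm{gen}}\delta_n\|\psi\|$. The step needing most care is the sector summation, since a naive sum over sectors would lose an extra factor $\sqrt{L_n}$; the $\ell^2$ Cauchy--Schwarz handling above avoids this.
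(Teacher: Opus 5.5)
Your proof is correct and follows the paper's argument: you expand in the block-adapted basis, apply Propositions~\ref{prop:intw_cre} and~\ref{prop:intw_ann} sector by sector with $q_{L_n}<3$, use $|\gamma_j-1|\le C_\gamma n^{\alpha-1}$, and finish with $n^{\bar\beta}(n^{\alpha-1}L_n^{3/2}+n^{-1/2}L_n^{5/2})\le\delta_n$. The differences are cosmetic. The paper attaches $(\gamma_j-1)$ to $\symword_\lambda a_j^\dag$ and $\symword_\lambda a_j$, so it needs no second use of intertwining but uses $|\gamma_j|\le 2$ on the main terms. It also sums sector errors as $\max_k\varepsilon(k)\sum_k\|\psi_k\|\le\max_k\varepsilon(k)\sqrt{L_n}\|\psi\|$, which gives the same rate as your $\ell^2$ Cauchy--Schwarz.
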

\begin{proof}
     Expanding $X=\sum_{j=1}^Dx_je_j$ in the block-adapted orthonormal basis and using Lemma~\ref{lem:sdisgen}, on $\mathcal{F}_{\leq N_n}\otimes E_\lambda$, we have 
    \begin{align}
        &\sdisgen_{\lambda,n}(X)\symword_{\lambda} -\symword_{\lambda}(\fdisgen(X)\otimes I_{E_{\lambda}})\\
        &=\sum_{j=1}^D\gamma_jx_j\left(\crej\symword_{\lambda}-\symword_{\lambda}a_j^\dag\right)-\sum_{j=1}^D\gamma_j\overline{x_j}\left(\annj\symword_{\lambda}-\symword_{\lambda}a_j\right)+\sum_{j=1}^D(\gamma_j-1)\left(x_j\symword_{\lambda} a_j^\dag -\overline{x_j}\symword_{\lambda} a_j\right).
    \end{align}

    Set $\kappa_n\coloneqq n^{\alpha-1}L_n^{3/2}+n^{-1/2}L_n^{5/2}$. Write $\psi=\sum_{k=0}^{N_n}\psi_k$. By Proposition~\ref{prop:intw_cre} together with $q_{L_n}<3$, proven in Theorem~\ref{thm:delta_N}, we have
    \begin{align}
        \|(\crej \symword_{\lambda}-\symword_\lambda a_j^\dag)\psi\|\leq \sum_{k=1}^{N_n+1}\|(\crej \symword_{\lambda,k-1}-\symword_{\lambda,k} a_j^\dag)\psi_{k-1}\|\leq 3C_+n^{-1/2}(N_n+1)^2\sum_{k=0}^{N_n}\|\psi_k\|\leq C_+'\kappa_n\|\psi\|
    \end{align}
    uniformly in $j$, where $C_+'\coloneqq 3C_+$. Here, we used $N_n+1\leq L_n$ and $\sum_{k}\|\psi_k\|\leq \sqrt{L_n}\|\psi\|$. 
    Similarly, by Proposition~\ref{prop:intw_ann}, we have
    \begin{align}
        \|(\annj\symword_\lambda-\symword_\lambda a_j)\psi\|&\leq \sum_{k=1}^{N_n} \|(\annj\symword_{\lambda,k}-\symword_{\lambda,k-1} a_j)\psi_k\|\\
        &\leq 3C_-\left(C_{\mathsf{D}}\sqrt{L_n}\left(n^{\alpha-1}+\frac{L_n}{n}\right)+n^{-1}L_n^{3/2}+n^{-1/2}L_n\right)\sum_{k=0}^{N_n}\|\psi_k\|\leq  C_-'\kappa_n\|\psi\|,
    \end{align}
    uniformly in $j$, where $C_-'\coloneqq 3C_-(C_{\mathsf{D}}+2)$. Since $|\gamma_j|\leq 2$ for sufficiently large $n$ and $\sum_j|x_j|\leq \sqrt{D}\|X\|$, we have
    \begin{align}
        &\left\|\left(\sum_{j=1}^D\gamma_jx_j\left(\crej\symword_{\lambda}-\symword_{\lambda}a_j^\dag\right)-\sum_{j=1}^D\gamma_j\overline{x_j}\left(\annj\symword_{\lambda}-\symword_{\lambda}a_j\right)\right)\psi\right\|\leq 2 (C_+'+C_-')\sqrt{D}\kappa_n\|X\|\|\psi\|.
    \end{align}
     
     Since $\|\gram_\lambda-I\|\leq 1/2$ for all sufficiently large $n$ by Theorem~\ref{thm:qi}, we have $\|\symword_\lambda\|=\|\gram_\lambda\|^{1/2}\leq \sqrt{3/2}<2$. Since
    \begin{align}
        \|a_j^\dag \psi\|^2&=\sum_{k=0}^{N_n}\|a_j^\dag \psi_k\|^2\leq \sum_{k=0}^{N_n}(k+1)\| \psi_k\|^2\leq (N_n+1)\|\psi\|^2,\\
        \|a_j \psi\|^2&=\sum_{k=0}^{N_n}\|a_j \psi_k\|^2\leq \sum_{k=0}^{N_n}k\| \psi_k\|^2\leq N_n\|\psi\|^2
    \end{align}
    we have $\|a_j^\dag \psi\|\leq \sqrt{N_n+1}\|\psi\|$ and $\|a_j \psi\|\leq \sqrt{N_n}\|\psi\|$. Thus, Lemma~\ref{lem:sdisgen} yields
    \begin{align}
        \left\|\sum_{j=1}^D(\gamma_j-1)\left(x_j\symword_{\lambda} a_j^\dag -\overline{x_j}\symword_{\lambda} a_j\right)\psi\right\|\leq 4C_\gamma n^{\alpha-1}\sqrt{L_n}\|\psi\|\sum_{j=1}^D\|x_j\|\leq  4C_\gamma\sqrt{D} \kappa_n\|X\|\|\psi\|.
    \end{align}
    Therefore, 
    \begin{align}
         \|\sdisgen_{\lambda,n}(X)\symword_{\lambda} \psi-\symword_{\lambda}(\fdisgen(X)\otimes I_{E_{\lambda}})\psi\|\leq C_{\mathrm{gen}} (n^{\alpha-1}L_n^{3/2}+n^{-1/2}L_n^{5/2})\|X\|\|\psi\|,
    \end{align}
    where $C_{\mathrm{gen}}\coloneqq  2 (C_+'+C_-')\sqrt{D}+ 4C_\gamma\sqrt{D}$. 
    If $\|X\|\leq 3n^{\bar{\beta}}$, then $\bar{\beta}<\eta/2$ gives $n^{\bar{\beta}}\kappa_n\leq n^{\alpha-1+2\eta}+n^{-1/2+3\eta}=\delta_n$, which proves Eq.~\eqref{eq:bound_intw_gen}.
\end{proof}

In order to prove the approximate covariance, we also need to control the error caused by inserting the cutoff $P_{\leq N_n}$. Since creation and annihilation operators change the particle number by only one, this error is supported entirely on the boundary sectors $N_n$ and $N_n+1$, which the following lemma addresses:
\begin{lemma}\label{lem:bdry_error}
    For any $X\in\mathcal{K}_\hor$, 
    \begin{align}
        [\fdisgen(X),P_{\leq {N_n}}]=a^\dag (X)\Pi_{N_n}+a(X)\Pi_{N_n+1}.
    \end{align}
    Consequently, for $\|X\|\leq 3n^{\bar{\beta}}$ and $t\in [0,1]$,
    \begin{align}
        \| [\fdisgen(X),P_{\leq N_n}]\ket{tX}\|\leq 2 \sqrt{N_n+1}\|X\|2^{-N_n/2}.
    \end{align}
    In particular, for all sufficiently large $n$, $\| [\fdisgen(X),P_{\leq N_n}]\ket{tX}\|\leq \delta_n$. 
\end{lemma}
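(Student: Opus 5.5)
The commutator identity is a direct sector-by-sector computation. Write $P_{\leq N_n}=\sum_{m\leq N_n}\Pi_m$ and use that $a^\dag(X)$ maps the $m$-particle sector to the $(m+1)$-particle sector while $a(X)$ maps it to the $(m-1)$-particle sector. This gives $a^\dag(X)\Pi_m=\Pi_{m+1}a^\dag(X)$ and $a(X)\Pi_m=\Pi_{m-1}a(X)$, and hence
\begin{align}
[a^\dag(X),P_{\leq N_n}]=\sum_{m\leq N_n}(\Pi_{m+1}-\Pi_m)a^\dag(X)=\Pi_{N_n+1}a^\dag(X)=a^\dag(X)\Pi_{N_n}.
\end{align}
Here the telescoping uses $\Pi_0a^\dag(X)=0$. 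In the same way, $[a(X),P_{\leq N_n}]=-\Pi_{N_n}a(X)=-a(X)\Pi_{N_n+1}$, because $a(X)$ kills the vacuum. Subtracting the two identities yields the claimed formula $[\fdisgen(X),P_{\leq N_n}]=a^\dag(X)\Pi_{N_n}+a(X)\Pi_{N_n+1}$.

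For the norm bound, apply the identity to $\ket{tX}$ and use the standard Fock estimates $\|a^\dag(X)\Pi_m\psi\|\leq\sqrt{m+1}\|X\|\|\Pi_m\psi\|$ and $\|a(X)\Pi_m\psi\|\leq\sqrt{m}\|X\|\|\Pi_m\psi\|$. These hold because on $\Sym^m$ one has $a(X)^\dag a(X)\leq m\|X\|^2$ and $a(X)a^\dag(X)=a^\dag(X)a(X)+\|X\|^2$ by the CCR~\eqref{eq:CCR_Fock}. The result is
\begin{align}
\|[\fdisgen(X),P_{\leq N_n}]\ket{tX}\|\leq\sqrt{N_n+1}\,\|X\|\bigl(\|\Pi_{N_n}\ket{tX}\|+\|\Pi_{N_n+1}\ket{tX}\|\bigr).
\end{align}
Since $\|tX\|\leq\|X\|\leq 3n^{\bar\beta}$ for $t\in[0,1]$, Lemma~\ref{lem:m_particle_sector_bound} applies to $tX$ with $m=N_n$ and $m=N_n+1$, bounding each term by $2^{-N_n/2}$. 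This gives the factor $2\sqrt{N_n+1}\|X\|2^{-N_n/2}$.

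For the last claim, use $N_n=\floor{L_n/2}\sim n^\eta/2$ and $\|X\|\leq3n^{\bar\beta}$. The bound is then $O(n^{\eta/2+\bar\beta}2^{-n^\eta/4})$, which decays stretched-exponentially. Meanwhile $\delta_n\geq n^{-1/2+3\eta}$ decays only polynomially, so the inequality holds for all sufficiently large $n$. There is no real obstacle here; the only care needed is the index bookkeeping in the telescoping sums and checking that Lemma~\ref{lem:m_particle_sector_bound} is used with $tX$ in its admissible range.
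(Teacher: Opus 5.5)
Your proposal is correct and follows essentially the same route as the paper: the sector-shifting identity for the commutator, the sector bounds $\|a^\dag(X)|_{\Sym^k\mathcal{K}_\hor}\|\le\sqrt{k+1}\|X\|$ and $\|a(X)|_{\Sym^k\mathcal{K}_\hor}\|\le\sqrt{k}\|X\|$, and Lemma~\ref{lem:m_particle_sector_bound} applied to $tX$, followed by the stretched-exponential versus polynomial comparison with $\delta_n$. The differences are only in presentation: your telescoping sums make explicit the paper's informal ``crossing the boundary'' argument, and you derive the sector bounds from the CCR where the paper cites Eq.~\eqref{eq:cre_ann_sector_bound}.
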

\begin{proof}
    The creation operator crosses the boundary of $P_{\leq N_n}$ only from the $N_n$-particle sector, while the annihilation operator crosses the boundary only from the $N_n+1$-particle sector. Therefore, we get
    \begin{align}
        a^\dag (X)P_{\leq N_n}-P_{\leq N_n}a^\dag (X)=a^{\dag}(X)\Pi_{N_n},\qquad P_{\leq N_n}a (X)-a (X)P_{\leq N_n}=a(X)\Pi_{N_n+1},
    \end{align}
    and hence $[\fdisgen(X),P_{\leq {N_n}}]=a^\dag (X)\Pi_{N_n}+a(X)\Pi_{N_n+1}$.

    Moreover, by using $\|a^\dag(X) |_{\Sym^k\mathcal{K}_\hor}\|\leq \sqrt{k+1}\|X\|,\,\|a(X) |_{\Sym^k\mathcal{K}_\hor}\|\leq \sqrt{k}\|X\|$ (see Eq.~\eqref{eq:cre_ann_sector_bound}), we get
    \begin{align}
        \| [\fdisgen(X),P_{\leq N_n}]\ket{tX}\|&\leq \|a^\dag (X)\Pi_{N_n}\ket{tX}\|+\|a(X)\Pi_{N_n+1}\ket{tX}\|\\
        &\leq\sqrt{N_n+1}\|X\|\left(\|\Pi_{N_n}\ket{tX}\|+\|\Pi_{N_n+1}\ket{tX}\|\right)\\
        &\leq 2\sqrt{N_n+1}\|X\| 2^{-N_n/2},
    \end{align}
    where we used Lemma~\ref{lem:m_particle_sector_bound} in the last line. Since $2^{-N_n/2}=\exp\left(-\frac{\ln 2}{2}N_n\right)$, the right-hand side is superpolynomially small in $n$. Therefore, for all sufficiently large $n$, $\| [\fdisgen(X),P_{\leq N_n}]\ket{tX}\|\leq \delta_n$. 
\end{proof}

Combining these lemmas, we now prove the approximate covariance.
\begin{proposition}[Approximate covariance on coherent trajectories]\label{prop:single_covariance}
    For all sufficiently large $n$, any $\lambda\in \Lambda_n$, any $X\in\mathcal{K}_\hor$ with $\|X\|\leq3n^{\bar{\beta}}$, and any unit vector $\xi\in E_\lambda$, 
    \begin{align}
        \|U_{\lambda,n}(X)\xi-W_\lambda (\ket{\widehat{X}}\otimes \xi)\|\leq \frac{C_{\mathrm{cv}}}{2}\delta_n
    \end{align}
    for a structural constant $C_{\mathrm{cv}}$. 
    Consequently, 
    \begin{align}
        \|U_{\lambda,n}(X)\omega_\lambda^{H_\lambda}U_{\lambda,n}(X)^\dag - W_\lambda (\ket{\widehat{{X}}}\bra{\widehat{X}}\otimes \omega_\lambda)W_\lambda^\dag\|_1\leq C_{\mathrm{cv}}\delta_n.
    \end{align}
\end{proposition}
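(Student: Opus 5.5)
The plan is a Duhamel (variation-of-constants) comparison along the one-parameter path $t\mapsto tX$, $t\in[0,1]$. I would define $f(t)\coloneqq U_{\lambda,n}(tX)\xi$ and $g(t)\coloneqq \symword_\lambda(P_{\leq N_n}\ket{tX}\otimes\xi)$. By Lemma~\ref{lemma:properties_W_lambda}(iv) and Lemma~\ref{lem:vacuum_sector}, $f(0)=g(0)=\xi$. Since $U_{\lambda,n}(tX)=e^{t\sdisgen_{\lambda,n}(X)}$ with $\sdisgen_{\lambda,n}(X)$ anti-Hermitian, I would write
\begin{align}
    f(1)-g(1)=\int_0^1 U_{\lambda,n}((1-t)X)\left(\sdisgen_{\lambda,n}(X)g(t)-g'(t)\right)\dd t,
\end{align}
so that $\|f(1)-g(1)\|\leq\sup_t\|\sdisgen_{\lambda,n}(X)g(t)-g'(t)\|$.

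Next, I would compute $g'(t)=\symword_\lambda(P_{\leq N_n}\fdisgen(X)\ket{tX}\otimes\xi)$. I would split $P_{\leq N_n}\fdisgen(X)=\fdisgen(X)P_{\leq N_n}-[\fdisgen(X),P_{\leq N_n}]$. Since $P_{\leq N_n}\ket{tX}\in\mathcal{F}_{\leq N_n}$ and $N_n+1\leq L_n$, the defect $\sdisgen_{\lambda,n}(X)g(t)-g'(t)$ is then bounded by two contributions:
\begin{itemize}
    \item $\|\sdisgen_{\lambda,n}(X)\symword_\lambda\psi-\symword_\lambda(\fdisgen(X)\otimes I)\psi\|$ with $\psi=P_{\leq N_n}\ket{tX}\otimes\xi$, of norm at most one. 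This is at most $3C_{\mathrm{gen}}\delta_n$ by Lemma~\ref{lem:intw_gen}, using $\|tX\|\leq\|X\|\leq 3n^{\bar\beta}$.
    \item $\|\symword_\lambda\|\,\|[\fdisgen(X),P_{\leq N_n}]\ket{tX}\|\leq 2\delta_n$, by Lemma~\ref{lem:bdry_error} and $\|\symword_\lambda\|<2$.
\end{itemize}
One point needs care: $[\fdisgen,P_{\leq N_n}]\ket{tX}$ lives in sector $N_n+1\leq L_n$, so $\symword_\lambda$ is defined on it. This is exactly why $N_n=\floor{L_n/2}$ was chosen. The result is $\|f(1)-g(1)\|\leq(3C_{\mathrm{gen}}+2)\delta_n$.

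Finally, I would pass from $g(1)$ to $W_\lambda(\ket{\widehat X}\otimes\xi)$. I would write $g(1)=t_X^{1/2}\symword_\lambda(\ket{\widehat X}\otimes\xi)$. Lemma~\ref{lemma:properties_W_lambda}(ii) gives $\|\symword_\lambda-W_\lambda\|\leq C^{\mathrm{qi}}\delta_n$, and Lemma~\ref{lem:m_particle_sector_bound} gives $|1-t_X^{1/2}|\leq 1-t_X\leq 2^{-N_n}\leq\delta_n$ for large $n$. Combining, I obtain the vector bound with $C_{\mathrm{cv}}/2\coloneqq 3C_{\mathrm{gen}}+2+2C^{\mathrm{qi}}+2$, say.

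For the mixed-state claim, I would expand $\omega_\lambda^{H_\lambda}=\frac{1}{\dim E_\lambda}\sum_r\xi_r\xi_r^\dag$ over an orthonormal basis of $E_\lambda$. I would use Lemma~\ref{lem:properties_omegas}(i) together with the identity $W_\lambda(\ket{\widehat X}\bra{\widehat X}\otimes\xi_r\xi_r^\dag)W_\lambda^\dag=vv^\dag$, where $v=W_\lambda(\ket{\widehat X}\otimes\xi_r)$ is a unit vector. Then, for unit vectors $u,v$, $\|uu^\dag-vv^\dag\|_1\leq 2\|u-v\|$, and convexity of the trace norm gives the bound $C_{\mathrm{cv}}\delta_n$.

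The main obstacle is the Duhamel step: verifying that every vector fed into $\symword_\lambda$ remains within the cutoff $\mathcal{F}_{\leq L_n}$. I would handle this by keeping the projection $P_{\leq N_n}$ in the comparison path and accounting for the boundary sectors $N_n$ and $N_n+1$ through Lemma~\ref{lem:bdry_error}.
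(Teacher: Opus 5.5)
Your proposal is correct and follows the paper's proof essentially step for step: the paper's interpolant $F(t)=e^{(1-t)\sdisgen_{\lambda,n}(X)}\symword_\lambda(P_{\leq N_n}\ket{tX}\otimes\xi)$ is exactly your Duhamel path, its derivative is bounded by $(3C_{\mathrm{gen}}+2)\delta_n$ via Lemmas~\ref{lem:intw_gen} and~\ref{lem:bdry_error}, and the passage to $W_\lambda(\ket{\widehat X}\otimes\xi)$ and the mixed-state bound proceed just as you describe. (Two minor points: $g(0)=\xi$ follows directly from $\word_{\lambda,0}(\xi)=\xi$, and the boundary term $[\fdisgen(X),P_{\leq N_n}]\ket{tX}$ lies in sectors $N_n$ and $N_n+1$, both of which are within $L_n$.)
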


\begin{proof}
    For $t\in[0,1]$, we define 
    \begin{align}
         F(t)\coloneqq e^{(1-t)\sdisgen_{\lambda,n}(X)}\symword_\lambda (P_{\leq N_n}\ket{tX}\otimes \xi).
    \end{align}
    Since $\symword_{\lambda,0}(\ket{0}\otimes\xi)=\xi$, 
    \begin{align}
        F(0)=e^{\sdisgen_{\lambda,n}(X)}\xi=U_{\lambda,n}(X)\xi,\qquad F(1)=\symword_\lambda (P_{\leq N_n}\ket{X}\otimes \xi).
    \end{align}
    By using $\frac{\dd}{\dd t}\ket{tX}=\fdisgen(X)\ket{tX}$, we obtain
    \begin{align}
        F'(t)&= e^{(1-t)\sdisgen_{\lambda,n}(X)}\left(-\sdisgen_{\lambda,n}(X)\symword_\lambda (P_{\leq N_n}\ket{tX}\otimes \xi)+\symword_\lambda (P_{\leq N_n}\fdisgen(X)\ket{tX}\otimes \xi)\right)\\
        &=e^{(1-t)\sdisgen_{\lambda,n}(X)}\left((\symword_\lambda(\fdisgen(X)\otimes I_{E_\lambda})-\sdisgen_{\lambda,n}(X)\symword_\lambda) (P_{\leq N_n}\ket{tX}\otimes \xi)-\symword_\lambda ([\fdisgen(X),P_{\leq N_n}]\ket{tX}\otimes \xi)\right).
    \end{align}
    By using Lemma~\ref{lem:intw_gen} with $\|(P_{\leq N_n}\ket{tX}\otimes \xi)\|\leq 1$ and Lemma~\ref{lem:bdry_error} with $\|\symword_\lambda\|<2$, we get
    \begin{align}
        \|F'(t)\|\leq 3C_{\mathrm{gen}}  \delta_n+2\delta_n,
    \end{align}
    and hence $\sup_{t\in[0,1]}\|F'(t)\|\leq (3C_{\mathrm{gen}}+2)\delta_n$. Therefore,
    \begin{align}
        \|U_{\lambda,n}(X)\xi-\symword_\lambda( P_{\leq N_n}\ket{X}\otimes \xi)\|\leq \int_0^1\|F'(t)\|\dd t\leq (3C_{\mathrm{gen}}+2)\delta_n.
    \end{align}
    
    By Lemma~\ref{lem:m_particle_sector_bound}, 
    \begin{align}
         \|P_{\leq N_n}\ket{X}-\ket{\widehat{X}}\|=1-\sqrt{t_X}\leq 1-t_X\leq 2^{-N_n},
    \end{align}
    which is superpolynomially small in $n$. Therefore, for all sufficiently large $n$, $\|P_{\leq N_n}\ket{X}-\ket{\widehat{X}}\|\leq \delta_n$. Moreover, since $\ket{\widehat{X}}\otimes \xi$ is a unit vector, by Property~(ii) in Lemma~\ref{lemma:properties_W_lambda}
    \begin{align}
        \|\symword_\lambda(\ket{\widehat{X}}\otimes \xi)-W_\lambda (\ket{\widehat{X}}\otimes \xi)\|\leq \|\symword_\lambda-W_\lambda\|\leq C^{\mathrm{qi}}\delta_n
    \end{align}
    By using $\|\symword_\lambda\|<2$, we get
    \begin{align}
        &\|U_{\lambda,n}(X)\xi-W_\lambda( \ket{\widehat{X}}\otimes \xi)\|\\
        &\leq \|U_{\lambda,n}(X)\xi-\symword_\lambda( P_{\leq N_n}\ket{X}\otimes \xi)\|+\|\symword_\lambda( P_{\leq N_n}\ket{X}\otimes \xi-\ket{\widehat{X}}\otimes \xi)\|+\|\symword_\lambda(\ket{\widehat{X}}\otimes \xi)-W_\lambda (\ket{\widehat{X}}\otimes \xi)\|\\
        &\leq \frac{C_{\mathrm{cv}}}{2}\delta_n
    \end{align}
    with $C_{\mathrm{cv}}\coloneqq 2(3C_{\mathrm{gen}}+4+C^{\mathrm{qi}})$. 

    Let $\{\xi_i\}_{i=1}^{\dim E_\lambda}$ be an orthonormal basis of $E_\lambda$. Then, 
    \begin{align}
        U_{\lambda,n}(X)\omega_\lambda^{H_\lambda}U_{\lambda,n}(X)^\dag &=\frac{1}{\dim E_\lambda}\sum_{i=1}^{\dim E_\lambda}\ket{u_i}\bra{u_i},\qquad \ket{u_i}\coloneqq U_{\lambda,n}(X)\xi_i,\\
        W_\lambda(\ket{\widehat{X}}\bra{\widehat{X}}\otimes \omega_\lambda)W_\lambda^\dag &=\frac{1}{\dim E_\lambda}\sum_{i=1}^{\dim E_\lambda}\ket{v_i}\bra{v_i},\qquad \ket{v_i}\coloneqq W_\lambda(\ket{\widehat{X}}\otimes \xi_i).
    \end{align}
    Since $U_{\lambda,n}(X)$ is unitary and $W_\lambda$ is an isometry, $\ket{u_i}$ and $\ket{v_i}$ are unit vectors. For unit vectors $\ket{u},\ket{v}$, we have
    \begin{align}
        \|\ket{u}\bra{u}-\ket{v}\bra{v}\|_1\leq 2\|u-v\|.
    \end{align}
    Therefore,
    \begin{align}
        \left\|U_{\lambda,n}(X)\omega_\lambda^{H_\lambda}U_{\lambda,n}(X)^\dag -W_\lambda(\ket{\widehat{X}}\bra{\widehat{X}}\otimes \omega_\lambda)W_\lambda^\dag \right\|_1\leq \frac{2}{\dim E_\lambda}\sum_{i=1}^{\dim E_\lambda}\|u_i-v_i\|\leq C_{\mathrm{cv}}\delta_n.
    \end{align}
\end{proof}

\subsubsection{Uniform Approximation of Displaced Block States}
The scale conditions in Eq.~\eqref{eq:scale_eta_beta_alpha} remain in force throughout this subsubsection. 
Proposition~\ref{prop:single_covariance} establishes the asymptotic correspondence between a \textit{pure} coherent state in Fock space and a unitary orbit on an $n$-fold finite-dimensional system. In order to establish our main theorem, Theorem~\ref{thm:QLAN}, we also need to establish such a correspondence for \textit{mixed} coherent states. To this end, we review an alternative representation of a mixed coherent state written in terms of an integral of pure coherent states over a Gaussian noise, which is known as Glauber–Sudarshan $P$-representation~\cite{glauber_CoherentIncoherentStatesRadiationField_1963,sudarshan_EquivalenceSemiclassicalQuantumMechanicalDescriptions_1963} (see also Appendix~C.1 in Ref.~\cite{lahiryMinimaxEstimationLowrank2024}).

We first consider a single-mode system. For a pure coherent state
\begin{align}
    \ket{z}\coloneqq D(z)\ket{0}=e^{-|z|^2/2}\sum_{m=0}^\infty\frac{z^m}{\sqrt{m!}}\ket{m},
\end{align}
we have
\begin{align}
    \frac{e^\beta-1}{\pi}\int e^{-(e^\beta-1)|z|^2}\ket{z}\bra{z}\dd^2z&= \sum_{m=0}^\infty\frac{e^\beta-1}{\pi}\int_0^\infty e^{-(e^\beta-1)r^2}e^{- r^2} \frac{r^{2m}}{m!}\ket{m}\bra{m}2\pi r\dd r\\
    &=(e^\beta-1)\sum_{m=0}^\infty \frac{1}{e^{\beta(m+1)}}\ket{m}\bra{m}=\phi_\beta.
\end{align}

We extend this relation to a multi-mode Gaussian state. Let $\mathcal{K}_{\mathrm{th}}\coloneqq \bigoplus_{\zeta_a>\zeta_b>0}\mathcal{K}_{ab}$. With respect to the fixed block-adapted basis $\{e_j\}_{j=1}^D$ of $\mathcal{K}_\hor$, let $\mathcal{J}_{\mathrm{th}}\coloneqq \{j\colon \zeta_{b_j}>0\}$ be the set of labels describing thermal states, let $\beta_j$ denote the corresponding inverse temperature. Identifying $\mathcal{K}_{\mathrm{th}}$ with $\mathbb{C}^{|\mathcal{J}_{\mathrm{th}}|}$, we define
\begin{align}
    \dd \mu(Y)\coloneqq \prod_{j\in\mathcal{J}_{\mathrm{th}}}\frac{e^{\beta_{j}}-1}{\pi}e^{-(e^{\beta_j}-1)|y_j|^2}\dd^2y_j,\qquad Y=\sum_{j\in\mathcal{J}_{\mathrm{th}}}y_je_j\in\mathcal{K}_{\mathrm{th}}.
\end{align}
Then, by using the pure coherent state $\ket{Y}\in\Gamma_{\mathrm{s}}(\mathcal{K}_{\mathrm{th}})$, we have
\begin{align}
    \int_{\mathcal{K}_{\mathrm{th}}} \ket{Y}\bra{Y} \dd \mu(Y)=\bigotimes_{\zeta_a>\zeta_b>0}\phi_{\beta_{ab}}^{\otimes d_ad_b}
\end{align}
as an identity of states on $\Gamma_{\mathrm{s}}(\mathcal{K}_{\mathrm{th}})$. 

We define the kernel-mode subspace $\mathcal{K}_{\ker}\coloneqq\bigoplus_{\zeta_a>\zeta_b=0}\mathcal{K}_{ab}$ so that
\begin{align}
    \mathcal{K}_\hor=\mathcal{K}_{\mathrm{th}}\oplus \mathcal{K}_{\ker}.
\end{align}
We define an embedding $\iota :\mathcal{K}_{\mathrm{th}}\hookrightarrow \mathcal{K}_\hor$ by $\iota (Y)\coloneqq (Y,0)$. Since $\ket{\iota Y}=\ket{Y}_{\mathcal{K}_{\mathrm{th}}}\otimes \ket{0}_{\mathcal{K}_{\ker}}$, we obtain
\begin{align}
    \Phi_0=\bigotimes_{\zeta_a>\zeta_b>0}\phi_{\beta_{ab}}^{\otimes d_ad_b}\otimes \bigotimes_{\zeta_a>0,\,\zeta_b=0}(\ket{0}\bra{0})^{\otimes d_ad_b}= \int_{\mathcal{K}_{\mathrm{th}}} \ket{\iota Y}\bra{\iota Y}\dd \mu(Y).\label{eq:displaced_state_int_expression_no_Z}
\end{align}
Consequently, the displaced state is written as
\begin{align}
    \Phi_Z=D(Z)\Phi_0D(Z)^\dag=\int_{\mathcal{K}_{\mathrm{th}}} \ket{Z+\iota Y}\bra{Z+\iota Y}\dd \mu(Y).\label{eq:displaced_state_int_expression}
\end{align}
Here we used the Weyl relation $D(Z)\ket{\iota Y}=e^{\ii\theta(Z,Y)}\ket{Z+\iota Y}$ for some phase $\theta(Z,Y)$, and hence $D(Z)\ket{\iota Y}\bra{\iota Y}D(Z)^\dag=\ket{Z+\iota Y}\bra{Z+\iota Y}$.

We now define
\begin{align}
    G_n\coloneqq \{ Y\in\mathcal{K}_{\mathrm{th}}\colon \|Y\|\leq n^{\bar{\beta}}\}\subset \mathcal{K}_{\mathrm{th}}.
\end{align}
By using the above integral expression, we prove the following bounds about the tails. 

\begin{proposition}\label{prop:Gaussian_tail_mu}
    Let $G_n^c$ be the complement set of $G_n$. Then,
    \begin{align}
        \mu(G_n^c)\leq 2^D e^{-\frac{\delta_*}{2}n^{2\bar{\beta}}}.
    \end{align}
    Moreover, uniformly for $\|Z\|\leq n^\beta$ and all sufficiently large $n$, 
    \begin{align}
        \Tr((I-P_{\leq N_n})\Phi_Z)\leq \frac{1}{4}\delta_n^2,\qquad \|\Phi_Z-P_{\leq N_n}\Phi_ZP_{\leq N_n}\|_1\leq  \delta_n.\label{eq:tail_gaussian_state_Z}
    \end{align}
\end{proposition}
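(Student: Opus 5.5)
The first bound is a product-Gaussian tail estimate. Under $\mu$ the coordinates $y_j$, $j\in\mathcal{J}_{\mathrm{th}}$, are independent complex Gaussians with density $\propto e^{-(e^{\beta_j}-1)|y_j|^2}$. Since $e^{\beta_j}-1=\zeta_{a_j}/\zeta_{b_j}-1=(\zeta_{a_j}-\zeta_{b_j})/\zeta_{b_j}\geq \delta_*$, every mode has precision at least $\delta_*$. I would use the exponential moment
\begin{align}
    \int e^{c\|Y\|^2}\dd\mu(Y)=\prod_{j\in\mathcal{J}_{\mathrm{th}}}\frac{e^{\beta_j}-1}{e^{\beta_j}-1-c},
\end{align}
valid for $c<\min_j(e^{\beta_j}-1)$. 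Choosing $c=\delta_*/2$ makes each factor at most $2$, so the product is at most $2^{|\mathcal{J}_{\mathrm{th}}|}\leq 2^D$. Markov's inequality applied to $e^{c\|Y\|^2}$ at the threshold $\|Y\|^2>n^{2\bar\beta}$ then gives $\mu(G_n^c)\leq 2^De^{-\frac{\delta_*}{2}n^{2\bar\beta}}$.

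For the trace bound I would use the $P$-representation in Eq.~\eqref{eq:displaced_state_int_expression}:
\begin{align}
    \Tr((I-P_{\leq N_n})\Phi_Z)=\int(1-t_{Z+\iota Y})\,\dd\mu(Y).
\end{align}
Split the integral over $G_n$ and $G_n^c$. On $G_n$ with $\|Z\|\leq n^\beta\leq n^{\bar\beta}$ we have $\|Z+\iota Y\|\leq 2n^{\bar\beta}\leq 3n^{\bar\beta}$, so Lemma~\ref{lem:m_particle_sector_bound} gives $1-t_{Z+\iota Y}\leq 2^{-N_n}$. On $G_n^c$ the integrand is at most $1$, so this part contributes at most $\mu(G_n^c)$. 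Thus
\begin{align}
    \Tr((I-P_{\leq N_n})\Phi_Z)\leq 2^{-N_n}+2^De^{-\frac{\delta_*}{2}n^{2\bar\beta}}.
\end{align}
Since $N_n\sim n^\eta/2$, both terms are superpolynomially small provided $\bar\beta>0$, which holds because $\bar\beta\geq \eta/4>0$ even when $\beta=0$. Meanwhile $\delta_n^2$ decays only polynomially, so $\Tr((I-P_{\leq N_n})\Phi_Z)\leq \frac14\delta_n^2$ for all sufficiently large $n$, uniformly in $\|Z\|\leq n^\beta$.

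The trace-norm bound then follows from the gentle measurement lemma, exactly as in the proof of Proposition~\ref{prop:b_lambda_n}:
\begin{align}
    \|\Phi_Z-P_{\leq N_n}\Phi_ZP_{\leq N_n}\|_1\leq 2\sqrt{\Tr((I-P_{\leq N_n})\Phi_Z)}\leq \delta_n.
\end{align}

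The only delicate point is uniformity. All the constants involved ($\delta_*$, $D$, and the threshold beyond which Lemma~\ref{lem:m_particle_sector_bound} applies) are independent of $Z$ and $Y$. The pure kernel modes need no separate treatment, because $\iota Y$ has zero component there and they are already covered by the coherent-state tail bound. I do not expect any genuine obstacle here; the main things to verify are the exponential-moment computation and the inequality $e^{\beta_j}-1\geq\delta_*$.
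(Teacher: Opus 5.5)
Your proposal is correct and follows essentially the same route as the paper: the exponential Markov bound with $s=\delta_*/2$ using $e^{\beta_j}-1\geq\delta_*$, then the $P$-representation split over $G_n$ and $G_n^c$ with Lemma~\ref{lem:m_particle_sector_bound}, and finally the gentle measurement lemma. Your explicit check that $\bar\beta\geq\eta/4>0$ is a point the paper leaves implicit; it ensures the Gaussian tail is superpolynomially small even when $\beta=0$.
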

\begin{proof}
    For $s>0$, by using the exponential Markov inequality, we have
    \begin{align}
        \mu(G_n^c)=\mathbb{P}(\|Y\|^2>n^{2\bar{\beta}})=\mathbb{P}(e^{s\|Y\|^2}>e^{sn^{2\bar{\beta}}})\leq e^{-sn^{2\bar{\beta}}}\mathbb{E}e^{s\|Y\|^2}.
    \end{align}
    Setting $s=\delta_*/2$, we obtain
    \begin{align}
        \mu(G_n^c)\leq e^{-\frac{\delta_*}{2}n^{2\bar{\beta}}}\mathbb{E}e^{\frac{\delta_*}{2}\|Y\|^2}.
    \end{align}
    Since $\zeta_{b_j}\leq 1$, we have $e^{\beta_j}-1=\frac{\zeta_{a_j}}{\zeta_{b_j}}-1\geq \zeta_{a_j}-\zeta_{b_j}\geq \delta_*>\frac{\delta_*}{2}$ and therefore
    \begin{align}
        \mathbb{E}e^{\frac{\delta_*}{2}\|Y\|^2}=\prod_{j\in\mathcal{J}_{\mathrm{th}}}\mathbb{E}e^{\frac{\delta_*}{2}\|y_j\|^2}=\prod_{j\in\mathcal{J}_{\mathrm{th}}}\frac{e^{\beta_j}-1}{e^{\beta_j}-1-\frac{\delta_*}{2}}\leq\prod_{j\in\mathcal{J}_{\mathrm{th}}}\frac{\delta_*}{\delta_*-\frac{\delta_*}{2}} =2^{|\mathcal{J}_{\mathrm{th}}|}\leq 2^{D}.
    \end{align}
    Thus, $ \mu(G_n^c)\leq 2^D e^{-\frac{\delta_*}{2}n^{2\bar{\beta}}}$. 

    By using Eq.~\eqref{eq:displaced_state_int_expression}, we have
    \begin{align}
        \Tr((I-P_{\leq N_n})\Phi_Z)&=\int_{\mathcal{K}_{\mathrm{th}}} \Tr((I-P_{\leq N_n})\ket{Z+\iota Y}\bra{Z+\iota Y})\dd \mu(Y)\\
        &=\int_{\mathcal{K}_{\mathrm{th}}} (1-t_{Z+\iota Y})\dd \mu(Y)=\int_{G_n} (1-t_{Z+\iota Y})\dd \mu(Y)+\int_{G_n^c} (1-t_{Z+\iota Y})\dd \mu(Y),
    \end{align}
    where $t_X\coloneqq \|P_{\leq N_n}\ket{X}\|^2$ is defined in Eq.~\eqref{eq:cutoff_coherent_state}. For $Y\in G_n$, we have $\|Z+\iota Y\|\leq \|Z\|+\|Y\|\leq n^{\beta}+n^{\bar{\beta}}\leq 2n^{\bar{\beta}}$. By Lemma~\ref{lem:m_particle_sector_bound}, we get
    \begin{align}
        \int_{G_n} (1-t_{Z+\iota Y})\dd \mu(Y)+\int_{G_n^c} (1-t_{Z+\iota Y})\dd \mu(Y)\leq 2^{-N_n} +\mu(G_n^c).
    \end{align}
    Therefore,
    \begin{align}
        \Tr((I-P_{\leq N_n})\Phi_Z)\leq 2^{-N_n}+2^D e^{-\frac{\delta_*}{2}n^{2\bar{\beta}}}.
    \end{align}
    Since the right-hand side is superpolynomially small in $n$, for all sufficiently large $n$, it holds
    \begin{align}
        \Tr((I-P_{\leq N_n})\Phi_Z)\leq \frac{1}{4}\delta_n^2.
    \end{align}
    The other inequality follows by using the gentle measurement lemma~\cite{winter_CodingTheoremStrongconversequantum_1999,ogawa_MakingGoodCodesClassicalQuantumChannel_2007a}.
\end{proof}

As Eq.~\eqref{eq:displaced_state_int_expression} shows, the limit state $\Phi_Z$ can be represented as a Gaussian mixture of coherent states obtained by the successive displacements $D(Z)D(\iota Y)$. We now establish the finite-dimensional analogue by replacing these two Weyl displacements with the corresponding horizontal unitaries.

Recall the orthogonal decomposition $\mathfrak{u}(\mathcal{H})=\mathfrak{h}\oplus\mathfrak{u}_\hor$ in Section~\ref{subsec:unitary_orbit_finite_dim}. 

\begin{lemma}\label{lem:prod_local_displacement}
    There exist structural constants $s_0$, $C_H$ such that, for all $A,B\in\mathfrak{u}_\hor$ such that $\|A\|+\|B\|\leq s_0$, there exists $M\in\mathfrak{u}_\hor$ and $h\in H$ satisfying
    \begin{align}
        e^Ae^B=e^Mh,\qquad \|M-(A+B)\|\leq C_H(\|A\|+\|B\|)^2.
    \end{align}
\end{lemma}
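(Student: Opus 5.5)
The natural route is the local product structure of $U(\mathcal{H})$ near the identity with respect to the splitting $\mathfrak{u}(\mathcal{H})=\mathfrak{u}_\hor\oplus\mathfrak{h}$. Consider the smooth map
\begin{align}
    \Psi:\mathfrak{u}_\hor\times\mathfrak{h}\to U(\mathcal{H}),\qquad \Psi(M,L)\coloneqq e^{M}e^{L}.
\end{align}
Its differential at $(0,0)$ is $(M,L)\mapsto M+L$, which is a linear isomorphism onto $\mathfrak{u}(\mathcal{H})$ because the sum is direct. By the inverse function theorem, there are neighborhoods $\mathcal{V}\ni(0,0)$ and $\mathcal{U}\ni I$ such that $\Psi:\mathcal{V}\to\mathcal{U}$ is a diffeomorphism. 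Its inverse $\Psi^{-1}=(\mathsf{m},\mathsf{l})$ is smooth, with $\mathsf{m}(I)=0$ and $\mathsf{l}(I)=0$. Since $e^{L}\in H$ for $L\in\mathfrak{h}$, given $A,B$ I will set $g\coloneqq e^Ae^B$, $M\coloneqq\mathsf{m}(g)$ and $h\coloneqq e^{\mathsf{l}(g)}$. Choosing $s_0$ small enough that $e^Ae^B\in\mathcal{U}$ whenever $\|A\|+\|B\|\le s_0$ (for instance via $\|e^Ae^B-I\|\le e^{\|A\|+\|B\|}-1$) gives the existence part.

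For the quadratic estimate, I plan to use the map $F(A,B)\coloneqq\mathsf{m}(e^Ae^B)$, defined and smooth on the compact ball $\{\|A\|+\|B\|\le s_0\}$, possibly after shrinking $s_0$ so that it sits inside the domain. Its first-order Taylor expansion at $(0,0)$ is computed as follows. Since $e^Ae^B=I+A+B+O(s^2)$ with $s\coloneqq\|A\|+\|B\|$, and $d(\Psi^{-1})_I(K)=(K^{\hor},K^{\ver})$ with the decomposition of Eq.~\eqref{eq:hor_ver_decomposition}, we get
\begin{align}
    dF_{(0,0)}(A,B)=(A+B)^{\hor}=A+B,
\end{align}
because $A,B\in\mathfrak{u}_\hor$. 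Taylor's theorem with a uniform bound on the second derivative over the compact ball then gives $\|F(A,B)-(A+B)\|\le C_H s^2$, where $C_H$ is half the supremum of the second derivative norm. All norms on these finite-dimensional spaces are equivalent, so converting to operator norms only changes the constant. Both $s_0$ and $C_H$ depend only on $\mathcal{H}$ and the block decomposition, so they are structural constants.

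The only delicate step is the choice of $s_0$ that makes the whole ball map into $\mathcal{U}$ with a uniform second-derivative bound; this follows from compactness once $\mathcal{U}$ is fixed. An explicit alternative would be to take $M=\mathcal{D}$-complement of $\log(e^Ae^B)$ and absorb the error via BCH, but that leaves a non-exact factorization, so the inverse-function approach is preferable.
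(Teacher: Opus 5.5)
Your proof is correct and takes the same route as the paper. Both arguments apply the inverse function theorem to the product map $(M,T)\mapsto e^{M}e^{T}$ adapted to the splitting $\mathfrak{u}(\mathcal{H})=\mathfrak{u}_\hor\oplus\mathfrak{h}$ and read off $h=e^{T}$; the only difference is that the paper works in logarithmic coordinates $f(X)=\ln(e^{X_{\mathfrak{u}_\hor}}e^{X_{\mathfrak{h}}})$ and gets the quadratic bound from explicit power-series estimates (giving an explicit $C_H$ in terms of the projection norm), whereas you obtain it from Taylor's theorem together with compactness.
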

\begin{proof}
    Note that the above statement and the proof below are all about properties of $\mathfrak{u}(\mathcal{H})$ and its decomposition  $\mathfrak{u}(\mathcal{H})=\mathfrak{h}\oplus\mathfrak{u}_\hor$. Thus, all constants appearing in the arguments are structural. 

    For $W$ such that $\|W\|<1$, its logarithm is defined by
    \begin{align}
        \ln (I+W)\coloneqq \sum_{k\geq 1}\frac{(-1)^{k+1}}{k}W^k.
    \end{align}
    Moreover, if $\|W\|\leq 1/2$, by using $\|W^k\|\leq \|W\|^k$ and the triangle inequality, we get
    \begin{align}
        \|\ln(I+W)-W\|\leq \sum_{k=2}\frac{\|W\|^k}{k}\leq \frac{1}{2}\sum_{k=2}^\infty\|W\|^k=\frac{1}{2}\frac{\|W\|^2}{1-\|W\|}\leq \|W\|^2.\label{eq:ln_IplusW}
    \end{align}

    Let $U$ be unitary with $\|U-I\|\leq 2/5$. Let $e^{\ii \theta}$ be an eigenvalue of $U$ with normalized eigenvector $v$, where $\theta\in(-\pi,\pi]$ is its principal argument. Then, $|e^{\ii\theta}-1|=\|(U-I)v\|\leq \|U-I\|\leq 2/5$. Since $|e^{\ii\theta}-1|=2|\sin(\theta/2)|$, we have $|\theta|\leq 2\arcsin(1/5)<\pi$. Thus, the spectrum of $U$ avoids the negative real axis. By the spectral theorem, the logarithm defined by the above convergent power series coincides with the principal logarithm:
    \begin{align}
        \ln U=\sum_{k\geq 1}\frac{(-1)^{k+1}}{k}(U-I)^k=\sum_j\left(\sum_{k\geq 1}\frac{(-1)^{k+1}}{k}(e^{\ii \theta_j}-1)^k\right)P_j=\sum_j\ii\theta_jP_j,
    \end{align}
    where $U=\sum_je^{\ii\theta_j}P_j$ is the spectral decomposition, and $\theta_j\in(-\pi,\pi)$. Consequently, $(\ln U)^\dag = -\ln U$, implying that $\ln U\in\mathfrak{u}(\mathcal{H})$.

    By using $\|ST\|\leq \|S\|\|T\|$, we have
    \begin{align}
        \|e^Se^T-I\|=\left\|\sum_{\substack{j,k\geq 0\\ j+k\geq 1}}\frac{S^jT^k}{j!k!}\right\|\leq \sum_{\substack{j,k\geq 0\\ j+k\geq 1}}\frac{\|S\|^j\|T\|^k}{j!k!}=e^{\|S\|+\|T\|}-1.
    \end{align}
    Thus, for $S,T\in\mathfrak{u}(\mathcal{H})$ satisfying $\|S\|+\|T\|<1/3$, the operator $e^{S}e^T$ is unitary and $\|e^Se^T-I\|\leq e^{1/3}-1<\frac{2}{5}$. Therefore, its principal logarithm is well-defined and belongs to $\mathfrak{u}(\mathcal{H})$.

    Let $\pr_{\mathfrak{u}_\hor}$ and $\pr_{\mathfrak{h}}$ be orthogonal projections via $\mathfrak{u}=\mathfrak{u}_\hor\oplus \mathfrak{h}$. For $X\in\mathfrak{u}(\mathcal{H})$, write
    \begin{align}
        X=X_{\mathfrak{u}_\hor}+ X_{\mathfrak{h}},\qquad X_{\mathfrak{u}_\hor}\coloneqq \pr_{\mathfrak{u}_\hor} X,\qquad X_{\mathfrak{h}}\coloneqq \pr_{\mathfrak{h}}X.
    \end{align}
    Since the projections $\pr_{\mathfrak{u}_\hor}$ and $\pr_{\mathfrak{h}}$ are bounded as $\mathfrak{u}(\mathcal{H})$ is finite-dimensional, there exists a structural constant $C_{\pr}$ such that
    \begin{align}
        \|X_{\mathfrak{u}_\hor}\|+\| X_{\mathfrak{h}}\|\leq C_{\pr}\|X\|.\label{eq:cpr}
    \end{align}
    Therefore, for $X$ such that $\|X\|< \frac{1}{3C_{\pr}}$, logarithm of $e^{X_{\mathfrak{u}_\hor}}e^{X_{\mathfrak{h}}}$ is well-defined. For $X\in \mathfrak{u}(\mathcal{H})$ with $\|X\|< \frac{1}{3C_{\pr}}$, we define a function
    \begin{align}
        f(X)\coloneqq \ln (e^{X_{\mathfrak{u}_\hor} } e^{X_{\mathfrak{h}}})\in \mathfrak{u}(\mathcal{H}).
    \end{align}
    Since the exponential map and the above local logarithm are analytic, $f$ is smooth. 
    
    Now, we show that $f(X)=X+O(\|X\|^2)$. For $r<1/3$, we have
    \begin{align}
        e^r-1&=\sum_{k=1}^\infty\frac{r^k}{k!}\leq \sum_{k=1}^\infty r^k=\frac{r}{1-r}\leq \frac{3}{2}r,\\
        e^r-(1+r)&=\sum_{k=2}^\infty\frac{r^k}{k!}\leq \frac{1}{2}\sum_{k=2}^\infty r^k=\frac{r^2}{2(1-r)}\leq r^2.
    \end{align}
    Since $r\coloneqq \|X_{\mathfrak{u}_\hor}\|+\| X_{\mathfrak{h}}\|<1/3$ for $\|X\|< \frac{1}{3C_{\pr}}$, 
    \begin{align}
        \|e^{X_{\mathfrak{u}_\hor}}e^{X_{\mathfrak{h}}}-I\|&\leq\sum_{\substack{j\geq 0,\,k\geq 0\\j+k\geq 1}}\frac{\|X_{\mathfrak{u}_\hor}\|^j\|X_{\mathfrak{h}}\|^k}{j!k!}=\sum_{l=1}^\infty\frac{r^l}{l!}\leq \frac{3}{2}r\\
        \|e^{X_{\mathfrak{u}_\hor}}e^{X_{\mathfrak{h}}}-I-X\|&\leq \sum_{\substack{j\geq 0,\,k\geq 0\\j+k\geq 2}}\frac{\|X_{\mathfrak{u}_\hor}\|^j\|X_{\mathfrak{h}}\|^k}{j!k!}=\sum_{l=2}^\infty\frac{r^l}{l!}\leq r^2.
    \end{align}
    Thus, we obtain 
    \begin{align}
        \|f(X)-X\|&\leq \|\ln (e^{X_{\mathfrak{u}_\hor} } e^{X_{\mathfrak{h}}})-(e^{X_{\mathfrak{u}_\hor}}e^{X_{\mathfrak{h}}}-I)\|+\|e^{X_{\mathfrak{u}_\hor}}e^{X_{\mathfrak{h}}}-I-X\|\\
        &\leq \|e^{X_{\mathfrak{u}_\hor}}e^{X_{\mathfrak{h}}}-I\|^2+\|e^{X_{\mathfrak{u}_\hor}}e^{X_{\mathfrak{h}}}-I-X\| &&(\text{Eq.}~\eqref{eq:ln_IplusW})\\
        &\leq \frac{13}{4}r^2\leq \frac{13}{4} C_{\pr}^2\|X\|^2,\label{eq:f_X_X_bound}&&(\text{Eq.}~\eqref{eq:cpr})
    \end{align}
    i.e., $f(X)=X+O(\|X\|^2)$.

    At the origin, we have $f(0)=0$. Since $f(\varepsilon X)=\varepsilon X+O(\varepsilon^2)$, we have $\dd f(0)=I$, which is invertible. Therefore, from the inverse function theorem, there are neighborhoods of $\mathcal{U}$ and $\mathcal{V}$ of $0$ such that $f:\mathcal{U}\to\mathcal{V}$ is a diffeomorphism. We denote the inverse function by $g\coloneqq f^{-1}$. 

    Shrinking $\mathcal{U}$, and replacing $\mathcal{V}$ by $f(\mathcal{U})$, if necessary, we may assume that $\frac{13}{4} C_{\pr}^2\|X\|\leq \frac{1}{2}$ for $X\in \mathcal{U}$. We define $V\coloneqq f(X)$. Since $f(X)=V$, Eq.~\eqref{eq:f_X_X_bound} implies $V=X+R(X)$ with $\|R(X)\|\leq  \frac{13}{4} C_{\pr}^2\|X\|^2\leq \frac{1}{2}\|X\|$. Then, we get
    \begin{align}
        \|V\|\geq \|X\|-\|R(X)\|\geq \frac{1}{2}\|X\|.
    \end{align}
    Therefore,
    \begin{align}
        \|g(V)-V\|=\|X-V\|=\|R(X)\|\leq  \frac{13}{4} C_{\pr}^2\|X\|^2\leq 13 C_{\pr}^2\|V\|^2,\label{eq:gV_V}
    \end{align}
    which implies that local inverse satisfies $g(V)=V+O(\|V\|^2)$.

    We now apply these coordinates to a unitary $U=e^Ae^B$ with $A,B\in \mathfrak{u}_\hor$. When $s\coloneqq \|A\|+\|B\|\leq 1/3$, 
    \begin{align}
        \|e^Ae^B-I\|\leq \frac{3}{2}s,\qquad \|e^Ae^B-(I+A+B)\|\leq s^2.
    \end{align}
    For $V\coloneqq \ln U$, we have
    \begin{align}
        \|V-(A+B)\|\leq \frac{13}{4}s^2.
    \end{align}
    Thus, we also obtain
    \begin{align}
        \|V\|\leq \|V-(A+B)\|+\|A+B\|\leq \frac{13}{4}s^2+s\leq 3s,
    \end{align}
    where we have used $s\leq 1/3$. Since $\mathcal{V}$ is a neighborhood of $0$, there exists $\varepsilon_{\mathcal{V}}>0$ such that $\{Y\in\mathfrak{u}(\mathcal{H})\colon \|Y\|\leq \varepsilon_{\mathcal{V}}\}\subset\mathcal{V}$. Since $\|V\|\leq 3s$, choosing $s_0\coloneqq \min \{1/3,\varepsilon_{\mathcal{V}}/3\}$, we have $V\in\mathcal{V}$ whenever $s\leq  s_0$.
    Set $X\coloneqq g(V)$ and decompose $X=M+T$, where $M\coloneqq \pr_{\mathfrak{u}_\hor} X$ and $T\coloneqq \pr_{\mathfrak{h}}X$. By the definition of $f$, we have
    \begin{align}
        V=f(X)=\ln (e^Me^T).
    \end{align}
    Therefore,
    \begin{align}
        e^Ae^B=U=e^V=e^Me^T=e^Mh,
    \end{align}
    where we have defined $h\coloneqq e^T\in H$.
    
    By using Eq.~\eqref{eq:gV_V}, 
    \begin{align}
        \|V-X\|=\|f(X)-X\|\leq13 C_{\pr}^2\|V\|^2\leq 117 C_{\pr}^2s^2.
    \end{align}
    Since $\pr_{\mathfrak{u}_\hor}(A+B)=A+B$, we get
    \begin{align}
        \|M-(A+B)\|&=\|\pr_{\mathfrak{u}_\hor}(X-(A+B))\|\\
        &\leq C_{\pr}\|X-(A+B)\|\leq  C_{\pr}\left(\|X-V\|+\|V-(A+B)\|\right)\leq C_Hs^2,
    \end{align}
    where $C_H\coloneqq \frac{13}{4}C_{\pr}+ 117 C_{\pr}^3$.
\end{proof}

As a consequence, we prove the following.
\begin{proposition}[Addition of local displacement operators]\label{prop:local_displacement_addition}
    For all sufficiently large $n$, any $Z\in\mathcal{K}_\hor$ with $\|Z\|\leq n^\beta$, and any $Y\in\mathcal{K}_{\mathrm{th}}$ with $\|Y\|\leq n^{\bar{\beta}}$, there exist $X_n=X_n(Z,Y)\in\mathcal{K}_\hor$ and $h_n=h_n(Z,Y)\in H$ such that
    \begin{align}
        e^{n^{-1/2}K(Z)}e^{n^{-1/2}K(\iota Y)}&=e^{n^{-1/2}K(X_n)}h_n,\\
        \|X_n-(Z+\iota Y)\|&\leq C_{\mathrm{ld}} n^{-1/2+2\bar{\beta}},\\
        \|X_n\|&\leq 3 n^{\bar{\beta}},
    \end{align}
    where $C_{\mathrm{ld}}$ is a structural constant. Consequently,
    \begin{align}
        U_{\lambda,n}(Z)U_{\lambda,n}(\iota Y)\omega_\lambda^{H_\lambda}U_{\lambda,n}(\iota Y)^\dag U_{\lambda,n}(Z)^\dag =U_{\lambda,n}(X_n)\omega_\lambda^{H_\lambda}U_{\lambda,n}(X_n)^\dag.\label{eq:uz_uy_ux_omega}
    \end{align}
\end{proposition}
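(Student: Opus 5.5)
The plan is to apply Lemma~\ref{lem:prod_local_displacement} with $A\coloneqq n^{-1/2}K(Z)$ and $B\coloneqq n^{-1/2}K(\iota Y)$, both of which lie in $\mathfrak{u}_\hor$ by Lemma~\ref{lem:real_lin_K}. Since $K$ is a real-linear map on the finite-dimensional space $\mathcal{K}_\hor$, it has a finite operator norm $\|K\|$, which is a structural constant. Hence
\begin{align}
    s\coloneqq \|A\|+\|B\|\leq \|K\|n^{-1/2}(\|Z\|+\|Y\|)\leq 2\|K\|n^{-1/2+\bar{\beta}}.
\end{align}
Because $\bar{\beta}<\eta/2<1/12$, we have $s\leq s_0$ for all sufficiently large $n$. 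The lemma then supplies $M\in\mathfrak{u}_\hor$ and $h_n\in H$ with $e^Ae^B=e^Mh_n$ and $\|M-(A+B)\|\leq C_H s^2$.

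We next transport $M$ back to $\mathcal{K}_\hor$. Since $K:\mathcal{K}_\hor\to\mathfrak{u}_\hor$ is a real-linear isomorphism (Lemma~\ref{lem:real_lin_K}), we define $X_n\coloneqq \sqrt{n}K^{-1}(M)$, so that $M=n^{-1/2}K(X_n)$. Real linearity gives $A+B=n^{-1/2}K(Z+\iota Y)$. Writing $\|K^{-1}\|$ for the norm of the inverse, also a structural constant,
\begin{align}
    \|X_n-(Z+\iota Y)\|=\sqrt{n}\,\|K^{-1}(M-(A+B))\|\leq \sqrt{n}\,\|K^{-1}\|C_H s^2\leq 4\|K^{-1}\|C_H\|K\|^2 n^{-1/2+2\bar{\beta}}.
\end{align}
This yields the claimed bound with $C_{\mathrm{ld}}\coloneqq 4\|K^{-1}\|C_H\|K\|^2$. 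For the size of $X_n$, we have $\|X_n\|\leq \|Z\|+\|Y\|+C_{\mathrm{ld}}n^{-1/2+2\bar{\beta}}$. Since $2\bar{\beta}<1/2$, the last term is at most $1\leq n^{\bar{\beta}}$ for large $n$, so $\|X_n\|\leq 3n^{\bar{\beta}}$.

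The final identity follows by applying the representation $\pi_\lambda$ to the group identity. This gives $U_{\lambda,n}(Z)U_{\lambda,n}(\iota Y)=U_{\lambda,n}(X_n)\pi_\lambda(h_n)$. By Lemma~\ref{lem:properties_omegas}(ii), $\pi_\lambda(h_n)\omega_\lambda^{H_\lambda}\pi_\lambda(h_n)^\dag=\omega_\lambda^{H_\lambda}$, so the stabilizer factor drops out and Eq.~\eqref{eq:uz_uy_ux_omega} follows.

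No step here is a real obstacle. The only care needed is to check that every constant, namely $\|K\|$, $\|K^{-1}\|$, $s_0$, and $C_H$, is structural, i.e.\ independent of $n$, $\lambda$, $Z$, and $Y$. This holds because each depends only on $\rho_0$ and the decomposition of $\mathfrak{u}(\mathcal{H})$.
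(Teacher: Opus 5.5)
Your proposal is correct and follows the paper's proof essentially step by step. You apply Lemma~\ref{lem:prod_local_displacement} to $A=n^{-1/2}K(Z)$ and $B=n^{-1/2}K(\iota Y)$, pull $M$ back through the bounded inverse $K^{-1}$ to obtain $X_n$ (with your $C_{\mathrm{ld}}=4\|K^{-1}\|C_H\|K\|^2$ matching the paper's $4C_K^2C_HC_K'$), and conclude by applying $\pi_\lambda$ together with the $H$-invariance of $\omega_\lambda^{H_\lambda}$ from Lemma~\ref{lem:properties_omegas}(ii).
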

\begin{proof}
    Set $A_n\coloneqq n^{-1/2}K(Z)\in\mathfrak{u}_\hor$ and $B_n\coloneqq  n^{-1/2}K(\iota Y)\in\mathfrak{u}_\hor$. Since $K$ is bounded, there exists a structural constant $C_K$ such that
    \begin{align}
        \|A_n\|+\|B_n\|\leq C_Kn^{-1/2}(n^\beta+n^{\bar{\beta}})\leq 2C_K n^{-1/2+\bar{\beta}}\to 0.
    \end{align}
    By Lemma~\ref{lem:prod_local_displacement}, for all sufficiently large $n$, there are $M_n\in \mathfrak{u}_\hor$ and $h_n\in H$ such that
    \begin{align}
        e^{A_n}e^{B_n}=e^{M_n}h_n,\qquad \|M_n-(A_n+B_n)\|\leq 4C_K^2C_Hn^{-1+2\bar{\beta}}.\label{eq:uz_uy_ux}
    \end{align}
    As shown in Lemma~\ref{lem:real_lin_K}, the map $K:\mathcal{K}_\hor\to \mathfrak{u}_\hor$ is a real-linear isomorphism, and hence its inverse $K^{-1}$ exists. Thus, define $X_n\coloneqq K^{-1}(\sqrt{n}M_n)\in \mathcal{K}_\hor$. Since $K^{-1}$ is bounded, there exists a structural constant $C_K'$ such that
    \begin{align}
        \|X_n-(Z+\iota Y)\|\leq C_K'\sqrt{n}\|M_n-(A_n+B_n)\|\leq C_{\mathrm{ld}}n^{-1/2+2\bar{\beta}},
    \end{align}
    where $C_{\mathrm{ld}}\coloneqq 4C_K^2C_HC_K'$. For large $n$, $C_{\mathrm{ld}}n^{-1/2+2\bar{\beta}}\leq n^{\bar{\beta}}$ since $\bar{\beta}<1/2$. Thus, since $\beta<\bar{\beta}$, we get
    \begin{align}
        \|X_n\|\leq \|X_n-(Z+\iota Y)\|+\|Z\|+\|\iota Y\|\leq 3 n^{\bar{\beta}}. 
    \end{align}
    
    Applying $\pi_\lambda$ to Eq.~\eqref{eq:uz_uy_ux}, we get
    \begin{align}
        U_{\lambda,n}(Z)U_{\lambda,n}(\iota Y)=U_{\lambda,n}(X_n)\pi_\lambda(h_n).
    \end{align}
    By Lemma~\ref{lem:properties_omegas}, $\omega_\lambda^{H_\lambda}$ is invariant under $\pi_\lambda(h_n)$, thereby yielding Eq.~\eqref{eq:uz_uy_ux_omega}. 
\end{proof}

\begin{lemma}\label{lem:comparison_coherent_traj}
    There exists a structural constant $C_{\mathrm{traj}}$ such that, for all sufficiently large $n$, any $\lambda\in\Lambda_n$, any $Z\in\mathcal{K}_\hor$ with $\|Z\|\leq n^{\beta}$, and any $Y\in G_n$, we have
    \begin{align}
        \|U_{\lambda,n}(Z)W_\lambda (P_{\leq N_n}\ket{\iota Y}\bra{\iota Y}P_{\leq N_n}\otimes \omega_\lambda)W_\lambda^\dag U_{\lambda,n}(Z)^\dag -W_\lambda (P_{\leq N_n}\ket{V}\bra{V}P_{\leq N_n}\otimes \omega_\lambda)W_\lambda^\dag\|_1\leq C_{\mathrm{traj}}\delta_n,\label{eq:comparison_coherent_traj}
    \end{align}
    where $V\coloneqq Z+\iota Y$. 
\end{lemma}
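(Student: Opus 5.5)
The plan is to route both sides of Eq.~\eqref{eq:comparison_coherent_traj} through the finite-dimensional orbit state, using Proposition~\ref{prop:single_covariance} twice and Proposition~\ref{prop:local_displacement_addition} once. First I replace the cut-off, unnormalized coherent projectors by the normalized ones $\ket{\widehat{X}}\bra{\widehat{X}}$. Since $P_{\leq N_n}\ket{X}=\sqrt{t_X}\ket{\widehat{X}}$, we have
\begin{align}
\|P_{\leq N_n}\ket{X}\bra{X}P_{\leq N_n}-\ket{\widehat{X}}\bra{\widehat{X}}\|_1=1-t_X\leq 2^{-N_n}
\end{align}
by Lemma~\ref{lem:m_particle_sector_bound}. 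This bound is valid whenever $\|X\|\leq 3n^{\bar\beta}$, which covers both $X=\iota Y$ (because $\|Y\|\leq n^{\bar\beta}$) and $X=V$ (because $\|V\|\leq n^\beta+n^{\bar\beta}\leq 2n^{\bar\beta}$). Tensoring with $\omega_\lambda$, conjugating by the isometry $W_\lambda$ and by the unitary $U_{\lambda,n}(Z)$ preserves the trace norm (Lemma~\ref{lemma:properties_W_lambda}(i)). Both replacements therefore cost a superpolynomially small amount, which is at most $\delta_n$ for large $n$.

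Next I apply Proposition~\ref{prop:single_covariance} with $X=\iota Y$:
\begin{align}
\|U_{\lambda,n}(\iota Y)\omega_\lambda^{H_\lambda}U_{\lambda,n}(\iota Y)^\dag-W_\lambda(\ket{\widehat{\iota Y}}\bra{\widehat{\iota Y}}\otimes\omega_\lambda)W_\lambda^\dag\|_1\leq C_{\mathrm{cv}}\delta_n.
\end{align}
Conjugating by the unitary $U_{\lambda,n}(Z)$ leaves this error unchanged. The left-hand term becomes $U_{\lambda,n}(Z)U_{\lambda,n}(\iota Y)\omega_\lambda^{H_\lambda}(\cdots)^\dag$, which by Eq.~\eqref{eq:uz_uy_ux_omega} of Proposition~\ref{prop:local_displacement_addition} equals $U_{\lambda,n}(X_n)\omega_\lambda^{H_\lambda}U_{\lambda,n}(X_n)^\dag$ with $\|X_n\|\leq 3n^{\bar\beta}$. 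Applying Proposition~\ref{prop:single_covariance} again with $X=X_n$ shows this is $C_{\mathrm{cv}}\delta_n$-close to $W_\lambda(\ket{\widehat{X_n}}\bra{\widehat{X_n}}\otimes\omega_\lambda)W_\lambda^\dag$.

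It then remains to compare $\ket{\widehat{X_n}}$ with $\ket{\widehat{V}}$. I expect this to be the main obstacle, because the proximity $\|X_n-V\|\leq C_{\mathrm{ld}}n^{-1/2+2\bar\beta}$ has to be converted into trace-distance proximity of truncated coherent states, uniformly in $Y$. I would use the exact overlap of coherent states, $|\braket{X|V}|^2=e^{-\|X-V\|^2}$, which gives
\begin{align}
\|\ket{X_n}\bra{X_n}-\ket{V}\bra{V}\|_1=2\sqrt{1-e^{-\|X_n-V\|^2}}\leq 2\|X_n-V\|\leq 2C_{\mathrm{ld}}n^{-1/2+2\bar\beta}.
\end{align}
Because $\bar\beta<\eta/2$, we have $n^{-1/2+2\bar\beta}\leq n^{-1/2+3\eta}\leq\delta_n$. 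The truncation and normalization are handled as in the first step, using the superpolynomial tail bound on each vector. This term is also preserved under $W_\lambda(\cdot\otimes\omega_\lambda)W_\lambda^\dag$.

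Finally, the triangle inequality sums the contributions: two truncation errors, $2C_{\mathrm{cv}}\delta_n$, $2C_{\mathrm{ld}}\delta_n$, and two further truncation errors. This gives Eq.~\eqref{eq:comparison_coherent_traj} with $C_{\mathrm{traj}}=2C_{\mathrm{cv}}+2C_{\mathrm{ld}}+4$. Every step is uniform in $\lambda\in\Lambda_n$, $\|Z\|\leq n^\beta$ and $Y\in G_n$, because each invoked result is uniform in those ranges.
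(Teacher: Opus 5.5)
Your proposal is correct and follows the paper's proof essentially step for step. The paper uses the same six-term triangle inequality:
\begin{itemize}
\item the truncation errors $1-t_{\iota Y}$ and $1-t_V$;
\item two applications of Proposition~\ref{prop:single_covariance}, with $X=\iota Y$ and $X=X_n$, bracketing the exact identity from Proposition~\ref{prop:local_displacement_addition};
\item the coherent-state overlap bound $2\|X_n-V\|\le 2C_{\mathrm{ld}}\delta_n$.
\end{itemize}
The only difference is bookkeeping. The paper absorbs all superpolynomially small terms into a single $\delta_n$, which gives $C_{\mathrm{traj}}=2C_{\mathrm{cv}}+2C_{\mathrm{ld}}+1$.

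One minor point: the normalization errors in the $X_n$ versus $V$ step are $\|\ket{X}\bra{X}-\ket{\widehat{X}}\bra{\widehat{X}}\|_1=2\sqrt{1-t_X}$, not $1-t_X$ as in your first step. They are still superpolynomially small, so your conclusion is unaffected.
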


\begin{proof}
    From Proposition~\ref{prop:local_displacement_addition}, there exists $X_n\in\mathcal{K}_\hor$ such that $\|X_n-(Z+\iota Y)\|\leq C_{\mathrm{ld}} n^{-1/2+2\bar{\beta}}$, $\|X_n\|\leq 3n^{\bar{\beta}}$, and 
    \begin{align}
        U_{\lambda,n}(Z)U_{\lambda,n}(\iota Y)\omega_\lambda^{H_\lambda}U_{\lambda,n}(\iota Y)^\dag U_{\lambda,n}(Z)^\dag =U_{\lambda,n}(X_n)\omega_\lambda^{H_\lambda}U_{\lambda,n}(X_n)^\dag,\label{eq:Z_Y_X_n}
    \end{align}
    Since $Y\in G_n$, $\|\iota Y\|\leq n^{\bar{\beta}}$, and hence, $\|V\|\leq n^\beta+n^{\bar{\beta}}\leq 2n^{\bar{\beta}}$. 
    
    Let $\ket{\widehat{X}_n}$, $\ket{\widehat{V}}$ and $\ket{\widehat{\iota Y}}$ be the normalized coherent states with cutoff $N_n$, defined in Eq.~\eqref{eq:cutoff_coherent_state}. 
    From the triangle inequality, we have
    \begin{align}
        &\|U_{\lambda,n}(Z)W_\lambda (P_{\leq N_n}\ket{\iota Y}\bra{\iota Y}P_{\leq N_n}\otimes \omega_\lambda)W_\lambda^\dag U_{\lambda,n}(Z)^\dag -W_\lambda (P_{\leq N_n}\ket{V}\bra{V}P_{\leq N_n}\otimes \omega_\lambda)W_\lambda^\dag\|_1\\
        &\leq \|U_{\lambda,n}(Z)W_\lambda (P_{\leq N_n}\ket{\iota Y}\bra{\iota Y}P_{\leq N_n}\otimes \omega_\lambda)W_\lambda^\dag U_{\lambda,n}(Z)^\dag -U_{\lambda,n}(Z)W_\lambda (\ket{\widehat{\iota Y}}\bra{\widehat{\iota Y}}\otimes \omega_\lambda)W_\lambda ^\dag U_{\lambda,n}(Z)^\dag\|_1\nonumber\\
        &+\|U_{\lambda,n}(Z)W_\lambda (\ket{\widehat{\iota Y}}\bra{\widehat{\iota Y}}\otimes \omega_\lambda)W_\lambda^\dag U_{\lambda,n}(Z)^\dag-U_{\lambda,n}(Z)U_{\lambda,n}(\iota Y)\omega_\lambda^{H_\lambda }U_{\lambda,n}(\iota Y)^\dag U_{\lambda,n}(Z)^\dag\|_1\nonumber\\
        &+\|U_{\lambda,n}(Z)U_{\lambda,n}(\iota Y)\omega_\lambda^{H_\lambda }U_{\lambda,n}(\iota Y)^\dag U_{\lambda,n}(Z)^\dag-U_{\lambda,n}(X_n)\omega_\lambda^{H_\lambda}U_{\lambda,n}(X_n)^\dag\|_1\nonumber\\
        &+\|U_{\lambda,n}(X_n)\omega_\lambda^{H_\lambda}U_{\lambda,n}(X_n)^\dag-W_\lambda(\ket{\widehat{X_n}}\bra{\widehat{X_n}}\otimes \omega_\lambda)W_\lambda^\dag\|_1\nonumber\\
        &+\|W_\lambda(\ket{\widehat{X_n}}\bra{\widehat{X_n}}\otimes \omega_\lambda)W_\lambda^\dag-W_\lambda(\ket{\widehat{V}}\bra{\widehat{V}}\otimes \omega_\lambda)W_\lambda^\dag\|_1\nonumber\\
        &+\|W_\lambda(\ket{\widehat{V}}\bra{\widehat{V}}\otimes \omega_\lambda)W_\lambda^\dag-W_\lambda(P_{\leq N_n}\ket{V}\bra{V}P_{\leq N_n}\otimes \omega_\lambda)W_\lambda^\dag\|_1.
    \end{align}
    By using the fact that $U_{\lambda,n}$ is unitary, $W_\lambda$ is an isometry and $\omega_\lambda$ is a normalized state, each term is evaluated as follows:

    \textit{First and sixth terms:} Since $P_{\leq N_n}\ket{\iota Y}\bra{\iota Y}P_{\leq N_n}\otimes \omega_\lambda =t_{\iota Y}(\ket{\widehat{\iota Y}}\bra{\widehat{\iota Y}}\otimes \omega_\lambda)$ and $\|\ket{\widehat{\iota Y}}\bra{\widehat{\iota Y}}\otimes \omega_\lambda\|_1=1$, the first term is equal to $1-t_{\iota Y}$, which is superpolynomially small in $n$ by Lemma~\ref{lem:m_particle_sector_bound}. Repeating the same argument for $V$ instead of $\iota Y$, we find that the sixth term is also superpolynomially small in $n$.

    \textit{Second and fourth terms:} From Proposition~\ref{prop:single_covariance}, these terms are bounded by $C_{\mathrm{cv}}\delta_n$.

    \textit{Third term:} From Eq.~\eqref{eq:Z_Y_X_n}, this term vanishes. 

    \textit{Fifth term:} Since $W_\lambda$ is isometry, and $\|\omega_\lambda\|_1=1$, we get
    \begin{align}
        &\|W_\lambda(\ket{\widehat{X_n}}\bra{\widehat{X_n}}\otimes \omega_\lambda)W_\lambda^\dag-W_\lambda(\ket{\widehat{V}}\bra{\widehat{V}}\otimes \omega_\lambda)W_\lambda^\dag\|_1\\
        &=\| \ket{\widehat{X_n}}\bra{\widehat{X_n}}-\ket{\widehat{V}}\bra{\widehat{V}}\|_1\leq \|\ket{\widehat{X_n}}\bra{\widehat{X_n}}-\ket{X_n}\bra{X_n}\|_1+\|\ket{X_n}\bra{X_n}-\ket{V}\bra{V}\|_1+\|\ket{V}\bra{V}-\ket{\widehat{V}}\bra{\widehat{V}}\|_1.
    \end{align}
    The second term is evaluated by using the inner product between pure coherent states as
    \begin{align}
        \|\ket{X_n}\bra{X_n}-\ket{V}\bra{V}\|_1=2\sqrt{1-|\braket{X_n|V}|^2}=2\sqrt{1-e^{-\|X_n-V\|^2}}\leq 2\|X_n-V\|\leq 2C_{\mathrm{ld}} n^{-1/2+2\bar{\beta}}\label{eq:coherent_state_distance}
    \end{align}
    where we used $1-e^{-x}\leq x$ in the first inequality.  Since $2\bar{\beta}<\eta$, we have $-1/2+2\bar{\beta}<-1/2+\eta<-1/2+3\eta$, and hence $\|\ket{X_n}\bra{X_n}-\ket{V}\bra{V}\|_1\leq 2C_{\mathrm{ld}} \delta_n$.
    The first and third terms are evaluated as
    \begin{align}
        \|\ket{\widehat{X_n}}\bra{\widehat{X_n}}-\ket{X_n}\bra{X_n}\|_1=2\sqrt{1-|\braket{\widehat{X_n}|X_n}|^2}&=2\sqrt{1-t_{X_n}},\\
        \|\ket{\widehat{V}}\bra{\widehat{V}}-\ket{V}\bra{V}\|_1=2\sqrt{1-|\braket{\widehat{V}|V}|^2}&=2\sqrt{1-t_{V}}.
    \end{align}
    Since $\|X_n\|\leq 3n^{\bar{\beta}}$ and $\|V\|\leq n^\beta+n^{\bar{\beta}}\leq 2n^{\bar{\beta}}$, Lemma~\ref{lem:m_particle_sector_bound} implies both terms are superpolynomially small in $n$.

    The sum of all superpolynomially small terms is at most $\delta_n$ by taking $n$ sufficiently large. 
    Summing over all the contributions, we obtain Eq.~\eqref{eq:comparison_coherent_traj} with $C_{\mathrm{traj}}\coloneqq 2C_{\mathrm{cv}}+2C_{\mathrm{ld}}+1$. 
\end{proof}

Integrating this relation over $Y$, we obtain the uniform Gaussian approximation on a typical block:
\begin{theorem}\label{thm:typical_block_isometry_approximation}
    Assume the scale conditions in Eq.~\eqref{eq:scale_eta_beta_alpha}.
    There exists a structural constant $C_\typ$ such that, for all sufficiently large $n$, any $\lambda\in\Lambda_n$, and any $Z\in\mathcal{K}_\hor$ with $\|Z\|\leq n^\beta$, we have
    \begin{align}
        \|\rho_{\lambda,Z,n}-W_\lambda(P_{\leq N_n}\Phi_ZP_{\leq N_n}\otimes \omega_\lambda)W_\lambda^\dag\|_1\leq C_\typ\delta_n.\label{eq:uniform_approximation}
    \end{align}
\end{theorem}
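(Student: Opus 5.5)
The plan is to write $\rho_{\lambda,Z,n}=U_{\lambda,n}(Z)\rho_{\lambda,0}U_{\lambda,n}(Z)^\dag$ and approximate $\rho_{\lambda,0}$ by a Gaussian mixture of finite-dimensional coherent trajectories. By Proposition~\ref{prop:b_lambda_n} with $M=N_n$, we have $\rho_{\lambda,0}\approx P_{\lambda,N_n}\rho_{\lambda,0}P_{\lambda,N_n}=b_{\lambda,n}W_\lambda(P_{\leq N_n}\Phi_0P_{\leq N_n}\otimes\omega_\lambda)W_\lambda^\dag$. The trace-norm error here is $2\sqrt{C_\tail}e^{-C_\tail'N_n/2}$, which is superpolynomially small. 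Since $0\le b_{\lambda,n}-1\le C_\tail e^{-C_\tail'L_n}$, we may replace $b_{\lambda,n}$ by $1$ at a further superpolynomially small cost. Unitary conjugation by $U_{\lambda,n}(Z)$ preserves the trace norm, so it suffices to compare
\begin{align}
U_{\lambda,n}(Z)W_\lambda(P_{\leq N_n}\Phi_0P_{\leq N_n}\otimes\omega_\lambda)W_\lambda^\dag U_{\lambda,n}(Z)^\dag
\end{align}
with $W_\lambda(P_{\leq N_n}\Phi_ZP_{\leq N_n}\otimes\omega_\lambda)W_\lambda^\dag$.

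Next I insert the $P$-representation in Eq.~\eqref{eq:displaced_state_int_expression_no_Z}, $\Phi_0=\int\ket{\iota Y}\bra{\iota Y}\dd\mu(Y)$. Since the integrand is trace class with trace at most one and everything is linear, the first operator becomes $\int U_{\lambda,n}(Z)W_\lambda(P_{\leq N_n}\ket{\iota Y}\bra{\iota Y}P_{\leq N_n}\otimes\omega_\lambda)W_\lambda^\dag U_{\lambda,n}(Z)^\dag\,\dd\mu(Y)$. Likewise, Eq.~\eqref{eq:displaced_state_int_expression} gives the second as $\int W_\lambda(P_{\leq N_n}\ket{Z+\iota Y}\bra{Z+\iota Y}P_{\leq N_n}\otimes\omega_\lambda)W_\lambda^\dag\,\dd\mu(Y)$. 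Both integrals converge in trace norm (Bochner integrals of uniformly bounded trace-class valued continuous functions), so convexity of the trace norm bounds their difference by $\int\|\cdots\|_1\,\dd\mu(Y)$.

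I split this integral over $G_n$ and $G_n^c$. On $G_n$, Lemma~\ref{lem:comparison_coherent_traj} gives the integrand $\le C_\traj\delta_n$ uniformly, hence a contribution $\le C_\traj\delta_n$. On $G_n^c$ the integrand is at most $2$, since both terms have trace norm at most one. Proposition~\ref{prop:Gaussian_tail_mu} then bounds this contribution by $2\cdot 2^De^{-\frac{\delta_*}{2}n^{2\bar\beta}}$, which is superpolynomially small. This holds provided $\bar\beta>0$, which is true because $\bar\beta>\beta\ge0$.

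Collecting terms, all superpolynomially small pieces are absorbed into $\delta_n$ for large $n$, giving Eq.~\eqref{eq:uniform_approximation} with $C_\typ=C_\traj+1$. All bounds are uniform in $\lambda\in\Lambda_n$ and $\|Z\|\le n^\beta$. The main subtlety is the interchange of the $\mu$-integral with conjugation by $W_\lambda$ and $U_{\lambda,n}(Z)$ and with the cutoff $P_{\leq N_n}$. This is routine because $P_{\leq N_n}$ has finite rank, so the integrands live in a finite-dimensional space and measurability and continuity in $Y$ are clear. The remaining point to check is that the prefactor $b_{\lambda,n}$ does not spoil the normalization; this is handled by its exponential closeness to $1$.
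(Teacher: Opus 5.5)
Your proposal is correct and follows the paper's proof essentially step by step. It uses the same triangle-inequality split via Proposition~\ref{prop:b_lambda_n} (the cutoff tail plus the $b_{\lambda,n}-1$ correction), the same $P$-representation integrals for $\Phi_0$ and $\Phi_Z$, and the same $G_n$/$G_n^c$ split using Lemma~\ref{lem:comparison_coherent_traj} and Proposition~\ref{prop:Gaussian_tail_mu}, arriving at the same constant $C_\typ=C_{\mathrm{traj}}+1$.
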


\begin{proof}
    From the triangle inequality, we have
    \begin{align}
        &\|\rho_{\lambda,Z,n}-W_\lambda(P_{\leq N_n}\Phi_ZP_{\leq N_n}\otimes \omega_\lambda)W_\lambda^\dag\|_1\\
        &\leq \|\rho_{\lambda,Z,n}- U_{\lambda,n}(Z)W_\lambda (P_{\leq N_n}\Phi_0P_{\leq N_n}\otimes \omega_\lambda)W_\lambda^\dag U_{\lambda,n}(Z)^\dag \|_1\\
        &+\|U_{\lambda,n}(Z)W_\lambda (P_{\leq N_n}\Phi_0P_{\leq N_n}\otimes \omega_\lambda)W_\lambda^\dag U_{\lambda,n}(Z)^\dag-W_\lambda(P_{\leq N_n}\Phi_ZP_{\leq N_n}\otimes \omega_\lambda)W_\lambda^\dag\|_1.
    \end{align}
    
    By Proposition~\ref{prop:b_lambda_n}, the first term is evaluated as
    \begin{align}
        &\|\rho_{\lambda,Z,n}- U_{\lambda,n}(Z)W_\lambda (P_{\leq N_n}\Phi_0P_{\leq N_n}\otimes \omega_\lambda)W_\lambda^\dag U_{\lambda,n}(Z)^\dag \|_1\\
        &=\|\rho_{\lambda,0}- W_\lambda (P_{\leq N_n}\Phi_0P_{\leq N_n}\otimes \omega_\lambda)W_\lambda^\dag \|_1\\
        &\leq\|\rho_{\lambda,0}-P_{\lambda,N_n}\rho_{\lambda,0}P_{\lambda,N_n}\|_1+\|P_{\lambda,N_n}\rho_{\lambda,0}P_{\lambda,N_n}-W_\lambda (P_{\leq N_n}\Phi_0P_{\leq N_n}\otimes \omega_\lambda)W_\lambda^\dag \|_1\\
        &\leq 2\sqrt{C_\tail}e^{-C_\tail' N_n/2}+C_\tail e^{-C_\tail' N_n},
    \end{align}
    which is superpolynomially small in $n$. 

    Using the coherent-state representation of $\Phi_0$ and $\Phi_Z$, with $V\coloneqq Z+\iota Y$, the second term is bounded as
    \begin{align}
        &\|U_{\lambda,n}(Z)W_\lambda (P_{\leq N_n}\Phi_0P_{\leq N_n}\otimes \omega_\lambda)W_\lambda^\dag U_{\lambda,n}(Z)^\dag-W_\lambda(P_{\leq N_n}\Phi_ZP_{\leq N_n}\otimes \omega_\lambda)W_\lambda^\dag\|_1\\
        &=\left\|\int_{\mathcal{K}_{\mathrm{th}}}\left(U_{\lambda,n}(Z)W_\lambda (P_{\leq N_n}\ket{\iota Y}\bra{\iota Y}P_{\leq N_n}\otimes \omega_\lambda)W_\lambda^\dag U_{\lambda,n}(Z)^\dag-W_\lambda(P_{\leq N_n}\ket{V}\bra{V}P_{\leq N_n}\otimes \omega_\lambda)W_\lambda^\dag\right)\dd \mu(Y)\right\|_1\\
        &\leq \int_{G_n}\left\|\left(U_{\lambda,n}(Z)W_\lambda (P_{\leq N_n}\ket{\iota Y}\bra{\iota Y}P_{\leq N_n}\otimes \omega_\lambda)W_\lambda^\dag U_{\lambda,n}(Z)^\dag-W_\lambda(P_{\leq N_n}\ket{V}\bra{V}P_{\leq N_n}\otimes \omega_\lambda)W_\lambda^\dag\right)\right\|_1\dd \mu(Y)\\
        &+\int_{G_n^c}\left(\left\|U_{\lambda,n}(Z)W_\lambda (P_{\leq N_n}\ket{\iota Y}\bra{\iota Y}P_{\leq N_n}\otimes \omega_\lambda)W_\lambda^\dag U_{\lambda,n}(Z)^\dag\right\|_1+\left\|W_\lambda(P_{\leq N_n}\ket{V}\bra{V}P_{\leq N_n}\otimes \omega_\lambda)W_\lambda^\dag\right\|_1\right)\dd \mu(Y)\\
        &\leq C_{\mathrm{traj}}\delta_n+2\mu(G_n^c),
    \end{align}
    where we have used Lemma~\ref{lem:comparison_coherent_traj} and 
    \begin{align}
        \left\|U_{\lambda,n}(Z)W_\lambda (P_{\leq N_n}\ket{\iota Y}\bra{\iota Y}P_{\leq N_n}\otimes \omega_\lambda)W_\lambda^\dag U_{\lambda,n}(Z)^\dag\right\|_1\leq 1,\quad \left\|W_\lambda(P_{\leq N_n}\ket{V}\bra{V}P_{\leq N_n}\otimes \omega_\lambda)W_\lambda^\dag\right\|_1\leq 1.
    \end{align}
    Note that $\mu(G_n^c)$ is superpolynomially small in $n$ by Proposition~\ref{prop:Gaussian_tail_mu}. 

    The sum of all superpolynomially small terms $2\sqrt{C_\tail}e^{-C_\tail'N_n/2}+C_\tail e^{-C_\tail' N_n}+2\mu(G_n^c)$ is at most $\delta_n$ by taking $n$ sufficiently large. 
    Summing over all the contributions, we obtain Eq.~\eqref{eq:uniform_approximation} with $C_\typ\coloneqq C_{\mathrm{traj}}+1$. 
\end{proof}

\clearpage
\section{Forward and Reverse Channels, and QLAN}\label{sec:forward_reverse_channels}
This section constructs the forward and reverse channels by using the isometry $W_\lambda: \mathcal{F}_{\leq L_n}\otimes E_\lambda\to H_\lambda$. 
Throughout this section, we identify an operator $O\in\mathcal{T}_1(\mathcal{F}_{\leq L_n})$
with the operator $O\oplus0$ on $\mathcal{F}_\hor=\mathcal{F}_{\leq L_n}\oplus\mathcal{F}_{>L_n}$.

\subsection{Forward channels}
The forward channels are defined as follows.
\begin{definition}
    For each $\lambda\in \Lambda_n$, we define $\mathcal{A}_{\lambda}\colon\mathcal{T}_1(H_\lambda)\to\mathcal{T}_1(\mathcal{F}_\hor)$ by
    \begin{align}
        \mathcal{A}_{\lambda}(\sigma)\coloneqq \Tr_{E_\lambda}\left(W_\lambda^\dag \sigma W_\lambda\right)+\Tr_{H_\lambda}((I_{H_\lambda}-W_\lambda W_\lambda^\dag)\sigma)\ket{0}\bra{0}.
    \end{align}
    Let $\Pi_\lambda$ denote the orthogonal projector onto $H_\lambda\otimes K_\lambda$, and define $\Pi_{\Lambda_n}\coloneqq \sum_{\lambda\in\Lambda_n}\Pi_\lambda$. We then define the forward channel $T_n:\mathcal{T}_1(\mathcal{H}^{\otimes n})\to\mathcal{T}_1(\mathcal{F}_\hor)$ by
    \begin{align}
        T_n(\sigma)\coloneqq\sum_{\lambda\in\Lambda_n}\mathcal{A}_{\lambda}\left(\Tr_{K_\lambda}(\Pi_\lambda \sigma \Pi_\lambda)\right)+\Tr((I-\Pi_{\Lambda_n})\sigma)\ket{0}\bra{0}.
    \end{align}
\end{definition}

Let us confirm that these maps are quantum channels:
\begin{itemize}
    \item The map $\mathcal{A}_\lambda$: The map $\sigma\mapsto W_\lambda^\dag \sigma W_\lambda$, and the partial trace over $E_\lambda$ are completely positive. Also, $\sigma\mapsto\Tr_{H_\lambda}((I_{H_\lambda}-W_\lambda W_\lambda^\dag)\sigma)\ket{0}\bra{0}$ is a measure-and-prepare channel, which is completely positive. The map $\mathcal{A}_\lambda$ is trace-preserving because
    \begin{align}
        \Tr(\mathcal{A}_\lambda(\sigma))=\Tr\left(W_\lambda^\dag \sigma W_\lambda\right)+\Tr_{H_\lambda}((I_{H_\lambda}-W_\lambda W_\lambda^\dag)\sigma)=\Tr \sigma,
    \end{align}
    where we used the cyclicity of the trace. 
    \item The map $T_n$: Each transformation $\sigma\mapsto \Pi_\lambda \sigma \Pi_\lambda\mapsto \Tr_{K_\lambda}(\Pi_\lambda \sigma \Pi_\lambda)\mapsto \mathcal{A}_{\lambda}\left(\Tr_{K_\lambda}(\Pi_\lambda \sigma \Pi_\lambda)\right)$ is respectively completely positive. Also, $\sigma\mapsto \Tr((I-\Pi_{\Lambda_n})\sigma)\ket{0}\bra{0}$ is again a measure-and-prepare channel, and hence completely positive. The map $T_n$ is trace-preserving because 
    \begin{align}
        \Tr(T_n(\sigma))=\sum_{\lambda\in\Lambda_n}\Tr(\Pi_\lambda \sigma)+\Tr((I-\Pi_{\Lambda_n})\sigma)=\Tr\sigma.
    \end{align}
\end{itemize}

The following proves the asymptotic error rate in conversion via the forward channels:
\begin{theorem}\label{thm:fwd}
    Suppose that the scale conditions in Eq.~\eqref{eq:scale_eta_beta_alpha} are satisfied.
    There exists a structural constant $C_\fwd>0$ such that, for all sufficiently large $n$,
    \begin{align}
        \sup_{\|Z\|\leq n^\beta}\|T_n(\rho_{Z,n})-\Phi_Z\|_1\leq C_\fwd\delta_n+2\varepsilon^{\atyp}_n.\label{eq:fwd_scale}
    \end{align}
\end{theorem}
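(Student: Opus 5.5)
The plan is to decompose $\rho_{Z,n}$ via the Schur--Weyl block structure in Eq.~\eqref{eq:rho_decomposition} and push each block through $T_n$ separately. Since $\Pi_\lambda\rho_{Z,n}\Pi_\lambda=p_{\lambda,n}\,\rho_{\lambda,Z,n}\otimes I_{K_\lambda}/\dim K_\lambda$ and the blocks are orthogonal, partial trace over $K_\lambda$ gives
\begin{align}
    T_n(\rho_{Z,n})=\sum_{\lambda\in\Lambda_n}p_{\lambda,n}\mathcal{A}_\lambda(\rho_{\lambda,Z,n})+\varepsilon_n^{\atyp}\ket{0}\bra{0}.
\end{align}
Writing $\Phi_Z=\sum_{\lambda\in\Lambda_n}p_{\lambda,n}\Phi_Z+\varepsilon_n^{\atyp}\Phi_Z$ and using the triangle inequality, the atypical part contributes at most $\varepsilon_n^{\atyp}\|\ket{0}\bra{0}-\Phi_Z\|_1\leq 2\varepsilon_n^{\atyp}$. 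It therefore suffices to show $\|\mathcal{A}_\lambda(\rho_{\lambda,Z,n})-\Phi_Z\|_1\leq C_\fwd\delta_n$ uniformly over $\lambda\in\Lambda_n$ and $\|Z\|\leq n^\beta$, since $\sum_{\lambda\in\Lambda_n}p_{\lambda,n}\leq1$.

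For the typical-block estimate, set $\tau_{\lambda,Z}\coloneqq W_\lambda(P_{\leq N_n}\Phi_ZP_{\leq N_n}\otimes\omega_\lambda)W_\lambda^\dag$. By Theorem~\ref{thm:typical_block_isometry_approximation}, $\|\rho_{\lambda,Z,n}-\tau_{\lambda,Z}\|_1\leq C_\typ\delta_n$. Because $\mathcal{A}_\lambda$ is a channel and hence trace-norm contractive, it is enough to compute $\mathcal{A}_\lambda(\tau_{\lambda,Z})$ exactly. Using $W_\lambda^\dag W_\lambda=I$, we get $W_\lambda^\dag\tau_{\lambda,Z}W_\lambda=P_{\leq N_n}\Phi_ZP_{\leq N_n}\otimes\omega_\lambda$, so the first term of $\mathcal{A}_\lambda$ returns $P_{\leq N_n}\Phi_ZP_{\leq N_n}$, regarded as an operator on $\mathcal{F}_\hor$. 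The second term vanishes because $(I-W_\lambda W_\lambda^\dag)W_\lambda=0$. Hence $\mathcal{A}_\lambda(\tau_{\lambda,Z})=P_{\leq N_n}\Phi_ZP_{\leq N_n}$.

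By Proposition~\ref{prop:Gaussian_tail_mu}, $\|\Phi_Z-P_{\leq N_n}\Phi_ZP_{\leq N_n}\|_1\leq\delta_n$ uniformly for $\|Z\|\leq n^\beta$. Combining these bounds gives $\|\mathcal{A}_\lambda(\rho_{\lambda,Z,n})-\Phi_Z\|_1\leq(C_\typ+1)\delta_n$, so we may take $C_\fwd\coloneqq C_\typ+1$. This yields Eq.~\eqref{eq:fwd_scale}.

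All the substantive work has already been done in Theorem~\ref{thm:typical_block_isometry_approximation}. The only remaining care points are bookkeeping. First, the embedding of $\mathcal{T}_1(\mathcal{F}_{\leq L_n})$ into $\mathcal{T}_1(\mathcal{F}_\hor)$ must be consistent; this holds because $N_n\leq L_n$. Second, the normalization of the atypical weight must be correct, namely $\sum_{\lambda\notin\Lambda_n}p_{\lambda,n}=\varepsilon_n^{\atyp}$, with $\Tr((I-\Pi_{\Lambda_n})\rho_{Z,n})$ equal to the same quantity because the $p_{\lambda,n}$ are independent of $Z$. Third, the identification of $\Tr_{K_\lambda}(\Pi_\lambda\rho_{Z,n}\Pi_\lambda)$ with $p_{\lambda,n}\rho_{\lambda,Z,n}$ under the unitary Schur--Weyl identification must be checked.
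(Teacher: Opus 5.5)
Your proposal is correct and follows essentially the same route as the paper. The paper likewise applies the contractivity of $\mathcal{A}_\lambda$ to Theorem~\ref{thm:typical_block_isometry_approximation}, uses the isometry property to get $\mathcal{A}_\lambda(W_\lambda(P_{\leq N_n}\Phi_ZP_{\leq N_n}\otimes\omega_\lambda)W_\lambda^\dag)=P_{\leq N_n}\Phi_ZP_{\leq N_n}$, and applies the tail bound of Proposition~\ref{prop:Gaussian_tail_mu} to obtain $C_{\mathrm{fwd}}=C_{\mathrm{typ}}+1$. It then absorbs the atypical weight via $\sum_{\lambda\in\Lambda_n}p_{\lambda,n}+\varepsilon_n^{\mathrm{atyp}}=1$ with the crude bound $\|\ket{0}\bra{0}-\Phi_Z\|_1\leq 2$.
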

\begin{proof}
    Let $\lambda\in \Lambda_n$. Since $\mathcal{A}_\lambda$ is a quantum channel, it is trace-norm contractive, i.e.,
    \begin{align}
        &\|\mathcal{A}_\lambda(\rho_{\lambda,Z,n})-\mathcal{A}_\lambda(W_\lambda (P_{\leq N_n}\Phi_ZP_{\leq N_n}\otimes \omega_\lambda)W_\lambda^\dag)\|_1\\
        &\leq  \|\rho_{\lambda,Z,n}-W_\lambda (P_{\leq N_n}\Phi_ZP_{\leq N_n}\otimes \omega_\lambda)W_\lambda^\dag\|_1\overset{\text{Thm.~\ref{thm:typical_block_isometry_approximation}}}{\leq}C_\typ\delta_n
    \end{align}
    for all sufficiently large $n$. Since $W_\lambda$ is an isometry, 
    \begin{align}
        \mathcal{A}_\lambda(W_\lambda (P_{\leq N_n}\Phi_ZP_{\leq N_n}\otimes \omega_\lambda)W_\lambda^\dag)=\Tr_{E_\lambda}(P_{\leq N_n}\Phi_ZP_{\leq N_n}\otimes \omega_\lambda)=P_{\leq N_n}\Phi_ZP_{\leq N_n}.
    \end{align}
    From Eq.~\eqref{eq:tail_gaussian_state_Z}, we have $\|\Phi_Z-P_{\leq N_n}\Phi_ZP_{\leq N_n}\|_1\leq  \delta_n$. Therefore, from the triangle inequality,
    \begin{align}
        \|\mathcal{A}_\lambda(\rho_{\lambda,Z,n})-\Phi_Z\|_1\leq C_\fwd \delta_n,\qquad C_\fwd\coloneqq C_\typ+1
    \end{align}
    
    By using $\sum_{\lambda\in\Lambda_n}p_{\lambda,n}+\varepsilon^{\atyp}_n=1$, we have
    \begin{align}
        \|T_n(\rho_{Z,n})-\Phi_Z\|_1&=\left\|\sum_{\lambda\in\Lambda_n}p_{\lambda,n}\left(\mathcal{A}_\lambda(\rho_{\lambda,Z,n})-\Phi_Z\right)+\varepsilon^{\atyp}_n\left(\ket{0}\bra{0}-\Phi_Z\right)\right\|_1\\
        &\leq C_\fwd\delta_n\sum_{\lambda\in\Lambda_n}p_{\lambda,n}+2\varepsilon^{\atyp}_n\leq C_\fwd\delta_n+2\varepsilon^{\atyp}_n .
    \end{align}
    where we have used $\left\|\ket{0}\bra{0}-\Phi_Z\right\|_1\leq \|\ket{0}\bra{0}\|+\|\Phi_Z\|_1\leq 2$ and $\sum_{\lambda\in\Lambda_n}p_{\lambda,n}\leq 1$.
\end{proof}

\subsection{Reverse channels}

The reverse channels are defined as follows.
\begin{definition}
    For each $\lambda\in \Lambda_n$, we define $\mathcal{B}_{\lambda}\colon\mathcal{T}_1(\mathcal{F}_\hor)\to\mathcal{T}_1(H_\lambda)$ by
    \begin{align}
        \mathcal{B}_{\lambda}(\sigma)\coloneqq W_\lambda(P_{\leq N_n}\sigma P_{\leq N_n}\otimes \omega_\lambda)W_\lambda^\dag + \Tr((I-P_{\leq N_n})\sigma)\omega_\lambda^{H_\lambda}.
    \end{align}
    Define a positive semidefinite operator $\tau_n^\atyp$ by
    \begin{align}
        \tau_n^\atyp\coloneqq \bigoplus_{\lambda\notin\Lambda_n}p_{\lambda,n}\left(\rho_{\lambda,0}\otimes \omega_\lambda^{K_\lambda}\right), \qquad \omega_\lambda^{K_\lambda}\coloneqq \frac{I_{K_\lambda}}{\dim K_\lambda},
    \end{align}
    which satisfies $\Tr  \tau_n^\atyp=\varepsilon^\atyp_n$. We then define the reverse channel $S_n:\mathcal{T}_1(\mathcal{F}_\hor)\to \mathcal{T}_1(\mathcal{H}^{\otimes n})$ by
    \begin{align}
        S_n(\sigma)\coloneqq \bigoplus_{\lambda\in\Lambda_n}p_{\lambda,n}\left(\mathcal{B}_\lambda(\sigma)\otimes \omega_\lambda^{K_\lambda}\right)+ \Tr(\sigma)\tau_n^\atyp.
    \end{align}
\end{definition}

Similar to the arguments on $\mathcal{A}_{\lambda}$, we find $\mathcal{B}_{\lambda}$ is a quantum channel. Tensoring with the normalized state $\omega_\lambda^{K_\lambda}$ is completely positive and trace-preserving. The map $\sigma\mapsto \Tr(\sigma)\tau_n^\atyp$ is completely positive since $\tau_n^{\atyp}\geq 0$. Thus, $S_n$ is completely positive. Moreover, since $\Tr  \tau_n^\atyp=\varepsilon^\atyp_n=1-\sum_{\lambda\in\Lambda_n}p_{\lambda,n}$, we have
\begin{align}
    \Tr S_n(\sigma)=\sum_{\lambda\in\Lambda_n}p_{\lambda,n}\Tr(\sigma)+\Tr(\sigma)\Tr(\tau_n^\atyp)=\Tr\sigma,
\end{align}
implying that $S_n$ is trace-preserving, and hence, it is a quantum channel.

The following proves the asymptotic error rate in conversion via the reverse channels:
\begin{theorem}\label{thm:rev}
    Suppose that the scale conditions in Eq.~\eqref{eq:scale_eta_beta_alpha} are satisfied.
    There exists a structural constant $C_\rev>0$ such that, for all sufficiently large $n$,
    \begin{align}
        \sup_{\|Z\|\leq n^\beta}\|\rho_{Z,n}-S_n(\Phi_Z)\|_1\leq C_\rev\delta_n+2\varepsilon^{\atyp}_n.\label{eq:rev_scale}
    \end{align}
\end{theorem}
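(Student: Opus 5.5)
The argument mirrors the proof of Theorem~\ref{thm:fwd}, now using the block decomposition of $\rho_{Z,n}$ directly. By Eq.~\eqref{eq:rho_decomposition}, $\rho_{Z,n}=\bigoplus_{\lambda}p_{\lambda,n}(\rho_{\lambda,Z,n}\otimes\omega_\lambda^{K_\lambda})$. Since $\Tr\Phi_Z=1$, we have $S_n(\Phi_Z)=\bigoplus_{\lambda\in\Lambda_n}p_{\lambda,n}(\mathcal{B}_\lambda(\Phi_Z)\otimes\omega_\lambda^{K_\lambda})+\tau_n^\atyp$. The difference $\rho_{Z,n}-S_n(\Phi_Z)$ therefore splits into a typical part and an atypical part. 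Because the blocks $H_\lambda\otimes K_\lambda$ are orthogonal and tensoring with the state $\omega_\lambda^{K_\lambda}$ preserves the trace norm, we get
\begin{align}
\|\rho_{Z,n}-S_n(\Phi_Z)\|_1\leq \sum_{\lambda\in\Lambda_n}p_{\lambda,n}\|\rho_{\lambda,Z,n}-\mathcal{B}_\lambda(\Phi_Z)\|_1+\Bigl\|\bigoplus_{\lambda\notin\Lambda_n}p_{\lambda,n}(\rho_{\lambda,Z,n}-\rho_{\lambda,0})\otimes\omega_\lambda^{K_\lambda}\Bigr\|_1 .
\end{align}
Each atypical summand has trace norm at most $2p_{\lambda,n}$, so the second term is at most $2\varepsilon_n^\atyp$. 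This is the source of the $2\varepsilon_n^\atyp$ in Eq.~\eqref{eq:rev_scale}.

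For a typical $\lambda$, I use the triangle inequality
\begin{align}
\|\rho_{\lambda,Z,n}-\mathcal{B}_\lambda(\Phi_Z)\|_1\leq \|\rho_{\lambda,Z,n}-W_\lambda(P_{\leq N_n}\Phi_ZP_{\leq N_n}\otimes\omega_\lambda)W_\lambda^\dag\|_1+\Tr((I-P_{\leq N_n})\Phi_Z)\,\|\omega_\lambda^{H_\lambda}\|_1 .
\end{align}
The first term is bounded by $C_\typ\delta_n$ uniformly in $\|Z\|\leq n^\beta$ and $\lambda\in\Lambda_n$, by Theorem~\ref{thm:typical_block_isometry_approximation}. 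The second term is at most $\frac14\delta_n^2\leq\delta_n$ for large $n$, by Proposition~\ref{prop:Gaussian_tail_mu}, Eq.~\eqref{eq:tail_gaussian_state_Z}. Summing with weights $p_{\lambda,n}$, whose total is at most $1$, gives the bound with $C_\rev\coloneqq C_\typ+1$.

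There is essentially no hard step, since all the analytic work sits in Theorem~\ref{thm:typical_block_isometry_approximation} and the tail estimate. The only point that needs care is that the bounds are uniform in $Z$ and $\lambda$, so that the supremum can be taken over $\|Z\|\leq n^\beta$. Both cited results are stated uniformly in this sense.
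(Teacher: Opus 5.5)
Your proposal is correct and follows the paper's proof essentially step for step. You split $\rho_{Z,n}-S_n(\Phi_Z)$ into typical and atypical blocks, and on each typical block you bound $\|\rho_{\lambda,Z,n}-\mathcal{B}_\lambda(\Phi_Z)\|_1$ by Theorem~\ref{thm:typical_block_isometry_approximation} plus the tail estimate in Eq.~\eqref{eq:tail_gaussian_state_Z}, giving $C_\rev=C_\typ+1$. You then bound the atypical part by $2\varepsilon_n^{\atyp}$, exactly as the paper does.
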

\begin{proof}
    Let $\lambda\in \Lambda_n$. By the definition of $\mathcal{B}_\lambda$ and the triangle inequality, we have
    \begin{align}
        \|\mathcal{B}_\lambda(\Phi_Z)-\rho_{\lambda,Z,n}\|_1\leq \|W_\lambda(P_{\leq N_n}\Phi_Z P_{\leq N_n}\otimes \omega_\lambda)W_\lambda^\dag-\rho_{\lambda,Z,n}\|_1+\Tr((I-P_{\leq N_n})\Phi_Z)\leq C_\typ\delta_n+ \frac{1}{4}\delta_n^2,
    \end{align}
    where we have used Theorem~\ref{thm:typical_block_isometry_approximation} and Eq.~\eqref{eq:tail_gaussian_state_Z}. For all sufficiently large $n$, since $\frac{1}{4}\delta_n^2\leq \delta_n$, the right-hand side is bounded by $C_\rev\delta_n$ with $C_\rev\coloneqq C_\typ+1$. 

    By using $\sum_{\lambda\in\Lambda_n}p_{\lambda,n}+\varepsilon^{\atyp}_n=1$, we have
    \begin{align}
        \|\rho_{Z,n}-S_n(\Phi_Z)\|_1&\leq\sum_{\lambda\in\Lambda_n}p_{\lambda,n} \|\mathcal{B}_\lambda(\Phi_Z)-\rho_{\lambda,Z,n}\|_1+\left\|\tau_n^\atyp-\bigoplus_{\lambda\notin\Lambda_n}p_{\lambda,n}\left(\rho_{\lambda,Z,n}\otimes \omega_\lambda^{K_\lambda}\right)\right\|_1\\
        &\leq \sum_{\lambda\in\Lambda_n}p_{\lambda,n}C_\rev\delta_n+ 2\sum_{\lambda\notin\Lambda_n}p_{\lambda,n}\leq C_\rev\delta_n+2\varepsilon^{\atyp}_n,
    \end{align}
    where we have used 
    \begin{align}
        \left\|\tau_n^\atyp-\bigoplus_{\lambda\notin\Lambda_n}p_{\lambda,n}\left(\rho_{\lambda,Z,n}\otimes \omega_\lambda^{K_\lambda}\right)\right\|_1\leq \left\|\tau_n^\atyp\right\|_1+\left\|\bigoplus_{\lambda\notin\Lambda_n}p_{\lambda,n}\left(\rho_{\lambda,Z,n}\otimes \omega_\lambda^{K_\lambda}\right)\right\|_1=2\sum_{\lambda\notin\Lambda_n}p_{\lambda,n}=2\varepsilon^{\atyp}_n.
    \end{align}
\end{proof}

\subsection{Proof of the main theorem (Theorem~\ref{thm:QLAN})}\label{sec:proof_of_QLAN_main_theorem}

\begin{theorem}[Restatement of Theorem~\ref{thm:QLAN}]
    Fix $\beta\in [0,\frac{1}{12})$. Take any $\kappa$ such that $6\beta<\kappa<\frac{1}{2}$. Then there exist quantum channels
    \begin{align}
        T_{n}:\mathcal{T}_1(\mathcal{H}^{\otimes n})\to \mathcal{T}_1(\mathcal{F}_{\hor}),\qquad S_{n}: \mathcal{T}_1(\mathcal{F}_{\hor})\to\mathcal{T}_1(\mathcal{H}^{\otimes n})
    \end{align}
    such that
    \begin{align}
        \sup_{\|Z\|\leq n^\beta}\left\|T_{n}(\rho_{Z,n})-\Phi_Z\right\|_1&=O(n^{-1/2+\kappa}),\qquad
        \sup_{\|Z\|\leq n^\beta}\left\|\rho_{Z,n}-S_{n}(\Phi_Z)\right\|_1=O(n^{-1/2+\kappa}).
    \end{align}
\end{theorem}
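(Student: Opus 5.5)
The plan is to take the forward and reverse channels $T_n$ and $S_n$ constructed in Section~\ref{sec:forward_reverse_channels} and use Theorems~\ref{thm:fwd} and~\ref{thm:rev}. These give, for all sufficiently large $n$,
\begin{align}
    \sup_{\|Z\|\leq n^\beta}\|T_n(\rho_{Z,n})-\Phi_Z\|_1\leq C_\fwd\delta_n+2\varepsilon_n^\atyp,\qquad \sup_{\|Z\|\leq n^\beta}\|\rho_{Z,n}-S_n(\Phi_Z)\|_1\leq C_\rev\delta_n+2\varepsilon_n^\atyp,
\end{align}
provided the scale conditions in Eq.~\eqref{eq:scale_eta_beta_alpha} hold. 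What remains is to choose the auxiliary exponents $\eta$ and $\alpha$, given $\beta$ and $\kappa$, so that those conditions are met and both right-hand sides are $O(n^{-1/2+\kappa})$.

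Recall $\delta_n=n^{\alpha-1+2\eta}+n^{-1/2+3\eta}$. I will pick $\eta$ with $2\beta<\eta<\kappa/3$; this interval is nonempty because $6\beta<\kappa$. The choice gives $\beta<\eta/2$ and also $\eta<1/6$, since $\kappa<1/2$. Next I take $\alpha\coloneqq 1/2+\eta$. Then $\alpha\in(1/2,1-2\eta)$, because $1/2+\eta<1-2\eta$ is equivalent to $\eta<1/6$. With these choices all three requirements of Eq.~\eqref{eq:scale_eta_beta_alpha} are satisfied.

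Under this choice the first term of $\delta_n$ is $n^{-1/2+3\eta}$, the same order as the second. So $\delta_n\leq 2n^{-1/2+3\eta}\leq 2n^{-1/2+\kappa}$, using $3\eta<\kappa$.

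For the atypical weight, Proposition~\ref{prop:typical_concentration} gives $\varepsilon_n^\atyp\leq C_1n^{C_2}e^{-C_3n^{2\alpha-1}}$ with $2\alpha-1=2\eta>0$. This bound is superpolynomially small, hence $o(n^{-1/2+\kappa})$.

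Both bounds therefore hold for all large $n$, and enlarging the constants absorbs the finitely many small $n$, since trace distances are at most $2$. The case $\rho_0=I/d$ is trivial as noted in the paper: we take $\mathcal{F}_\hor=\mathbb{C}$, let $T_n$ be the trace, and let $S_n$ prepare $\rho_0^{\otimes n}$.

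The only delicate point is the compatibility of the exponents, namely checking that $6\beta<\kappa$ leaves room for $\eta$ and $\alpha$. All analytic work is already contained in Theorems~\ref{thm:fwd} and~\ref{thm:rev}.
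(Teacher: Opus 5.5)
Your proposal is correct and follows the paper's proof. You use the same choice $2\beta<\eta<\kappa/3$, and your $\alpha=\frac{1}{2}+\eta$ is one admissible point of the paper's range $\frac{1}{2}<\alpha<\min\{1-2\eta,\frac{1}{2}+\kappa-2\eta\}$; you then invoke Theorems~\ref{thm:fwd} and~\ref{thm:rev} together with the superpolynomial bound on $\varepsilon_n^{\atyp}$, and fixing $\alpha$ explicitly makes the two terms of $\delta_n$ coincide, which slightly simplifies the final exponent check.
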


\begin{proof}
    For the fixed $\kappa$, one may choose $\eta$ such that
    \begin{align}
        2\beta <\eta<\frac{1}{3}\kappa.
    \end{align}
    Since $1-2\eta>1-1/3>1/2$ and $\kappa-2\eta>\eta>0$, one can choose $\alpha$ such that
    \begin{align}
        \frac{1}{2}<\alpha<\min\left\{1-2\eta,\,\,\frac{1}{2}+\kappa-2\eta\right\}.
    \end{align}
    Thus, $\beta,\alpha,\eta$ satisfy the scale conditions in Eq.~\eqref{eq:scale_eta_beta_alpha}. Therefore, Theorems~\ref{thm:fwd} and~\ref{thm:rev} imply that for all sufficiently large $n$, there are quantum channels $T_n$ and $S_n$ satisfying Eqs.~\eqref{eq:fwd_scale} and ~\eqref{eq:rev_scale}, respectively. Note that for the finitely many remaining values of $n$, one can define $T_n$ and $S_n$ arbitrarily; this clearly does not affect the asymptotic error bounds.
    
    Since
    \begin{align}
        \alpha-1+2\eta<\frac{1}{2}+\kappa-2\eta-1+2\eta=-\frac{1}{2}+\kappa,\qquad -\frac{1}{2}+3\eta<-\frac{1}{2}+\kappa,
    \end{align}
    we get
    \begin{align}
        \delta_n= n^{\alpha-1+2\eta}+n^{-1/2+3\eta}=O(n^{-1/2+\kappa}).
    \end{align}
    Therefore, since $\varepsilon^{\atyp}_n$ is superpolynomially small in $n$ by Proposition~\ref{prop:typical_concentration}, we obtain
    \begin{align}
        \sup_{\|Z\|\leq n^\beta}\left\|T_{n}(\rho_{Z,n})-\Phi_Z\right\|_1&=O(n^{-1/2+\kappa}),\qquad
        \sup_{\|Z\|\leq n^\beta}\left\|\rho_{Z,n}-S_{n}(\Phi_Z)\right\|_1=O(n^{-1/2+\kappa}).
    \end{align}
\end{proof}

\subsection{Proof of Rate-scaled QLAN (Corollary~\ref{cor:QLAN_with_rate})}\label{sec:ratescaled_QLAN}

We first prove an upper bound on the trace distance between two Gaussian states, which extends the bound in Eq.~\eqref{eq:coherent_state_distance} to mixed Gaussian states. 
\begin{lemma}[Trace-norm continuity of Gaussian shifts]\label{lem:gaussian-shift-continuity}
    For any $X,Y\in\mathcal{K}_\hor$, 
    \begin{align}
        \|\Phi_X-\Phi_Y\|_1\leq 2\sqrt{1-e^{-\|X-Y\|^2}}\leq 2\|X-Y\|.
    \end{align}
\end{lemma}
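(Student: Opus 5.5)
The plan is to reduce the mixed-state bound to the pure coherent-state bound in Eq.~\eqref{eq:coherent_state_distance} through the Glauber--Sudarshan representation Eq.~\eqref{eq:displaced_state_int_expression}, combined with joint convexity of fidelity. First we handle the pure case. For pure coherent states, $|\braket{X|Y}|^2=e^{-\|X-Y\|^2}$, and this gives $\|\ket{X}\bra{X}-\ket{Y}\bra{Y}\|_1=2\sqrt{1-e^{-\|X-Y\|^2}}$.

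For the general case, write
\begin{align}
    \Phi_X=\int_{\mathcal{K}_{\mathrm{th}}}\ket{X+\iota Y'}\bra{X+\iota Y'}\,\dd\mu(Y'),\qquad \Phi_Y=\int_{\mathcal{K}_{\mathrm{th}}}\ket{Y+\iota Y'}\bra{Y+\iota Y'}\,\dd\mu(Y'),
\end{align}
using the same Gaussian measure $\mu$ for both. The paired coherent states always differ by the same vector $X-Y$. Strong concavity of fidelity (equivalently, joint concavity of the root fidelity $\sqrt{F}$, extended from finite sums to integrals by approximation) then gives
\begin{align}
    \sqrt{F(\Phi_X,\Phi_Y)}\geq \int_{\mathcal{K}_{\mathrm{th}}} |\braket{X+\iota Y'|Y+\iota Y'}|\,\dd\mu(Y')=e^{-\|X-Y\|^2/2}.
\end{align}
A simpler alternative avoids the integral form: purify $\Phi_0$ by a two-mode-squeezed state $\ket{\Psi_0}$ on $\mathcal{F}_\hor\otimes\mathcal{F}_\hor$ (each kernel mode is already pure). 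Then $(D(X)\otimes I)\ket{\Psi_0}$ and $(D(Y)\otimes I)\ket{\Psi_0}$ purify $\Phi_X$ and $\Phi_Y$.

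Next we apply Fuchs--van de Graaf: $\tfrac12\|\Phi_X-\Phi_Y\|_1\leq\sqrt{1-F}$. It therefore suffices to show $F\geq e^{-\|X-Y\|^2}$.

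The last step is $1-e^{-x}\leq x$, applied with $x=\|X-Y\|^2$ to get $2\sqrt{1-e^{-x}}\leq 2\|X-Y\|$.

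The only delicate point is the fidelity lower bound for mixed states. If we use the purification route, we need the overlap $|\braket{\Psi_0|D(Y)^\dagger D(X)\otimes I|\Psi_0}|$. This equals $|\Tr(\Phi_0 D(X-Y))|=e^{-\frac14 (k^\top\sigma k)}$, where $\sigma\geq I$, and it could be smaller than $e^{-\|X-Y\|^2/2}$. That would give the wrong constant. For this reason I would instead use the mixture route, where the Gaussian noise is shared and the Uhlmann bound for mixtures, $\sqrt{F(\sum p_i\rho_i,\sum p_i\sigma_i)}\geq\sum p_i\sqrt{F(\rho_i,\sigma_i)}$, yields exactly $e^{-\|X-Y\|^2/2}$. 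Passing from finite convex combinations to the $\mu$-integral uses trace-norm continuity of $F$ and Riemann-sum approximation of the Bochner integral in trace class.
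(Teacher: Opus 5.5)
Your proof is correct and reaches the paper's constant, but it uses a different convexity principle. The skeleton is the same: both proofs write $\Phi_X$ and $\Phi_Y$ through the Glauber--Sudarshan representation \eqref{eq:displaced_state_int_expression} with the \emph{same} Gaussian noise $\mu$, so each paired coherent state differs by exactly $X-Y$.

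The paper applies convexity of the trace norm directly:
$\|\Phi_X-\Phi_Y\|_1\le\int\|\ket{X+\iota Z}\bra{X+\iota Z}-\ket{Y+\iota Z}\bra{Y+\iota Z}\|_1\,\dd\mu(Z)$.
Every integrand equals $2\sqrt{1-e^{-\|X-Y\|^2}}$, so the bound follows in one line. No fidelity machinery or Fuchs--van de Graaf step is needed. The $Z$-dependent phases of $\braket{X+\iota Z|Y+\iota Z}$ also never matter, because only projectors enter.

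Your route uses joint concavity of $\sqrt{F}$, extended to the $\mu$-integral, and then Fuchs--van de Graaf. It proves the slightly stronger statement $F(\Phi_X,\Phi_Y)\ge e^{-\|X-Y\|^2}$, which implies the trace-norm bound. The cost is an extra approximation argument for the fidelity of continuous mixtures, or equivalently an Uhlmann construction with an $L^2(\mu)$ ancilla and $Z$-dependent phase corrections. The paper's integral step needs only the triangle inequality for a trace-class Bochner integral.

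You are right about the two-mode-squeezed purification. That fixed purification does not align those phases, and it only gives the overlap $|\Tr(\Phi_0 D(X-Y))|$. This overlap is strictly smaller than $e^{-\|X-Y\|^2/2}$ whenever $X-Y$ has a component along a thermal mode. So drop it from the write-up rather than presenting it as a ``simpler alternative.''
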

\begin{proof}
    By Eq.~\eqref{eq:displaced_state_int_expression}, we have
    \begin{align}
        \Phi_X-\Phi_Y=\int_{\mathcal{K}_{\mathrm{th}}}\left(\ket{X+\iota Z}\bra{X+\iota Z}-\ket{Y+\iota Z}\bra{Y+\iota Z}\right)\dd \mu(Z).
    \end{align}
    Therefore, by the triangle inequality, we obtain
    \begin{align}
        \|\Phi_X-\Phi_Y\|_1&\leq \int_{\mathcal{K}_{\mathrm{th}}}\|\left(\ket{X+\iota Z}\bra{X+\iota Z}-\ket{Y+\iota Z}\bra{Y+\iota Z}\right)\|_1\dd \mu(Z)\\
        &=\int_{\mathcal{K}_{\mathrm{th}}}2\sqrt{1-e^{-\|(X+\iota Z)-(Y+\iota Z)\|^2}}\dd \mu(Z)=2\sqrt{1-e^{-\|X-Y\|^2}}\\
        &\leq 2\|X-Y\|,
    \end{align}
    where we used that $\mu$ is a probability measure and $1-e^{-x}\leq x$ for $x\geq 0$. 
\end{proof}

Using Lemma~\ref{lem:gaussian-shift-continuity}, we prove the following as a corollary of Theorem~\ref{thm:QLAN}.
\begin{corollary}[Restatement of Corollary~\ref{cor:QLAN_with_rate}]
     Fix a rate $r>0$, and set $m_n\coloneqq \floor{rn}$ and $r_n\coloneqq m_n/n$. For $Z\in\mathcal{K}_\hor$, define
    \begin{align}
        \rho_{Z,n}^{(r)}\coloneqq \left(U_{m_n}(\sqrt{r_n}Z)\rho_0 U_{m_n}(\sqrt{r_n}Z)^\dag\right)^{\otimes m_n}.
    \end{align}
    Fix $\beta\in[0,1/12)$, and take any $\kappa$ such that $6\beta<\kappa<1/2$. Then there exist quantum channels
    \begin{align}
        T_{n}^{(r)}:\mathcal{T}_1(\mathcal{H}^{\otimes m_n})\to \mathcal{T}_1(\mathcal{F}_{\hor}),\qquad S_{n}^{(r)}: \mathcal{T}_1(\mathcal{F}_{\hor})\to\mathcal{T}_1(\mathcal{H}^{\otimes m_n})
    \end{align}
    such that
    \begin{align}
        \sup_{\|Z\|\leq n^\beta}\left\|T_{n}^{(r)}(\rho_{Z,n}^{(r)})-\Phi_{\sqrt{r}Z}\right\|_1&=O(n^{-1/2+\kappa}),\qquad
        \sup_{\|Z\|\leq n^\beta}\left\|\rho_{Z,n}^{(r)}-S_{n}^{(r)}(\Phi_{\sqrt{r}Z})\right\|_1=O(n^{-1/2+\kappa}).
    \end{align}
\end{corollary}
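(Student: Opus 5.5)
The plan is to reduce the rate-scaled statement to Theorem~\ref{thm:QLAN} applied at sample size $m_n$ instead of $n$, and then to absorb the mismatch between the two normalizations with Lemma~\ref{lem:gaussian-shift-continuity}. The key observation is that the state $\rho^{(r)}_{Z,n}$ is exactly the finite model $\rho_{Z',m_n}$ of Theorem~\ref{thm:QLAN} with sample size $m_n$ and parameter $Z'\coloneqq\sqrt{r_n}Z$. This holds because $U_{m_n}(\sqrt{r_n}Z)=\exp(m_n^{-1/2}K(\sqrt{r_n}Z))$ by the definition of $U_m$, and the whole construction only uses real-linearity of $K$ (Lemma~\ref{lem:real_lin_K}). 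Accordingly, I would define $T^{(r)}_n\coloneqq T_{m_n}$ and $S^{(r)}_n\coloneqq S_{m_n}$, where $T_m,S_m$ are the channels provided by Theorem~\ref{thm:QLAN}. For the finitely many $n$ with $m_n=0$, the channels can be chosen arbitrarily.

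First I would check the parameter range. For $\|Z\|\le n^\beta$ we have $\|Z'\|=\sqrt{r_n}\|Z\|\le \sqrt{r_n}\,n^\beta$. Since $m_n/n\to r$, this is at most $(1+o(1))\sqrt{r}\,r^{-\beta}m_n^\beta$, which is not necessarily bounded by $m_n^\beta$ when $r$ is large. To handle this, I would use the slack in the exponent: choose $\beta'$ with $\beta<\beta'$ and $6\beta'<\kappa$. This is possible because $6\beta<\kappa$ is strict, and it gives $\beta'<1/12$. Then $\|Z'\|\le m_n^{\beta'}$ for all sufficiently large $n$. Applying Theorem~\ref{thm:QLAN} with $(\beta',\kappa)$ yields
\begin{align}
\sup_{\|Z\|\le n^\beta}\|T_{m_n}(\rho^{(r)}_{Z,n})-\Phi_{\sqrt{r_n}Z}\|_1\le \sup_{\|Z'\|\le m_n^{\beta'}}\|T_{m_n}(\rho_{Z',m_n})-\Phi_{Z'}\|_1=O(m_n^{-1/2+\kappa})=O(n^{-1/2+\kappa}),
\end{align}
and the reverse-direction bound $\|\rho^{(r)}_{Z,n}-S_{m_n}(\Phi_{\sqrt{r_n}Z})\|_1=O(n^{-1/2+\kappa})$ follows in the same way.

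It remains to replace $\sqrt{r_n}$ by $\sqrt{r}$. Lemma~\ref{lem:gaussian-shift-continuity} gives $\|\Phi_{\sqrt{r_n}Z}-\Phi_{\sqrt rZ}\|_1\le 2|\sqrt{r_n}-\sqrt r|\,\|Z\|$. Since $|r_n-r|\le 1/n$, we have $|\sqrt{r_n}-\sqrt r|\le |r_n-r|/\sqrt r\le 1/(n\sqrt r)$, so this error is $O(n^{-1+\beta})$, which is $o(n^{-1/2+\kappa})$. For the forward channel the triangle inequality finishes the argument. For the reverse channel I would first use trace-norm contractivity of $S_{m_n}$ to get $\|S_{m_n}(\Phi_{\sqrt{r_n}Z})-S_{m_n}(\Phi_{\sqrt rZ})\|_1\le\|\Phi_{\sqrt{r_n}Z}-\Phi_{\sqrt rZ}\|_1$, and then apply the triangle inequality.

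The main potential obstacle is the parameter-range mismatch between $n^\beta$ and $m_n^\beta$ when $r>1$. I expect the $\beta'$ trick to resolve it cleanly, since Theorem~\ref{thm:QLAN} permits any $\beta'<\kappa/6$.
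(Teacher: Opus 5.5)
Your proposal is correct and follows essentially the same approach as the paper's proof. You identify $\rho^{(r)}_{Z,n}$ with $\rho_{\sqrt{r_n}Z,m_n}$, handle the $n^\beta$ versus $m_n^{\beta'}$ range mismatch with an intermediate exponent $\beta<\beta'<\kappa/6$, apply Theorem~\ref{thm:QLAN} at sample size $m_n$, and then replace $\sqrt{r_n}$ by $\sqrt{r}$ via Lemma~\ref{lem:gaussian-shift-continuity}, using contractivity of $S_{m_n}$ for the reverse direction.
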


\begin{proof}
    Since $K$ is real-linear, we get
    \begin{align}
        m_n^{-1/2}K(\sqrt{r_n}Z)=n^{-1/2}K(Z),\qquad \rho_{Z,n}^{(r)}=\rho_{\sqrt{r_n}Z,m_n}.
    \end{align}
    We choose $\beta'$ such that $\beta<\beta'<\kappa/6$. Then, $\beta'<1/12$, $6\beta'<\kappa$, and
    \begin{align}
        \lim_{n\to\infty}\frac{\sqrt{r_n}n^\beta}{m_n^{\beta'}}=\lim_{n\to\infty}\left(\frac{m_n}{n}\right)^{1/2-\beta'}n^{\beta-\beta'}=0.
    \end{align}
    Therefore, for all sufficiently large $n$, the condition $\|Z\|\leq n^\beta$ implies $\|\sqrt{r_n}Z\|\leq m_n^{\beta'}$. Thus, applying Theorem~\ref{thm:QLAN} with $m_n$ and the exponent $\beta'$, there exist quantum channels $\tilde{T}_{m_n}$ and $\tilde{S}_{m_n}$ such that
    \begin{align}
        \sup_{\|Z\|\leq n^{\beta}}\left\|\tilde{T}_{m_n}(\rho_{Z,n}^{(r)})-\Phi_{\sqrt{r_n}Z}\right\|_1&=O(m_n^{-1/2+\kappa}),\\
         \sup_{\|Z\|\leq n^{\beta}}\left\|\rho_{Z,n}^{(r)}-\tilde{S}_{m_n}(\Phi_{\sqrt{r_n}Z})\right\|_1&=O(m_n^{-1/2+\kappa}).
    \end{align}
    Since $m_n/n\to r>0$ and $\kappa<1/2$, the right-hand sides are $O(n^{-1/2+\kappa})$. 

    It remains to replace $\sqrt{r_n}$ by $\sqrt{r}$ in the Gaussian limit model.
    Since $0\leq r-r_n<1/n$, we have
    \begin{align}
        |\sqrt{r_n}-\sqrt{r}|=\frac{|r-r_n|}{\sqrt{r_n}+\sqrt{r}}\leq \frac{1}{2\sqrt{r_n}n}.
    \end{align}
    By Lemma~\ref{lem:gaussian-shift-continuity}, $\|\Phi_X-\Phi_Y\|_1\leq 2\|X-Y\|$ for any $X,Y\in\mathcal{K}_{\hor}$. 
    Consequently,
    \begin{align}
        \sup_{\|Z\|\leq n^\beta}\|\Phi_{\sqrt{r_n}Z}-\Phi_{\sqrt{r}Z}\|_1\leq \frac{1}{\sqrt{r_n}}n^{-1+\beta}=o(n^{-1/2+\kappa})
    \end{align}

    Define $T_n^{(r)}\coloneqq \tilde{T}_{m_n}$ and $S_n^{(r)}\coloneqq \tilde{S}_{m_n}$. By the triangle inequality, we obtain
    \begin{align}
        \sup_{\|Z\|\leq n^{\beta}}\left\|T_n^{(r)}(\rho_{Z,n}^{(r)})-\Phi_{\sqrt{r}Z}\right\|_1\leq \sup_{\|Z\|\leq n^{\beta}}\left(\left\|\tilde{T}_{m_n}(\rho_{Z,n}^{(r)})-\Phi_{\sqrt{r_n}Z}\right\|_1+\left\|\Phi_{\sqrt{r_n}Z}-\Phi_{\sqrt{r}Z}\right\|_1\right)=O(n^{-1/2+\kappa}).
    \end{align}
    Using the contractivity of trace distance under the channel $S_n^{(r)}$, we also get
    \begin{align}
        \sup_{\|Z\|\leq n^\beta}\left\|\rho_{Z,n}^{(r)}-S_{n}^{(r)}(\Phi_{\sqrt{r}Z})\right\|_1&\leq \sup_{\|Z\|\leq n^\beta}\left(\left\|\rho_{Z,n}^{(r)}-\tilde{S}_{m_n}(\Phi_{\sqrt{r_n}Z})\right\|_1+\left\|\Phi_{\sqrt{r_n}Z}-\Phi_{\sqrt{r}Z}\right\|_1\right)=O(n^{-1/2+\kappa}).
    \end{align}
\end{proof}

For ease of reference, we conclude by recording a concrete coordinate form of QLAN in terms of the eigenvalues and eigenvectors of the reference state and a given family of Hermitian generators.

\begin{theorem}\label{thm:QLAN_explicit_expression}
    Let $\rho$ be a quantum state on a finite-dimensional Hilbert space $\mathcal{H}$ of dimension $d$, and let $\{X_i\}_{i=1}^p$ be a set of Hermitian operators on $\mathcal{H}$. Consider a unitary model at rate $r>0$
    \begin{align}
         \rho_{u,n}^{(r)}\coloneqq \left(\exp\left(\ii n^{-1/2}\sum_{i=1}^p u_i X_i \right)\rho \exp\left(-\ii n^{-1/2}\sum_{i=1}^pu_i X_i \right)\right)^{\otimes \floor{rn}},\qquad u\in\mathbb{R}^p.
    \end{align}
    Using the eigenvalue decomposition $\rho=\sum_{k=1}^d\mu_k\ket{k}\bra{k}$, we define a set of pairs of labels 
    \begin{align}
        \mathcal{J}\coloneqq \{(k,l)\colon \mu_k>\mu_l\}.
    \end{align}
    If $\rho=I/d$, the unitary orbit of $\rho$ consists of a single point, so the QLAN statement is trivial. We therefore restrict attention to $\rho\neq I/d$; equivalently, $\mathcal{J}\neq\emptyset$. 
    Let $\mathcal{F}$ be the $|\mathcal{J}|$-mode bosonic Fock space, whose creation and annihilation operators satisfy the canonical commutation relation
    \begin{align}
        [a_{kl},a_{k'l'}^\dag]=\delta_{kk'}\delta_{ll'}I,\qquad [a_{kl},a_{k'l'}]=0,\qquad [a_{kl}^\dag ,a_{k'l'}^\dag]=0.
    \end{align}
    We define the Gaussian shift family consisting of $|\mathcal{J}|$ modes by
    \begin{align}
        \Phi_{z}^\rho\coloneqq D(z) \Phi^\rho D(z)^\dag, 
    \end{align}
    where the reference Gaussian state is given by
    \begin{align}
        \Phi^\rho \coloneqq \bigotimes_{(k,l)\in\mathcal{J}}\phi_{\beta_{kl}}^{(kl)},\qquad \phi_{\beta_{kl}}^{(kl)}\coloneqq \frac{e^{-\beta_{kl}a^{\dag}_{kl}a_{kl}}}{\Tr e^{-\beta_{kl}a^{\dag}_{kl}a_{kl}}},\qquad \beta_{kl}\coloneqq -\ln\left(\frac{\mu_l}{\mu_k}\right),
    \end{align}
    with the conventions $-\ln 0\coloneqq \infty$ and $\phi_{\infty}^{(kl)}=\ket{0}\bra{0}$, and the displacement operator is defined by
    \begin{align}
        D(z)\coloneqq \exp\left(\sum_{(k,l)\in\mathcal{J}}\left(z_{kl}a^\dag_{kl}-\overline{z_{kl}}a_{kl}\right)\right),\qquad z=(z_{kl})_{(k,l)\in\mathcal{J}}\in\mathbb{C}^{|\mathcal{J}|}.
    \end{align}
    We define a matrix $C\in\mathbb{C}^{|\mathcal{J}|\times p}$ whose matrix elements are given by
    \begin{align}
         C_{(k,l),i}\coloneqq \ii\sqrt{\mu_k-\mu_l}\braket{l|X_i|k}.
    \end{align}
    
    Fix $\epsilon\in[0,1/12)$, and take any $\kappa$ such that $6\epsilon<\kappa<1/2$. Then there exist quantum channels
    \begin{align}
        T_{n}^{(r)}:\mathcal{T}_1(\mathcal{H}^{\otimes \floor{rn}})\to \mathcal{T}_1(\mathcal{F}),\qquad S_{n}^{(r)}: \mathcal{T}_1(\mathcal{F})\to\mathcal{T}_1(\mathcal{H}^{\otimes \floor{rn}})
    \end{align}
    such that
    \begin{align}
        \sup_{\|u\|\leq n^\epsilon}\left\|T_{n}^{(r)}\left( \rho_{u,n}^{(r)}\right)-\Phi_{\sqrt{r}Cu}^\rho\right\|_1&=O(n^{-1/2+\kappa}),\qquad
        \sup_{\|u\|\leq n^\epsilon}\left\|\rho_{u,n}^{(r)}-S_{n}^{(r)}\left(\Phi_{\sqrt{r}Cu}^\rho\right)\right\|_1=O(n^{-1/2+\kappa}).
    \end{align}
\end{theorem}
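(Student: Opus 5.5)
This statement is a coordinate rewriting of Corollary~\ref{cor:QLAN_with_rate} combined with the horizontal reduction of Proposition~\ref{prop:horizontal_reduction}. The plan is to translate the parameter $u\in\mathbb{R}^p$ into an element $Z(u)\in\mathcal{K}_\hor$ and identify the limit Gaussian model. Then we absorb the vertical (stabilizer) part of the generator into a small trace-norm error.

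First we fix the eigenbasis $\{\ket{k}\}$ of $\rho$, chosen as in Eq.~\eqref{eq:eigenvalue_decomposition_state} so that it is adapted to the block decomposition $\mathcal{H}=\bigoplus_a V_a$. We write the anti-Hermitian generator $G(u)\coloneqq \ii\sum_i u_i X_i$, which is real-linear in $u$. Its pinched part $\mathcal{D}(G(u))$ lies in $\mathfrak{h}$, and its horizontal part $G^\hor(u)\in\mathfrak{u}_\hor$. By Lemma~\ref{lem:real_lin_K}, $G^\hor(u)=K(Z(u))$ for a unique $Z(u)\in\mathcal{K}_\hor$. Reading off Eq.~\eqref{eq:expansion_basis_KZ}, the coefficient of $\ket{l}\bra{k}$ (for $\mu_k>\mu_l$) gives
\begin{align}
z_{kl}(u)=\sqrt{\mu_k-\mu_l}\,\ii\sum_i u_i\braket{l|X_i|k}=(Cu)_{(k,l)}.
\end{align}
This is consistent with the $\ket{e_k}\bra{e_l}$ coefficient $-\overline{z_{kl}}$, since the $X_i$ are Hermitian. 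Pairs with $\mu_k=\mu_l$ but $k\neq l$ belong to $\mathfrak{h}$. Thus $Z(u)=Cu$ under the identification of the basis $\{\overline{E_{kl}}\}$ with the modes $a_{kl}$, and $\Phi_{Z}$ becomes $\Phi^\rho_{Cu}$. Here one has to check that the tensor product of $\phi_{\beta_{ab}}$ over $d_ad_b$ modes matches $\Phi^\rho$ mode by mode, using $\beta_{kl}=-\ln(\mu_l/\mu_k)=\beta_{ab}$. Also $\|Z(u)\|\leq\|C\|\,\|u\|$, and $C$ is a fixed matrix.

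Next we handle the rate and the vertical part. Setting $m_n=\floor{rn}$ and $r_n=m_n/n$, we have $n^{-1/2}G(u)=m_n^{-1/2}G(\sqrt{r_n}u)$. Hence $\rho^{(r)}_{u,n}=\big(e^{m_n^{-1/2}G(\sqrt{r_n}u)}\rho e^{-m_n^{-1/2}G(\sqrt{r_n}u)}\big)^{\otimes m_n}$. Proposition~\ref{prop:horizontal_reduction}, applied with $m_n$ copies and $V=\mathbb{R}^p$, replaces $G$ by $G^\hor$ at trace-norm cost $C_K r_n\|u\|^2/\sqrt{m_n}=O(n^{-1/2+2\epsilon})$ uniformly over $\|u\|\leq n^\epsilon$. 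Since $2\epsilon<6\epsilon<\kappa$, this cost is $O(n^{-1/2+\kappa})$. The resulting state is exactly $\rho^{(r)}_{Z,n}$ of Corollary~\ref{cor:QLAN_with_rate} with $Z=Z(u)=Cu$.

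Finally we apply Corollary~\ref{cor:QLAN_with_rate}. We pick $\beta$ with $\epsilon<\beta<\kappa/6$, so that $\|Cu\|\leq\|C\|n^\epsilon\leq n^\beta$ for large $n$; if $C=0$ the claim is trivial. The corollary then yields channels with error $O(n^{-1/2+\kappa})$ toward $\Phi_{\sqrt r Cu}=\Phi^\rho_{\sqrt r Cu}$. The triangle inequality, together with contractivity of $T_n^{(r)}$ and $S_n^{(r)}$, absorbs the horizontal-reduction error in both directions. The main obstacle is purely bookkeeping: matching the conjugate-space conventions ($\ell$ linear, $r$ antilinear), the sign and factor $\ii$ in $C$, and the identification of the Fock spaces $\mathcal{F}_\hor\cong\mathcal{F}$ (degenerate blocks simply contribute $d_ad_b$ identical thermal modes). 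I would verify these on a single pair $(k,l)$ before assembling the proof.
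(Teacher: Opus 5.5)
Your proposal is correct and follows the paper's proof essentially step for step. You identify $Z(u)=Cu$ through $K(Z(u))=K^\hor(u)$, rewrite the scale as $m_n^{-1/2}$ acting on $\sqrt{r_n}u$, and remove the vertical part by Proposition~\ref{prop:horizontal_reduction} at cost $O(n^{-1/2+2\epsilon})$. You then choose $\epsilon<\beta<\kappa/6$ so that $\|Cu\|\le n^\beta$, and invoke Corollary~\ref{cor:QLAN_with_rate} together with the triangle inequality and contractivity. The bookkeeping you flag also checks out: the coefficient of $\ket{l}\bra{k}$ gives $z_{kl}=(Cu)_{(k,l)}$, the $\ket{k}\bra{l}$ coefficient is consistent by Hermiticity and realness of $u$, and the modes match one by one with $\beta_{kl}=\beta_{ab}$.
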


\begin{proof}
    We define
    \begin{align}
        z_{kl}\coloneqq \sum_{i=1}^p C_{(k,l),i}u_i=\sqrt{\mu_k-\mu_l}\Braket{l|\ii\sum_{i=1}^pX_iu_i|k}.
    \end{align}
    Recalling the notations in Eqs.~\eqref{eq:hor_ver_decomposition}, \eqref{eq:Z_expansion} and~\eqref{eq:expansion_basis_KZ}, we have
    \begin{align}
        Z(u)&=\sum_{(a,b);\,\zeta_a>\zeta_b}\sum_{k\in I_a}\sum_{l\in I_b}z_{kl}\overline{E_{kl}}=\sum_{(a,b);\,\zeta_a>\zeta_b}\sum_{k\in I_a}\sum_{l\in I_b}\sqrt{\mu_k-\mu_l}\Braket{l|\ii\sum_{i=1}^pX_iu_i|k}\overline{E_{kl}}\\
        K(Z(u))&\coloneqq \sum_{(a,b);\,\zeta_a>\zeta_b}\sum_{k\in I_a}\sum_{l\in I_b}\left(\Braket{l|\ii\sum_{i=1}^pX_iu_i|k}\ket{l}\bra{k}-\overline{\Braket{l|\ii\sum_{i=1}^pX_iu_i|k}}\ket{k}\bra{l}\right)\\
        &=\sum_{(a,b);\,\zeta_a>\zeta_b}\sum_{k\in I_a}\sum_{l\in I_b}\left(\Braket{l|\ii\sum_{i=1}^pX_iu_i|k}\ket{l}\bra{k}+\Braket{k|\ii\sum_{i=1}^pX_iu_i|l}\ket{k}\bra{l}\right)\\
        &=\sum_{(a,b);\,\zeta_a>\zeta_b}\left(P_b\left(\ii\sum_{i=1}^pX_iu_i\right)P_a+P_a\left(\ii\sum_{i=1}^pX_iu_i\right)P_b\right)\\
        &=\ii\sum_{i=1}^pX_iu_i-\sum_{a\in\mathcal{A}}P_a\left(\ii\sum_{i=1}^pX_iu_i\right)P_a= K(u)-K^\ver(u)\eqqcolon K^\hor(u),
    \end{align}
    where $ K(u)\coloneqq \ii\sum_{i=1}^pX_iu_i,\, K^\ver(u)\coloneqq \sum_{a\in\mathcal{A}}P_aK(u)P_a$, and $P_a\coloneqq \sum_{k\in I_a}\ket{k}\bra{k}$ is the projector onto the eigenspace with eigenvalue $\zeta_a$ for $a\in\mathcal{A}$. For $m_n\coloneqq \floor{rn}$ and $r_n\coloneqq m_n/n$, we have
    \begin{align}
        \exp\left(\ii n^{-1/2}\sum_{i=1}^p u_i X_i\right)=\exp\left(m_n^{-1/2}K(\sqrt{r_n}u)\right)=\exp\left(m_n^{-1/2}\left(K^\hor(\sqrt{r_n}u)+K^\ver(\sqrt{r_n}u)\right)\right).
    \end{align}
    Defining
    \begin{align}
        \tilde{\rho}_{u,n}^{(r)}\coloneqq \left(\exp\left(m_n^{-1/2}\left(K^\hor(\sqrt{r_n}u)\right)\right) \rho  \exp\left(-m_n^{-1/2}\left(K^\hor(\sqrt{r_n}u)\right)\right)\right)^{\otimes m_n}.
    \end{align}
    by Proposition~\ref{prop:horizontal_reduction}, we have
    \begin{align}
        \left\| \rho_{u,n}^{(r)}-\tilde{\rho}^{(r)}_{u,n}\right\|_1\leq C_K\frac{\|\sqrt{r_n}u\|^2}{\sqrt{m_n}}\leq C_K\frac{r_n}{\sqrt{m_n}}\|u\|^2\leq C_K\frac{\sqrt{r_n}}{\sqrt{n}}\|u\|^2
    \end{align}
    and hence
    \begin{align}
        \sup_{\|u\|\leq n^\epsilon} \left\| \rho_{u,n}^{(r)}-\tilde{\rho}^{(r)}_{u,n}\right\|_1=O(n^{-1/2+\kappa}),\label{eq:hor_red_her}
    \end{align}
    where we have used $2\epsilon<\kappa$ due to $6\epsilon<\kappa$.

    Since
    \begin{align}
        \|Z(u)\|^2&=\sum_{(a,b);\,\zeta_a>\zeta_b}\sum_{k\in I_a}\sum_{l\in I_b}(\mu_k-\mu_l)\left|\Braket{l|\ii\sum_{i=1}^pX_iu_i|k}\right|^2\\
        &\leq \sum_{(a,b);\,\zeta_a>\zeta_b}\sum_{k\in I_a}\sum_{l\in I_b}\left|\Braket{l|\ii\sum_{i=1}^pX_iu_i|k}\right|^2\leq \left\|\ii\sum_{i=1}^pX_iu_i\right\|_\HS^2\leq \left(\sum_{i=1}^pu_i^2\right)\left(\sum_{i=1}^p\|X_i\|_{\HS}^2\right),
    \end{align}
    we get $\|Z(u)\|\leq C_X \|u\|$ with $C_X\coloneqq \left(\sum_{i=1}^p\|X_i\|_{\HS}^2\right)^{1/2}<\infty$. Taking $\beta$ such that $6\epsilon<6\beta<\kappa$, 
    \begin{align}
        \frac{C_Xn^{\epsilon}}{n^{\beta}}\to0.
    \end{align}
    Therefore,
    \begin{align}
        \|u\|\leq n^\epsilon\implies \|Z(u)\|\leq n^{\beta}
    \end{align}
    for all sufficiently large $n$. Moreover, by the real-linearity of $Z$,
    \begin{align}
        K^\hor(\sqrt{r_n}u)=K(Z(\sqrt{r_n}u))=K(\sqrt{r_n}Z(u)).
    \end{align}
    Thus, $\tilde{\rho}^{(r)}_{u,n}$ coincides with the rate-scaled model in Corollary~\ref{cor:QLAN_with_rate} with parameter $Z(u)=Cu$. Therefore, by Corollary~\ref{cor:QLAN_with_rate}, there are quantum channels $T_{n}^{(r)},S_{n}^{(r)}$ such that
    \begin{align}
         \sup_{\|u\|\leq n^{\epsilon}}\left\|T_{n}^{(r)}\left(\tilde{\rho}^{(r)}_{u,n}\right)-\Phi_{\sqrt{r}Cu}^\rho\right\|_1&=O(n^{-1/2+\kappa}),\qquad
        \sup_{\|u\|\leq n^{\epsilon}}\left\|\tilde{\rho}^{(r)}_{u,n}-S_{n}^{(r)}\left(\Phi_{\sqrt{r}Cu}^\rho\right)\right\|_1=O(n^{-1/2+\kappa}).
    \end{align}
    By the triangle inequality and the contractivity of the trace norm under quantum channels, Eq.~\eqref{eq:hor_red_her} therefore yields
    \begin{align}
         \sup_{\|u\|\leq n^{\epsilon}}\left\|T_{n}^{(r)}\left( \rho_{u,n}^{(r)}\right)-\Phi_{\sqrt{r}Cu}^\rho\right\|_1&=O(n^{-1/2+\kappa}),\qquad
        \sup_{\|u\|\leq n^{\epsilon}}\left\|\rho_{u,n}^{(r)}-S_{n}^{(r)}\left(\Phi_{\sqrt{r}Cu}^\rho\right)\right\|_1=O(n^{-1/2+\kappa}).
    \end{align}
\end{proof}

\section{Symmetric subspace, bosonic Fock space, creation and annihilation operators}\label{app:Fock_space_review}

We briefly review the notation and elementary facts about bosonic Fock space used in Part~III. For more details, see, e.g., Refs.~\cite{derezinskiMathematicsQuantizationQuantum2023,parthasarathy_IntroductionQuantumStochasticCalculus_1992}. We use the convention that the Hilbert-space inner product is antilinear in its first argument and linear in its second argument.

Let $\mathcal{K}$ be a finite-dimensional complex Hilbert space. For $k\geq 1$, let $U_\pi$ denote the unitary action of a permutation $\pi\in\mathfrak{S}_k$ that permutes the tensor factors of $\mathcal{K}^{\otimes k}$. We define the symmetrization projection and the $k$-particle symmetric tensor space by
\begin{align}
    P_{\mathrm{s}}^{(k)}\coloneqq \frac{1}{k!}\sum_{\pi\in\mathfrak{S}_k}U_\pi,\qquad \Sym^k\mathcal{K}\coloneqq P_{\mathrm{s}}^{(k)}\mathcal{K}^{\otimes k}.
\end{align}
We set $\Sym^0\mathcal{K}=\mathbb{C}$. The bosonic Fock space over $\mathcal{K}$ is 
\begin{align}
    \Gamma_{\mathrm{s}}(\mathcal{K})\coloneqq \bigoplus_{k=0}^\infty \Sym^k\mathcal{K}.
\end{align}
The vector $\ket{0}\coloneqq (1,0,0,\ldots)$ is the vacuum. 
We denote by $\Pi_k$ the orthogonal projection onto the $k$-particle sector and set
\begin{align}
    \mathcal{F}_{\leq N}\coloneqq  \bigoplus_{k=0}^N \Sym^k\mathcal{K},\qquad P_{\leq N}\coloneqq\sum_{k=0}^N\Pi_k.
\end{align}

For $x\in\mathcal{K}$, the creation and annihilation operators are defined on finite-particle vectors $\psi_k\in \Sym^k\mathcal{K}$ by
\begin{align}
    a^\dag(x)\psi_k&\coloneqq \sqrt{k+1}P_{\mathrm{s}}^{(k+1)}(x\otimes \psi_k) \qquad (k\geq 0)\\
    a(x)\psi_k&\coloneqq \sqrt{k}\left(\bra{x}\otimes I^{\otimes (k-1)}\right)\psi_k\qquad (k\geq 1),
\end{align}
together with $ a(x)\ket{0}=0$.
We remark that $a^\dag$ is linear in $x$, whereas $a$ is antilinear in $x$. 
In particular, for a decomposable tensor $P_{\mathrm{s}}^{(k)}(y_1\otimes\cdots\otimes y_k)\in  \Sym^k\mathcal{K}$, the annihilation operator satisfies
\begin{align}
    a(x)P_{\mathrm{s}}^{(k)}(y_1\otimes\cdots\otimes y_k)\coloneqq \frac{1}{\sqrt{k}}\sum_{i=1}^k\braket{x,y_i}P_{\mathrm{s}}^{(k-1)}(y_1\otimes\cdots\otimes \widehat{y_i}\otimes\cdots\otimes y_k),
\end{align}
where the hat denotes omission of the corresponding tensor factor.
From the above normalization, the creation and annihilation operators satisfy
\begin{align}
    \left\|a^\dag(x)\big|_{\Sym^k\mathcal{K}}\right\|=\sqrt{k+1}\|x\|,\qquad \left\|a(x)\big|_{\Sym^k\mathcal{K}}\right\|=\sqrt{k}\|x\|.\label{eq:cre_ann_sector_bound}
\end{align}
They also satisfy the canonical commutation relations
\begin{align}
    [a(x),a^\dag(y)]=\braket{x,y}I,\qquad [a(x),a(y)]=0,\qquad [a^\dag (x),a^\dag(y)]=0.
\end{align}
For $k=0$, the first relation follows directly from the definitions, since $a(x)a^\dag(y)\ket{0}=\braket{x,y}\ket{0}$ and $a^\dag(y)a(x)\ket{0}=0$. For $k\geq 1$ and $\psi_k\in\Sym^k\mathcal{K}$, we have
\begin{align}
    a(x)a^\dag(y)\psi_k&=(k+1)\left(\bra{x}\otimes I^{\otimes k}\right)P_{\mathrm{s}}^{(k+1)}(y\otimes \psi_k)\\
    &=\braket{x,y}\psi_k+P_{\mathrm{s}}^{(k)} \left(y\otimes\sqrt{k}a(x)\psi_k\right)=\braket{x,y}\psi_k+a^\dag(y)a(x)\psi_k,
\end{align}
which proves the first relation; the other two follow directly from the symmetry of the tensor powers.

The normalized coherent state with amplitude $x\in \mathcal{K}$ is defined by
\begin{align}
    \ket{x}\coloneqq e^{-\|x\|^2/2}\sum_{k=0}^\infty \frac{x^{\otimes k}}{\sqrt{k!}},
\end{align}
which satisfies
\begin{align}
    \left\|\Pi_k\ket{x}\right\|^2=e^{-\|x\|^2}\frac{\|x\|^{2k}}{k!}.
\end{align}
Defining the displacement operators by $D(x)\coloneqq \exp\left(a^\dag (x)-a(x)\right)$, we have, for $t\in\mathbb{R}$, 
\begin{align}
    \ket{x}=D(x)\ket{0},\qquad \frac{\dd}{\dd t}\ket{tx}=\left(a^\dag (x)-a(x)\right)\ket{tx}.
\end{align}

\clearpage
%\bibliography{references,sm,referencesarxiv,ref_new}
%apsrev4-2.bst 2019-01-14 (MD) hand-edited version of apsrev4-1.bst
%Control: key (0)
%Control: author (8) initials jnrlst
%Control: editor formatted (1) identically to author
%Control: production of article title (0) allowed
%Control: page (0) single
%Control: year (1) truncated
%Control: production of eprint (0) enabled
%

\end{document}